\documentclass[11pt]{article}

\usepackage[nottoc]{tocbibind}

\usepackage{amsfonts}
\usepackage{amssymb}
\usepackage{amstext}
\usepackage{amsmath}
\usepackage{mathtools}



\usepackage{amsthm} 

\usepackage{caption}
\usepackage{subcaption}

\usepackage{color}
\usepackage{nameref}
\definecolor{ForestGreen}{rgb}{0.1333,0.5451,0.1333}
\definecolor{DarkRed}{rgb}{0.8,0,0}
\definecolor{Red}{rgb}{0.9,0,0}
\usepackage[linktocpage=true,
	pagebackref=true,colorlinks,
	linkcolor=DarkRed,citecolor=ForestGreen,
	bookmarks,bookmarksopen,bookmarksnumbered]
	{hyperref}
\usepackage{cleveref}

\usepackage{thmtools,thm-restate} 

\usepackage[numbers,sort&compress]{natbib}




\usepackage{graphicx}
\usepackage{graphics}
\usepackage{colordvi}
\usepackage{xspace}
\usepackage{algorithm}
\usepackage{algorithmicx}
\usepackage{url}
\usepackage{enumitem}





%
        {\hspace*{\fill}$\Box$\par\vspace{4mm}}

\newcommand{\enc}[1]{\langle #1 \rangle}



\usepackage[left=1in,top=1in,right=1in,bottom=1in]{geometry} 
\usepackage{mathptmx} 
\usepackage[scaled=.90]{helvet}
\usepackage{courier}
\usepackage{booktabs}
\usepackage{threeparttable}
\usepackage[small,compact]{titlesec}
\setlist[itemize]{noitemsep,nolistsep} 
\setlist[enumerate]{noitemsep,nolistsep} 



\makeatletter
\renewcommand{\paragraph}{%
  \@startsection{paragraph}{4}%
  {\z@}{1ex \@plus 1ex \@minus .2ex}{-1em}%
  {\normalfont\normalsize\bfseries}%
}
\makeatother

%
%
%
%



\makeatletter
\def\thmt@refnamewithcomma #1#2#3,#4,#5\@nil{%
  \@xa\def\csname\thmt@envname #1utorefname\endcsname{#3}%
  \ifcsname #2refname\endcsname
    \csname #2refname\expandafter\endcsname\expandafter{\thmt@envname}{#3}{#4}%
  \fi
}
\makeatother

\declaretheorem[numberwithin=section,refname={Theorem,Theorems},Refname={Theorem,Theorems}]{theorem}
\declaretheorem[numberlike=theorem,refname={Lemma,Lemmas},Refname={Lemma,Lemmas}]{lemma}

\declaretheorem[numberlike=theorem,refname={Corollary,Corollaries},Refname={Corollary,Corollaries}]{corollary}

\declaretheorem[numberlike=theorem,refname={property,properties},Refname={Property,Properties}]{property}
\declaretheorem[numberlike=theorem,refname={observation,observations},Refname={Observation,Observations}]{observation}
\declaretheorem[numberlike=theorem,refname={assumption,assumptions},Refname={Assumption,Assumptions}]{assumption}
\declaretheorem[numberlike=theorem,refname={Claim, Claims},Refname={Claim, Claims}]{claim}

\declaretheorem[numberlike=theorem]{definition}

\newtheorem{invariant}[theorem]{Invariant}
\newtheorem{assume}{Assumption}
\newtheorem{fact}{Fact}
\newtheorem{Rule}[theorem]{Rule}
%
%

\newcommand{\size}[1]{\ensuremath{\left|#1\right|}}
\newcommand{\ceil}[1]{\ensuremath{\left\lceil#1\right\rceil}}
\newcommand{\floor}[1]{\ensuremath{\left\lfloor#1\right\rfloor}}
\newcommand{\Dexp}[1]{\dexp\{#1\}}
\newcommand{\Tower}[2]{\operatorname{tower}^{(#1)}\{#2\}}
\newcommand{\logi}[2]{\operatorname{log}^{(#1)}{#2}}
\newcommand{\norm}[1]{\lVert #1\rVert}
\newcommand{\abs}[1]{\lvert #1\rvert}
\newcommand{\paren}[1]{\left ( #1 \right ) }
\newcommand{\union}{\cup}
\newcommand{\band}{\wedge}
\newcommand{\bor}{\vee}
\newcommand{\prof}{\ensuremath{\operatorname{profit}}}
\newcommand{\pay}{\ensuremath{\operatorname{pay}}}

\newcommand{\scale}[2]{\scalebox{#1}{#2}}

\newcommand{\fig}[1]{
\begin{figure}[h]
\rotatebox{0}{\includegraphics{#1}}
\end{figure}}

\newcommand{\figcap}[2]
{
\begin{figure}[h]
\rotatebox{270}{\includegraphics{#1}} \caption{#2}
\end{figure}
}

\newcommand{\scalefig}[2]{
\begin{figure}[h]
\scalebox{#1}{\includegraphics{#2}}
\end{figure}}

\newcommand{\scalefigcap}[3]{
\begin{figure}[h]
\scalebox{#1}{\rotatebox{0}{\includegraphics{#2}}} \caption{#3}
\end{figure}}

\newcommand{\scalefigcaplabel}[4]{
\begin{figure}[h]
\begin{center}
\label{#4} \scalebox{#1}{\includegraphics{#2}}\caption{#3}
\end{center}
\end{figure}}



\renewcommand{\phi}{\varphi}
\newcommand{\Sum}{\displaystyle\sum}
\newcommand{\half}{\ensuremath{\frac{1}{2}}}

\newcommand{\poly}{\operatorname{poly}}
\newcommand{\dist}{\mbox{\sf dist}}

\newcommand{\reals}{{\mathbb R}}

\newcommand{\expct}[1]{\text{\bf E}_\left [#1\right]}
\newcommand{\expect}[2]{\text{\bf E}_{#1}\left [#2\right]}
\newcommand{\prob}[1]{\text{\bf Pr}\left [#1\right]}
\newcommand{\pr}[2]{\text{\bf Pr}_{#1}\left [#2\right ]}

\newcommand{\notsat}[1]{\overline{\text{SAT}(#1)}}
\newcommand{\notsats}[2]{\overline{\text{SAT}_{#1}(#2)}}

\newcommand{\sndp}{\mbox{\sf SNDP}}
\newcommand{\ecsndp}{\mbox{\sf EC-SNDP}}
\newcommand{\vcsndp}{\mbox{\sf VC-SNDP}}
\newcommand{\kec}{k\mbox{\sf -edge connectivity}}
\newcommand{\kvc}{k\mbox{\sf -vertex connectivity}}
\newcommand{\sskec}{\mbox{\sf single-source}~k\mbox{\sf -edge connectivity}}
\newcommand{\sskvc}{\mbox{\sf single-source}~k\mbox{\sf -vertex connectivity}}
\newcommand{\subkvc}{\mbox{\sf subset}~k\mbox{\sf -vertex connectivity}}
\newcommand{\oneec}{1\mbox{\sf -edge connectivity}}
\newcommand{\onevc}{1\mbox{\sf -vertex connectivity}}
\newcommand{\kvcssp}{k\mbox{\sf -vertex-connected spanning subgraph problem}}
\newcommand{\BLM}{{\sf Bounded-Length Multicut}\xspace}
\newcommand{\bydef}{\stackrel{{\triangle}}{=}}
\newcommand{\eblue}{E^{\mbox{\scriptsize{blue}}}}
\newcommand{\ered}{E^{\mbox{\scriptsize{red}}}}

\newcommand{\induce}[1]{\ensuremath{{\sf im}(#1)}\xspace}
\newcommand{\sinduce}[1]{\ensuremath{{\sf sim}(#1)}\xspace}
\newcommand{\sinducesigma}[2]{\ensuremath{{\sf sim}_{#1}(#2)}\xspace}

\newcommand{\SMP}{{\sf SMP}\xspace}
\newcommand{\UDP}{{\sf UDP}\xspace}
\newcommand{\semiind}{{\sf Semi-Induced Matching}\xspace}
\newcommand{\LSMP}{{\sf Limit-SMP}\xspace}
\newcommand{\LUDP}{{\sf Limit-UDP}\xspace}
\newcommand{\ETH}{{\sf ETH}\xspace}
\newcommand{\SAT}{{\sf SAT}\xspace}
\newcommand{\p}{{\bf p}}
\newcommand{\e}{{\sc e}}
\newcommand{\val}{{\sf val}\xspace}

\newcommand{\wall}{\overrightarrow{R}}
\newcommand{\awall}{\widehat{R}}
\newcommand{\pairs}{\mathcal{S}}

\newcommand{\edp}{{\sf edp}\xspace}
\newcommand{\tildeEDP}{\widetilde{\sf edp}\xspace}
\newcommand{\vdp}{{\sf vdp}\xspace}
\newcommand{\mis}{{\sf mis}\xspace}
\newcommand{\NDP}{{\sf VDP}\xspace}
\newcommand{\VDP}{{\sf VDP}\xspace}
\newcommand{\EDP}{{\sf EDP}\xspace}

\newcommand{\NN}{\mathcal{N}}

\newcommand{\reminder}[1]{{\it\bf \color{red}**#1**}}



\newcommand{\QValue}{{\sc QueryValue}\xspace}
\newcommand{\QSet}{{\sc QuerySet}\xspace}

\renewcommand{\t}{\text{slack}}
\newcommand{\s}{\text{status}}
\newcommand{\N}{\mathcal{N}}
\newcommand{\B}{\mathcal{B}}
\newcommand{\C}{\mathcal{C}}
\renewcommand{\P}{\mathcal{P}}
\newcommand{\eps}{\delta}
\newcommand{\dl}{\delta}
\newcommand{\dd}{D}
\newcommand{\G}{\mathcal{G}}
\newcommand{\V}{\mathcal{V}}
\newcommand{\E}{\mathcal{E}}



\ifdefined\ShowComment

\def\danupon#1{\marginpar{$\leftarrow$\fbox{D}}\footnote{$\Rightarrow$~{\sf #1 --Danupon}}}
\def\sayan#1{\marginpar{$\leftarrow$\fbox{S}}\footnote{$\Rightarrow$~{\sf #1 --Sayan}}}
\def\monika#1{\marginpar{$\leftarrow$\fbox{M}}\footnote{$\Rightarrow$~{\sf #1 --Monika}}}

\else

\def\danupon#1{}
\def\sayan#1{}
\def\monika#1{}

\fi

\newboolean{short}
\setboolean{short}{true} 

\newcommand{\shortOnly}[1]{\ifthenelse{\boolean{short}}{#1}{}}
\newcommand{\longOnly}[1]{\ifthenelse{\boolean{short}}{}{#1}}



\title{New Deterministic Approximation Algorithms for Fully Dynamic Matching\thanks{A preliminary version of this paper will appear in STOC 2016.}}


\author{Sayan Bhattacharya\thanks{The Institute of Mathematical Sciences, Chennai, India. Email: bsayan@imsc.res.in}
	\and Monika Henzinger\thanks{Faculty of Computer Science, University of Vienna, Austria. Email: monika.henzinger@univie.ac.at. This work was done in part while the author was visiting the Simons Institute for the Theory of Computing. The research leading to this work has received funding from the European Union's Seventh Framework Programme (FP7/2007-2013) under grant agreement number 317532 and from the European Research Council under the European Union's Seventh Framework Programme (FP7/2007-2013)/ERC grant agreement number 340506.}       
    \and Danupon Nanongkai\thanks{KTH Royal Institute of Technology, Sweden. Email: danupon@gmail.com. Supported by Swedish Research Council grant 2015-04659.}   
}

\begin{document}

\setcounter{tocdepth}{3}

\begin{titlepage}
\maketitle
\pagenumbering{roman}

\begin{abstract}
We present two deterministic dynamic algorithms for the maximum matching problem.  (1) An algorithm that maintains a $(2+\epsilon)$-approximate maximum matching in general graphs with $O(\poly(\log n, 1/\epsilon))$ update time. (2) An algorithm that maintains an $\alpha_K$ approximation of the {\em value} of the maximum matching  with $O(n^{2/K})$ update time in bipartite graphs, for every sufficiently large constant positive integer $K$.  Here, $1\leq \alpha_K < 2$ is a constant determined by the value  of $K$. 
Result (1) is the first deterministic algorithm that can maintain an $o(\log n)$-approximate maximum matching with polylogarithmic update time, improving the seminal result of Onak et al. [STOC 2010]. Its approximation guarantee almost matches the guarantee of the best {\em randomized} polylogarithmic update time algorithm  [Baswana et al. FOCS 2011]. 
Result (2) achieves a better-than-two approximation with {\em arbitrarily small polynomial} update time on bipartite graphs. Previously the best update time for this problem was $O(m^{1/4})$ [Bernstein et al. ICALP 2015], where $m$ is the current number of edges in the graph. \end{abstract}

\newpage
\pagenumbering{gobble}
\clearpage

\setcounter{tocdepth}{3}
\tableofcontents

\newpage
\part{EXTENDED ABSTRACT}

\end{titlepage}

\newpage

\pagenumbering{arabic}

\section{Introduction}
\label{sec:intro}

\renewcommand{\poly}{\text{poly}}
\renewcommand{\E}{\mathcal{E}}


In this paper, we consider the {\em dynamic maximum cardinality matching} problem. In this problem an algorithm has to quickly maintain an (integral) maximum-cardinality matching or its approximation, when the $n$-node input graph is undergoing edge insertions and deletions. We consider two versions of this problem:
In the {\em matching version}, the algorithm has to output the change in the (approximate) matching, if any, after each edge insertion and deletion. In the {\em value version}, the algorithm only has to output the value of the matching. (Note that an algorithm for the matching version can be used to solve the value version within the same time.)
When stating the running time below, we give the time {\em per update}\footnote{In this discussion, we ignore whether the update time is amortized or worst-case as this is not the focus of this paper. The update time of our algorithm is amortized.}. 
If not stated otherwise, these results hold for both versions. 
%
%
%
%

The state of the art for maintaining an {\em exact} solution for the value version of this problem is a randomized $O(n^{1.495})$-time algorithm \cite{Sankowski07}. This  is complemented by various hardness results which rules out polylogarithmic update time \cite{AbboudW14,HenzingerKNS15-oMv,KopelowitzPP16}. 
%
%
As it is  desirable for dynamic algorithms to have polylogarithmic update time, the recent work has focused on achieving this goal by allowing {\em approximate solutions}.
The first paper that achieved this is by Onak and Rubinfeld \cite{OnakR10},  which gave a randomized $O(1)$-approximation $O(\log^2 n)$-time algorithm and a deterministic $O(\log n)$ approximation $O(\log^2 n)$-time algorithm.
As stated in the two open problems in \cite{OnakR10}, this seminal paper opened up the doors for two research directions: \\

\begin{enumerate}
\item Designing a (possibly randomized) {\em polylogarithmic time} algorithm with smallest approximation ratio. \\
\item  Designing a {\em deterministic  polylogarithmic time} algorithm with constant approximation ratio. \\
\end{enumerate}

\smallskip
\noindent The second question is motivated by the fact that {\em randomized}  dynamic approximation algorithms only fulfill their approximation guarantee when used by an {\em oblivious} adversary, i.e., an adversary that gives
the next update {\em without} knowing  the outputs of the algorithm resulting from earlier updates. This limits the usefulness of randomized  dynamic algorithms. In contrast,
{\em deterministic}  dynamic algorithms fulfill their approximation guarantee against {\em any} adversary, even non-oblivous ones. Thus, they can be used, for example, as a ``black box'' by any other (potentially static) algorithm, while this is not generally the case for {\em randomized} dynamic algorithms. This motivates the search for deterministic fully dynamic approximation algorithms, even though a randomized algorithm with the same approximation guarantee might exists. 

\medskip
\noindent Up to date, the best answer to the first question is the randomized $2$ approximation $O(\log n)$ update time algorithm from~\cite{BaswanaGS15}. 
It remains  elusive to design  a {\em better-than-two} approximation factor with polylogarithmic update time. Some recent works have focused on achieving such approximation factor with lowest update time possible. The current best update time is $O(m^{1/4}/\epsilon^{2.5})$ \cite{BernsteinS16,BernsteinS15}, which  is deterministic and guarantees a $(3/2 + \epsilon)$ approximation factor.
%
%

\medskip
\noindent  For the second question, deterministic polylogarithmic-time $(1+\epsilon)$-approximation algorithms were known for the special case of low arboricity graphs \cite{NeimanS13,KopelowitzPP16,PelegS16}. On general graphs, the paper~\cite{Arxiv} achieved a deterministic $(3+\epsilon)$-approximation polylogarithmic-time algorithm by maintaining a fractional matching; this algorithm however works only for the value version. {\em No} deterministic $o(\log n)$ approximation algorithm with polylogarithmic update time was known for the matching version. 
(There were many deterministic constant approximation algorithms with $o(m)$ update time for the matching version (e.g. \cite{NeimanS13,Arxiv,GuptaP13,BernsteinS16,BernsteinS15}). The fastest among them requires $O(m^{1/4}/\epsilon^{2.5})$ update time \cite{BernsteinS16}.)

\medskip
\noindent {\bf Our Results.}
We make progress on both versions of the problem as stated in Theorems~\ref{th:old} and~\ref{th:new:approx}.

\begin{theorem}\label{thm:2-approx}\label{th:old}
For every $\epsilon \in (0,1)$, there is a deterministic algorithm that maintains a $(2+\epsilon)$-approximate maximum  matching in a graph in  $O(\text{poly} (\log n, 1/\epsilon))$ update time, where $n$ denotes the number of nodes.
\end{theorem}

Theorem~\ref{th:old} answers Onak and Rubinfeld's second question positively.
In fact, our approximation guarantee almost matches the best ($2$-approximation) one provided by a randomized algorithm \cite{BaswanaGS15}.\footnote{By combining our result with the techniques of~\cite{CrouchS14} in a standard way, we also obtain a deterministic $(4+\epsilon)$-approximation $O(\poly\log n \poly(1/\epsilon)\log W)$-time for the dynamic maximum-weight matching problem, where $W$ is the ratio between the largest and smallest edge weights. 
}  
Our algorithm for Theorem~\ref{th:old} is obtained by combining previous techniques \cite{Arxiv,GuptaP13,PelegS16} with two new ideas that concern fractional matchings. First, we dynamize the {\em degree splitting process} previously used in the parallel and distributed algorithms literature \cite{ipl} and use it to reduce the size of the support of the fractional matching maintained by the algorithm of \cite{Arxiv}. This helps us maintain an approximate integral matching cheaply using the result in~\cite{GuptaP13}. 
This idea alone already leads to a $(3+\epsilon)$-approximation deterministic algorithm. 
Second, we improve the approximation guarantee further to $(2+\epsilon)$ by proving a new structural lemma that concerns the ratio between (i) the maximum (integral) matching in the support of a maximal fractional matching and (ii) the maximum (integral) matching in the whole graph. It was known that this ratio is at least $1/3$. We can improve this ratio to $1/2$ with a fairly simple proof (using Vizing's theorem~\cite{vizing2}). We note that this lemma can be used to improve the analysis of an algorithm in~\cite{Arxiv} to get the following result: There is  a deterministic algorithm that maintains a $(2+\epsilon)$ approximation to the size of the maximum matching in a general graph in $O(m^{1/3}/\epsilon^2)$ amortized update time.


\begin{theorem}\label{thm:better than two}\label{th:new:approx}


For every sufficiently large positive integral constant $K$, we can  maintain an $\alpha_K$-approximation to the value of the maximum matching\footnote{We can actually maintain an approximate {\em fractional} matching with the same performance bounds.} 
in a bipartite graph $G = (V, E)$, where $1 \leq \alpha_K < 2$. The algorithm is deterministic and has an amortized update time of  $O(n^{2/K})$.
\end{theorem}
We consider Theorem~\ref{th:new:approx} to be a step towards achieving a polylogarithmic time (randomized or deterministic) fully dynamic algorithm with an approximation ratio less than 2, i.e., towards answering Onak and Rubinfeld's first question. This is because, firstly, it shows that on bipartite graphs the better-than-two approximation factor can be achieved with arbitrarily small polynomial update time, as opposed to the previous best $O(m^{1/4})$ time of \cite{BernsteinS15}. Secondly, it rules out a natural form of hardness result and thus suggests that a polylogarithmic-time algorithm with better-than-two approximation factor exists on bipartite graphs. 
More precisely, the known hardness results (e.g. those in \cite{Patrascu10,AbboudW14,KopelowitzPP16,HenzingerKNS15-oMv}) that rule out a polylogarithmic-time $\alpha$-approximation algorithm (for any $\alpha>0$) are usually in the form {\em ``assuming some conjecture, there exists a constant $\delta>0$ such that for any constant $\epsilon>0$, there is no $(1-\epsilon)\alpha$-approximation algorithm that has $n^{\delta-\epsilon}$ update time''}; 
for example, for dynamically {\em $2$-approximating} graph's diameter, this statement was proved for $\alpha=2$ and $\delta=1/2$ in \cite{HenzingerKNS15-oMv}, implying that any better-than-two approximation algorithm for this problem will require an update time close to $n^{1/2}$. Our result in \ref{thm:better than two} implies that a similar statement cannot be proved for $\alpha=2$ for the bipartite matching problem since, for any constant $\delta>0$, there is a $(2-\epsilon)$-approximation algorithm with update time, say, $O(n^{\delta/2})$ for some $\epsilon>0$.


To derive an algorithm for Theorem~\ref{th:new:approx}, we use  the fact that in a bipartite graph the size of the maximum fractional matching is the same as the size of the maximum integral matching. Accordingly,  a {\em maximal fractional matching} (which gives $2$-approximation) can be {\em augmented} by a fractional {\em $b$-matching}, for a carefully chosen capacity vector $b$, to obtain  a better-than-two approximate fractional matching. 
The idea of ``augmenting a bad solution'' that we use here is inspired by the approach in the streaming setting by Konrad~et~al.~\cite{KonradMM12}. But the way it is implemented is different as \cite{KonradMM12} focuses on using augmenting paths  while we use fractional $b$-matchings.


\medskip
\noindent {\bf Organization.} In Section~\ref{sec:prelim},   we  define some basic concepts and notations that will be used throughout the rest of this paper. In Section~\ref{main:sec:general},  we give an overview of our algorithm for Theorem~\ref{th:old}. In Section~\ref{main:sec:bipartite}, we highlight the main ideas behind our algorithm for Theorem~\ref{th:new:approx}. Finally, we conclude with some open problems in Section~\ref{sec:open}. All the missing details can be found in Part II and Part III of the paper.

\subsection{Notations and preliminaries}
\label{sec:prelim}
 Let $n = |V|$ and $m = |E|$ respectively denote the number of nodes and edges in the input graph $G = (V, E)$. Note that $m$ changes with time, but $n$ remains fixed.  Let $\text{deg}_v(E')$ denote the number of edges in a subset $E' \subseteq E$ that are incident upon a node $v \in V$.  An (integral) matching $M \subseteq E$  is a subset of edges that do not share any common endpoints. The {\em size} of a matching is the number of edges contained in it. We are also  interested in the concept of a {\em fractional matching}. Towards this end, we first define the notion of a {\em fractional assignment}. A fractional assignment $w$ assigns a weight $w(e) \geq 0$ to every edge $e \in E$. We let $W_v(w) = \sum_{(u, v) \in E} w(u,v)$ denote the total weight received by a node $v \in V$ under $w$ from its incident edges. Further, the {\em support} of $w$ is defined to be the subset of edges $e \in E$ with $w(e) > 0$. Given two fractional assignments $w, w'$, we define their addition $(w+w')$ to be a new fractional assignment that assigns a weight $(w+w')(e) = w(e) + w'(e)$ to every edge $e \in E$. We say that a fractional assignment $w$ forms a {\em fractional matching} iff we have $W_v(w) \leq 1$ for all nodes $v \in V$. Given any subset of edges $E' \subseteq E$, we define $w(E') = \sum_{e \in E'} w(e)$. We define the {\em size} of a fractional matching $w$ to be  $w(E)$. Given any subset of edges $E' \subseteq E$, we let $\text{Opt}_{f}(E')$ (resp. $\text{Opt}(E')$) denote the maximum possible size of a fractional matching with support $E'$ (resp. the maximum possible size of an integral matching $M' \subseteq E'$).  Theorem~\ref{main:th:structure:bipartite}  follows from the half-integrality of the matching polytope in general graphs and its total unimodularity in bipartite graphs.

\begin{theorem}
\label{main:th:structure:bipartite}
\label{main:main:th:structure}
Consider any subset of edges $E' \subseteq E$ in the graph $G = (V, E)$. We have: $\text{Opt}(E') \leq \text{Opt}_f(E') \leq (3/2) \cdot \text{Opt}(E')$. Further, if the graph $G$ is bipartite, then we have: $\text{Opt}_f(E') = \text{Opt}(E')$.
\end{theorem}

 Gupta and Peng~\cite{GuptaP13} gave a dynamic algorithm that maintains a $(1+\epsilon)$-approximate maximum matching in $O(\sqrt{m}/\epsilon^2)$ update time. A simple modification of their algorithm gives the following result.  

\begin{theorem}\cite{GuptaP13}
\label{th:gupta:peng}
If  the maximum degree in  a dynamic graph  never exceeds some threshold $d$, then we can maintain a $(1+\epsilon)$-approximate maximum matching  in $O(d/\epsilon^2)$ update time. 
\end{theorem}

We say that a fractional matching $w$ is   {\em $\alpha$-maximal}, for $\alpha \geq 1$, iff $W_u(w) + W_v(w) \geq 1/\alpha$ for every edge $(u,v) \in E$. Using  LP-duality and complementary slackness conditions, one can show  the following result.

\begin{lemma}
\label{lm:approx:maximal}
 $\text{Opt}_f(E) \leq 2\alpha \cdot w(E)$ for every $\alpha$-maximal fractional matching $w$ in a graph $G = (V, E)$. 
 \end{lemma}

\section{General Graphs}
\label{main:sec:general}

We give a dynamic algorithm for maintaining an approximate maximum matching in a general graph. We consider the following dynamic setting. Initially, the input graph is empty. Subsequently, at each time-step, either an edge is inserted into the graph, or an already existing edge is deleted from the graph. The node-set of the graph, however, remains unchanged. Our main result in this section is stated in Theorem~\ref{th:old}. Throughout this section, we will use the notations and concepts introduced in Section~\ref{sec:prelim}.

\subsection{Maintaining a large fractional matching}
\label{sec:sodapaper}

Our algorithm for Theorem~\ref{th:old} builds upon an existing dynamic data structure that maintains a large fractional matching. This data structure   was developed in~\cite{Arxiv}, and can be described as follows. Fix a small constant $\epsilon > 0$. Define $L = \lceil \log_{(1+\epsilon)} n \rceil$, and partition the node-set $V$ into $L+1$ subsets $V_0,  \ldots, V_L$. We say that the nodes belonging to the subset $V_i$ are in ``level $i$''. We denote the level of a node  $v$ by $\ell(v)$, i.e.,  $v \in V_i$ iff $\ell(v) = i$. We next define the ``{\em level of an edge}'' $(u,v)$ to be $\ell(u,v) = \max(\ell(u), \ell(v))$. In other words, the level of an edge is the maximum level of its endpoints. We let $E_i = \{ e \in E : \ell(e) = i\}$ denote the set of edges at level $i$, and define the subgraph $G_i = (V, E_i)$. Thus, note that the edge-set $E$ is partitioned by the subsets $E_0,  \ldots, E_L$. For each  level $i \in \{0, \ldots, L\}$, we now define a fractional assignment $w_i$ with support $E_i$. The fractional assignment $w_i$ is uniform, in the sense that it assigns the same weight $w_i(e) = 1/d_i$, where $d_i = (1+\epsilon)^i$, to every edge $e \in E_i$ in its support. In contrast, $w_i(e) = 0$ for every edge $e \in E \setminus E_i$.  Throughout the rest of this section, we  refer to this structure as a {\em ``hierarchical partition''}.

\begin{theorem}~\cite{Arxiv}
\label{th:runtime}
We can maintain a hierarchical partition dynamically in $O(\log n/\epsilon^2)$ update time. The algorithm ensures that the fractional assignment $w = \sum_{i=0}^L w_i$ is a $(1+\epsilon)^2$-maximal matching in $G = (V, E)$. Furthermore, the algorithm ensures that $1/(1+\epsilon)^2 \leq W_v(w) \leq 1$ for all nodes $v \in V$ at levels $\ell(v) > 0$.  
\end{theorem}

\begin{corollary}
\label{cor:approx}
The  fractional matching $w$ in Theorem~\ref{th:runtime} is a $2(1+\epsilon)^2$-approximation to  $\text{Opt}_f(E)$. 
\end{corollary}

\begin{proof}
Follows from Lemma~\ref{lm:approx:maximal} and Theorem~\ref{th:runtime}.
\end{proof}

\begin{corollary}
\label{cor:lm:deg:1}
Consider the hierarchical partition in Theorem~\ref{th:runtime}. There, we have  $\text{deg}_v(E_i) \leq d_i$ for all nodes $v \in V$ and levels $0 \leq i \leq  L$. 
\end{corollary}

\begin{proof}
The corollary holds since
$1 \geq W_v(w) \geq W_v(w_i) =  \sum_{(u, v) \in E_i} w_i(u, v) 
 = (1/d_i) \cdot \text{deg}_v(E_i)$.
\end{proof}

Accordingly, throughout the rest of this section, we refer to $d_i$ as being the {\em degree threshold} for level $i$.

\subsection{An overview of our approach}
\label{sec:overview} We will now explain the main ideas that are needed to prove Theorem~\ref{th:old}. Due to space constraints, we will focus on getting a constant approximation in $O(\text{poly} \log n)$ update time. See the full version of the paper for the complete proof of Theorem~\ref{th:old}.  
 First, we maintain a hierarchical partition as per Theorem~\ref{th:runtime}. This gives us a $2(1+\epsilon)^2$-approximate maximum fractional matching (see Corollary~\ref{cor:approx}). Next, we give a dynamic data structure that deterministically {\em rounds} this fractional matching into an integral matching without losing too much in the approximation ratio. The main challenge is to ensure that the data structure has $O(\text{poly} \log n)$ update time, for otherwise one could simply use any deterministic rounding algorithm that works well in the static setting.

\subsubsection{An ideal skeleton} 
\label{sub:sec:ideal}

Our dynamic rounding procedure, when applied on top of the data structure used for Theorem~\ref{th:runtime}, will output a low-degree subgraph that approximately preserves the size of the maximum matching. We will then extract a large integral matching from this subgraph using Theorem~\ref{th:gupta:peng}. To be more specific, recall that $w$ is the fractional matching maintained in Theorem~\ref{th:runtime}. We will   maintain a subset of edges $E' \subseteq E$  in $O(\text{poly} \log n)$ update time that satisfies two properties. 
\begin{eqnarray}
 \text{There is a fractional matching } w' \text{ with support } E'  
 \text{ such that }  w(E) \leq c \cdot w'(E') \text{ for some constant } c \geq 1. \label{main:eq:condition:1} \\
 \label{main:eq:condition:2} \text{deg}_v(E') = O(\text{poly} \log n) \text{ for all nodes } v \in V.
\end{eqnarray}  
Equation~\ref{main:eq:condition:1}, along with Corollary~\ref{cor:approx} and Theorem~\ref{main:th:structure:bipartite},  guarantees that the subgraph  $G' = (V, E')$ preserves the size of the maximum  matching in $G = (V, E)$ within a constant factor. Equation~\ref{main:eq:condition:2}, along with Theorem~\ref{th:gupta:peng},  guarantees that we can maintain a matching $M' \subseteq E'$ in $O(\text{poly} \log n/\epsilon^2)$ update time such that $\text{Opt}(E') \leq (1+\epsilon) \cdot |M'|$. Setting  $\epsilon$ to be some small constant (say, $1/3$),  these two observations together imply that we can maintain a $O(1)$-approximate maximum matching $M' \subseteq E$   in $O(\text{poly} \log n)$ update time.

To carry out this scheme, we note that in the hierarchical partition the degree thresholds $d_i = (1+\epsilon)^i$ get smaller and smaller as we get into lower levels (see Corollary~\ref{cor:lm:deg:1}).  Thus, if most of the value of $w(E)$ is coming from the lower levels (where the maximum degree is already small), then we can easily satisfy equations~\ref{main:eq:condition:1},~\ref{main:eq:condition:2}. Specifically, we fix a level $0 \leq L' \leq L$ with degree threshold $d_{L'} = (1+\epsilon)^{L'} = \Theta(\text{poly} \log n)$, and define the edge-set $Y = \bigcup_{j=0}^{L'} E_j$. We also define  $w^+ = \sum_{i > L'} w_i$ and $w^- = \sum_{i \leq L'} w_i$. Note that $w(E) = w^+(E) + w^-(E)$.  Now, consider two possible  cases. \\

\noindent {\em Case 1. $w^-(E) \geq (1/2) \cdot w(E)$.}  In other words, most of the value of $w(E)$ is coming from the levels $[0, L']$. By Corollary~\ref{cor:lm:deg:1}, we have $\text{deg}_v(Y) \leq \sum_{j=0}^{L'} d_j \leq (L'+1) \cdot d_{L'} = \Theta(\text{poly} \log n)$ for all nodes $v \in V$. Thus,  we can simply set $w' = w^+$ and $E' = Y$ to satisfy equations~\ref{main:eq:condition:1},~\ref{main:eq:condition:2}. \\

\noindent {\em Case 2. $w^+(E) > (1/2) \cdot w(E)$.} In other words, most of the value of $w(E)$ is coming from the levels $[L'+1, L]$. To deal with this case, we introduce the concept of an {\em ideal skeleton}. See Definition~\ref{def:ideal:split}. Basically, this is a subset of edges $X_i \subseteq E_i$ that scales {\em down} the degree of every node  by a factor of $d_{i}/d_{L'}$. We will later show how to maintain a structure akin to an ideal skeleton in a dynamic setting.  Once this is done, we can easily construct a new fractional assignment $\hat{w}_i$ that scales {\em up} the weights of the surviving edges in $X_i$ by the same factor  $d_i/d_{L'}$. Since $w_i(e) = 1/d_i$ for all edges $e \in E_i$, we set $\hat{w}_i(e) = 1/d_{L'}$ for all edges $e \in X_i$. To ensure that $X_i$ is the support of the fractional assignment $\hat{w}_i$, we set $\hat{w}_i(e) = 0$ for all edges $e \in E \setminus X_i$.  Let $X = \cup_{i > L'} X_i$ and $\hat{w} = \sum_{i > L'} \hat{w}_i$. It is easy to check that this transformation preserves the weight received by a node under the fractional assignment $w^+$, that is, we have $W_v(\hat{w}) = W_v(w^+)$ for all nodes $v \in V$.   Accordingly, Lemma~\ref{main:lm:structure} implies  that if we set $w' = \hat{w}$ and $E' = X$, then equations~\ref{main:eq:condition:1},~\ref{main:eq:condition:2} are satisfied.

\begin{definition}
\label{def:ideal:split}
Consider any  level $i >L'$. An {\em ideal skeleton} at level $i$ is a subset of edges $X_i \subseteq E_i$ such that $\text{deg}(v, X_i) = (d_{L'}/d_i) \cdot  \text{deg}(v, E_i)$ for all nodes $v \in V$.  Define a fractional assignment $\hat{w}_i$ on support $X_i$ by setting $\hat{w}_i(e) = 1/d_{L'}$ for all $e \in X_i$. For every other edge $e \in E \setminus X_i$,  set $\hat{w}_i(e) = 0$. Finally,  define the edge-set $X = \bigcup_{i > L'} X_i$ and the fractional assignment $\hat{w} = \sum_{j > L'} \hat{w}_j$. 
\end{definition}

\begin{lemma}
\label{main:lm:structure}
We have: $\text{deg}_v(X) = O(\text{poly} \log n)$ for all nodes $v \in V$, and $\hat{w}(E) = w^+(E)$. The edge-set $X$ and the fractional assignment $\hat{w}$ are defined as per Definition~\ref{def:ideal:split}.
\end{lemma}
\begin{proof}
Fix any node $v \in V$. Corollary~\ref{cor:lm:deg:1} and Definition~\ref{def:ideal:split} imply that:  $\text{deg}_v(X) = \sum_{j > L'} \text{deg}_v(X_j) = \sum_{j > L'} (d_{L'}/d_j) \times \text{deg}_v(E_j) \leq \sum_{j > L'} (d_{L'}/d_j)  d_j = (L-L')  d_{L'} = O(\text{poly} \log n)$.

To prove the second part,  consider any level $i > L'$. Definition~\ref{def:ideal:split} implies that  $W_v(\hat{w}_i) = (1/d_{L'}) \cdot \text{deg}_v(X_i) = (1/d_i) \cdot \text{deg}_v(E_i) = W_v(w_i)$. Accordingly, we infer that: $W_v(\hat{w}) = \sum_{i > L'} W_v(\hat{w}_i) = \sum_{i > L'} W_v(w_i) = W_v(w^+)$. Summing over all the nodes, we get: $\sum_{v \in V} W_v(\hat{w}) = \sum_{v \in V} W_v(w^+)$. It follows that $\hat{w}(E) = w^+(E)$. 
\end{proof}

\renewcommand{\H}{\mathcal{H}}

\subsubsection{A degree-splitting procedure} 
\label{sub:sec:split}

It remains to show to how to maintain an ideal skeleton. To gain some intuition, let us first consider the problem in a static setting. Fix any level $i > L'$, and let $\lambda_i = d_i/d_{L'}$. An ideal skeleton at level $i$ is simply a subset of edges $X_i \subseteq E_i$ that scales down the degree of every node (w.r.t. $E_i$) by a factor $\lambda_i$. Can we compute such a subset $X_i$ in $O(|E_i| \cdot \text{poly} \log n)$ time? Unless we manage to solve this problem in the static setting, we cannot expect to get a dynamic data structure for the same problem  with $O(\text{poly} \log n)$ update time.  The SPLIT($E_i$) subroutine described below answers this question in the affirmative, albeit for $\lambda_i = 2$. Specifically, in linear time the subroutine outputs a subset of edges where the degree of each  node is halved. If $\lambda_i > 2$, then to get an ideal skeleton we need to repeatedly invoke this subroutine $\log_2 \lambda_i$ times: each invocation of the subroutine reduces the degree of each node by a factor of two, and hence in the final output the degree of each node is reduced by a factor of $\lambda_i$.\footnote{To highlight the main idea, we assume that $\lambda_i$ is a power of $2$.} This leads to a total runtime of $O(|E_i| \cdot \log_2 \lambda_i) = O(|E_i| \cdot \log n)$ since $\lambda_i = d_i/d_{L'} \leq d_i \leq n$.  \\

\noindent {\bf The SPLIT($\E$) subroutine, where $\E \subseteq E$.} To highlight the main idea, we assume that (1) $\text{deg}_v(\E)$ is even for every node $v \in V$, and (2) there are an even number of edges in $\E$. Hence, there exists an Euler tour on $\E$ that visits each edge exactly once. We construct such an Euler tour in $O(|\E|)$ time and then collect alternating edges of this Euler tour in a set $\mathcal{H}$. It follows that (1) $\mathcal{H} \subseteq \E$ with $|\mathcal{H}| = |\E|/2$, and (2) $\text{deg}_v(|\mathcal{H}|) = (1/2) \cdot \text{deg}_v(|\E|)$ for every node $v \in V$. The subroutine returns the set of edges $\mathcal{H}$. In other words, the subroutine runs in $O(|\E|)$ time, and returns a subgraph that halves the degree of every node. 


\subsection{From ideal to approximate skeleton}
\label{main:sec:runtime}

We now shift our attention to maintaining an ideal skeleton in a dynamic setting. Specifically, we focus on the following problem: We are given an input graph $G_i = (V, E_i)$, with $|V| = n$, that is undergoing a sequence of edge insertions/deletions. The set $E_i$ corresponds to the level $i$ edges in the hierarchical partition (see Section~\ref{sec:sodapaper}). We always have $\text{deg}_v(E_i) \leq d_i$ for all nodes $v \in V$ (see Corollary~\ref{cor:lm:deg:1}). There is a parameter $1 \leq \lambda_i = d_i/d_{L'} \leq n$. In $O(\text{poly} \log n)$ update time, we want to maintain a subset of edges $X_i \subseteq E_i$ such that $\text{deg}_v(X_i) = (1/\lambda_i) \cdot \text{deg}_v(E_i)$ for all nodes $v \in V$.   The basic building block of our dynamic algorithm will be the (static) subroutine  SPLIT($\E$) from Section~\ref{sub:sec:split}.

 Unfortunately, we will not  be able to achieve our initial goal, which was to reduce the degree of {\em every} node  by {\em exactly} the  factor $\lambda_i$ in a dynamic setting. For one thing, there might be some  nodes $v$ with $\text{deg}_v(E_i) < \lambda_i$. It is clearly not possible to reduce their degrees by  a factor of $\lambda_i$ (otherwise their new degrees will be between zero and one). Further, we will need to introduce some {\em slack} in our data structures if we want to ensure polylogarithmic update time.

 We now describe the structures that will be actually maintained by our dynamic algorithm. We maintain a partition of the node-set $V$ into two subsets: $B_i \subseteq V$ and $T_i = V \setminus B$. The nodes in $B_i$ (resp. $T_i$) are called ``big'' (resp. ``tiny''). We also maintain a subset of nodes $S_i \subseteq V$ that are called ``spurious''.  Finally, we maintain a subset of edges $X_i \subseteq E_i$. Fix two parameters $\epsilon, \delta \in (0, 1)$. For technical reasons that will become clear later on, we require that:
 \begin{equation}
 \label{eq:parameter}
 \epsilon = 1/100,  \text{ and } \delta = \epsilon^2/L
 \end{equation}
  We ensure that the following properties are satisfied.
\begin{eqnarray}
\text{deg}_v(E_i) \geq \epsilon d_i/L \text{ for all nodes } v \in B_i \setminus S_i.  \label{main:eq:algo:big} \\
\text{deg}_v(E_i) \leq 2\epsilon d_i/L \text{ for all nodes } v \in T_i \setminus S_i. \label{main:eq:algo:tiny} \\
|S_i|  \leq \delta \cdot |B_i| \label{main:eq:algo:spurious} 
\end{eqnarray}
\begin{eqnarray}
 \frac{(1-\epsilon)}{\lambda_i} \cdot \text{deg}_v(E_i) \leq \text{deg}_v(X_i) \leq \frac{(1+\epsilon)}{\lambda_i} \cdot \text{deg}_v(E_i) \nonumber \\
 \text{ for all nodes } v \in B_i \setminus S_i. \label{main:eq:algo:1} \label{main:eq:algo:deg:big} 
\end{eqnarray}
\begin{eqnarray}
\text{deg}_v(X_i) \leq (1/\lambda_i) \cdot (2\epsilon d_i/L) \text{ for all nodes } v \in T_i \setminus S_i. \label{main:eq:algo:deg:tiny} \\
\text{deg}_v(X_i) \leq (1/\lambda_i) \cdot d_i \text{ for all nodes } v \in S_i. \label{main:eq:algo:2}  \label{main:eq:algo:deg:spurious}
\end{eqnarray}

Equation~\ref{main:eq:algo:big} implies that  all the non-spurious big  nodes  have large degrees in $G_i = (V, E_i)$. On a similar note, equation~\ref{main:eq:algo:tiny} implies that all the non-spurious tiny nodes have small degrees in $G_i$. Next,  by equation~\ref{main:eq:algo:spurious}, the number of spurious nodes is negligibly small in comparison with the number of big nodes.  By equation~\ref{main:eq:algo:deg:big}, the degrees of the non-spurious big nodes  are scaled by a factor that is very close to $\lambda_i$. Thus, the non-spurious big nodes satisfy an approximate version of the degree-splitting property required by Definition~\ref{def:ideal:split}. 

Moving on, by equation~\ref{main:eq:algo:deg:tiny}, the degrees of the non-spurious tiny nodes in $X_i$ are at most  $(1/\lambda_i) \cdot (2\epsilon d_i/L) = 2 \epsilon d_{L'}/L$. Since each edge in $X_i$ receives weight $1/d_{L'}$ under the assignment $\hat{w}_i$ (see Definition~\ref{def:ideal:split}), we infer that $W_v(\hat{w}_i) = (1/d_{L'}) \cdot \text{deg}_v(X_i) \leq 2 \epsilon/L$ for all nodes $v \in T_i \setminus S_i$. Since there are at most $(L-L')$ relevant levels in a hierarchical partition, we infer that: 
\begin{equation}
\label{eq:less:90}
\sum_{i > L' : v \in T_i \setminus S_i} W_v(\hat{w}_i) \leq L \cdot (2\epsilon/L) = 2\epsilon
\end{equation}
Since for a non-spurious tiny node $v \in T_i \setminus S_i$ we have $\text{deg}_v(E_i) \leq 2\epsilon d_i/L$ (see equation~\ref{main:eq:algo:tiny}) and $W_v(w_i) = (1/d_i) \cdot \text{deg}_v(E_i) \leq 2\epsilon/L$,  an exactly similar  argument gives us:
 \begin{equation}
\label{eq:less:91}
\sum_{i > L' : v \in T_i \setminus S_i} W_v(w_i) \leq L \cdot (2\epsilon/L) = 2\epsilon
\end{equation}
Equations~\ref{eq:less:90},~\ref{eq:less:91} have the following implication: The levels where $v$ is a non-spurious tiny node contribute a negligible amount towards the weights $W_v(w^+)$ and $W_v(\hat{w})$ (see Section~\ref{sub:sec:ideal}). Hence, although we are no longer guaranteed that the degrees of these nodes will be scaled down exactly by the factor $\lambda_i$, this should not cause too much of a problem -- the sizes of the fractional assignments $w^+(E)$ and $\hat{w}(E)$ should still be close to each other as in Section~\ref{sub:sec:ideal}.

Finally, Corollary~\ref{cor:lm:deg:1} states that the degree of a node in $E_i$ is at most $d_i$. Hence, according to the definition of an ideal skeleton (see Definition~\ref{def:ideal:split}), the degree of a node in $X_i$ ought not to exceed $(1/\lambda_i) \cdot d_i = d_{L'}$. Equation~\ref{main:eq:algo:deg:spurious} ensures that the spurious nodes satisfy this property. 

If the set of edges $X_i$ satisfies the conditions described above, then we say that we have an {\em approximate-skeleton} at our disposal. This is formally stated as  follows.

\begin{definition}
\label{def:approx:split}
Fix any level $i > L'$, and suppose that there is a partition of the node-set $V$ into two subsets $B_i \subseteq V$ and $T_i = V \setminus B_i$. Further, consider another subset of nodes $S_i \subseteq V$ and a subset of edges $X_i \subseteq E_i$. The tuple $(B_i, T_i, S_i, X_i)$ is an {\em approximate-skeleton} iff it satisfies equations~(\ref{main:eq:algo:big}) -- (\ref{main:eq:algo:deg:spurious}).  
\end{definition}

One may object at this point that we have deviated from the concept of an ideal skeleton (see Definition~\ref{def:ideal:split}) so much that it will impact the approximation ratio of our final algorithm. To address this concern, we now state the following theorem whose proof appears in Section~\ref{sec:new:101}.

\begin{theorem}
\label{th:approx:split}
For each level $i > L'$, consider an approximate skeleton as per Definition~\ref{def:approx:split}. Let $X = \bigcup_{i > L'} X_i$ denote the set of edges from these approximate-skeletons.  Let $Y = \bigcup_{i \leq L'} E_i$ denote the set of edges from the remaining levels in the hierarchical partition. Then we have: 
\begin{enumerate}
\item There is a fractional matching $w'$ on support $X \cup Y$ such that $w(E) \leq  O(1) \cdot w'(X \cup Y)$. Here, $w$ is the fractional matching given by  Theorem~\ref{th:runtime}.
\item $\text{deg}_v(X \cup Y) = O(\text{poly} \log n)$ for all  $v \in V$.
\end{enumerate}
In other words, the set of edges $X \cup Y$ satisfies equations~\ref{main:eq:condition:1},~\ref{main:eq:condition:2}. 
\end{theorem}

As per the discussion immediately after equations~\ref{main:eq:condition:1},~\ref{main:eq:condition:2}, we infer the following guarantee.

\begin{corollary}
\label{cor:approx:split}
Suppose that for each level $i > L'$ there is a dynamic algorithm that maintains an approximate-skeleton  in $O(\text{poly} \log n)$ update time. Then we can also maintain a $O(1)$-approximate maximum matching in the input graph $G$ in $O(\text{poly} \log n)$ update time.  
\end{corollary}

It  remains to show how to maintain an approximate skeleton efficiently in a dynamic setting. Accordingly, we state the following theorem whose proof appears in Section~\ref{main:sec:main:algo}.

\begin{theorem}
\label{th:approx:skeleton:maintain}
Consider any level $i > L'$. In $O(\text{poly} \log n)$ update time, we can maintain an approximate-skeleton at level $i$ as per Definition~\ref{def:approx:split}. 
\end{theorem}

Corollary~\ref{cor:approx:split} and Theorem~\ref{th:approx:skeleton:maintain} imply that we can maintain a $O(1)$-approximate maximum matching in a dynamic graph in $O(\text{poly} \log n)$ update time.

\subsection{Maintaing an approximate skeleton: Proof of Theorem~\ref{th:approx:skeleton:maintain}}
\label{main:sec:main:algo}

Fix a level $i > L'$. We will show how to efficiently maintain an approximate skeleton at level $i$ under the assumption that $\lambda_i = 2$. In the full version of the paper, if $\lambda_i > 2$, then we iteratively apply the algorithm presented here $O(\log_2 \lambda_i) = O(\log_2 (d_i/d_{L'})) = O(\log n)$ times, and each iteration reduces the degrees of the nodes by a factor of two. Hence, after the final iteration, we get a subgraph that is an approximate skeleton as per Definition~\ref{def:approx:split}. 

We maintain the set of edges $E_{B_i} = \{ (u, v) \in E_i : \{u, v\} \cap B_i \neq \emptyset\}$ that are incident upon the big nodes. Further, we associate a ``status bit'' with each node $v \in V$, denoted by $\text{{\sc Status}}[v] \in \{0, 1\}$. We ensure that they satisfy two conditions: (1) If $\text{{\sc Status}}[v] = 1$, then $\text{deg}_v(E_i) \geq \epsilon d_i/L$ (which is the threshold for non-spurious big nodes in equation~\ref{main:eq:algo:big}). (2) If $\text{{\sc Status}}[v] = 0$, then $\text{deg}_v(E_i) \leq 2\epsilon d_i/L$ (which is the  threshold for non-spurious tiny nodes in equation~\ref{main:eq:algo:tiny}). Whenever an edge incident upon $v$ is inserted into (resp. deleted from) $E_i$, we update the status bit of $v$ {\em in a lazy manner} (i.e., we flip the bit only if one of the two conditions is violated). We define an ``epoch'' of a node $v$ to be the time-interval between any two consecutive flips of the bit $\text{{\sc Status}}[v]$.  Since there is a gap of $\epsilon d_i/L$ between the thresholds in equations~\ref{main:eq:algo:big},~\ref{main:eq:algo:tiny}, we infer that:
\begin{eqnarray}
\text{In any  epoch of a node } v, \text{ at least } \epsilon d_i/L \text{ edge }
\text{insertions/deletions incident upon } v \text{ takes place in } E_i. \label{eq:status}
\end{eqnarray}
Our dynamic algorithm runs in {\em ``phases''.} In the beginning of a phase, there are no spurious nodes, i.e., we have $S_i = \emptyset$. During a phase, we handle the insertion/deletion of an edge in $E_i$ as follows.

\subsubsection{Handling the insertion/deletion of an edge} Consider the insertion/deletion of an edge $(u,v)$ in $E_i$. To handle this event, we first update the set $E_i$ and the status bits of $u, v$. If  $\{u, v \} \cap B_i \neq \emptyset$ , then we also update the edge-set  $E_{B_i}$. Next, for every  endpoint $x \in \{u, v\} \setminus S_i$, we check if the node $x$ violates any of the equations~\ref{main:eq:algo:big},~\ref{main:eq:algo:tiny},~\ref{main:eq:algo:deg:big}. If yes,  then we set $S_i \leftarrow S_i \cup \{x\}$. Finally, we check if $|S_i| > \delta \cdot |B_i|$, and if yes, then we terminate the phase by calling the subroutine TERMINATE-PHASE(.). 

\subsubsection{The subroutine TERMINATE-PHASE(.)} We scan through the nodes in $S_i$. For each such node $v \in S_i$, if $\text{{\sc Status}}[v] = 1$, then we set $B_i \leftarrow B_i \cup \{v\}$, $T_i \leftarrow T_i \setminus \{v\}$, and ensure that all the edges $(u, v) \in E_i$ incident upon $v$ are included in $E_{B_i}$. Else if $\text{{\sc Status}}[v] = 0$, then we set $T_i \leftarrow T_i \cup \{v\}$, $B_i \leftarrow B_i \setminus \{v\}$, and ensure that all the edges $(u, v) \in E_i$ incident upon $v$ are excluded from $E_{B_i}$. Finally, we set $S_i \leftarrow \emptyset$ and  $X_i \leftarrow \text{SPLIT}(E_{B_i})$ (see Section~\ref{sub:sec:split}). From the next edge insertion/deletion in $E_i$, we begin a new phase. 

\subsubsection{Correctness.} At the start of a phase, clearly all the properties hold. This fact, along with the observation that an edge is never inserted into $X_i$ during the middle of a phase, implies that equations~\ref{main:eq:algo:deg:tiny},~\ref{main:eq:algo:deg:spurious} hold all the time. Whenever a node violates equations~\ref{main:eq:algo:big},~\ref{main:eq:algo:tiny},~\ref{main:eq:algo:deg:big}, we make it spurious. Finally, whenever equation~\ref{main:eq:algo:spurious} is violated, we terminate the phase. This ensures that all the properties hold all the time.

\subsubsection{Analyzing the amortized update time.} Handling an edge insertion/deletion in $E_i$ in the middle of a phase needs $O(1)$ update time. Just before a phase ends, let $b$ and $s$ respectively denote the number of big and spurious nodes. Since a phase ends only when equation~\ref{main:eq:algo:spurious} is violated, we have $s \geq \delta \cdot b$. In the subroutine TERMINATE-PHASE(.), updating the edge-set $E_{B_i}$ requires $O(s \cdot d_i)$ time, since we need to go through all the $s$ nodes in $S$, and for each such node, we need to check all the edges incident upon it (and a node can have at most $d_i$ edges incident upon it by Corollary~\ref{cor:lm:deg:1}). At this stage, the set $B_i$ consists of at most $(s+b)$ nodes, and so the set $E_{B_i}$ consists of at most $(s+b) d_i$ edges. Hence, the call to the subroutine SPLIT($E_{B_i}$) takes $O((s+b) d_i)$ time. Accordingly, the total time taken to terminate the phase is $O((s+b) d_i) = O((s+s/\delta) d_i) = O(s d_i/\delta)$. We thus reach the following conclusion: The total time spent on a given phase is equal to $O(s d_i/\delta)$, where $s$ is the number of spurious nodes at the end of the phase. Since $S_i = \emptyset$ in the beginning of the phase, we can also interpret $s$ as being the number of nodes that {\em becomes spurious} during the phase. Let $C$ denote a counter that is initially set to zero, and is incremented by one each time some node becomes spurious. From the preceding  discussion, it follows that the total update time of our algorithm, across all the phases, is at most $O( C d_i/\delta)$.  Let $t$ be the total number of edge insertions/deletions in $E_i$. We will show that $C = O(t L/(\epsilon^2 d_i))$. This will imply an amortized update time of $O((1/t) \cdot C d_i/\delta) = O(L/\epsilon^2 \delta) = O(\text{poly} \log n)$.  The last equality holds due to equation~\ref{eq:parameter}.

Note that during a phase a node $v$ becomes spurious because of one of two reasons: (1) It violated equations~\ref{main:eq:algo:big} or~\ref{main:eq:algo:tiny}.  In this case, the node's status bit is flipped. Hence, by equation~\ref{eq:status}, between any two such events, at least $\epsilon d_i/L$ edge insertions/deletions occur incident upon $v$. (2) It violates equation~\ref{main:eq:algo:deg:big}. In this event, note that in the beginning of the phase we had $v \in B_i$, $\text{deg}_v(X_i) = (1/2) \cdot \text{deg}_v(E_i) = (1/\lambda_i) \cdot \text{deg}_v(E_i)$ and  $\text{deg}_v(E_i) \geq \epsilon d_i/L$.  The former guarantee holds  since we set $X_i \leftarrow \text{SPLIT}(E_{B_i})$ at the end of the previous phase, whereas the latter guarantee follows from equation~\ref{main:eq:algo:big}. On the other hand, when the node $v$ violates equation~\ref{main:eq:algo:deg:big}, we find that $\text{deg}_v(X_i)$ differs from $(1/\lambda_i) \cdot \text{deg}_v(E_{i})$ by at least $(\epsilon/\lambda_i) \cdot \text{deg}_v(E_i) = (\epsilon/2) \cdot \text{deg}_v(E_i)$. Accordingly, during this time-interval (that starts at the beginning of the phase and ends when equation~\ref{main:eq:algo:deg:big} is violated), at least $\Omega(\epsilon^2 d_i/L)$ edge insertions/deletions incident upon $v$ must have taken place in $E_i$. To summarize, for each unit increment in $C$, we must have $\Omega(\epsilon^2 d_i/L)$ edge insertions/deletions in $E_i$. Thus, we have $C = O(t/(\epsilon^2 d_i/L)) = O(t L/(\epsilon^2d_i))$.

\subsection{Approximation guarantee from approximate skeletons: Proof of Theorem~\ref{th:approx:split}}
\label{sec:new:101}

We devote this section to the complete proof of Theorem~\ref{th:approx:split}. At a high level, the main idea behind the proof remains the same as in Section~\ref{sub:sec:ideal}. We will have to overcome several intricate obstacles, however, because now we are dealing with the relaxed notion of an approximate skeleton as defined in Section~\ref{main:sec:runtime}.

We start by focussing on the second part of  Theorem~\ref{th:approx:split}, which states that the degree of every node in $X \cup Y$ is at most $O(\text{poly} \log n)$. This is stated and proved in Lemma~\ref{lm:new:901}. 

\begin{lemma}
\label{lm:new:901}
Consider the subsets of edges $X \subseteq E$ and $Y \subseteq E$ as per Theorem~\ref{th:approx:split}. Then we have $\text{deg}_v(X \cup Y) = O(\text{poly} \log n)$ for every node $v \in V$. 
\end{lemma}

\begin{proof}
We first bound the degree of a node $v \in V$ in $X$. Towards this end, consider any level $i > L'$.  By equation~\ref{main:eq:algo:deg:spurious}, we have that $\text{deg}_v(X_i) \leq (1/\lambda_i) \cdot d_i = d_{L'}$ for all spurious nodes $v \in S_i$. By equation~\ref{main:eq:algo:deg:tiny}, we have $\text{deg}_v(X_i) \leq (1/\lambda_i) \cdot (2 \epsilon d_i/L) = 2 \epsilon d_{L'}/L \leq d_{L'}$ for all non-spurious tiny nodes $v \in T_i \setminus S_i$. Finally, by equation~\ref{main:eq:algo:deg:big} and Corollary~\ref{cor:lm:deg:1}, we have that $\text{deg}_v(X_i) \leq ((1+\epsilon)/\lambda_i) \cdot \text{deg}_v(E_i) \leq (1+\epsilon) \cdot (d_i/\lambda_i) = (1+\epsilon) \cdot d_{L'}$ for all non-spurious big nodes $v \in B_i \setminus S_i$. By Definition~\ref{def:approx:split}, a node belongs to exactly one of the three subsets -- $S_i$, $T_i \setminus S_i$ and  $B_i \setminus S_i$. Combining all these  observations, we get: $\text{deg}_v(X_i) \leq (1+\epsilon) \cdot d_{L'} = O(\text{poly} \log n)$ for all nodes $v \in V$. Now, summing over all  $i > L'$, we get: $\text{deg}_v(X) = \sum_{i > L'} \text{deg}_v(X_i) \leq (L - L') \cdot O(\text{poly} \log n) = O(\text{poly} \log n)$ for all the nodes $v \in V$. 

Next, we bound the degree of a node $v \in V$ in $Y$. Note that the degree thresholds in the levels $[0, L']$ are all at most $d_{L'}$. Specifically, for all $i \leq L'$ and  $v \in V$, Corollary~\ref{cor:lm:deg:1} implies that $\text{deg}_v(E_i) \leq d_i \leq d_{L'} = O(\text{poly} \log n)$. Hence, for every node $v \in V$, we have $\text{deg}_v(Y) = \sum_{i \leq L'} \text{deg}_v(E_i) \leq (L'+1) \cdot O(\text{poly} \log n) = O(\text{poly} \log n)$.  

To summarize, the maximum degree of a node in the edge-sets $X$ and $Y$ is $O(\text{poly} \log n)$. Hence,   for every node $v \in V$, we have: $\text{deg}_v(X \cup Y) = \text{deg}_v(X) + \text{deg}_v(Y) =  O(\text{poly} \log n)$. This concludes the proof of the lemma. 
\end{proof}

We now focus on the first part of Theorem~\ref{th:approx:split}, which guarantees the existence of a large fractional matching with support $X \cup Y$. This is stated in the lemma below. Note that Lemma~\ref{lm:new:901} and Lemma~\ref{lm:new:801} together imply Theorem~\ref{th:approx:split}.

\begin{lemma}
\label{lm:new:801}
Consider the subsets of edges $X \subseteq E$ and $Y \subseteq E$ as per Theorem~\ref{th:approx:split}. Then there exists a fractional matching $w'$ on support $X \cup Y$ such that $w(E) \leq O(1) \cdot w'(E)$. Here, $w$ is the fractional matching given by Theorem~\ref{th:runtime}.
\end{lemma}

We devote the rest of this section to the proof of Lemma~\ref{lm:new:801}.  As in Section~\ref{sub:sec:ideal}, we start by defining two fractional assignments $w^+ = \sum_{i > L'} w_i$ and $w^- = \sum_{i \leq L'} w_i$. In other words, $w^+$ captures the fractional weights assigned to the edges in levels $[L'+1, L]$ by the hierarchical partition, whereas $w^-$ captures the fractional weights assigned to the edges in the remaining levels $[0, L']$. The fractional assignment $w^+$ has support $\cup_{i > L'} E_i$, whereas the fractional assignment $w^-$ has support $\cup_{i \leq L'} E_i = Y$.  We have $w = w^+ + w^-$ and  $w(E) = w^+(E) + w^-(E)$. If at least half of the value of $w(E)$ is coming from the levels $[0, L']$, then there is nothing to prove. Specifically, suppose that $w^-(E) \geq (1/2) \cdot w(E)$. Then we can set $w' = w^-$ and obtain $w(E) \leq 2 \cdot w^-(E) = 2 \cdot w^-(Y) = 2 \cdot w'(Y) = 2 \cdot w'(X \cup Y)$.  This concludes the proof of the lemma. Accordingly, from this point onward, we will assume that at least half of the value of $w(E)$ is coming from the levels $i > L'$. Specifically, we have:
\begin{equation}
\label{eq:assume:101}
w(E) \leq 2 \cdot w^+(E)
\end{equation}
Given equation~\ref{eq:assume:101}, we will construct a fractional matching with support $X$ whose size is within a constant factor of $w(E)$.\footnote{Recall that $w$ is the fractional matching given by the hierarchical partition. See Section~\ref{sec:sodapaper}.}
We want to follow the argument applied to ideal skeletons in Section~\ref{sub:sec:ideal} (see Definition~\ref{def:ideal:split}). Accordingly, for every level $i > L'$ we now define a fractional assignment $\hat{w}_i$ with support $X_i$. 
\begin{eqnarray}
\hat{w}_i(e) & = & 1/d_{L'} \text{ for all edges } e \in X_i \nonumber \\
& = & 0  \ \ \ \ \ \  \text{ for all edges } e \in E \setminus X_i. \label{eq:w1:101}
\end{eqnarray}

We next define the fractional assignment $\hat{w}$.
\begin{eqnarray}
\hat{w} = \sum_{i > L'} \hat{w}_i \label{eq:w1:102}
\end{eqnarray}

In Section~\ref{sub:sec:ideal} (see Lemma~\ref{main:lm:structure}), we observed that $\hat{w}$ is a fractional matching with support $X$ whose size is exactly the same as $w^+(E)$. This observation, along with equation~\ref{eq:assume:101}, would have sufficed to prove Lemma~\ref{lm:new:801}. The intuition was that at every level $i > L'$, the degree of a node $v \in V$ in $X_i$ is exactly $(1/\lambda_i)$ times its degree in $E_i$. In contrast, the weight of an edge $e \in X_i$ under $\hat{w}_i$ is exactly $\lambda_i$ times its weight under $w_i$. This ensured that the weight of node  remained unchanged as we transitioned from $w_i$ to $\hat{w}_i$, that is, $W_v(w_i) = W_v(\hat{w}_i)$ for all nodes $v \in V$. Unfortunately, this guarantee  no longer holds for approximate-skeletons. It still seems natural, however, to compare the weights a node receives under these two fractional assignments $w_i$ and $\hat{w}_i$. This  depends on the status of the node under consideration, depending on whether the node belongs to the set $B_i \setminus S_i$, $T_i \setminus S_i$ or $S_i$ (see Definition~\ref{def:approx:split}). Towards this end, we derive Claims~\ref{cl:big:101},~\ref{cl:tiny:101},~\ref{cl:spurious:101}.  The first claim states that the weight of a non-spurious big node under $\hat{w}_i$ is {\em  close} to its weight under $w_i$. The second claim states that the weight of a non-spurious tiny node under $\hat{w}_i$ is {\em very small} (less than $2\epsilon/L$). The third claim states that the weight of a spurious node under $\hat{w}_i$ is at most one.

\begin{claim}
\label{cl:big:101}
For all $i > L'$ and $v \in B_i \setminus S_i$,  we have: 
$$(1-\epsilon) \cdot W_v(w_i) \leq  W_v(\hat{w}_i) \leq (1+\epsilon) \cdot W_v(w_i).$$
\end{claim}

\begin{proof}
Fix any level $i > L'$ and any node $v \in B_i \setminus S_i$. The claim follows from equation~\ref{main:eq:algo:deg:big} and the facts  below: 

\medskip
\noindent (1) $\lambda_i = d_i/d_{L'}$.  

\medskip
\noindent (2) $W_v(\hat{w}_i) = (1/d_{L'}) \cdot \text{deg}_v(X_i)$. See equation~\ref{eq:w1:101}. 

\medskip
\noindent (3) $W_v(w_i) = (1/d_i) \cdot \text{deg}_v(E_i)$. See Section~\ref{sec:sodapaper}.
\end{proof}

\begin{claim}
\label{cl:tiny:101}
For all levels $i > L'$ and non-spurious tiny nodes $v \in T_i \setminus S_i$, we have $W_v(\hat{w}_i) \leq 2\epsilon/L$.
\end{claim}

\begin{proof}
Fix any level $i > L'$ and any node $v \in T_i \setminus S_i$. The claim follows from equation~\ref{main:eq:algo:deg:tiny} and the facts below: 

\medskip
\noindent (1) $\lambda_i = d_i/d_{L'}$. 

\medskip
\noindent (2) $W_v(\hat{w}_i) = (1/d_{L'}) \cdot \text{deg}_v(X_i)$. See equation~\ref{eq:w1:101}.
\end{proof}

\begin{claim}
\label{cl:spurious:101}
For all $i > L'$ and $v \in S_i$, we have $W_v(\hat{w}_i) \leq 1$. 
\end{claim}

\begin{proof}
Fix any level $i > L'$ and any node $v \in S_i$. The claim follows from equation~\ref{main:eq:algo:deg:spurious} and the facts below: 

\medskip
\noindent (1) $\lambda_i = d_i/d_{L'}$. 

\medskip
\noindent (2) $W_v(\hat{w}_i) = (1/d_{L'}) \cdot \text{deg}_v(X_i)$. See equation~\ref{eq:w1:101}. 
\end{proof}

Unfortunately, the fractional assignment $\hat{w}$ need not necessarily be a fractional matching, the main reason being that at a level $i > L'$ the new weight $W_v(\hat{w}_i)$ of a spurious node $v \in S_i$ can be much larger than its original weight $W_v(w_i)$. Specifically, Claim~\ref{cl:spurious:101} permits that $W_v(\hat{w}_i) = 1$ for such a node $v \in S_i$. If there exists  a node $v \in V$ that belongs to $S_i$ at every level $i > L'$, then we might  have $W_v(\hat{w}) = \sum_{i > L'} W_v(\hat{w}_i) = \sum_{i > L'} 1 = (L-L') >> 1 \geq W_v(w)$. 

To address this concern regarding the weights of the spurious nodes, we switch from $\hat{w}$ to a new fractional assignment $w''$, which is defined as follows. For every level $i > L'$, we construct a  fractional assignment $w''_i$ that sets to zero the weight of every edge in $X_i$ that is incident upon a spurious node $v \in S_i$. For every other edge $e$, the weight $w''_i(e)$ remains the same as  $\hat{w}_i(e)$. Then we set $w'' = \sum_{i > L'} w''_i$. 
\begin{eqnarray}
w''_i(u, v) & = & \hat{w}_i(u, v) \text{ if } (u, v) \in X_i \text{ and } \{u, v\} \cap S_i = \emptyset \nonumber \\
& = & 0 \text{ if } (u, v) \in X_i \text{ and } \{u, v \} \cap S_i \neq \emptyset. \nonumber \\
& = & 0 \text{ else if } (u, v) \in E \setminus X_i  \label{eq:bound:new:001}
\end{eqnarray}
\begin{eqnarray}
w'' = \sum_{i > L'} w''_i \label{eq:bound:clean:1}
\end{eqnarray}

The above transformation guarantees that $W_v(w''_i) = 0$ for every spurious node $v \in S_i$ at level $i > L'$. Thus, the objection raised above regarding the weights of spurious nodes is no longer valid for the fractional assignment $w''_i$.  We now make three claims on the fractional assignments $\hat{w}$ and $w''$.

Claim~\ref{cl:new:w:101} bounds the maximum weight of a node under $w''$. Its proof appears in Section~\ref{sec:cl:new:w:101}.

\begin{claim}
\label{cl:new:w:101}
We have $W_v(w'') \leq 1+3\epsilon$ for all $v \in V$. 
\end{claim}

Claim~\ref{cl:new:w:102} states that the size of $w''$ is close to the size of $\hat{w}$. Its proof appears in Section~\ref{sec:cl:new:w:102}. 

\begin{claim}
\label{cl:new:w:102}
We have $w''(E) \geq \hat{w}(E) - 4 \epsilon \cdot  w^+(E)$. 
\end{claim}

Claim~\ref{cl:new:w:103} states that the size of $\hat{w}$ is within a constant factor of the size of $w^+$. Its proof appears in Section~\ref{sec:cl:new:w:103}. 

\begin{claim}
\label{cl:new:w:103}
We have $\hat{w}(E) \geq (1/8)  \cdot w^+(E)$. 
\end{claim}

\begin{corollary}
\label{cor:cl:new:w:103}
We have $w''(E) \geq (1/8 - 4\epsilon) \cdot w^+(E)$. 
\end{corollary}

\begin{proof}
Follows from Claims~\ref{cl:new:w:102} and~\ref{cl:new:w:103}. 
\end{proof}

To complete the proof of Lemma~\ref{lm:new:801}, we scale down the weights of the edges in $w''$ by a factor of $(1+3\epsilon)$. Specifically, we define a fractional assignment $w'$ such that:
\begin{eqnarray}
w'(e) & = & \frac{w''(e)}{(1+3\epsilon)} \text{ for all edges } e \in E. \nonumber
\end{eqnarray}

Since $w''$ has support $X$, the fractional assignment $w'$ also has support $X$, that is, $w'(e) = 0$ for all edges $e \in E \setminus X$. Claim~\ref{cl:new:w:101} implies that $W_v(w') = W_v(w'')/(1+3\epsilon) \leq 1$ for all nodes $v \in V$. Thus, $w'$ is fractional matching on support $X$. Since the edge-weights are scaled down by a factor of $(1+3\epsilon)$, Corollary~\ref{cor:cl:new:w:103} implies that:
\begin{equation}
\label{eq:last:stand:1}
w'(E) = \frac{w''(E)}{(1+3\epsilon)} \geq \frac{(1/8-4\epsilon)}{(1+3\epsilon)} \cdot w^+(E).
\end{equation}
Equations~\ref{eq:assume:101} and~\ref{eq:last:stand:1} imply that $w(E) \leq O(1) \cdot w'(E)$. This concludes the proof of Lemma~\ref{lm:new:801}.

\subsubsection{Proof of Claim~\ref{cl:new:w:101}}
\label{sec:cl:new:w:101}
Throughout the proof, we fix any given node $v \in V$. We will show that $W_v(w'') \leq 1+3\epsilon$. We start by making a simple observation:
\begin{equation}
\label{eq:less:1}
W_v(w''_i) \leq W_v(\hat{w}_i)  \text{ for all levels } i > L'. 
\end{equation}
Equation~\ref{eq:less:1} holds since we get the fractional assignment $w''_i$ from $\hat{w}_i$ by setting some edge-weights to zero and keeping the remaining edge-weights unchanged (see equation~\ref{eq:bound:new:001}).

By Definition~\ref{def:approx:split}, at every level $i > L'$ the node $v$ is part of exactly one of the three subsets -- $T_i \setminus S_i$, $B_i \setminus S_i$ and $S_i$. Accordingly, we can classify the levels into three types  depending on which of these subsets  $v$ belongs to  at that level. Further, recall that $W_v(w'') = \sum_{i > L'} W_v(w''_i)$.  We will separately bound the contributions from each type of levels towards the node-weight $W_v(w'')$.

We first bound the contribution towards $W_v(w'')$ from all the levels $i > L'$ where  $v \in T_i \setminus S_i$.
\begin{claim}
\label{cl:new:w:101:1}
We have:
$$\sum_{i > L' : v \in T_i \setminus S_i} W_v(w''_i) \leq 2\epsilon.$$
\end{claim}

\begin{proof}
Claim~\ref{cl:tiny:101} implies that:
\begin{equation}
\label{eq:less:2}
\sum_{i > L' : v \in T_i \setminus S_i} W_v(\hat{w}_i) \leq \sum_{i > L' : v \in T_i \setminus S_i} (2\epsilon/ L) \leq 2\epsilon. 
\end{equation}
The claim follows from equations~\ref{eq:less:1} and~\ref{eq:less:2}. 
\end{proof}
We next bound the contribution towards $W_v(w'')$ from all the levels $i > L'$ where $v \in B_i \setminus S_i$. 

\begin{claim}
\label{cl:new:w:101:2}
We have:
$$\sum_{i > L' : v \in B_i \setminus S_i} W_v(w''_i) \leq 1+\epsilon.$$
\end{claim}

\begin{proof}
Let $\text{LHS} = \sum_{i > L' : v \in B_i \setminus S_i} W_v(w''_i)$. We have:
\begin{eqnarray}
\text{LHS} & \leq & \sum_{i > L' : v \in B_i \setminus S_i} W_v(\hat{w}_i) \label{eq:less:3} \\
& \leq & \sum_{i > L' : v \in B_i \setminus S_i} (1+\epsilon) \cdot W_v(w_i) \label{eq:less:4} \\
& \leq & (1+\epsilon) \cdot \sum_{i = 0}^L W_v(w_i) =  (1+\epsilon) \cdot W_v(w) \nonumber \\
& \leq & (1+\epsilon) \label{eq:less:5} 
\end{eqnarray}
Equation~\ref{eq:less:3} holds because of equation~\ref{eq:less:1}. Equation~\ref{eq:less:4} follows from Claim~\ref{cl:big:101}.  Finally, equation~\ref{eq:less:5} holds since $w$ is a fractional matching (see Section~\ref{sec:sodapaper}). 
\end{proof}

Finally, we  bound the contribution towards $W_v(w'')$ from all the levels $i > L'$ where $v \in S_i$. 

\begin{claim}
\label{cl:new:w:101:3}
We have:
$$\sum_{i > L' : v \in S_i} W_v(w''_i) = 0.$$
\end{claim}

\begin{proof}
Consider any level $i > L'$ where $v \in S_i$. By equation~\ref{eq:bound:new:001}, every edge in $X_i$ incident upon $v$ has zero weight under $w''_i$, and hence  $W_v(w''_i) = 0$. The claim follows.
\end{proof}

Adding up the bounds given by Claims~\ref{cl:new:w:101:1},~\ref{cl:new:w:101:2} and~\ref{cl:new:w:101:3}, we get:
\begin{eqnarray*}
 W_v(w'') & = & \sum_{i > L'} W_v(w''_i) \\
& = & \sum_{i > L' : v \in T_i \setminus S_i} W_v(w''_i) + \sum_{i > L' : v \in B_i \setminus S_i} W_v(w''_i)  \\
& & + \sum_{i > L' : v \in S_i} W_v(w''_i) \\
& \leq & 2\epsilon + (1+\epsilon) + 0 = 1+3\epsilon.
\end{eqnarray*}
This concludes the proof of Claim~\ref{cl:new:w:101}.

\subsubsection{Proof of Claim~\ref{cl:new:w:102}}
\label{sec:cl:new:w:102}

For any given fractional assignment, the some of the node-weights is  two times the sum of the edge-weights (since each edge has two endpoints). Keeping this in mind, instead of relating the sum of the edge-weights under the fractional assignments $w'', \hat{w}$ and $w^+$ as stated in Claim~\ref{cl:new:w:102}, we will attempt to relate the sum of the node-weights under $w''$, $\hat{w}$ and $w^+$.

As we switch from the fractional assignment $\hat{w}_i$ to the fractional assignment $w''_i$ at some level $i > L'$, all we do is to set to zero the weight of any edge incident upon a spurious node in $S_i$. Hence, intuitively, the difference between the sum of the node-weights under $w'' = \sum_{i > L'} w''_i$ and $\hat{w} = \sum_{i > L'} \hat{w}_i$ should be bounded by the sum of the weights of the spurious nodes across all the levels $i > L'$. This is formally stated in the claim below.

\begin{claim}
\label{cl:new:w:102:1}
We have:
$$\sum_{v \in V} W_v(w'') \geq \sum_{v \in V} W_v(\hat{w}) -  \sum_{i > L'} \sum_{v \in S_i} 2 \cdot W_v(\hat{w}_i).$$
\end{claim}

\begin{proof}
The left hand side (LHS) of the inequality is exactly equal to two times the sum of the edge-weights under $w''$. Similarly, the first sum in the right hand side (RHS) is exactly equal to two times the sum of the edge-weights under $\hat{w}$. Finally, we can also express the second sum in the RHS as the sum of certain edge-weights under $\hat{w}$.

 Consider any edge $(x, y) \in E$. We will show that the contribution of this edge towards the LHS is at least its contribution towards the RHS, thereby proving the claim.

\medskip
\noindent
{\em Case 1.} $(x, y) \notin X_i$ for all $i > L'$. Then the edge $(x, y)$ contributes zero to the left hand side (LHS) and zero to the right hand side (RHS). 

\medskip
\noindent 
{\em Case 2.} $(x, y) \in X_i$ at some level $i > L'$, but none of the endpoints of the edge is spurious, that is, $\{x, y \} \cap S_i = \emptyset$. In this case, by equation~\ref{eq:bound:new:001}, the edge $(x, y)$ contributes $2 \cdot w''_i(x, y)$ to the LHS, $2 \cdot \hat{w}_i(x, y)$ to the first sum in the  RHS, and zero to the second sum in the RHS. Further, we have $w''_i(x, y) = \hat{w}_i(x, y)$. Hence, the edge makes exactly the same contribution towards the LHS and the RHS.

\medskip
\noindent
{\em Case 3.} $(x, y) \in X_i$ at some level $i > L'$, and at least one endpoint of the edge is spurious, that is, $\{x, y\} \cap S_i \neq \emptyset$. In this case, by equation~\ref{eq:bound:new:001}, the edge $(x, y)$ contributes zero to the LHS,  $2 \cdot \hat{w}(x,y)$ to the first sum in the RHS, and at least $2 \cdot \hat{w}(x,y)$ to the second sum in the RHS. Hence, the net contribution towards the RHS is at most zero. In other words, the contribution towards the LHS is at least the contribution towards the RHS. 
\end{proof}

At every level $i > L'$, we will now bound the sum of the weights of the spurious nodes $v \in S_i$ under $\hat{w}$ by the sum of the node-weights under $w_i$. We will use the fact that each spurious node gets weight at most one (see Claim~\ref{cl:spurious:101}), which implies that $\sum_{v \in S_i} W_v(\hat{w}_i) \leq |S_i|$. By equation~\ref{main:eq:algo:spurious}, we will upper bound the number of spurious nodes by the number of non-spurious big nodes. Finally, by equation~\ref{main:eq:algo:deg:big}, we will infer that each non-spurious big node has sufficiently large degree in $E_i$, and hence its weight under $w_i$ is also sufficiently large.  

\begin{claim}
\label{cl:new:w:102:2}
For every level $i > L'$, we have:
$$\sum_{v \in S_i} W_v(\hat{w}_i) \leq (2\delta L / \epsilon) \cdot \sum_{v \in V} W_v(w_i).$$
\end{claim}

\begin{proof}
Fix any level $i > L'$. Claim~\ref{cl:spurious:101} states that $W_v(\hat{w}_i) \leq 1$ for all nodes $v \in S_i$. Hence, we get:
\begin{equation}
\label{eq:less:11}
\sum_{v \in S_i} W_v(\hat{w}_i) \leq |S_i|
\end{equation}
Equation~\ref{main:eq:algo:spurious} implies that $|S_i| \leq \delta \cdot |B_i| \leq \delta \cdot (|B_i\setminus S_i| + |S_i|)$. Rearranging the terms, we get: $|S_i| \leq \frac{\delta}{1-\delta} \cdot | B_i \setminus S_i |$. Since $\delta < 1/2$ (see equation~\ref{eq:parameter}), we have:
\begin{equation}
\label{eq:less:12}
|S_i| \leq 2 \delta \cdot |B_i \setminus S_i|
\end{equation}
From equations~\ref{eq:less:11} and~\ref{eq:less:12}, we get:
\begin{equation}
\label{eq:less:10}
\sum_{v \in S_i} W_v(\hat{w}_i) \leq 2 \delta \cdot |B_i \setminus S_i|
\end{equation}
Now, equation~\ref{main:eq:algo:big} states that $\text{deg}_v(E_i) \geq (\epsilon d_i/L)$ for all nodes $v \in B_i \setminus S_i$. Further, in the hierarchical partition we have $W_v(w_i) = (1/d_i) \cdot \text{deg}_v(E_i)$ for all nodes $v \in V$ (see Section~\ref{sec:sodapaper}). Combining these two observations, we get: $W_v(w_i) \geq \epsilon/L$ for all nodes $v \in B_i \setminus S_i$. Summing over all nodes $v \in V$, we get:
\begin{equation}
\label{eq:less:13}
\sum_{v \in V} W_v(w_i) \geq \sum_{v \in B_i \setminus S_i} W_v(w_i) \geq (\epsilon / L) \cdot |B_i \setminus S_i|
\end{equation}
The claim follows from equations~\ref{eq:less:10} and~\ref{eq:less:13}. 
\end{proof}

\begin{corollary}
\label{cor:cl:new:w:102:2}
We have:
$$\sum_{i > L'} \sum_{v \in S_i} W_v(\hat{w}_i) \leq (2 \delta L/\epsilon) \cdot \sum_{v \in V} W_v(w^+).$$
\end{corollary}

\begin{proof}
Follows from summing  Claim~\ref{cl:new:w:102:2} over all levels $i > L'$, and noting that since $w^+ = \sum_{i > L'} w_i$, we have $W_v(w^+) = \sum_{i > L'} W_v(w_i)$ for all nodes $v \in V$. 
\end{proof}

From Claim~\ref{cl:new:w:102:1} and Corollary~\ref{cor:cl:new:w:102:2}, we get:
\begin{equation}
\label{eq:less:16}
\sum_{v \in V} W_v(w'') \geq \sum_{v \in V} W_v(\hat{w}) - (4\delta L/\epsilon) \cdot \sum_{v \in V} W_v(w^+)
\end{equation}

Since $\delta = \epsilon^2/L$ (see equation~\ref{eq:parameter}) and since the sum of the node-weights in a fractional assignment is exactly two times the sum of the edge-weights, Claim~\ref{cl:new:w:102} follows from equation~\ref{eq:less:16}.

\subsubsection{Proof of Claim~\ref{cl:new:w:103}}
\label{sec:cl:new:w:103}

Every edge $(u, v) \in X = \cup_{i > L'} X_i$ has at least one endpoint at a level $i > L'$ (see Definition~\ref{def:approx:split}). In other words, every edge in $X$ has at least one endpoint in the set $V^*$ as defined below. 

\begin{definition}
\label{def:great:v*101}
Define $V^* = \{ v \in V : \ell(v) > L'\}$ to be the set of all nodes at  levels strictly greater than $L'$. 
\end{definition}

Thus, under any given fractional assignment, the sum of the node-weights in $V^*$ is within a factor of $2$ of the sum of  the edge-weights in $X$. Since both the fractional assignments $\hat{w}$ and $w^+$ have support $X$, we get the following claim.

\begin{claim}
\label{cl:bound:002}
We have:
$$2 \cdot w^+(E) \geq \sum_{v \in V^*} W_v(w^+) \geq w^+(E).$$
$$2 \cdot \hat{w}(E) \geq \sum_{v \in V^*} W_v(\hat{w}) \geq \hat{w}(E).$$
\end{claim}

Since we want to compare the sums of the edge-weights under  $\hat{w}$  and $w^+$, by Claim~\ref{cl:bound:002} it suffices to focus on the sum of the node-weights in $V^*$ instead. 
Accordingly, we first lower bound the sum $\sum_{v \in V^*} W_v(\hat{w})$ in Claim~\ref{cl:bound:000}. In the proof, we only use the fact that for each level $i > L'$, the weight of a  node $v \in B_i \setminus S_i$ remains roughly the same under the fractional assignments $\hat{w}_i$ and $w_i$ (see Claim~\ref{cl:big:101}). 

\begin{claim}
\label{cl:bound:000}
We have: 
$$\sum_{v \in V^*} W_v(\hat{w}) \geq (1-\epsilon) \cdot \sum_{v \in V^*} \sum_{i > L' : v \in B_i \setminus S_i} W_v(w_i).$$
\end{claim}

\begin{proof}
Fix any node $v \in V^*$. By Claim~\ref{cl:big:101}, we have: $W_v(\hat{w}_i) \geq (1-\epsilon) \cdot W_v(w_i)$ at each level $i > L'$ where $v \in B_i \setminus S_i$. Summing over all such levels, we get:
\begin{equation}
\label{eq:less:50}
\sum_{i > L' : v \in B_i \setminus S_i} W_v(\hat{w}_i) \geq (1-\epsilon) \cdot \sum_{i > L' : v \in B_i \setminus S_i} W_v(w_i)
\end{equation}
Since $\hat{w} = \sum_{i > L'} \hat{w}_i$, we have: $$W_v(\hat{w}) \geq \sum_{i > L' : v \in B_i \setminus S_i} W_v(\hat{w}_i).$$ 
Hence, equation~\ref{eq:less:50} implies that:
$$W_v(\hat{w}) \geq (1-\epsilon) \cdot \sum_{i > L' : v \in B_i \setminus S_i} W_v(w_i).$$ 
We now sum the above inequality over all nodes $v \in V^*$. 
\end{proof}


It  remains to lower bound the right hand side (RHS) in Claim~\ref{cl:bound:000} by  $\sum_{v \in V^*} W_v(w^+)$. Say that a level $i > L'$ is of Type I, II or III for a node $v \in V^*$ if $v$ belongs to $B_i \setminus S_i$, $S_i$ or $T_i \setminus S_i$ respectively.  By Definition~\ref{def:approx:split}, for every node $v \in V^*$, the set of levels $i > L'$ is partitioned into these three types. The sum in the RHS of Claim~\ref{cl:bound:000} gives the contribution of the type I levels towards $\sum_{v \in V^*} W_v(w^+)$. In Claims~\ref{cl:bound:spurious:001} and~\ref{cl:bound:tiny:001}, we  respectively show that the type II and type III levels make negligible contributions towards the sum $\sum_{v \in V^*} W_v(w^+)$.  Note that the sum of these contributions from the type I, type II and type III levels {\em exactly} equals $\sum_{v \in V^*} W_v(w^+)$. Specifically, we have:
\begin{eqnarray}
\label{eq:less:69}
\sum_{v \in V^*} \sum_{i > L' : v \in B_i \setminus S_i} W_v(w_i) + \sum_{v \in V^*} \sum_{i > L' : v \in S_i} W_v(w_i)  
+ \sum_{v \in V^*} \sum_{i > L' : v \in T_i \setminus S_i} W_v(w_i) = \sum_{v \in V^*} W_v(w^+)
\end{eqnarray}

Hence, equation~\ref{eq:less:69}, Claim~\ref{cl:bound:spurious:001} and Claim~\ref{cl:bound:tiny:001} lead to the following lower bound on the right hand side of Claim~\ref{cl:bound:000}.

\begin{corollary}
\label{cor:bound:005}
We have:
$$\sum_{v \in V^*} \sum_{i > L': v \in B_i \setminus S_i} W_v(w_i) \geq (1 - 40\epsilon) \cdot \sum_{v \in V^*} W_v(w^+).$$
\end{corollary}

From Claim~\ref{cl:bound:000}, Corollary~\ref{cor:bound:005} and equation~\ref{eq:parameter}, we get:
\begin{equation}
\label{eq:less:0101}
\sum_{v \in V^*} W_v(\hat{w}) \geq (1/4) \cdot \sum_{v \in V^*} W_v(w^+)
\end{equation}
Finally, from Claim~\ref{cl:bound:002} and equation~\ref{eq:less:0101}, we infer that:
\begin{eqnarray*}
\hat{w}(E) \geq \left(1/2\right) \cdot \sum_{v \in V^*} W_v(\hat{w})  \geq (1/8) \cdot \sum_{v \in V^*} W_v(w^+) 
\geq (1/8) \cdot w^+(E)
\end{eqnarray*}
This concludes the proof of Claim~\ref{cl:new:w:103}. Accordingly, we devote the rest of this section to the proofs of Claims~\ref{cl:bound:spurious:001} and~\ref{cl:bound:tiny:001}.

\begin{claim}
\label{cl:bound:spurious:001}
We have: $$\sum_{v \in V^*} \sum_{i > L' : v \in S_i} W_v(w_i) \leq 8\epsilon \cdot \sum_{v \in V^*} W_v(w^+).$$
\end{claim}

\begin{proof}
The proof of this claim is very similar to the proof of Claim~\ref{cl:new:w:102:2} and Corollary~\ref{cor:cl:new:w:102:2}. Going through that proof, one can verify the following upper bound on the number of spurious nodes across all levels $i > L'$.
\begin{equation}
\label{eq:less:60}
\sum_{i > L'} |S_i| \leq (2\delta L/\epsilon) \cdot \sum_{v \in V} W_v(w^+)
\end{equation}
Since each $w_i$ is a fractional matching (see Section~\ref{sec:sodapaper}), we have $W_v(w_i) \leq 1$ for all nodes $v \in V$ and all levels $i > L'$. Hence, we get:
\begin{equation}
\label{eq:less:61}
\sum_{v \in V^*} \sum_{i > L' : v \in S_i} W_v(w_i) \leq \sum_{i > L'} |S_i|
\end{equation}
From equations~\ref{eq:less:60} and~\ref{eq:less:61}, we infer that:
\begin{equation}
\label{eq:less:62}
\sum_{v \in V^*} \sum_{i > L' : v \in S_i} W_v(w_i) \leq (2 \delta L/\epsilon) \cdot \sum_{v \in V} W_v(w^+)
\end{equation}
Since the sum of the node-weights under any fractional assignment is equal to twice the sum of the edge-weights, Claim~\ref{cl:bound:002} implies that:
\begin{equation}
\label{eq:less:63}
\sum_{v \in V} W_v(w^+) = 2 \cdot w^+(E) \leq 4 \cdot \sum_{v \in V^*} W_v(w^+)
\end{equation}
Claim~\ref{cl:bound:spurious:001} follows from equations~\ref{eq:parameter},~\ref{eq:less:62} and~\ref{eq:less:63}. 
\end{proof}

\begin{claim}
\label{cl:bound:tiny:001}
We have: $$\sum_{v \in V^*} \sum_{i > L' : v \in T_i \setminus S_i} W_v(w_i) \leq 32\epsilon \cdot \sum_{v \in V^*} W_v(w^+).$$
\end{claim}

\begin{proof}
Fix any node $v \in V^*$. By equation~\ref{main:eq:algo:tiny}, we have $\text{deg}_v(E_i) \leq (2\epsilon d_i/L)$ at each level $i > L'$ where $v \in T_i \setminus S_i$. Further, the fractional matching $w_i$ assigns a weight $1/d_i$ to every edge in its support $E_i$ (see Section~\ref{sec:sodapaper}). Combining these two observations, we get: $W_v(w_i) = (1/d_i) \cdot \text{deg}_v(E_i) \leq 2\epsilon/L$ at each level $i > L'$ where $v \in T_i \setminus S_i$. Summing over all such levels, we get:
\begin{equation}
\label{eq:less:70}
\sum_{i > L' : v \in T_i \setminus S_i} W_v(w_i) \leq 2\epsilon
\end{equation} 
If we sum equation~\ref{eq:less:70} over all  $v \in V^*$, then we get:
\begin{equation}
\label{eq:less:71}
\sum_{v \in V^*} \sum_{i > L' : v \in T_i \setminus S_i} W_v(w_i) \leq 2\epsilon \cdot |V^*|
\end{equation}
A node $v \in V^*$ has level $\ell(v) > L'$. Hence, all the edges incident upon this node also have level at least $L'+1$. This implies that such a node $v$ receives zero weight from the fractional assignment $w^- = \sum_{i \leq L'} w_i$, for any edge in the support of $w^-$ is at level at most $L'$. Thus, we have: $W_v(w) = W_v(w^+) + W_v(w^-) = W_v(w^+)$ for such a node $v$. Now, applying Theorem~\ref{th:runtime}, we get:
\begin{equation}
\label{eq:less:80}
1/(1+\epsilon)^2 \leq W_v(w^+)  \text{ for all nodes } v \in V^*. 
\end{equation}
Summing equation~\ref{eq:less:80} over all nodes $v \in V^*$ and multiplying both sides by $(1+\epsilon)^2$, we get:
\begin{equation}
\label{eq:less:81}
|V^*| \leq (1+\epsilon)^2 \cdot \sum_{v \in V^*} W_v(w^+)
\end{equation}
Since $(1+\epsilon)^2 \leq 4$ and $V^* \subseteq V$, equations~\ref{eq:less:71},~\ref{eq:less:81} imply that:
\begin{equation}
\label{eq:less:82}
\sum_{v \in V^*} \sum_{i > L' : v \in T_i \setminus S_i} W_v(w_i) \leq 8\epsilon \cdot \sum_{v \in V} W_v(w^+)
\end{equation}
The claim follows from equations~\ref{eq:less:63} and~\ref{eq:less:82}.
\end{proof}



\section{Bipartite graphs}
\label{main:sec:bipartite}

Ideally, we would like  to present a dynamic algorithm on bipartite graphs that proves Theorem~\ref{th:new:approx}. Due to space constraints, however, we will only  prove  a weaker result stated in Theorem~\ref{th:abbrv} and defer the complete proof of Theorem~\ref{th:new:approx} to the full version. Throughout this section, we will use the notations and concepts introduced in Section~\ref{sec:prelim}. 

\begin{theorem}
\label{th:abbrv}
There is a randomized dynamic algorithm that maintains a $1.976$ approximation to the maximum matching in a bipartite graph in $O(\sqrt{n}\log n)$ expected update time.
\end{theorem}

 In Section~\ref{sec:ed-kernel}, we present a result from~\cite{Arxiv} which shows how to maintain a $(2+\epsilon)$-approximation to the maximum matching in bipartite graphs in $O(\sqrt{n}/\epsilon^2)$ update time. In Section~\ref{sec:better:than:2}, we build upon this result and prove Theorem~\ref{th:abbrv}. In Section~\ref{sec:extensions}, we allude to the extensions that lead us to the proof of Theorem~\ref{th:new:approx} in the full version of the paper.

\subsection{$(2+\epsilon)$-approximation in $O(\sqrt{n}/\epsilon^2)$ update time}
\label{sec:ed-kernel}
The first step  is to define the concept of a {\em kernel}. Setting $\epsilon = 0, d = 1$ in Definition~\ref{def:ed-kernel},  we note that the kernel edges in a $(0,1)$-kernel forms a maximal matching -- a matching where every unmatched edge has at least one matched endpoint. For general $d$, we note that the kernel edges in a $(0,d)$-kernel forms a maximal $d$-matching -- which is a maximal subset of edges where each node has degree at most $d$. In Lemma~\ref{lm:approx:ed-kernel} and Corollary~\ref{cor:approx:ed-kernel}, we show that the kernel edges in any $(\epsilon, d)$-kernel preserves the size of the maximum matching within a factor of $2/(1-\epsilon)$. Since $d$ is the maximum degree of a node in an $(\epsilon, d)$-kernel, a $(1+\epsilon)$-approximation to the maximum matching within a kernel can be maintained in $O(d/\epsilon^2)$ update time using Theorem~\ref{th:gupta:peng}. Lemma~\ref{lm:maintain:ed-kernel} shows that the set of kernel edges themselves can be maintained in $O(n/(\epsilon d))$ update time. Setting $d = \sqrt{n}$ and combining all these observations, we get our main result in Corollary~\ref{cor:ed-kernel:main}.

\begin{definition}
\label{def:ed-kernel}
Fix any $\epsilon \in (0, 1)$, $d \in [1, n]$. Consider any subset of  nodes  $T \subset V$ in the graph $G = (V, E)$, and any subset of edges $H \subseteq E$. The pair $(T, H)$ is called an $(\epsilon, d)$-kernel of $G$ iff: (1) $\text{deg}_v(H) \leq d$ for all nodes $v \in V$, (2) $\text{deg}_v(H) \geq (1-\epsilon) d$ for all nodes $v \in T$, and (3) every edge $(u, v) \in E$ with both endpoints $u, v \in V \setminus T$ is part of the subset $H$. We define the set of nodes $T^c = V \setminus T$, and say that the nodes in $T$ (resp. $T^c$) are ``tight'' (resp. ``non-tight''). The edges in $H$ are called ``kernel edges''.
\end{definition}

\begin{lemma}
\label{lm:approx:ed-kernel}
Consider any integral matching $M \subseteq E$ and let $(T, H)$ be any $(\epsilon, d)$-kernel of $G = (V, E)$ as per Definition~\ref{def:ed-kernel}. Then there is a fractional matching $w''$ in $G$ with support $H$ such that $\sum_{v \in V} W_v(w'') \geq (1-\epsilon) \cdot |M|$. 
\end{lemma}

The proof of Lemma~\ref{lm:approx:ed-kernel} appears in Section~\ref{sec:lm:approx:ed-kernel}. 

\begin{corollary}
\label{cor:approx:ed-kernel}
Consider any $(\epsilon, d)$-kernel as per Definition~\ref{def:ed-kernel}.  We have $\text{Opt}(H) \geq (1/2) \cdot (1-\epsilon) \cdot \text{Opt}(E)$. 
\end{corollary}

\begin{proof}
Let $M \subseteq E$ be a maximum cardinality matching in $G = (V, E)$. Let $w''$ be a fractional matching with support $H$ as per Lemma~\ref{lm:approx:ed-kernel}. Since in a bipartite graph the size of the maximum cardinality matching is the same as the size of the maximum fractional matching (see Theorem~\ref{main:main:th:structure}), we get: $\text{Opt}(H) = \text{Opt}_f(H) \geq w''(H) = (1/2) \cdot \sum_{v \in V} W_v(w'') \geq (1/2) \cdot (1-\epsilon) \cdot |M| = (1/2) \cdot (1-\epsilon) \cdot \text{Opt}(E)$. 
\end{proof}

\begin{lemma}
\label{lm:maintain:ed-kernel}
In the dynamic setting, an $(\epsilon, d)$-kernel can be maintained in $O(n/(\epsilon d))$ amortized update time. 
\end{lemma}

\begin{proof}(Sketch)
When an edge $(u, v)$ is inserted into the graph, we simply check if both its endpoints are non-tight. If yes, then we insert $(u, v)$ into $H$. Next, for each endpoint $x \in \{u, v\}$, we check if $\text{deg}_x(H)$ has now become equal to $d$ due to this edge insertion. If yes, then we delete the node $x$ from $T^c$ and insert it into $T$. All these operations can be performed in constant time. 

Now, consider the deletion of an edge $(u, v)$. If both $u, v$ are non-tight, then we have nothing else to do. Otherwise, for each tight endpoint $x \in \{u, v\}$, we check if $\text{deg}_x(H)$ has now fallen below the threshold $(1-\epsilon) d$ due to this edge deletion. If yes, then we might need to change the status of the node $x$ from tight to non-tight. Accordingly, we  scan through all the edges in $E$ that are incident upon $x$, and try to insert as many of them into $H$ as possible. This step takes $\Theta(n)$ time in the worst case since the degree of the node $x$ can be $\Theta(n)$. However, the algorithm ensures that this event occurs  only after $\epsilon d$ edges incident upon $x$ are deleted from $E$. This is true since we have a {\em slack} of $\epsilon d$ between the largest and smallest possible degrees of a tight node.  Thus, we get an amortized update time of $O(n/(\epsilon d))$.  
\end{proof}

\begin{corollary}
\label{cor:ed-kernel:main}
In a bipartite graph, one can maintain a $(2+6\epsilon)$-approximation to the size of the maximum matching in $O(\sqrt{n}/\epsilon^2)$ amortized update time. 
\end{corollary}

\begin{proof}(Sketch)
We set $d = \sqrt{n}$ and maintain an $(\epsilon, d)$-kernel $(T, H)$ as per Lemma~\ref{lm:maintain:ed-kernel}. This takes $O(\sqrt{n}/\epsilon)$ update time. Next, we note that the maximum degree of a node  in $H$ is $d = \sqrt{n}$ (see Definition~\ref{def:ed-kernel}). Accordingly, we can apply Theorem~\ref{th:gupta:peng} to maintain a $(1+\epsilon)$-approximate maximum matching  $M_H \subseteq H$ in $O(\sqrt{n}/\epsilon^2)$ update time. Hence, by Corollary~\ref{cor:approx:ed-kernel}, this matching $M_H$ is a $2 (1+\epsilon)/(1-\epsilon) \leq (2 +6\epsilon)$-approximation to the maximum matching in $G = (V, E)$. 
\end{proof}

\subsubsection{Proof of Lemma~\ref{lm:approx:ed-kernel}}
\label{sec:lm:approx:ed-kernel}
First, define a fractional assignment $w$ as follows. For every edge $(u, v) \in H$ incident on a tight node, we set $w(e) = 1/d$, and for every other edge  $(u, v) \in E$, set $w(u, v) = 0$. Since each node $v \in V$ has $\text{deg}_v(H) \leq d$, it is easy to check that $W_v(w) \leq 1$ for all nodes $v \in V$. In other words, $w$ forms a fractional matching in $G$.

Next, we define another fractional assignment $w'$. First, for every node $v \in T^c$, we define a ``capacity'' $b(v) = 1 - W_v(w) \in [0, 1]$. Next, for every edge $(u, v) \in H \cap M$ whose both endpoints are non-tight,  set $w'(u, v) = \min(b(u), b(v))$. For every other edge $(u, v) \in E$, set $w'(u, v) = 0$. 

We finally define $w'' = w+w'$. Clearly, the fractional assignment $w''$ has support $H$, since for every edge $(u, v) \in E \setminus H$, we have $w(u, v) = w'(u, v) = 0$. Hence, the lemma follows from Claims~\ref{cl:ed-kernel:1} and~\ref{cl:ed-kernel:2}. 

\begin{claim}
\label{cl:ed-kernel:1}
We have $W_v(w'') \leq 1$ for all nodes $v \in V$, that is, $w''$ is a fractional matching in $G$. 
\end{claim}

\begin{proof}
If a  node $v$ is tight, that is, $v \in T$, then we have $W_v(w'') = W_v(w) + W_v(w') = W_v(w) \leq 1$. Hence, for the rest of the proof, consider any node from the remaining subset $v \in T^c = V \setminus T$. There are two cases to consider here. 

\medskip
\noindent {\em Case 1.} If $v$ is not matched in $M$, then we have $W_v(w') = 0$, and hence $W_v(w'') = W_v(w) + W_v(w') = W_v(w) \leq  1$. 

\medskip
\noindent {\em Case 2.} If $v$ is matched in $M$, then let $u$ be its mate, i.e., $(u, v) \in M$. Here, we have $W_v(w') = w'(u, v) = \min(1 -W_u(w), 1 - W_v(w)) \leq 1 - W_v(w)$. This implies that $W_v(w'') = W_v(w) + W_v(w') \leq 1$. This concludes the proof.
\end{proof}

\begin{claim}
\label{cl:ed-kernel:2}
We have $\sum_{v \in V} W_v(w'') \geq (1-\epsilon) \cdot |M|$. 
\end{claim}

\begin{proof}
Throughout the proof, fix any edge $(u, v) \in M$. We will show that $W_u(w'') + W_v(w'') \geq (1-\epsilon)$. The claim will then follow if we sum over all the edges  in $M$. 

\medskip
\noindent {\em Case 1.} The edge $(u, v)$ has at least one tight endpoint. Let $u \in T$. In this case, we have $W_u(w'') + W_v(w'') \geq W_u(w'') = W_u(w) + W_u(w') \geq W_u(w) = (1/d) \cdot \text{deg}_u(H) \geq (1-\epsilon)$.  

\medskip
\noindent {\em Case 2.} Both the endpoints of $(u, v)$ are non-tight. Without any loss of generality, let $W_u(w) \geq W_v(w)$. In this case, we have $W_u(w'') + W_v(w'') \geq W_u(w'') = W_u(w) + W_u(w') = W_u(w) + w'(u, v) = W_u(w) + \min(1- W_u(w), 1-W_v(w)) = W_u(w) + (1-W_u(w)) = 1$. This concludes the proof.   
\end{proof}

\subsection{Better than $2$-approximation}
\label{sec:better:than:2}

The approximation guarantee derived in Section~\ref{sec:ed-kernel} follows from Claim~\ref{cl:ed-kernel:2}. Looking back at the proof of this claim, we observe that we actually proved a stronger statement: Any matching $M \subseteq E$ satisfies the property that $W_u(w'') + W_v(w'') \geq (1-\epsilon)$ for all matched edges $(u, v) \in M$, where $w''$ is a fractional matching with support $H$ that depends on $M$. In the right hand side of this inequality, if we replace the term $(1-\epsilon)$ by anything larger than $1$, then we will get a better than $2$ approximation (see the proof of Corollary~\ref{cor:approx:ed-kernel}). The reason it was not possible to do so in Section~\ref{sec:ed-kernel} is as follows. Consider a matched edge $(u, v) \in M$ with $u \in T$ and $v \in T^c$. Since $u$ is tight, we have  $1-\epsilon \leq W_u(w) = W_v(w'') \leq 1$. Suppose that $W_u(w'') = 1-\epsilon$. In contrast, it might well be the case that $W_v(w)$ is very close to being zero (which will happen if $\text{deg}_v(H)$ is very small).  Let $W_v(w) \leq \epsilon$. Also note that $W_v(w'') = W_v(w) + W_v(w') = W_v(w) \leq \epsilon$ since no edge that gets nonzero weight under $w'$ can be incident on $v$ (for $v$ is already incident upon an edge in $M$ whose other endpoint is tight). Hence, in this instance we will have $W_u(w'') + W_v(w'') \leq (1-\epsilon) + \epsilon = 1$, where  $(u, v) \in M$ is a matched edge with one tight and one non-tight endpoint. 

The above discussion suggests that we ought to ``throw in'' some additional edges into our kernel -- edges whose one endpoint is tight and the other endpoint is non-tight with a very small degree in $H$. Accordingly, we introduce the notion of {\em residual edges} in Section~\ref{sec:residual}. We show that the union of the kernel edges and the residual edges preserves the size of the maximum matching within a factor of strictly less than $2$. Throughout the rest of this section, we set the values of two parameters $\delta, \epsilon$ as follows. 
\begin{equation}
\label{eq:residual:parameter}
\delta = 1/20, \epsilon = 1/2000
\end{equation}

\subsubsection{The main framework: Residual edges}
\label{sec:residual}

We maintain an $(\epsilon, d)$-skeleton $(T, H)$ as in Section~\ref{sec:ed-kernel}. We further partition the set of non-tight nodes $T^c = V \setminus T$ into two subsets: $B \subseteq T^c$ and $S = T^c \setminus B$. The set of nodes in $B$ (resp. $S$) are called ``big'' (resp. ``small''). They satisfy the following degree-thresholds: (1) $\text{deg}_v(H) \leq 2 \delta d/(1-\delta)$ for all small nodes $v \in S$, and (2) $\text{deg}_v(H) \geq (2 \delta -\epsilon) d/ (1-\delta)$ for all big nodes $v \in B$. Let $E^r = \{ (u, v) \in E : u \in T, v \in S \}$ be the subset of edges joining the tight and the small nodes. We maintain a {\em maximal} subset of edges $M^r \subseteq E^r$ subject to the following constraints: (1) $\text{deg}_v(M^r) \leq 1$ for all tight nodes $v \in T$ and (2) $\text{deg}_v(M^r) \leq 2$ for all small nodes $v \in S$. The edges in $M^r$ are called the ``residual edges''. The degree of a node in $M^r$ is called its ``residual degree''. The corollary below follows from the maximality of the set $M^r \subseteq E^r$. 

\begin{corollary}
\label{cor:residual}
If an edge $(u, v) \in E^r$ with $u \in T$, $v \in S$ is not in $M^r$, then either $\text{deg}_v(M^r) = 1$ or $\text{deg}_u(M^r) = 2$.
\end{corollary}

\begin{lemma}
\label{lm:residual:update:time}
We can maintain the set of kernel edges $H$ and the residual edges $M^r$ in $O(n \log n/(\epsilon d))$ update time.
\end{lemma}

\begin{proof}(Sketch)
We maintain an $(\epsilon, d)$-kernel as per  the proof of Lemma~\ref{lm:maintain:ed-kernel}. We  maintain the node-sets $B, S \subseteq T^c$ and the edge-set $E^r$ in the same lazy manner:  A node changes its status only after $\Omega(\epsilon d)$ edges incident upon it are either inserted into or deleted from $G$ (since $\delta$ is a constant), and when that happens we might need to make $\Theta(n)$ edge insertions/deletions in $E^r$. This gives  the same amortized update time of $O(n/(\epsilon d))$ for maintaining the edge-set $E^r$.  

In order to maintain the set of residual edges $M^r \subseteq E^r$, we use a simple modification of the dynamic algorithm of Baswana et al.~\cite{BaswanaGS15} that maintains a maximal matching in $O(\log n)$ update time. This holds since $M^r$ is a {\em maximal $b$-matching} in $E^r$ where each small node can have at most two matched edges incident upon it, and each tight node can have at most one matched edge incident upon it. 
\end{proof}

\begin{lemma}
\label{lm:approx:residual}
Fix any $(\epsilon, d)$ kernel $(T, H)$ as in Section~\ref{sec:ed-kernel},  any set of residual edges $M^r$ as in Section~\ref{sec:residual}, and any matching $M \subseteq E$. Then we have a fractional matching $w''$ on support $H \cup M^r$ such that $\sum_{v \in V} W_v(w'') \geq (1+\delta/4) \cdot |M|$. 
\end{lemma}

\noindent {\em Roadmap for the rest of this section.} The statement of Lemma~\ref{lm:approx:residual} above is similar to that of Lemma~\ref{lm:approx:ed-kernel} in Section~\ref{sec:ed-kernel}. Hence, using a similar argument as in Corollary~\ref{cor:approx:ed-kernel}, we infer that the set of edges $M^r \cup H$ preserves the size of the maximum matching within a factor of $2/(1+\delta/4)$. Since $\text{deg}_v(M^r \cup H) = \text{deg}_v(H) + \text{deg}_v(M^r) \leq d+2$ for all nodes $v \in V$ (see Definition~\ref{def:ed-kernel}), we can maintain a $(1+\epsilon)$-approximate maximum matching in $H \cup M^r$ using Theorem~\ref{th:gupta:peng} in $O(d/\epsilon^2)$ update time. This matching will give a $2(1+\epsilon)/(1+\delta/4) = 1.976$-approximation to the size of maximum matching in $G$ (see equation~\ref{eq:residual:parameter}). The total update time is $O(d/\epsilon^2 + n\log n/(\epsilon d))$, which becomes $O(\sqrt{n} \log n)$ if we set $d = \sqrt{n}$ and plug in the value of $\epsilon$. This concludes the proof of Theorem~\ref{th:abbrv}. 

It remains to prove Lemma~\ref{lm:approx:residual}, which is done in Section~\ref{sec:lm:approx:residual}.

\subsubsection{Proof of Lemma~\ref{lm:approx:residual}}
\label{sec:lm:approx:residual}

We will define four fractional assignments $w, w^r, w', w''$. It might be instructive to contrast the definitions of the fractional assignments $w, w'$ and $w''$ here with  Section~\ref{sec:lm:approx:ed-kernel}.

\medskip
\noindent {\em The fractional assignment $w$:} Set $w(e) = (1-\delta)/d$ for every edge $e \in H$ incident upon a tight node. Set $w(e) = 0$ for every other edge $e \in E$. Hence, we have $W_v(w) = ((1-\delta)/d) \cdot \text{deg}_v(H)$ for all nodes $v \in V$. Accordingly, recalling the bounds on $\text{deg}_v(H)$ for various types of nodes (see Definition~\ref{def:ed-kernel}, Section~\ref{sec:residual}), we get:
\begin{eqnarray}
 W_v(w) \leq (1-\delta) \text{ for all nodes } v \in V. \label{eq:residual:1} \\
  W_v(w) \leq 2\delta \text{ for all small nodes } v \in S. \label{eq:residual:2} \\
W_v(w) \geq (1-\delta) (1 -\epsilon) \text{ for all tight nodes } v \in T. \label{eq:residual:3} \\
 W_v(w) \geq 2\delta - \epsilon \text{ for all big nodes } v \in B. \label{eq:residual:4}
 \end{eqnarray} 
 
 \medskip
 \noindent {\em The fractional assignment $w^r$:} Set $w^r(e) = \delta$ for every residual edge $e \in M^r$. Set $w^r(e) = 0$ for every other edge $e \in E$. 
 
 \medskip
 \noindent {\em The fractional assignment $w'$:} For every node $v \in T^c$, we define a ``capacity'' $b(v)$ as follows. If $v \in B \subseteq T^c$, then $b(v) = 1 - W_v(w)$. Else if $v \in S = T^c \setminus B$, then $b(v) = 1 - 2 \delta - W_v(w)$. Hence, equations~\ref{eq:residual:1},~\ref{eq:residual:2} imply that:  
 \begin{eqnarray}
 b(v) \geq \delta \text{ for all big nodes } v \in B. \label{eq:residual:big} \\
 b(v) \geq 1 - 4 \delta \text{ for all small nodes } v \in S. \label{eq:residual:5}
 \end{eqnarray}
For every edge $(u, v) \in M$ with $u, v \in T^c = V \setminus T$, we set $w'(u, v) = \min(b(u), b(v))$.  For every other edge $e \in E$, we set $w'(e) = 0$. By Definition~\ref{def:ed-kernel}, every edge whose both endpoints are non-tight is  a kernel edge. Hence, an edge gets nonzero weight under $w'$ only if it is part of the kernel. 
 
 \medskip
 \noindent {\em The fractional assignment $w''$:} Define $w'' = w+w^r+w'$. 
 
 \medskip
 \noindent {\em Roadmap for the rest of the proof.}
Each of the fractional assignments $w, w^r, w'$ assigns zero weight to every edge $e \in E \setminus (H \cup M^r)$. Hence, the fractional assignment $w'' = w+w^r+w'$ has support $H \cup M^r$. In Claim~\ref{cl:residual:1}, we show that $w''$ is a fractional matching in $G$. Moving on, in Definition~\ref{def:new:matching}, we partition the set of matched edges in $M$ into two parts. The subset $M_1 \subseteq M$ consists of those matched edges that have one tight and one small endpoints, and the subset $M_2 = M \setminus M_1$ consists of the remaining edges.  In Claims~\ref{cl:residual:3} and~\ref{cl:residual:4}, we relate the node-weights under $w, w', w^r$ with the sizes of the matchings $M_1$ and $M_2$. Adding up the bounds from Claims~\ref{cl:residual:3} and~\ref{cl:residual:4}, Corollary~\ref{cor:residual:10} lower bounds the sum of the node-weights under $w''$ by the size of the matching $M$. Finally,  Lemma~\ref{lm:approx:residual} follows from Claim~\ref{cl:residual:1} and Corollary~\ref{cor:residual:10}.
 
 \begin{claim}
 \label{cl:residual:1}
 We have $W_v(w'') \leq 1$ for all nodes $v \in V$. 
 \end{claim}

\begin{proof}
Consider any node $v \in V$. By equation~\ref{eq:residual:1}, we have: $W_v(w) \leq 1-\delta$. Also note that $W_v(w^r) \leq \delta$ for all tight nodes $v \in T$, $W_v(w^r) \leq 2 \delta$ for all small nodes $v \in S$, and $W_v(w^r) = 0$ for all big nodes $v \in B$. This holds since the degree (among the edges in $M^r$) of  a tight, small and big node is at most one, two and zero respectively. Next, note that for all nodes $v \in T^c$, we have $W_v(w') \leq b(v)$. This holds since  there is at most one edge in $M \cap H$ incident upon $v$ (since $M$ is a matching). So at most one edge incident upon $v$ gets a nonzero weight under $w'$, and the weight of this edge is at most $b(v)$. Finally, note that every edge with nonzero weight under $w'$ has both the endpoints in $T^c$. Hence, we have $W_v(w') = 0$ for all tight nodes $v \in T$. To complete the proof, we now consider three possible cases.

\medskip
\noindent {\em  Case 1.  $v \in T$.} Here, $W_v(w'') = W_v(w) + W_v(w^r) + W_v(w') = W_v(w) + W_v(w^r)  \leq W_v(w) + \delta \leq (1-\delta) + \delta = 1$.  

\medskip
\noindent {\em Case 2. $v \in S$.} Here, $W_v(w'') = W_v(w) + W_v(w^r) + W_v(w') \leq W_v(w) + 2\delta + b(v) = W_v(w) + 2 \delta + (1 - 2 \delta - W_v(w)) = 1$.  

\medskip
\noindent {\em Case 3. $v \in B$.} Here, $W_v(w'') = W_v(w) + W_v(w^r) + W_v(w') = W_v(w) + W_v(w') \leq W_v(w) + b(v) = W_v(w) + (1 - W_v(w)) = 1$. 
\end{proof}

\begin{definition}
\label{def:new:matching}
Partition the set of edges in $M$ into two parts: $M_1 = \{ (u, v) \in M : u \in T, v \in S \}$ and $M_2 = M \setminus M_1$. 
\end{definition}

\begin{claim}
\label{cl:residual:3}
Recall Definition~\ref{def:new:matching}. We have: 
$$\sum_{(u, v) \in M_2} W_u(w+w') + W_v(w+w') \geq (1+\delta/4) \cdot |M_2|.$$
\end{claim}

\begin{proof}
Fix any edge $(u, v) \in M_2$, and let $\text{LHS} = W_u(w+ w') + W_v(w+w')$. We will show that $\text{LHS} \geq (1+\delta/4)$. The claim will then follow if we sum over all such edges  $M_2$. We recall equation~\ref{eq:residual:parameter} and consider four possible cases.

\medskip
\noindent {\em Case 1. Both endpoints are tight, that is, $u, v \in T$.} Here, from equation~\ref{eq:residual:3} we get: $\text{LHS} \geq 2 \cdot (1-\delta -\epsilon +\delta \epsilon) \geq (1+\delta/4)$. 

\medskip
\noindent {\em Case 2. One endpoint is tight, and one endpoint is big, that is,  $u \in T$, $v \in B$.} Here, from equations~\ref{eq:residual:3},~\ref{eq:residual:4} we get: $\text{LHS} \geq (1- \delta - \epsilon + \delta \epsilon) + (2\delta - \epsilon) \geq (1+\delta - 2\epsilon) \geq 1+\delta/4$. 

\medskip
\noindent {\em Case 3. Both endpoints are non-tight, that is, $u , v \in B \cup S$.} Without any loss of generality, let $b(u) \geq b(v)$. Note that $(u, v) \in H$ since both $u, v \in T^c$, and hence $(u, v) \cap M_2 \cap H$. Thus, we have $W_u(w') = W_v(w') = w'(u, v) = b(v)$ since at most one matched edge can be incident upon a node. Now, either $v \in B$ or $v \in S$. In the former case, from equation~\ref{eq:residual:5} we get: $\text{LHS} \geq W_u(w') + W_v(w') = 2 \cdot b(v) \geq 2 (1-4\delta) \geq (1+\delta/4)$. In the latter case, from equation~\ref{eq:residual:big} we get: $\text{LHS} \geq (W_v(w) + W_v(w')) + W_u(w') = (b(v) + W_v(w)) + b(v) =  1  + b(v) \geq 1+\delta \geq 1+\delta/4$. 
\end{proof}

\begin{claim}
\label{cl:residual:4}
Recall Definition~\ref{def:new:matching}. We have:
$$\sum_{(u,v) \in M_1} \left(W_u(w) + W_v(w)\right) + \sum_{v \in V} W_v(w^r)  \geq  (1+\delta/4) \cdot |M_1|.$$
 \end{claim}
 
\begin{proof}
Every edge $(u, v) \in M_1$ has one endpoint $u \in T$. Thus, Applying equation~\ref{eq:residual:3}   we get: $W_u(w) + W_v(w) \geq W_u(w) \geq 1 - \delta - \epsilon$. Summing over all such edges, we get:
\begin{equation}
\label{main:eq:approx:20}
\sum_{(u, v) \in M_1} W_u(w) + W_v(w) \geq (1 - \delta - \epsilon) \cdot |M_1|
\end{equation}
Recall  that   $\text{deg}_u(M^r) \leq 1$ for every tight node $u \in T$. Accordingly, we classify each tight node as being either ``full'' (in which case $\text{deg}_u(M^r) = 1$) or ``deficient'' (in which case $\text{deg}_u(M^r) = 0$).  Further, recall that each edge $(u, v) \in M_1$ has one tight and one small endpoints. We say that an edge $(x, y) \in M_1$ is {\em deficient} if the  tight endpoint of the edge is deficient. Now, consider any deficient edge $(x, y) \in M_1$ where $x \in T$ and $y \in S$. Since $\text{deg}_x(M^r) = 0$, it follows that  $(x, y) \in E^r \setminus M^r$.  From the maximality of $M^r$, we infer that $\text{deg}_y(M^r) = 2$. Accordingly, there must be two edges  $(x', y), (x'', y) \in M^r$ with $x', x'' \in T$.   It follows that  both the nodes $x', x''$ are full. We say that the tight nodes $x', x''$ are {\em conjugates} of the deficient edge $(x, y) \in M_1$. In other words, we have shown that every deficient edge in $M_1$ has two conjugate tight nodes. Further,  the same tight node $x'$ cannot be a conjugate of two different deficient edges in $M_1$, for otherwise each of those deficient edges will contribute one towards $\text{deg}_{x'}(M^r)$, and we will get $\text{deg}_{x'}(M^r) \geq 2$, which is a contradiction. Thus, a simple counting argument implies that the number of conjugate tight nodes is exactly twice the number of deficient matched edges in $M_1$. Let $D(M_1), F, C$ respectively denote   the set of deficient matched edges in $M_1$,  the set of full tight nodes and the set of conjugate tight nodes. Thus, we get:
\begin{equation}
\label{main:eq:approx:30}
T \supseteq F \supseteq C, \ \ D(M_1) \subseteq M_1, \text{ and } |C| = 2 \cdot |D(M_1)| 
\end{equation}
Now, either $|D(M_1)| \leq (1/3) \cdot |M_1|$ or $|D(M_1)| > (1/3) \cdot |M_1|$. In the former case, at least a $(2/3)^{rd}$ fraction of the edges in $M_1$ are not deficient, and each such edge has one tight endpoint that is full. Thus, we get $|F| \geq (2/3) \cdot |M_1|$. In the latter case, from equation~\ref{main:eq:approx:30} we get $|F| \geq |C| = 2 \cdot |D(M_1)| > (2/3) \cdot |M_1|$. Thus, in either case we have $|F| \geq (2/3) \cdot |M_1|$. Since each node $v \in F \subseteq T$ has $\text{deg}_v(M^r) = 1$, and since each edge $e \in M^r$ has weight $w^r(e) = \delta$, it follows that $W_v(w^r) = \delta$ for all nodes $v \in F \subseteq T$. Hence, we get $\sum_{v \in T} W_v(w^r) \geq 
\delta \cdot |F| \geq (2\delta/3) \cdot |M_1|$. Next, we note that each edge in $M^r$ contributes the same amount $\delta$ towards the weights of both its endpoints -- one tight and the other small. Thus, we have:
$$\sum_{v \in S} W_v(w^r) = \sum_{v \in T} W_v(w^r) \geq (2\delta/3) \cdot |M_1|.$$ Since $B \cup S \subseteq V$ and $B \cap S = \emptyset$, we get: 
$$\sum_{v \in V} W_v(w^r) \geq \sum_{v \in B \cup S} W_v(w^r) \geq (4\delta/3) \cdot |M_1|.$$ This inequality, along with equation~\ref{main:eq:approx:20}, gives us:
\begin{eqnarray*}
\sum_{(u, v) \in M_1} \left(W_u(w) + W_v(w) \right) + \sum_{v \in V} W_v(w^r)  \\ 
\geq (1-\delta-\epsilon) \cdot |M_1| + (4\delta/3) \cdot |M_1|  = (1+\delta/3 - \epsilon) \cdot |M_1| \\
\geq (1+\delta/4) \cdot |M_1|.
\end{eqnarray*}
The last inequality follows from equation~\ref{eq:residual:parameter}. 
\end{proof}

\begin{corollary}
\label{cor:residual:10}
We have: 
$$\sum_{v \in V} W_v(w'') \geq (1+\delta/4) \cdot |M|.$$ 
\end{corollary}

\begin{proof}
Since $|M| = |M_1| + |M_2|$, the corollary follows from adding the inequalities stated in Claims~\ref{cl:residual:3} and~\ref{cl:residual:4}, and noting that no node-weight under $w''$ is counted twice in the left hand side.
\end{proof}

\subsection{Extensions}
\label{sec:extensions}
We gave a randomized algorithm for maximum bipartite matching that maintains a better than $2$ approximation in $O(\sqrt{n} \log n)$ update time. In the full version of the paper, we derandomize this scheme using the following idea. Instead of applying the randomized maximal matching algorithm from~\cite{BaswanaGS15} for maintaining the set of residual edges $M^r$,  we maintain a {\em residual fractional matching} using the deterministic algorithm from~\cite{Arxiv} (see Theorem~\ref{th:runtime}).  To carry out the approximation guarantee analysis, we have to change the proof of Lemma~\ref{lm:approx:residual} (specifically, the proof of Claim~\ref{cl:residual:4}). 

To get arbitrarily small polynomial update time, we maintain a partition of the node-set into multiple levels. The top level consists of all the tight nodes (see Definition~\ref{def:ed-kernel}). We next consider the subgraph induced by the non-tight nodes. Each edge in this subgraph is a kernel edge (see Definition~\ref{def:ed-kernel}). Intuitively, we split the node-set  of this  subgraph again into two parts by defining a kernel within this subgraph. The tight nodes we get in this iteration forms the next level in our partition of $V$. We keep doing this for $K$ levels, where $K$ is a sufficiently large integer. We show that (a) this structure can be maintained in $O(n^{2/K})$ update time, and (b) by combining the fractional matchings from all these levels, we can get an $\alpha_K$ approximate maximum fractional matching in $G$, where $1 \leq \alpha_K < 2$. By Theorem~\ref{main:main:th:structure}, this gives $\alpha_K$-approximation to the size of the maximum integral matching in $G$. See the full version of the paper for the details.

\section{Open Problems}
\label{sec:open}

In this paper, we presented two deterministic dynamic algorithms for maximum matching. Our first algorithm maintains a $(2+\epsilon)$-approximate maximum matching in a general graph in $O(\text{poly} (\log n, 1/\epsilon))$ update time. The exponent hidden in the polylogorithmic factor of the update time, however,  is rather huge. It will be interesting to bring down the update time of this algorithm to $O(\log n/\epsilon^2)$ without increasing the approximation factor. This will match the update time  in~\cite{Arxiv} for maintaining a  {\em fractional} matching. 

We also  showed how to maintain a better than $2$ approximation to the size of the maximum matching on bipartite graphs in $O(n^{2/K})$ update time, for every sufficiently large integer $K$. The approximation ratio  approaches $2$ as $K$ becomes large. The main open problem here is to design a dynamic algorithm that gives better than $2$ approximation in polylogarithmic update time. This remains open even on bipartite graphs and even if one allows randomization. 


\bibliographystyle{plain}
\bibliography{citations}

\newpage
\part{DYNAMIC ALGORITHM FOR GENERAL GRAPHS: FULL DETAILS}
\newpage

\section{Preliminaries}
\label{label:general:sec:prelim}

In this part of the writeup, we will show how to maintain a $(2+\epsilon)$-approximate maximum matching in a general graph with $O(\text{poly} (\log n, 1/\epsilon))$ update time. \\

\noindent {\bf Notations.} 

\medskip
\noindent A ``matching'' $M \subseteq E$ in a graph $G = (V, E)$ is a subset of edges that do not share any common endpoints. The ``size'' of a matching is the number of edges contained in it. We will also be interested in the concept of a ``fractional matching''. Towards this end, consider any weight function $w : E \rightarrow [0,1]$ that assigns fractional weights to the edges in the graph. Given any such weight function, we  let $W(v, E') = \sum_{(u,v) \in E'} w(u,v)$ denote the total weight a node $v \in V$ receives from its incident edges in $E' \subseteq E$. We also define $w(E') = \sum_{e \in E'} w(e)$  to be the sum of the weights assigned to the edges in $E' \subseteq E$. Now, the weight function $w$ is a ``fractional matching'' iff we have $W(v, E) \leq 1$ for all $v \in V$. The ``value'' of this  fractional matching $w$ is given by $w(E)$.  Given a graph $G = (V, E)$, we will let $\text{{\sc Opt}}_G$ (resp. $\text{{\sc Opt}}^f_G$) denote the maximum ``size of a matching''  (resp. ``value of a fractional matching'') in $G$. Then it is well known that $\text{{\sc Opt}}_G \leq \text{{\sc Opt}}^f_G \leq (3/2) \cdot \text{{\sc Opt}}_G$. To see an example where the second inequality is tight, consider a triangle-graph and assign a weight $1/2$ to each edge in this triangle. This gives a fractional matching of value $3/2$, whereas every matching in this graph is of size one.  We say that a fractional matching $w$ is ``$\lambda$-maximal'', for $\lambda \geq 1$, iff for every edge $(u,v) \in E$, we have $W(u, E) + W(v, E) \geq 1/\lambda$.  The value of a $\lambda$-maximal fractional matching in $G$ is a $2\lambda$-approximation to $\text{{\sc Opt}}_G^f$. Throughout the paper, we will use the symbol $\text{deg}(v, E')$ to denote the number of edges in a subset $E' \subseteq E$ that are incident upon a node $v \in V$. 


\medskip
\noindent {\bf  Maintaining a large fractional matching:}  

\medskip
\noindent
The paper~\cite{Arxiv} maintains a $(2+\epsilon)$-approximate maximum fractional matching in $O(\log n/\epsilon^2)$ update time. We will now summarize their result.  Consider an input graph $G = (V, E)$ with $|V| = n$ nodes and $|E| = m$ edges. Fix any two constants $\alpha, \beta \geq 1$ and define $L = \lceil \log_{\beta} (n/\alpha) \rceil$. An ``$(\alpha, \beta)$-partition'' of the input graph $G = (V, E)$ partitions the node-set $V$ into $(L+2)$ subsets $\{V_{i}\}, i \in \{-1, 0, 1, \ldots, L\}$. We say that the nodes belonging to the subset $V_i$ are in ``level $i$''. We denote the level of a node  $v$ by $\ell(v)$, i.e.,  $v \in V_i$ iff $\ell(v) = i$. We next define the ``level of an edge'' $(u,v)$ to be $\ell(u,v) = \max(\ell(u), \ell(v))$. In other words, the level of an edge is the maximum level of its endpoints. We let $E_i = \{ e \in E : \ell(e) = i\}$ denote the set of edges at level $i$, and define the subgraph $G_i = (V, E_i)$. This partitions the edge-set $E$  into $(L+2)$ subsets $\{E_i\}, i \in \{-1, 0, 1, \ldots, L\}$. Finally,  for each level $i$, we define the value $d_i = \beta^i \cdot (\alpha \beta)$, and assign a weight $w(e) = 1/d_i$ to  all the edges $e \in E_i$. We  require that an $(\alpha, \beta)$-partition satisfy the invariants below.

\begin{invariant}
\label{label:general:inv:partition:edge}
For every edge $e \in E$, we have $\ell(e) \geq 0$. In other words, every edge has at least one endpoint $v$ at level $\ell(v) \geq 0$, or, equivalently $E_{-1} = \emptyset$. This means that  $V \setminus V_{-1}$ forms a vertex cover in $G$. 
\end{invariant}

\begin{invariant}
\label{label:general:inv:partition}
We have $W(v, E) \in [1/(\alpha \beta), 1]$ for all nodes $v \in V \setminus V_{-1}$, and $W(v, E) \leq 1$ for all nodes $v \in V_{-1}$. This, along with Invariant~\ref{label:general:inv:partition:edge}, means that $w$ is an $(\alpha \beta)$-maximal fractional matching in $G$. 
\end{invariant}

\begin{corollary}
\label{label:general:cor:lm:deg:1}
In an $(\alpha, \beta)$-partition, every node $v \in V$ has  $\text{deg}(v, E_i) \leq d_i$ for all levels $i \in [0,  L]$. Thus, we refer to $d_i$ as being the ``degree-threshold'' of level $i$.
\end{corollary}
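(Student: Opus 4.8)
The plan is to read the bound off directly from the weight assignment, using Invariant~\ref{label:general:inv:partition}. Fix a node $v \in V$ and a level $i \in [0,L]$. By the definition of an $(\alpha,\beta)$-partition, every edge $e \in E_i$ carries weight $w(e) = 1/d_i$. In particular, the edges of $E_i$ that are incident on $v$ — there are exactly $\text{deg}(v,E_i)$ of them, and they are pairwise distinct — together contribute precisely $\text{deg}(v,E_i)/d_i$ to the sum $W(v,E) = \sum_{(u,v)\in E} w(u,v)$.

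Since all edge weights are nonnegative, this partial sum is at most the whole sum, so $W(v,E) \ge \text{deg}(v,E_i)/d_i$. Now I invoke Invariant~\ref{label:general:inv:partition}, which states $W(v,E) \le 1$ for \emph{every} node of $V$ (whether or not it lies in $V_{-1}$). Chaining the two inequalities gives $\text{deg}(v,E_i)/d_i \le 1$, i.e.\ $\text{deg}(v,E_i) \le d_i$, which is the claim.

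Note that no case analysis on $\ell(v)$ versus $i$ is needed: an edge's weight depends only on its own level $\ell(e)=\max(\ell(u),\ell(v))$, so all edges counted by $\text{deg}(v,E_i)$ have weight $1/d_i$ irrespective of whether $v$ is the higher- or lower-level endpoint. There is therefore no real obstacle; the only points to be careful about are (i) restricting to $i \ge 0$, since for $i=-1$ the set $E_{-1}$ is empty by Invariant~\ref{label:general:inv:partition:edge} and no degree-threshold is needed there, and (ii) observing that the upper bound $W(v,E)\le 1$ in Invariant~\ref{label:general:inv:partition} is asserted for all of $V$, including $V_{-1}$, which is exactly what lets the bound hold for every node uniformly.
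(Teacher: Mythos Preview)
Your proof is correct and follows essentially the same approach as the paper: compute $W(v,E_i)=\text{deg}(v,E_i)/d_i$ and bound it by $1$ using the fractional-matching constraint from Invariant~\ref{label:general:inv:partition}. Your version is in fact more carefully written, since the paper's one-line proof cites Invariant~\ref{label:general:inv:partition:edge} where Invariant~\ref{label:general:inv:partition} is clearly intended.
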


\begin{proof}
Invariant~\ref{label:general:inv:partition:edge} implies that  $W(v, E_i) =  \text{deg}(v, E_i) \cdot (1/d_i) \leq 1$
\end{proof}

\begin{theorem}\cite{Arxiv}
\label{label:general:th:runtime}
 Set $\alpha = 1+3\epsilon$ and $\beta = 1+\epsilon$, which means  $L = O(\log n/\epsilon)$. We can maintain the node-sets $\{V_i\}$, the edge-sets $\{E_i\}$ and the edge-weights $\{w(e)\}$ in an $(\alpha, \beta)$-partition of $G$  in $O(\log n/\epsilon^2)$  update time. Hence, the edge-weights   form a   $(2\alpha \beta)$-approximate maximum fractional matching in $G$. 
\end{theorem}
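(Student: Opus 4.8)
I would keep the $(\alpha,\beta)$-partition explicitly in memory and, after every edge update, repair it by moving a small number of nodes between levels; the approximation guarantee is then immediate, and the update time comes from an amortized potential argument.

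\textbf{Data structure and repair.} For each node $v$ store its level $\ell(v)$, the value $W(v,E)$, and, for each level $j\ge\ell(v)$, a doubly linked list of the incident edges $e\ni v$ with $\ell(e)=j$; each edge carries pointers into the two relevant lists, so one edge can be re-labelled and the two affected $W(\cdot)$ values updated in $O(1)$ time. To insert $(u,v)$, put it at level $\max(\ell(u),\ell(v))$ with weight $1/d_{\ell(u,v)}$, splice it into the lists, and update $W(u,E),W(v,E)$; deletion is symmetric. (Enforcing $E_{-1}=\emptyset$ needs one extra rule: if both endpoints of a new edge sit at level $-1$, first raise one of them to level $0$; this is consistent because a level-$0$ node adjacent to a level-$(-1)$ node has an incident edge of weight $1/d_0=1/(\alpha\beta)$, hence is never too light.) After such an $O(1)$ update only $u$ and $v$ can violate Invariant~\ref{label:general:inv:partition}, so maintain a set $\mathcal{D}$ of ``dirty'' nodes --- those with $W(v,E)>1$, and those with $\ell(v)\ge 0$ and $W(v,E)<1/(\alpha\beta)$ --- and while $\mathcal{D}\ne\emptyset$ pick some $v\in\mathcal{D}$: if $v$ is up-dirty, raise $\ell(v)$ one level at a time, each step re-labelling $v$'s incident edges currently at level $\ell(v)$ up to level $\ell(v)+1$ (dividing their weight by $\beta$) and updating $W(v,E)$ and the affected neighbours' weights, stopping as soon as $W(v,E)\le 1$ or $\ell(v)=L$; if $v$ is down-dirty, lower $\ell(v)$ one level at a time (multiplying by $\beta$ the weight of the incident edges that come to lie at $v$'s new level), stopping once $W(v,E)\ge 1/(\alpha\beta)$ or $\ell(v)=-1$; refresh $\mathcal{D}$ after every single level change. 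A single level change of $v$ rescales only some of $v$'s incident edges, all of them at level $\ell(v)$, which together have weight $W(v,E_{\ell(v)})\le 1$ by Corollary~\ref{label:general:cor:lm:deg:1}, so it changes $W(v,E)$ additively by at most $1-1/\beta$; because $\alpha=1+3\epsilon$ keeps $1/(\alpha\beta)$ bounded below $1/\beta$, a node repaired down from just above $1$ never undershoots $1/(\alpha\beta)$ in one step, so one-step repairs are consistent with both thresholds. Once $\mathcal{D}=\emptyset$, Invariants~\ref{label:general:inv:partition:edge} and~\ref{label:general:inv:partition} hold, and the degree bounds of Corollary~\ref{label:general:cor:lm:deg:1} follow.

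\textbf{Approximation.} Under Invariants~\ref{label:general:inv:partition:edge} and~\ref{label:general:inv:partition}, $w$ is an $(\alpha\beta)$-maximal fractional matching, so by the fact recorded in the Preliminaries its value $w(E)$ is a $2\alpha\beta$-approximation of $\text{{\sc Opt}}^f_G$. With $\alpha=1+3\epsilon$ and $\beta=1+\epsilon$ this equals $2+O(\epsilon)$, and $L=\lceil\log_{1+\epsilon}(n/\alpha)\rceil=O(\log n/\epsilon)$.

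\textbf{Running time (the main obstacle).} One level move of a node at level $i$ re-labels up to $\text{deg}(v,E_i)=O(d_i)$ edges, and $d_i$ may be as large as $\Theta(n)$; worse, the move perturbs neighbours and can cascade further dirty nodes, so large moves must be shown to be rare. The plan is an amortized analysis with a non-negative potential $\Phi$ built from per-node and per-edge terms --- the per-node term measuring, roughly, how many of a node's incident edges currently sit at its own level (a node is on the verge of moving up exactly when this is large) --- with each term scaled by $\Theta(1/\epsilon)$ to offset the fact that a single level move changes an edge weight only by the small relative amount $\Theta(\epsilon)$. One must then prove: (i) a single edge update does $O(1)$ immediate work and increases $\Phi$ by at most $O(L/\epsilon)=O(\log n/\epsilon^2)$; and (ii) every level move of a node that re-labels $k$ incident edges costs $O(k+1)$ time but lowers $\Phi$ by $\Omega(k+1)$. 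Summing (i) and (ii) over a sequence of $t$ updates starting from the empty graph (so $\Phi$ starts at $0$ and stays nonnegative) gives total work $O(t\log n/\epsilon^2)$, i.e. $O(\log n/\epsilon^2)$ amortized, and also shows the repair loop terminates. Designing $\Phi$ so that (ii) holds for up-moves and down-moves simultaneously, and is not broken by the factor-$\beta$ overshoot in the stopping rule or by the cascades, is the step I expect to require the most care.
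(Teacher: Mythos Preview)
The paper does not prove this theorem; it is imported verbatim from the cited reference \cite{Arxiv} and used as a black box, so there is no ``paper's own proof'' to compare against.

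Your outline is the standard one and matches the approach of the cited work: maintain levels explicitly, repair dirty nodes by single-level moves, and amortize via a nonnegative potential. Two remarks on the sketch as written.

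First, your appeal to Corollary~\ref{label:general:cor:lm:deg:1} to assert $W(v,E_{\ell(v)})\le 1$ \emph{during} the repair loop is circular: that corollary is derived from Invariant~\ref{label:general:inv:partition}, which is precisely what is violated while you are repairing, and cascaded down-moves of neighbours can push $W(v,E)$ well above $1$ before $v$ is processed. The usual fix is either to process dirty nodes in a careful order or, more simply, not to rely on any degree bound at intermediate states and let the potential absorb the cost directly; your ``no overshoot'' claim for the stopping rule then needs a different justification (in the cited analysis it comes from the gap between the two thresholds $1$ and $1/(\alpha\beta)$ being at least a factor $\beta$, together with the fact that a single up-move scales at most all of $v$'s weight by $1/\beta$).

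Second, and this you already flag, you never actually define $\Phi$. In the cited work the potential has a per-edge piece of order $(L-\ell(e))/\epsilon$, which pays for down-moves (each down-move of $v$ lowers the level of the incident edges that drop with it, releasing $\Theta(1/\epsilon)$ per edge), and a per-node piece tied to the number of $v$'s incident edges currently at level exactly $\ell(v)$, which pays for up-moves. Getting the constants to close for both directions simultaneously, and checking that a single edge insertion or deletion changes $\Phi$ by only $O(L/\epsilon)$, is routine once $\Phi$ is written down, but until it is, what you have is a plan rather than a proof.
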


\medskip
\noindent {\bf Maintaining a near-optimal matching in a bounded degree graph:} 

\medskip

\noindent Gupta and Peng~\cite{GuptaP13}  maintains a $(1+\epsilon)$-approximate maximum matching  in $O(\sqrt{m}/\epsilon^2)$  time. The next theorem follows as an immediate consequence of their algorithm.

\begin{theorem}
\label{label:general:th:gupta:peng}
If we are guaranteed that the maximum degree in a graph $G = (V, E)$ never exceeds some threshold $d$, then we can maintain a $(1+\epsilon)$-approximate maximum matching in $G$ in $O(d/\epsilon^2)$ update time. 
\end{theorem}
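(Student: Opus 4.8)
The plan is to use the \emph{lazy recomputation} paradigm underlying Gupta and Peng's algorithm~\cite{GuptaP13}: maintain the matching by recomputing an approximate matching from scratch once every $\Theta(\epsilon\cdot(\text{matching size}))$ updates, and between two consecutive recomputations perform only trivial $O(1)$-time bookkeeping. The single place in which the density of the graph enters the running-time bound is the ratio $|E|/\text{{\sc Opt}}_G$; the entire content of the theorem is that the degree bound forces this ratio to be $O(d)$, which permanently keeps us in the regime where naive periodic recomputation is cheap.

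\emph{Setting up the phases.} I would partition the update sequence into phases. At the start of each phase I run a static algorithm computing a $(1+\epsilon/3)$-approximate maximum matching $M$ in $O(|E|/\epsilon)$ time --- for concreteness, $\lceil 3/\epsilon\rceil$ rounds of the Hopcroft--Karp algorithm, since after $k$ rounds the matching has size at least $\text{{\sc Opt}}_G/(1+1/k)$ and each round costs $O(|E|)$. Writing $\mu_0=|M|$, the phase then lasts the next $\max\{1,\lfloor \epsilon\mu_0/5\rfloor\}$ updates. During a phase each update is handled trivially: if the deleted edge lies in $M$ we discard it from $M$ (optionally trying to rematch its two freed endpoints greedily), and insertions or deletions of non-matching edges leave $M$ unchanged. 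The algorithm always reports the current $M$.

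\emph{Correctness.} A single insertion or deletion changes $\text{{\sc Opt}}_G$ by at most $1$, and the bookkeeping above decreases $|M|$ by at most $1$ per update. Hence, letting $\text{{\sc Opt}}_0$ denote the maximum matching size when the phase begins (so $\mu_0\ge \text{{\sc Opt}}_0/(1+\epsilon/3)$), after $t$ further updates the current optimum is at most $(1+\epsilon/3)\mu_0+t$ while $|M|\ge \mu_0-t$. An elementary calculation shows $(\mu_0-t)(1+\epsilon)\ge (1+\epsilon/3)\mu_0+t$ for all $t\le \frac{2\epsilon}{3(2+\epsilon)}\mu_0$, which is implied by $t\le \epsilon\mu_0/5$ when $\epsilon\le1$; so $M$ stays a $(1+\epsilon)$-approximate maximum matching throughout the phase (and in the length-$1$ phases this is vacuous).

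\emph{Running time and the crux.} The per-update bookkeeping is $O(1)$ (or $O(d)$ with the optional greedy rematch), which is within budget, so everything reduces to the amortized cost of recomputation. Here I would invoke the degree bound: any maximal matching of $G$ has size at least $|E|/(2d)$, because its endpoints cover all edges and each of its $2\cdot(\text{size})$ endpoints is incident to at most $d$ edges; hence $\text{{\sc Opt}}_G\ge |E|/(2d)$ at all times. Consequently, at the start of a phase $|E|\le 2d\,\text{{\sc Opt}}_0\le 4d\mu_0$ (for $\mu_0\ge1$; an empty graph is trivial), so the recomputation costs $O(|E|/\epsilon)=O(d\mu_0/\epsilon)$. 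When $\mu_0$ is not tiny this is amortized over $\Theta(\epsilon\mu_0)$ updates, giving $O(d/\epsilon^2)$ per update; and when $\mu_0<5/\epsilon$ the graph already has only $O(d/\epsilon)$ edges, so recomputing on every single update costs just $O(d/\epsilon^2)$. A routine token argument (deposit $\Theta(d/\epsilon^2)$ tokens per update, pay for each recomputation out of the tokens of its phase) makes the amortization precise. I do not expect a real obstacle: the only subtleties are the boundary case of a small current matching, handled by recomputing every step, and writing the amortization over variable-length phases carefully; the substantive step is simply the inequality $\text{{\sc Opt}}_G\ge |E|/(2d)$, which plays the role that the $\text{{\sc Opt}}_G\ge\sqrt{|E|}$ case split plays in Gupta--Peng.
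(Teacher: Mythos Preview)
The paper does not actually prove this theorem; it merely states it as ``an immediate consequence'' of the Gupta--Peng algorithm~\cite{GuptaP13}, and your proposal correctly supplies the omitted argument, following exactly the lazy-recomputation paradigm of~\cite{GuptaP13} with the key substitution of $\text{{\sc Opt}}_G \geq |E|/(2d)$ (from the degree bound) for their $\text{{\sc Opt}}_G \geq \sqrt{|E|}$ case split. One minor remark: Hopcroft--Karp as written is bipartite-only, so for general graphs you would cite the analogous $O(|E|/\epsilon)$-time static $(1+\epsilon)$-approximation via $O(1/\epsilon)$ rounds of shortest-augmenting-path improvement (as Gupta--Peng themselves do), but this does not affect the substance.
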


\noindent {\bf Problem definition.}

\medskip
\noindent
 The input graph $G = (V, E)$ is changing via a sequence of ``edge-updates'', where the term ``edge-update'' refers to the insertion/deletion of an edge in $E$. The node-set $V$ remains fixed. In the beginning, we have an empty graph, i.e., $E= \emptyset$. We will show how to maintain a $(2+\epsilon)$-approximate maximum matching in $G$ in $O(\text{poly} (\log n, 1/\epsilon))$ amortized update time, where $n = |V|$ is the number of nodes in the input graph.

\subsection{Setting some parameter values}
\label{label:general:sec:parameter}
Throughout this paper, we  will be dealing with the parameters defined below. We let $n = |V|$ denote the number of nodes in the input graph. Basically, we treat $\epsilon$ and $\gamma$ as small positive constants, and $\delta = \Theta(1/\text{poly} \log n)$. It should be clear to the reader that as long as the number of nodes $n$ is sufficiently large, all the equations stated below will be satisfied. We will use these equations throughout the rest of the paper. 
\begin{eqnarray}
\label{label:general:eq:mu}
0 < \epsilon < 1 \\
\label{label:general:eq:alpha} 
\alpha = 1+3\epsilon \\
\label{label:general:eq:beta}
\beta = 1+\epsilon \\
L = \lceil \log_{\beta} (n/\alpha) \rceil \label{label:general:eq:first:first:L} \\
\label{label:general:eq:L} \label{label:general:eq:L:floor}
L \text{ is a positive integer and } L = O(\log n/\epsilon) \text{ and } \lfloor L^4/2 \rfloor \geq L^4/4 \geq (1/\epsilon). \\
\label{label:general:eq:first}\label{label:general:eq:delta:100}
0 < \delta < 1 \\
\label{label:general:eq:d}
d \text{ is a positive integer and } L^4 \leq d \leq n. \\
\label{label:general:eq:L'}
L'  = \left\lceil \log_{\beta} \left(\frac{2 L^4}{\alpha \beta}\right) \right\rceil \\
\label{label:general:eq:Li} 
L_d = \left\lceil \log_2 \left(\frac{d}{L^4}\right) \right \rceil  \\
\label{label:general:eq:Li:lambda}
2^{L_d} = \frac{d}{\lambda_d L^4}  \text{ where } 1/2 \leq \lambda_d \leq 1. \\
L = O(\log n/\epsilon) \text{ and } L_d \leq L \label{label:general:eq:lastt} \\
\label{label:general:eq:imp:1}
\delta < 1/2 
\end{eqnarray}
\begin{eqnarray}
\label{label:general:eq:imp:4} 
L \geq L_{d_i} \text{ for all } i \in [L', L] \\
\label{label:general:eq:imp:67}
L^2 \geq 4/\epsilon \\
\label{label:general:eq:imp:68}
L \geq 3 \alpha \beta \\
L \geq 3 \label{label:general:eq:L:3} \label{label:general:eq:imp:72} \label{label:general:eq:imp:71} \label{label:general:eq:main:9}  \\
L_d \geq \gamma \label{label:general:eq:101} \\
L \geq 4L_d/\gamma \label{label:general:eq:102}\\
L^2 \geq (8 e^{\gamma} L_d)/(\epsilon \gamma \lambda_d) \label{label:general:eq:Li:epoch} 
\end{eqnarray}

We next define the parameters $h_0, h_1, h_2$.  Note that for all $k \in [0,2]$, we have $h_k \leq 1$, and, furthermore,  $h_k$ can be made very close to one if  the parameters $\delta, \epsilon, \gamma$ are sufficiently small and $K$ is sufficiently large. 

\begin{eqnarray}
\label{label:general:eq:h0}
h_0 & = &  e^{-\gamma} \cdot (1+4\epsilon)^{-1} \\
\label{label:general:eq:h0:less}
1/2 \leq h_0 & \leq & (1+4\epsilon)^{-1} \\
\label{label:general:eq:h1}
h_1 & = & h_0 \cdot e^{-\gamma}  \cdot ((\alpha \beta)^{-1}-3\epsilon)  \\
\label{label:general:eq:h2}
h_2 & = & (1-32 \delta L^4/\epsilon^2) \cdot (h_1 - \epsilon - 1/K) \text{ where } K \text{ is a large positive integer.} 
\end{eqnarray}

\subsection{Skeleton of a graph}
\label{label:general:sec:skeleton}

Set the parameter values as in Section~\ref{label:general:sec:parameter}. Throughout the paper, we consider an $(\alpha, \beta)$-partition (see Section~\ref{label:general:sec:prelim}) of the graph $G = (V, E)$.  Below, we introduce the concept of a skeleton of a graph.


\begin{definition}
\label{label:general:def:skeleton}
Consider a  graph $G = (V, E)$ with $n$ nodes and  maximum degree $d$, that is, $\text{deg}(v, E) \leq d$ for all $v \in V$.   A tuple $(B, T, S, X)$ with   $B, T, S \subseteq V$ and $X \subseteq E$, is  a ``skeleton'' of $G$ if and only if:
\begin{enumerate}
\item We have $B \cup T = V$ and $B \cap T = \emptyset$. In other words, the node-set $V$ is partitioned into subsets $B$ and $T$. The nodes in $B$ are called ``big'', whereas the nodes in $T$ are called ``tiny''.
\item For every big node $v \in B$, we have $\text{deg}(v, E) > \epsilon d/L^2$, where $L$ is the same as in the $(\alpha, \beta)$ partition.
\item For every tiny node  $v \in T$, we have $\text{deg}(v, E) < 3\epsilon d/L^2$. 
\item We have $|S| \leq 4\delta \cdot |B|$. The nodes in $S$ are called ``spurious''.
\item Recall equations~\ref{label:general:eq:d},~\ref{label:general:eq:Li} and~\ref{label:general:eq:Li:lambda}. For every big, non-spurious node $v \in B \setminus S$, we have: $$e^{-\gamma} \cdot (\lambda_d L^4/d) \cdot \text{deg}(v, E) \leq \text{deg}(v, X) \leq e^\gamma \cdot (\lambda_d L^4/d) \cdot \text{deg}(v, E).$$
\item For every node $v \in \mathcal{V}$, we have $\text{deg}(v, X) \leq \lambda_d L^4 +2$.
\item For every tiny, non-spurious node $v \in T \setminus S$, we have $\text{deg}(v, X) \leq 3\epsilon \lambda_d L^2 + 2$.
\end{enumerate}
\end{definition}

\noindent Thus, in a skeleton $(B, T, S, X)$ of a graph $G = (V, E)$, each node is classified as being either ``big'' or ``tiny'', but not both. Barring a few exceptions (which are called ``spurious'' nodes), the big nodes have large degrees and the tiny nodes have small degrees. The number of spurious nodes, however, is negligible in comparison with the number of big nodes. In a skeleton, the degrees of the big, non-spurious nodes are scaled down by a factor very close to $(\lambda_d L^4/d)$, where $d$ is the maximum degree of a node in $G$. Further, the maximum degree of any node in a skeleton is $\lambda_d L^4 + 2$. Finally, the maximum degree of any tiny, non-spurious node in the skeleton is $3\epsilon \lambda_d L^2 + 2$. 
The next theorem shows that we can efficiently maintain the edge-set of a skeleton of a graph.

\begin{theorem}
\label{label:general:th:maintain:skeleton}
Given  a graph $G = (V, E)$ with $n$ nodes, we can maintain the edges $e \in H$ of a skeleton $(B, T, S, H)$ of $G$ under a sequence of edge insertions/deletions in $E$. We assume that  $E = \emptyset$ in the beginning. Our algorithm is deterministic and has an amortized update time of $O(L^2 L_d^3/(\epsilon \gamma \delta))$. See equation~\ref{label:general:eq:Li}.
\end{theorem}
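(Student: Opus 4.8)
The plan is to maintain the skeleton structure by a combination of (a) a lazy classification of nodes into big and tiny, rebuilt only after a node's degree has changed substantially, and (b) for each big node, an independently-maintained random-looking but deterministically-chosen subset of its incident edges of the right cardinality. First I would set up the lazy \emph{big/tiny} classification: a node is declared big when $\mathrm{deg}(v,E)$ crosses above, say, $2\epsilon d/L^2$, and declared tiny when it drops below that same threshold; the gap between the two thresholds in the definition ($\epsilon d/L^2$ versus $3\epsilon d/L^2$) gives us a buffer of $\Theta(\epsilon d/L^2)$ edge-updates at $v$ before we are forced to re-examine $v$'s status. Thus the amortized cost of status changes is cheap: each such change triggers $O(\mathrm{deg}(v,X)) = O(\lambda_d L^4)$ work to rebuild $v$'s contribution to $H$, amortized over $\Theta(\epsilon d/L^2) \ge \Theta(\epsilon L^2)$ updates, which is $O(L^4/(\epsilon L^2)) = O(L^2/\epsilon)$ per update from this source.

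Next I would handle the core object: for each big node $v$, we must keep a subset $X_v$ of its incident edges with $\mathrm{deg}(v,X_v)$ within an $e^{\pm\gamma}$ factor of $(\lambda_d L^4/d)\,\mathrm{deg}(v,E)$, while simultaneously keeping the \emph{global} degree bound $\mathrm{deg}(v,X) \le \lambda_d L^4 + 2$ for every node and the tighter bound $3\epsilon\lambda_d L^2+2$ for tiny non-spurious nodes. The natural approach is to think of the target ``sampling probability'' $p = \lambda_d L^4/d$ and maintain, for each big node $v$, a hierarchy of $L_d = \lceil\log_2(d/L^4)\rceil$ levels (a binary refinement), so that an edge incident to $v$ is ``selected by $v$'' if it survives all $L_d$ coin-flips; the point of the $L_d$ factors in the update time is that each edge-update at a big node must be propagated through this $O(L_d)$-depth structure, and re-balancing a level that has drifted too far costs another $O(L_d)$-type factor, giving the $L_d^3$. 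I would make the selection deterministic by using a \emph{deterministic rounding / Eulerian-partition} argument rather than actual randomness: repeatedly split the edge-set incident to each node into two halves of nearly equal size via a maximal-matching-style or Eulerian-tour argument, and keep one half; after $L_d$ splits the surviving set has size $\mathrm{deg}(v,E)/2^{L_d} = (\lambda_d L^4/d)\,\mathrm{deg}(v,E)$ up to additive $O(L_d)$ error, which the $e^{\pm\gamma}$ slack absorbs since $L_d \le \gamma L/4$ by equation~\ref{label:general:eq:102}. An edge goes into $H$ iff it is selected by \emph{both} endpoints that are big (and a tiny endpoint imposes no constraint), so the per-node global bound follows from the per-endpoint bound; the spurious set $S$ collects the $O(\delta|B|)$ nodes whose structure is temporarily out of sync because they were just reclassified or their refinement hierarchy is mid-rebuild, and a global counter / amortization argument ensures $|S| \le 4\delta|B|$ at all times (rebuilding the whole structure whenever $|S|$ threatens to exceed this, charged against the $\Omega(\delta|B|)$ updates that created those spurious nodes).

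The amortized accounting then combines: $O(L^2/\epsilon)$ from reclassifications, $O(L_d^3)$ from propagating an edge-update through the refinement hierarchies of its (at most two) big endpoints and the local rebalancing this triggers, and an additional $1/(\gamma\delta)$-type factor from the periodic global rebuilds needed to keep the $e^{\pm\gamma}$ drift and the $|S|\le 4\delta|B|$ invariant under control — multiplying out to the claimed $O(L^2 L_d^3/(\epsilon\gamma\delta))$. I expect the main obstacle to be the \textbf{derandomization combined with dynamic maintenance}: a static deterministic subsampling by halving is routine, but doing it so that a single edge insertion or deletion only disturbs $O(\mathrm{poly}(L_d))$ of the selected edges (rather than cascading through the whole hierarchy) requires a carefully chosen data structure — essentially maintaining each ``halving'' as a near-perfect partition that tolerates $O(1)$ local repairs, with the repairs batched so that a level is only globally rebalanced after $\Omega(\text{its size})$ updates. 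Getting the invariants of Definition~\ref{label:general:def:skeleton} (especially items 5, 6, 7 simultaneously, with the additive $+2$'s coming from exactly this batching slack) to hold at \emph{every} intermediate step, not just after a rebuild, is where the real work lies; everything else is bookkeeping against the parameter inequalities collected in Section~\ref{label:general:sec:parameter}.
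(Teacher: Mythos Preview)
Your proposal has the right ingredients (deterministic Eulerian halving, a hierarchy of $L_d$ levels, lazy rebuilds, a small ``dirty'' set absorbing the slack) but the architecture is off in a way that breaks condition~(5). You describe \emph{per-node} sampling hierarchies and then say ``an edge goes into $H$ iff it is selected by both endpoints that are big.'' Under that rule, an edge between two big nodes survives only if both endpoints' hierarchies pick it, so its effective survival rate is $p^2$ rather than $p=\lambda_d L^4/d$; consequently $\deg(v,X)$ for a big $v$ is \emph{not} $\approx p\cdot\deg(v,E)$ but depends on what $v$'s neighbours did, and you cannot recover the two-sided bound in item~(5) of Definition~\ref{label:general:def:skeleton}. (If instead you took the union of the per-node selections, item~(6) would fail.)

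The paper avoids this by making the halving \emph{global}: there is a single laminar family $H=H_0\supseteq H_1\supseteq\cdots\supseteq H_{L_d}$ of edge-sets, and each $H_j$ is obtained from $H_{j-1}$ by one call to an Eulerian SPLIT that halves the degree of \emph{every} node simultaneously (within $\pm 1$). The skeleton edge-set is simply $X=H_{L_d}$. Dynamically, edges are never inserted into $H_j$ for $j\ge 1$ except during a rebuild; this monotonicity is precisely what yields the additive ``$+2$'' bounds of items~(6) and~(7) (Lemma~\ref{label:general:lm:correct:1}), and it is a property of the algorithm, not of the invariants. The amortization then comes from \emph{partial} rebuilds: when the number of $l$-dirty nodes at layer $j$ crosses its threshold, only layers $j,\ldots,L_d$ are rebuilt (REBUILD$(j)$), at cost $O(|A|\cdot d\cdot 2^{-j})$, and one shows that $\Omega(\epsilon\gamma\delta\,(L_d L)^{-2}\cdot |A|\cdot d\cdot 2^{-j})$ edge-updates separate consecutive such rebuilds. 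Summing over the $L_d$ layers gives the $L_d^3$ factor. Your accounting sketch multiplies the right parameters together, but the mechanism that actually produces them is this global-laminar-plus-partial-rebuild structure, not per-node hierarchies.
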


\noindent The next theorem will be helpful in deriving the approximation guarantee of our algorithm.

\begin{theorem}
\label{label:general:th:main}
Consider an $(\alpha, \beta)$-partition of the graph $G = (V, E)$. For every level $i \in [L',  L]$, let $(B_i , T_i, S_i, X_i)$ be a skeleton of $G_i = (V, E_i)$. Define the edge-sets $X = \bigcup_{i = L'}^L X_i$, and $Y =  \bigcup_{i=0}^{L'-1} E_i$. Then the following conditions are satisfied.
\begin{enumerate}
\item For every node $v \in V$, we have $\text{deg}(v, X \cup Y) = O(L^5)$.
\item The size of the maximum cardinality matching in  $G(X \cup Y) = (V, X \cup Y)$ is a $(2/h_2) \cdot (1+\epsilon)$-approximation to  the size of the maximum cardinality matching in $G = (V, E)$. See equation~\ref{label:general:eq:h2}.
\end{enumerate}
\end{theorem}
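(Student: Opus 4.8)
\medskip
\noindent\textbf{Proof plan.} \emph{Part~1} is a routine degree count. By Corollary~\ref{label:general:cor:lm:deg:1}, $\text{deg}(v,E_i)\le d_i$ for every $i$, and equation~\ref{label:general:eq:L'} gives $d_{L'-1}=\beta^{L'-1}(\alpha\beta)<2L^4$, so $\text{deg}(v,Y)=\sum_{i=0}^{L'-1}\text{deg}(v,E_i)\le L'\cdot 2L^4=O(L^5)$ since $L'\le L$. By property~6 of Definition~\ref{label:general:def:skeleton}, $\text{deg}(v,X_i)\le\lambda_{d_i}L^4+2\le L^4+2$, so $\text{deg}(v,X)=\sum_{i=L'}^{L}\text{deg}(v,X_i)\le(L-L'+1)(L^4+2)=O(L^5)$; adding the two bounds proves Part~1.

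For \emph{Part~2} the plan is to \emph{transport} the $(\alpha\beta)$-maximal fractional matching $w$ underlying the $(\alpha,\beta)$-partition (Theorem~\ref{label:general:th:runtime}, Invariant~\ref{label:general:inv:partition}) onto $X\cup Y$. Define $w^{\star}$ on $X\cup Y$ by $w^{\star}(e)=w(e)=1/d_i$ for $e\in E_i$ with $i<L'$ (these edges lie in $Y$); $w^{\star}(e)=h_0\,e^{-\gamma}/(\lambda_{d_i}L^4)$ for each level $i\in[L',L]$ and each $e\in X_i$ both of whose endpoints are big and non-spurious in $(B_i,T_i,S_i,X_i)$; and $w^{\star}(e)=0$ on every remaining edge. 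The scaling constant $h_0 e^{-\gamma}\le1$ (equations~\ref{label:general:eq:h0}--\ref{label:general:eq:h0:less}) is what makes $w^{\star}$ feasible: for a big, non-spurious node $v$ of $G_i$, property~5 gives $\text{deg}(v,X_i)\le e^{\gamma}(\lambda_{d_i}L^4/d_i)\text{deg}(v,E_i)$, so its $w^{\star}$-weight inside $X_i$ is at most $h_0\,\text{deg}(v,E_i)/d_i=h_0 W(v,E_i)$, whereas it is $0$ for every other $v$; since $h_0\le1$, summing over $i$ yields $W(v,X\cup Y)\le W(v,Y)+h_0\sum_{i\ge L'}W(v,E_i)\le W(v,E)\le1$.

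The heart of the argument is to show that $w^{\star}$ is still an \emph{almost maximal} fractional matching of $G$, its maximality parameter degrading from $\alpha\beta$ to roughly $1/h_1$. I would fix $(u,v)\in E$ and trace how much of the quantity $W(u,E)+W(v,E)\ge(\alpha\beta)^{-1}$ (Invariant~\ref{label:general:inv:partition}) survives the transport. Weight carried at levels $<L'$ survives verbatim in $Y$. Weight carried at a level $i\ge L'$ at which the relevant endpoint is big and non-spurious survives up to the factor $h_0 e^{-\gamma}$ of property~5. Weight at a level $i\ge L'$ at which an endpoint is \emph{tiny} is at most $3\epsilon/L^2$ of that endpoint's level-$i$ weight (properties~2--3) and is simply dropped, which after summing over the $\Theta(L)$ levels produces the $((\alpha\beta)^{-1}-3\epsilon)$ factor in equation~\ref{label:general:eq:h1}. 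Weight lost through \emph{spurious} nodes is controlled because $|S_i|\le4\delta|B_i|$ (property~4) and every big node has degree exceeding $\epsilon d_i/L^2$, which together cost at most the $(1-32\delta L^4/\epsilon^2)$ factor of equation~\ref{label:general:eq:h2}; the residual $e^{-2\gamma}$, $-\epsilon$ and $-1/K$ terms come from the two separate uses of the $e^{\pm\gamma}$ distortion of property~5 and from rounding the edge counts inside each $X_i$ to integers. This is exactly how $h_2$ is assembled, the parameter inequalities of Section~\ref{label:general:sec:parameter} (e.g.\ $L^2\ge4/\epsilon$, $\lambda_{d_i}\ge1/2$) being what keep each loss small.

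Finally, I would convert the near-maximal $w^{\star}$ into a bound on $\text{{\sc Opt}}_{G(X\cup Y)}$ at the cost of one factor of $2$, using that $V\setminus V_{-1}$ is a vertex cover of $G$ (Invariant~\ref{label:general:inv:partition:edge}). The analysis above shows the support of $w^{\star}$ (which contains all of $Y$) meets every edge of $G$ except for a set of $E$-edges of matching number at most $(1-\tfrac{h_2}{1+\epsilon})\,\text{{\sc Opt}}_G$; if $M^{\star}\subseteq X\cup Y$ is a maximal matching of that support, then the edges of a maximum matching $N$ of $G$ are either among the $\le2|M^{\star}|$ edges meeting $V(M^{\star})$ or among the $\le(1-\tfrac{h_2}{1+\epsilon})\,\text{{\sc Opt}}_G$ edges avoiding it, so $\text{{\sc Opt}}_G\le2|M^{\star}|+(1-\tfrac{h_2}{1+\epsilon})\,\text{{\sc Opt}}_G$ and $\text{{\sc Opt}}_{G(X\cup Y)}\ge|M^{\star}|\ge\tfrac{h_2}{2(1+\epsilon)}\,\text{{\sc Opt}}_G$. \emph{The main obstacle} I expect is the third paragraph: aggregating the tiny- and spurious-node losses simultaneously across all $\Theta(L)$ levels, since a single vertex may be big at some levels and tiny or spurious at others, so one must arrange that every unit of weight a vertex loses at a ``bad'' level is paid for by weight it retains elsewhere (much of it inside $Y$). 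This bookkeeping is what pins down $h_0,h_1,h_2$, and it is also where one must ensure that the final passage to an integral matching costs only the advertised factor $2$, not the factor $3$ of naive fractional-to-integral rounding.
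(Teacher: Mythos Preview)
Your Part~1 is fine and matches the paper's Lemma~\ref{label:general:lm:max:deg:100}. Part~2, however, has two genuine gaps.

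\textbf{The definition of $w^{\star}$ is too restrictive.} You assign positive weight to $e\in X_i$ only when \emph{both} endpoints are big and non-spurious at level~$i$. This makes feasibility easy but destroys the lower bound you need on $W^{\star}(v,X)$ for $v\in V_X$. Consider a big non-spurious node $v$ at level $i$ all of whose $E_i$-neighbours are tiny (this is perfectly consistent with the $(\alpha,\beta)$-partition: take $v$ at level $i$ with $d_i$ neighbours each at a lower level). Then every edge of $X_i$ incident to $v$ has a tiny endpoint, so $W^{\star}(v,X_i)=0$ while $W(v,E_i)$ can be close to~$1$. Your ``tiny-node loss is $O(\epsilon/L^2)$ per level'' accounting only bounds the weight a \emph{tiny} node itself carries, not the weight a big node loses through edges to tiny neighbours. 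The paper's weight function $w''$ (equations~\ref{label:general:n:eq:w'}--\ref{label:general:n:eq:w*}) avoids this by weighting \emph{all} edges of $X_i$ uniformly and zeroing only edges touching spurious nodes; feasibility then needs property~(7) of Definition~\ref{label:general:def:skeleton} to handle tiny nodes (Claim~\ref{label:general:n:cl:n:lm:w*:2}), and near-saturation of big non-spurious nodes follows from property~(5) alone (Claim~\ref{label:general:n:cl:n:lm:w':2}).

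\textbf{The per-edge near-maximality plus maximal-matching endgame does not give factor $2$.} You argue that for $(u,v)\in N$ avoiding $V(M^{\star})$, neither $u$ nor $v$ can have an incident edge in the support $E^{\star}$ of $w^{\star}$. This is false: maximality of $M^{\star}$ only forces, for each $E^{\star}$-edge at $u$, the \emph{other} endpoint into $V(M^{\star})$, not $u$ itself. A single vertex $z$ with $E^{\star}$-edges to $u_1,\dots,u_k\in V_X$ yields $|M^{\star}|=1$ while all of $u_2,\dots,u_k$ stay unmatched and can be endpoints of $N$-edges at levels $\ge L'$ that lie outside $X$. More fundamentally, even if you established that $w^{\star}$ has value $\ge (h_2/2)\,\text{\sc Opt}_G$, you would still only get $\text{\sc Opt}_{G(X\cup Y)}\ge \tfrac{2}{3}\,w^{\star}(E)$ from the fractional/integral gap, i.e.\ a $3$-approximation. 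The paper closes this gap in Lemma~\ref{label:general:n:lm:main} by exploiting that all positive $w''$-weights lie in a constant-ratio window (Observation~\ref{label:general:n:ob:close}); it discretises them, builds a multigraph with uniform edge multiplicities, pads the degrees of $V_Y$-endpoints of $M\cap Y$ up to the maximum, and applies Vizing's theorem for multigraphs to extract an integral matching of size $(1-O(1/c))$ times the fractional value. This uniform-weight/Vizing step is the mechanism that turns ``near-maximal fractional'' into ``factor $2$ integral,'' and your outline has no analogue of it.
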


Our main result is summarized in Theorem~\ref{label:general:cor:th:main}, which follows from Theorems~\ref{label:general:th:maintain:skeleton} and~\ref{label:general:th:main}.

\begin{theorem}
\label{label:general:cor:th:main}
In a dynamic setting, we can deterministically  maintain a $(2+\epsilon)$-approximate maximum matching in a graph $G = (V, E)$ in $O(\text{poly } (\log n, \epsilon^{-1}))$ amortized update time.
\end{theorem}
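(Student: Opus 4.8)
The final statement, Theorem~\ref{label:general:cor:th:main}, is meant to follow by combining the structural result (Theorem~\ref{label:general:th:main}) with the data-structural result (Theorem~\ref{label:general:th:maintain:skeleton}) and the bounded-degree matching algorithm (Theorem~\ref{label:general:th:gupta:peng}). So the proof is really an orchestration argument rather than a new technical lemma. First I would invoke Theorem~\ref{label:general:th:runtime} to maintain an $(\alpha,\beta)$-partition of $G$ with $\alpha=1+3\epsilon$, $\beta=1+\epsilon$ in $O(\log n/\epsilon^2)$ update time; this gives the level sets $\{V_i\}$, the edge sets $\{E_i\}$, and in particular the subgraphs $G_i=(V,E_i)$ whose maximum degree is bounded by $d_i=\beta^i(\alpha\beta)$ via Corollary~\ref{label:general:cor:lm:deg:1}. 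A single edge update in $G$ triggers $O(1)$ changes to each relevant $E_i$ — actually the amortized number of edge moves between levels is $O(\log n/\epsilon)$ by Theorem~\ref{label:general:th:runtime} — so I need to be a little careful and charge the downstream cost against this amortized bound.

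\textbf{Key steps, in order.} (1) Maintain the $(\alpha,\beta)$-partition as above. (2) For each level $i\in[L',L]$, run the skeleton algorithm of Theorem~\ref{label:general:th:maintain:skeleton} on $G_i$ (whose max degree is $d_i$, and note $d_i\ge L^4$ is forced by taking $i\ge L'$, cf.\ equation~\ref{label:general:eq:L'}), maintaining the skeleton edge set $X_i$; the per-update cost at each level is $O(L^2 L_{d_i}^3/(\epsilon\gamma\delta))=O(\operatorname{poly}(\log n,1/\epsilon))$ since $L_{d_i}\le L=O(\log n/\epsilon)$ and $\delta=\Theta(1/\operatorname{poly}\log n)$. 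Summing over the $O(\log n/\epsilon)$ levels and multiplying by the $O(\log n/\epsilon)$ amortized edge moves keeps the total at $O(\operatorname{poly}(\log n,1/\epsilon))$. (3) Also maintain $Y=\bigcup_{i=0}^{L'-1}E_i$ explicitly; since $L'=O(\log(L^4))=O(\log\log n/\epsilon)$ is tiny and each $G_i$ there has max degree $d_i<2L^4/(\alpha\beta)\cdot\beta=O(L^4)$, the graph $(V,Y)$ has max degree $O(L^5)$, matching part~1 of Theorem~\ref{label:general:th:main} together with the $X_i$'s. (4) Let $Z=X\cup Y$ where $X=\bigcup_{i=L'}^L X_i$. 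By part~1 of Theorem~\ref{label:general:th:main}, $(V,Z)$ has maximum degree $O(L^5)=O(\operatorname{poly}\log n/\epsilon^5)$, and $Z$ changes by $O(\operatorname{poly}(\log n,1/\epsilon))$ edges per update of $G$ (each $X_i$ changes by $O(\operatorname{poly}\log n/\epsilon)$ amortized by the internal guarantee of Theorem~\ref{label:general:th:maintain:skeleton}, and $Y$ by $O(1)$). (5) Feed $(V,Z)$ into the Gupta–Peng data structure of Theorem~\ref{label:general:th:gupta:peng} with degree bound $d=O(L^5)$ and parameter $\epsilon$, maintaining a $(1+\epsilon)$-approximate maximum matching of $(V,Z)$ in $O(d/\epsilon^2)=O(\operatorname{poly}(\log n,1/\epsilon))$ time per update of $Z$, hence $O(\operatorname{poly}(\log n,1/\epsilon))$ amortized per update of $G$. (6) For the approximation ratio: by part~2 of Theorem~\ref{label:general:th:main}, $\text{{\sc Opt}}_{G(Z)}\ge \frac{h_2}{2(1+\epsilon)}\cdot\text{{\sc Opt}}_G$, and the matching we output has size $\ge\frac{1}{1+\epsilon}\text{{\sc Opt}}_{G(Z)}\ge\frac{h_2}{2(1+\epsilon)^2}\text{{\sc Opt}}_G$; since $h_2$ can be pushed arbitrarily close to $1$ by taking $\delta,\epsilon,\gamma$ small and $K$ large (the remark following equation~\ref{label:general:eq:h2}), the approximation factor $\frac{2(1+\epsilon)^2}{h_2}$ is at most $2+\epsilon'$ for any target $\epsilon'>0$ after renaming constants.

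\textbf{What is routine and what is the obstacle.} The time accounting in steps (1)–(5) is bookkeeping: every ingredient is $\operatorname{poly}(\log n,1/\epsilon)$ and there are only $\operatorname{poly}(\log n,1/\epsilon)$ of them per update, so the product is still $\operatorname{poly}(\log n,1/\epsilon)$; I would state this cleanly but not belabor it. The one genuinely delicate point is the \emph{composition of amortization}: Gupta–Peng's $O(d/\epsilon^2)$ bound in Theorem~\ref{label:general:th:gupta:peng} is itself an amortized-per-update-of-$Z$ bound, and each update of $G$ induces an amortized (not worst-case) number of updates of $Z$. Composing two amortized guarantees is valid here because both are ``total work over $t$ operations is $O(t\cdot\text{bound})$'' statements and they chain linearly, but I would spell out that the potential functions do not interact adversarially — i.e.\ that the sequence of $Z$-updates generated has length $O(t\cdot\operatorname{poly}(\log n,1/\epsilon))$ over any $t$-update sequence for $G$, so total Gupta–Peng work is $O(t\cdot\operatorname{poly}(\log n,1/\epsilon))$. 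The second subtlety is simply verifying that $L'$ was chosen (equation~\ref{label:general:eq:L'}) precisely so that levels $i\ge L'$ have $d_i\ge L^4$, which is exactly the hypothesis $L^4\le d\le n$ needed to apply Theorem~\ref{label:general:th:maintain:skeleton} and the skeleton definition; this is a one-line check from equation~\ref{label:general:eq:L'}. Beyond that, the theorem is a direct corollary and I expect the write-up to be short.
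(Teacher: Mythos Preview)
Your proposal is correct and follows essentially the same route as the paper's proof: maintain the $(\alpha,\beta)$-partition, run the skeleton algorithm at each level $i\in[L',L]$, apply Gupta--Peng to the bounded-degree graph $(V,X\cup Y)$, and read off the approximation factor from $h_2$. The paper's write-up is terser---it simply fixes $\gamma=\epsilon$ and $\delta=\epsilon^3/L^4$ and asserts the $\operatorname{poly}(\log n,1/\epsilon)$ bound without discussion---whereas you are (rightly) more explicit about verifying $d_i\ge L^4$ for $i\ge L'$ and about the legitimacy of chaining amortized bounds, but the argument is the same.
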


\begin{proof}
We set the parameters $\gamma$ and $\delta$ as follows.
$$\gamma = \epsilon, \delta = \epsilon^3/L^4.$$
The input graph $G = (V, E)$ changes via a sequence of edge-updates in $E$. In this setting, we maintain the edge-sets $\{E_i\}$ of an $(\alpha, \beta)$-partition of $G$ as per Theorem~\ref{label:general:th:runtime}. This requires $O(\log n/\epsilon^2)$ update time. Next, for each level $i \in [L', L]$, we maintain the edge-set $X_i \subseteq E_i$ as per Theorem~\ref{label:general:th:maintain:skeleton}. By equation~\ref{label:general:eq:lastt}, this requires   $O(\text{poly } (\log n, (\epsilon \gamma \delta )^{-1})) = O(\text{poly } (\log n, \epsilon^{-1}))$ update time. Finally, by Theorem~\ref{label:general:th:main},  the maximum degree of a node in the subgraph $G(X \cup Y)$ is $O(L^4)$. Hence, we maintain a $(1+\epsilon)$-approximate maximum matching $M' \subseteq (X \cup Y)$ in the subgraph $G(X \cup Y)$ as per Theorem~\ref{label:general:th:gupta:peng}. This also requires $O(\text{poly} (\log n, \epsilon^{-1}))$ time.  Accordingly, the total update time of the algorithm is  $O(\text{poly} (\log n, \epsilon^{-1}))$. 

By Theorem~\ref{label:general:th:main}, the maximum matching in $G(X \cup Y)$ is a  $(2/h_2) \cdot (1+\epsilon)$-approximation to the maximum matching in $G$. Recall that our algorithm maintains $M'$, which is a $(1+\epsilon)$-approximation to the maximum matching in $G(X \cup Y)$. Thus, $M'$ is a $(2/h_2) \cdot (1+\epsilon)^2$-approximation to the maximum matching in $G$. Since $\delta = \epsilon^3 / L^4$ and $\gamma = \epsilon$, equations~\ref{label:general:eq:h0},~\ref{label:general:eq:h1} and~\ref{label:general:eq:h2} imply that $(1/h_2) \cdot (1+\epsilon)^2 = 1+ O(\epsilon)$ for sufficiently small $\epsilon$.

To summarize, we maintain a matching $M' \subseteq X \cup Y \subseteq E$. The size of this matching $M'$ is a $(2+O(\epsilon))$-approximation to the size of the maximum cardinality matching in $G$. Further, our algorithm for maintaining $M'$ requires $O(\text{poly } (\log n, \epsilon^{-1}))$ update time. This concludes the proof of Theorem~\ref{label:general:cor:th:main}.
\end{proof}

In Section~\ref{label:general:sec:th:main} we prove Theorem~\ref{label:general:sec:th:main}, and in Section~\ref{label:general:sec:algo} we prove Theorem~\ref{label:general:th:maintain:skeleton}.

\section{Deriving the approximation guarantee: Proof of Theorem~\ref{label:general:th:main}}
\label{label:general:sec:th:main}

\renewcommand{\L}{\mathcal{L}}
\newcommand{\I}{\mathcal{I}}

The first part of the theorem, which upper bounds the maximum degree a node can have in $X \cup Y$, is relatively easy to prove.  The basic idea is that for any given level $i \in [0, L]$, the degree of a node is $O(L^4)$ among the level-$i$ edges in $X \cup Y$. This is true for the levels $i \geq L'$ because of the condition (6) in Definition~\ref{label:general:def:skeleton}. On the other hand, for $i < L'$, every edge $e \in E_i$ has weight $w(e) = \Omega(1/L^4)$. Thus, the degree of a node cannot be too high among those edges, for otherwise the total weight received by the node would exceed one. This is summarized in the lemma below.

\begin{lemma}
\label{label:general:lm:max:deg:100}
We have $\text{deg}(v, X \cup Y) = O(L^5)$ for all nodes $v \in V$.
\end{lemma}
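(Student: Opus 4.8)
The plan is to bound $\text{deg}(v, X \cup Y)$ by splitting the edges incident on $v$ according to their level in the $(\alpha,\beta)$-partition and summing the per-level contributions. Fix a node $v \in V$. Since $X = \bigcup_{i=L'}^{L} X_i$ and $Y = \bigcup_{i=0}^{L'-1} E_i$, the edges of $X \cup Y$ incident on $v$ decompose as $\text{deg}(v, X \cup Y) = \sum_{i=L'}^{L} \text{deg}(v, X_i) + \sum_{i=0}^{L'-1} \text{deg}(v, E_i)$. I would bound each of the two sums separately.

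For the high levels $i \in [L', L]$: here I invoke condition~(6) of Definition~\ref{label:general:def:skeleton} applied to the skeleton $(B_i, T_i, S_i, X_i)$ of $G_i$. That condition says every node $v$ satisfies $\text{deg}(v, X_i) \leq \lambda_d L^4 + 2$ (with $d = d_i$ here, and $1/2 \le \lambda_{d_i} \le 1$ by equation~\ref{label:general:eq:Li:lambda}), so $\text{deg}(v, X_i) \leq L^4 + 2$. Summing over the at most $L+1$ levels in $[L', L]$ gives $\sum_{i=L'}^{L} \text{deg}(v, X_i) \leq (L+1)(L^4+2) = O(L^5)$.

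For the low levels $i \in [0, L'-1]$: here I use the weight argument sketched before the lemma. Every edge $e \in E_i$ carries weight $w(e) = 1/d_i = 1/(\beta^i \alpha\beta)$, and since $i < L' = \lceil \log_\beta(2L^4/(\alpha\beta)) \rceil$ we have $d_i = \beta^i \alpha\beta < 2L^4$, hence $w(e) = 1/d_i > 1/(2L^4)$. By Invariant~\ref{label:general:inv:partition}, $w$ is a fractional matching, so $W(v, E) \le 1$; restricting to level-$i$ edges, $\text{deg}(v, E_i) \cdot (1/d_i) = W(v, E_i) \le 1$, giving $\text{deg}(v, E_i) \le d_i < 2L^4$. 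Summing over the $L'$ levels and using $L' = O(\log n / \epsilon) = O(L)$ (from equations~\ref{label:general:eq:L'} and~\ref{label:general:eq:L}, since $\log_\beta(2L^4/(\alpha\beta)) = O(\log L / \epsilon) = O(\log\log n / \epsilon)$, which is certainly $O(L)$) yields $\sum_{i=0}^{L'-1} \text{deg}(v, E_i) \le L' \cdot 2L^4 = O(L^5)$. Adding the two contributions gives $\text{deg}(v, X \cup Y) = O(L^5)$, as claimed.

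There is no real obstacle here — the argument is entirely bookkeeping. The only point requiring a moment's care is confirming that the number of relevant levels ($L - L' + 1$ for the high range and $L'$ for the low range) is $O(L)$, which is immediate from the parameter settings in Section~\ref{label:general:sec:parameter}; in fact even the crude bound "at most $L+2$ levels total, each contributing $O(L^4)$" suffices to get $O(L^5)$.
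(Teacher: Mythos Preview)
Your proposal is correct and follows essentially the same approach as the paper's proof: bound $\text{deg}(v,X_i)$ by $O(L^4)$ via condition~(6) of Definition~\ref{label:general:def:skeleton} for levels $i\in[L',L]$, bound $\text{deg}(v,E_i)\le d_i=O(L^4)$ for levels $i<L'$ (the paper cites Corollary~\ref{label:general:cor:lm:deg:1} rather than rederiving it from Invariant~\ref{label:general:inv:partition}, but this is the same fact), and sum over the $O(L)$ levels.
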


\begin{proof}
Consider any node $v \in V$. By Definition~\ref{label:general:def:skeleton}, we have $\text{deg}(v, X_i) = O(L^4)$ for all levels $i \in [L', L]$. Next, note that at every level $i < L'$, we have $d_i \leq d_{L'} = \beta^{L'} \cdot (\alpha \beta) = O(L^4)$. The last equality follows from equation~\ref{label:general:eq:L'}. Hence, from Corollary~\ref{label:general:cor:lm:deg:1}, we infer that $\text{deg}(v, E_i) \leq d_i = O(L^4)$ for all levels $i <  L'$ as well. Since $Y = \bigcup_{i=0}^{L'-1} E_i$, summing over all the levels $i \in [0,L]$, we infer that $\text{deg}(v, X \cup Y) = O(L^5)$.
\end{proof}

Accordingly, we devote the rest of this section towards proving the second part of Theorem~\ref{label:general:th:main}. But, before proceeding any further, we need to introduce the following notations.
\begin{itemize}
\item Let $V_X = \{ v \in V : \ell(v) \geq L'\}$ denote the set of nodes lying at levels $L'$ or above in the $(\alpha, \beta)$-partition. Similarly, we let $V_Y = \{ v \in V : \ell(v) < L' \}$ denote the set of nodes lying at levels $(L'-1)$ or below. Clearly, the node-set $V$ is partitioned by these two subsets $V_X \subseteq V$ and $V_Y = V \setminus V_X$. Furthermore, note that the level of any edge $(u,v)$ in the $(\alpha, \beta)$-partition is defined as the maximum level among its endpoints. Hence, any edge $e \in Y = \bigcup_{i = 0}^{L'-1} E_i$ has both of its endpoints in $V_Y$. On the other hand, any edge $e \in E_i$ with $i \geq L'$ has at least one of its endpoints in $V_X$. As a corollary,  every edge $e \in X$ has at least one endpoint in $V_X$. We will use these observations throughout the rest of this section.
\end{itemize}
We will now try to explain why we  consider these node-sets $V_X$ and $V_Y$. For this we need to look back at the reason behind  constructing the sets $\{X_i\}$ in the first place.  Consider any node $v \in V_X$. Such a node belongs to a level $\ell(v) \geq L'$ in the $(\alpha, \beta)$-partition. In an ideal scenario, for each $i \geq L'$, we want  the degree of  $v$ to drop by a factor of $(L^4/d_i)$ as we switch from the edge-set $E_i$ to the edge-set $X_i \subseteq E_i$ (see the condition (5) in Definition~\ref{label:general:def:skeleton}). If we  succeed in this goal, then we can  increase the weight of every edge in $X_i$ by the inverse factor $(d_i/L^4)$. As a consequence, the total weight received by the node $v$ from the level-$i$ edges will remain the same as before. To continue with the discussion, recall that every edge $e \in E_i$ had a weight $w(e) = 1/d_i$ in the $(\alpha, \beta)$-partition. Let $w'(e) = w(e) \cdot (d_i/L^4) = 1/L^4$ denote the scaled up weight of the surviving edges $e \in X_i \subseteq E_i$. Thus, we expect to see that $W(v, E_i) = W'(v, X_i)$ at every level $i \in [L', L]$. Since $v \in V_X$, every edge incident upon $v$ belongs to some $E_i$ with $i \geq L'$. Accordingly, summing over all the levels $i \in [L', L]$, we expect to get the following guarantee.
$$W(v, E) = \sum_{i=L'}^L W(v, E_i) = \sum_{i = L'}^L W'(v, X_i) = W'(v, X) \geq (\alpha \beta)^{-1}.$$
The last inequality is supposed to hold due to Invariant~\ref{label:general:inv:partition} (recall that $\ell(v) \geq L'$). Thus, in an ideal scenario, if we assign a weight $w'(e) = L^{-4}$ to every edge $e \in X$, then we should get a fractional matching where every node in $V_X$ gets a weight very close to one. In other words, such a matching will be ``nearly maximal'' in the graph induced by the edges $e \in E$ with $\ell(e) \geq L'$ (this holds since every such edge will have at least one endpoint in $V_X$). Thus, there exists a near-maximal fractional matching $w'$  in the subgraph $(V, X)$ that is also near-maximal in $(V, \bigcup_{i\geq L'} E_i)$. Since maximal matchings are $2$-approximations to maximum matchings, intuitively, as we switch from the edge-set $\bigcup_{i \geq L'} E_i$ to the edge-set $X$, we loose roughly a factor of $2$ in the approximation ratio (albeit in the fractional world). All the remaining edges $\bigcup_{i < L'} E_i$ go to the set $Y$, and we ought to loose nothing there in terms of the approximation guarantee. Thus, intuitively, there should be a matching defined on the edge-set $X \cup Y$ that is a good approximation to the maximum matching in the original graph $G = (V, E)$.

\begin{itemize}
\item It is now high  time to emphasize  that none of the above observations are exactly true to begin with. For example, as the Definition~\ref{label:general:def:skeleton} states, there will be spurious nodes whose degrees might not drop by much as we go from from the edge-set $E_i$ to the edge-set $X_i$, $i \in [L', L]$.  There might be tiny nodes whose degrees might drop by more than we bargained for during the same transition. And even  the degrees of the big, non-spurious nodes might not drop exactly by the factor $(L^4/d_i)$. Nevertheless, we will try to carry out the plan outlined above, hoping to avoid all the roadblocks that we might come across while writing the formal proof. 
\end{itemize}

\noindent Along these lines, we now formally define the weight function $w' : X \rightarrow [0, 1]$ in equation~\ref{label:general:n:eq:w'}.

\begin{equation}
\label{label:general:n:eq:w'}
w'(u, v)  = h_0 \cdot (\lambda_{d_i} L^4)^{-1} \text{ for every edge } (u,v) \in X_i, i \in [L', L].
\end{equation}

\noindent Note that $\lambda_{d_i}$ lies between $1/2$ and $1$, and $h_0$ is very close to one (see equations~\ref{label:general:eq:Li:lambda},~\ref{label:general:eq:h0}). Thus, although $w'$ might assign different weights to different edges, none of these weights are too far away from $1/L^4$ (this is consistent with our preliminary discussion above). Not surprisingly, the next lemma shows that under $w'$, every node $v \in V_X$ gets a weight that is lower bounded by a value very close to one. The proof of Lemma~\ref{label:general:n:lm:w'} appears in Section~\ref{label:general:sec:n:lm:w'}.

\begin{lemma}
\label{label:general:n:lm:w'}
We have $W'(v, X) \geq h_1$ for all nodes $v \in V_X$. In other words, for every node $v$ at level $\ell(v) \in [L', L]$ in the $(\alpha, \beta)$-partition, the weight received by $v$ under $w'$ is very close to one.
\end{lemma}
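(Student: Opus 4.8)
The plan is to bound $W'(v,X)$ from below level by level, relating the sparsified weight at each level to the weight $v$ receives in the underlying $(\alpha,\beta)$-partition, and then to invoke Invariant~\ref{label:general:inv:partition}. Fix $v\in V_X$ and put $j=\ell(v)\in[L',L]$; note $v\notin V_{-1}$ since $j\ge 0$. As the level of an edge is the maximum of the levels of its endpoints, every edge incident on $v$ lies in some $E_i$ with $i\ge j$, so $\deg(v,X_i)\le\deg(v,E_i)=0$ for $i<j$; since $w'$ assigns weight $h_0/(\lambda_{d_i}L^4)$ to each edge of $X_i$,
\[
W'(v,X)=\sum_{i=j}^{L}W'(v,X_i)=h_0\sum_{i=j}^{L}\frac{\deg(v,X_i)}{\lambda_{d_i}L^4}.
\]
Recall also that every edge of $E_i$ has weight $1/d_i$ in the partition, so $W(v,E_i)=\deg(v,E_i)/d_i$ and $W(v,E)=\sum_{i=j}^{L}W(v,E_i)$.

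First I would classify each level $i\in[j,L]$ by the status of $v$ in the skeleton $(B_i,T_i,S_i,X_i)$ of $G_i$ (whose maximum degree is $d_i$ by Corollary~\ref{label:general:cor:lm:deg:1}). If $v\in B_i\setminus S_i$, condition~(5) of Definition~\ref{label:general:def:skeleton} gives $\deg(v,X_i)\ge e^{-\gamma}(\lambda_{d_i}L^4/d_i)\deg(v,E_i)$, hence
\[
W'(v,X_i)=\frac{h_0}{\lambda_{d_i}L^4}\deg(v,X_i)\ge\frac{h_0e^{-\gamma}}{d_i}\deg(v,E_i)=h_0e^{-\gamma}\,W(v,E_i).
\]
If $v\in T_i$, condition~(3) gives $W(v,E_i)=\deg(v,E_i)/d_i<3\epsilon/L^2$, so $v$ receives only a negligible amount of weight at level $i$. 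If $v\in B_i\cap S_i$, I would merely use $W'(v,X_i)\ge 0$. Writing $\mathcal{G}=\{i\in[j,L]:v\in B_i\setminus S_i\}$, these estimates combine to $W'(v,X)\ge h_0e^{-\gamma}\sum_{i\in\mathcal{G}}W(v,E_i)$.

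Then I would write $\sum_{i\in\mathcal{G}}W(v,E_i)=W(v,E)-\sum_{i\in\mathcal{T}}W(v,E_i)-\sum_{i\in\mathcal{S}'}W(v,E_i)$, where $\mathcal{T}=\{i\in[j,L]:v\in T_i\}$ and $\mathcal{S}'=\{i\in[j,L]:v\in B_i\cap S_i\}$; together with $\mathcal{G}$ these three sets partition $[j,L]$. By Invariant~\ref{label:general:inv:partition}, $W(v,E)\ge(\alpha\beta)^{-1}$; since $|\mathcal{T}|\le L+1$ and each of its terms is $<3\epsilon/L^2$, the tiny levels contribute less than $3\epsilon(L+1)/L^2\le\epsilon$ because $L$ is large. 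Granting that the big-spurious levels contribute at most $2\epsilon$ in total, we get $\sum_{i\in\mathcal{G}}W(v,E_i)\ge(\alpha\beta)^{-1}-3\epsilon$, and therefore
\[
W'(v,X)\ge h_0e^{-\gamma}\bigl((\alpha\beta)^{-1}-3\epsilon\bigr)=h_1,
\]
which is exactly equation~\ref{label:general:eq:h1}.

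The real work is the step I just granted: bounding the weight $v$ accrues from the levels $\mathcal{S}'$ at which it is big but spurious. Condition~(5) of Definition~\ref{label:general:def:skeleton} provides no lower bound on $\deg(v,X_i)$ there, so a priori each such level costs $W(v,E_i)$ in the estimate above, and there is no per-level saving. This is exactly where the tiny choice of $\delta$ enters through condition~(4), $|S_i|\le 4\delta|B_i|$, which makes spuriousness globally scarce; the argument must leverage this — together with $W(v,E)\le 1$ and the fact that $v$ is big (so $\deg(v,E_i)>\epsilon d_i/L^2$) at every level of $\mathcal{S}'$, and, failing a clean per-node bound, a global charging of the spurious levels against the value of the fractional matching — to conclude that the aggregate spurious weight at $v$ is $O(\epsilon)$. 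I expect this accounting, rather than the routine level-by-level conversion, to be the crux of the proof.
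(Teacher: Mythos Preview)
Your decomposition matches the paper's, except that you split the big levels into non-spurious ($\mathcal{G}$) and spurious ($\mathcal{S}'$), while the paper does not. The paper's argument applies condition~(5) of Definition~\ref{label:general:def:skeleton} at \emph{every} level $i\in\mathcal{L}_v(B)$ where $v$ is big, obtaining $W'(v,X_i)\ge h_0e^{-\gamma}W(v,E_i)$ there; combined with the tiny-level bound $\sum_{i\in\mathcal{L}_v(T)}W(v,E_i)\le 3\epsilon$ and Invariant~\ref{label:general:inv:partition}, this yields $W'(v,X)\ge h_0e^{-\gamma}\bigl((\alpha\beta)^{-1}-3\epsilon\bigr)=h_1$. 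The entire $3\epsilon$ slack is spent on tiny levels; nothing is set aside for spurious big levels, and the issue you raise is simply not addressed in the paper's proof.

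Your concern is legitimate---condition~(5) is stated only for $v\in B_i\setminus S_i$---but the per-node fix you propose cannot work. Condition~(4) bounds $|S_i|$ against $|B_i|$; it says nothing about how many levels a \emph{fixed} node can be spurious at. Definition~\ref{label:general:def:skeleton} does not preclude $v\in S_i$ for every $i\in[L',L]$ with $\deg(v,X_i)=0$ throughout, in which case $W'(v,X)=0$. So the pointwise bound $\sum_{i\in\mathcal{S}'}W(v,E_i)\le 2\epsilon$ you grant is not derivable from the definition, no matter how small $\delta$ is; the scarcity of spurious nodes supports only an aggregate statement over $V_X$ (this is exactly what Lemmas~\ref{label:general:n:lm:spurious:count} and~\ref{label:general:n:lm:spurious:good} do downstream), not the per-node inequality the lemma asserts. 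In short, the paper's proof glosses over precisely the point you isolate, and there is no way to close it at the level of a single node from Definition~\ref{label:general:def:skeleton} alone.
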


Unfortunately, however, the weights $\{w'(e)\}, e\in X,$ do not give us a fractional matching. For example, consider any node $v \in V_X$ that is spurious at every level $i \geq L'$, i.e., $v \in S_i$ for all $i \in [L', L]$. At each level $i \geq L'$, the node $v$ can have $\text{deg}(v, X_i) = \Theta(L^4)$ (see the condition (6) in Definition~\ref{label:general:def:skeleton}), and its weight from this level  $W'(v, X_i) = \text{deg}(v, X_i) \cdot  (h_0 \lambda_{d_i}^{-1}) \cdot L^{-4}$ can be as large as $\Theta(1)$. So the total weight it receives from all the levels can be as large as $(L-L') \cdot \Theta(1) = \Theta(L)$. The next lemma shows that this actually is the worst possible scenario. The proof of Lemma~\ref{label:general:n:lm:w':bound} appears in Section~\ref{label:general:sec:n:lm:w':bound}.

\begin{lemma}
\label{label:general:n:lm:w':bound}
We have $W'(v, X) \leq 2L$ for every node $v \in V$.
\end{lemma}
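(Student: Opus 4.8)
The plan is to fix an arbitrary node $v \in V$ and bound $W'(v, X) = \sum_{i=L'}^{L} W'(v, X_i)$ by controlling each term $W'(v, X_i)$ separately and then summing. Since every edge of $X$ incident on $v$ lies in some $X_i$ with $i \in [L', L]$, and $w'(u,v) = h_0 (\lambda_{d_i} L^4)^{-1}$ on $X_i$, we have $W'(v, X_i) = \text{deg}(v, X_i) \cdot h_0 (\lambda_{d_i} L^4)^{-1}$. So the task reduces to bounding $\text{deg}(v, X_i)$ at each level. The key observation is condition (6) of Definition~\ref{label:general:def:skeleton}, which gives $\text{deg}(v, X_i) \leq \lambda_{d_i} L^4 + 2$ \emph{unconditionally} — regardless of whether $v$ is big, tiny, spurious, or at this particular level at all (if $v$ has no incident level-$i$ edges then $\text{deg}(v,X_i)=0$ and the bound is trivial). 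Plugging this in, $W'(v, X_i) \leq (\lambda_{d_i} L^4 + 2) \cdot h_0 (\lambda_{d_i} L^4)^{-1} = h_0 (1 + 2/(\lambda_{d_i} L^4)) \leq h_0(1 + 4/L^4)$, using $\lambda_{d_i} \geq 1/2$ from equation~\ref{label:general:eq:Li:lambda}.

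Now I sum over the at most $L - L' + 1 \leq L$ relevant levels. This yields $W'(v, X) \leq L \cdot h_0 (1 + 4/L^4)$. Since $h_0 \leq 1$ (equation~\ref{label:general:eq:h0:less}, or the remark before equation~\ref{label:general:eq:h0}) and $4/L^4 \leq 1$ for $L \geq 2$ (which holds by equation~\ref{label:general:eq:L:3}), we get $W'(v, X) \leq L \cdot (1 + 4/L^4) \leq 2L$. That completes the argument.

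I do not anticipate a genuine obstacle here: the lemma is essentially a one-line consequence of the uniform degree cap in Definition~\ref{label:general:def:skeleton}(6) together with the lower bound $\lambda_{d_i} \geq 1/2$. The only thing to be slightly careful about is that the bound on $\text{deg}(v, X_i)$ must apply to \emph{every} node, not just to big non-spurious ones — and indeed condition (6) is stated for all $v \in \mathcal{V}$, so this is fine. If one wanted the cleaner constant, one could note $W'(v,X_i) \le h_0 \cdot 2$ crudely (since $\lambda_{d_i}L^4 \ge L^4/2 \ge 1/\epsilon \gg 2$ by equation~\ref{label:general:eq:L}, the additive $2$ is negligible and each term is at most, say, $1 + \epsilon$), but summing the crude bound $W'(v,X_i) \le 2$ over $\le L$ levels already gives $W'(v,X) \le 2L$, so the stated constant is reached with room to spare.
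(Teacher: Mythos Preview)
Your proof is correct and follows essentially the same approach as the paper: bound $\text{deg}(v,X_i)$ via condition~(6) of Definition~\ref{label:general:def:skeleton}, multiply by the edge weight $h_0(\lambda_{d_i}L^4)^{-1}$ to get $W'(v,X_i)\le 2$, and sum over at most $L$ levels. The only cosmetic difference is that the paper absorbs the additive $2$ via $\lambda_{d_i}L^4+2\le 2\lambda_{d_i}L^4$ (using $\lambda_{d_i}\ge 1/2$ and $L^4\ge 4$) rather than your $1+4/L^4$ factor, but the argument is the same.
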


Since the spurious nodes are the ones that are preventing $w'$ from being a fractional matching, we  assign zero weight to any edge that is incident upon a spurious node, and leave the weights of the remaining edges as in $w'$. Accordingly, we define the weight function $w'' : X \rightarrow [0,1]$ in equation~\ref{label:general:n:eq:w*}.

\begin{equation}
\label{label:general:n:eq:w*}
w''(u,v)   =   \begin{cases} 
w'(u,v) & \text{ if }  \{u, v\} \cap S = \emptyset, \text{ where } S = \bigcup_{i \geq L'} S_i; \\
0 & \text{ otherwise.}
\end{cases} 
\end{equation}

We highlight an important fact here. Consider an edge $(u, v) \in X_i, i \geq L'$. It might very well be the case that one of its endpoints, say $v$, is non-spurious at level $i$ but spurious at some other level $i' \in \{L', \ldots, L\} \setminus \{i\}$ (i.e.,  $v \notin S_i$ and $v \in S_{i'}$). Even under this situation, $w''$ assigns  zero weight to the edge $(u,v)$. Although this seems to be a stringent condition, it turns out not to affect our proof in any  way.

As expected, the new weights $\{w''(e)\}, e \in X,$ actually defines a fractional matching. This is shown in Lemma~\ref{label:general:n:lm:w:star}, and the proof of this lemma appears in Section~\ref{label:general:n:lm:w:star}.
\begin{lemma}
\label{label:general:n:lm:w:star}
We have $W''(v, X) \leq 1$ for all nodes $v \in V$. In other words, the weight function $w''$ defines a fractional matching in the subgraph induced by the edges in $X$. 
\end{lemma}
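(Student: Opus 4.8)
The plan is to case on whether $v$ is spurious at some level $i \geq L'$. If $v \in S = \bigcup_{i \geq L'} S_i$, then by the definition of $w''$ in equation~\ref{label:general:n:eq:w*} every edge incident on $v$ receives weight $0$, so $W''(v, X) = 0 \leq 1$ and there is nothing to prove. So assume from now on that $v \notin S$, i.e. $v \notin S_i$ for every level $i \in [L', L]$. Since $w''(e) \in \{w'(e), 0\}$ and $w'(e) \geq 0$, we have $W''(v, X) \leq W'(v, X)$, so it suffices to establish the stronger bound $W'(v, X) \leq 1$ under the assumption $v \notin S$.

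To bound $W'(v, X) = \sum_{i=L'}^L W'(v, X_i)$, split the levels $i \in [L', L]$ into the \emph{big levels} (those with $v \in B_i$) and the \emph{tiny levels} (those with $v \in T_i$); since $v\notin S_i$, at a big level $v \in B_i \setminus S_i$ and at a tiny level $v \in T_i \setminus S_i$, so conditions~(5) and~(7) of Definition~\ref{label:general:def:skeleton} apply respectively. For a big level $i$, condition~(5), the bound $\text{deg}(v, E_i) \leq d_i$ from Corollary~\ref{label:general:cor:lm:deg:1}, the fact that every edge of $E_i$ has weight $1/d_i$, and the value of $w'$ from equation~\ref{label:general:n:eq:w'} give
$$W'(v, X_i) = \text{deg}(v, X_i)\cdot\frac{h_0}{\lambda_{d_i} L^4} \leq e^\gamma\cdot\frac{\lambda_{d_i} L^4}{d_i}\cdot\text{deg}(v, E_i)\cdot\frac{h_0}{\lambda_{d_i} L^4} = e^\gamma h_0 \cdot W(v, E_i).$$
Summing over the big levels and using $\sum_i W(v, E_i) \leq W(v, E) \leq 1$ (Invariants~\ref{label:general:inv:partition:edge} and~\ref{label:general:inv:partition}), the big levels contribute at most $e^\gamma h_0 = (1+4\epsilon)^{-1}$ by equation~\ref{label:general:eq:h0}. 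For a tiny level $i$, condition~(7) together with $\lambda_{d_i} \geq 1/2$ yields $W'(v, X_i) \leq (3\epsilon\lambda_{d_i} L^2 + 2)\cdot h_0(\lambda_{d_i} L^4)^{-1} \leq h_0(3\epsilon/L^2 + 4/L^4)$; as there are at most $L$ tiny levels, they contribute at most $h_0(3\epsilon/L + 4/L^3)$ in total.

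Combining the two contributions and using $h_0 \leq (1+4\epsilon)^{-1}$ from equation~\ref{label:general:eq:h0:less},
$$W'(v, X) \leq (1+4\epsilon)^{-1} + (1+4\epsilon)^{-1}\left(\frac{3\epsilon}{L}+\frac{4}{L^3}\right) = (1+4\epsilon)^{-1}\left(1+\frac{3\epsilon}{L}+\frac{4}{L^3}\right),$$
and this is at most $1$ as soon as $3\epsilon/L + 4/L^3 \leq 4\epsilon$, which follows from $L \geq 3$ (equation~\ref{label:general:eq:L:3}) and $L^2 \geq 4/\epsilon$ (equation~\ref{label:general:eq:imp:67}): the first term is at most $\epsilon$ and the second is $\frac{1}{L}\cdot\frac{4}{L^2} \leq \frac{\epsilon}{3}$. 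This proves $W'(v,X)\le 1$, hence $W''(v,X)\le 1$. The only real obstacle here is bookkeeping rather than difficulty: one must notice that the ``spurious at some level'' clause in $w''$ makes the $v\in S$ case immediate, and that the big-level weights must be charged against the \emph{single} global constraint $W(v,E)\le 1$ (a per-level bound of roughly $1$ would be useless summed over $\Theta(L)$ levels), while the tiny-level weights are individually $O(\epsilon/L^2)$ and thus harmless even summed over all $L$ levels; verifying the final constant is a one-line computation from the stated parameter relations.
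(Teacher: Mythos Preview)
Your proof is correct and follows essentially the same approach as the paper: split the levels into big and tiny, charge the big-level contribution against the global fractional-matching bound $W(v,E)\le 1$ via condition~(5) to get $(1+4\epsilon)^{-1}$, and bound each tiny-level contribution by $O(\epsilon/L^2)$ via condition~(7) so that the sum over $L$ levels is at most $4\epsilon(1+4\epsilon)^{-1}$. The only cosmetic difference is that you dispatch the case $v\in S$ globally at the start (giving $W''(v,X)=0$ immediately), whereas the paper keeps a third bucket $\mathcal{L}_v(S)$ of per-level spurious contributions (each equal to $0$); the remaining arithmetic is the same up to how the ``$+2$'' in condition~(7) is absorbed.
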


To take stock of the situation, we have constructed two weight functions $w' : X \rightarrow [0, 1]$ and $w'' : X \rightarrow [0, 1]$. The former nearly saturates all the nodes in $V_X$, in the sense that each such node receives a total weight that is at least $\eta$, for some $\eta$ very close to one (see Lemma~\ref{label:general:n:lm:w'}).  But $w'$ is not a fractional matching. The weight function $w''$, on the other hand, is a fractional matching (see Lemma~\ref{label:general:n:lm:w:star}) but might not saturate all the nodes in $V_X$. Our goal is to get the best of the both these worlds.  As a first step towards this goal, in Lemma~\ref{label:general:n:lm:spurious:count} we  show that the nodes that are spurious at some level (a.k.a, the ``troublemakers'')  are negligibly small in numbers compared to the size of the set $V_X$. The proof of Lemma~\ref{label:general:n:lm:spurious:count} appears in Section~\ref{label:general:sec:n:lm:spurious:count}.

\begin{lemma}
\label{label:general:n:lm:spurious:count}
The number of nodes that are spurious at some level is negligibly small compared to the number of nodes $v$ at levels $\ell(v) \geq L'$ in the $(\alpha, \beta)$-partition. Specifically, we have:
$$|S| \leq (8 \delta L^3/\epsilon) \cdot |V_X|.$$
\end{lemma}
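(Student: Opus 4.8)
The plan is to reduce the claim to a per-level count via condition~(4) of Definition~\ref{label:general:def:skeleton} and then bound how many levels a single node can be ``big'' at. Since $S = \bigcup_{i \geq L'} S_i$, we have $|S| \leq \sum_{i=L'}^{L}|S_i| \leq 4\delta \sum_{i=L'}^{L}|B_i|$, so it suffices to prove $\sum_{i=L'}^{L}|B_i| \leq (2L^{3}/\epsilon)\cdot|V_X|$. I will rewrite this sum as $\sum_{v\in V} n_B(v)$, where $n_B(v) := |\{\, i\in[L',L] : v\in B_i \,\}|$, and split it according to whether $v\in V_X$ or $v\notin V_X$.

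For $v\in V_X$ the trivial bound $n_B(v)\leq L-L'+1\leq L$ already suffices, giving $\sum_{v\in V_X} n_B(v)\leq L\,|V_X|$. (One could instead observe that if $v\in B_i$ then condition~(2) of Definition~\ref{label:general:def:skeleton} together with Corollary~\ref{label:general:cor:lm:deg:1} gives $W(v,E_i)=\text{deg}(v,E_i)/d_i > \epsilon/L^2$, and since $\sum_i W(v,E_i)=W(v,E)\leq 1$ by Invariant~\ref{label:general:inv:partition} we get $n_B(v)\leq L^2/\epsilon$ for every $v$; either bound works here.)

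The substantive case is $v\notin V_X$, i.e.\ $\ell(v) < L'$. The key observation is that for every level $i\geq L' > \ell(v)$, any edge of $E_i$ incident on $v$ must have its other endpoint at level exactly $i$, hence in $V_i\subseteq V_X$; in particular no level-$i$ edge has both endpoints outside $V_X$. Fix such an $i$. If $v\in B_i$, then by condition~(2) of Definition~\ref{label:general:def:skeleton} we have $\text{deg}(v,E_i) > \epsilon d_i/L^2$, and all of these edges cross from $V\setminus V_X$ into $V_i$. Summing over all big $v\notin V_X$, the quantity $\sum_{v\notin V_X,\ v\in B_i}\text{deg}(v,E_i)$ counts level-$i$ edges crossing into $V_i$, each charged to its (unique) endpoint in $V_i$; since $\text{deg}(u,E_i)\leq d_i$ for every $u$ by Corollary~\ref{label:general:cor:lm:deg:1}, this sum is at most $|V_i|\cdot d_i$. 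Hence $|\{\,v\notin V_X : v\in B_i\,\}| < |V_i|\cdot L^2/\epsilon$ (the $d_i$ cancels), and summing over $i\in[L',L]$, using that $\{V_i\}_{i=L'}^{L}$ partitions $V_X$, yields $\sum_{v\notin V_X} n_B(v) < (L^2/\epsilon)\,|V_X|$.

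Putting the two cases together, $\sum_{i=L'}^{L}|B_i| \leq L\,|V_X| + (L^2/\epsilon)\,|V_X| \leq (2L^2/\epsilon)\,|V_X|$ (since $\epsilon<1\leq L$), and therefore $|S| \leq 4\delta\cdot(2L^2/\epsilon)\,|V_X| = (8\delta L^2/\epsilon)\,|V_X| \leq (8\delta L^3/\epsilon)\,|V_X|$, which is even slightly stronger than the stated bound (the extra factor of $L$ is slack). The one delicate step is the edge-counting for nodes outside $V_X$: one has to notice that a low-level node's edges at high levels are all forced onto $V_X$, and then charge each such crossing edge to its level-$i$ endpoint without overcounting; the rest is bookkeeping.
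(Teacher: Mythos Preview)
Your proof is correct and in fact yields the slightly sharper bound $|S|\leq (8\delta L^2/\epsilon)\,|V_X|$. The paper takes a related but coarser route: it first writes $|S|\leq \sum_i|S_i|\leq 4\delta\sum_i|B_i|\leq 4\delta L\,|B|$ (using the wasteful estimate $|B_i|\leq|B|$, where $B=\bigcup_iB_i$), and then bounds $|B|$ via the fractional matching: every $v\in B$ satisfies $\sum_{i\geq L'}W(v,E_i)\geq\epsilon/L^2$, and since every edge in $\bigcup_{i\geq L'}E_i$ has an endpoint in $V_X$, a global double-counting gives $\sum_{v\in B}\sum_iW(v,E_i)\leq 2\sum_{v\in V_X}W(v,E)\leq 2|V_X|$, whence $|B|\leq(2L^2/\epsilon)|V_X|$.

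The difference is that you bound $\sum_i|B_i|$ directly rather than detouring through $|B|$, and you replace the paper's weighted double-counting over all of $B$ by an unweighted, per-level charging applied only to the nodes outside $V_X$ (with the trivial $n_B(v)\leq L$ handling $v\in V_X$). Both arguments rest on the same two facts---bigness at level $i$ forces $\text{deg}(v,E_i)>\epsilon d_i/L^2$, and every high-level edge touches $V_X$---but your organization avoids collapsing $\sum_i|B_i|$ to $L\,|B|$ and so saves a factor of $L$. The paper's version has the virtue of treating all $v\in B$ uniformly (no case split), at the cost of that extra factor.
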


If we switch from the weight function $w'$ to the weight function $w''$, the only difference we see is that the edges incident upon the nodes in $S$ have been ``turned off''. Let these specific edges be called ``transient''. Under $w'$, each node in $S$ receives a total weight of at most $2L$ (see Lemma~\ref{label:general:n:lm:w':bound}). Thus, we get an upper bound of $2L \cdot |S|$  on  the total weight  assigned by $w'$ to all the transient edges. Since each edge is incident upon two nodes, as we switch from $w'$ to $w''$ the total weight of the nodes in $V_X$ drops by at most $4 L \cdot |S|$. Consequently, by a simple counting argument,  at most $(4L/\epsilon) \cdot |S|$ many nodes in $V_X$ can experience a larger than $\epsilon$ drop in their weights due to this transition. Since every node in $V_X$ has weight at least $h_1$ under $w'$ (see Lemma~\ref{label:general:n:lm:w'}), we get the following lemma. The proof of Lemma~\ref{label:general:n:lm:spurious:good} appears in Section~\ref{label:general:sec:n:lm:spurious:good}.

\begin{lemma}
\label{label:general:n:lm:spurious:good}
Under $w''$, only a negligible fraction of the nodes in $V_X$ do not receive large weights.  Specifically, define the set of nodes $Q_X = \{ v \in V_X : W''(v, X) < h_1 - \epsilon\}$. Then we have:
$$|Q_X| \leq (32 \delta L^4/\epsilon^2) \cdot |V_X|.$$
\end{lemma}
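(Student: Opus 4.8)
The plan is to carry out the counting argument sketched in the paragraph just before the statement, making the two reductions precise. Recall that $w' : X \to [0,1]$ and $w'' : X \to [0,1]$ agree on every edge that has no spurious endpoint, and $w''$ zeroes out every ``transient'' edge, i.e. every edge incident on a node of $S = \bigcup_{i \geq L'} S_i$. So the first step is to bound the total $w'$-weight lost on transient edges. Each transient edge has at least one endpoint in $S$, so
\[
\sum_{\substack{(u,v)\in X\\ \{u,v\}\cap S \neq \emptyset}} w'(u,v) \;\leq\; \sum_{v \in S} W'(v, X).
\]
By Lemma~\ref{label:general:n:lm:w':bound}, $W'(v,X) \leq 2L$ for every node, so the right-hand side is at most $2L\cdot|S|$.

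Next I would convert this into a bound on how much node-weight disappears when passing from $w'$ to $w''$. Since $w'' \leq w'$ pointwise and the two differ only on transient edges, for every node $v$ we have $W'(v,X) - W''(v,X) \leq \sum_{e \ni v,\ e \text{ transient}} w'(e)$, and summing over all $v \in V_X$ double-counts each transient edge at most twice, giving
\[
\sum_{v \in V_X} \bigl(W'(v,X) - W''(v,X)\bigr) \;\leq\; 2\sum_{\substack{e \in X \text{ transient}}} w'(e) \;\leq\; 4L\cdot|S|.
\]
Now a Markov-type argument: the number of nodes $v \in V_X$ with $W'(v,X) - W''(v,X) \geq \epsilon$ is at most $(4L/\epsilon)\cdot|S|$. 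For any node $v \in V_X$ not in this ``bad'' set, we have $W''(v,X) \geq W'(v,X) - \epsilon \geq h_1 - \epsilon$ by Lemma~\ref{label:general:n:lm:w'}; hence $Q_X$ is contained in the bad set, so $|Q_X| \leq (4L/\epsilon)\cdot|S|$.

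Finally I would plug in Lemma~\ref{label:general:n:lm:spurious:count}, which gives $|S| \leq (8\delta L^3/\epsilon)\cdot|V_X|$, to obtain
\[
|Q_X| \;\leq\; \frac{4L}{\epsilon}\cdot\frac{8\delta L^3}{\epsilon}\cdot|V_X| \;=\; \frac{32\delta L^4}{\epsilon^2}\cdot|V_X|,
\]
which is exactly the claimed bound. There is no real obstacle here — the lemma is essentially a bookkeeping consequence of the three previously established lemmas (\ref{label:general:n:lm:w'}, \ref{label:general:n:lm:w':bound}, \ref{label:general:n:lm:spurious:count}); the only point requiring a little care is the factor-of-two from each edge being counted at both endpoints when summing the weight drop over $V_X$, and making sure the Markov step is applied to the weight \emph{drop} rather than to $W''$ directly. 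One should also double-check that every transient edge indeed contributes to the sum $\sum_{v\in S} W'(v,X)$ — it does, since ``transient'' was defined precisely as ``incident on a node of $S$'', and $X$ has no edges with both endpoints outside $V_X$ is not even needed here.
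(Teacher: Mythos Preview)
Your proposal is correct and follows essentially the same approach as the paper's proof: bound the total $w'$-weight on transient edges by $2L\cdot|S|$ via Lemma~\ref{label:general:n:lm:w':bound}, double it to bound the total node-weight drop over $V_X$, apply a Markov/averaging argument to count nodes whose drop exceeds $\epsilon$, and then invoke Lemmas~\ref{label:general:n:lm:w'} and~\ref{label:general:n:lm:spurious:count}. The paper packages the first two steps into a single claim (its Claim~\ref{label:general:cl:n:lm:spurious:good:1}) proved by a per-edge case analysis rather than your ``sum over $S$, then double'' phrasing, but the content is identical.
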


\begin{corollary}
\label{label:general:n:cor:lm:spurious:good}
We have:  $|V_X| \leq (1- 32 \delta L^4/\epsilon^2)^{-1} \cdot |V_X \setminus Q_X|.$
\end{corollary}

\begin{proof}
Follows from Lemma~\ref{label:general:n:lm:spurious:good}.
\end{proof}

Recall that every edge in $\bigcup_{i \geq L'} E_i$ has at least one endpoint in $V_X$. In lieu of our high level discussions at the start of this section, we see that $w''$ is a ``nearly-maximal'' fractional matching  in $\bigcup_{i \geq L'} E_i$ in the following sense: It ``nearly saturates'' ``almost all'' the nodes in $V_X$ (see Lemmas~\ref{label:general:n:lm:w:star} and~\ref{label:general:n:lm:spurious:good}). Remaining faithful to our initial plan, in Lemma~\ref{label:general:n:lm:main} we  now show the existence of a large matching in $M^* \subseteq X \cup Y$. The construction of this matching $M^*$ will crucially rely on the properties of the weight function  $w''$. The proof of Lemma~\ref{label:general:n:lm:main} appears in Section~\ref{label:general:sec:n:lm:main}.

\begin{lemma}
\label{label:general:n:lm:main}
Let $X^* =  \{ (u,v) \in X : w''(u,v) > 0 \}$ be the set of edges that receive nonzero weights under $w''$, and let $E^* = X^* \cup Y$.  Define $G^* = (V, E^*)$ to be the subgraph of $G = (V, E)$ induced by the edges in $E^*$. Then for every matching $M \subseteq E$ in $G$, there is a matching $M^* \subseteq E^*$ in $G^*$ such that $|M| \leq (2/h_2) \cdot (1+\epsilon) \cdot |M^*|$.
\end{lemma}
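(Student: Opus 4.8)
The plan is to fix a maximum cardinality matching $M$ of $G$ and to build $M^*$ from (i) the edges of $M$ that already lie in $E^*$ and (ii) a matching on $X^*$ extracted from the fractional matching $w''$, and then to reconcile the two pieces. The guiding intuition is the ``maximal matching is a $2$-approximation'' slogan from the start of this section, applied in the fractional world: $w''$ is a near-maximal fractional matching on $\bigcup_{i\ge L'}E_i$, so passing from $M$ to $M^*$ should cost a factor $\approx 2$ only on the part of $M$ that lives above level $L'$, and essentially nothing on the part below.

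\emph{Step 1 (decompose $M$ and bound its high part).} Write $M = M_Y \sqcup M_X$ with $M_Y = M\cap Y$ (every such edge has both endpoints in $V_Y$) and $M_X = M\setminus M_Y$ (every such edge has level $\ge L'$, hence at least one endpoint in $V_X$); the edges of $M_Y$ already lie in $E^*$. Call an edge of $M_X$ \emph{bad} if all of its $V_X$-endpoints lie in $Q_X$, and \emph{good} otherwise; for a good edge $e$ pick a witness endpoint $x_e\in V_X\setminus Q_X$, and note the $x_e$ are pairwise distinct. Since $M$ is a matching, the number of bad edges is at most $|Q_X|$, which is negligible by Lemma~\ref{label:general:n:lm:spurious:good}. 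For the good edges, $W''(x_e,X^*)\ge h_1-\epsilon$ by Lemma~\ref{label:general:n:lm:spurious:good}, and — because no edge of $X^*$ incident to a vertex of $V_X\setminus Q_X$ is turned off in passing from $w'$ to $w''$ — the fractional matching $w''$ (Lemma~\ref{label:general:n:lm:w:star}) is $1/(h_1-\epsilon)$-maximal on the subgraph spanned by the good level-$\ge L'$ edges. Since the value of a $\lambda$-maximal fractional matching is a $2\lambda$-approximation to the optimal fractional matching value, $M_X^{\mathrm{good}}$ (a matching in that subgraph) satisfies $|M_X^{\mathrm{good}}|\le \frac{2}{h_1-\epsilon}\,w''(X^*)$, and hence
\[
|M| \;\le\; |M_Y| \;+\; |Q_X| \;+\; \frac{2}{h_1-\epsilon}\, w''(X^*).
\]

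\emph{Step 2 (round $w''$ and assemble $M^*$).} Here I would exploit that, by equation~\ref{label:general:n:eq:w'}, every edge of $X^*$ has weight $\Theta(1/L^4)$, while by Definition~\ref{label:general:def:skeleton}(6) every vertex has $X^*$-degree $O(L^4)$ per level, hence $O(L^5)$ overall; a fractional matching all of whose weights are this small can be rounded to an \emph{integral} matching $N\subseteq X^*$ of size at least $w''(X^*)$ up to a multiplicative $(1+\epsilon)$ factor and an additive $1/K$-type slack (this is where the term $h_1-\epsilon-1/K$ of equation~\ref{label:general:eq:h2} enters), whereas the generic fractional-to-integral bound would only give a $3/2$ factor. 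One then wants this rounding to produce an $N$ that is (essentially) vertex-disjoint from $M_Y$ — which requires controlling the total $w''$-weight sitting on the $O(|M_Y|)$ low-level vertices touched by $M_Y$ — so that $M^*:=N\cup M_Y$ (after deleting the few conflicting edges) is a matching of $G^*$ of size at least, up to the $1/K$- and $(1+\epsilon)$-slacks, $w''(X^*)+|M_Y|$, i.e. at least $\tfrac12(h_1-\epsilon-1/K)$ times the right-hand side of the display above.

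\emph{Step 3 (combine).} Feeding the display of Step 1 into the bound of Step 2, and using Corollary~\ref{label:general:n:cor:lm:spurious:good} together with $w''(X^*)\ge\tfrac12(h_1-\epsilon)\,|V_X\setminus Q_X|$ to absorb the $|Q_X|$ term into the factor $1-32\delta L^4/\epsilon^2$, yields exactly $|M|\le (2/h_2)\,(1+\epsilon)\,|M^*|$ with $h_2$ as in equation~\ref{label:general:eq:h2}. The main obstacle is Step 2: converting the fractional matching $w''$ into an \emph{integral} matching of $G^*$ with only a multiplicative $(1+\epsilon)$ loss rather than the generic $3/2$ — which forces one to use that all $w''$-weights are $\Theta(1/L^4)$ — and simultaneously fitting that matching together with $M_Y$ on the shared low-level vertex set $V_Y$ without a second constant-factor loss. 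Steps 1 and 3 are essentially bookkeeping with the parameters fixed in Section~\ref{label:general:sec:parameter}.
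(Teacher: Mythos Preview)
Your decomposition in Step~1 is essentially the right opening and matches the paper's Claim~\ref{label:general:n:cl:almost} in spirit; the bound $|M_X^{\mathrm{good}}|\le \frac{2}{h_1-\epsilon}\,w''(X^*)$ follows exactly by the $\lambda$-maximal argument you describe. (Your side remark that ``no edge of $X^*$ incident to a vertex of $V_X\setminus Q_X$ is turned off'' is false --- such an edge is turned off whenever its \emph{other} endpoint lies in $S$ --- but you don't actually use it: the inequality $W''(x_e,X^*)\ge h_1-\epsilon$ is immediate from the definition of $Q_X$.)

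The genuine gap is Step~2, and it is two gaps rather than one. First, rounding $w''$ to an integral matching $N\subseteq X^*$ with only a $(1+\epsilon)$ loss does not follow just from ``all weights are $\Theta(1/L^4)$''. Applying Vizing to the simple graph $(V,X^*)$ gives a matching of size $|X^*|/(8c+1)$, which is only $\approx w''(X^*)/8$ because the weights range over $[1/(8c),1/c]$. The paper fixes this by discretising the weights to multiples of $1/(8Kc)$, building a \emph{multigraph} with $\tau^*(e)\in[K,8K]$ copies of each $e\in X^*$, and invoking Vizing for multigraphs (max degree $8Kc$, multiplicity $\le 8K$, hence $8K(c+1)$ colours). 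Second, and more importantly, combining $N$ with $M_Y$ ``after the fact'' cannot avoid a constant-factor loss: the $w''$-weight incident to $V(M_Y)$ can be as large as $2|M_Y|$, so deleting conflicts can wipe out all of $|M_Y|$ from $|N|+|M_Y|$. No simple case split (e.g.\ $|M_Y|\gtrless w''(X^*)$) rescues this; you end up with a factor $\approx 3$ rather than $2$.

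The paper's resolution is to fold $M_Y$ into the \emph{same} multigraph \emph{before} colouring: for each $(u,v)\in M\cap Y$ it adds $\max(0,\,8Kc-\deg(u,\mathcal E_{X^*})-\deg(v,\mathcal E_{X^*}))$ copies of $(u,v)$. This keeps the maximum degree at $8Kc$ (since $u,v\in V_Y$ are matched in $M$), guarantees $\deg(u,\mathcal E)+\deg(v,\mathcal E)\ge 8Kc$ for every $M_Y$-edge, and lets a single Vizing colour class serve as $M^*\subseteq X^*\cup Y$. The comparison with $|M|$ then goes through the uniform fractional matching $w_{\mathcal E}\equiv 1/(8Kc)$ on the multigraph, which nearly saturates both the $V_X\setminus Q_X$ vertices (by Observation~\ref{label:general:n:ob:g:w:1}) and both endpoints of every $M_Y$-edge (by Observation~\ref{label:general:n:ob:g:w}). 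This joint construction is exactly the missing idea in your Step~2; once you have it, your Steps~1 and~3 are subsumed by Claim~\ref{label:general:n:cl:almost} and the final ``Step~IV'' calculation.
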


Theorem~\ref{label:general:th:main} now follows from Lemmas~\ref{label:general:lm:max:deg:100} and~\ref{label:general:n:lm:main}.

\subsection{Some basic notations}
\label{label:general:n:sec:notations}

Here, we quickly recall some basic notations introduced earlier. We also introduce a few notations. All these notations will be used throughout the rest of this section.

\begin{itemize}
\item Define $B = \bigcup_{i = L'}^L B_i$ to be the set of nodes that are big at some level $i \geq L'$.
\item Define $S = \bigcup_{i=L'}^L S_i$ to be the set of nodes that are spurious at some level $i \geq L'$.
\item Define $T = \bigcup_{i=L'}^L T_i$ to be the set of nodes that are tiny at some level $i \geq L'$.
\item Given any node $v \in V$, let $\mathcal{L}_v(B)$ be the levels in $[L', L]$ where it is big. Thus, a level $i \in [L', L]$ belongs to the set $\mathcal{L}_v(B)$ iff $v \in B_i$.
\item Given any node $v \in V$, let $\mathcal{L}_v(T)$ be the levels in $[L', L]$ where it is tiny. Thus, a level $i \in [L', L]$ belongs to the set $\mathcal{L}_v(T)$ iff $v \in T_i$.
\item Given any node $v \in V$, let $\mathcal{L}_v(S)$ be the levels in $[L', L]$ where it is spurious. Thus, a level $i \in [L', L]$ belongs to the set $\mathcal{L}_v(S)$ iff $v \in S_i$.
\item Since a node is either big or tiny (but not both) at each level, it follows that for every node $v \in V$ the set $\{L', \ldots, L\}$ is partitioned by the three subsets $\mathcal{L}_v(S)$, $\mathcal{L}_v(B) \setminus \mathcal{L}_v(S)$ and $\mathcal{L}_v(T) \setminus \mathcal{L}_v(S)$.
\item Recall the notations $V_X$ and $V_Y$ introduced right after the proof of Lemma~\ref{label:general:lm:max:deg:100}.
\end{itemize}

\subsection{Proof of Lemma~\ref{label:general:n:lm:w'}}
\label{label:general:sec:n:lm:w'}

\begin{itemize}
\item Throughout this proof, we will use the notations defined in Section~\ref{label:general:n:sec:notations}.
\end{itemize}
\noindent
Throughout this section, fix any node $v \in V_X$. Thus, the node $v$ is at level $\ell(v) \geq L'$ in the $(\alpha, \beta)$-partition.

\begin{claim}
\label{label:general:n:cl:n:lm:w':1}
Under $w$, the node $v$ gets negligible weight  from all the levels $i \in [L', L]$ where it is tiny. Specifically, we have: 
$$\sum_{i \in \L_v(T)} W(v, X_i)   \leq 3 \epsilon.$$ 
\end{claim}

\begin{proof}
 Consider any level $i \in \L_v(T)$ where the node $v$ is tiny. By the condition (2) in Definition~\ref{label:general:def:skeleton}, we have $\text{deg}(v, E_i) < 3 \epsilon d_i/L^2$. Since each edge in $E_i$ received a weight $1/d_i$ under $w$, we get:
\begin{equation}
\label{label:general:n:eq:w':1}
W(v, E_i) = \text{deg}(v, E_i) \cdot (1/d_i) \leq 3\epsilon/L^2 
\end{equation}
Since there are at most $L$ levels in the range $[L', L]$, summing equation~\ref{label:general:n:eq:w':1} over all $i \in \L_v(T)$, we get:
$$\sum_{i \in \L_v(T)} W(v, E_i) \leq \left|\L_v(T)\right| \cdot (3\epsilon/L^2) \leq 3\epsilon/L \leq 3\epsilon.$$
The last inequality holds since $L \geq 1$ (see equation~\ref{label:general:eq:imp:72}).
\end{proof}

\begin{corollary}
\label{label:general:n:cor:1} 
Under $w$,  the node $v$ receives close to one weight from the levels $i \in [L', L]$ where it is big.
Specifically, we have: $$\sum_{i \in \L_v(B)} W(v, E_i) \geq (\alpha \beta)^{-1} - 3 \epsilon.$$
\end{corollary}

\begin{proof}
Since $v \in V_X$, the node $v$ belongs to a level $\ell(v) \geq L'$ in the $(\alpha, \beta)$-partition. Hence, every edge $(u,v) \in E$ incident upon $v$  has level $\ell(u,v) = \max(\ell(u), \ell(v)) \geq L'$. Thus, from Invariant~\ref{label:general:inv:partition} we get:
\begin{equation}
\label{label:general:n:eq:cor:1}
\sum_{i \in L'}^L W(v, E_i) = W(v, E) \geq (\alpha \beta)^{-1}
\end{equation}
At any level in the $(\alpha, \beta)$-partition the node $v$ is either big or tiny, but it cannot be both at the same time. In other words,  the set of levels $\{L', \ldots, L\}$ is partitioned into two subsets: $\L_v(B)$ and $\L_v(T)$. Hence, the corollary follows from equation~\ref{label:general:n:eq:cor:1} and Claim~\ref{label:general:n:cl:n:lm:w':1}.
\end{proof}

\begin{claim}
\label{label:general:n:cl:n:lm:w':2}
Consider the weight received by the node $v$ under $w'$ from  the levels in $[L', L]$ where it is big. This weight is  nearly equal to the weight it receives under $w$ from the same levels. Specifically, we have:
$$\sum_{i \in \mathcal{L}_v(B)} W'(v, X_i) \geq (h_0  e^{-\gamma}) \cdot \sum_{i \in \mathcal{L}_v(B)}  W(v, E_i).$$
\end{claim}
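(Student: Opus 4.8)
The plan is to prove the inequality level by level: it suffices to show that for each fixed $i \in \mathcal{L}_v(B)$ we have $W'(v, X_i) \geq (h_0 e^{-\gamma}) \cdot W(v, E_i)$, and then sum over $i \in \mathcal{L}_v(B)$. So fix a level $i \in \mathcal{L}_v(B)$, i.e.\ $v \in B_i$. The two sides of the per-level inequality unpack as follows. On the $w$-side, every edge of $E_i$ has weight $1/d_i$, so $W(v, E_i) = \text{deg}(v, E_i)/d_i$. On the $w'$-side, by equation~\ref{label:general:n:eq:w'} every edge of $X_i$ has weight $h_0 \cdot (\lambda_{d_i} L^4)^{-1}$, so $W'(v, X_i) = \text{deg}(v, X_i) \cdot h_0 \cdot (\lambda_{d_i} L^4)^{-1}$. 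Hence the desired per-level bound is equivalent to
\[
\text{deg}(v, X_i) \cdot \frac{h_0}{\lambda_{d_i} L^4} \;\geq\; h_0 e^{-\gamma} \cdot \frac{\text{deg}(v, E_i)}{d_i},
\]
i.e., after cancelling $h_0$ and rearranging, $\text{deg}(v, X_i) \geq e^{-\gamma} \cdot (\lambda_{d_i} L^4/d_i) \cdot \text{deg}(v, E_i)$.

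The main step is to justify this last inequality, and here is where the case analysis on whether $v$ is spurious at level $i$ comes in. If $v \in B_i \setminus S_i$, this is exactly the lower bound in condition~(5) of Definition~\ref{label:general:def:skeleton} applied to the skeleton $(B_i, T_i, S_i, X_i)$ of $G_i$ (with $d$ there being $d_i$, the maximum degree of $G_i$, and $\lambda_d = \lambda_{d_i}$). If instead $v \in B_i \cap S_i$, then condition~(5) does not apply; however, in that case the contribution we need on the left is automatically covered because — wait, this is the obstacle: condition~(5) gives \emph{no} lower bound on $\text{deg}(v, X_i)$ for spurious nodes. The resolution must be that the claim as stated is really only needed for non-spurious levels, or that the sum is taken over $\mathcal{L}_v(B) \setminus \mathcal{L}_v(S)$; more plausibly, looking ahead, one restricts attention to $i \in \mathcal{L}_v(B) \setminus \mathcal{L}_v(S)$ and handles the spurious big levels separately (they contribute nonnegatively to the $w'$-side, and their $w$-contribution is controlled elsewhere, e.g.\ via Lemma~\ref{label:general:n:lm:spurious:count} or Claim~\ref{label:general:n:cl:n:lm:w':1}-type bounds). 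I expect the intended reading to be that each $i$ in the sum is both big and non-spurious for $v$, so that condition~(5) applies directly; this is the step I would check most carefully against the statement of Lemma~\ref{label:general:n:lm:w'}.

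Assuming condition~(5) is in force for every term, the proof is then just: apply condition~(5) termwise, multiply through by $h_0/(\lambda_{d_i} L^4) \geq 0$, recognize the left side as $W'(v,X_i)$ and the right side as $h_0 e^{-\gamma} W(v, E_i)$, and sum over $i$. No parameter inequalities from Section~\ref{label:general:sec:parameter} are needed beyond $h_0 \geq 0$ and $\lambda_{d_i} > 0$. The only genuine content is the bookkeeping that matches the generic skeleton parameter $d$ (and $\lambda_d$) in Definition~\ref{label:general:def:skeleton} to the level-$i$ quantities $d_i$ and $\lambda_{d_i}$, which is consistent since $G_i = (V, E_i)$ has maximum degree $d_i$ by Corollary~\ref{label:general:cor:lm:deg:1}.
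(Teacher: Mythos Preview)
Your termwise reduction is exactly what the paper does: fix $i \in \mathcal{L}_v(B)$, invoke condition~(5) of Definition~\ref{label:general:def:skeleton} to get $\text{deg}(v,X_i) \geq e^{-\gamma}(\lambda_{d_i}L^4/d_i)\,\text{deg}(v,E_i)$, multiply by $h_0/(\lambda_{d_i}L^4)$, recognize $W'(v,X_i)$ and $W(v,E_i)$, and sum. The paper's proof is just those four lines, with no case analysis.

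Your worry about the spurious case is well-founded and is in fact a gap \emph{in the paper's own proof}: condition~(5) applies only to $v \in B_i \setminus S_i$, yet the paper applies it for every $i \in \mathcal{L}_v(B)$ without comment. So the claim as stated (and hence Lemma~\ref{label:general:n:lm:w'}) is not actually established for a node $v$ that happens to be spurious at some big level. The overall argument survives because the damage is confined to nodes in $S = \bigcup_i S_i$, and Lemma~\ref{label:general:n:lm:spurious:count} bounds $|S|$ by $(8\delta L^3/\epsilon)\,|V_X|$; one can simply add $|S|$ to the count of ``bad'' nodes in Lemma~\ref{label:general:n:lm:spurious:good} and absorb it into the same $O(\delta L^4/\epsilon^2)\,|V_X|$ bound. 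But as a proof of the claim exactly as written, both your sketch and the paper's proof share the same unaddressed case; you were right to flag it.
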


\begin{proof}
Consider any  level $i \in \L_v(B)$ where the node $v$ is big.   By the condition (5) of Definition~\ref{label:general:def:skeleton}, the degree of $v$ drops roughly by a multiplicative factor of $(\lambda_{d_i} L^4/d_i)$ as we go from $E_i$ to $X_i$. Thus, we have:
\begin{equation}
\label{label:general:n:eq:w':4}
\text{deg}(v, X_i) \geq e^{-\gamma} \cdot (\lambda_{d_i} L^4/d_i) \cdot \text{deg}(v, E_i)
\end{equation}
Each edge in $X_i$ receives exactly $h_0 \cdot (\lambda_{d_i} L^4)^{-1}$ weight under $w'$. Hence, equation~\ref{label:general:n:eq:w':4} implies that:
\begin{equation}
\label{label:general:n:eq:w':5}
W'(v, X_i) = h_0  \cdot (\lambda_{d_i} L^4)^{-1} \cdot \text{deg}(v, X_i) \geq (h_0 e^{-\gamma})  \cdot (1/d_i) \cdot \text{deg}(v, E_i)
\end{equation}
Since each edge in $E_i$ receives $1/d_i$ weight under $w$, equation~\ref{label:general:n:eq:w':5} implies that:
\begin{equation}
\label{label:general:n:eq:w':6}
W'(v, X_i) \geq (h_0 e^{-\gamma}) \cdot W(v, E_i) 
\end{equation}
The claim follows if we sum equation~\ref{label:general:n:eq:w':6} over all levels $i \in \L_v(B)$.
\end{proof}

Under $w'$, the total weight received by the node $v$  is at least the weight it receives from the levels $i \in [L', L]$ where it is big. Thus, from Corollary~\ref{label:general:n:cor:1} and Claim~\ref{label:general:n:cl:n:lm:w':2}, we infer that:
$$W'(v, X) \geq \sum_{i \in \L_v(B)} W'(v, X_i) \geq (h_0 e^{-\gamma}) \cdot ((\alpha \beta)^{-1} - 3\epsilon) = h_1.$$
The last equality follows from equation~\ref{label:general:eq:h1}. This concludes the proof of the lemma.

\subsection{Proof of Lemma~\ref{label:general:n:lm:w':bound}}
\label{label:general:sec:n:lm:w':bound}

Fix any node $v \in V$. We will first bound the weight received by the node $v$ under $w'$ from any given level $i \in [L', L]$. Towards this end, note that by the condition (6) in Definition~\ref{label:general:def:skeleton}, we have $\text{deg}(v, X_i) \leq \lambda_{d_i} \cdot L^4 + 2 \leq 2 \lambda_{d_i} L^4$.  The last inequality holds since $1/2 \leq \lambda_{d_i} \leq 1$ (see equation~\ref{label:general:eq:Li:lambda}) and $L^4 \geq 4$ (see equation~\ref{label:general:eq:imp:71}). Since every edge in $X_i$ receives a weight $h_0 \cdot (\lambda_{d_i} L^4)^{-1}$ under $w'$, equation~\ref{label:general:eq:h0:less} gives us:
$$W'(v, X_i) = \text{deg}(v, X_i) \cdot  h_0 \cdot (\lambda_{d_i} L^4)^{-1} \leq 2 h_0 \leq 2$$
Summing the above inequality over all levels $i \in [L', L]$, we get:
$$W'(v, X) = \sum_{i=L'}^L W'(v, X_i) \leq 2 (L - L'+1) \leq 2L.$$
This concludes the proof of the lemma.

\subsection{Proof of Lemma~\ref{label:general:n:lm:w:star}}
\label{label:general:sec:n:lm:w*}

\begin{itemize}
\item Throughout this proof, we will use the notations defined in Section~\ref{label:general:n:sec:notations}.
\end{itemize}
\noindent
Throughout this section, we fix any node $v \in V$. Recall that the set of levels $\{L', \ldots, L\}$ is partitioned into three subsets: $\L_v(S)$, $\L_v(T) \setminus \L_v(S)$ and $\L_v(B) \setminus \L_v(S)$. Thus, we have:
\begin{equation}
\label{label:general:eq:obvious}
W''(v, X) = \sum_{i \in \L_v(S)} W''(v, X_i) + \sum_{i \in \L_v(T) \setminus \L_v(S)} W''(v, X_i) + \sum_{i \in \L_v(B) \setminus \L_v(S)} W''(v, X_i)
\end{equation}

We separately bound the weights received by the node $v$ under $w''$ from these three different types of levels. The lemma  follows by adding up the bounds from Claims~\ref{label:general:n:cl:n:lm:w*:1},~\ref{label:general:n:cl:n:lm:w*:2} and~\ref{label:general:n:cl:n:lm:w*:3}.

\begin{claim}
\label{label:general:n:cl:n:lm:w*:1}
The node $v$ gets zero weight under $w''$ from all the levels $i \in [L', L]$ where it is spurious. Specifically, we have $\sum_{i \in \L_v(S)} W''(v, E_i) = 0$. 
\end{claim}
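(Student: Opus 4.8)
\textbf{Proof proposal for Claim~\ref{label:general:n:cl:n:lm:w*:1}.}

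The plan is to unwind the definition of $w''$ at levels where $v$ is spurious. Fix any level $i \in \L_v(S)$, so by definition $v \in S_i$, and since $S = \bigcup_{j \geq L'} S_j$ we have $v \in S$. Now consider any edge $(u,v) \in X_i$ incident upon $v$ at this level. The set $\{u,v\}$ has nonempty intersection with $S$ (namely it contains $v$), so by the definition of $w''$ in equation~\ref{label:general:n:eq:w*}, we have $w''(u,v) = 0$. Since this holds for every edge of $X_i$ incident on $v$, we get $W''(v, X_i) = \sum_{(u,v) \in X_i} w''(u,v) = 0$.

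Summing this equality over all levels $i \in \L_v(S)$ immediately yields $\sum_{i \in \L_v(S)} W''(v, X_i) = 0$, which is the claim. (Note that the statement is written with $E_i$ but the summand should be read as $W''(v, X_i)$, consistently with equation~\ref{label:general:eq:obvious}; since $w''$ is only defined on $X$, the weight a node receives at level $i$ means its weight from $X_i$.)

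There is essentially no obstacle here: the claim is a direct consequence of the fact that $w''$ zeroes out every edge touching a spurious node, and $v$ itself is spurious at each level in $\L_v(S)$. The only subtlety worth flagging — already highlighted in the paragraph following equation~\ref{label:general:n:eq:w*} — is that $w''$ kills the edge $(u,v)$ at level $i$ even when $v \notin S_i$ but $v \in S_{i'}$ for some other level $i'$; here we are in the easier situation where $v$ is spurious at the very level $i$ under consideration, so the conclusion is immediate.
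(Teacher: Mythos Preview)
Your proof is correct and follows the same approach as the paper: since $v \in S_i \subseteq S$ at each level $i \in \L_v(S)$, every edge in $X_i$ incident on $v$ meets $S$ and therefore receives zero weight under $w''$ by equation~\ref{label:general:n:eq:w*}. Your observation about the $E_i$ versus $X_i$ notation is also apt.
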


\begin{proof}
If the node $v$ is spurious at some level $i$, then each of its incident edges in $X_i$ gets zero weight under $w''$ (see equation~\ref{label:general:n:eq:w*}).  The claim follows.
\end{proof}

\begin{claim}
\label{label:general:n:cl:n:lm:w*:2}
The node $v$ gets negligible weight under $w''$ from all the levels $i \in [L', L]$ where it is tiny but non-spurious. Specifically, we have: 
$$\sum_{i \in \L_v(T) \setminus \L_v(S)} W''(v, X_i)   \leq (1+4\epsilon)^{-1} \cdot (4 \epsilon).$$ 
\end{claim}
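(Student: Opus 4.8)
**Proof proposal for Claim~\ref{label:general:n:cl:n:lm:w*:2}.**

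The plan is to bound, level by level, the weight $W''(v, X_i)$ for every level $i$ where $v$ is tiny and non-spurious, and then sum over the at most $L$ such levels. First I would observe that at any level $i \in \L_v(T) \setminus \L_v(S)$ the node $v$ is tiny and non-spurious, so condition~(7) of Definition~\ref{label:general:def:skeleton} applies: $\text{deg}(v, X_i) \leq 3\epsilon \lambda_{d_i} L^2 + 2$. Since $\lambda_{d_i} \geq 1/2$ and $L^2 \geq 4/\epsilon$ (equation~\ref{label:general:eq:imp:67}), the additive $+2$ is absorbed, giving $\text{deg}(v, X_i) \leq 4\epsilon \lambda_{d_i} L^2$ or some similarly clean bound of the form $c\,\epsilon \lambda_{d_i} L^2$. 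Now every edge in $X_i$ carries weight $h_0 \cdot (\lambda_{d_i} L^4)^{-1} \leq (1+4\epsilon)^{-1} \cdot (\lambda_{d_i} L^4)^{-1}$ under $w'$ (hence also under $w''$, which only turns some edges off), using the bound $h_0 \leq (1+4\epsilon)^{-1}$ from equation~\ref{label:general:eq:h0:less}.

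Multiplying these two estimates, the $\lambda_{d_i}$ factors cancel and I get a per-level bound of the shape
$$W''(v, X_i) \leq (1+4\epsilon)^{-1} \cdot (c\epsilon/L^2)$$
for each $i \in \L_v(T) \setminus \L_v(S)$. Summing over all such levels, of which there are at most $|\{L', \ldots, L\}| \leq L$, gives
$$\sum_{i \in \L_v(T) \setminus \L_v(S)} W''(v, X_i) \leq L \cdot (1+4\epsilon)^{-1} \cdot (c\epsilon/L^2) = (1+4\epsilon)^{-1} \cdot (c\epsilon/L) \leq (1+4\epsilon)^{-1} \cdot (4\epsilon),$$
where the final inequality uses $L \geq 1$ (equation~\ref{label:general:eq:imp:72}) after checking that the constant $c$ that comes out of the degree bound is at most $4$.

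The only real bookkeeping point — and the closest thing to an obstacle — is being careful about which additive and multiplicative slacks are swept into the clean constant: one must confirm that the $+2$ in condition~(7) is genuinely dominated after dividing by $\lambda_{d_i} L^4$ (this is where $L^2 \geq 4/\epsilon$ and $\lambda_{d_i} \geq 1/2$ are needed), and that the resulting constant in front of $\epsilon/L^2$ does not exceed $4$ once the $h_0 \leq (1+4\epsilon)^{-1}$ substitution is made. Everything else is a routine two-step estimate followed by summing a geometric-free bound over $\leq L$ levels, entirely parallel to the argument in Claim~\ref{label:general:n:cl:n:lm:w':1}.
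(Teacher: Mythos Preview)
Your proposal is correct and follows essentially the same argument as the paper: bound $\text{deg}(v,X_i)$ via condition~(7), absorb the $+2$ using $\lambda_{d_i}\geq 1/2$ and $L^2\geq 4/\epsilon$ to get $\text{deg}(v,X_i)\leq 4\epsilon\lambda_{d_i}L^2$, multiply by the edge weight $h_0(\lambda_{d_i}L^4)^{-1}\leq (1+4\epsilon)^{-1}(\lambda_{d_i}L^4)^{-1}$ to obtain a per-level bound of $(1+4\epsilon)^{-1}\cdot 4\epsilon/L^2$, and sum over at most $L$ levels. The constant indeed comes out as exactly $4$, so no further bookkeeping is needed.
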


\begin{proof}
 Consider any level $i \in \L_v(T) \setminus \L_v(S)$ where the node $v$ is tiny but non-spurious. Note that $\lambda_{d_i} \geq 1/2$ (see equation~\ref{label:general:eq:Li:lambda}). Hence, Definition~\ref{label:general:def:skeleton} and equation~\ref{label:general:eq:imp:67} imply that: 
\begin{equation}
\label{label:general:n:eq:w:star:1}
\text{deg}(v, X_i) \leq 3 \epsilon \lambda_{d_i} L^2 + 2 \leq 4 \epsilon \lambda_{d_i} L^2
\end{equation} 
Each edge in $X_i$ receives at most  $h_0 \cdot (\lambda_{d_i} L^{4})^{-1}$ weight under $w''$. Thus, equations~\ref{label:general:n:eq:w:star:1} and~\ref{label:general:eq:h0:less} imply that:
\begin{equation}
\label{label:general:n:eq:w:star:2}
W''(v, X_i) \leq h_0 \cdot (\lambda_{d_i} L^{4})^{-1} \cdot \text{deg}(v, X_i) \leq 4 \epsilon h_0 L^{-2} \leq  (1+4\epsilon)^{-1} \cdot (4 \epsilon/L^{2})
\end{equation}
Since there are at most $L$ levels in the range $[L', L]$, summing equation~\ref{label:general:n:eq:w:star:2} over all  $i \in \L_v(T) \setminus \L_v(S)$ gives:
 \begin{eqnarray}
 \sum_{i \in \L_v(T) \setminus \L_v(S)} W''(v, X_i)   \leq (1+4\epsilon)^{-1} \cdot (4 \epsilon/L) \label{label:general:n:eq:w:star:3}
 \end{eqnarray}
 This  claim follows from equation~\ref{label:general:n:eq:w:star:3} and the fact that $L \geq 1$ (see equation~\ref{label:general:eq:imp:72}).
 \end{proof}

 \begin{claim}
 \label{label:general:n:cl:n:lm:w*:3}
The node $v$ receives at most $(1+4\epsilon)^{-1}$ weight under $w''$ from all the levels $i \in [L', L]$ where it is big and non-spurious. Specifically, we have:
 $$\sum_{i \in \mathcal{L}_v(B) \setminus \L_v(S)} W''(v, X_i) \leq (1+4 \epsilon)^{-1}.$$ 
  \end{claim}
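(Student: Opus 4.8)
The plan is to bound $\sum_{i \in \mathcal{L}_v(B) \setminus \mathcal{L}_v(S)} W''(v, X_i)$ by relating it back to the weight $v$ receives under the original fractional matching $w$. For a level $i$ where $v$ is big and non-spurious, condition (5) of Definition~\ref{label:general:def:skeleton} gives the upper bound $\text{deg}(v, X_i) \leq e^{\gamma} \cdot (\lambda_{d_i} L^4/d_i) \cdot \text{deg}(v, E_i)$. Since every edge of $X_i$ gets weight exactly $h_0 \cdot (\lambda_{d_i} L^4)^{-1}$ under $w'$ (and hence at most that under $w''$, being either equal or zeroed out), this yields
$$W''(v, X_i) \leq h_0 \cdot (\lambda_{d_i} L^4)^{-1} \cdot e^{\gamma} \cdot (\lambda_{d_i} L^4/d_i) \cdot \text{deg}(v, E_i) = h_0 e^{\gamma} \cdot (1/d_i) \cdot \text{deg}(v, E_i) = h_0 e^{\gamma} \cdot W(v, E_i),$$
where the last equality uses that each edge in $E_i$ has weight $1/d_i$ under $w$. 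This is essentially the mirror image of the computation in Claim~\ref{label:general:n:cl:n:lm:w':2}, but using the upper half of condition (5) rather than the lower half.

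Next I would sum this over all $i \in \mathcal{L}_v(B) \setminus \mathcal{L}_v(S)$. Since this index set is contained in $\{L', \ldots, L\}$ and every edge incident on $v$ lies in some $E_i$ (recall $v \in V_X$ so $\ell(v) \geq L'$; but actually here $v$ is an arbitrary node — I should be careful), the sum $\sum_i W(v, E_i)$ over this restricted set of levels is at most $W(v, E) \leq 1$ by Invariant~\ref{label:general:inv:partition}. Hence $\sum_{i \in \mathcal{L}_v(B) \setminus \mathcal{L}_v(S)} W''(v, X_i) \leq h_0 e^{\gamma} \cdot 1 = h_0 e^{\gamma}$. Finally, plugging in the definition $h_0 = e^{-\gamma} \cdot (1+4\epsilon)^{-1}$ from equation~\ref{label:general:eq:h0} gives $h_0 e^{\gamma} = (1+4\epsilon)^{-1}$, which is exactly the claimed bound.

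The main thing to watch is the interface between $w''$ and $w'$: an edge $(u,v) \in X_i$ with $v$ big and non-spurious at level $i$ might still have its other endpoint $u$ spurious (at level $i$ or some other level), in which case $w''(u,v) = 0 < w'(u,v)$. This only helps the upper bound, so writing $W''(v,X_i) \leq W'(v,X_i)$ and then bounding $W'(v,X_i)$ is the clean way to proceed; no obstacle there. The only genuinely delicate point is making sure the telescoping inequality $\sum_{i \in \mathcal{L}_v(B)\setminus \mathcal{L}_v(S)} W(v,E_i) \leq W(v,E) \leq 1$ is legitimate — it is, since the levels in question are disjoint and $W(v,E) = \sum_{\text{all }i} W(v,E_i) \geq \sum_{i \in \text{subset}} W(v,E_i)$, and Invariant~\ref{label:general:inv:partition} gives $W(v,E) \leq 1$ for every node (including those in $V_{-1}$). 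So the proof is a short three-line chain of inequalities followed by substituting equation~\ref{label:general:eq:h0}; I do not anticipate any real obstacle.
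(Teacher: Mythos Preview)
Your argument is correct and is essentially identical to the paper's proof: both use the upper half of condition~(5) in Definition~\ref{label:general:def:skeleton} to bound $W''(v,X_i)\le h_0 e^{\gamma}\,W(v,E_i)$, then sum and invoke $W(v,E)\le 1$ from Invariant~\ref{label:general:inv:partition}, with the substitution $h_0 e^{\gamma}=(1+4\epsilon)^{-1}$ from equation~\ref{label:general:eq:h0}. The only cosmetic difference is that the paper substitutes $h_0 e^{\gamma}=(1+4\epsilon)^{-1}$ immediately at the per-level step, whereas you carry $h_0 e^{\gamma}$ through to the end.
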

 
 \begin{proof}
Consider any  level $i \in \L_v(B) \setminus \L_v(S)$ where the node $v$ is big but non-spurious.   By the condition (5) of Definition~\ref{label:general:def:skeleton}, the degree of $v$ drops roughly by a multiplicative factor of $(\lambda_{d_i} L^4/d_i)$ as we go from $E_i$ to $X_i$. Specifically, we have:
\begin{equation}
\label{label:general:n:eq:w:star:4}
\text{deg}(v, X_i) \leq e^{\gamma} \cdot (\lambda_{d_i} L^4/d_i) \cdot \text{deg}(v, E_i)
\end{equation}
Each edge in $X_i$ receives at most $h_0 \cdot (\lambda_{d_i} L^4)^{-1}$ weight under $w''$. Hence, equations~\ref{label:general:eq:h0},~\ref{label:general:n:eq:w:star:4} imply that:
\begin{equation}
\label{label:general:n:eq:w:star:5}
W''(v, X_i) \leq h_0 \cdot (\lambda_{d_i} L^4)^{-1} \cdot \text{deg}(v, X_i) \leq  (1+4\epsilon)^{-1} \cdot (1/d_i) \cdot \text{deg}(v, E_i)
\end{equation}
Since each edge in $E_i$ receives $1/d_i$ weight under $w$, equation~\ref{label:general:n:eq:w:star:5} implies that:
\begin{equation}
\label{label:general:n:eq:w:star:6}
W''(v, X_i) \leq (1+4\epsilon)^{-1} \cdot W(v, E_i) 
\end{equation}
The claim follows if we sum equation~\ref{label:general:n:eq:w:star:6} over all levels $i \in \L_v(B) \setminus \L_v(S)$, and recall that the sum $\sum_{i \L_v(B) \setminus \L_v(S)} W(v, E_i)$ itself is at most one (for $w$ is a fractional matching in $G$).
 \end{proof}

 The lemma follows from equation~\ref{label:general:eq:obvious} and Claims~\ref{label:general:n:cl:n:lm:w*:1},~\ref{label:general:n:cl:n:lm:w*:2},~\ref{label:general:n:cl:n:lm:w*:3}.

\subsection{Proof of Lemma~\ref{label:general:n:lm:spurious:count}}
\label{label:general:sec:n:lm:spurious:count}

\begin{itemize}
\item Throughout this proof, we will use the notations defined in Section~\ref{label:general:n:sec:notations}.
\end{itemize}
\noindent
We first show that the number of nodes that are spurious at some level is negligibly small compared to the number of nodes that are big at some level. Towards this end, recall the condition (4) in Definition~\ref{label:general:def:skeleton}. For each level $i \in [L', L]$, this  implies that $|S_i| \leq 4 \delta \cdot |B_i| \leq 4 \delta \cdot |B|$. The last inequality holds since $B_i \subseteq B$. Hence, summing over all the levels $i \in [L', L]$, we get:
\begin{equation}
\label{label:general:n:eq:spurious:count:1}
|S| = \left| \bigcup_{i=L'}^{L} S_i \right| \leq \sum_{i=L'}^L |S_i| \leq (L-L'+1) \cdot 4\delta \cdot |B| \leq (4\delta L)  \cdot |B|
\end{equation}
It remains to upper bound the size of the set $B$ in terms of the size of the set $V_X$. Towards this end, we first show that every node in $B$ receives sufficiently large weight under $w$. Specifically, fix any node $v \in B$. By definition, we have $v \in B_i$ at some level $i \in [L', L]$. Hence, the condition (1) in Definition~\ref{label:general:def:skeleton} implies that $\text{deg}(v, E_i) > \epsilon d_i/L^2$. Since each edge in $E_i$ receives a weight $1/d_i$ under $w$, we get: $W(v, E_i) = (1/d_i) \cdot \text{deg}(v, E_i) > \epsilon/L^2$. To summarize, we have the following guarantee.
\begin{equation}
\label{label:general:n:eq:spurious:count:2}
\sum_{i = L'}^L W(v, E_i) \geq \epsilon/L^2 \ \ \text{ for all nodes } v \in B.
\end{equation}
Recall that  the level of an edge $(u, v)$ in the $(\alpha, \beta)$-partition is given by $\ell(u, v) = \max(\ell(u), \ell(v))$. Further, the set $E_i$ is  precisely  the set of those edges $e \in E$ with $\ell(e) = i$.  Hence, each edge $(u,v) \in \bigcup_{i = L'}^L E_i$ has at least one endpoint in 
$V_X = \{ x \in V : \ell(v) \in [L', L]\}$. Thus, a simple counting argument gives us:
\begin{equation}
\label{label:general:n:eq:spurious:count:3}
\sum_{v \in B} \sum_{i=L'}^L W(v, E_i) \leq 2 \cdot \sum_{v \in V_X} \sum_{i=L'}^L W(v, E_i)
\end{equation}
The above inequality holds for the following reason. Consider any edge $(u,v) \in \bigcup_{i=L'}^L E_i$. Either both the endpoints $u, v$ belong to $V_X$, or exactly one of the endpoints $u, v$ belong to $V_X$. In the former case, the edge $(u,v)$ contributes at most $2 \cdot w(u,v)$ to the left hand side, and exactly $4 \cdot w(u,v)$ to the right hand side. In the latter case, the edge $(u,v)$ contributes at most $2 \cdot w(u,v)$ to the left hand side, and exactly $2 \cdot w(u,v)$ to the right hand side. Thus, the contribution of every relevant edge to the left hand side is upper bounded by its contribution to the right hand side.  

Next, recall that  $w$ defines a fractional matching in $G$, and so we have $W(v, E) \leq 1$ for every node $v \in V$. Accordingly, from equations~\ref{label:general:n:eq:spurious:count:2} and~\ref{label:general:n:eq:spurious:count:3}, we infer that:
$$|B| \cdot (\epsilon/L^2) \leq \sum_{v \in B} \sum_{i=L'}^L W(v, E_i) \leq 2 \cdot \sum_{v \in V_X} \sum_{i=L'}^L W(v, E_i) \leq 2 \cdot |V_X|.$$
Rearranging the terms of the above inequality, we get our desired upper bound on $|B|$.
\begin{equation}
\label{label:general:n:eq:spurious:count:4}
|B| \leq (2L^2/\epsilon) \cdot |V_X|
\end{equation}
Finally, from equations~\ref{label:general:n:eq:spurious:count:1},~\ref{label:general:n:eq:spurious:count:4}, we infer that $|S| \leq (8 \delta L^3/\epsilon) \cdot |V_X|$. This concludes the proof of the lemma.

\subsection{Proof of Lemma~\ref{label:general:n:lm:spurious:good}}
\label{label:general:sec:n:lm:spurious:good}

\begin{itemize}
\item Throughout this proof, we will use the notations $V_X$ and $V_Y$ defined right after the proof of Lemma~\ref{label:general:lm:max:deg:100}.
\end{itemize}
\noindent The weights $w''$ are constructed from $w'$ by switching off the nodes in $S$ and the edges incident upon them (see equations~\ref{label:general:n:eq:w'},~\ref{label:general:n:eq:w*}). Since each node in $S$ has at most $2L$ weight under $w'$ (see Lemma~\ref{label:general:n:lm:w':bound}), and since each edge is incident upon two nodes, the transition from $w'$ to $w''$ decreases the sum  of the node-weights in $V_X$ by at most $4L \cdot |S|$. This insight is formally stated in the claim below.

\begin{claim}
\label{label:general:cl:n:lm:spurious:good:1}
We have:
$$\sum_{v \in V_X} W''(v, X) \geq \sum_{v \in V_X} W'(v, X) -  4L \cdot |S|.$$
\end{claim}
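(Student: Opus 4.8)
The plan is to track exactly which edges change weight in the transition from $w'$ to $w''$, and to account for the total weight lost to nodes in $V_X$ by a double-counting argument. By equation~\ref{label:general:n:eq:w*}, the only edges $e \in X$ with $w''(e) \neq w'(e)$ are the ``transient'' edges, i.e.\ those incident upon at least one node in $S = \bigcup_{i \geq L'} S_i$; for these, $w''(e) = 0 \le w'(e)$, and for all other edges $w''(e) = w'(e)$. Hence, for every node $v \in V_X$, the drop $W'(v, X) - W''(v, X)$ equals the sum of $w'(e)$ over the transient edges incident upon $v$, which is nonnegative.

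First I would sum this drop over all $v \in V_X$:
\begin{equation}
\label{label:general:n:eq:spurious:good:plan:1}
\sum_{v \in V_X} \bigl( W'(v, X) - W''(v, X) \bigr) = \sum_{v \in V_X} \ \sum_{\substack{e \in X \,:\, v \in e, \\ e \text{ transient}}} w'(e).
\end{equation}
The right-hand side is a sum over (node, transient edge) incidences with $v \in V_X$; since each edge has exactly two endpoints, each transient edge $e$ contributes at most $2 w'(e)$ to this total. Therefore the right-hand side of \eqref{label:general:n:eq:spurious:good:plan:1} is at most $2 \sum_{e \text{ transient}} w'(e)$. Now I would bound the total $w'$-weight of the transient edges: every transient edge is incident upon some node of $S$, so $\sum_{e \text{ transient}} w'(e) \le \sum_{u \in S} W'(u, X) \le 2L \cdot |S|$, where the last step uses Lemma~\ref{label:general:n:lm:w':bound} ($W'(u,X) \le 2L$ for every node $u$). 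Combining, $\sum_{v \in V_X} (W'(v,X) - W''(v,X)) \le 4L \cdot |S|$, which rearranges to the claimed inequality.

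The only mild subtlety — and the one place to be a little careful rather than a genuine obstacle — is that a transient edge need not have both endpoints in $V_X$, and conversely some edges counted in $\sum_{u \in S} W'(u,X)$ might be counted twice if both endpoints are spurious; both of these only help the inequality (they make the left side smaller or the bound on transient-edge weight larger), so the direction is safe. One should also note that every edge of $X$ has at least one endpoint in $V_X$ (recalled in Section~\ref{label:general:n:sec:notations} and the discussion after Lemma~\ref{label:general:lm:max:deg:100}), so restricting the outer sum in \eqref{label:general:n:eq:spurious:good:plan:1} to $V_X$ loses nothing relevant. This completes the proof of the claim.
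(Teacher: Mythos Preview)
Your proof is correct and follows essentially the same approach as the paper: both arguments identify the transient edges (those touching $S$) as the only ones whose weight changes, bound the total $w'$-weight of transient edges by $\sum_{u\in S} W'(u,X) \le 2L\cdot|S|$ via Lemma~\ref{label:general:n:lm:w':bound}, and pick up the factor of $2$ from double-counting endpoints. The paper phrases the double-counting as verifying the inequality $\sum_{v\in V_X} W''(v,X) + 2\sum_{v\in S} W'(v,X) \ge \sum_{v\in V_X} W'(v,X)$ edge-by-edge, but this is just a cosmetic repackaging of your argument.
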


\begin{proof}
To prove the claim, we first show  that:
\begin{equation}
\label{label:general:n:eq:new:100}
\sum_{v \in V_X} W''(v, X) + 2 \cdot \sum_{v \in S} W'(v, X) \geq \sum_{v \in V_X} W'(v, X) 
\end{equation}
Equation~\ref{label:general:n:eq:new:100} follows from a simple counting argument. Consider any edge $(u,v) \in X$ that has at least one endpoint in $V_X$. These are the edges that contribute towards the right hand side of the above inequality.   Now, there are two possible cases to consider. 
\begin{itemize}
\item {\em Case 1.} At least one of the endpoints $u, v$ belong to $S$. In this case, we have $w''(u,v) = 0$. However, due to  the term $2 \cdot \sum_{v \in S} W'(v, S)$, the edge $(u,v)$ contributes at least $2 w'(u,v)$ towards the left hand side.  And clearly, the edge $(u,v)$ can contribute at most $2 w'(u,v)$ towards the right hand side. 
\item {\em Case 2.} None of the endpoints $u, v$ belong to $S$. In this case, we have $w'(u, v) = w''(u,v)$, and so the contribution of the edge $(u,v)$ towards the left hand side is at least as much as its contribution towards the right hand side. 
\end{itemize}
Next, we recall that $W'(v, X) \leq 2L$ for every node $v \in V$ (see Lemma~\ref{label:general:n:lm:w':bound}).  Thus, we have:
\begin{equation}
\label{label:general:n:eq:new:101}
4L \cdot |S| \geq 2 \cdot \sum_{v \in S} W'(v, X)
\end{equation}
From equations~\ref{label:general:n:eq:new:100} and~\ref{label:general:n:eq:new:101}, we infer that:
\begin{equation}
\label{label:general:n:eq:new:102}
\sum_{v \in V_X} W''(v, X) + 4 L \cdot |S| \geq \sum_{v \in V_X} W'(v, X) 
\end{equation}
The claim follows from equation~\ref{label:general:n:eq:new:102}.
\end{proof}

As we make a transition from $w'$ to $w''$, the total weight of the nodes in $V_X$ drops by at most $4L \cdot |S|$ (see Claim~\ref{label:general:cl:n:lm:spurious:good:1}). Hence, by a simple counting argument, due to this transition at most $(4L/\epsilon) \cdot |S|$ nodes  can experience their weights dropping by more than $\epsilon$. Next, recall that under $w'$, every node $v \in V_X$ has weight $W'(v, X) \geq h_1$ (see Lemma~\ref{label:general:n:lm:w'}).  Thus, every node in $Q_X$ has experienced a drop of $\epsilon$ due to the transition from $w'$ to $w''$. Accordingly, the size of the set $Q_X$ cannot be larger than $(4L/\epsilon) \cdot |S|$. Since $|S| \leq  (8 \delta L^3/\epsilon) \cdot |V_X|$, we infer that $|Q_X| \leq (32 \delta L^4/\epsilon^2) \cdot |V_X|$. This concludes the proof of the lemma.

\subsection{Proof of Lemma~\ref{label:general:n:lm:main}}
\label{label:general:sec:n:lm:main}

\begin{itemize}
\item Throughout this proof, we will use the notations $V_X$ and $V_Y$ defined right after the proof of Lemma~\ref{label:general:lm:max:deg:100}.
\end{itemize}
\noindent 
We begin by noting that under $w''$, the weights of the edges in $X^*$ are very close to one another.

\begin{observation}
\label{label:general:n:ob:close}
Let $c = \lfloor (L^4/2) \rfloor$. Then we have $1/(8c) \leq w''(e) \leq 1/c$ for every edge $e \in X^*$.
\end{observation}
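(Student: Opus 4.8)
This observation is a direct consequence of the explicit formula defining $w'$ together with the numerical bounds on $h_0$ and on the scaling factors $\lambda_{d_i}$; there is no genuine difficulty here, so the only ``obstacle'' is bookkeeping of constants.

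First I would recall that, by the definition of $X^*$ and by equation~\ref{label:general:n:eq:w*}, every edge $e \in X^*$ satisfies $w''(e) = w'(e) > 0$. Hence if $e \in X_i$ for some $i \in [L', L]$, then equation~\ref{label:general:n:eq:w'} gives $w''(e) = w'(e) = h_0 \cdot (\lambda_{d_i} L^4)^{-1}$. Thus it suffices to sandwich the quantity $h_0 (\lambda_{d_i} L^4)^{-1}$ between $1/(8c)$ and $1/c$, where $c = \lfloor L^4/2 \rfloor$, using only the facts that $1/2 \le h_0 \le 1$ (equation~\ref{label:general:eq:h0:less}) and $1/2 \le \lambda_{d_i} \le 1$ (equation~\ref{label:general:eq:Li:lambda}), and that $L^4/4 \le c \le L^4/2$ (the upper bound is immediate from $c = \lfloor L^4/2\rfloor$, and the lower bound $\lfloor L^4/2 \rfloor \ge L^4/4$ is equation~\ref{label:general:eq:L:floor}).

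For the upper bound I would combine $h_0 \le 1$ with $\lambda_{d_i} \ge 1/2$ to obtain $w''(e) \le 2/L^4$; since $c \le L^4/2$ we get $1/c \ge 2/L^4 \ge w''(e)$. For the lower bound I would combine $h_0 \ge 1/2$ with $\lambda_{d_i} \le 1$ to obtain $w''(e) \ge 1/(2L^4)$; since $c \ge L^4/4$ we have $8c \ge 2L^4$, hence $1/(8c) \le 1/(2L^4) \le w''(e)$. Putting the two bounds together yields $1/(8c) \le w''(e) \le 1/c$ for every $e \in X^*$, as claimed. The main point to watch, as noted, is simply that the various rounding and constant factors ($h_0$, $\lambda_{d_i}$, and the floor in $c$) never push the ratio outside the range $[1/8, 1]$; all of this is guaranteed by the parameter choices fixed in Section~\ref{label:general:sec:parameter}.
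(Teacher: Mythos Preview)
Your proposal is correct and follows essentially the same argument as the paper: both derive $w''(e) = h_0(\lambda_{d_i}L^4)^{-1}$ from equations~\ref{label:general:n:eq:w'} and~\ref{label:general:n:eq:w*}, sandwich it between $1/(2L^4)$ and $2/L^4$ using $1/2 \le h_0,\lambda_{d_i} \le 1$, and then invoke $c = \lfloor L^4/2\rfloor \ge L^4/4$ from equation~\ref{label:general:eq:L:floor}. Your write-up is simply a bit more explicit about the upper bound $c \le L^4/2$, which the paper leaves implicit.
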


\begin{proof}
Consider any edge $(u,v) \in X^*$. This edge belongs to some level $i \in [L', L]$, and since $(u,v) \in X^*$, by definition the edge has nonzero weight under $w''$. Hence, equations~\ref{label:general:n:eq:w'} and~\ref{label:general:n:eq:w*} implies that $w''(u,v) = (h_0/\lambda_{d_i}) \cdot L^{-4}$. Since $1/2 \leq h_0, \lambda_{d_i} \leq 1$ (see equations~\ref{label:general:eq:Li:lambda},~\ref{label:general:eq:h0:less}), we get: $1/(2L^4) \leq w''(u, v) \leq 2/L^4$. The observation now follows from the fact that $c = \lfloor (L^4/2) \rfloor \geq (L^4/4)$. See equation~\ref{label:general:eq:L:floor}.
\end{proof}

As an important corollary, we get an upper bound on the maximum degree of a node in $G_{X^*} = (V, X^*)$.

\begin{corollary}
\label{label:general:n:cor:ob:close}
We have $\text{deg}(v, X^*) \leq 8 c$ for every node $v \in V$.
\end{corollary}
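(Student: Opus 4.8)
The plan is to combine Lemma~\ref{label:general:n:lm:w:star} (which says $w''$ is a fractional matching, so $W''(v, X) \le 1$ for every $v$) with the lower bound on edge weights from Observation~\ref{label:general:n:ob:close}. First I would fix an arbitrary node $v \in V$ and observe that, by the definition of $X^* = \{(u,v) \in X : w''(u,v) > 0\}$, the edges of $X^*$ incident upon $v$ are precisely the edges of $X$ incident upon $v$ that carry nonzero $w''$-weight. Hence
$$W''(v, X) = \sum_{\substack{e \in X \,:\, v \in e}} w''(e) = \sum_{\substack{e \in X^* \,:\, v \in e}} w''(e) \ge \text{deg}(v, X^*) \cdot \frac{1}{8c},$$
where the inequality uses the bound $w''(e) \ge 1/(8c)$ for every $e \in X^*$ from Observation~\ref{label:general:n:ob:close}.

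Next I would invoke Lemma~\ref{label:general:n:lm:w:star} to get $W''(v, X) \le 1$, and chain the two inequalities to conclude $\text{deg}(v, X^*)/(8c) \le 1$, i.e., $\text{deg}(v, X^*) \le 8c$. Since $v$ was arbitrary, this holds for all nodes, which is exactly the claim.

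There is essentially no obstacle here: the corollary is a one-line consequence of the fractional-matching property plus the uniform lower bound on surviving edge weights, so the only thing to be careful about is the bookkeeping that restricting the sum from $X$ to $X^*$ does not change $W''(v, X)$ (zero-weight edges contribute nothing). No further calculation is needed.
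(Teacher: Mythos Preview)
Your proposal is correct and follows essentially the same argument as the paper: combine the fractional-matching bound $W''(v,X)\le 1$ from Lemma~\ref{label:general:n:lm:w:star} with the lower bound $w''(e)\ge 1/(8c)$ on $X^*$ from Observation~\ref{label:general:n:ob:close}, noting that $W''(v,X)=W''(v,X^*)$.
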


\begin{proof}
The corollary holds since $1 \geq W''(v, X) = W''(v, X^*) \geq \text{deg}(v, X^*) \cdot (1/c)$ for all nodes $v \in V$. The first inequality follows from Lemma~\ref{label:general:n:lm:w:star}. The last inequality follows from Observation~\ref{label:general:n:ob:close}.
\end{proof}

\subsubsection{Outline of the proof}
\label{label:general:n:sec:outline}
Before proceeding any further, we give a high level overview of our approach. Suppose that we make two simplifying assumptions (which will be relaxed in Section~\ref{label:general:n:sec:complete}).
\begin{assume}
\label{label:general:n:ob:1}
Each edge $e \in X^*$ receives exactly the same weight $1/c''$ under $w''$, for some integer $c''$. Thus, we have $w''(e) = 1/c''$ for all $e \in X^*$. By Observation~\ref{label:general:n:ob:close},  the weights $\{w''(e)\}, e \in X^*,$ are already very close to one another. Here, we  take this one step further by assuming that they are exactly equal. We also assume that these edge-weights are inverse of some integer value. 
\end{assume}

\begin{assume}
\label{label:general:n:ob:2}
There is no edge $(u,v) \in E$ in the $(\alpha, \beta)$-partition at level $\ell(u,v) < L'$. In other words, we are assuming that the edge-set $Y$ is empty. 
\end{assume}
 
By Lemma~\ref{label:general:n:lm:w:star},  $w''$ defines a fractional matching in the graph $G = (V, E)$. We will now see that under the above two assumptions, we can say something more about $w''$. Towards this end, first note that as $Y = \emptyset$ (see Observation~\ref{label:general:n:ob:2}), every edge $(u,v) \in E$ has level $\ell(u,v) = \max(\ell(u), \ell(v)) \geq L'$ in the $(\alpha, \beta)$-partition. Since $V_X$ is precisely the set of nodes $v \in V$ with levels $\ell(v) \geq L'$,  we infer the following fact.
\begin{fact}
\label{label:general:n:fact:1}
Every edge $(u,v) \in E$ has at least one endpoint in $V_X$. 
\end{fact}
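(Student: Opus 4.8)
\textbf{Proof proposal for Fact~\ref{label:general:n:fact:1}.}
The plan is to read the claim directly off the definition of the edge-level in the $(\alpha,\beta)$-partition together with Assumption~\ref{label:general:n:ob:2}. First I would recall that for any edge $(u,v)\in E$ the level is defined by $\ell(u,v)=\max(\ell(u),\ell(v))$, and that $E_i$ is exactly the set of edges of level $i$; hence the edges of level $<L'$ are precisely those in $Y=\bigcup_{i=0}^{L'-1}E_i$. Assumption~\ref{label:general:n:ob:2} says $Y=\emptyset$, so there is no edge of level strictly below $L'$, i.e.\ every edge $(u,v)\in E$ satisfies $\ell(u,v)\ge L'$.

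Next I would unwind the definition of $\max$: from $\max(\ell(u),\ell(v))=\ell(u,v)\ge L'$ it follows that $\ell(u)\ge L'$ or $\ell(v)\ge L'$. Finally, recalling $V_X=\{v\in V:\ell(v)\ge L'\}$ from the notation introduced right after the proof of Lemma~\ref{label:general:lm:max:deg:100}, this says exactly that $u\in V_X$ or $v\in V_X$, which is the desired conclusion.

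There is essentially no obstacle here; the only thing to be careful about is that the fact is stated under the simplifying Assumption~\ref{label:general:n:ob:2} (and that in the general setting of Section~\ref{label:general:n:sec:complete} one instead only gets that every edge of $X$, rather than every edge of $E$, has an endpoint in $V_X$, as already noted in the bullet after the proof of Lemma~\ref{label:general:lm:max:deg:100}). Within the scope of the current subsection, however, the argument is the one-line deduction above.
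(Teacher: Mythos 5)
Your proposal is correct and matches the paper's own argument: the paper likewise notes that under Assumption~\ref{label:general:n:ob:2} every edge has level $\ell(u,v)=\max(\ell(u),\ell(v))\ge L'$, and then reads off from the definition of $V_X$ that at least one endpoint lies in $V_X$. Nothing is missing.
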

On the other hand, by Lemma~\ref{label:general:n:lm:spurious:good}, an overwhelming fraction of the nodes in $V_X$ receive weights that are close to one under $w''$. This, along with Fact~\ref{label:general:n:fact:1}, implies  that $w''$ is a ``near-maximal'' matching in $G$, in the sense that almost all the edges in $G$ have at least one nearly tight endpoint under $w''$. Hence, the value of $w''$, given by $w''(E) = \sum_{e \in E} w''(e)$, is very close to being a $2$-approximation to the size of the maximum cardinality matching in $G$. We will now show that there exists a matching $M^* \subseteq X^*$ whose size is very close to $w''(E)$. This will imply the desired guarantee we are trying to prove.

We now bound the value of the fractional matching $w''$. Since each edge in $X^*$ receives $1/c''$ weight under $w''$ (see Assumption~\ref{label:general:n:ob:1}), and since every other edge gets zero weight, we have $w''(E) = |X^*|/c''$. This is summarized as a  fact below.

\begin{fact}
\label{label:general:n:fact:2}
We have $w''(E) = |X^*|/c''$.
\end{fact}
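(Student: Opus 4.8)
The plan is to unwind the definitions, since under Assumption~\ref{label:general:n:ob:1} this fact is essentially immediate. Recall from the Preliminaries that for a weight function we write $w''(E) = \sum_{e \in E} w''(e)$, where we extend $w''$ to all of $E$ by setting $w''(e) = 0$ for any edge $e \in E \setminus X$ (this is harmless because under Assumption~\ref{label:general:n:ob:2} we have $Y = \emptyset$, so $E = \bigcup_{i \geq L'} E_i$, and any edge of $E$ not lying in some $X_i$ simply contributes nothing).

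First I would partition the edge-set $E$ into $X^*$ and $E \setminus X^*$. By the very definition $X^* = \{(u,v) \in X : w''(u,v) > 0\}$, every edge $e \in E \setminus X^*$ has $w''(e) = 0$ — either $e \notin X$ (so it was never assigned a positive weight), or $e \in X$ but $e$ is incident on a spurious node and hence was ``turned off'' in equation~\ref{label:general:n:eq:w*}. Therefore $\sum_{e \in E \setminus X^*} w''(e) = 0$. Next, by Assumption~\ref{label:general:n:ob:1}, every edge $e \in X^*$ satisfies $w''(e) = 1/c''$. Hence
\[
w''(E) \;=\; \sum_{e \in X^*} w''(e) \;+\; \sum_{e \in E \setminus X^*} w''(e) \;=\; |X^*| \cdot \frac{1}{c''} \;+\; 0 \;=\; \frac{|X^*|}{c''},
\]
which is exactly the claimed identity.

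There is no real obstacle here: the statement is a bookkeeping consequence of Assumption~\ref{label:general:n:ob:1} (uniform edge weight $1/c''$ on $X^*$) together with the definition of $X^*$ as the support of $w''$. The only point worth stating explicitly is the convention that $w''$ is extended by zero outside $X$, so that the sum $w''(E)$ is well-defined and collapses to a sum over $X^*$.
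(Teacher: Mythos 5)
Your proof is correct and matches the paper's own (one-sentence) justification: each edge of $X^*$ carries weight $1/c''$ by Assumption~\ref{label:general:n:ob:1}, every other edge of $E$ carries weight zero, and summing gives $w''(E) = |X^*|/c''$. The extra remark about extending $w''$ by zero outside $X$ is a harmless bookkeeping clarification that the paper leaves implicit.
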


It remains to show the existence of a large matching in $G_{X^*}$ of size very close to $w''(E)$. Towards this end, we first note that $c''$ is an upper  bound on the maximum degree of a node in $G_{X^*} = (V, X^*)$. This holds since $W''(v, X^*) = (1/c'') \cdot \text{deg}(v, X^*) \leq 1$ for all $v \in V$ (see Lemma~\ref{label:general:n:lm:w:star}). Thus, by Vizing's theorem~\cite{vizing1,vizing2}, there is a proper edge-coloring of the edges in $G_{X^*}$ using only $c'' + 1$ colors. Recall that a proper edge-coloring assigns one color to every edge in the graph, while ensuring that the edges incident upon the same node get different colors. We take any $c''+1$ edge-coloring in $G_{X^*}$ as per Vizing's theorem. By a simple counting argument, there exists a color  that is assigned to at least $|X^*|/(c''+1)$ edges in $X^*$. The edges that receive this color constitute a matching in $G_{X^*}$ (for they cannot share a common endpoint). In other words, there exists a matching $M^* \subseteq X^*$ of size $|X^*|/ (c''+1)$.  From Fact~\ref{label:general:n:fact:2}, we infer that  $|M^*| \geq w''(E) \cdot (c''/(c''+1))$. Since $c'' = \Theta(\text{poly } L) = \Theta(\text{poly} \log n)$, the size of the matching $M^*$ is indeed very close to the value $w''(E)$. This gives us the desired bound.

\subsubsection{The complete proof}
\label{label:general:n:sec:complete}

Unfortunately, while giving a complete proof, we cannot rely on the simplifying assumptions from Section~\ref{label:general:n:sec:outline}. Since Assumption~\ref{label:general:n:ob:1} no longer holds, a natural course of action is to discretize the edge-weights under $w''$ into constantly many buckets, without loosing too much of the value of the fractional matching defined by $w''$. This will at least ensure that the number of different weights is some constant, and we  hope to extend our proof technique from Section~\ref{label:general:n:sec:outline} to this situation. 

We pick some sufficiently large constant integer $K$, and round down  the weights under $w''$ to the nearest multiples of $1/(8Kc)$. Let $w^*$ be these rounded weights.  Specifically, for every edge $e \in X^*$, we have:
\begin{eqnarray}
\label{label:general:n:eq:w*:new}
w^*(e) & = & \frac{\tau^*(e)}{8Kc}, \ \text{ where } \tau^*(e) \text{ is a positive integer such that } \frac{\tau^*(e)}{8Kc} \leq w''(e) < \frac{\tau^*(e)+1}{8Kc}.
\end{eqnarray}

As usual, given any subset of edges $E' \subseteq E$ and  node $v \in V$, we define $W^*(v, E')$ to be the total weight received by $v$ from its incident edges in $E'$, and $w^*(E')$ to be the total weight of all the edges in $E'$. We now list down some basic properties of the weights $w^*$, comparing them against the weights $w''$.

\begin{observation}
\label{label:general:n:ob:discrete:1}
For all $e \in X^*$, we have  $(8c)^{-1} \leq w^*(e) \leq w''(e) \leq  (c)^{-1}$ and $K \leq \tau^*(e) \leq 8K$.
\end{observation}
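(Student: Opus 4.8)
The plan is to prove Observation~\ref{label:general:n:ob:discrete:1} directly from the definition of $w^*$ in equation~\ref{label:general:n:eq:w*:new} together with the already-established bounds on $w''$ from Observation~\ref{label:general:n:ob:close}. Recall that for every edge $e \in X^*$ we have $1/(8c) \le w''(e) \le 1/c$, and that $w^*(e) = \tau^*(e)/(8Kc)$, where $\tau^*(e)$ is the unique positive integer with $\tau^*(e)/(8Kc) \le w''(e) < (\tau^*(e)+1)/(8Kc)$. The string of inequalities to establish splits naturally into two halves: the chain $(8c)^{-1} \le w^*(e) \le w''(e) \le c^{-1}$ on the weights, and the bracketing $K \le \tau^*(e) \le 8K$ on the rounding integers.

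First I would handle the weight chain. The middle inequality $w^*(e) \le w''(e)$ is immediate from the left half of the defining condition for $\tau^*(e)$, since rounding down can only decrease a value. The rightmost inequality $w''(e) \le c^{-1}$ is exactly the upper bound from Observation~\ref{label:general:n:ob:close}, so nothing new is needed. For the leftmost inequality $(8c)^{-1} \le w^*(e)$: from Observation~\ref{label:general:n:ob:close} we have $w''(e) \ge 1/(8c) = 8K/(8Kc)$, so the largest multiple of $1/(8Kc)$ that is at most $w''(e)$ is at least $8K \cdot (1/(8Kc)) = 1/(8c)$; hence $w^*(e) \ge 1/(8c)$. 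This simultaneously gives the lower bound $\tau^*(e) \ge 8K$ — wait, I should be careful: the claim states $\tau^*(e) \ge K$, which is weaker than $8K$, so that direction is certainly fine, but in fact the argument shows $\tau^*(e) \ge 8K$ as well; I will just record what the statement asks, namely $\tau^*(e)\ge K$, noting it follows a fortiori. Actually, to be safe I would derive $\tau^*(e)$ bounds honestly from $w^*(e)=\tau^*(e)/(8Kc)$: the lower bound $w^*(e)\ge 1/(8c)$ rearranges to $\tau^*(e)\ge K$ (this is the form I want), and the upper bound $w^*(e)\le w''(e)\le 1/c$ rearranges to $\tau^*(e)\le 8K$.

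So the key steps, in order, are: (i) invoke Observation~\ref{label:general:n:ob:close} to get $1/(8c)\le w''(e)\le 1/c$; (ii) use the left inequality in the definition of $\tau^*(e)$ to get $w^*(e)\le w''(e)$; (iii) since $w''(e)\ge 1/(8c)$ and $w^*(e)$ is the floor of $w''(e)$ to a multiple of $1/(8Kc)$, and $1/(8c)$ is itself such a multiple, conclude $w^*(e)\ge 1/(8c)$; (iv) translate the resulting bounds $1/(8c)\le w^*(e)\le 1/c$ through the identity $w^*(e)=\tau^*(e)/(8Kc)$ into $K\le\tau^*(e)\le 8K$. There is no real obstacle here — this is a routine unwinding of definitions — but the one point requiring a moment's care is step (iii): one must observe that because $1/(8c) = (8K)/(8Kc)$ is an integer multiple of the rounding granularity $1/(8Kc)$, rounding a value that is $\ge 1/(8c)$ down to such a multiple cannot drop it below $1/(8c)$. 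Everything else is arithmetic rearrangement.
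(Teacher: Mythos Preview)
Your proof is correct and matches the paper's approach, which simply cites Observation~\ref{label:general:n:ob:close} and equation~\ref{label:general:n:eq:w*:new}. One minor arithmetic slip to fix: you write $1/(8c) = 8K/(8Kc)$, but in fact $1/(8c) = K/(8Kc)$ (and $1/c = 8K/(8Kc)$); this explains why the honest bound is $\tau^*(e)\ge K$ rather than $\tau^*(e)\ge 8K$, and your ``to be safe'' re-derivation already handles it correctly.
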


\begin{proof}
Follows from Observation~\ref{label:general:n:ob:close} and equation~\ref{label:general:n:eq:w*:new}.
\end{proof}

\begin{observation} 
\label{label:general:n:ob:discrete:2}
For every node $v \in V$, we have  $W''(v, X^*) - 1/K \leq W^*(v, X^*) \leq W''(v, X^*)$. 
\end{observation}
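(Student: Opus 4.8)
The claim to prove is Observation~\ref{label:general:n:ob:discrete:2}: for every node $v \in V$, we have $W''(v, X^*) - 1/K \leq W^*(v, X^*) \leq W''(v, X^*)$. The upper bound is immediate: by equation~\ref{label:general:n:eq:w*:new}, each rounded weight satisfies $w^*(e) \leq w''(e)$, so summing over the edges of $X^*$ incident on $v$ gives $W^*(v, X^*) \leq W''(v, X^*)$ term by term. All the work is in the lower bound, and the plan is simply to track the total rounding loss at the node $v$.

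\textbf{Key steps.} First, for each edge $e \in X^*$ incident on $v$, equation~\ref{label:general:n:eq:w*:new} guarantees $w''(e) - w^*(e) < 1/(8Kc)$, i.e.\ rounding down to a multiple of $1/(8Kc)$ loses strictly less than one quantum per edge. Summing this bound over all such edges yields
\[
W''(v, X^*) - W^*(v, X^*) \;<\; \frac{\text{deg}(v, X^*)}{8Kc}.
\]
Second, I invoke Corollary~\ref{label:general:n:cor:ob:close}, which bounds $\text{deg}(v, X^*) \leq 8c$. Substituting this in gives $W''(v, X^*) - W^*(v, X^*) < 8c/(8Kc) = 1/K$, which rearranges to $W^*(v, X^*) \geq W''(v, X^*) - 1/K$. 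Combining with the trivial upper bound completes the proof.

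\textbf{Main obstacle.} There is no real obstacle here; this is a routine bookkeeping argument. The only thing one must be careful about is that the per-edge loss bound $1/(8Kc)$ is exactly matched against the degree bound $8c$ so that the $c$'s cancel and the quantum $8Kc$ is calibrated to give precisely the slack $1/K$ claimed in the statement — this is evidently why the discretization in equation~\ref{label:general:n:eq:w*:new} was chosen with denominator $8Kc$ rather than, say, $Kc$. So the ``hard part'' is just confirming that the chosen granularity is consistent with the degree bound from Corollary~\ref{label:general:n:cor:ob:close}, which it is.
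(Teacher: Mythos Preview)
Your proof is correct and follows exactly the same route as the paper: bound the per-edge rounding loss by $1/(8Kc)$ from equation~\ref{label:general:n:eq:w*:new}, multiply by the degree bound $\text{deg}(v,X^*)\le 8c$ from Corollary~\ref{label:general:n:cor:ob:close}, and cancel to get the $1/K$ slack (the upper bound $W^*\le W''$ being immediate from $w^*(e)\le w''(e)$).
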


\begin{proof}
Fix any node $v \in V$. It has at most $8c$ edges incident upon it in $G_{X^*} = (V, X^*)$ (see Corollary~\ref{label:general:n:cor:ob:close}). As we make a transition from $w''$ to $w^*$, each of these edges incurs a weight-loss of at most $1/(8Kc)$ (see equation~\ref{label:general:n:eq:w*:new}). Hence, the total loss in the weight received by the node $v$ from its incident edges is at most $(8c) \cdot 1/(8Kc) = 1/K$. Accordingly, we have $W^*(v, X^*) \geq W''(v, X^*) - 1/K$.
\end{proof}

\paragraph{Roadmap.} The rest of the section is organized as follows.
\begin{enumerate}
\item Recall that $M \subseteq E$ is a matching in the graph $G = (V, E)$ we want to compete against. We first construct two multi-graphs $\G_{X^*} = (V, \E_{X^*})$ and $\G_{M \cap Y} = (V, \E_{M \cap Y})$. The multi-edges in $\E_{X^*}$ are constructed from the edge-set $X^*$. Similarly, as the notation suggests, the multi-edges in $\E_{M \cap Y}$ are constructed from the edge-set $M \cap Y$ (which consists of the matched edges in $M$ with both endpoints at a level less than $L'$ in the $(\alpha, \beta)$-partition). We also define $\G = (V, \E)$ to be the union of the two multi-graphs $\G_{X^*}$ and $\G_{M \cap Y}$, so that $\E = \E_{X^*} \cup \E_{M \cap Y}$.
\item  Next, we construct a fractional matching $w_{\E} : \E \rightarrow [0,1]$ in the multi-graph $\G$. This fractional matching is ``uniform'', in the sense that it assigns exactly the same weight to every multi-edge. This is a significant property, since the fractional matching we had to deal with in the simplified proof of Section~\ref{label:general:n:sec:outline} was also uniform, and this fact was used while comparing the value of the fractional matching against the size of the integral matching constructed out of Vizing's theorem~\cite{vizing1,vizing2}. As in Section~\ref{label:general:n:sec:outline}, we will show that the value of the fractional matching $w_{\E}(\E) = \sum_{e \in \E} w_{\E}(e)$ is very close to being a $2$-approximation to the size of $M \subseteq E$ (the matching in $G$ we are competing against).
\item Finally, we construct a matching $M^* \subseteq X^* \cup Y$ whose size is very close to $w_{\E}(\E)$. To construct this matching $M^*$, we use a generalized version of Vizing's theorem for multi-graphs.
\item Steps 2 and 3 imply that the size of the matching $M^*$ is very nearly within a factor of $2$ of $|M|$. This concludes the proof of Lemma~\ref{label:general:n:lm:main}.
\end{enumerate}

\bigskip
\paragraph{Step I: Constructing the multi-graphs $\G_{X^*} = (V, \E_{X^*})$,  $\G_{M \cap Y} = (V, \E_{M \cap Y})$ and $\G = (V, \E)$.}
\ 

\bigskip
\noindent  We create $\tau^*(e)$ copies of each edge $e \in X^*$, and let $\E_{X^*}$ denote the collection of these multi-edges. Next,  we create $\max\left(0, 8Kc - \text{deg}(u, \E_{X^*}) - \text{deg}(v, \E_{X^*})\right)$ copies of every edge $(u,v) \in M \cap Y$, and let  $\E_{M \cap Y}$ denote the collection of these multi-edges. We also define $\E = \E_{X^*} \cup \E_{M \cap Y}$. We will be interested in the multi graphs $\G_{X^*} = (V, \E_{X^*})$, $\G_{M \cap Y} = (V, \E_{M \cap Y})$ and $\G = (V, \E)$. We now state three simple observations that will be useful later on.

\begin{observation}
\label{label:general:n:ob:multigraph:1}
In the multi-graph $\G$, the degree of every node $v \in V$ is at most $8Kc$.
\end{observation}

\begin{proof}
Throughout the proof, fix any node $v \in V$. We first bound the degree of this node among the multi-edges in $\E_{X^*}$. Accordingly, consider an   edge $(u,v) \in X^*$ that contributes to $\text{deg}(v, \E_{X^*})$. Under $w^*$, this edge has weight $w^*(u,v) = \tau^*(u,v)/(8Kc)$. While constructing the multi-graph $\G$, we simply created $\tau^*(u,v)$ copies of this edge. Thus, summing over all such edges $(u,v) \in X^*$, we get:
\begin{eqnarray}
\text{deg}(v, \E_{X^*}) & = & \sum_{(u,v) \in X^*} \tau^*(u,v) \nonumber  \\
& = & \sum_{(u,v) \in X^*} w^*(u,v) \cdot (8Kc) \nonumber \\
& = & W^*(v, X^*) \cdot (8Kc) \nonumber \\
& \leq & 8Kc \label{label:general:eq:lastmoment}
\end{eqnarray}
The last inequality holds since $X^* \subseteq X$ is the set of edges in $X$ that receive nonzero weights under $w''$, and hence, we have $W^*(v, X^*) \leq W''(v, X^*) = W''(v, X) \leq 1$ (see Observation~\ref{label:general:n:ob:discrete:2} and Lemma~\ref{label:general:n:lm:w:star}).

\medskip
Now, we have two cases to consider. 

\medskip
\noindent {\em Case 1. $v \in V_X$.} In this case, all the multi-edges in $\G$ that are incident upon it originate from the edge-set $X^*$. Thus, we have $\text{deg}(v, \E) = \text{deg}(v, \E_{X^*})$. The observation now follows from equation~\ref{label:general:eq:lastmoment}.

\medskip
\noindent {\em Case 2. $v \in V_Y$.} In this case, note that the node $v$ can be incident upon at most one edge in $M \cap Y$ (for $M$ is a matching). Thus, by our construction of the multigraph, at most $(8Kc - \text{deg}(v, \E_{X^*}))$  multi-edges incident upon $v$ are included in $\E_{M \cap Y}$. We therefore conclude that:
$$\text{deg}(v, \E) = \text{deg}(v, \E_{X^*}) + \text{deg}(v, \E_{M \cap Y}) \leq \text{deg}(v, \E_{X^*}) + 8Kc - \text{deg}(v, \E_{X^*}) = 8Kc.$$
\end{proof}

\begin{observation}
\label{label:general:n:ob:multigraph:2}
In the multi-graph $\G_{X^*}$, there are at most $8K$ multi-edges joining any two given nodes. 
\end{observation}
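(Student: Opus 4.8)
The plan is to observe that this is essentially immediate from the construction of $\G_{X^*}$ together with the bound on $\tau^*$ already established in Observation~\ref{label:general:n:ob:discrete:1}. First I would recall that the multi-edges of $\G_{X^*} = (V, \E_{X^*})$ are obtained by taking, for each edge $e \in X^*$, exactly $\tau^*(e)$ parallel copies of $e$, and nothing else. Since $X^* \subseteq X \subseteq E$ is an ordinary (simple) graph, there is at most one edge of $X^*$ between any two fixed nodes $u$ and $v$; consequently the multi-edges of $\G_{X^*}$ joining $u$ and $v$ are precisely the $\tau^*(u,v)$ copies of $(u,v)$ when $(u,v)\in X^*$, and there are none at all when $(u,v)\notin X^*$.

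It then remains only to quote the upper bound $\tau^*(e) \leq 8K$ for every $e \in X^*$, which is the second inequality of Observation~\ref{label:general:n:ob:discrete:1} (itself a direct consequence of Observation~\ref{label:general:n:ob:close} and the rounding rule~\ref{label:general:n:eq:w*:new}). Combining the two facts gives that the number of multi-edges between any two nodes of $\G_{X^*}$ is either $0$ or $\tau^*(u,v) \leq 8K$, which is the claimed bound. There is no real obstacle here: the statement is a bookkeeping observation whose only content is matching the construction of $\E_{X^*}$ against the already-proved range $K \leq \tau^*(e) \leq 8K$; the one small point to state carefully is that $\G_{X^*}$ contributes no multi-edges between a pair of nodes not adjacent in the simple graph $X^*$, so that the bound on $\tau^*$ of the unique candidate edge suffices.
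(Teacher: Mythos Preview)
Your proposal is correct and follows essentially the same approach as the paper: both argue that if $(u,v)\notin X^*$ there are no multi-edges, and if $(u,v)\in X^*$ the number of copies is $\tau^*(u,v)$, which is bounded by $8K$ via Observation~\ref{label:general:n:ob:discrete:1}. The only cosmetic difference is that the paper phrases the last step as $w^*(u,v)\le 1/c$ implying $\tau^*(u,v)\le 8K$, whereas you quote the bound on $\tau^*$ directly.
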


\begin{proof}
Consider any two nodes $u, v \in V$. If $(u,v) \notin X^*$, then by our construction, there cannot be any multi-edge between $u$ and $v$ in $\G_{X^*}$. Hence, we assume that $(u,v) \in X^*$. Recall that $w^*(u,v) = \tau^*(u,v)/(8Kc)$ and that $\tau^*(u,v)$ copies of the edge $(u,v)$ are added to the multi-graph $\G_{X^*}$. Thus, it remains to show that $\tau^*(u,v) \leq 8K$, which clearly holds since $w^*(u,v) \leq 1/c$ (see Observation~\ref{label:general:n:ob:discrete:1}).
\end{proof}

\begin{observation}
\label{label:general:n:ob:multigraph:3}
In  multi-graph $\G$, we have $\text{deg}(u, \E) + \text{deg}(v, \E) \geq 8Kc$ for each edge $(u,v) \in M \cap Y$.
\end{observation}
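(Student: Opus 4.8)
The plan is to show that for every edge $(u,v) \in M \cap Y$, the number of multi-edges we \emph{added} to $\E_{M \cap Y}$ on account of $(u,v)$, combined with the multi-edges already present at $u$ and $v$ in $\E_{X^*}$, forces the total degree sum at the two endpoints to reach at least $8Kc$. The construction in Step~I already builds this in by design: we created exactly $\max\bigl(0,\ 8Kc - \text{deg}(u, \E_{X^*}) - \text{deg}(v, \E_{X^*})\bigr)$ copies of $(u,v)$. So the proof is essentially an unpacking of that definition, split into two cases according to which branch of the $\max$ is active.

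First I would fix an edge $(u,v) \in M \cap Y$ and let $c_{uv} = \max\bigl(0,\ 8Kc - \text{deg}(u, \E_{X^*}) - \text{deg}(v, \E_{X^*})\bigr)$ be the number of copies of $(u,v)$ placed in $\E_{M \cap Y}$. \textbf{Case 1:} if $\text{deg}(u, \E_{X^*}) + \text{deg}(v, \E_{X^*}) \geq 8Kc$ already, then even ignoring the contribution of $\E_{M\cap Y}$ we have $\text{deg}(u, \E) + \text{deg}(v, \E) \geq \text{deg}(u, \E_{X^*}) + \text{deg}(v, \E_{X^*}) \geq 8Kc$, and we are done. \textbf{Case 2:} otherwise $c_{uv} = 8Kc - \text{deg}(u, \E_{X^*}) - \text{deg}(v, \E_{X^*}) > 0$. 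Here I would use that $(u,v)$ is itself a multi-edge of $\E_{M\cap Y}$ incident on both $u$ and $v$, so each of its $c_{uv}$ copies contributes one to $\text{deg}(u,\E_{M\cap Y})$ and one to $\text{deg}(v,\E_{M\cap Y})$; hence $\text{deg}(u, \E_{M \cap Y}) + \text{deg}(v, \E_{M \cap Y}) \geq 2 c_{uv} \geq c_{uv}$. Combining, $\text{deg}(u, \E) + \text{deg}(v, \E) = \bigl(\text{deg}(u, \E_{X^*}) + \text{deg}(v, \E_{X^*})\bigr) + \bigl(\text{deg}(u, \E_{M \cap Y}) + \text{deg}(v, \E_{M \cap Y})\bigr) \geq \bigl(\text{deg}(u, \E_{X^*}) + \text{deg}(v, \E_{X^*})\bigr) + c_{uv} = 8Kc$, as claimed.

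The only point requiring a moment's care — and the closest thing to an obstacle — is the bookkeeping in Case~2: one must make sure that the $c_{uv}$ copies of $(u,v)$ are genuinely counted once at $u$ and once at $v$ in the degree, and that no double-counting or omission occurs when splitting $\text{deg}(\cdot,\E)$ into the $\E_{X^*}$ and $\E_{M\cap Y}$ parts (which is legitimate since $\E = \E_{X^*} \cup \E_{M\cap Y}$ is a disjoint union of multi-edges by construction). Note also that we only need the weaker bound $\text{deg}(u, \E_{M\cap Y}) + \text{deg}(v, \E_{M\cap Y}) \geq c_{uv}$, so even the factor of $2$ is slack; the argument is robust. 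There is no interaction with the spuriousness machinery or with $w''$ here — this observation is purely a consequence of how the multi-graph was defined in Step~I.
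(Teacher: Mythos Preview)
Your proof is correct and follows essentially the same approach as the paper's: both split into the two cases depending on whether $\text{deg}(u,\E_{X^*})+\text{deg}(v,\E_{X^*})$ already reaches $8Kc$, and in the deficit case use that the construction adds exactly enough copies of $(u,v)$ to make up the shortfall. The paper's version is slightly terser (it writes the deficit as $\mu$ and immediately concludes), but the content is identical.
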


\begin{proof}
If $\text{deg}(u, \E_{X^*}) + \text{deg}(v, \E_{X^*}) \geq 8Kc$, then there is nothing more to prove. So let $\text{deg}(u, \E_{X^*}) + \text{deg}(v, \E_{X^*}) = 8Kc - \mu$ for some integer $\mu \geq 1$. In this case, by our construction, $\mu$ copies of the edge $(u,v)$ get added to $\E_{M \cap Y}$, and hence to $\E = \E_{X^*} \cup \E_{M \cap Y}$. Thus, we again get: $\text{deg}(u, \E) + \text{deg}(v, \E) \geq 8Kc$. 
\end{proof}

\bigskip
\paragraph{Step II: Constructing the fractional matching $w_{\E} : \E \rightarrow [0,1]$ in $\G$.} 
\

\bigskip
\noindent We assign a weight $w_{\E}(e) = (8Kc)^{-1}$ to every multi-edge $e \in \E$. Since every node in $\G$ has degree at most $8Kc$, we infer that:
$$W_{\E}(v, \E) = \text{deg}(v, \E) \cdot 1/(8Kc) \leq 1.$$
This shows that the weights $\{ w_{\E}(e) \}, e \in \E,$ constitute a fractional matching in the multi-graph $\G$. We now state an easy bound on the value of this fractional matching.

\begin{observation}
\label{label:general:n:ob:frac:1}
We have $\sum_{e \in \E} w_{\E}(e) = |\E|/(8Kc)$. 
\end{observation}

As an aside, we intimate the reader that in Step III  we will  construct a matching $M^* \subseteq X^* \cup Y$ of size at least $|\E|/ (8K(c+1))$. Thus, Observation~\ref{label:general:n:ob:frac:1}  will imply that the size of $M^*$ is a $(1+1/c)$-approximation to the value of $w_{\E}$. At the present moment, however, our goal is to compare  the value of $w_{\E}$ against the size of the matching $M$.  We now make two simple observations that will be  very helpful in this regard.

\begin{observation}
\label{label:general:n:ob:g:w}
For every edge $(u,v) \in M \cap Y$, we have $W_{\E}(u) + W_{\E}(v) \geq 1$. 
\end{observation}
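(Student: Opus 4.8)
The plan is to read off the inequality directly from the uniform structure of $w_{\E}$ together with the degree lower bound already established in Step I. Since every multi-edge $e \in \E$ carries the same weight $w_{\E}(e) = (8Kc)^{-1}$, for any node $x \in V$ we have the exact identity $W_{\E}(x) = W_{\E}(x, \E) = \text{deg}(x, \E) \cdot (8Kc)^{-1}$. So the claimed bound $W_{\E}(u) + W_{\E}(v) \geq 1$ is equivalent to $\text{deg}(u, \E) + \text{deg}(v, \E) \geq 8Kc$.

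First I would fix an arbitrary edge $(u,v) \in M \cap Y$. Then I would invoke Observation~\ref{label:general:n:ob:multigraph:3}, which says precisely that $\text{deg}(u, \E) + \text{deg}(v, \E) \geq 8Kc$ for every such edge — this is exactly the point of padding $\E_{M \cap Y}$ with $\max(0, 8Kc - \text{deg}(u, \E_{X^*}) - \text{deg}(v, \E_{X^*}))$ copies of each edge of $M \cap Y$ in the construction of Step I. Multiplying both sides by $(8Kc)^{-1}$ and using the identity above gives
\[
W_{\E}(u) + W_{\E}(v) \;=\; \bigl(\text{deg}(u, \E) + \text{deg}(v, \E)\bigr)\cdot (8Kc)^{-1} \;\geq\; 8Kc \cdot (8Kc)^{-1} \;=\; 1,
\]
which is the desired conclusion.

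There is no real obstacle here: the observation is an immediate corollary of Observation~\ref{label:general:n:ob:multigraph:3} and the definition of the uniform fractional matching $w_{\E}$, and the only thing to be careful about is that $\text{deg}(x, \E)$ counts multi-edges with multiplicity (so that the identity $W_{\E}(x) = \text{deg}(x, \E)/(8Kc)$ holds on the nose), which is consistent with the conventions fixed in Step I.
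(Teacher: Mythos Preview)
Your proof is correct and follows essentially the same idea as the paper's: both arguments rest on the degree-sum lower bound $\text{deg}(u,\E)+\text{deg}(v,\E)\ge 8Kc$ for edges of $M\cap Y$, combined with the uniform weight $w_{\E}(e)=(8Kc)^{-1}$. Your version is in fact slightly more economical, since you invoke Observation~\ref{label:general:n:ob:multigraph:3} directly, whereas the paper re-runs the two-case analysis (on whether $\text{deg}(u,\E_{X^*})+\text{deg}(v,\E_{X^*})\ge 8Kc$) inside the proof of this observation.
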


\begin{proof}
Fix any edge $(u,v) \in M \cap Y$. Since $(u, v) \in Y$, both $u$ and $v$ are at  levels less than $L'$ in the $(\alpha, \beta)$-partition. This means that $u, v \notin V_X$. So there can be no multi-edge in $\E_{X^*}$ between $u$ and $v$.  We consider two possible cases.
\begin{itemize}
\item {\em Case 1.}  $\text{deg}(u, \E_{X^*}) + \text{deg}(v, \E_{X^*}) \geq 8Kc$. Note that every multi-edge in $\E_{X^*}$ gets a weight $1/(8Kc)$ under $w_{\E}$. Since no such multi-edge joins $u$ and $v$, we have: 
\begin{eqnarray*}
W^*(u, \E) + W^*(v, \E) & \geq & W^*(u, \E_{X^*}) + W^*(v, \E_{X^*}) \\
& = & \left(\text{deg}(u, \E_{X^*}) + \text{deg}(v, \E_{X^*})\right) \cdot (8Kc)^{-1} \\
& \geq & 1
\end{eqnarray*}
\item {\em Case 2.}   $\text{deg}(u, \E_{X^*}) + \text{deg}(v, \E_{X^*}) = 8Kc - \mu$, where $\mu > 0$ (say).  In this case, we also make $\mu$ copies of the edge $(u,v)$ and include them in $\E_{M \cap Y}$. Note that every multi-edge in $\E = \E_{X^*} \cup \E_{M \cap Y}$ gets a weight $1/(8Kc)$ under $w_{\E}$. Since there is no multi-edge between $u$ and $v$ in $\E_{X^*}$, we get: 
\begin{eqnarray*}
W^*(u, \E) + W^*(v, \E) & \geq & W^*(u, \E_{X^*}) + W^*(v, \E_{X^*}) + W^*(u, \E_{M \cap Y})\\
& = & \left(\text{deg}(u, \E_{X^*}) + \text{deg}(v, \E_{X^*}) + \mu \right) \cdot (8Kc)^{-1} \\
& \geq & 1
\end{eqnarray*}
\end{itemize}
\end{proof}

\begin{observation}
\label{label:general:n:ob:g:w:1}
We have $W_{\E}(v) \geq h_1 - \epsilon - (1/K)$ for every node $v \in V_X \setminus Q_X$ (see Lemma~\ref{label:general:n:lm:spurious:good}).
\end{observation}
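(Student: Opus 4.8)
The plan is to lower bound $W_{\E}(v)$ by restricting attention to the multi-edges of $\G$ coming from $X^*$, i.e. those in $\E_{X^*}$, and then to track the weight of $v$ through the chain of weight functions $w'' \rightsquigarrow w^* \rightsquigarrow w_{\E}$. Fix a node $v \in V_X \setminus Q_X$. By the definition of the set $Q_X$ in Lemma~\ref{label:general:n:lm:spurious:good}, we have $W''(v, X) \geq h_1 - \epsilon$; and since $X^* = \{e \in X : w''(e) > 0\}$ is exactly the support of $w''$, this is the same as $W''(v, X^*) \geq h_1 - \epsilon$.

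\textbf{Key steps.} First I would pass from $w''$ to the rounded weights $w^*$: by Observation~\ref{label:general:n:ob:discrete:2}, $W^*(v, X^*) \geq W''(v, X^*) - 1/K \geq h_1 - \epsilon - 1/K$. Second, I would identify the $w_{\E}$-weight that $v$ receives from $\E_{X^*}$ with $W^*(v,X^*)$. Indeed, the multi-graph $\G_{X^*}$ is built by placing $\tau^*(e)$ parallel copies of each $e \in X^*$, and $w_{\E}$ assigns weight $(8Kc)^{-1}$ to every multi-edge, so (exactly as in the computation leading to~\eqref{label:general:eq:lastmoment})
\[
W_{\E}(v, \E_{X^*}) \;=\; \sum_{(u,v)\in X^*} \tau^*(u,v)\cdot (8Kc)^{-1} \;=\; \sum_{(u,v)\in X^*} w^*(u,v) \;=\; W^*(v, X^*).
\]
Finally, since $\E_{X^*} \subseteq \E$ and $w_{\E} \geq 0$, we get $W_{\E}(v) = W_{\E}(v, \E) \geq W_{\E}(v, \E_{X^*}) = W^*(v,X^*) \geq h_1 - \epsilon - 1/K$, which is the claim. (Any extra weight $v$ may pick up from multi-edges in $\E_{M\cap Y}$ only helps, so it can be discarded.)

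\textbf{Main obstacle.} There is essentially no hard step here; the only point requiring care is the bookkeeping in the second step — making sure that the per-node $w_{\E}$-weight contributed by $\E_{X^*}$ equals the $w^*$-weight $W^*(v,X^*)$, which uses that exactly $\tau^*(e)$ copies of each edge were inserted and that $w^*(e) = \tau^*(e)/(8Kc)$. Everything else is a direct concatenation of Observation~\ref{label:general:n:ob:discrete:2} and the definition of $Q_X$.
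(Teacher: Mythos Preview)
Your proposal is correct and follows essentially the same route as the paper: both establish $W_{\E}(v,\E_{X^*}) = W^*(v,X^*)$ via the $\tau^*(e)$-copies construction, then chain through Observation~\ref{label:general:n:ob:discrete:2} and the definition of $Q_X$. The only cosmetic difference is that the paper observes $W_{\E}(v,\E) = W_{\E}(v,\E_{X^*})$ exactly (since $v \in V_X$ has no incident $Y$-edges), whereas you use the inequality $W_{\E}(v,\E) \geq W_{\E}(v,\E_{X^*})$, which is equally sufficient.
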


\begin{proof}
Consider any node $v \in V_X \setminus Q_X$. Consider any edge $(u,v) \in X^*$ incident upon $v$. This edge has weight $w^*(u,v) = \tau^*(u,v)/(8Kc)$ under $w^*$ (see equation~\ref{label:general:n:eq:w*:new}). Under $w_{\E}$, on the other hand, we create $\tau^*(u,v)$ copies of this edge and assign a weight $1/(8Kc)$ to each of these copies. The total weight assigned to all these copies, therefore, remains exactly equal to $\tau^*(u,v)/(8Kc)$.  Next, note that since $v \in V_X$, there cannot be any edge in $Y$ that is incident upon $v$. Thus, we have:
$$W_{\E}(v, \E) = W_{\E}(v, \E_{X^*}) = W^*(v, X^*) \geq W''(v, X^*) - (1/K) \geq h_1 - \epsilon - (1/K).$$
The second-last inequality follows from Observation~\ref{label:general:n:ob:discrete:2}. The last inequality follows from Lemma~\ref{label:general:n:lm:spurious:good}.
\end{proof}

We will now bound the sum of the node-weights under $w_{\E}$ by the size of the matching $M$.

\begin{claim}
\label{label:general:n:cl:almost}
We have $|M| \leq (1/h_2) \cdot \sum_{v \in V} W_{\E}(v, \E)$.
\end{claim}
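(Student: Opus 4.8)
The plan is to charge each matched edge $(a,b) \in M$ to the $w_{\E}$-weight of its endpoints, so that the total charge is at most $\sum_{v \in V} W_{\E}(v,\E)$ and each edge receives a charge of at least $h_2$ (after a global rescaling this yields the $1/h_2$ factor). First I would split $M$ into $M \cap Y$ (edges with both endpoints in $V_Y$) and $M \setminus Y$ (edges with at least one endpoint in $V_X$, since every edge of $E$ at level $\geq L'$ has such an endpoint). For an edge $(a,b) \in M \cap Y$, Observation~\ref{label:general:n:ob:g:w} already gives $W_{\E}(a) + W_{\E}(b) \geq 1 \geq h_2$, so these edges are immediately fine. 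The work is in handling $M \setminus Y$.

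For $(a,b) \in M \setminus Y$, at least one endpoint, say $a$, lies in $V_X$. If $a \in V_X \setminus Q_X$, then Observation~\ref{label:general:n:ob:g:w:1} gives $W_{\E}(a) \geq h_1 - \epsilon - 1/K$, and since $h_2 \leq h_1 - \epsilon - 1/K$ (equation~\ref{label:general:eq:h2}, as the prefactor $(1 - 32\delta L^4/\epsilon^2) \leq 1$), we can charge $(a,b)$ to $a$ alone. The remaining bad case is when all the $V_X$-endpoints of $(a,b)$ lie in $Q_X$. Here I would appeal to Lemma~\ref{label:general:n:lm:spurious:good} / Corollary~\ref{label:general:n:cor:lm:spurious:good}: the set $Q_X$ is a $(32\delta L^4/\epsilon^2)$-fraction of $V_X$, hence the number of edges of $M$ both (or whose only) $V_X$-endpoint lands in $Q_X$ is at most $|Q_X|$, which is small relative to $|V_X \setminus Q_X|$ — and $|V_X \setminus Q_X|$ in turn is at most the number of edges of $M$ charged to a node of $V_X \setminus Q_X$ plus $|M \cap Y|$ (since $M$ is a matching, distinct matched edges use distinct endpoints). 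Concretely, $|M| = |M \cap Y| + |M \setminus Y|$, and among $M \setminus Y$ the ones charged to $V_X \setminus Q_X$ number at least $|M \setminus Y| - |Q_X|$; combining with $|Q_X| \leq (32\delta L^4/\epsilon^2)|V_X|$ and $|V_X| \leq (1-32\delta L^4/\epsilon^2)^{-1}|V_X \setminus Q_X|$, the ``lost'' edges are absorbed into the $(1 - 32\delta L^4/\epsilon^2)$ factor in the definition of $h_2$.

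The cleanest way to package this is: (i) lower-bound $\sum_{v \in V} W_{\E}(v,\E) \geq \sum_{v \in V_X \setminus Q_X} W_{\E}(v) + \sum_{(u,v) \in M \cap Y}(W_{\E}(u)+W_{\E}(v))$, using that the two vertex sets in the second sum are disjoint from $V_X$; (ii) bound the first term below by $(h_1 - \epsilon - 1/K)\cdot |V_X \setminus Q_X|$ via Observation~\ref{label:general:n:ob:g:w:1}, and the second by $|M \cap Y|$ via Observation~\ref{label:general:n:ob:g:w}; (iii) bound $|V_X \setminus Q_X| \geq (1 - 32\delta L^4/\epsilon^2)|V_X| \geq (1 - 32\delta L^4/\epsilon^2)(|M| - |M \cap Y|)$, the last step because every edge of $M$ not in $M\cap Y$ has an endpoint in $V_X$ and $M$ is a matching (so $|M \setminus Y| \leq |V_X|$); (iv) assemble, using $h_2 = (1-32\delta L^4/\epsilon^2)(h_1 - \epsilon - 1/K) \leq h_1 - \epsilon - 1/K$ and $h_2 \leq 1$, to get $\sum_v W_{\E}(v,\E) \geq h_2(|M| - |M\cap Y|) + h_2 |M\cap Y| = h_2 |M|$.

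I expect the main obstacle to be bookkeeping the case analysis on which endpoint(s) of a matched edge lie in $V_X$, $V_Y$, or $Q_X$ without double-counting — in particular making sure that the inequality $|M \setminus Y| \leq |V_X|$ and the ``charge each such edge to a distinct $V_X$-vertex'' argument are stated so that the fraction $32\delta L^4/\epsilon^2$ enters exactly where the definition of $h_2$ expects it. Everything else (Observations~\ref{label:general:n:ob:g:w} and~\ref{label:general:n:ob:g:w:1}, Corollary~\ref{label:general:n:cor:lm:spurious:good}, the arithmetic on $h_2$) is already in place, so the proof should be short once the charging scheme is set up correctly.
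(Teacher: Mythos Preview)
Your proposal is correct and follows essentially the same approach as the paper: split $M$ into $M\cap Y$ and $M\setminus Y$, use Observation~\ref{label:general:n:ob:g:w} to bound $|M\cap Y|$ by the $w_{\E}$-weight on $V_Y$, use $|M\setminus Y|\leq |V_X|$, then convert $|V_X|$ into $\sum_{v\in V_X}W_{\E}(v,\E)$ via Corollary~\ref{label:general:n:cor:lm:spurious:good} and Observation~\ref{label:general:n:ob:g:w:1}, picking up exactly the factor $h_2$. The paper writes the last chain as $|V_X|\leq h_2^{-1}\sum_{v\in V_X}W_{\E}(v,\E)$ and then combines with $\sum_{v\in V_Y}W_{\E}(v,\E)$ using $h_2\leq 1$, which is the same computation as your steps (i)--(iv) run in the other direction; your preliminary charging discussion with the $Q_X$ case split is unnecessary once you adopt the clean packaging, but does no harm.
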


\begin{proof}
Every edge $(u,v) \in M \setminus Y$ has at least one endpoint at level $L'$ or higher in the $(\alpha, \beta)$-partition. In other words, every edge $(u,v) \in M \setminus Y$ has at least one endpoint in $V_X$. Since each node in $V_X$ can get matched by at most one edge in $M$, we have:
\begin{equation}
\label{label:general:n:eq:almost}
|M \setminus Y| \leq |V_X|  
\end{equation}
Each edge $(u, v) \in M \cap Y$, on the other hand, has $W_{\E}(u, \E) + W_{\E}(v, \E) \geq 1$ (see Observation~\ref{label:general:n:ob:g:w}). Since both  these endpoints $u, v$ belong to the node-set $V_Y = V \setminus V_X$, summing  over all edges in $M \cap Y$, we get:
\begin{equation}
\label{label:general:n:eq:almost:1}
|M \cap Y| \leq \sum_{v \in V_Y} W_{\E}(v, \E)
\end{equation}
Since $|M| = |M \cap Y| + |M \setminus Y|$, equations~\ref{label:general:n:eq:almost} and~\ref{label:general:n:eq:almost:2} give us the following upper bound on  $|M|$.
\begin{equation}
\label{label:general:n:eq:almost:2}
|M| \leq \sum_{v \in V_Y} W_{\E}(v, \E) + |V_X|
\end{equation}
To complete the proof, we will now upper bound the size of the set $V_X$ by the sum $\sum_{v \in V_X} W_{\E}(v, \E)$. The main idea is simple. Lemma~\ref{label:general:n:lm:spurious:good} shows that only a negligible  fraction of the nodes in $V_X$ belong to the set $Q_X$, and Observation~\ref{label:general:n:ob:g:w:1} shows that for every other node  $v \in V_X \setminus Q_X$, the weight  $W_{\E}(v, \E)$ is very close to one. To be more specific, we infer that:
\begin{eqnarray}
|V_X| & \leq & (1- 32 \delta L^4/\epsilon^2)^{-1} \cdot |V_X \setminus Q_X| \label{label:general:n:eq:almost:10} \\
& \leq & (1-32 \delta L^4/\epsilon^2)^{-1} \cdot (h_1 - \epsilon - 1/K)^{-1} \cdot \sum_{v \in V_X \setminus Q_X} W_{\E}(v, \E) \label{label:general:n:eq:almost:11} \\
& = & h_2^{-1}  \cdot \sum_{v \in V_X \setminus Q_X} W_{\E}(v, \E) \label{label:general:n:eq:almost:12} \\
& \leq & h_2^{-1} \cdot \sum_{v \in V_X} W_{\E}(v, \E) \label{label:general:n:eq:almost:13}
\end{eqnarray}
Equation~\ref{label:general:n:eq:almost:10} follows from Corollary~\ref{label:general:n:cor:lm:spurious:good}. Equation~\ref{label:general:n:eq:almost:11} follows from Observation~\ref{label:general:n:ob:g:w:1}. Equation~\ref{label:general:n:eq:almost:12} follows from equation~\ref{label:general:eq:h2}. 

Since the node-set $V$ is partitioned by the subsets $V_X$ and $V_Y$, from equations~\ref{label:general:n:eq:almost:13} and~\ref{label:general:n:eq:almost:2} we infer that:
\begin{eqnarray}
|M| & \leq & \sum_{v \in V_Y} W_{\E}(v, \E) + h_2^{-1} \cdot \sum_{v \in V_X} W_{\E}(v, \E) \nonumber \\
& \leq & h_2^{-1} \cdot \left(\sum_{v \in V_Y} W_{\E}(v, \E) +  \sum_{v \in V_X} W_{\E}(v, \E)\right) \nonumber \\
& = & h_2^{-1} \cdot \sum_{v \in V} W_{\E}(v, \E) \nonumber
\end{eqnarray}
This concludes the proof of the claim.
\end{proof}

As an immediate corollary, we get a bound on  $|M|$ in terms of the value of the fractional matching $w_{\E}$.

\begin{corollary}
\label{label:general:n:cor:cl:almost}
We have $|M| \leq (2/h_2) \cdot \sum_{e \in \E} w_{\E}(e)$.
\end{corollary}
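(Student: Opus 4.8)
The plan is to obtain this corollary immediately from Claim~\ref{label:general:n:cl:almost} together with the elementary handshake identity for (fractional) matchings on a multi-graph. Claim~\ref{label:general:n:cl:almost} already supplies $|M| \leq (1/h_2) \cdot \sum_{v \in V} W_{\E}(v, \E)$, so the only thing left to establish is the identity $\sum_{v \in V} W_{\E}(v, \E) = 2 \cdot \sum_{e \in \E} w_{\E}(e)$.

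To prove that identity I would unfold the definition $W_{\E}(v, \E) = \sum_{e \in \E \,:\, v \in e} w_{\E}(e)$, where the sum ranges over all multi-edges of $\G$ incident on $v$, counted with their multiplicity in $\E$. Summing over all $v \in V$ and interchanging the order of summation, each multi-edge $e = (x, y) \in \E$ is counted exactly twice: once in the term for the endpoint $x$ and once in the term for $y$. Since every multi-edge of $\E_{X^*}$ is a copy of a genuine edge of $X^* \subseteq E$ and every multi-edge of $\E_{M \cap Y}$ is a copy of a genuine edge of $M \cap Y \subseteq E$, and $E$ contains no self-loops, each of the two endpoint-contributions of a given copy is to a distinct node. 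Hence the double-counting is exact and $\sum_{v \in V} W_{\E}(v, \E) = 2 \sum_{e \in \E} w_{\E}(e)$.

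Combining the two displays gives $|M| \leq (1/h_2) \cdot 2 \sum_{e \in \E} w_{\E}(e) = (2/h_2) \cdot \sum_{e \in \E} w_{\E}(e)$, which is the claimed bound. There is essentially no obstacle here; the only point deserving a moment's care is that we are working with the multi-graph $\G$ rather than a simple graph, but the double-counting argument is insensitive to parallel edges (each copy is its own element of the multiset $\E$), so the identity carries over verbatim.
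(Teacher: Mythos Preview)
Your proposal is correct and matches the paper's own proof, which simply invokes Claim~\ref{label:general:n:cl:almost} together with the handshake identity (``each multi-edge is incident upon two nodes''). Your extra remark about the absence of self-loops in $\E$ is a welcome bit of care but is not needed for the paper's terse version.
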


\begin{proof}
Follows from Claim~\ref{label:general:n:cl:almost} and the fact that each multi-edge is incident upon two nodes.
\end{proof}

\bigskip
\paragraph{Step III: Constructing the  matching $M^* \subseteq X^* \cup Y$.}
\

\bigskip
\noindent Every node in the multigraph $\G_{X^*}$ has degree at most $8Kc$ (see Observation~\ref{label:general:n:ob:multigraph:1}), and any two given nodes  in $\G_{X^*}$ have at most $8K$ multi-edges between them (see Observation~\ref{label:general:n:ob:multigraph:2}). Hence, by Vizing's theorem~\cite{vizing1,vizing2}, there is a proper edge-coloring in $\G_{X^*}$ that uses only $8Kc + 8K = 8K(c+1)$ colors. Let $\lambda \rightarrow \{1, \ldots,  8K(c+1)\}$ be such a coloring, where $\lambda(e)$ denotes the color assigned to the multi-edge $e \in \E_{X^*}$. By definition, two multi-edges incident upon the same node get different colors under $\lambda$.

We  use $\lambda$ to get a proper edge-coloring $\lambda_{\E} : \E \rightarrow \{1, \ldots, 8K(c+1)\}$ of the multi-graph $\G$ with the same number of colors. This is done as follows.
\begin{itemize}
\item First, we set $\lambda_{\E}(e) = \lambda(e)$ for every multi-edge $e \in \E_{X^*}$. This ensure that two multi-edges in $\E_{X^*}$ incident upon the same node  get different colors under $\lambda_{\E}$, for $\lambda$ is already a proper coloring of $\G_{X^*}$. 
\item It remains to assign a color to every multi-edge in $\G_{M \cap Y}$. Towards this end, recall that each multi-edge in $\G_{M \cap Y}$ originates from some edge in $M \cap Y$. Consider any edge $(u,v) \in M \cap Y$. There are two possible cases.
\begin{itemize}
\item {\em Case 1. $\text{deg}(u, \E_{X^*})  + \text{deg}(v, \E_{M \cap Y}) \geq 8Kc$}. 

 In this case, we do not have any copy of the edge $(u,v)$ in $\G_{M \cap Y}$. There is nothing to be done as far as the coloring $\lambda_{\E}$ is concerned.
\item {\em Case 2. $\text{deg}(u, \E_{X^*}) + \text{deg}(v, \E_{M \cap Y}) = 8Kc - \mu$, for some  integer $\mu \geq 1$.}

In this case,  there are $\mu$ copies of the edge $(u,v)$ in  $\G_{M \cap Y}$. While assigning colors to  these copies, we only need to ensure that they do not come into conflict with the colors that have already been  assigned to the multi-edges in $\E_{X^*}$  incident upon $u$ and $v$. But, we are guaranteed that there are exactly $(8Kc - \mu)$ of these potentially troubling edges. Since we have a palette of $(8Kc + 8K)$  colors to begin with, even after assigning the colors to the edges in $\E_{X^*}$, we are left with at least $(8Kc+ 8K) - (8Kc - \mu) = 8K + \mu$ colors that have not been used on any multi-edge incident upon $u$ or $v$. Using these leftovers, we can easily color all the $\mu$ copies of the edge $(u,v)$ without creating any conflict with the colors assigned to the multi-edges in $\E_{X^*}$.
\end{itemize}
\end{itemize}
Thus, there exists a proper coloring of $\G = (V, \E)$ using $8K(c+1)$ colors. Since each multi-edge is assigned one color,  one of these $8K(c+1)$ colors  will hit at least $|\E|/(8K(c+1))$ multi-edges, and  those multi-edges will surely constitute a matching (for they cannot share a common endpoint). This shows the existence of a matching $M^* \subseteq X^* \cup Y$ of size at least $|\E|/(8K(c+1))$. Thus, we get the following claim.

\begin{claim}
\label{label:general:n:cl:int}
There exists a matching $M^* \subseteq X^* \cup Y$ of size at least $\left| \E \right|/(8K(c+1))$. 
\end{claim}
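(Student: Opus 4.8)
The plan is to obtain a proper edge-coloring of the multi-graph $\G = (V, \E)$ that uses only $8K(c+1)$ colors, and then to extract a large monochromatic matching by pigeonhole. First I would apply Vizing's theorem for multi-graphs to the submultigraph $\G_{X^*} = (V, \E_{X^*})$. By Observation~\ref{label:general:n:ob:multigraph:1} every node of $\G_{X^*}$ has degree at most $8Kc$, and by Observation~\ref{label:general:n:ob:multigraph:2} any pair of nodes is joined by at most $8K$ parallel edges; the multi-graph version of Vizing's theorem therefore guarantees a proper edge-coloring $\lambda$ of $\G_{X^*}$ using at most $8Kc + 8K = 8K(c+1)$ colors. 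This is the coloring I would use as the backbone.

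The next step is to extend $\lambda$ to all of $\E = \E_{X^*} \cup \E_{M \cap Y}$ without introducing any new colors. For each multi-edge of $\E_{X^*}$ I keep its color. For the multi-edges of $\E_{M \cap Y}$, I would color them edge-bundle by edge-bundle: fix any $(u,v) \in M \cap Y$ contributing $\mu \geq 1$ parallel copies to $\E_{M \cap Y}$, which by construction happens exactly when $\text{deg}(u, \E_{X^*}) + \text{deg}(v, \E_{X^*}) = 8Kc - \mu$. The only constraint on these $\mu$ copies is that they avoid the colors already used on the $\E_{X^*}$-edges incident to $u$ or $v$, of which there are exactly $8Kc - \mu$; since the palette has $8K(c+1) = 8Kc + 8K$ colors, at least $8K + \mu \geq \mu$ colors remain available, so all $\mu$ copies can be colored distinctly (avoiding conflicts even among themselves). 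Because $M$ is a matching, the vertex-bundles $(u,v) \in M \cap Y$ are vertex-disjoint, so these local colorings never interfere with one another, and the result is a proper edge-coloring $\lambda_\E$ of $\G$ with $8K(c+1)$ colors.

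Finally, since $\G$ has $|\E|$ multi-edges distributed among $8K(c+1)$ color classes, by pigeonhole some color class contains at least $|\E|/(8K(c+1))$ multi-edges; these form a matching in $\G$ (a proper edge-coloring puts no two edges sharing an endpoint in the same class). Collapsing parallel copies back to their originating edges in $X^* \cup Y$ yields a matching $M^* \subseteq X^* \cup Y$ of size at least $|\E|/(8K(c+1))$, as claimed.

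The main subtlety — really the only place any care is needed — is verifying the color-budget bookkeeping in the extension step: one must check that after the $\E_{X^*}$-edges are colored, enough colors survive at $u$ and $v$ simultaneously to accommodate all $\mu$ parallel copies, which is exactly where the choice of $8K$ extra colors (matching the bound on edge multiplicity from Observation~\ref{label:general:n:ob:multigraph:2}) pays off. Everything else is a direct invocation of the multi-graph Vizing theorem and a pigeonhole count.
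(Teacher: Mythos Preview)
Your proposal is correct and follows essentially the same approach as the paper: apply multigraph Vizing to $\G_{X^*}$ using Observations~\ref{label:general:n:ob:multigraph:1} and~\ref{label:general:n:ob:multigraph:2} to get $8K(c+1)$ colors, extend greedily to the $\E_{M\cap Y}$ bundles using the degree bookkeeping $8Kc-\mu$ versus palette size $8K(c+1)$, and finish by pigeonhole. Your explicit remark that the bundles are vertex-disjoint because $M$ is a matching is a point the paper leaves implicit, so your write-up is if anything slightly more careful.
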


\bigskip

\paragraph{Step IV: Wrapping things up (The approximation guarantee)}
\

\bigskip
By Claim~\ref{label:general:n:cl:int}, there exists a matching $M^* \subseteq X^* \cup Y$ of size at least $\left| \E \right|/(8K(c+1))$. By Claim~\ref{label:general:n:cor:cl:almost}, the size of the matching $M$  is at most $(2/h_2)$ times the value of the fractional matching $w_{\E}$ defined on $\G = (V, \E)$. Finally, by Observation~\ref{label:general:n:ob:frac:1}, the value of the fractional matching $w_{\E}$ is exactly $\left| \E \right|/(8Kc)$. Thus, we conclude that:
$$ |M| \leq (2/h_2) \cdot \sum_{e \in \E} w_e(\E) = (2/h_2) \cdot \frac{\left| \E \right|}{(8Kc)} \leq (2/h_2) \cdot (1+1/c) \cdot |M^*| \leq (2/h_2) \cdot (1+\epsilon) \cdot |M^*|.$$
The last inequality holds since $c = \lfloor (L^4/2) \rfloor \geq (1/\epsilon)$ (see Observation~\ref{label:general:n:ob:close} and equation~\ref{label:general:eq:L:floor}). In other words, there exists a matching $M^* \subseteq X^* \cup Y$ whose size is a $(2/h_2)  \cdot (1+\epsilon)$-approximation to the size of the matching $M$. This concludes the proof of Lemma~\ref{label:general:n:lm:main}.

\section{Maintaining the edge-set of a skeleton: Proof of Theorem~\ref{label:general:th:maintain:skeleton}}
\label{label:general:sec:algo}

In this section, we consider the following dynamic setting. We are given an input graph $G = (V, E)$ with $|V| = n$ nodes. In the beginning, the edge-set $E$ is empty. Subsequently, at each time-step, an edge is inserted into (or deleted from) the graph $G$. However, at each time-step, we know for sure that the maximum degree of any node in $G$ is at most $d$. Thus, we have the following guarantee.

\begin{lemma}
\label{label:general:lm:deg:1}
We have $\text{deg}(v, E) \leq d$ for all nodes $v \in V$.
\end{lemma}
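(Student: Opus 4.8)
The plan is essentially to observe that there is nothing to prove beyond unwinding definitions: the inequality $\text{deg}(v,E) \le d$ for all $v \in V$ is exactly the standing hypothesis of the dynamic setting set up in the paragraph immediately preceding the lemma, namely that at every time-step the maximum degree of any node in the input graph $G$ is at most $d$. So the proof is a one-line appeal to that assumption. The reason I would nonetheless isolate it as a lemma is that it is the invariant on which the rest of Section~\ref{label:general:sec:algo} will repeatedly rely when designing and analyzing the skeleton-maintenance algorithm of Theorem~\ref{label:general:th:maintain:skeleton}; recording it explicitly (together with the range $L^4 \le d \le n$ fixed in equation~\ref{label:general:eq:d}) keeps the later arguments clean.

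For completeness I would also recall \emph{why} this hypothesis is legitimate in the setting where Theorem~\ref{label:general:th:maintain:skeleton} is actually invoked: in Section~\ref{label:general:sec:th:main} the skeleton is built for each subgraph $G_i = (V, E_i)$ with $i \in [L', L]$, and Corollary~\ref{label:general:cor:lm:deg:1} guarantees $\text{deg}(v, E_i) \le d_i$ for every node $v$, so one simply applies Theorem~\ref{label:general:th:maintain:skeleton} with $d = d_i$. Hence no step here is an obstacle; the only "work" is the bookkeeping needed to ensure the promised degree bound is always available to the algorithm, which is precisely what this lemma packages.
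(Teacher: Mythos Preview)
Your proposal is correct and matches the paper exactly: the lemma is not given a proof in the paper but is simply stated as the immediate consequence of the standing assumption in the preceding paragraph that the maximum degree in $G$ is at most $d$ at every time-step. Your additional remark about why this hypothesis is met when Theorem~\ref{label:general:th:maintain:skeleton} is invoked (via Corollary~\ref{label:general:cor:lm:deg:1} applied to each $G_i$) is accurate supplementary context, though the paper itself does not spell it out at this point.
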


We will present  an algorithm for maintaining the edge-set $X$ of a skeleton  of $G$ (see Definition~\ref{label:general:def:skeleton}).

\paragraph{Roadmap.} The rest of this section is organized as follows.
\begin{itemize}
\item In Section~\ref{label:general:sec:algo:roadmap}, we present a high level overview of our approach.
\item In Section~\ref{label:general:sec:maintain:prelim}, we define the concepts of a ``critical structure'' and a ``laminar structure'' that will be used in later sections. We also motivate the connection between these two concepts and the notion of a skeleton of the graph.
\item In Section~\ref{label:general:sec:building:block}, we describe two subroutines that will be crucially used by our algorithm.
\item In Section~\ref{label:general:sec:main:algo}, we  present our algorithm for dynamically maintaining  critical  and  laminar structures and analyze some of its basic properties.
\item In Section~\ref{label:general:sec:maintain:skeleton}, we  show that our algorithm in Section~\ref{label:general:sec:maintain:critical}  gives us the edge-set  of some skeleton of $G$. See Theorem~\ref{label:general:th:maintain:skeleton:minor}.
\item In Section~\ref{label:general:sec:analyze:time}, we analyze the amortized update time of our algorithm. See Theorem~\ref{label:general:th:maintain:runtime:main}.
\item Finally, Theorem~\ref{label:general:th:maintain:skeleton} follows from Theorem~\ref{label:general:th:maintain:skeleton:minor} and Theorem~\ref{label:general:th:maintain:runtime:main}. 
\end{itemize}

\subsection{A high level overview of our approach}
\label{label:general:sec:algo:roadmap}

We begin by introducing the  concepts of a ``critical structure'' (see Definition~\ref{label:general:def:critical:structure}) and a ``laminar structure'' (see Definition~\ref{label:general:def:laminar:structure}). Broadly speaking, in a critical structure we classify each node $v \in V$  as being either ``active'' or ``passive''. Barring a few outliers that are called ``$c$-dirty'' nodes, every active (resp. passive) node has degree larger (resp. smaller) than $\epsilon d/L^2$ (resp. $3\epsilon d/L^2$). While the  $c$-dirty nodes are too few in numbers to have any impact on the approximation ratio of our algorithm, the flexibility of having  such nodes turns out to be very helpful if we want to maintain a critical structure in the dynamic setting. 

The edge-set $H \subseteq E$ of a critical structure consists of all the edges incident upon active nodes, and a laminar structure consists of a family of $(L_d+1)$ subsets of these edges $H = H_0 \supseteq H_1 \supseteq \cdots \supseteq H_{L_d}$, where $L_d = \lceil \log_2 (d/L^4) \rceil$ (see equation~\ref{label:general:eq:Li}).  The set $H_j$ is identified as the ``layer $j$'' of the laminar structure. 

Ignoring some low level details, our goal is to reduce the degree of each active node by a factor of half across successive layers. Thus, the degree of an active node $v$ in the last layer $H_{L_d}$ will be very close to $2^{-L_d}$ times its degree among the edges in $H$. Since every edge incident upon an active node is part of $H$, we get $\text{deg}(v, H_{L_d}) \simeq 2^{-L_d} \cdot \text{deg}(v, H) = 2^{-L_d} \cdot \text{deg}(v, E) = (\lambda_d L^4/d) \cdot \text{deg}(v, E)$ for every active node $v \in V$. The last equality holds since $L_d$ is chosen in such a way that $2^{L_d} = \lambda_d d/L^4$ (see equation~\ref{label:general:eq:Li:lambda}).  

We will also try to ensure that the degree of every node (not necessarily active) reduces by at least a factor of half (it is allowed to reduce by more) across successive layers. This will imply that $\text{deg}(v, H_{L_d})$ is at most $2^{-L_d} \cdot \text{deg}(v, H) = (\lambda_d L^4/d) \cdot \text{deg}(v, H) \leq (\lambda_d L^4/d) \cdot \text{deg}(v, E)$ for every node $v \in V$.  Finally, recall that $\text{deg}(v, E) \leq d$ for every node $v \in V$ (see Lemma~\ref{label:general:lm:deg:1}), and that $\text{deg}(v, E) < 3\epsilon d/L^2$ for the passive nodes $v$. Thus, roughly speaking, we will have $\text{deg}(v, H_{L_d}) \leq \lambda_d L^4$ for all nodes $v \in V$, and $\text{deg}(v, H_{L_d}) \leq 3\epsilon \lambda_d L^2$ for all passive nodes $v$.  

The main purpose of the preceding discussion  was to show the link between the concepts of critical and laminar structures on the one hand, and the notion of a skeleton of a graph on the other. Basically, ignoring the  small number of  $c$-dirty nodes, (a) the set of active (resp. passive) nodes correspond to the set of big  (resp. tiny) nodes in Definition~\ref{label:general:def:skeleton}, and (b) the edges in the last layer $H_{L_d}$ of the laminar structure correspond to the edge-set $X$ in Definition~\ref{label:general:def:skeleton}. This shows that in order to maintain the edge-set $X$ of a skeleton of $G$, it suffices to maintain a critical structure  $(A, P, D_c, H)$ and a corresponding laminar structure, where the symbols $A, P, D_c \subseteq V$ respectively denote the sets of active, passive and $c$-dirty nodes. 


\subsection{Critical and laminar structures} 
\label{label:general:sec:maintain:prelim}

We first define the concept of a ``critical structure''. 

\begin{definition}
\label{label:general:def:critical:structure}
A tuple $(A, P, D_c, H)$, where $A, P, D_c \subseteq V$ and $H \subseteq E$, is called a ``critical structure'' of $G = (V, E)$ iff the following five conditions are satisfied:
\begin{enumerate} 
\item We have $\text{deg}(v, E) > \epsilon d/L^2$ for all nodes $v \in A \setminus D_c$. 
\item We have $\text{deg}(v, E) < 3 \epsilon d/L^2$ for all nodes $v \in P \setminus D_c$.
\item We have:
\begin{equation}
\label{label:general:eq:critical:count} 
|D_c| \leq (\delta/(L_d+1)) \cdot |A|.
\end{equation}
\item We have $H = \{ (u,v) \in E : \text{ either } u \in A \text{ or } v \in A\}$.
\item We have $P = V \setminus A$. Thus, the node-set $V$ is partitioned by the subsets $A$ and $P$. 
\end{enumerate}
The nodes in $A$ (resp. $P$) are called ``active'' (resp. ``passive''). The nodes in $D_c$ (resp. $V \setminus D_c$) are called ``$c$-dirty'' (resp. ``$c$-clean''), where the symbol ``$c$'' stands for the term ``critical''. 
\end{definition}
Intuitively, the above definition says  that (a) the active nodes have large degrees in $G$, (b) the passive nodes have small degrees in $G$, (c) some nodes (which are called $c$-dirty) can violate the previous two conditions, but their  number is negligibly small compared to the number of active nodes, and (d) the edge-set $H$ consists of all the edges in $E$ with at least one active endpoint.  Roughly speaking, the nodes in $A \setminus D_c$ will belong to the set of big nodes $B$ in the skeleton, and the nodes in $P \setminus D_c$ will belong to the set of tiny nodes $T$ (see Definition~\ref{label:general:def:skeleton}).  In Section~\ref{label:general:sec:maintain:critical}, we present an algorithm for maintaining  a critical structure $(A, P,  D_c, H)$. Below, we highlight one important property of a critical structure that follows from Definition~\ref{label:general:def:critical:structure}, namely, all the edges in $E$ that are incident upon an active node are part of the set $H$.
\begin{observation}
\label{label:general:ob:edge}
In a critical structure $(A, P,  D_c, H)$, we have $\text{deg}(v, H) = \text{deg}(v, E)$ for all $v \in A$.
\end{observation}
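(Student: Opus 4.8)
\textbf{Proof proposal for Observation~\ref{label:general:ob:edge}.} The plan is to observe that this is an immediate consequence of condition (4) in Definition~\ref{label:general:def:critical:structure}, together with the trivial inclusion $H \subseteq E$. I would fix an arbitrary active node $v \in A$ and argue the two inequalities $\text{deg}(v, H) \leq \text{deg}(v, E)$ and $\text{deg}(v, E) \leq \text{deg}(v, H)$ separately.

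For the first inequality, I would simply note that $H \subseteq E$ by definition of a critical structure, so every edge of $H$ incident upon $v$ is also an edge of $E$ incident upon $v$; hence $\text{deg}(v, H) \leq \text{deg}(v, E)$. For the reverse inequality, I would take any edge $(u,v) \in E$ incident upon $v$. Since $v \in A$, this edge has an active endpoint (namely $v$ itself), so condition (4) of Definition~\ref{label:general:def:critical:structure} — which states $H = \{(u,v) \in E : \text{either } u \in A \text{ or } v \in A\}$ — guarantees $(u,v) \in H$. Thus every edge of $E$ incident upon $v$ lies in $H$, giving $\text{deg}(v, E) \leq \text{deg}(v, H)$. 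Combining the two inequalities yields $\text{deg}(v, H) = \text{deg}(v, E)$.

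There is essentially no obstacle here: the statement is a direct unfolding of the definition, and the only thing to be careful about is keeping the quantification clean (the claim is for \emph{all} $v \in A$, including $c$-dirty active nodes, but the argument above never uses anything beyond $v \in A$, so it applies uniformly). No counting, no use of the degree thresholds or of $|D_c|$, and no appeal to the earlier theorems is needed.
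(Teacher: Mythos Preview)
Your proposal is correct and matches the paper's treatment: the paper does not give a formal proof of this observation, simply noting that it follows directly from condition (4) of Definition~\ref{label:general:def:critical:structure}, which is exactly what you unfold.
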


We next define the concept of a ``laminar structure''.

\begin{definition}
\label{label:general:def:laminar:structure}
Consider a critical structure $(A, P, D_c, H)$ as per Definition~\ref{label:general:def:critical:structure}. A laminar structure w.r.t. $(A, P, D_c, H)$ consists of $(L_d + 1)$ ``layers'' $0, \ldots, L_d$. Each layer $j \in [0, L_d]$ is associated with a tuple $(C_{lj}, D_{lj}, H_j)$ where $C_{lj}, D_{lj} \subseteq A$ and $H_j \subseteq H$. The nodes in $C_{lj}$ (resp. $D_{lj}$) are called $l$-clean (resp. $l$-dirty) at layer $j$, where ``$l$'' stands for the term ``laminar''.
\end{definition}

We will be interested in a laminar structure that satisfies the four invariants described below.

\medskip
The first invariant states that the edge-sets corresponding to successive layers are contained within one another. In other words, the edge-sets $H_0, \ldots, H_{L_d}$ form a laminar family. Furthermore, we have $H = H_0$. 

\begin{invariant}
\label{label:general:inv:laminar:edge}
$H = H_0 \supseteq H_1 \supseteq \cdots \supseteq H_{L_d}$.
\end{invariant}

The second invariant states that at each layer $j \in [0, L_d]$, each active node is classified as either  $l$-dirty or $l$-clean. In other words, the set of active nodes is partitioned by the subsets of $l$-clean and $l$-dirty nodes at each layer $j$. Next, just like the edge-sets $H_j$, the sets of $l$-clean nodes are also contained within one another across successive layers. The sets $C_0, \ldots, C_{l,L_d}$ also form a laminar family. However, unlike the edge-set $H_0$, which is always equal to $H$, the set $C_0$ can be properly contained within the set of active nodes $A$. To be more precise, we define $D_{l0} = A \cap D_c$ to be the set of active nodes that are also $c$-dirty, and $C_{l0}
 = A \setminus D_{l0}$ to be the set of active nodes that are $c$-clean. 
\begin{invariant}
\label{label:general:inv:laminar:contain}
The following conditions hold.
\begin{enumerate}
\item $D_{l0} = A \cap D_c$.
\item $D_{lj} \cap C_{lj} = \emptyset$ and $D_{lj} \cup C_{lj} = A$ for every layer $j \in [0, L_d]$.
\item $D_{l0} \subseteq D_{l1} \subseteq \cdots \subseteq D_{l, L_d}$, and accordingly, $A \supseteq C_0 \supseteq C_1 \supseteq \cdots \supseteq C_{l,L_d}$.
\end{enumerate}
\end{invariant}

Recall the discussion in Section~\ref{label:general:sec:algo:roadmap}. In an ideal scenario, we would like  to reduce the degree of every active node by a factor of $1/2$ across successive layers. Specifically, we would like to have $\text{deg}(v, H_j) = (1/2) \cdot \text{deg}(v, H_{j-1})$ for every active node $v \in A$ and layer $j \in [1, L_d]$. Such a structure, however,  is  very difficult to maintain in a dynamic setting. Accordingly, we introduce some slack, and we are happy as long as  $\text{deg}(v, H_j)$, instead of being exactly equal to $(1/2) \cdot \text{deg}(v, H_{j-1})$, lies in the range $\left[1/(2\eta) \cdot \text{deg}(v, H_{j-1}), (\eta/2) \cdot \text{deg}(v, H_{j-1})\right]$ for $\eta = (1+\gamma/L_d)$. We want every node in the set $C_{lj} \subseteq A$ to satisfy this approximate degree-splitting condition at all the layers $j' \leq j$.  In other words, if an active node $v$ is $l$-clean at layer $j$, then its degree ought to have been split roughly by half across successive layers in the interval $[0,j]$. This motivates our third invariant.

\begin{invariant}
\label{label:general:inv:laminar:clean}
For every layer $j \in [1, L_d]$ and every node $v \in C_{lj}$, we have:
$$(1+\gamma/L_d)^{-1} \cdot \frac{\text{deg}(v, H_{j-1})}{2} \leq \text{deg}(v, H_j) \leq (1+\gamma/L_d) \cdot \frac{\text{deg}(v, H_{j-1})}{2}.$$
\end{invariant}

At this point, the reader might object that the above invariant only specifies the approximate degree-splitting condition at layer $j$ for the nodes in $C_{lj}$, whereas our declared  goal was to enforce this condition at all the layers in the interval $[0, j]$. Fortunately for us, Invariant~\ref{label:general:inv:laminar:contain} comes to our rescue, by requiring that a $l$-clean node at layer $j$ must also $l$-clean at every layer $j' \leq j$. Hence, Invariants~\ref{label:general:inv:laminar:contain} and~\ref{label:general:inv:laminar:clean} together imply the desired guarantee, namely, that the degree of a node $v \in C_{lj}$ is split approximately by half across successive layers in the entire range $[0, j]$.  The next lemma states this simple observation in a formal language.

\begin{lemma}
\label{label:general:lm:clean}
A laminar structure  has for each layer $j \in [1, L_d]$ and each node $v \in C_{lj}$:
$$\frac{deg(v, E)}{2^j \cdot (1+\gamma/L_d)^j} \leq \text{deg}(v, H_j) \leq (1+\gamma/L_d)^j \cdot \frac{\text{deg}(v, E)}{2^j}.$$
\end{lemma}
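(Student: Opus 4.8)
The plan is to prove Lemma~\ref{label:general:lm:clean} by a straightforward induction on the layer index $j$, using Invariant~\ref{label:general:inv:laminar:contain} to reduce the statement about node $v \in C_{lj}$ to a chain of applications of Invariant~\ref{label:general:inv:laminar:clean} along all layers $j' \le j$. First I would fix a layer $j \in [1, L_d]$ and a node $v \in C_{lj}$. By part (3) of Invariant~\ref{label:general:inv:laminar:contain}, the containments $C_{l0} \supseteq C_{l1} \supseteq \cdots \supseteq C_{l,L_d}$ hold, so $v \in C_{lj'}$ for every $j' \in [1, j]$ as well (and $v \in C_{l0}$). Thus Invariant~\ref{label:general:inv:laminar:clean} applies to $v$ at every layer $j' \in [1, j]$, giving
\[
(1+\gamma/L_d)^{-1} \cdot \frac{\text{deg}(v, H_{j'-1})}{2} \leq \text{deg}(v, H_{j'}) \leq (1+\gamma/L_d) \cdot \frac{\text{deg}(v, H_{j'-1})}{2}.
\]

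Next I would chain these inequalities. Multiplying the upper bounds over $j' = 1, \ldots, j$ telescopes the $\text{deg}(v, H_{j'-1})$ factors and yields $\text{deg}(v, H_j) \le (1+\gamma/L_d)^j \cdot \text{deg}(v, H_0)/2^j$; similarly the lower bounds give $\text{deg}(v, H_j) \ge (1+\gamma/L_d)^{-j} \cdot \text{deg}(v, H_0)/2^j$. It then remains to replace $\text{deg}(v, H_0)$ by $\text{deg}(v, E)$. Since $H_0 = H$ by Invariant~\ref{label:general:inv:laminar:edge}, and since $v$ is an active node (as $C_{lj} \subseteq A$), Observation~\ref{label:general:ob:edge} gives $\text{deg}(v, H) = \text{deg}(v, E)$. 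Substituting this in completes the argument, producing exactly
\[
\frac{\text{deg}(v, E)}{2^j \cdot (1+\gamma/L_d)^j} \leq \text{deg}(v, H_j) \leq (1+\gamma/L_d)^j \cdot \frac{\text{deg}(v, E)}{2^j}.
\]

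There is no real obstacle here: the lemma is essentially bookkeeping, and the only subtlety is making sure that the node $v$ is $l$-clean at \emph{all} the intermediate layers so that Invariant~\ref{label:general:inv:laminar:clean} can legitimately be invoked at each of them — this is exactly what the laminarity of the $l$-clean sets (part (3) of Invariant~\ref{label:general:inv:laminar:contain}) buys us, and it is worth stating that step explicitly since it is the conceptual crux the authors flagged in the paragraph preceding the lemma. If one wanted to be fully formal, the telescoping product could itself be phrased as an induction on $j$ with base case $j=1$ (where the statement is literally Invariant~\ref{label:general:inv:laminar:clean} combined with Observation~\ref{label:general:ob:edge}), but the direct product argument is cleaner and I would present it that way.
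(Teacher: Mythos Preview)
Your proposal is correct and follows essentially the same approach as the paper's proof: use the laminarity of the $C_{lj}$'s (Invariant~\ref{label:general:inv:laminar:contain}) to apply Invariant~\ref{label:general:inv:laminar:clean} at every layer $j' \in [1,j]$, telescope the resulting inequalities to relate $\text{deg}(v,H_j)$ to $\text{deg}(v,H_0)$, and then identify $\text{deg}(v,H_0)=\text{deg}(v,E)$ via Invariant~\ref{label:general:inv:laminar:edge} and Observation~\ref{label:general:ob:edge}.
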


\begin{proof}
Consider any layer $j \in [1, L_d]$ and any node $v \in C_{lj}$. Since $C_{lj} \subseteq C_{lk}$ for all $k \in [1, j]$ (see Invariant~\ref{label:general:inv:laminar:contain}), we have $v \in C_{lk}$ for every layer $k \in [1, j]$. Hence, by Invariant~\ref{label:general:inv:laminar:clean} we have:
\begin{equation}
\label{label:general:eq:lm:clean}
\frac{deg(v, H_{k-1})}{2 \cdot (1+\gamma/L_d)} \leq \text{deg}(v, H_{k}) \leq (1+\gamma/L_d) \cdot \frac{\text{deg}(v, H_{k-1})}{2} \ \ \text{ for all } k \in [1,j].
\end{equation}
From equation~\ref{label:general:eq:lm:clean} we infer that:
\begin{equation}
\label{label:general:eq:lm:clean:100}
\frac{deg(v, H_0)}{2^j \cdot (1+\gamma/L_d)^j} \leq \text{deg}(v, H_j) \leq (1+\gamma/L_d)^j \cdot \frac{\text{deg}(v, H_0)}{2^j}.
\end{equation}
Since $C_{lj} \subseteq A$ (see Invariant~\ref{label:general:inv:laminar:contain}) and $v \in C_{lj}$, we have $v \in A$. Hence, Observation~\ref{label:general:ob:edge} and Invariant~\ref{label:general:inv:laminar:edge} imply that $\text{deg}(v, H_{0}) = \text{deg}(v, H) = \text{deg}(v, E)$. The  lemma now follows from  equation~\ref{label:general:eq:lm:clean:100}. 
\end{proof}

We now bound the degree of a $l$-clean node in the  last layer $L_d$ in terms of its  degree  in the graph $G$.

\begin{corollary}
\label{label:general:cor:lm:clean}
For each node $v \in C_{l,L_d}$ in a laminar structure, we have:
$$e^{-\gamma} \cdot (\lambda_d L^4/d) \cdot deg(v, E) \leq \text{deg}(v, H_{L_d}) \leq e^{\gamma} \cdot (\lambda_d L^4/d) \cdot deg(v, E).$$
\end{corollary}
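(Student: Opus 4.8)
The statement follows by instantiating Lemma~\ref{label:general:lm:clean} at the last layer $j = L_d$ and then controlling the factor $(1+\gamma/L_d)^{L_d}$ by $e^{\gamma}$, and the factor $2^{L_d}$ by $d/(\lambda_d L^4)$. Concretely, the plan is as follows. First, I would apply Lemma~\ref{label:general:lm:clean} with $j = L_d$ to the node $v \in C_{l,L_d}$, which gives
$$\frac{\text{deg}(v, E)}{2^{L_d} \cdot (1+\gamma/L_d)^{L_d}} \leq \text{deg}(v, H_{L_d}) \leq (1+\gamma/L_d)^{L_d} \cdot \frac{\text{deg}(v, E)}{2^{L_d}}.$$
Next, I would use the elementary inequality $(1 + x/m)^m \leq e^x$, valid for all real $x \geq 0$ and integer $m \geq 1$, with $x = \gamma$ and $m = L_d$ (note $L_d \geq \gamma \geq 1$ by equation~\ref{label:general:eq:101}, so $\gamma/L_d \leq 1$ and $L_d$ is a positive integer), to conclude $(1+\gamma/L_d)^{L_d} \leq e^{\gamma}$ and hence also $(1+\gamma/L_d)^{-L_d} \geq e^{-\gamma}$.

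Finally, I would substitute the identity $2^{L_d} = d/(\lambda_d L^4)$, which is exactly equation~\ref{label:general:eq:Li:lambda}, so that $1/2^{L_d} = \lambda_d L^4/d$. Plugging this and the bound on $(1+\gamma/L_d)^{L_d}$ into both sides of the displayed chain of inequalities yields
$$e^{-\gamma} \cdot (\lambda_d L^4/d) \cdot \text{deg}(v, E) \leq \text{deg}(v, H_{L_d}) \leq e^{\gamma} \cdot (\lambda_d L^4/d) \cdot \text{deg}(v, E),$$
which is the claimed bound.

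There is essentially no serious obstacle here: the corollary is a direct consequence of Lemma~\ref{label:general:lm:clean} together with two standard estimates. The only point that requires a moment's care is checking that the hypotheses of Lemma~\ref{label:general:lm:clean} are met for $v \in C_{l,L_d}$ at layer $j = L_d$ — in particular that $L_d \geq 1$ so that the lemma applies at the last layer (this holds since $L_d = \lceil \log_2(d/L^4)\rceil \geq 1$ by equation~\ref{label:general:eq:d}, as $d \geq L^4$ forces the argument of the ceiling to be nonnegative, and the parameter equations guarantee $L_d \geq \gamma \geq 1$) — and that the exponential estimate is applied with the correct sign on both the upper and lower bounds. Everything else is routine substitution.
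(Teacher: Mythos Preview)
Your proposal is correct and follows essentially the same approach as the paper: apply Lemma~\ref{label:general:lm:clean} at $j=L_d$, bound $(1+\gamma/L_d)^{L_d}\le e^{\gamma}$, and substitute $2^{L_d}=d/(\lambda_d L^4)$ from equation~\ref{label:general:eq:Li:lambda}. One small slip: you write ``$L_d \ge \gamma \ge 1$'', but $\gamma$ is a small positive constant (indeed $\gamma=\epsilon<1$ in the final parameter choice); the conclusion $L_d\ge 1$ still holds because $L_d$ is a nonnegative integer with $L_d\ge\gamma>0$, and in any case the inequality $(1+\gamma/L_d)^{L_d}\le e^{\gamma}$ does not require $\gamma\ge 1$.
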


\begin{proof}
Fix any node $v \in C_{l,L_d}$. Setting $j = L_d$ in Lemma~\ref{label:general:lm:clean}, we get:
$$\frac{deg(v, E)}{2^j \cdot (1+\gamma/L_d)^{L_d}} \leq \text{deg}(v, H_j) \leq (1+\gamma/L_d)^{L_d} \cdot \frac{\text{deg}(v, E)}{2^j}.$$
The corollary now follows from equation~\ref{label:general:eq:Li:lambda} and the fact that $(1+\gamma/L_d)^{L_d} \leq e^{\gamma}$. 
\end{proof}

We want to ensure that the edge-set $H_{L_d}$ corresponds to the edge-set $X$ of the skeleton in Definition~\ref{label:general:def:skeleton}. Comparing the guarantee stated in Corollary~\ref{label:general:cor:lm:clean} with the condition (5) in Definition~\ref{label:general:def:skeleton}, it  becomes obvious that each $l$-clean node in the last layer $L_d$ will  belong to the set of big, non-spurious nodes $B \setminus S$ in a skeleton. 
Furthermore, pointing towards the similarities between the conditions (1), (2) in Definition~\ref{label:general:def:critical:structure} and the conditions (2), (3) in Definition~\ref{label:general:def:skeleton}, we hope to convince the reader that every active, $c$-clean node will belong to the set $B$, and that  every  passive, $c$-clean node will belong to  the set $T$.  This, however, is not the end of the story, for the condition (1) in Definition~\ref{label:general:def:skeleton} requires that the sets $B$ and $T$ actually partition the set of all nodes $V$, whereas we have not yet assigned the  $c$-dirty nodes  to either $B$ or $T$. There is an easy fix for this. We will simply assign all the $c$-dirty nodes with $\text{deg}(v, E) > \epsilon d/L^2$ to  $B$, and the remaining $c$-dirty nodes to  $T$. This will take care of the conditions (1), (2), (3) in Definition~\ref{label:general:def:skeleton}.

By Invariant~\ref{label:general:inv:laminar:contain}, we have $A \setminus D_c = A \setminus D_{l0} = C_{l0}$. Hence, at this point the reader might raise the following objection: So far we have argued that the nodes in $A \setminus D_c = C_{l0}$ will belong to the set $B$ and that the nodes in $C_{l,L_d}$ will belong to the set $B \setminus S$. Intuitively, this alludes to the fact  that the nodes in $C_{l0} \setminus C_{l, L_d} = D_{l, L_d} \setminus D_{l0}$ will belong to the set $S$ of spurious nodes. As we will see in Section~\ref{label:general:sec:maintain:skeleton}, this is indeed going to be the case. But the condition (4) in Definition~\ref{label:general:def:skeleton} places an upper bound on the maximum number of spurious nodes we can have.\footnote{The reader might point out that the condition (3) in Definition~\ref{label:general:def:critical:structure} places an upper bound on the number of $c$-dirty nodes. However, it is easy to check that the size of the set $D_{l,L_d}$ can potentially be much larger than that of $D_c$.} Till now, we have not stated any such analogous upper bound on the maximum number of nodes in $D_{l, L_d}$. To alleviate this concern, we introduce our fourth and final invariant.

\begin{invariant}
\label{label:general:inv:laminar}
We have: $$ \left| D_{lj} \right| \leq \frac{(j+1)\delta}{(L_d+1)} \cdot |A| \ \ \text{ for every layer } j \in [0, L_d].$$
\end{invariant}

Note that the invariant places an upper bound on the number of $l$-dirty nodes {\em at every layer $j \in [0,L_d]$}. At first glance, this might seem to be too stringent a requirement, for all that we initially asked for was an upper bound on the number of $l$-dirty nodes in layer $L_d$. However, as we shall see in later sections, this invariant will  be very helpful in bounding the amortized update time of our algorithm for maintaining critical and  laminar structures. 

\medskip
The remaining two conditions (6) and (7) in Definition~\ref{label:general:def:skeleton} {\em cannot} be derived from the Definitions~\ref{label:general:def:critical:structure},~\ref{label:general:def:laminar:structure} and the four invariants stated above. Instead, they will follow from some specific properties of our algorithm for  maintaining  the critical and laminar structures. Ignoring some low level details, these properties are: 
\begin{enumerate}
\item Most of the time during the course of our algorithm, the layers  in the range $[1, L_d]$ keep losing edges (i.e., no edge is inserted into $H_j$ for $j \geq 1$) and hence the degrees of the nodes in these layers keep decreasing. We emphasize that this property is not true for $j = 0$. 
\item Consider a time-instant $t$ where we see an exception to the rule (1), i.e., an edge is inserted into some layer $j \in [1, L_d]$ by our algorithm. Then there exists a layer $j' \in [1, j-1]$ with the following property: Just after the time-instant $t$, the degree of every node drops roughly by a factor of $2$ across successive layers in the interval $[j', L_d]$. 
\item Using (1) and (2), we  upper bound  the maximum degree a node can have in any layer $j \in [1, L_d]$.  As a corollary, we can show that $\text{deg}(v, H_{L_d}) \leq \lambda_d L^4 + 2$ for all nodes $v \in V$, and that $\text{deg}(v, H_{L_d}) \leq 3 \epsilon \lambda_{d} L^2 +2$ for all nodes $v \in P \setminus D_c$. As we will see later on, these two bounds will respectively correspond to the conditions (6) and (7) in Definition~\ref{label:general:def:skeleton}. 
\end{enumerate}

\noindent To conclude this section, we note that these crucial   properties of our algorithm are formally derived and stated in Section~\ref{label:general:sec:algo:properties} (see Lemmas~\ref{label:general:lm:monotone:laminar},~\ref{label:general:lm:correct:1} and Corollary~\ref{label:general:cor:lemma:correct:1}).



\subsection{Two basic subroutines}
\label{label:general:sec:building:block}

In this section, we describe two subroutines that will be heavily used by  our algorithm in Section~\ref{label:general:sec:main:algo}.

\paragraph{SPLIT$(\mathcal{E})$.}
The first  subroutine is called SPLIT($\mathcal{E}$) (see Figure~\ref{label:general:fig:split}). This  takes as input an edge-set $\mathcal{E}$ defined over the nodes in $V$. Roughly speaking, its goal is to reduce the degree of every node  by a factor of half, and it succeeds in its goal for all the nodes with degree at least three.  To be more precise, the output of the subroutine is a subset of edges $\mathcal{E}' \subseteq \mathcal{E}$ that satisfies the two properties stated in Figure~\ref{label:general:fig:split}. The subroutine runs in $O(|\mathcal{E}|)$ time, and can be implemented as follows: 
\begin{itemize}
\item Create a new node $v^*$. For every  node $v \in V$ with odd degree $\text{deg}(v, \mathcal{E})$,  create a new edge $(v^*, v)$. Let $\mathcal{E}_{new}$ denote the set of newly created edges, and let $V^* = V \cup \{v^*\}$ and $\mathcal{E}^* = \mathcal{E} \cup \mathcal{E}_{new}$. It is easy to check that every node in the graph $G^* = (V, \mathcal{E}^*)$ has an even degree. Hence, in $O(|\mathcal{E}^*|)$ time, we can compute an Euler tour in $G^*$. We construct the edge-set $\mathcal{E}'$ by first selecting the alternate edges in this Euler tour, and then discarding those selected edges that are part of $\mathcal{E}_{new}$. The subroutine returns the edge-set $\mathcal{E}'$ and runs in  $O(|\mathcal{E}^*|) = O(|\mathcal{E}|)$ time. See the paper~\cite{ipl} for  details.
\end{itemize}

\begin{figure}[htbp]
\centerline{\framebox{
\begin{minipage}{5.5in}
\begin{tabbing}
1. \ \ \ \ \= The input is a set of edges $\mathcal{E}$ defined over the nodes in $V$. \\ \\ 
2. \> The output is a subset  $\mathcal{E}' \subseteq \mathcal{E}$, with the following property: \\ 
\> \ \ \ \ \ \= For every node $v \in \mathcal{V}$, we have  $\frac{\text{deg}(v, \mathcal{E})}{2} -1 \leq \text{deg}(v, \mathcal{E}') \leq \frac{\text{deg}(v, \mathcal{E})}{2} + 1$.\\  \\
3. \> The subroutine runs in $O(|\mathcal{E}|)$ time.
\end{tabbing}
\end{minipage}
}}
\caption{\label{label:general:fig:split} SPLIT$(\mathcal{E})$.}
\end{figure}

\paragraph{REBUILD$(j)$.}
The second subroutine is  called  REBUILD($j$), where $j \in [1, L_d]$. We will ensure that during the course of our algorithm, the conditions stated in Figure~\ref{label:general:fig:rebuild:init} are satisfied just before every call to REBUILD($j$). As the name suggests, the subroutine REBUILD($j$) will then rebuild the layers $k \in [j, L_d]$ from scratch. See Figure~\ref{label:general:fig:rebuild} for details. We will show that at the end of the call to REBUILD($j$), the Invariants~\ref{label:general:inv:laminar:edge},~\ref{label:general:inv:laminar:contain},~\ref{label:general:inv:laminar:clean} will continue to hold, and Invariant~\ref{label:general:inv:laminar} will hold for all layers $j' \in [0, L_d]$. We will also derive some nice properties of this subroutine that will be useful later on.

\begin{figure}[htbp]
\centerline{\framebox{
\begin{minipage}{5.5in}
\begin{tabbing}
1. \ \ \=  The tuple $(A, P, D_c, H)$ is a critical structure as per Definition~\ref{label:general:def:critical:structure}. \\ \\
2. \> The corresponding laminar structure (see Definition~\ref{label:general:def:laminar:structure}) satisfies Invariants~\ref{label:general:inv:laminar:edge},~\ref{label:general:inv:laminar:contain} and~\ref{label:general:inv:laminar:clean}. \\ \\
3. \>  Invariant~\ref{label:general:inv:laminar} holds at every layer $j' \in [0,j-1]$. \\ \\
4. \> Invariant~\ref{label:general:inv:laminar} is violated at layer $j$.
\end{tabbing}
\end{minipage}
}}
\caption{\label{label:general:fig:rebuild:init} Initial conditions just before a call to REBUILD$(j)$, $j \in [1, L_d]$.}
\end{figure}

\begin{figure}[htbp]
\centerline{\framebox{
\begin{minipage}{5.5in}
\begin{tabbing}
1. \ \ \ \= {\sc For} $k = j$ to $L_d$ \\
2. \> \ \ \ \ \ \= $H_{k} \leftarrow \text{SPLIT}(H_{k-1})$. \\
3. \> \> $D_{l,k} \leftarrow D_{l, k-1}$. \\
4. \> \> $C_{l,k} \leftarrow C_{l,k-1}$.
\end{tabbing}
\end{minipage}
}}
\caption{\label{label:general:fig:rebuild} REBUILD$(j)$, $j \in [1, L_d]$. Just before a call to this subroutine, the conditions in Figure~\ref{label:general:fig:rebuild:init} hold.}
\end{figure}

The subroutine SPLIT($\mathcal{E}$) outputs a subset of edges $\mathcal{E}' \subseteq \mathcal{E}$ where the degree of each node is  half times its original degree, plus-minus one (see Figure~\ref{label:general:fig:split}). Since the subroutine REBUILD($j$) iteratively sets $H_{k} \leftarrow \text{SPLIT}(H_{k-1})$ for $k = j$ to $L_d$ (see Figure~\ref{label:general:fig:rebuild}), we can get a nearly tight bound on the degree of a node in a layer $k \in [j, L_d]$ when the subroutine  REBUILD($j$) finishes execution.

\begin{lemma}
\label{label:general:lm:fig:reduce}
Fix any $j \in [1, L_d]$. At the end of a call to the subroutine REBUILD($j$), we have:
$$\frac{\text{deg}(u,H_{k-1})}{2} - 1 \leq \text{deg}(u, H_{k}) \leq \frac{\text{deg}(u, H_{k-1})}{2} + 1 \ \ \text{ for all  } u \in V, k \in [j, L_d].$$ 
\end{lemma}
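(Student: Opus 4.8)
The plan is to unwind the loop in \textsc{Rebuild}$(j)$ and apply the single-step guarantee of \textsc{Split} once per iteration. Recall from Figure~\ref{label:general:fig:rebuild} that \textsc{Rebuild}$(j)$ executes, for $k=j,\dots,L_d$ in increasing order, the assignment $H_k \leftarrow \textsc{Split}(H_{k-1})$ (the assignments to $D_{l,k}$ and $C_{l,k}$ are irrelevant to the claimed degree bound). By the specification of \textsc{Split} in Figure~\ref{label:general:fig:split}, for the input edge-set $\mathcal{E}$ the output $\mathcal{E}' \subseteq \mathcal{E}$ satisfies, for every node $v \in V$,
\[
\frac{\text{deg}(v,\mathcal{E})}{2} - 1 \;\leq\; \text{deg}(v,\mathcal{E}') \;\leq\; \frac{\text{deg}(v,\mathcal{E})}{2} + 1 .
\]

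The main point to be careful about is that when \textsc{Rebuild}$(j)$ sets $H_k \leftarrow \textsc{Split}(H_{k-1})$, the set $H_{k-1}$ that it uses as input is exactly the current value of the layer-$(k-1)$ edge-set at that moment in the loop; for $k = j$ this is the pre-existing $H_{j-1}$ (which is not modified by the call, since the loop only writes to layers $k \ge j$), and for $k > j$ it is the value $H_{k-1}$ that was just produced by the previous iteration. So first I would observe that after the iteration for index $k$ completes, the pair $\big(H_{k-1}, H_k\big)$ present in memory stands exactly in the input/output relation of one \textsc{Split} call. Applying the displayed inequality with $\mathcal{E} = H_{k-1}$ and $\mathcal{E}' = H_k$ and with $v = u$ gives precisely
\[
\frac{\text{deg}(u,H_{k-1})}{2} - 1 \;\leq\; \text{deg}(u,H_k) \;\leq\; \frac{\text{deg}(u,H_{k-1})}{2} + 1
\]
for this particular $k$. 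Since the loop ranges over all $k \in [j,L_d]$ and, once it terminates, no later iteration overwrites $H_{k-1}$ or $H_k$ for $k$ in this range (layers below $j$ are untouched, and each layer in $[j,L_d]$ is written exactly once, in increasing order), this inequality holds simultaneously for every $k \in [j,L_d]$ at the end of the call. As $u \in V$ was arbitrary, the lemma follows.

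I do not expect a genuine obstacle here: the statement is essentially a bookkeeping consequence of iterating \textsc{Split}. The only thing demanding attention is the bookkeeping itself — making explicit that the loop writes each $H_k$ exactly once and in increasing order of $k$, so that at termination every consecutive pair $(H_{k-1},H_k)$ for $k\in[j,L_d]$ simultaneously witnesses one application of the \textsc{Split} guarantee, and that the cumulative (compounded) error is \emph{not} being claimed — the lemma only asserts the one-step bound between adjacent layers, not a bound of $\text{deg}(u,H_{L_d})$ against $\text{deg}(u,H_{j-1})$. (Such a compounded bound, with the $(1+\gamma/L_d)$ slack, is exactly what Invariant~\ref{label:general:inv:laminar:clean} and Lemma~\ref{label:general:lm:clean} later package up, using that the degrees here are large enough for the additive $\pm 1$ to be absorbed into the multiplicative slack; but that is a separate step and not needed for this lemma.)
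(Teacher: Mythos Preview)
Your proposal is correct and matches the paper's own reasoning: the paper treats this lemma as an immediate consequence of the specification of \textsc{Split} in Figure~\ref{label:general:fig:split} applied once per iteration of the \textsc{Rebuild} loop, and does not give a separate formal proof. Your extra care about the loop writing each layer exactly once in increasing order (so that all the one-step bounds hold simultaneously at termination) is exactly the bookkeeping the paper leaves implicit.
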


The next lemma follows directly from the descriptions in Figures~\ref{label:general:fig:rebuild:init} and~\ref{label:general:fig:rebuild}.
\begin{lemma}
\label{label:general:lm:rebuild:minor:1}
\label{label:general:lm:rebuild:minor:2}
 Consider any  call to the subroutine REBUILD($j$) during the course of our algorithm.
 \begin{itemize}
 \item The call does not alter the sets $A, P, D_c$ and $H$. By Figure~\ref{label:general:fig:rebuild:init}, the tuple $(A, P, D_c, H)$ is a critical structure (see Definition~\ref{label:general:def:critical:structure}) just before the call, and it continues to remain so at the end of the call. 
\item The call does not alter the layers $k < j$, i.e., the sets $\{D_{lk}, C_{lk}, H_k\}, k \in [0,j-1],$ do not change. 
\item Finally, at the end of the call we have:  $D_{lk} = D_{l,j-1}$ and $C_{lk} = C_{l,j-1}$ for all layers $k \in [j,L_d]$.
\end{itemize}
\end{lemma}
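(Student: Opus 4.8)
The plan is to prove all three bullet points of the lemma by a direct reading of the pseudocode in Figure~\ref{label:general:fig:rebuild}, keeping careful track of which program variables the loop reads and which it writes, in what order. There is essentially no ``mathematics'' here: the only thing one has to be disciplined about is the read-before-write bookkeeping inside a single loop iteration, and the boundary iteration $k=j$.

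For the first bullet, I would observe that the entire body of \textsc{rebuild}$(j)$ is one for-loop over $k = j, \ldots, L_d$, and in iteration $k$ the only left-hand sides are $H_k$, $D_{l,k}$ and $C_{l,k}$. Since $j \geq 1$, no such left-hand side is ever $H_0 = H$, and the objects $A$, $P$, $D_c$ are never the target of any assignment. Hence $A$, $P$, $D_c$ and $H$ are literally unchanged by the call. By item~1 of Figure~\ref{label:general:fig:rebuild:init}, $(A,P,D_c,H)$ is a critical structure just before the call; all five conditions of Definition~\ref{label:general:def:critical:structure} refer only to $A,P,D_c,H$ (and to $V,E,d$, which are fixed for the duration of the call), so an operation that touches none of them cannot break them, and $(A,P,D_c,H)$ is still a critical structure at the end of the call.

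For the second bullet, the loop index starts at $j$, so for every $k \in [0,j-1]$ none of $H_k$, $D_{l,k}$, $C_{l,k}$ ever appears on the left of an assignment during the call; hence layer $k$ is identical before and after. (Here one should note explicitly that \textsc{split} has no side effects: it takes an edge-set and returns a subset of it, so passing $H_{k-1}$ as its argument does not modify $H_{k-1}$ or any node-set.) For the third bullet, I would prove by induction on $k \in [j, L_d]$ that immediately after iteration $k$ one has $D_{l,k} = D_{l,j-1}$ and $C_{l,k} = C_{l,j-1}$, where the right-hand sides are the pre-call (hence, by the second bullet, the frozen) values of layer $j-1$. Base case $k=j$: lines~3--4 assign $D_{l,j}\leftarrow D_{l,j-1}$, $C_{l,j}\leftarrow C_{l,j-1}$, and layer $j-1$ was not touched by the call. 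Inductive step: iteration $k>j$ assigns $D_{l,k}\leftarrow D_{l,k-1}$ and $C_{l,k}\leftarrow C_{l,k-1}$; by the induction hypothesis the current values of $D_{l,k-1}$, $C_{l,k-1}$ equal $D_{l,j-1}$, $C_{l,j-1}$, and iteration $k$ writes only to layer $k$, so it does not overwrite layer $k-1$ before this read. Since no later iteration touches layer $k$ either, these equalities persist to the end of the call, which is exactly the claimed statement.

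The only subtlety worth flagging — and it is genuinely minor — is exactly this read-before-write ordering within iteration $k$ (trivially fine, as iteration $k$ writes only layer $k$) together with the handling of the boundary iteration $k=j$, where the ``previous layer'' $j-1$ lies outside the loop's range and therefore carries its pre-call value unchanged. Everything else is a mechanical unfolding of Figures~\ref{label:general:fig:rebuild:init} and~\ref{label:general:fig:rebuild}, which is why the lemma is stated to follow ``directly'' from those descriptions.
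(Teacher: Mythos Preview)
Your proposal is correct and follows exactly the approach the paper intends: the paper gives no proof beyond the sentence ``follows directly from the descriptions in Figures~\ref{label:general:fig:rebuild:init} and~\ref{label:general:fig:rebuild},'' and your careful line-by-line reading of the pseudocode (tracking which variables are written, the read-before-write ordering, and the boundary case $k=j$) is precisely the verification that sentence invites. There is nothing to add.
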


\medskip
The proof of the next lemma appears in Section~\ref{label:general:sec:property:rebuild}.

\begin{lemma}
\label{label:general:lm:rebuild:main}
At the end of a call to  REBUILD($j$), Invariants~\ref{label:general:inv:laminar:edge},~\ref{label:general:inv:laminar:contain},~\ref{label:general:inv:laminar:clean} and~\ref{label:general:inv:laminar} are satisfied.
\end{lemma}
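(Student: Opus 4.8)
The plan is to verify the four invariants one at a time, reading off the behaviour of REBUILD$(j)$ from Figure~\ref{label:general:fig:rebuild} together with the preconditions guaranteed in Figure~\ref{label:general:fig:rebuild:init} and Lemma~\ref{label:general:lm:rebuild:minor:1}. Recall that REBUILD$(j)$ only touches layers $k\in[j,L_d]$, setting $H_k\leftarrow\mathrm{SPLIT}(H_{k-1})$, $D_{l,k}\leftarrow D_{l,k-1}$ and $C_{l,k}\leftarrow C_{l,k-1}$ for $k=j,\dots,L_d$ in increasing order; layers $0,\dots,j-1$ are untouched, and by hypothesis they already satisfy Invariants~\ref{label:general:inv:laminar:edge},~\ref{label:general:inv:laminar:contain},~\ref{label:general:inv:laminar:clean}, and satisfy Invariant~\ref{label:general:inv:laminar} for $j'\le j-1$. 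So for each invariant I only need to check the newly rebuilt layers and the interface at layer $j-1/j$.

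For Invariant~\ref{label:general:inv:laminar:edge} (the laminar containment $H=H_0\supseteq\cdots\supseteq H_{L_d}$): the old chain $H_0\supseteq\cdots\supseteq H_{j-1}$ is preserved, and for each $k\ge j$ the call sets $H_k=\mathrm{SPLIT}(H_{k-1})\subseteq H_{k-1}$ by the first property of SPLIT in Figure~\ref{label:general:fig:split}, so the full chain holds. For Invariant~\ref{label:general:inv:laminar:contain}: part~(1), $D_{l0}=A\cap D_c$, concerns layer $0$ only (untouched, and $A,P,D_c$ unchanged by Lemma~\ref{label:general:lm:rebuild:minor:1}), so it still holds; part~(2), the partition $D_{lj}\cup C_{lj}=A$, $D_{lj}\cap C_{lj}=\emptyset$, holds at layers $<j$ by hypothesis and propagates to each $k\ge j$ since $D_{l,k}=D_{l,k-1}$ and $C_{l,k}=C_{l,k-1}$ are copied wholesale; part~(3), the monotone chains $D_{l0}\subseteq\cdots\subseteq D_{l,L_d}$ and $C_{l0}\supseteq\cdots\supseteq C_{l,L_d}$, holds below layer $j$ by hypothesis and becomes an equality chain from layer $j-1$ onward, hence still monotone. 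For Invariant~\ref{label:general:inv:laminar:clean} (approximate halving of $\deg(v,H_k)$ for $v\in C_{l,k}$): at layers $k\le j-1$ it is unchanged; for $k\ge j$ we invoke Lemma~\ref{label:general:lm:fig:reduce}, which gives $\tfrac{1}{2}\deg(u,H_{k-1})-1\le\deg(u,H_k)\le\tfrac12\deg(u,H_{k-1})+1$ for all $u\in V$. To upgrade the additive $\pm1$ into the multiplicative factor $(1+\gamma/L_d)^{\pm1}$ demanded by the invariant, I need a lower bound on $\deg(v,H_{k-1})$ for the relevant nodes $v\in C_{l,k}$; this is where the parameter inequalities come in — in particular the lower degree bound for active nodes (condition~(1) of Definition~\ref{label:general:def:critical:structure} together with Observation~\ref{label:general:ob:edge}), the fact that $C_{l,k}\subseteq C_{l0}=A\setminus D_c$, and equations such as~\ref{label:general:eq:101},~\ref{label:general:eq:102},~\ref{label:general:eq:Li:epoch} relating $L$, $L_d$, $\gamma$, $\epsilon$, $\lambda_d$. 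Concretely, if $v\in C_{l,k}$ then $v$ is active and $c$-clean, so $\deg(v,H_0)=\deg(v,E)>\epsilon d/L^2$, and iterating the halving (using that the bound holds at all layers $k'\le k$, by the containment $C_{l,k}\subseteq C_{l,k'}$) keeps $\deg(v,H_{k'-1})$ large enough — at least $\epsilon d/(L^2 2^{k'})\gtrsim \epsilon\lambda_d L^2$ up to constants for $k'\le L_d$ — that the additive $\pm1$ is absorbed into a $(1+\gamma/L_d)$ factor. I expect this quantitative step to be the main obstacle: one must chase the constants through $2^{L_d}=d/(\lambda_d L^4)$ and the parameter block to confirm the slack $\gamma/L_d$ really dominates the ratio $1/\deg(v,H_{k-1})$.

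Finally, for Invariant~\ref{label:general:inv:laminar} ($|D_{lk}|\le\tfrac{(k+1)\delta}{L_d+1}|A|$ for all $k\in[0,L_d]$): at layers $k\le j-1$ this holds by precondition~(3) of Figure~\ref{label:general:fig:rebuild:init}; and REBUILD$(j)$ sets $D_{l,k}=D_{l,j-1}$ for every $k\ge j$, so $|D_{lk}|=|D_{l,j-1}|\le\tfrac{j\delta}{L_d+1}|A|\le\tfrac{(k+1)\delta}{L_d+1}|A|$ since $k\ge j$, i.e. $k+1\ge j$. (Note that $A$ itself is unchanged by the call, so $|A|$ on the right is the same quantity throughout.) This is where the fact that REBUILD$(j)$ \emph{copies} the dirty set rather than growing it is crucial: it turns the single violated constraint at layer $j$ into a satisfied constraint at layers $j,\dots,L_d$ because the allowed budget $\tfrac{(k+1)\delta}{L_d+1}|A|$ is nondecreasing in $k$. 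Assembling the four checks, all of Invariants~\ref{label:general:inv:laminar:edge}--\ref{label:general:inv:laminar} hold at the end of the call, which is the claim.
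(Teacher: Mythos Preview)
Your proposal is correct and follows essentially the same route as the paper's proof: verify each invariant separately, noting that layers below $j$ are untouched and layers $\ge j$ are copied/rebuilt, with the only nontrivial check being Invariant~\ref{label:general:inv:laminar:clean}, where the additive $\pm 1$ from SPLIT is absorbed into the multiplicative $(1+\gamma/L_d)^{\pm 1}$ factor via a lower bound on $\deg(v,H_{k-1})$ coming from $v\in C_{l,j-1}\subseteq A\setminus D_c$ together with the parameter inequalities. The paper packages this last step as a sequence of short claims (Claims~\ref{label:general:cl:tend}, \ref{label:general:cl:tend:1}, \ref{label:general:cl:quick:1}, \ref{label:general:cl:tend:2}), establishing $\deg(v,H_k)\ge L$ for all $k\in[j-1,L_d]$ and then showing $x\ge L$ suffices to convert $x/2\pm 1$ into the multiplicative range; your sketch anticipates exactly this and flags the constant-chasing as the main work, which is accurate.
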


\subsubsection{Proof of Lemma~\ref{label:general:lm:rebuild:main}}
\label{label:general:sec:property:rebuild}

\newcommand{\ts}{t_{\text{start}}}
\newcommand{\te}{t_{\text{end}}}

Throughout this section, we fix a layer $j \in [1, L_d]$ and a given call to the subroutine REBUILD($j$). We  let $t_{\text{start}}$  denote the point in time just before the call to REBUILD($j$), whereas we let $t_{\text{end}}$ denote the point in time just after the call to REBUILD($j$). We consider all the four invariants one after another.

\bigskip
\paragraph{Proof for Invariant~\ref{label:general:inv:laminar:edge}.} 
\ 

\medskip
\noindent By Figure~\ref{label:general:fig:rebuild:init}, we have $H = H_0 \supseteq H_1 \supseteq \cdots \supseteq H_{j-1}$ at time $\ts$. The call to REBUILD($j$) does not change any of the sets $H, H_0, \ldots, H_{j-1}$, and it iteratively sets $H_{k} \leftarrow \text{SPLIT}(H_{k-1})$ for $k = j$ to $L_d$. Hence, by Figure~\ref{label:general:fig:split}, we have $H_{j-1} \supseteq H_j \supseteq \cdots \supseteq H_{L_d}$ at time $\te$. Putting all these observations together, we have $H = H_0 \supseteq H_1 \supseteq \cdots \supseteq H_{L_d}$ at time $\te$. This shows that Invariant~\ref{label:general:inv:laminar:edge} is satisfied at time $\te$. 

\bigskip
\paragraph{Proof for Invariant~\ref{label:general:inv:laminar:contain}.}
\ 

\medskip
\noindent
By Figure~\ref{label:general:fig:rebuild:init}, at time $\ts$ we have:
\begin{itemize}
\item $D_{l0} = D_c \cap A$. 
\item $D_{lk} \cap C_{lk} = \emptyset$ and $D_{lk} \cup C_{lk} = A$ for all layers $k \in [1,j-1]$.
\item $D_{l0} \subseteq D_{l1} \subseteq \cdots \subseteq D_{l,j-1}$ and $A \supseteq C_{l0} \supseteq C_{l1} \supseteq \cdots \supseteq C_{l,j-1}$. 
\end{itemize}
The call to REBUILD($j$) does not change the sets $D_c$ and $A$.  The sets $\{D_{lk}, C_{lk}\}, k < j,$ are also left untouched. Finally, it is guaranteed that at time $\te$ we have:
\begin{itemize}
\item $D_{lk} = D_{l,j-1}$ and $C_{lk} = C_{l,k-1}$ for all layers $k \in [j, L_d]$.
\end{itemize}
From all these observations, we infer that Invariant~\ref{label:general:inv:laminar:contain} holds at time $\te$.

\bigskip
\paragraph{Proof for Invariant~\ref{label:general:inv:laminar}.}
\

\medskip
\noindent
By Figure~\ref{label:general:fig:rebuild:init} (item 3), at time $\ts$ we have:
$$|D_{lk}| \leq \frac{\delta (k+1)}{(L_d+1)} \cdot |A| \ \ \text{ at every layer } k \in [0, j-1].$$ 
The call to REBUILD($j$) does not change the set $A$. Neither does it change the sets $\{D_{lk}\}, k < j$. It is also guaranteed that at time $\te$ we have $D_{lk} = D_{l,j-1}$ for all layers $k \in [j, L_d]$. Thus, at time $\te$ we have:
$$|D_{lk}| = |D_{l,j-1}| \leq \frac{\delta j}{(L_d+1)} \cdot |A| < \frac{\delta (k+1)}{(L_d+1)} \cdot |A| \ \ \text{ at every layer } k \in [j, L_d].$$
From all these observations, we infer that Invariant~\ref{label:general:inv:laminar} holds at time $\te$.

 \bigskip
 \paragraph{Proof for Invariant~\ref{label:general:inv:laminar:clean}.}
 \
 
 \medskip
\noindent  Invariant~\ref{label:general:inv:laminar:clean} holds at time $\ts$, and a call to REBUILD($j$) does not alter the layers $k  < j$. Thus, we have:
\begin{equation}
\label{label:general:eq:clean:proof:1}
\text{Invariant~\ref{label:general:inv:laminar:clean} holds at time } \te \text{ for every layer } k \in [1, j-1] \text{ and node } u \in C_{l,k}.
\end{equation}
Accordingly, we only need to focus on the layers $k \in [j, L_d]$. Since the set $C_{l,j-1}$ does not change during the call to REBUILD($j$), we will refer to $C_{l,j-1}$ without any ambiguity. Further, the call to REBUILD($j$) ensures that  $C_{l,k} = C_{l,j-1}$ for all $k \in [j, L_d]$ at time $\te$. Thus,  in order to prove that Invariant~\ref{label:general:inv:laminar:clean} holds at time $\te$ in the remaining layers $k \in [j, L_d]$, it suffices to show Claim~\ref{label:general:cl:tend:main}.
\begin{claim}
\label{label:general:cl:tend:main}
Consider any node $v \in C_{l,j-1}$. At time $\te$, we have:
$$(1+\gamma/L)^{-1} \cdot \frac{\text{deg}(v, H_{k-1})}{2} \leq \text{deg}(v, H_k) \leq (1+\gamma/L) \cdot \frac{\text{deg}(v, H_{k-1})}{2} \ \ \text{ at every layer } k \in [j, L_d].$$
\end{claim}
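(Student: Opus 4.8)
The goal of Claim~\ref{label:general:cl:tend:main} is to re-establish Invariant~\ref{label:general:inv:laminar:clean} at the rebuilt layers $[j,L_d]$, that is, that the degree of any $v\in C_{l,j-1}$ halves up to a $(1+\gamma/L_d)$ factor across each such layer at time $\te$. The plan is to convert the additive $\pm 1$ error of the SPLIT subroutine into this multiplicative error; that conversion is legitimate only once the degrees involved are at least $\Theta(L_d/\gamma)$, so the bulk of the work is to certify that $\text{deg}(v,H_m)$ stays that large for every layer $m$ that is touched.

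First I would record the degree profile produced by REBUILD$(j)$. Since the subroutine sets $H_k\leftarrow\text{SPLIT}(H_{k-1})$ for $k=j,\dots,L_d$ and leaves layers $<j$ untouched (Figure~\ref{label:general:fig:rebuild}), Lemma~\ref{label:general:lm:fig:reduce} gives, at time $\te$, $\text{deg}(v,H_k)=\tfrac12\text{deg}(v,H_{k-1})+x_k$ with $|x_k|\le 1$ for every $k\in[j,L_d]$. Unrolling this recursion and bounding the geometric sum of the per-layer errors by $2$ yields $\text{deg}(v,H_m)\ge\text{deg}(v,H_{j-1})/2^{\,m-j+1}-2$ for all $m\in[j,L_d]$.

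Next I would pin down $\text{deg}(v,H_{j-1})$ from below. Since $v\in C_{l,j-1}\subseteq C_{l0}=A\setminus D_c$ by Invariant~\ref{label:general:inv:laminar:contain}, the node $v$ is active and $c$-clean, so condition (1) of Definition~\ref{label:general:def:critical:structure} gives $\text{deg}(v,E)>\epsilon d/L^2$. For $j\ge 2$, Lemma~\ref{label:general:lm:clean} (valid at time $\ts$ since all four invariants are in force there, by Figure~\ref{label:general:fig:rebuild:init}) applied at layer $j-1$ gives $\text{deg}(v,H_{j-1})\ge\text{deg}(v,E)\big/\big(2^{\,j-1}(1+\gamma/L_d)^{\,j-1}\big)$; for $j=1$, Observation~\ref{label:general:ob:edge} together with Invariant~\ref{label:general:inv:laminar:edge} gives $\text{deg}(v,H_0)=\text{deg}(v,E)$, which satisfies the same bound trivially. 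Plugging this into the display above and using $m\le L_d-1$ and $(1+\gamma/L_d)^{\,j-1}\le e^{\gamma}$, I obtain, for every $m\in[j-1,L_d-1]$ at time $\te$,
$$\text{deg}(v,H_m)\ \ge\ \frac{\text{deg}(v,E)}{2^{\,L_d-1}\,e^{\gamma}}-2\ \ge\ \frac{2\epsilon\lambda_d L^2}{e^{\gamma}}-2\ \ge\ \frac{4L_d}{\gamma},$$
where the middle step uses $2^{\,L_d-1}=d/(2\lambda_d L^4)$ (equation~\ref{label:general:eq:Li:lambda}) and $\text{deg}(v,E)>\epsilon d/L^2$, and the last step uses $L^2\ge 8e^{\gamma}L_d/(\epsilon\gamma\lambda_d)$ (equation~\ref{label:general:eq:Li:epoch}) together with $L_d\ge\gamma$ (equation~\ref{label:general:eq:101}).

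Finally, fix $k\in[j,L_d]$ and set $D=\text{deg}(v,H_{k-1})$. The displayed bound (at $m=k-1$) gives $D\ge 4L_d/\gamma$, hence $1\le\tfrac{\gamma}{4L_d}D$; substituting into $\tfrac12 D-1\le\text{deg}(v,H_k)\le\tfrac12 D+1$ shows $\text{deg}(v,H_k)\in\big[\tfrac12 D(1-\tfrac{\gamma}{2L_d}),\ \tfrac12 D(1+\tfrac{\gamma}{2L_d})\big]$, and an elementary check using $\gamma\le L_d$ confirms $1-\tfrac{\gamma}{2L_d}\ge(1+\gamma/L_d)^{-1}$ and $1+\tfrac{\gamma}{2L_d}\le 1+\gamma/L_d$, which is exactly the two-sided inequality claimed. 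I expect the degree lower bound to be the main obstacle, and the one place the parameter constraints really bite: after up to $L_d$ successive halvings the degree of $v$ must not have fallen below the $\Theta(L_d/\gamma)$ threshold, which is precisely what equation~\ref{label:general:eq:Li:epoch} provides, in tandem with the calibration $2^{L_d}=d/(\lambda_d L^4)$ that keeps the last-layer degree at $\Theta(\epsilon\lambda_d L^2)$ instead of letting it collapse to a constant.
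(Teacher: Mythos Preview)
Your proof is correct and follows essentially the same three-step approach as the paper: obtain the additive $\pm 1$ halving guarantee from Lemma~\ref{label:general:lm:fig:reduce}, establish a uniform lower bound on $\text{deg}(v,H_{k-1})$ for $k\in[j,L_d]$, and convert the additive error into the required multiplicative $(1+\gamma/L_d)^{\pm1}$ factor. The only difference is quantitative: the paper's Claim~\ref{label:general:cl:tend:1} asserts the threshold $\text{deg}(v,H_k)\ge L$ (with an empty proof), whereas you prove the slightly weaker threshold $4L_d/\gamma$ directly via Lemma~\ref{label:general:lm:clean} and the parameter constraint~\eqref{label:general:eq:Li:epoch}; since $L\ge 4L_d/\gamma$ by~\eqref{label:general:eq:102}, both thresholds suffice for the conversion step, and in fact your argument supplies the missing justification for the paper's Claim~\ref{label:general:cl:tend:1}.
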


Throughout the rest of this section, we fix a node $v \in C_{l,j-1}$, and focus on proving Claim~\ref{label:general:cl:tend:main} for the node $v$. The main idea is very simple: The call to REBUILD($j$) ensures that upon its return,  $\text{deg}(v, H_k)$ is very close to $(1/2) \cdot \text{deg}(v, H_{k-1})$ for all layers $k \in [j, L_d]$ (see Lemma~\ref{label:general:lm:fig:reduce}). Barring some technical details, this is sufficient to show that $\text{deg}(v, H_k)$ is within the prescribed range at each layer $k \in [j, L_d]$.

\begin{claim}
\label{label:general:cl:tend}
At time $\te$, we have:
$$\frac{\text{deg}(v, H_{k-1})}{2} - 1 \leq \text{deg}(v, H_k) \leq \frac{\text{deg}(v, H_{k-1})}{2} + 1 \ \ \text{ at every layer } k \in [j, L_d].$$
\end{claim}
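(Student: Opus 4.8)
The plan is to derive Claim~\ref{label:general:cl:tend} as a one-line specialization of Lemma~\ref{label:general:lm:fig:reduce}. That lemma already states, for the same call to REBUILD($j$) and at the same time $\te$, that $\frac{\text{deg}(u,H_{k-1})}{2}-1 \le \text{deg}(u,H_k) \le \frac{\text{deg}(u,H_{k-1})}{2}+1$ holds simultaneously for every node $u \in V$ and every layer $k \in [j,L_d]$. Since $v \in C_{l,j-1} \subseteq A \subseteq V$ (the first inclusion is part of Invariant~\ref{label:general:inv:laminar:contain}), I would simply instantiate Lemma~\ref{label:general:lm:fig:reduce} at $u=v$ and read off the two inequalities. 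So the proof is essentially ``apply Lemma~\ref{label:general:lm:fig:reduce} with $u=v$''.

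For completeness I would also recall, in one sentence, why Lemma~\ref{label:general:lm:fig:reduce} itself holds, since that argument is what actually does the work: the body of REBUILD($j$) performs the assignments $H_k \leftarrow \text{SPLIT}(H_{k-1})$ for $k=j,j+1,\dots,L_d$ in increasing order of $k$ (Figure~\ref{label:general:fig:rebuild}), and the $k$-th such assignment modifies only $H_k$, leaving every $H_{k'}$ with $k'\le k-1$ untouched. Hence when the $k$-th iteration runs, $H_{k-1}$ already holds its final, time-$\te$ value, and the SPLIT guarantee from Figure~\ref{label:general:fig:split}, applied with $\mathcal E = H_{k-1}$ and $\mathcal E' = H_k$, gives exactly $\frac{\text{deg}(u,H_{k-1})}{2}-1 \le \text{deg}(u,H_k) \le \frac{\text{deg}(u,H_{k-1})}{2}+1$ for all $u \in V$; chaining this over $k=j,\dots,L_d$ yields the lemma and hence, restricting to $u=v$, the claim.

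The only point that needs any care --- and it qualifies as the ``main obstacle'' only in the loosest sense --- is the bookkeeping of which snapshot of $H_{k-1}$ the quantity $\text{deg}(v,H_{k-1})$ in the statement refers to: one must observe, as above, that this edge set is already frozen at its time-$\te$ value by the moment the $k$-th SPLIT call is executed, so that the per-call SPLIT bound is legitimately a statement about the final layer contents. Beyond that there is nothing to prove. I would also note, without carrying it out here, that Claim~\ref{label:general:cl:tend} is used in turn to establish Claim~\ref{label:general:cl:tend:main}, where the additive $\pm 1$ slack is upgraded to a multiplicative factor $1+\gamma/L$ by invoking a lower bound on $\text{deg}(v,H_{k-1})$ derived from $v$ being $l$-clean together with the parameter choices of Section~\ref{label:general:sec:parameter} --- but that conversion belongs to the proof of the next claim, not to this one.
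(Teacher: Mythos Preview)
Your proposal is correct and takes exactly the same approach as the paper: the paper's proof is the single line ``Follows from Lemma~\ref{label:general:lm:fig:reduce},'' and your write-up is just a more detailed unpacking of that same instantiation at $u=v$.
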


\begin{proof}
Follows from Lemma~\ref{label:general:lm:fig:reduce}.
\end{proof}

\begin{claim}
\label{label:general:cl:tend:1}
At time $\te$, we have $\text{deg}(v, H_k) \geq L$ at every layer $k \in [j-1, L_d]$.
\end{claim}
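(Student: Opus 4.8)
The plan is to trace the degree of the fixed node $v\in C_{l,j-1}$ from layer $j-1$ all the way down to layer $L_d$ and argue that it never falls below $L$, relying on the approximate degree-halving guarantee of SPLIT (Lemma~\ref{label:general:lm:fig:reduce}) together with a lower bound on $\deg(v,E)$.

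First I would pin down the lower bound at the bottom layer that is not rebuilt, namely layer $j-1$. Since $v\in C_{l,j-1}$, Invariant~\ref{label:general:inv:laminar:contain} gives $v\in C_{l0}=A\setminus D_c$, so $v$ is active and $c$-clean, and condition~(1) of Definition~\ref{label:general:def:critical:structure} yields $\text{deg}(v,E)>\epsilon d/L^2$. By Lemma~\ref{label:general:lm:rebuild:minor:1}, the call to REBUILD$(j)$ changes neither $E$ nor the layers $k<j$, so it is enough to reason at time $\te$, and $\text{deg}(v,H_{j-1})$ has the same value at $\te$ as at $\ts$. At $\ts$ the laminar structure satisfies Invariants~\ref{label:general:inv:laminar:edge}--\ref{label:general:inv:laminar:clean} (Figure~\ref{label:general:fig:rebuild:init}), so if $j\ge 2$ then Lemma~\ref{label:general:lm:clean} applied at layer $j-1$ gives $\text{deg}(v,H_{j-1})\ge \text{deg}(v,E)/\bigl(2^{j-1}e^{\gamma}\bigr)$, while if $j=1$ then $\text{deg}(v,H_0)=\text{deg}(v,H)=\text{deg}(v,E)$ by Observation~\ref{label:general:ob:edge} and Invariant~\ref{label:general:inv:laminar:edge}, which is again at least $\text{deg}(v,E)/(2^{j-1}e^{\gamma})$.

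Next I would propagate this bound down through the rebuilt layers $j,\dots,L_d$. By Lemma~\ref{label:general:lm:fig:reduce}, at time $\te$ we have $\text{deg}(v,H_k)\ge \text{deg}(v,H_{k-1})/2-1$ for every $k\in[j,L_d]$. Substituting $y_k=\text{deg}(v,H_k)+2$ turns this into $y_k\ge y_{k-1}/2$, hence $y_k\ge y_{j-1}/2^{\,k-j+1}$, i.e.\ $\text{deg}(v,H_k)\ge \text{deg}(v,H_{j-1})/2^{\,k-j+1}-2$ for $k\in[j,L_d]$; together with the layer-$(j-1)$ bound and $2^{j-1}\cdot 2^{\,k-j+1}=2^{k}\le 2^{L_d}$, this gives, for every $k\in[j-1,L_d]$ (the case $k=j-1$ being immediate),
$$\text{deg}(v,H_k)\ \ge\ \frac{\text{deg}(v,E)}{2^{L_d}\,e^{\gamma}}-2.$$
Now by equation~\ref{label:general:eq:Li:lambda} we have $2^{L_d}=d/(\lambda_d L^4)$, so $\text{deg}(v,E)/2^{L_d}=\lambda_d L^4\,\text{deg}(v,E)/d>\lambda_d L^4\cdot\epsilon/L^2=\epsilon\lambda_d L^2$, whence $\text{deg}(v,H_k)>\epsilon\lambda_d L^2/e^{\gamma}-2$. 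Since $\lambda_d\ge 1/2$ (equation~\ref{label:general:eq:Li:lambda}) and the parameter settings of Section~\ref{label:general:sec:parameter} (in particular equations~\ref{label:general:eq:imp:67},~\ref{label:general:eq:L:3} and~\ref{label:general:eq:Li:epoch}, for $n$ sufficiently large) give $\epsilon L^2/(2e^{\gamma})\ge L+2$, we conclude $\text{deg}(v,H_k)\ge L$ for all $k\in[j-1,L_d]$, which is the claim.

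The argument is essentially bookkeeping, so the only delicate points are (i) the edge case $j=1$, where ``layer $0$'' is just $H$ and one must use Observation~\ref{label:general:ob:edge} rather than Lemma~\ref{label:general:lm:clean}, and (ii) ensuring that the additive $\pm 1$ slack of SPLIT accumulates to only $2$ over the layers $j,\dots,L_d$, which the substitution $y_k=\text{deg}(v,H_k)+2$ makes transparent. The final inequality $\epsilon\lambda_d L^2/e^{\gamma}\ge L+2$ is the one genuinely quantitative step, but it is a routine consequence of the parameter choices (intuitively, $\epsilon L=\Theta(\log n)$ grows while the other constants stay bounded).
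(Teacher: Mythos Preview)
Your proof is correct and, notably, fills a gap in the paper: the paper's own proof environment for this claim is empty. Your argument is exactly the natural one---lower-bound $\text{deg}(v,H_{j-1})$ via Lemma~\ref{label:general:lm:clean} (or Observation~\ref{label:general:ob:edge} when $j=1$) and the fact that $v\in C_{l,j-1}\subseteq A\setminus D_c$ forces $\text{deg}(v,E)>\epsilon d/L^2$, then propagate through the rebuilt layers using Lemma~\ref{label:general:lm:fig:reduce}. The substitution $y_k=\text{deg}(v,H_k)+2$ to control the accumulated additive error is clean and correct.

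One minor remark: the specific equations you cite (\ref{label:general:eq:imp:67}, \ref{label:general:eq:L:3}, \ref{label:general:eq:Li:epoch}) do not by themselves imply $\epsilon\lambda_d L^2/e^{\gamma}\ge L+2$; for instance, \ref{label:general:eq:Li:epoch} only yields $\epsilon\lambda_d L^2/e^{\gamma}\ge 8L_d/\gamma$, and \ref{label:general:eq:102} bounds $8L_d/\gamma$ \emph{above} by $2L$, not below. Your fallback argument---that $\epsilon L=\Theta(\log n)$ grows unboundedly while the remaining factors are bounded, so the inequality holds for $n$ sufficiently large---is the right justification, and it is fully in the spirit of Section~\ref{label:general:sec:parameter}, which explicitly permits such assumptions. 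You might simply state the needed inequality $\epsilon\lambda_d L^2/e^{\gamma}\ge L+2$ as one more parameter constraint and appeal to that section directly, rather than citing equations that do not quite deliver it.
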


\begin{proof}
\end{proof}

\begin{claim}
\label{label:general:cl:quick:1}
Consider any number $x \geq L$. We have:
\begin{itemize}
\item  $x/2 - 1 \geq (x/2) \cdot (1+\gamma/L_d)^{-1}$, and
\item $x/2 + 1 \leq (x/2) \cdot (1+\gamma/L_d)$.
\end{itemize}
\end{claim}

\begin{proof}
To prove the first part of the claim, we infer that:
\begin{eqnarray}
(x/2) - (x/2) \cdot (1+\gamma/L_d)^{-1} & = & \left(\frac{x}{2}\right) \cdot \left(\frac{\gamma/L_d}{1+\gamma/L_d}\right) \nonumber \\
& \geq & \left(\frac{L \gamma}{2 L_d}\right) \cdot \left(\frac{1}{1+\gamma/L_d}\right) \nonumber \label{label:general:eq:quick:1}  \\
& \geq & \frac{L \gamma}{4L_d} \label{label:general:eq:quick:2} \\
& \geq & 1 \label{label:general:eq:quick:3}
\end{eqnarray}
Equation~\ref{label:general:eq:quick:2} follows from equation~\ref{label:general:eq:101}. Equation~\ref{label:general:eq:quick:3} follows from equation~\ref{label:general:eq:102}.

\medskip
To prove the second part of the claim, we infer that:
\begin{eqnarray}
(x/2) \cdot (1+\gamma/L_d) - (x/2)  & = & (x/2) \cdot (\gamma/L_d) \nonumber \\
& \geq & \frac{L\gamma}{2L_d} \nonumber \label{label:general:eq:quick:4}  \\
& \geq & 1 \label{label:general:eq:quick:5} 
\end{eqnarray}
Equation~\ref{label:general:eq:quick:5} follows from equation~\ref{label:general:eq:102}.
\end{proof}

\begin{claim}
\label{label:general:cl:tend:2}
At time $\te$, we have
$$(1+\gamma/L)^{-1} \cdot \frac{\text{deg}(v, H_{k-1})}{2} \leq \text{deg}(v, H_k) \leq (1+\gamma/L) \cdot \frac{\text{deg}(v, H_{k-1})}{2} \ \ \text{ at every layer } k \in [j, L_d].$$
\end{claim}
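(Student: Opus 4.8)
The statement to prove is Claim~\ref{label:general:cl:tend:2}, which upgrades the additive bound of Claim~\ref{label:general:cl:tend} to the multiplicative bound required by Invariant~\ref{label:general:inv:laminar:clean}. The plan is to combine Claim~\ref{label:general:cl:tend} (a $\pm 1$ additive control on $\text{deg}(v,H_k)$ versus $\tfrac12\text{deg}(v,H_{k-1})$), Claim~\ref{label:general:cl:tend:1} (the degrees stay $\geq L$ across all relevant layers), and the elementary numeric inequalities of Claim~\ref{label:general:cl:quick:1} (which say precisely that for $x \geq L$, the interval $[x/2-1, x/2+1]$ is contained in $[(x/2)(1+\gamma/L_d)^{-1}, (x/2)(1+\gamma/L_d)]$). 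The bookkeeping point to be careful about is that the target inequality is stated with $(1+\gamma/L)^{\pm 1}$ whereas Claim~\ref{label:general:cl:quick:1} gives $(1+\gamma/L_d)^{\pm 1}$; since $L_d \leq L$ (equation~\ref{label:general:eq:lastt}) we have $\gamma/L_d \geq \gamma/L$, so $(1+\gamma/L_d) \geq (1+\gamma/L)$ and $(1+\gamma/L_d)^{-1} \leq (1+\gamma/L)^{-1}$ — wait, that goes the wrong way for the lower bound. So in fact I would prove the sharper statement with $L_d$ in place of $L$ everywhere, which immediately implies the stated one only for the \emph{upper} bound; for the lower bound $(1+\gamma/L)^{-1}\cdot\tfrac{\text{deg}(v,H_{k-1})}{2} \leq (1+\gamma/L_d)^{-1}\cdot\tfrac{\text{deg}(v,H_{k-1})}{2} \leq \text{deg}(v,H_k)$ since $(1+\gamma/L)^{-1}\leq(1+\gamma/L_d)^{-1}$. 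So both directions follow from the $L_d$-version; this is the kind of routine check I would carry out explicitly but briefly.

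\textbf{Steps, in order.} First, fix $v \in C_{l,j-1}$ as in the surrounding text and fix a layer $k \in [j, L_d]$. Second, invoke Claim~\ref{label:general:cl:tend:1} to conclude $\text{deg}(v,H_{k-1}) \geq L$ (note $k-1 \in [j-1,L_d]$, which is exactly the range covered by that claim). Third, set $x := \text{deg}(v,H_{k-1})$, so $x \geq L$, and apply Claim~\ref{label:general:cl:quick:1} to get $x/2 - 1 \geq (x/2)(1+\gamma/L_d)^{-1}$ and $x/2 + 1 \leq (x/2)(1+\gamma/L_d)$. Fourth, chain these with Claim~\ref{label:general:cl:tend}:
\[
(1+\gamma/L_d)^{-1}\cdot\frac{\text{deg}(v,H_{k-1})}{2} \;\leq\; \frac{\text{deg}(v,H_{k-1})}{2} - 1 \;\leq\; \text{deg}(v,H_k) \;\leq\; \frac{\text{deg}(v,H_{k-1})}{2} + 1 \;\leq\; (1+\gamma/L_d)\cdot\frac{\text{deg}(v,H_{k-1})}{2}.
\]
Fifth, pass from $L_d$ to $L$: since $L_d \leq L$ (equation~\ref{label:general:eq:lastt}), $(1+\gamma/L_d)^{-1} \leq (1+\gamma/L)^{-1}$ is false — rather $(1+\gamma/L)^{-1} \leq (1+\gamma/L_d)^{-1}$, so the displayed lower bound is weakened (still valid) when $L_d$ is replaced by $L$, and $(1+\gamma/L_d) \geq (1+\gamma/L)$ means the upper bound is \emph{not} immediately weakened — so to get the stated form with $L$ I instead note that the intended reading (consistent with Invariant~\ref{label:general:inv:laminar:clean}, which uses $L_d$) is that the claim holds with $L_d$, and the "$L$" in the claim statement should be read as $L_d$; I would prove it with $L_d$ and remark that this is what is used downstream. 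Sixth, since $k \in [j,L_d]$ was arbitrary, the claim holds at every such layer.

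\textbf{Main obstacle.} There is no real mathematical obstacle here — every ingredient (Claims~\ref{label:general:cl:tend}, \ref{label:general:cl:tend:1}, \ref{label:general:cl:quick:1}, all already available) is designed to slot together. The only genuine work is the proof of Claim~\ref{label:general:cl:tend:1} (the "$\text{deg}(v,H_k) \geq L$ at every layer $k \in [j-1,L_d]$" statement, whose proof is left blank in the excerpt); presumably it follows because $v \in C_{l,j-1}$ means $v$ is $l$-clean at layer $j-1$, so by Lemma~\ref{label:general:lm:clean} its degree at layer $j-1$ is at least $\text{deg}(v,E)/(2^{j-1}(1+\gamma/L_d)^{j-1})$, which together with $v$ being active (hence $\text{deg}(v,E) > \epsilon d/L^2$, condition (1) of Definition~\ref{label:general:def:critical:structure} plus $v$ being $c$-clean) and the parameter inequality~\ref{label:general:eq:Li:epoch} forces the layer-$(j-1)$ degree to exceed $L$; the halving across layers $[j,L_d]$ (Claim~\ref{label:general:cl:tend}) combined with an induction keeps it above $L$ all the way down. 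But Claim~\ref{label:general:cl:tend:1} is \emph{given} (just with an empty proof box), so for the purposes of proving Claim~\ref{label:general:cl:tend:2} I may cite it directly, and the proof of \ref{label:general:cl:tend:2} itself is a three-line chaining argument with one routine $L_d$-versus-$L$ check.
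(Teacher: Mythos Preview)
Your proposal is correct and follows exactly the paper's own proof: fix $k\in[j,L_d]$, set $x=\text{deg}(v,H_{k-1})$, invoke Claim~\ref{label:general:cl:tend:1} to get $x\geq L$, apply Claim~\ref{label:general:cl:quick:1} to nest $[x/2-1,x/2+1]$ inside the multiplicative interval, and chain with Claim~\ref{label:general:cl:tend}. Your worry about $L$ versus $L_d$ is a genuine inconsistency in the paper's typesetting (Claim~\ref{label:general:cl:quick:1} and Invariant~\ref{label:general:inv:laminar:clean} use $L_d$, Claims~\ref{label:general:cl:tend:main} and~\ref{label:general:cl:tend:2} use $L$, and the paper's own proof of Claim~\ref{label:general:cl:tend:2} silently conflates them); you are right that the intended statement is with $L_d$, which is what is used downstream, and that with $L_d$ the chaining is immediate.
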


\begin{proof}
Consider any layer $k \in [j, L_d]$. Let $x$ be the value of $\text{deg}(v, H_{k-1})$ at time $\te$. Hence, at time $\te$, the value of $\text{deg}(v, H_k)$ lies in the range $[x/2 - 1, x/2+1]$ (see Claim~\ref{label:general:cl:tend}). Since $x \geq L$ (see Claim~\ref{label:general:cl:tend:1}), we infer that the range $[x/2-1, x/2+1]$ is completely contained within the range $\left[ (1+\gamma/L)^{-1} \cdot (x/2), (1+\gamma/L) \cdot (x/2)\right]$ (see Claim~\ref{label:general:cl:quick:1}). Thus, at time $\te$, the value of $\text{deg}(v, H_k)$ also falls within the range $\left[ (1+\gamma/L)^{-1} \cdot (x/2), (1+\gamma/L) \cdot (x/2)\right]$. This concludes the proof of the claim.
\end{proof}

Claim~\ref{label:general:cl:tend:main} follows from Claim~\ref{label:general:cl:tend:2}.

\subsection{Our algorithm for maintaining  critical and  laminar structures}
\label{label:general:sec:main:algo}
\label{label:general:sec:maintain:critical}
\label{label:general:sec:maintain:laminar}

We use the term ``edge-update'' to refer to the insertion/deletion of an edge in the graph $G = (V, E)$. Thus, in a dynamic setting the graph $G = (V, E)$ changes due to a sequence of edge-updates. In this section, we  present an algorithm that maintains a critical structure  and a laminar structure in such a dynamic setting, and ensures that all the invariants from Section~\ref{label:general:sec:maintain:prelim} are satisfied.  Before delving into technical details, we first present a high level overview of our approach.

\paragraph{A brief outline of our algorithm.}
 Our algorithm works in ``phases'', where the term ``phase'' refers to a contiguous block of edge-updates.  In the beginning of a phase, there are no dirty nodes, i.e., we have $D_c = \emptyset$ and $D_{lj} = \emptyset$ for all $j \in [0, L_d]$, and all the four invariants from Section~\ref{label:general:sec:maintain:prelim} are satisfied.

\medskip
\noindent {\em The algorithm in the middle of a phase.} We next describe how to modify the critical structure after  an edge-update in the middle of a phase. Towards this end, consider an edge-update that corresponds to the insertion/deletion of the edge $(u,v)$ in $G = (V, E)$. This edge-update changes the degrees of the endpoints $u, v$, and this  might lead to some node $x \in \{u,v\}$ violating   one of the first two conditions in Definition~\ref{label:general:def:critical:structure}. But this can happen only if the node $x$ was $c$-clean before the edge-update. Hence,   the easiest way to fix the problem, if there is one, is to change the status of the node to $c$-dirty. And  we do exactly the same. Next, we  ensure that $H$ remains the set of edges incident upon the active nodes. Towards this end, we check if either of the endpoints $u, v$ is active. If the answer is yes, then the edge $(u,v)$ is inserted into (resp. deleted from) $H$ along with the insertion (resp. deletion) of $(u,v)$ in $G = (V, E)$.

From the above discussion, it is obvious that the sets of active and passive nodes do not change in the middle of a phase. And we ensure that $H$ remains the set of edges incident upon the active nodes. In contrast, as the phase goes on, we  see more and more  $c$-clean nodes becoming $c$-dirty. The set of $c$-dirty nodes, accordingly, keeps getting larger along with the passage of time. The phase terminates when the size of the set $D_c$ exceeds the threshold $(\delta/(L_d+1)) \cdot |A|$, thereby violating the condition (3) in Definition~\ref{label:general:def:critical:structure}. 

We next show how to maintain the laminar structure  in the middle of a phase.  
\begin{enumerate}
\item Whenever an edge $(u,v)$ is inserted into $H$, we set $H_0 \leftarrow H_0 \cup \{(u,v)\}$, and whenever an edge $(u,v)$ is deleted from the graph, we set $H_j \leftarrow H_j \setminus \{(u,v)\}$ for all layers $j \in [0, L_d]$. This ensures that we always have $H = H_0 \supseteq H_1 \supseteq \cdots \supseteq H_{L_d}$. Furthermore, whenever an active node $v$ becomes $c$-dirty, we set $D_{lj} \leftarrow D_{lj} \cup \{v\}$ for all layers $j \in [0, L_d]$. This ensures that Invariant~\ref{label:general:inv:laminar:contain} is satisfied.
\item Whenever we see a node $v \in C_{l,j}$ violating Invariant~\ref{label:general:inv:laminar:clean} at layer $j \in [1, L_d]$, we set $D_{l,k} \leftarrow D_{l,k} \cup \{v\}$ and $C_{l,k} \leftarrow C_{l,k} \setminus \{v\}$ for all layers $k \in [j, L_d]$. This restores the validity of Invariant~\ref{label:general:inv:laminar:clean} without tampering with Invariant~\ref{label:general:inv:laminar:contain}.
\item Whenever Invariant~\ref{label:general:inv:laminar} gets violated, we find the smallest index $j$ at which $|D_{lj}| > (\delta (j+1)/(L_d+1)) \cdot |A|$, and call the subroutine REBUILD($j$) (see Figure~\ref{label:general:fig:rebuild}).
\end{enumerate}
\noindent Fix any layer $j \in [1, L_d]$. A call to REBUILD($j'$) with $j' > j$ does not affect the layers in the range $[0, j]$. Consequently,  the set $D_{lj}$ (resp. $C_{lj}$) keeps getting bigger (resp. smaller) till the point arrives where  REBUILD($j'$) is called for some $j' \leq j$.\footnote{For $j = 0$, we have $D_{l0} = D_c \cap A$. Since the set $A$ remains unchanged and the set $D_c$ keeps getting bigger along with the passage of time in the middle of a phase, we conclude that the same thing happens with the set $D_{l0}$. } To be more specific, in the middle of a phase, a node $v$ can switch from $D_{lj}$  to $C_{lj}$ only if REBUILD($j'$) is called for some $j' \leq j$.  In a similar vein, an edge  gets inserted into $H_j$ only if REBUILD($j'$) is called for some $j' \leq j$; and at other times in the middle of a phase the set  $H_j$ can only keep shrinking (see item (1) above). To summarize, the reader should note that the algorithm satisfies some nice monotonicity properties. These will be very helpful in our analysis in later sections.

\medskip
\noindent {\em Dealing with the termination of a phase.}
When a phase terminates, we need to do some cleanup work before starting the next phase. Specifically, recall that a phase terminates when the number of $c$-dirty nodes goes beyond the acceptable threshold. At this stage,  we  shift  some active $c$-dirty nodes $v \in A \cap D_c$ from the set $A$ to the set $P$, shift some passive $c$-dirty nodes $v \in P \cap D_c$ from the set $P$ to the set $A$, and finally set $D_c = \emptyset$. At the end of these operations, we have $D_c = \emptyset$, and we perform these operations in such a way that Definition~\ref{label:general:def:critical:structure} is satisfied. Next, we   construct the entire laminar structure from scratch by calling the subroutine REBUILD($1$) (see Figure~\ref{label:general:fig:rebuild}). Subsequently, we start the next phase. 

\paragraph{Roadmap.} We will now present the algorithm in details. The rest of this section is organized as follows.
\begin{itemize}
\item In Section~\ref{label:general:sec:phase:begin}, we state the initial conditions that hold in the beginning of a phase.
\item In Section~\ref{label:general:sec:phase:middle}, we describe our algorithm in the middle of a phase.
\item In Section~\ref{label:general:sec:phase:end}, we describe the cleanup that needs to be performed at the end of a phase.
\item In Section~\ref{label:general:sec:algo:properties}, we derive some useful properties of our algorithm.
\end{itemize}

\subsubsection{Initial conditions in the beginning of a phase}
\label{label:general:sec:phase:begin}

Just before  the first edge insertion/deletion of a phase, the following conditions are satisfied.
\begin{enumerate}
\item There are no $c$-dirty nodes, i.e., $D_c = \emptyset$. 
\item There are no $l$-dirty nodes at any layer $j \in [0, L_d]$, i.e., $D_{lj} = \emptyset$ for all $j \in [0, L_d]$.
\end{enumerate}

\noindent In the beginning of the very first phase, the graph $G = (V, E)$ has an empty edge-set. At that instant, every node is passive and the conditions (1) and (2) are satisfied.

\medskip
By induction hypothesis, suppose that the conditions (1) and (2) are satisfied when  the $k^{th}$ phase is about to begin, for some integer $k \geq 1$. We will process the edge insertions/deletions in $G$ during the $k^{th}$ phase using the algorithm described  in Sections~\ref{label:general:sec:phase:middle} and~\ref{label:general:sec:phase:end}. This algorithm will ensure that the conditions (1) and (2) are satisfied at the start of the $(k+1)^{th}$ phase.

\subsubsection{Handling edge insertion/deletions in the middle of a phase}
\label{label:general:sec:phase:middle}
Suppose that an edge $(u,v)$ is inserted into (resp. deleted from) the graph $G = (V, E)$ in the middle of a phase. In this section, we will show how to handle this edge-update. 

\bigskip
\paragraph{Step I: Updating the critical structure.}
\
\medskip
\noindent We update the critical structure $(A, P, D_c, H)$ as follows.
\begin{itemize}
\item If the edge $(u,v)$ has at least one active endpoint, i.e., if $\{u,v \} \cap A \neq \emptyset$, then:
\begin{itemize}
\item If we are dealing with the insertion of the edge $(u,v)$ into $G$, then  set $H \leftarrow H \cup \{(u,v)\}$. 

Else if we are dealing with the deletion of the edge $(u,v)$ from $G$, then  set $H \leftarrow H \setminus \{(u,v)\}$.  
\end{itemize} 
This  ensures that the condition (4) in Definition~\ref{label:general:def:critical:structure} remain satisfied. 
\end{itemize}
Next, note that the edge-update  changes the degrees of the endpoints $u, v$. Hence,  to satisfy  the conditions (1) and (2) in Definition~\ref{label:general:def:critical:structure}, we  perform the following operations on each node $x \in \{u, v\}$. 
\begin{itemize}
\item If $x \in P \setminus D_c$ and $\text{deg}(x, E) \geq 3\epsilon d/L^2$, then set $D_c \leftarrow D_c \cup \{x\}$. 

Else if $x \in A \setminus D_c$ and $\text{deg}(x, E) \leq \epsilon d/L^2$, then set $D_c \leftarrow D_c \cup \{x\}$. 
\end{itemize}
At this point, the conditions (1), (2), (4) and (5) in Definition~\ref{label:general:def:critical:structure} are satisfied. But, we cannot yet be sure about the remaining condition (3). The next step in our algorithm will resolve this issue.

\bigskip
\paragraph{Step II: Deciding if we have to terminate the phase.}
\

\medskip
\noindent
Step I of the algorithm  might have made changed the status of one or both the endpoints $u, v$ from $c$-clean to $c$-dirty. If this is the case, then this increases the number of $c$-dirty nodes. To find out if this violates the condition (3) in Definition~\ref{label:general:def:critical:structure}, we check if $|D_c| > (\delta/(L_d+1)) \cdot |A|$.  
\begin{itemize}
\item If $|D_c| > (\delta/(d+1)) \cdot |A|$, then the number of $c$-dirty nodes have increased beyond the acceptable threshold, and so we terminate the current phase and move on to Section~\ref{label:general:sec:phase:end}. Else if $|D_c| \leq (\delta/(L_d+1)) \cdot |A|$, then all the conditions in Definition~\ref{label:general:def:critical:structure} are satisfied, and we move on to Step III.
\end{itemize}

\bigskip
\paragraph{Step III: Updating the laminar structure.}
If we have reached this stage, then we know for sure that the critical structure $(A, P, D_c, H)$ now satisfies all the conditions in Definition~\ref{label:general:def:critical:structure}. It only remains to update the laminar structure, which is done as follows.
\begin{enumerate}
\item For each $x \in \{u, v\}$:
\begin{itemize}
\item If $x \in A$  and  Step I has converted the node $x$ from $c$-clean to $c$-dirty, then:

 For all $k \in [0, L_d]$, set $D_{lk} \leftarrow D_{lk} \cup \{x\}$ and $C_{lk} \leftarrow  C_{lk} \setminus \{x\}$. 
\end{itemize}
This ensures that $D_{l0}$ remains equal to $D_c \cap A$ and that  Invariant~\ref{label:general:inv:laminar:contain} continues to hold.
\item If Step I inserts the edge $(u,v)$ into $H$, then set $H_0 \leftarrow H_0 \cup \{(u,v)\}$.

Else if Step I deletes the edge $(u,v)$ from $H$, then set $H_j \leftarrow H_j \setminus \{(u,v)\}$ for all layers $j \in [0, L_d]$.

This  ensures that Invariant~\ref{label:general:inv:laminar:edge} is satisfied.  Also note that if the edge-update under consideration is an insertion, this does not affect the edges in the  layers $j \geq 1$.
\item The previous operations might have changed the degree of some  endpoint $x \in \{u, v\}$ in the laminar structure, and hence, we have ensure that no node $x \in \{u,v\}$ violates Invariant~\ref{label:general:inv:laminar:clean}. Accordingly, we call the subroutine CLEANUP($x$) for each node $x \in \{u,v\}$. See Figure~\ref{label:general:fig:ensure:clean}. The purpose of these calls is to restore Invariant~\ref{label:general:inv:laminar:clean} without affecting Invariant~\ref{label:general:inv:laminar:contain}.

Due to the calls to CLEANUP($u$) and CLEANUP($v$), the sets $D_{lj}$ get bigger and the set $C_{lj}$ gets smaller. In other words, a node that is $l$-dirty at some layer $j$ never becomes $l$-clean at layer $j$ due to these calls. This monotonicity property will be very helpful in the analysis of our algorithm. 
\item The calls to CLEANUP($u$) and CLEANUP($v$)  might have increased the number of $l$-dirty nodes at some layers, and hence, Invariant~\ref{label:general:inv:laminar} might get violated at some layer $j \in [0, L_d]$.  

Next, note that since we gone past Step II,  we must have $|D_c| \leq \delta/(L_d+1) \cdot |A|$. Since $D_{l0} = D_c \cap A$ (see item (1) above), we have $|D_{l0}| \leq |D_c| \leq \delta /(L_d+1) \cdot |A|$. In other words, we only need to be concerned about  Invariant~\ref{label:general:inv:laminar} for layers $j > 0$. 

To address this concern, we now call the subroutine VERIFY(). See Figure~\ref{label:general:fig:insert}. 

Lemma~\ref{label:general:lm:insert:laminar} shows that Invariant~\ref{label:general:inv:laminar} holds for all layers $j \in [0, L_d]$ at the end of the call to VERIFY(). At this stage  all the invariants hold, and we are ready to handle the next edge-update in $G$.
\end{enumerate}

\begin{figure}[htbp]
\centerline{\framebox{
\begin{minipage}{5.5in}
\begin{tabbing}
1. \ \ \ \= {\sc For} $j = 1$ to $L_d$: \\
2. \> \ \ \ \ \ \ \= {\sc If} $x \in C_{lj}$ and the node $x$ violates Invariant~\ref{label:general:inv:laminar:clean} at layer $j$, {\sc Then} \\
3. \> \> \ \ \ \ \ \ \ \ \ \ \ \ \ \= Set $D_{lj'} \leftarrow D_{lj'} \cup \{x\}$ and $C_{lj'} \leftarrow C_{lj'} \setminus \{x\}$ for all $j' \in [j, L_d]$. \\
4. \> \> \> RETURN.
\end{tabbing}
\end{minipage}
}}
\caption{\label{label:general:fig:ensure:clean} CLEANUP($x$).}
\end{figure}

\begin{figure}[htbp]
\centerline{\framebox{
\begin{minipage}{5.5in}
\begin{tabbing}
1. \ \ \ \= {\sc For} $j = 1$ to $L_d$: \\
2. \> \ \ \ \ \ \ \ \= {\sc If} $|D_{lj}| > (\delta(j+1)/(L_d+1)) \cdot |A|$, {\sc Then} \\
3. \> \> \ \ \ \ \ \ \ \ \ \= Call the subroutine REBUILD($j$). See Figure~\ref{label:general:fig:rebuild}. \\
4. \> \> \> RETURN. 
\end{tabbing}
\end{minipage}
}}
\caption{\label{label:general:fig:insert} VERIFY().}
\end{figure}

\begin{lemma}
\label{label:general:lm:insert:laminar}
At the end of a call to VERIFY(), Invariant~\ref{label:general:inv:laminar} holds for all layers $j \in [0, L_d]$.
\end{lemma}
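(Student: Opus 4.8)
The plan is to show that the subroutine VERIFY() terminates with Invariant~\ref{label:general:inv:laminar} restored at every layer, by carefully tracking which layer (if any) still violates the invariant after the single call to REBUILD that VERIFY() may trigger. First I would observe that, just before VERIFY() is called, Step III has already guaranteed $|D_{l0}| \leq |D_c| \leq (\delta/(L_d+1))\cdot |A|$, so layer $0$ is fine and we only need to worry about layers $j \in [1, L_d]$. Next I would note the monotonicity that the calls to CLEANUP($u$), CLEANUP($v$) only move nodes from $C_{lj}$ into $D_{lj}$ (for a contiguous top range of layers), so if Invariant~\ref{label:general:inv:laminar} is violated at all, the set of violated layers is ``upward-closed in a certain sense'' — more precisely, since each CLEANUP moves a node into $D_{lk}$ for all $k\ge j$, and since before the edge-update the invariant held at every layer with slack governed by the telescoping bound $(j+1)\delta/(L_d+1)$, at most a bounded number of layers can now be in violation. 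The key point is that VERIFY() scans $j = 1, 2, \ldots, L_d$ and calls REBUILD($j$) at the \emph{first} violated layer $j$.

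The main step is then to invoke Lemma~\ref{label:general:lm:rebuild:main}: I must check that the preconditions in Figure~\ref{label:general:fig:rebuild:init} hold at the moment VERIFY() calls REBUILD($j$). Condition~1 (that $(A,P,D_c,H)$ is a critical structure) holds because we are past Step~II; conditions~2 and~3 of Figure~\ref{label:general:fig:rebuild:init} — that Invariants~\ref{label:general:inv:laminar:edge},~\ref{label:general:inv:laminar:contain},~\ref{label:general:inv:laminar:clean} hold and Invariant~\ref{label:general:inv:laminar} holds at every layer $j' \in [0, j-1]$ — hold precisely because $j$ was chosen as the \emph{first} violated index by the {\sc For} loop in Figure~\ref{label:general:fig:insert}, and because the CLEANUP calls preserved Invariants~\ref{label:general:inv:laminar:edge},~\ref{label:general:inv:laminar:contain},~\ref{label:general:inv:laminar:clean} by construction (as asserted in items (1)--(3) of Step~III); condition~4 (Invariant~\ref{label:general:inv:laminar} is violated at $j$) is exactly the {\sc If} test on line~2 of Figure~\ref{label:general:fig:insert}. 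Having verified the preconditions, Lemma~\ref{label:general:lm:rebuild:main} tells us that after REBUILD($j$), all four invariants hold — in particular Invariant~\ref{label:general:inv:laminar} holds at \emph{every} layer $j' \in [0, L_d]$, not just at $j$. Then VERIFY() returns, and the lemma is proved.

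The remaining case is that the {\sc For} loop in VERIFY() runs to completion without ever finding a violated layer; then Invariant~\ref{label:general:inv:laminar} already holds at all layers $j \in [1, L_d]$, and combined with the layer-$0$ bound noted above, it holds at all of $[0, L_d]$, so again we are done. I expect the main obstacle to be purely bookkeeping: confirming that the CLEANUP subroutine really does leave Invariants~\ref{label:general:inv:laminar:edge},~\ref{label:general:inv:laminar:contain},~\ref{label:general:inv:laminar:clean} intact (so that the preconditions of REBUILD are met), and — more subtly — arguing that a \emph{single} REBUILD call at the first violated layer is enough, i.e., that the CLEANUP calls cannot create a situation where, after fixing layer $j$, some layer $j'' > j$ is still broken in a way REBUILD($j$) does not repair. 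This last concern is dispatched by the strong conclusion of Lemma~\ref{label:general:lm:rebuild:main}: REBUILD($j$) rebuilds \emph{all} layers $k \in [j, L_d]$ from scratch via SPLIT and copies $D_{l,j-1}$ down, so by the argument in the proof of that lemma (using $|D_{l,k}| = |D_{l,j-1}| \leq \delta j/(L_d+1)\cdot|A| < \delta(k+1)/(L_d+1)\cdot|A|$ for $k \geq j$), every layer $\geq j$ is simultaneously fixed. Hence one REBUILD call suffices and VERIFY() is correct.
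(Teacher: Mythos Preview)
Your proposal is correct and follows essentially the same approach as the paper: handle layer $0$ via $D_{l0} \subseteq D_c$ and the Step~II bound, then split into the case where VERIFY() finds no violation (trivial) versus the case where it calls REBUILD($j$) at the first violated layer, and conclude via the fact that REBUILD($j$) sets $D_{lk} = D_{l,j-1}$ for all $k \geq j$. The only cosmetic difference is that you package the last step as an appeal to Lemma~\ref{label:general:lm:rebuild:main} after verifying the preconditions of Figure~\ref{label:general:fig:rebuild:init}, whereas the paper re-derives the needed inequality $|D_{lk}| = |D_{l,j-1}| \leq \delta j/(L_d+1)\cdot|A|$ directly from Lemma~\ref{label:general:lm:rebuild:minor:1}; your route is arguably cleaner.
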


\begin{proof}
Since we have gone past Step II,  we must have $|D_c| \leq \delta/(L_d+1) \cdot |A|$. Since $D_{l0} = D_c \cap A$ (see item (1) in Step III), we have $|D_{l0}| \leq |D_c| \leq \delta /(L_d+1) \cdot |A|$. In other word, Invariant~\ref{label:general:inv:laminar} holds for $j = 0$ at the end of the call to VERIFY(). Henceforth, we focus on the layers $j > 0$.

If the subroutine VERIFY() does not make any call to REBUILD($j'$) with $j' \in [1, L_d]$ during its execution, then clearly Invariant~\ref{label:general:inv:laminar} holds for all layers $j \in [1, L_d]$. Otherwise, let $k \in [1, L_d]$ be the index such that the subroutine REBUILD($k$) is called during the execution of VERIFY(). 

Since the subroutine REBUILD($j$) was not called for any $j \in [1, k-1]$, it means that Invariant~\ref{label:general:inv:laminar} was already satisfied for layers $j \in [1, k-1]$. Next, note that the subroutine REVAMP() is terminated after  REBUILD($k$) finishes execution. Hence, Lemma~\ref{label:general:lm:rebuild:minor:1} ensures that $C_{lj} = C_{k-1}$ for all $j \in [k, L_d]$ at the end of the call to REVAMP(). So at that instant we have $|D_{lj}| = |D_{l, k-1}|$ for all $j \in [k, L_d]$. Since Invariant~\ref{label:general:inv:laminar} holds for $j = k-1$, we infer that Invariant~\ref{label:general:inv:laminar} also holds for all $j \in [k, L_d]$ at that instant. 
\end{proof}

\subsubsection{Terminating a phase}
\label{label:general:sec:phase:end}
We terminate the current phase when the number of $c$-dirty nodes becomes larger than $(\delta/(L_d +1))$ times the number of active nodes. To address this concern, we call the subroutine REVAMP() as described below.

\paragraph{REVAMP().} 
\begin{enumerate}
\item We will first  modify the critical structure $(A, P, D_c, H)$. Let $V' \leftarrow D_c$ be the set of $c$-dirty nodes at the time REVAMP()  is called.  We run the {\sc For} loop described below.

 {\sc For all} nodes $v \in V'$:
\begin{itemize}
\item Set $D_c \leftarrow D_c \setminus \{v\}$. So the node $v$ is no longer $c$-dirty, and it might  violate the conditions (1) and (2) in Definition~\ref{label:general:def:critical:structure}. To address this concern, we perform the following operations.
\item If $v \in A$ and $\text{deg}(v, E) < 3 \epsilon d/L^2$, then set $A \leftarrow A \setminus \{v\}$, $P \leftarrow P \cup \{v\}$, and $H \leftarrow H \setminus \{ (u,v) \in E : u \in P \}$. In other words, we change the status of the node from active to passive, which satisfies the conditions (1) and (2) in Definition~\ref{label:general:def:critical:structure}. Next,  to satisfy the condition (4) in Definition~\ref{label:general:def:critical:structure}, we delete from the set $H$ those edges whose one endpoint is $v$ and other endpoint is some passive node. 
\item Else if $v \in P$ and $\text{deg}(v, E) >  \epsilon d/L^2$, then set $A \leftarrow A \cup \{v\}$, $P \leftarrow P \setminus \{v\}$, and $H \leftarrow H \cup \{ (u,v) \in E\}$. In other words, we change the status of the node from active to passive, which satisfies the first and second conditions in Definition~\ref{label:general:def:critical:structure}. Next,  to satisfy the fourth condition in Definition~\ref{label:general:def:critical:structure}, we  ensure that all the edges incident upon $v$ belong to  $H$.
\end{itemize}
At the end of the For loop, there are no $c$-dirty nodes, i.e., $D_c = \emptyset$ and all the conditions in Definition~\ref{label:general:def:critical:structure} are satisfied. Thus, at this stage  $(A, P, D_c, H)$ is a  critical structure with no $c$-dirty nodes. 
\item Next, we update our laminar structure as follows.
\begin{itemize}
\item Set $H_0 \leftarrow H$, and $H_j \leftarrow \emptyset$ for all $j \in [1, L_d]$.
\item Set $D_{l0} \leftarrow \emptyset$ and $D_{lj} \leftarrow A$ for all $j \in [1, L_d]$.
\item Set $C_{l0} \leftarrow A$ and $C_{lj} \leftarrow \emptyset$ for all $j \in [1, L_d]$. 
\end{itemize}
At this stage, all the conditions stated in Figure~\ref{label:general:fig:rebuild:init} are satisfied for $j = 1$. Accordingly, we call the subroutine REBUILD($1$). This constructs the layers $j \in [1, L_d]$ of the laminar structure from scratch.  
\end{enumerate}

\noindent By Lemma~\ref{label:general:lm:rebuild:main}, at the end of the call to REBUILD($1$), all the four invariants from Section~\ref{label:general:sec:maintain:prelim} are satisfied. Furthermore,  Lemma~\ref{label:general:lm:rebuild:minor:1} implies that $D_{lj} = D_{l0} = D_c = \emptyset$ for all $j \in [1, L_d]$. Hence, both the conditions (1) and (2) stated in Section~\ref{label:general:sec:phase:begin} hold at this time.  So we  start a new phase from the next edge-update.

\subsubsection{Some useful properties of our algorithm}
\label{label:general:sec:algo:properties}

In this section we derive some nice properties of our algorithm that will be used in later sections.
In Lemma~\ref{label:general:lm:property:inv}, we  note that our algorithm satisfies  all the four invariants from Section~\ref{label:general:sec:maintain:prelim}. Next, in Lemma~\ref{label:general:lm:init} (resp. Lemma~\ref{label:general:lm:monotone:laminar}), we summarize the way the critical (resp. laminar) structure changes with the passage of time within a given phase.  The proofs of these three lemmas follow directly from the description of our algorithm, and are omitted. 

\begin{lemma}
\label{label:general:lm:property:inv}
Suppose that we maintain a critical structure  and a laminar structure as per our algorithm. Then Invariants~\ref{label:general:inv:laminar:edge},~\ref{label:general:inv:laminar:contain},~\ref{label:general:inv:laminar:clean} and~\ref{label:general:inv:laminar} are satisfied after each edge-update in the graph $G = (V, E)$. 
\end{lemma}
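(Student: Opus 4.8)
The final displayed statement in the excerpt is Lemma~\ref{label:general:lm:property:inv}, which asserts that the algorithm of Section~\ref{label:general:sec:main:algo} maintains all four laminar invariants after every edge update. The paper itself says this lemma's proof "follows directly from the description of our algorithm, and [is] omitted." Below I sketch the induction that makes this precise.

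\medskip

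\textbf{Overall approach.} The plan is to argue by induction on the sequence of edge updates, maintaining the stronger hypothesis that at the end of processing each update all four invariants (Invariants~\ref{label:general:inv:laminar:edge}--\ref{label:general:inv:laminar}) hold \emph{and} the tuple $(A,P,D_c,H)$ is a critical structure in the sense of Definition~\ref{label:general:def:critical:structure}. The base case is the start of the very first phase: the graph is empty, every node is passive, $D_c=\emptyset$, $D_{lj}=\emptyset$ for all $j$, and $H_j=\emptyset$; one checks directly that Definition~\ref{label:general:def:critical:structure} and all four invariants hold vacuously. For the inductive step there are two cases according to whether the current update is processed "in the middle of a phase" (Section~\ref{label:general:sec:phase:middle}) or triggers the termination of a phase (Section~\ref{label:general:sec:phase:end}).

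\medskip

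\textbf{Middle-of-phase case.} Here I would walk through Steps I--III of Section~\ref{label:general:sec:phase:middle} and check each invariant is restored. Step I updates $H$ exactly so that $H=\{(u,v)\in E:\{u,v\}\cap A\neq\emptyset\}$ and reclassifies any endpoint that now violates condition (1) or (2) of Definition~\ref{label:general:def:critical:structure} as $c$-dirty; if Step II does not terminate the phase then $|D_c|\le(\delta/(L_d+1))|A|$, so all five conditions of Definition~\ref{label:general:def:critical:structure} hold. For the laminar invariants: item (2) of Step III inserts a new edge only into $H_0$ and deletes deleted edges from every $H_j$, which preserves the chain $H=H_0\supseteq H_1\supseteq\cdots\supseteq H_{L_d}$, i.e. Invariant~\ref{label:general:inv:laminar:edge}; item (1) of Step III keeps $D_{l0}=D_c\cap A$ and adds any newly $c$-dirty active node to every $D_{lj}$, preserving the partition property and the nesting $D_{l0}\subseteq\cdots\subseteq D_{l,L_d}$ of Invariant~\ref{label:general:inv:laminar:contain}; the calls to \textsc{Cleanup}$(u)$, \textsc{Cleanup}$(v)$ (Figure~\ref{label:general:fig:ensure:clean}) move every node that violates Invariant~\ref{label:general:inv:laminar:clean} at some layer $j$ into $D_{lj'}$ for all $j'\ge j$, which restores Invariant~\ref{label:general:inv:laminar:clean} while again respecting the nesting of Invariant~\ref{label:general:inv:laminar:contain}; and finally the call to \textsc{Verify}() (Figure~\ref{label:general:fig:insert}) together with Lemma~\ref{label:general:lm:insert:laminar} gives Invariant~\ref{label:general:inv:laminar} at every layer. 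The one subtlety worth spelling out is that a \textsc{Cleanup} or \textsc{Verify}/\textsc{Rebuild} call on behalf of endpoint $u$ must not undo the work done for endpoint $v$; this holds because both operations only ever move nodes from $C_{lj}$ into $D_{lj}$ (a monotone operation) and \textsc{Rebuild}$(j)$, by Lemma~\ref{label:general:lm:rebuild:minor:1}, leaves layers $<j$ untouched and, by Lemma~\ref{label:general:lm:rebuild:main}, restores all four invariants globally.

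\medskip

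\textbf{End-of-phase case.} When Step II finds $|D_c|>(\delta/(L_d+1))|A|$, the algorithm invokes \textsc{Revamp}() (Section~\ref{label:general:sec:phase:end}). I would first check that after the \textsc{For} loop over $V'=D_c$ the tuple $(A,P,D_c,H)$ is again a critical structure with $D_c=\emptyset$: each former $c$-dirty node is reclassified so that conditions (1),(2) of Definition~\ref{label:general:def:critical:structure} hold, and $H$ is adjusted (deleting edges between $v$ and passive nodes when $v$ becomes passive, adding all edges at $v$ when $v$ becomes active) so that condition (4) holds; conditions (3) and (5) are immediate since $D_c=\emptyset$ and $P=V\setminus A$. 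Then the laminar structure is re-initialized with $H_0\leftarrow H$, $H_j\leftarrow\emptyset$, $D_{l0}\leftarrow\emptyset$, $D_{lj}\leftarrow A$, $C_{l0}\leftarrow A$, $C_{lj}\leftarrow\emptyset$ for $j\ge1$, and one verifies this satisfies exactly the four preconditions of Figure~\ref{label:general:fig:rebuild:init} for $j=1$ (the chain collapses to $H_0$ and empties, Invariants~\ref{label:general:inv:laminar:edge},~\ref{label:general:inv:laminar:contain},~\ref{label:general:inv:laminar:clean} hold trivially since only layer $0$ is populated, Invariant~\ref{label:general:inv:laminar} holds at layer $0$ since $D_{l0}=\emptyset$, and it is "violated at layer $1$" in the degenerate sense that $D_{l1}=A$ so \textsc{Rebuild}$(1)$ is warranted). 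Invoking \textsc{Rebuild}$(1)$ and applying Lemma~\ref{label:general:lm:rebuild:main} yields all four invariants; Lemma~\ref{label:general:lm:rebuild:minor:1} gives $D_{lj}=D_{l0}=D_c=\emptyset$ for all $j$, which re-establishes the phase-start conditions and closes the induction.

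\medskip

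\textbf{Main obstacle.} The only genuinely delicate point is the order-of-operations bookkeeping when the two endpoints $u$ and $v$ of the updated edge are processed in sequence within Step III (and, in \textsc{Revamp}, when the nodes of $V'$ are processed one at a time): I must be sure that a later \textsc{Cleanup}/\textsc{Verify}/\textsc{Rebuild} call never re-creates a violation that an earlier call removed. This is handled by the monotonicity property stressed in Section~\ref{label:general:sec:main:algo} — every such operation only moves nodes out of $C_{lj}$ and, via \textsc{Rebuild}, only rebuilds layers at index $\ge$ the trigger index — so the verification is a matter of carefully invoking Lemmas~\ref{label:general:lm:rebuild:minor:1} and~\ref{label:general:lm:rebuild:main} rather than of any new idea. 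Everything else is a direct line-by-line comparison of the pseudocode in Figures~\ref{label:general:fig:ensure:clean},~\ref{label:general:fig:insert},~\ref{label:general:fig:rebuild:init},~\ref{label:general:fig:rebuild} against the four invariant statements.
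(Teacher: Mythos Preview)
Your proposal is correct and follows exactly the approach the paper intends: the paper itself annotates each step of Sections~\ref{label:general:sec:phase:middle} and~\ref{label:general:sec:phase:end} with which invariant it restores (and defers the non-trivial cases to Lemmas~\ref{label:general:lm:insert:laminar} and~\ref{label:general:lm:rebuild:main}), and you have simply wrapped those annotations into an explicit induction on edge-updates. The order-of-operations concern you flag is real but, as you note, resolved by the monotonicity of \textsc{Cleanup} and by Lemma~\ref{label:general:lm:rebuild:main}; in \textsc{Revamp} it is even simpler since the \textsc{For} loop over $V'$ touches only the critical structure $(A,P,D_c,H)$ and the laminar structure is rebuilt wholesale afterward.
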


\begin{lemma}
\label{label:general:lm:init}
\label{label:general:lm:monotone:dirty}
\label{label:general:lm:middle}
\label{label:general:lm:end}
Consider the maintenance of the critical structure in any given phase as per our algorithm. 
\begin{itemize}
\item In the beginning of the phase, there are no $c$-dirty nodes (i.e., $D_c = \emptyset$).
\item In the middle of a phase, a node can change from being $c$-clean to $c$-dirty, but not the other way round.
\item The sets of active and passive nodes do not change in the middle of the phase.
\item The phase ends when the number of $c$-dirty nodes  exceeds the threshold $(\delta/(L_d+1)) \cdot |A|$, and at this point the subroutine REVAMP() is called. At the end of the call to REVAMP(), the next phase begins. 
\end{itemize}
\end{lemma}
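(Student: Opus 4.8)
The plan is to prove all four bullets by directly unwinding the pseudocode of Sections~\ref{label:general:sec:phase:begin}--\ref{label:general:sec:phase:end}; none of them requires any inequality. For the first bullet I would argue by induction on the phase index. The very first phase begins with $E = \emptyset$, so every node is passive and no node is $c$-dirty, giving $D_c = \emptyset$. For a later phase, the previous phase ended with a call to REVAMP(); the {\sc For}-loop inside REVAMP() (Section~\ref{label:general:sec:phase:end}) clears $D_c$ node by node, and no instruction executed afterwards re-inserts anything into $D_c$, so $D_c = \emptyset$ still holds when the new phase starts. Hence the condition $D_c = \emptyset$ propagates across phase boundaries.

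For the second bullet, note that the only place $D_c$ is enlarged in the middle of a phase is Step~I of Section~\ref{label:general:sec:phase:middle}, which, for each endpoint $x$ of the inserted/deleted edge, may perform $D_c \leftarrow D_c \cup \{x\}$ but never a deletion from $D_c$; Steps~II and~III, together with the subroutines CLEANUP(), VERIFY(), REBUILD() and SPLIT() that they invoke, do not touch $D_c$ at all. (Removals from $D_c$ occur only inside REVAMP(), which runs at a phase boundary, not in the interior of a phase.) Thus during a phase a node can flip from $c$-clean to $c$-dirty but not back. The third bullet is analogous: the sets $A$ and $P$ are altered only within the {\sc For}-loop of REVAMP() (the lines $A \leftarrow A \setminus \{v\}$, $P \leftarrow P \cup \{v\}$ and their mirror images); every mid-phase operation — Steps~I--III and the subroutines they call, including REBUILD(), whose invariance of $A$ and $P$ is recorded in Lemma~\ref{label:general:lm:rebuild:minor:1}, and SPLIT(), which manipulates only an edge-set — leaves $A$ and $P$ unchanged. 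Since REVAMP() is invoked only at phase boundaries, $A$ and $P$ are constant inside a phase.

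For the fourth bullet I would point to Step~II of Section~\ref{label:general:sec:phase:middle}: immediately after each mid-phase update we test whether $|D_c| > (\delta/(L_d+1)) \cdot |A|$; if the test succeeds we terminate the phase and proceed to Section~\ref{label:general:sec:phase:end}, which is precisely the call to REVAMP(), and by the first bullet the end of that call re-establishes $D_c = \emptyset$, i.e.\ the opening condition of the next phase; if the test fails, $|D_c| \le (\delta/(L_d+1)) \cdot |A|$ holds, so condition~(3) of Definition~\ref{label:general:def:critical:structure} is met and — together with conditions~(1), (2), (4), (5), which Step~I has just restored — the tuple $(A, P, D_c, H)$ is still a critical structure and the phase continues. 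The only point that needs genuine care, and the one I would double-check, is the claim used implicitly throughout that the auxiliary subroutines REBUILD(), CLEANUP(), VERIFY() and SPLIT() never modify $D_c$, $A$ or $P$; this follows by inspection of Figures~\ref{label:general:fig:split},~\ref{label:general:fig:rebuild},~\ref{label:general:fig:ensure:clean} and~\ref{label:general:fig:insert} together with Lemma~\ref{label:general:lm:rebuild:minor:1}, after which the lemma is immediate.
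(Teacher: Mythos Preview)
Your proposal is correct and is exactly the kind of argument the paper has in mind: the paper itself states that the proof ``follows directly from the description of our algorithm'' and omits it, so your careful unwinding of Sections~\ref{label:general:sec:phase:begin}--\ref{label:general:sec:phase:end} and the auxiliary subroutines is precisely what is needed. Nothing is missing.
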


\begin{lemma}
\label{label:general:lm:monotone:laminar}
Consider the maintenance of the laminar structure in any given phase as per our algorithm.
\begin{itemize}
\item In the beginning of the phase, there are no $l$-dirty nodes, i.e., $D_{lj} = \emptyset$ for all $j \in [0, L_d]$.
\item Consider any layer $j \in [1, L_d]$, and focus on any time interval  in the middle of the phase where no call is made to REBUILD($k$) with $k \in [1, j]$. During such a time interval: 
\begin{itemize}
\item  No edge gets inserted into the set $H_j$. In other words, the edge-set $H_j$ keeps shrinking. Furthermore, an edge $e \in H_j$ gets deleted from $H_j$ only if it gets deleted from the graph $G = (V, E)$. 
\item No node $v \in A$ gets moved from $D_{lj}$ to $C_{lj}$. In other words, the node-set $D_{lj}$ (resp. $C_{lj}$) keeps growing (resp. shrinking).
\end{itemize} 
\item At layer $j = 0$, we have $H_0 = H$ and $D_{l0} = A \cap D_c$. Thus, by Lemma~\ref{label:general:lm:monotone:dirty}, the node-set $D_{l0}$ (resp. $C_{l0}$) keeping growing (resp. shrinking) throughout the duration of the phase. 
\end{itemize}
\end{lemma}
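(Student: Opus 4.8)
The plan is to establish each of the three bullets by directly inspecting the pseudocode of Sections~\ref{label:general:sec:phase:begin}--\ref{label:general:sec:phase:end}, leaning on two facts already proved: Lemma~\ref{label:general:lm:rebuild:minor:1}, which guarantees that a call to REBUILD($j'$) does not alter any layer $k<j'$ (neither its edge-set $H_k$ nor its node-sets $D_{lk},C_{lk}$), and Lemma~\ref{label:general:lm:monotone:dirty}, which guarantees that within a single phase the partition $V=A\cup P$ is frozen and the set $D_c$ only grows. The key observation is that every change the algorithm ever makes to the laminar structure happens in one of four places: items (1), (2), the CLEANUP call in item (3), or the VERIFY/REBUILD call in item (4) of Step~III in Section~\ref{label:general:sec:phase:middle}, or else inside REVAMP() of Section~\ref{label:general:sec:phase:end}; so it suffices to check the asserted monotonicity against each of these in turn. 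For the first bullet I would argue as follows: at the start of the very first phase $E=\emptyset$, every node is passive, and $D_{lj}=\emptyset$ for all $j$ by the explicit initialization; for every later phase, the preceding REVAMP() sets $D_{l0}\leftarrow\emptyset$ and then calls REBUILD($1$), and Lemma~\ref{label:general:lm:rebuild:minor:1} applied to that call yields $D_{lk}=D_{l0}=\emptyset$ for all $k\in[1,L_d]$ at the end of it, i.e.\ at the moment the new phase begins.

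For the second bullet I would fix $j\in[1,L_d]$ and a time interval in the middle of a phase during which no REBUILD($k$) with $k\le j$ is ever called, and then run through the four items of Step~III. Item (1) only ever executes $D_{lk}\leftarrow D_{lk}\cup\{x\}$ and $C_{lk}\leftarrow C_{lk}\setminus\{x\}$, so it never moves a node from $D_{lj}$ to $C_{lj}$ and touches no edge-set. Item (2) adds the new edge only to $H_0$ on an insertion (it never touches $H_k$ for $k\ge1$), and on a deletion it removes the edge from every $H_k$, but precisely because the edge is simultaneously leaving $G$, so $H_j$ loses an edge only when that edge leaves $G$. Item (3), CLEANUP($x$), only moves nodes out of $C_{lj'}$ into $D_{lj'}$ (never the reverse) and changes no edge-set. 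Item (4), VERIFY(), either does nothing or issues a single REBUILD($j'$) and returns; by the hypothesis on the interval such a $j'$ must satisfy $j'>j$, and Lemma~\ref{label:general:lm:rebuild:minor:1} then says the call does not alter $D_{lj}$, $C_{lj}$ or $H_j$. Aggregating over the interval, $H_j$ never gains an edge and loses one only upon a $G$-deletion, and no node passes from $D_{lj}$ to $C_{lj}$; this is exactly the claim, and $C_{lj}=A\setminus D_{lj}$ then shrinks.

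For the third bullet I would observe that item (2) of Step~III maintains $H_0=H$ verbatim (insertions enter $H_0$ precisely when they enter $H$, deletions leave $H_0$ precisely when they leave $G$, i.e.\ leave $H$), that no REBUILD($j'$) with $j'\ge1$ and no CLEANUP touches layer $0$, and that REVAMP() resets $H_0\leftarrow H$ and $D_{l0}\leftarrow\emptyset$ at phase boundaries; moreover item (1) of Step~III is the only place a node ever enters $D_{l0}$ in the middle of a phase, and it does so exactly when an active node turns $c$-dirty, so the identity $D_{l0}=A\cap D_c$ is preserved throughout the phase. Combining this identity with Lemma~\ref{label:general:lm:monotone:dirty} ($A$ frozen, $D_c$ monotonically growing within the phase) shows that $D_{l0}$ grows and $C_{l0}=A\setminus D_{l0}$ shrinks for the whole phase.

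I expect this to be essentially bookkeeping, with no deep obstacle. The one point that requires care is the interaction, in the second bullet, between the hypothesis ``no REBUILD($k$) with $k\le j$ in the interval'' and the behaviour of VERIFY(): one must note that VERIFY() triggers at most one REBUILD before returning, and that if that single call were REBUILD($j'$) with $j'\le j$ it would contradict the hypothesis, so every REBUILD occurring inside the interval has index $>j$ and is therefore harmless for layer $j$ by Lemma~\ref{label:general:lm:rebuild:minor:1}. A secondary point worth spelling out is that, by Lemma~\ref{label:general:lm:monotone:dirty}, no vertex becomes active in the middle of a phase, so an edge insertion cannot ``promote'' an endpoint to active and thereby push a fresh edge into some $H_k$ with $k\ge1$; the only fresh edges ever created in the layers $k\ge1$ come from REBUILD, which is excluded for $k\le j$ by hypothesis.
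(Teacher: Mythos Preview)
Your proof is correct and is exactly what the paper intends: it explicitly states that the proof of this lemma ``follows directly from the description of our algorithm'' and omits it. Your case-by-case inspection of items (1)--(4) of Step~III, together with the appeals to Lemma~\ref{label:general:lm:rebuild:minor:1} and Lemma~\ref{label:general:lm:monotone:dirty}, is precisely the bookkeeping the paper is gesturing at.
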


\noindent Next,   we upper bound  the maximum degree a node can have in a given layer $j \in [1, L_d]$.  We emphasize that the bounds in Lemma~\ref{label:general:lm:correct:1} are artifacts of the specific algorithm we have designed. In other words, these bounds  {\em cannot} be derived  only by looking at the  definitions and the  invariants stated in Section~\ref{label:general:sec:maintain:prelim}.  Specifically, we exploit two important properties of our algorithm.

\begin{itemize}
\item  Consider  any layer $j \in [1, L_d]$.  In a given phase, the degree of a node $v \in V$ in this layer can increase only during a call to  REBUILD($k$) for some $k \in [1, j]$. See Lemma~\ref{label:general:lm:monotone:laminar}. 
\item  Roughly speaking, just after  the end of a call  to  REBUILD($k$),  the degree of each node $v \in V$ drops by at least a factor of $2$ across successive layers in the range $[k, L_d]$. See Lemma~\ref{label:general:lm:fig:reduce}.
\end{itemize}
\noindent The proof of Lemma~\ref{label:general:lm:correct:1} appears in Section~\ref{label:general:sec:degree:half}.

\begin{lemma}
\label{label:general:lm:correct:1}
\label{label:general:lm:correct:2}
Suppose that we maintain a critical structure  and a laminar structure as per our algorithm, and let  $j \in [0, L_d]$ be any layer in the laminar structure. Then we always have: 
\begin{eqnarray}
\label{label:general:eq:lm:correct:1}
\text{deg}(v, H_{j}) \leq d \cdot 2^{-j} + \sum_{j'=0}^{j-1} 2^{-j'} \ \ \text{ for all nodes } v \in V. \\
\label{label:general:eq:lm:correct:2}
\text{deg}(v, H_{j}) \leq (3\epsilon d/L^2) \cdot 2^{-j} + \sum_{j'=0}^{j-1} 2^{-j'} \ \ \text{ for all nodes } v \in P \setminus D_c.
\end{eqnarray} 
\end{lemma}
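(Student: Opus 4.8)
The plan is to prove both inequalities simultaneously by induction on the layer index $j$, using the description of the algorithm together with the two structural facts highlighted just before the statement: (i) within a phase, $\text{deg}(v,H_j)$ can only increase during a call to REBUILD($k$) with $k\le j$ (Lemma~\ref{label:general:lm:monotone:laminar}), and (ii) at the end of a call to REBUILD($k$), the degree drops by roughly half across each successive layer in $[k,L_d]$ (Lemma~\ref{label:general:lm:fig:reduce}). The base case $j=0$ is immediate: $H_0=H$, so $\text{deg}(v,H_0)=\text{deg}(v,E)\le d$ by Lemma~\ref{label:general:lm:deg:1}, which gives \eqref{label:general:eq:lm:correct:1}; and for $v\in P\setminus D_c$ we have $\text{deg}(v,E)<3\epsilon d/L^2$ by condition~(2) of Definition~\ref{label:general:def:critical:structure}, giving \eqref{label:general:eq:lm:correct:2} (the empty sum $\sum_{j'=0}^{-1}$ being zero).

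For the inductive step, fix $j\ge 1$ and consider the current value of $\text{deg}(v,H_j)$ at an arbitrary moment. I would look back in time to the most recent event that could have raised $\text{deg}(v,H_j)$ to its present value. By fact~(i) this must be a call to REBUILD($k$) for some $k\in[1,j]$ (if no such call ever occurred in the current phase, then since the start of the phase $H_j$ was built once from $H_{j-1}$ via SPLIT and has only shrunk, and I handle that via Lemma~\ref{label:general:lm:fig:reduce} as well). Immediately after that REBUILD($k$) call, Lemma~\ref{label:general:lm:fig:reduce} gives $\text{deg}(v,H_{j})\le \text{deg}(v,H_{j-1})/2 + 1$, and more generally $\text{deg}(v,H_{i})\le \text{deg}(v,H_{i-1})/2+1$ for all $i\in[k,L_d]$; for the layers $i\le k-1$ we invoke the inductive hypothesis (layers below $k$ were untouched, so their degree bound persists). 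Unfolding the recursion $a_i \le a_{i-1}/2 + 1$ from layer $k-1$ up to layer $j$, starting from $a_{k-1}=\text{deg}(v,H_{k-1})\le d\cdot 2^{-(k-1)}+\sum_{j'=0}^{k-2}2^{-j'}$, yields
$$\text{deg}(v,H_j)\ \le\ 2^{-(j-k+1)}\Bigl(d\cdot 2^{-(k-1)}+\sum_{j'=0}^{k-2}2^{-j'}\Bigr)+\sum_{i=0}^{j-k}2^{-i}\ =\ d\cdot 2^{-j}+\sum_{j'=0}^{j-1}2^{-j'},$$
which is exactly \eqref{label:general:eq:lm:correct:1}. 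Since $\text{deg}(v,H_j)$ has not increased since that REBUILD($k$) call, the bound holds now. For \eqref{label:general:eq:lm:correct:2}, the same unfolding applies but the starting value is $\text{deg}(v,H_{k-1})\le (3\epsilon d/L^2)\cdot 2^{-(k-1)}+\sum_{j'=0}^{k-2}2^{-j'}$ by the inductive hypothesis for layer $k-1$ (noting $v\in P\setminus D_c$ is stable in the sense required — a passive $c$-clean node stays passive within a phase, and if it only just became relevant this still holds), giving the claimed bound.

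The main obstacle I anticipate is the bookkeeping around \emph{which} REBUILD call to point to and making sure the node $v$'s relevant status ($v\in P\setminus D_c$ for the second inequality) is consistent over the time window between that REBUILD call and the present — in particular, REVAMP() at a phase boundary can move nodes between $A$ and $P$ and rebuilds everything via REBUILD($1$), so I should check that crossing a phase boundary only helps (it resets with a fresh SPLIT-chain from $H$, to which the same unfolding argument applies with $k=1$). A secondary subtlety is that the degree in layer $j-1$ itself may have changed (decreased) between the REBUILD($k$) call and now, but since we only need an \emph{upper} bound and decreases only help, pinning the estimate at the moment right after REBUILD($k$) is safe; I will state this monotonicity explicitly, citing Lemma~\ref{label:general:lm:monotone:laminar}. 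Everything else is the routine geometric-series manipulation shown above.
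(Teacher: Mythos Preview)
Your approach is correct and reaches the same bound via the same geometric unfolding, but the inductive structure differs from the paper's. The paper does induction on \emph{time} (the number of edge-updates): assuming \eqref{label:general:eq:lm:correct:1} holds for all layers after update $t$, it checks that each possible action during update $t{+}1$ --- item~(2) of Step~III (which only shrinks $H_j$ for $j\ge 1$), a call to REBUILD($j$) inside VERIFY(), or REVAMP() at a phase end --- preserves the bound at every layer simultaneously. You instead induct on the \emph{layer index}~$j$: assuming the bound for all layers $<j$ at all times, you fix a moment, rewind to the most recent REBUILD($k$) with $k\le j$, invoke the hypothesis at layer $k-1$ there, and unfold. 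Both routes lean on exactly the same two facts (Lemmas~\ref{label:general:lm:monotone:laminar} and~\ref{label:general:lm:fig:reduce}) and perform the identical telescoping computation; the paper's framing tracks the algorithm's execution more literally, while yours is a cleaner ``fix a layer, find the last reset'' argument that avoids enumerating the Step~III cases.

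Two small points. First, in your base case you write $\text{deg}(v,H_0)=\text{deg}(v,E)$; this equality holds only for active nodes (Observation~\ref{label:general:ob:edge}), but the inequality $\le$ suffices and is all you need. Second, for \eqref{label:general:eq:lm:correct:2} the obstacle you flag is real: you must know that $v\in P\setminus D_c$ at the moment of the earlier REBUILD($k$), not just now. The paper sidesteps this by fixing a phase and arguing directly that a passive node which is currently $c$-clean has been passive and $c$-clean throughout the phase so far (Lemma~\ref{label:general:lm:init}: $A,P$ are frozen and $D_c$ only grows), then replays the proof of \eqref{label:general:eq:lm:correct:1} with $d$ replaced by $3\epsilon d/L^2$. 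Your rewind always lands within the current phase (REVAMP() calls REBUILD($1$) at every phase boundary), so the same monotonicity of $D_c$ gives you what you need; just state it explicitly.
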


\begin{corollary}
\label{label:general:cor:lemma:correct:1}
\label{label:general:cor:lemma:correct:2}
Suppose that we maintain a critical structure  and a laminar structure as per our algorithm. Then we always have: 
\begin{enumerate}
\item $\text{deg}(v, H_{L_d}) \leq  \lambda_d L^4 + 2 \ \ \text{ for all nodes } v \in V.$
\item $\text{deg}(v, H_{L_d}) \leq 3\epsilon \lambda_d L^2 + 2 \ \ \text{ for all nodes } v \in P \setminus D_c.$
\end{enumerate} 
\end{corollary}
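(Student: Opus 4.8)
The corollary is an immediate specialization of Lemma~\ref{label:general:lm:correct:1} to the case $j = L_d$, so the plan is essentially to plug in $j = L_d$ into equations~\ref{label:general:eq:lm:correct:1} and~\ref{label:general:eq:lm:correct:2} and simplify the resulting expressions using the parameter relations from Section~\ref{label:general:sec:parameter}. First I would handle the geometric-sum term that appears in both bounds: $\sum_{j'=0}^{L_d-1} 2^{-j'} < \sum_{j'=0}^{\infty} 2^{-j'} = 2$. This already accounts for the additive $+2$ in both claimed inequalities, so it remains only to show that the first (``main'') term in each bound is at most $\lambda_d L^4$ (resp. $3\epsilon\lambda_d L^2$).

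For part (1), the main term is $d \cdot 2^{-L_d}$. By equation~\ref{label:general:eq:Li:lambda} we have $2^{L_d} = d/(\lambda_d L^4)$, hence $d \cdot 2^{-L_d} = d \cdot (\lambda_d L^4/d) = \lambda_d L^4$ exactly. Combining with the geometric-sum bound gives $\text{deg}(v, H_{L_d}) \leq \lambda_d L^4 + 2$ for all $v \in V$, which is exactly claim (1). For part (2), the main term is $(3\epsilon d/L^2) \cdot 2^{-L_d} = (3\epsilon d/L^2) \cdot (\lambda_d L^4/d) = 3\epsilon \lambda_d L^2$, again using equation~\ref{label:general:eq:Li:lambda}. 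Adding the geometric-sum bound of $2$ yields $\text{deg}(v, H_{L_d}) \leq 3\epsilon\lambda_d L^2 + 2$ for all $v \in P \setminus D_c$, which is claim (2).

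There is essentially no obstacle here: the only thing to be slightly careful about is that the geometric sum $\sum_{j'=0}^{L_d-1}2^{-j'}$ is a sum with a nonnegative integer number of terms (this uses $L_d \geq 1$, which follows from equation~\ref{label:general:eq:101} with $\gamma > 0$, or more directly from $L_d \geq \gamma$ together with the standing assumption that $\gamma$ is a positive constant and $L_d$ is a positive integer by equation~\ref{label:general:eq:Li}), so the bound by $2$ is valid even in the degenerate case $L_d = 1$ where the sum is just the single term $2^0 = 1 \le 2$. All the heavy lifting — in particular establishing the two inequalities of Lemma~\ref{label:general:lm:correct:1}, which rely on the monotonicity properties of the algorithm (Lemma~\ref{label:general:lm:monotone:laminar}) and the degree-halving guarantee after each REBUILD call (Lemma~\ref{label:general:lm:fig:reduce}) — has already been done, or is deferred to Section~\ref{label:general:sec:degree:half}. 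Thus the proof of the corollary is a two-line computation, and I would present it as such.
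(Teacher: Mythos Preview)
Your proposal is correct and matches the paper's own proof, which is the one-liner ``Follows from Lemma~\ref{label:general:lm:correct:2} and equation~\ref{label:general:eq:Li:lambda}.'' You have simply spelled out the substitution $j=L_d$ and the geometric-sum bound that the paper leaves implicit.
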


\begin{proof}
Follows from Lemma~\ref{label:general:lm:correct:2} and  equation~\ref{label:general:eq:Li:lambda}.
\end{proof}

\subsubsection{Proof of Lemma~\ref{label:general:lm:correct:1}}
\label{label:general:sec:degree:half}
\paragraph{Proof of equation~\ref{label:general:eq:lm:correct:1}.}
\

\medskip
\noindent 
 We  prove equation~\ref{label:general:eq:lm:correct:1} for a given node $v \in V$, using  induction on  the number of edge-updates seen so far.

\begin{itemize}
\item {\em Base step.} Equation~\ref{label:general:eq:lm:correct:1} holds for  node $v$ after the $t^{th}$ edge-update in $G$, for $t = 0$.

The base step is true since initially the graph $G$ is empty and $\text{deg}(v, H_j) = 0$ for all $j \in [0,L_d]$.
\item {\em Induction step.} Suppose that equation~\ref{label:general:eq:lm:correct:1} holds for node $v$ after the $t^{th}$ edge-update in $G$, for some integer $t \geq 0$. Given this induction hypothesis, we will show that equation~\ref{label:general:eq:lm:correct:1} continues to hold for node $v$ after the $(t+1)^{th}$ edge-update in $G$.
\end{itemize}

\noindent From this point onwards, we focus on proving the induction step. There are two possible cases to consider.

\bigskip
\noindent {\em Case 1. The $(t+1)^{th}$ edge-update in $G$ resulted in the termination of a phase.}

\medskip
\noindent In this case, the following conditions hold after our algorithm handles the $(t+1)^{th}$ update in $G$.
\begin{eqnarray}
\label{label:general:eq:induction:great:1}
\text{deg}(v, H_0) & \leq & d \\
\text{deg}(v, H_k) & \leq & \frac{\text{deg}(v, H_{k-1})}{2} + 1 \ \ \text{ at every layer } k \in [1, L_d]. \label{label:general:eq:induction:great:2}
\end{eqnarray}
Equation~\ref{label:general:eq:induction:great:1} holds since $d$ is the maximum degree a node can have in the graph $G = (V, E)$ (see Lemma~\ref{label:general:lm:deg:1}) and  $H_0 \subseteq E$. Equation~\ref{label:general:eq:induction:great:2} holds since the $(t+1)^{th}$ edge-update marks the termination of a phase. Hence, our algorithm calls the subroutine REBUILD($1$) while handling the $(t+1)^{th}$ edge-update (see item (2) in Section~\ref{label:general:sec:phase:end}). At the end of this call to REBUILD($1$), the degree of any node in a layer $k \in [1, L_d]$ is  $(1/2)$ times its degree in the previous layer, plus-minus one (see Lemma~\ref{label:general:lm:fig:reduce}).

By equations~\ref{label:general:eq:induction:great:1} and~\ref{label:general:eq:induction:great:2}, we get the following guarantee  after  the $(t+1)^{th}$ edge-update in $G$.
\begin{eqnarray}
\label{label:general:eq:induction:great:3}
\text{deg}(v, H_j) \leq d \cdot 2^{-j} + \sum_{j'=0}^{j-1} 2^{-j'} \ \ \text{ at every layer } j \in [0, L_d].
\end{eqnarray}
\noindent This concludes the proof of the induction step.

\bigskip
\noindent {\em Case 2. The $(t+1)^{th}$ edge-update in $G$ does not result in the termination of a phase.}

\medskip
\noindent
In this case, the $(t+1)^{th}$ edge-update falls in the middle of a phase, and is handled by the algorithm in Section~\ref{label:general:sec:phase:middle}. Specifically, the edge-sets $H_0, \ldots, H_{L_d}$ are modified by the procedure described in Step III (see Section~\ref{label:general:sec:phase:middle}). There are two operations performed by this procedure that concern us, for they are the only ones that tamper with the edge-sets $H_0, \ldots, H_{L_d}$.
\begin{itemize}
\item (a) In item (2) of Step III (Section~\ref{label:general:sec:phase:middle}), we might change some of the edge-sets $H_0, \ldots, H_{L_d}$. 
\item (b) In item (4) of Step III (Section~\ref{label:general:sec:phase:middle}) we call VERIFY(), which in turn might call  REBUILD($j$) for some layer $j \in [1, L_d]$. And  a call to REBUILD($j$) reconstructs the edge-sets $H_j, \ldots, H_{L_d}$.
\end{itemize}
\noindent Operation (a)  never inserts an edge into a layer $j > 0$. Thus, due to this operation  $\text{deg}(v, H_j)$ can only decrease, provided  $j > 0$. This ensures that for each layer $j > 0$, $\text{deg}(v, H_j)$ continues to satisfy the desired upper bound of equation~\ref{label:general:eq:lm:correct:1} after operation (a). The value of $\text{deg}(v, H_0)$, however, can increase due to operation (a). But this does not concern us, for we always have $\text{deg}(v, H_0) \leq \text{deg}(v, E) \leq d$ (see Lemma~\ref{label:general:lm:deg:1}). So the upper bound prescribed by equation~\ref{label:general:eq:lm:correct:1} for layer $j = 0$ is trivially satisfied all the time.

Operation (b) tampers with  the edge-sets $H_0, \ldots, H_{L_d}$ only if the subroutine REBUILD($j$), for some $j \in [1, L_d]$, is called during the execution of VERIFY(). We focus on this call to REBUILD($j$). Just before the call begins, equation~\ref{label:general:eq:lm:correct:1} holds for node $v$. The call to REBUILD($j$) does not alter the layers $j' \in [0, j-1]$ (see Lemma~\ref{label:general:lm:rebuild:minor:1}). Accordingly, at the end of the call to REBUILD($j$), we have:
\begin{equation}
\label{label:general:eq:induction:great:10}
\text{deg}(v, H_k) \leq d \cdot 2^{-k} + \sum_{j' = 0}^{k-1} 2^{-j'} \ \ \text{ at every layer } k \in [0, j-1].
\end{equation}
Furthermore, the call to REBUILD($j$) ensures that the degree of a node at any layer $k \in [j, L_d]$ is half of its degree in the previous layer, plus-minus one (see Lemma~\ref{label:general:lm:fig:reduce}). Accordingly, at the end of the call to REBUILD($j$), we have:
\begin{equation}
\label{label:general:eq:induction:great:11}
\text{deg}(v, H_k) \leq \frac{\text{deg}(v, H_{k-1})}{2} + 1 \ \ \text{ at every layer } k \in [j, L_d].
\end{equation}
By equations~\ref{label:general:eq:induction:great:10} and~\ref{label:general:eq:induction:great:11}, at the end of the call to REBUILD($j$) we have:
\begin{equation}
\label{label:general:eq:induction:great:12}
\text{deg}(v, H_k) \leq d \cdot 2^{-k} + \sum_{j' = 0}^{k-1} 2^{-j'} \ \ \text{ at every layer } k \in [j, L_d].
\end{equation}
Equations~\ref{label:general:eq:induction:great:10} and~\ref{label:general:eq:induction:great:12} conclude the proof of the induction step.

\bigskip

\paragraph{Proof of equation~\ref{label:general:eq:lm:correct:2}.}
\

\medskip
\noindent
Throughout the proof, fix any node $v \in V$. We will show that the node satisfies equation~\ref{label:general:eq:lm:correct:2} in every ``phase'' (see Section~\ref{label:general:sec:maintain:critical}). Accordingly, fix any phase throughout the rest of the proof. Also recall Lemma~\ref{label:general:lm:init}, which summarizes the way the critical structure changes along with the passage of time within a given phase. Since the node-set $V$ is partitioned into the subsets $A$ and $P$, there are two cases to consider.
\begin{itemize}
\item {\em Case 1.} The node $v$ is active (i.e., part of the set $A$) in the beginning of the phase. Here, the node continues to remain active throughout the duration of the phase. So the lemma is trivially true for  $v$.
\item {\em Case 2.} The node $v$ is passive (i.e., part of the set $P$) in the beginning of the phase. Here,  the node continues to remain passive till the end of the phase. We will consider two possible sub-cases.
\begin{itemize}
\item {\em Case 2a. The node never becomes $c$-dirty during the phase. } In this case, we have $v \in P \setminus D_c$ and  $\text{deg}(v, E) \leq (3\epsilon d/L^2)$ throughout the duration of the phase (see Definition~\ref{label:general:def:critical:structure}). Hence, we can  adapt the proof of equation~\ref{label:general:eq:lm:correct:1} by replacing $d$ with $(3\epsilon d/L^2)$ as an upper bound on the maximum degree of the node, and get the following guarantee: Throughout the duration of the phase, we have $\text{deg}(v, H_{j}) \leq (3\epsilon d L^{-2})/2^j + \sum_{j'=0}^{j-1} 1/2^{j'}$ for every layer $j \in [0, L_d]$.
\item {\em Case 2b. The node becomes $c$-dirty at some time $t$ in the middle of the phase, and from that point onwards, the node remains $c$-dirty till the end of the phase.} In this case, the lemma is trivially true for the node from time $t$ till the end of the phase. Hence, we consider the remaining time interval from the beginning of the phase till time $t$. During this interval, we have $v \in P \setminus D_c$, and we can apply the same argument as in Case 2a. We thus conclude that the lemma holds for the node $v$ throughout the duration of the phase.
\end{itemize}
\end{itemize}

\subsection{Maintaining the edge-set of an skeleton}
\label{label:general:sec:maintain:skeleton}

In Section~\ref{label:general:sec:maintain:prelim}, we  defined the concepts of a critical structure  and a laminar structure   in the graph $G = (V, E)$. In Section~\ref{label:general:sec:main:algo}, we  described our algorithm for maintaing these two structures in a dynamic setting. In this section, we will show that the edges $e \in H_{L_d}$ (i.e., the edges from the last layer of the laminar structure) constitute the edge-set $X$ of a skeleton of the graph $G$ (see Definition~\ref{label:general:def:skeleton}). Towards this end, we define the node-sets $B, T, S \subseteq V$ and the edge-set $X \subseteq E$ as follows.
\begin{eqnarray}
\label{label:general:eq:mapping:s}
S & = & D_c \cup D_{l, L_d}  \label{label:general:eq:mapping:s} \\
B & = & (A \setminus S) \cup \left\{ v \in S : \text{deg}(v, E) > \epsilon d/L^2\right\} \label{label:general:eq:mapping:b} \\
T & = & (P \setminus S) \cup \left\{ v \in S : \text{deg}(v, E \leq \epsilon d/L^2\right\} \label{label:general:eq:mapping:t} \\
 X & = & H_{L_d} \label{label:general:eq:mapping:x} 
\end{eqnarray}

The next lemma shows that the node-sets $B, T, S \subseteq V$ and the edge-set $X \subseteq E$ as defined above satisfies all the seven properties stated in Definition~\ref{label:general:def:skeleton}. This lemma implies the main result of this section, which is summarized in Theorem~\ref{label:general:th:maintain:skeleton:minor}. As far as the proof of Lemma~\ref{label:general:lm:mapping:1} is concerned, the reader should note that  the conditions (1) -- (5)  follow directly from Definitions~\ref{label:general:def:critical:structure},~\ref{label:general:def:laminar:structure} and the four invariants stated in Section~\ref{label:general:sec:maintain:prelim}.  In other words, all the critical and laminar structures that satisfy these invariants will also satisfy the conditions (1) -- (5). In sharp contrast, we need to use a couple of crucial properties of our algorithm (see Corollaries~\ref{label:general:cor:lemma:correct:1} and~\ref{label:general:cor:lemma:correct:2}) to prove the remaining two conditions (6) and (7). And it is not difficult to see that there are instances of critical and laminar structures that satisfy the four invariants from Section~\ref{label:general:sec:maintain:prelim} but do not satisfy the conditions (6) and (7). 

\begin{lemma}
\label{label:general:lm:mapping:1}
Suppose that  a critical structure $(A, P, D_c, H)$ and a laminar structure $(H_1, \ldots, H_{L_d})$ are maintained as per the algorithm in Section~\ref{label:general:sec:maintain:critical}. Also suppose that the node-sets $B, T, S \subseteq V$ and the edge-set $X \subseteq E$ are defined as in equations~\ref{label:general:eq:mapping:s},~\ref{label:general:eq:mapping:b},~\ref{label:general:eq:mapping:t} and~\ref{label:general:eq:mapping:x}. Then the following conditions hold.
\begin{enumerate}
\item We have $B \cup T = V$ and $B \cap T = \emptyset$.
\item For every  node $v \in B$, we have $\text{deg}(v, E) > \epsilon d/L^2$. 
\item For every node  $v \in T$, we have $\text{deg}(v, E) < 3\epsilon d/L^2$. 
\item We have $|S| \leq 4\delta \cdot |B|$. 
\item For every  node $v \in B \setminus S$, we have: $$e^{-\gamma} \cdot (\lambda_d L^4/d) \cdot \text{deg}(v, E) \leq \text{deg}(v, X) \leq e^\gamma \cdot (\lambda_d L^4/d) \cdot \text{deg}(v, E).$$
\item For every node $v \in V$, we have $\text{deg}(v, X) \leq \lambda_d L^4 +2$.
\item For every  node $v \in T \setminus S$, we have $\text{deg}(v, X) \leq 3\epsilon \lambda_d L^2 + 2$.
\end{enumerate}
\end{lemma}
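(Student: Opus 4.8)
The plan is to verify the seven conditions of Definition~\ref{label:general:def:skeleton} one at a time, drawing on the critical-structure definition (Definition~\ref{label:general:def:critical:structure}), the laminar-structure invariants (Invariants~\ref{label:general:inv:laminar:edge}--\ref{label:general:inv:laminar}) via Lemma~\ref{label:general:lm:property:inv}, and the two algorithm-specific degree bounds from Corollary~\ref{label:general:cor:lemma:correct:1}. For condition~(1), observe that the defining equations~\ref{label:general:eq:mapping:b} and~\ref{label:general:eq:mapping:t} split $V = A \cup P$ by first putting the non-spurious active nodes into $B$ and the non-spurious passive nodes into $T$, and then distributing the spurious nodes $S$ between $B$ and $T$ according to whether $\text{deg}(v,E) > \epsilon d/L^2$ or not; since $\{v \in S : \text{deg}(v,E) > \epsilon d/L^2\}$ and $\{v \in S : \text{deg}(v,E) \le \epsilon d/L^2\}$ partition $S$, and $A\setminus S$, $P\setminus S$ partition $V\setminus S$, the sets $B$ and $T$ form a partition of $V$.

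For conditions~(2) and~(3), a node $v \in B$ is either (a) in $A\setminus S$, in which case $v \in A\setminus D_c$ (since $D_c\subseteq S$ by equation~\ref{label:general:eq:mapping:s}) so condition~(1) of Definition~\ref{label:general:def:critical:structure} gives $\text{deg}(v,E) > \epsilon d/L^2$, or (b) a spurious node with $\text{deg}(v,E) > \epsilon d/L^2$ by construction; either way the bound holds. Symmetrically, for $v \in T$, either $v \in P\setminus D_c$ with $\text{deg}(v,E) < 3\epsilon d/L^2$ by condition~(2) of Definition~\ref{label:general:def:critical:structure}, or $v$ is spurious with $\text{deg}(v,E) \le \epsilon d/L^2 < 3\epsilon d/L^2$. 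For condition~(4), I would bound $|S| = |D_c \cup D_{l,L_d}| \le |D_c| + |D_{l,L_d}|$; by condition~(3) of Definition~\ref{label:general:def:critical:structure}, $|D_c| \le (\delta/(L_d+1))\cdot|A|$, and by Invariant~\ref{label:general:inv:laminar} at layer $L_d$, $|D_{l,L_d}| \le \delta\cdot|A|$, so $|S| \le 2\delta|A| \le 4\delta|B|$ provided $|A| \le 2|B|$. This last inequality needs $A\setminus S \subseteq B$ (true by construction) together with $|S| \le |B|$ (so that $|A| \le |A\setminus S| + |S| \le 2|B|$) — I should double-check that $|S| \le |B|$ follows from $|S|\le 2\delta|A|$ and $\delta < 1/2$, which holds since then $|S| \le 2\delta|A| \le 2\delta(|B| + |S|)$, giving $|S|(1-2\delta) \le 2\delta|B|$, hence $|S| \le |B|$ for $\delta$ small; so this closes cleanly.

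For condition~(5), a node $v \in B\setminus S = A\setminus S$ satisfies $v \notin D_c$ and $v \notin D_{l,L_d}$, so by Invariant~\ref{label:general:inv:laminar:contain} $v \in C_{l,L_d}$, and Corollary~\ref{label:general:cor:lm:clean} directly gives $e^{-\gamma}(\lambda_d L^4/d)\text{deg}(v,E) \le \text{deg}(v,H_{L_d}) \le e^{\gamma}(\lambda_d L^4/d)\text{deg}(v,E)$, which is exactly the claim since $X = H_{L_d}$. Condition~(6) is precisely part~(1) of Corollary~\ref{label:general:cor:lemma:correct:1} translated via $X = H_{L_d}$. Condition~(7) requires a node $v \in T\setminus S$; since $S \supseteq D_c$, such a node lies in $P\setminus D_c$ (it is passive because it is in $T$ but not a spurious node that got routed into $T$ — more carefully, $T\setminus S = P\setminus S \subseteq P\setminus D_c$), so part~(2) of Corollary~\ref{label:general:cor:lemma:correct:1} gives $\text{deg}(v,H_{L_d}) \le 3\epsilon\lambda_d L^2 + 2$, as required. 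The main obstacle I anticipate is condition~(4): getting the constant right in $|S| \le 4\delta|B|$ requires carefully chaining the two separate dirty-node bounds ($|D_c|$ from the critical structure, $|D_{l,L_d}|$ from Invariant~\ref{label:general:inv:laminar}) and then converting a bound in terms of $|A|$ into one in terms of $|B|$, using $\delta < 1/2$ (equation~\ref{label:general:eq:imp:1}); everything else is a direct citation of an earlier definition, invariant, or corollary.
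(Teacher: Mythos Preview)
Your proposal is correct and follows essentially the same route as the paper: verify each of the seven conditions by direct appeal to Definition~\ref{label:general:def:critical:structure}, Invariant~\ref{label:general:inv:laminar:contain}, Invariant~\ref{label:general:inv:laminar}, Corollary~\ref{label:general:cor:lm:clean}, and Corollary~\ref{label:general:cor:lemma:correct:1}, exactly as you outline.

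The only place where the paper is slightly sharper is condition~(4). Your chain $|S|\le 2\delta|A|\le 2\delta(|B|+|S|)$ yields $|S|\le \frac{2\delta}{1-2\delta}|B|$, which gives $|S|\le 4\delta|B|$ only when $\delta\le 1/4$, whereas the paper's standing assumption is merely $\delta<1/2$ (equation~\ref{label:general:eq:imp:1}). The paper avoids this by first observing that $S\cap A = D_{l,L_d}$ exactly (since $D_c\cap A = D_{l0}\subseteq D_{l,L_d}$ by Invariant~\ref{label:general:inv:laminar:contain}), so $A\setminus S = A\setminus D_{l,L_d}$; then $|A| = |A\setminus D_{l,L_d}| + |D_{l,L_d}| \le |A\setminus D_{l,L_d}| + \delta|A|$ gives $|A|\le \frac{1}{1-\delta}|A\setminus S|\le 2|A\setminus S|\le 2|B|$ using only $\delta<1/2$, and hence $|S|\le |D_c|+|D_{l,L_d}|\le 2\delta|A|\le 4\delta|B|$. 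This is the refinement you would need to make your argument go through under the paper's stated parameter constraint; otherwise your proof is the paper's proof.
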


\begin{proof} \

\begin{enumerate}
\item  Recall that $A \subseteq V$ and $P = V \setminus A$ (see Definition~\ref{label:general:def:critical:structure}). Since $S \subseteq V$ (see equation~\ref{label:general:eq:mapping:s} and Definitions~\ref{label:general:def:critical:structure},~\ref{label:general:def:laminar:structure}), we infer that the node-set $V$ is partitioned into three subsets: $A \setminus S$, $P \setminus S$ and $S$. Hence, from equations~\ref{label:general:eq:mapping:b} and~\ref{label:general:eq:mapping:t}, we get: $B \cup T = V$ and $B \cap T = \emptyset$.
\item Definition~\ref{label:general:def:critical:structure} implies that $\text{deg}(v, E) > \epsilon d/L^2$ for all nodes $v \in A \setminus D_c$. Since $D_c \subseteq S$ (see equation~\ref{label:general:eq:mapping:s}), we get: $\text{deg}(v, E) > \epsilon d/L^2$ for all nodes $v \in A \setminus S$. Hence, equation~\ref{label:general:eq:mapping:b} implies that $\text{deg}(v, E) > \epsilon d/L^2$ for all nodes $v \in B$.
\item Definition~\ref{label:general:def:critical:structure} implies that $\text{deg}(v, E) < 3\epsilon d/L^2$ for all nodes $v \in P \setminus D_c$. Since $D_c \subseteq S$ (see equation~\ref{label:general:eq:mapping:s}), we get: $\text{deg}(v, E) < 3\epsilon d/L^2$ for all nodes $v \in P \setminus S$. Hence, equation~\ref{label:general:eq:mapping:t} implies that $\text{deg}(v, E) < 3\epsilon d/L^2$ for all nodes $v \in T$.
\item  The basic idea behind the proof of condition (4) is as follows. Equation~\ref{label:general:eq:critical:count} implies that $|D_c| \leq \delta \cdot |A|$. Since $S = D_c \cup D_{l, L_d}$, Invariant~\ref{label:general:inv:laminar} implies that $|S| \leq 2 \delta \cdot |A|$. So the size of the set $S$ is negligible in comparison with   the size of the set $A$. Hence, the size of the set $S$ is negligible even in comparison with the (smaller) set  $A \setminus S$. Since $A \setminus S \subseteq B$, it follows that  $|S|$ is negligible in comparison with $|B|$ as well. Specifically, we can show that $|S| \leq 4 \delta \cdot |B|$. Below, we present the proof in full details.

Recall that $D_{l, L_d} \subseteq A$ (see Invariant~\ref{label:general:inv:laminar:contain}). Hence, we have: 
\begin{equation}
\label{label:general:eq:funky:0}
|A| = \left| D_{l, L_d} \right| + \left| A \setminus  D_{l, L_d}\right|
\end{equation}
Invariant~\ref{label:general:inv:laminar} states that:
\begin{eqnarray}
\left| D_{l, L_d} \right|  \leq  \delta \cdot \left| A  \right|
 \label{label:general:eq:funky:1} 
 \end{eqnarray}
 Recall that $\delta < 1/2$ (see equation~\ref{label:general:eq:imp:1}). Hence, equations~\ref{label:general:eq:funky:0} and~\ref{label:general:eq:funky:1} imply that:
 \begin{eqnarray}
 \label{label:general:eq:funky:2}
 |A| \leq \frac{1}{(1-\delta)} \cdot \left|A \setminus  D_{l, L_d} \right| \leq 2 \cdot \left| A \setminus D_{l, L_d} \right| 
 \end{eqnarray}
 Equations~\ref{label:general:eq:funky:1} and~\ref{label:general:eq:funky:2} imply that:
 \begin{eqnarray}
 \left| D_{l, L_d} \right| \leq 2 \delta \cdot  \left| A \setminus  D_{l, L_d} \right| 
 \label{label:general:eq:funky:3}
 \end{eqnarray}
 Next,  Definition~\ref{label:general:def:critical:structure} implies that $|D_c| \leq \delta \cdot |A|$. Accordingly, from equation~\ref{label:general:eq:funky:2} we get:
 \begin{eqnarray}
 \label{label:general:eq:funky:4}
 |D_c| \leq 2 \delta \cdot \left| A \setminus  D_{l, L_d} \right|
 \end{eqnarray}
 Equations~\ref{label:general:eq:funky:3} and~\ref{label:general:eq:funky:4} imply that:
 \begin{equation}
 \label{label:general:eq:funky:5}
 \left| D_c \cup D_{l, L_d} \right| \leq 4\delta \cdot  \left| A \setminus  D_{l, L_d} \right|
  \end{equation}
Since $S = D_c \cup D_{l, L_d}$ (see equation~\ref{label:general:eq:mapping:s}) and  $D_{l0} = D_c \cap A \subseteq D_{l,L_d}$ (see Invariant~\ref{label:general:inv:laminar:contain}), we have:
\begin{equation}
\label{label:general:eq:funky:6}
A \setminus S = A \setminus D_{l, L_d}
\end{equation}
From equations~\ref{label:general:eq:mapping:s},~\ref{label:general:eq:mapping:b},~\ref{label:general:eq:funky:5} and~\ref{label:general:eq:funky:6}, we conclude that:
$$|S| \leq 4 \delta \cdot |B|.$$
\item Equations~\ref{label:general:eq:mapping:s},~\ref{label:general:eq:mapping:b} and Invariant~\ref{label:general:inv:laminar:contain} imply that:
\begin{eqnarray}
B \setminus S    =   A \setminus S  \subseteq  A \setminus D_{l,L_d} = C_{l,L_d}. \label{label:general:eq:funky:10} 
\end{eqnarray}
Corollary~\ref{label:general:cor:lm:clean} and equations~\ref{label:general:eq:mapping:x}, \ref{label:general:eq:funky:10} imply that for all nodes $v \in B \setminus S$, we have:
\begin{eqnarray*}
e^{-\gamma} \cdot (\lambda_d L^4/d) \cdot \text{deg}(v, E) \leq \text{deg}(v, X) \leq e^{\gamma} \cdot (\lambda_d L^4/d) \cdot \text{deg}(v, E).
\end{eqnarray*}
\item Since $X = H_{L_d}$ (see equation~\ref{label:general:eq:mapping:x}), Corollary~\ref{label:general:cor:lemma:correct:1} implies that $\text{deg}(v, X) \leq \lambda_d L^4 +2$ for all  $v \in V$. 
\item Equations~\ref{label:general:eq:mapping:s} and~\ref{label:general:eq:mapping:t}  imply that:
\begin{eqnarray}
\label{label:general:eq:funky:20}
T \setminus S = P \setminus S \subseteq P \setminus D_c
\end{eqnarray}
By Corollary~\ref{label:general:cor:lemma:correct:2} and equations~\ref{label:general:eq:mapping:x},~\ref{label:general:eq:funky:20}, for all nodes $v \in T \setminus S$, we have: $\text{deg}(v, X) \leq 3\epsilon \lambda_d L^2 + 2$.
\end{enumerate}
\end{proof}

Below, we summarize the main result of this section.

\begin{theorem}
\label{label:general:th:maintain:skeleton:minor}
Suppose that  a critical structure $(A, P, D_c, H)$ and a laminar structure $(H_1, \ldots, H_{L_d})$ are maintained as per the algorithm in Section~\ref{label:general:sec:maintain:critical}. Then the edges in the last layer $H_{L_d}$ form the edge-set of a skeleton of $G = (V, E)$.
\end{theorem}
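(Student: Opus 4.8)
The plan is to verify that the tuple $(B, T, S, X)$ defined in equations~\ref{label:general:eq:mapping:s}--\ref{label:general:eq:mapping:x} satisfies all seven conditions of Definition~\ref{label:general:def:skeleton}, and this is exactly the content of Lemma~\ref{label:general:lm:mapping:1}. So the proof of Theorem~\ref{label:general:th:maintain:skeleton:minor} is essentially a one-line invocation: maintain the critical and laminar structures via the algorithm of Section~\ref{label:general:sec:maintain:critical} (which is guaranteed by Lemma~\ref{label:general:lm:property:inv} to satisfy Invariants~\ref{label:general:inv:laminar:edge}--\ref{label:general:inv:laminar}), define $B, T, S$ and $X = H_{L_d}$ as in those equations, and then apply Lemma~\ref{label:general:lm:mapping:1} to conclude that $(B, T, S, H_{L_d})$ is a skeleton of $G = (V,E)$. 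The real work is already packaged into Lemma~\ref{label:general:lm:mapping:1} and the supporting machinery (Corollary~\ref{label:general:cor:lm:clean}, Corollaries~\ref{label:general:cor:lemma:correct:1} and~\ref{label:general:cor:lemma:correct:2}, and the four invariants).

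For completeness, I would briefly recap why each of the seven conditions holds, mirroring the proof of Lemma~\ref{label:general:lm:mapping:1}. Conditions (1)--(3) are bookkeeping: since $V$ is partitioned by $A$ and $P = V \setminus A$, and $S \subseteq V$, the node-set $V$ is partitioned by $A \setminus S$, $P \setminus S$, and $S$; the definitions of $B$ and $T$ then exhaust $V$ disjointly, and the degree bounds on $B$ and $T$ follow from conditions (1)--(2) of Definition~\ref{label:general:def:critical:structure} together with $D_c \subseteq S$ and the explicit classification of the remaining spurious nodes by whether $\text{deg}(v,E) > \epsilon d/L^2$. Condition (4), the bound $|S| \leq 4\delta|B|$, follows from combining $|D_c| \leq \delta|A|$ (equation~\ref{label:general:eq:critical:count}) with Invariant~\ref{label:general:inv:laminar} at layer $L_d$ (giving $|D_{l,L_d}| \leq \delta|A|$), using $\delta < 1/2$ to pass from $|A|$ to $|A \setminus D_{l,L_d}| = |A \setminus S| \subseteq |B|$. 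Condition (5) is Corollary~\ref{label:general:cor:lm:clean} applied to $C_{l,L_d} = A \setminus D_{l,L_d} \supseteq B \setminus S$. Conditions (6) and (7) are precisely Corollaries~\ref{label:general:cor:lemma:correct:1} and~\ref{label:general:cor:lemma:correct:2}, using $X = H_{L_d}$ and $T \setminus S \subseteq P \setminus D_c$.

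The main subtlety to flag — and the place where "just cite Lemma~\ref{label:general:lm:mapping:1}" is not quite the whole story — is that conditions (6) and (7) are \emph{not} consequences of the structural definitions and invariants alone; they depend on the monotonicity properties of the specific algorithm (Lemmas~\ref{label:general:lm:monotone:laminar} and~\ref{label:general:lm:fig:reduce}), which feed into Lemma~\ref{label:general:lm:correct:1} and hence Corollaries~\ref{label:general:cor:lemma:correct:1},~\ref{label:general:cor:lemma:correct:2}. So the theorem genuinely requires that the structures be maintained \emph{by this algorithm}, not merely that they satisfy the four invariants. Assuming all the cited lemmas, however, there is no remaining obstacle: the proof is a direct assembly. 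I would write it as: "By Lemma~\ref{label:general:lm:property:inv}, the maintained structures satisfy all four invariants after every edge-update. Define $B, T, S, X$ by equations~\ref{label:general:eq:mapping:s}--\ref{label:general:eq:mapping:x}. By Lemma~\ref{label:general:lm:mapping:1}, the tuple $(B, T, S, X)$ satisfies all seven conditions of Definition~\ref{label:general:def:skeleton}. Hence $X = H_{L_d}$ is the edge-set of a skeleton of $G$."
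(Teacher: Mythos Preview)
Your proposal is correct and takes essentially the same approach as the paper: the paper's own proof is the single line ``Follows from Definition~\ref{label:general:def:skeleton} and Lemma~\ref{label:general:lm:mapping:1},'' and your write-up is simply an expanded version of that invocation together with a (helpful and accurate) recap of the contents of Lemma~\ref{label:general:lm:mapping:1}. Your remark that conditions (6) and (7) genuinely rely on algorithm-specific properties (via Corollaries~\ref{label:general:cor:lemma:correct:1} and~\ref{label:general:cor:lemma:correct:2}) rather than on the invariants alone is also explicitly noted in the paper's discussion surrounding Lemma~\ref{label:general:lm:mapping:1}.
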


\begin{proof}
Follows from Definition~\ref{label:general:def:skeleton} and Lemma~\ref{label:general:lm:mapping:1}.
\end{proof}

\subsection{Bounding the amortized update time of our algorithm}
\label{label:general:sec:analyze:time}

We implement our algorithm using the most obvious data structures. To be more specific, we maintain a doubly linked list for each of the node-sets $A, P, D_c, \{D_{lj}\}$ and $\{ C_{lj} \}$. Further, we maintain counters that store the sizes of each of these sets. We also maintain the edge-sets $E, H$ and $\{ H_j\}$ using standard adjacency list data structures, and we maintain the degree of each node in each of these edge-sets. Whenever an ``element'' (an  edge or a node) appears in some linked list, we store a pointer from that element to its position in the linked list. Using these pointers, an element can be added to (or deleted from) a list in constant time.

\subsubsection{A few notations and terminologies}
\label{label:general:sec:runtime:notations}

Recall that initially the graph $G = (V, E)$ has zero edges. We will use the term ``edge-update'' to denote the insertion/deletion of an edge in $G = (V, E)$. Thus, in the dynamic setting, the graph $G = (V, E)$ changes via a sequence of edge-updates. For any integer $t \geq 0$, we will use the term ``edge-update $t$'' to refer to the $t^{th}$ edge-update in this sequence. Each phase  corresponds to a contiguous block of edge-updates (see Section~\ref{label:general:sec:main:algo}). We can, therefore, identify a phase by an interval $[t, t']$, where $t < t'$ are positive integers. Such a phase begins with the edge-update $t$ and ends just after the edge-update $t'$. The next phase starts with the edge-update $t'+1$.

\subsubsection{Roadmap}

The rest of this section is organized as follows.
\begin{itemize}
\item In Section~\ref{label:general:sec:lm:runtime:exclude},  we show that excluding the calls to the subroutines REVAMP() and REBUILD($j$), our algorithm handles each edge-update  in $O(L_d)$ time in the worst case. 
\item In Section~\ref{label:general:sec:time:rebuild}, we bound the time taken by a single call to REBUILD($j$). 
\item In Section~\ref{label:general:sec:runtime:revamp}, we show that the subroutine REVAMP() runs in $O(L^2 L_d/(\epsilon \delta))$ amortized time (see Lemma~\ref{label:general:lm:amortized:time:revamp}). This includes the time taken by the calls to  REBUILD($1$) at the end of a phase. 
\item Finally, in Section~\ref{label:general:sec:runtime:rebuild}, we show that for a given layer $j \in [1, L_d]$, the amortized time taken by a call to REBUILD($j$) in the middle of a phase is $O(L^2 L_d^2/(\epsilon \gamma \delta))$ (see Lemma~\ref{label:general:lm:bound:laminar:main}). Since there are $L_d$ possible values of $j$, the total amortized time of all the calls to REBUILD(.) in the middle of a phase is given by $O(L^2 L_d^3/(\epsilon \gamma \delta))$. 
\end{itemize}
\noindent Summing over all these running times, we get our main result.

\begin{theorem}
\label{label:general:th:maintain:runtime:main}
Our algorithm for maintaining a critical and a laminar structure handles an edge insertion/deletion in $G = (V, E)$ in $O(L^2 L_d^3/(\epsilon \gamma \delta))$ amortized time.
\end{theorem}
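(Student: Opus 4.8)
The plan is to follow the four-part decomposition sketched in the roadmap above. Write $m$ for the total number of edge-updates processed so far, and bound the total running time by three pieces: (i) a worst-case $O(L_d)$ cost per edge-update for everything the update procedure of Section~\ref{label:general:sec:phase:middle} does \emph{apart} from calls to REVAMP() and REBUILD($\cdot$); (ii) the amortized cost of the phase-ending REVAMP() calls, including the REBUILD($1$) each of them invokes; and (iii) for each fixed layer $j\in[1,L_d]$, the amortized cost of the REBUILD($j$) calls triggered by VERIFY() in the middle of a phase. Summing (i), (ii), and $L_d$ copies of (iii) yields $O(L^2L_d^3/(\epsilon\gamma\delta))$, the last term dominating. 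For (i): an edge-update changes the status of two endpoints ($O(1)$), inserts/deletes the edge from $H$ and $H_0$ ($O(1)$), possibly deletes it from all $L_d+1$ layers, and then runs CLEANUP($u$), CLEANUP($v$) and VERIFY(); by Figures~\ref{label:general:fig:ensure:clean} and~\ref{label:general:fig:insert} each of these is a single scan over $[1,L_d]$ with $O(1)$ pointer work per layer in the data structures of Section~\ref{label:general:sec:analyze:time}, so the non-recursive cost is $O(L_d)$ in the worst case.

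For (ii) the key combinatorial fact is that a node $v$ can become $c$-dirty at most $O\!\left(1+m_v L^2/(\epsilon d)\right)$ times over the whole run, where $m_v$ is the number of edge-updates incident to $v$: between two consecutive times $v$ becomes $c$-dirty its degree must traverse the entire gap $[\epsilon d/L^2,\,3\epsilon d/L^2]$ between the active and passive thresholds of Definition~\ref{label:general:def:critical:structure}, because REVAMP() always reclassifies a $c$-dirty node onto the safe side of the threshold it last crossed. Hence the total number of ``$v$ becomes $c$-dirty'' events over the run is $O(mL^2/(\epsilon d))$. A phase ends only when $|D_c|>(\delta/(L_d+1))|A|$, so that many distinct such events were consumed; a single REVAMP() costs $O(|D_c|\cdot d)$ to re-examine the $c$-dirty nodes (moving a node between $A$ and $P$ and fixing $H$ is $O(d)$ by Lemma~\ref{label:general:lm:deg:1}) plus $O(|H|)=O(|A|d)$ for REBUILD($1$), and since $|A|\le (L_d+1)|D_c|/\delta$ this is $O(|D_c|\,d\,L_d/\delta)$. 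Summing over phases, the total REVAMP() cost is $O\!\left((mL^2/(\epsilon d))\cdot dL_d/\delta\right)=O(mL^2L_d/(\epsilon\delta))$, i.e. $O(L^2L_d/(\epsilon\delta))$ amortized per update.

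Step (iii) is the main obstacle, and it is where equation~\ref{label:general:eq:Li:epoch} and equation~\ref{label:general:eq:102} are used. Fix $j$ and call a maximal interval between two consecutive calls to REBUILD($j'$) with $j'\le j$ a $j$-epoch. REBUILD($j$) is triggered only when $|D_{lj}|$ first exceeds $(\delta(j+1)/(L_d+1))|A|$, while right after the previous REBUILD($j'\le j$) we had $|D_{lj}|=|D_{l,j'-1}|\le (\delta j/(L_d+1))|A|$ by Lemma~\ref{label:general:lm:rebuild:minor:1} and Invariant~\ref{label:general:inv:laminar}; so $D_{lj}$ grows by $\Omega(\delta|A|/L_d)$ during a $j$-epoch. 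Each unit of growth is either a $c$-dirty node cascading in (only $O(\delta|A|/L_d)$ of those per phase, accounting for $O(1)$ REBUILD($j$) per phase, charged to the $c$-dirty events as in (ii)), or a node $v$ moved into $D_{lj}$ by CLEANUP because it violated Invariant~\ref{label:general:inv:laminar:clean} at some layer $j'\le j$. By Lemma~\ref{label:general:lm:monotone:laminar}, in the middle of a phase $H_{j'-1}$ and $H_{j'}$ only lose edges, and right after the last REBUILD($\le j'$) the halving $\text{deg}(v,H_{j'})\approx\text{deg}(v,H_{j'-1})/2$ held to within $\pm1$ with $\text{deg}(v,H_{j'})\ge L$ (Lemma~\ref{label:general:lm:fig:reduce}, the lower bound $\ge L$ being exactly what equation~\ref{label:general:eq:Li:epoch} secures, cf.\ Claim~\ref{label:general:cl:quick:1}); a short case analysis, using equation~\ref{label:general:eq:102}, then shows the $(1+\gamma/L_d)$-slack in Invariant~\ref{label:general:inv:laminar:clean} survives $\Omega\!\left((\gamma/L_d)\cdot\text{deg}(v,H_{j'-1})\right)$ edge-updates incident to $v$, and since $v$ is active, $\text{deg}(v,H_{j'-1})=\Omega\!\left((\epsilon d/L^2)\,2^{-(j-1)}\right)$ (using $j'\le j$ and Lemma~\ref{label:general:lm:clean}). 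Summing over the $\Omega(\delta|A|/L_d)$ nodes involved, a $j$-epoch contains $\Omega\!\left(\delta\gamma\epsilon\,|A|\,d\,/(L_d^2 L^2\, 2^{j})\right)$ edge-updates, whereas the REBUILD($j$) ending it costs only $O(|H_{j-1}|)=O\!\left(|A|\,d\,2^{-(j-1)}\right)$: every edge of $H_{j-1}$ touches an active node and $\text{deg}(v,H_{j-1})\le d\,2^{-(j-1)}+O(1)$ by Lemma~\ref{label:general:lm:correct:1}, and the successive SPLITs in Figure~\ref{label:general:fig:rebuild} telescope. Dividing, each such REBUILD($j$) amortizes to $O(L^2L_d^2/(\epsilon\gamma\delta))$ per edge-update of its epoch; since the $j$-epochs partition the timeline, the total cost of all REBUILD($j$) calls is $O(mL^2L_d^2/(\epsilon\gamma\delta))$.

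Combining (i)--(iii), the amortized cost per edge-update is $O(L_d)+O(L^2L_d/(\epsilon\delta))+L_d\cdot O(L^2L_d^2/(\epsilon\gamma\delta))=O(L^2L_d^3/(\epsilon\gamma\delta))$, which is Theorem~\ref{label:general:th:maintain:runtime:main}. I expect the delicate points in the full write-up to be: pinning down the exact constants in the ``traverse the threshold gap'' claim for (ii), and, for (iii), making the statement ``$\Omega((\gamma/L_d)\,\text{deg})$ incident updates are needed before Invariant~\ref{label:general:inv:laminar:clean} breaks'' fully rigorous while the relevant degrees are themselves shrinking — this is precisely the reason the slack $1+\gamma/L_d$ and the inequalities~\ref{label:general:eq:102},~\ref{label:general:eq:Li:epoch} were set up.
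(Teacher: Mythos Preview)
Your proposal is correct and follows essentially the same approach as the paper: the same three-way decomposition into the $O(L_d)$ worst-case non-recursive cost (Lemma~\ref{label:general:lm:runtime:exclude}), the threshold-gap argument for REVAMP() yielding $O(L^2L_d/(\epsilon\delta))$ amortized (Lemmas~\ref{label:general:lm:update:count:1} and~\ref{label:general:lm:revamp:final}), and the per-layer epoch analysis for REBUILD($j$) yielding $O(L^2L_d^2/(\epsilon\gamma\delta))$ amortized (Section~\ref{label:general:sec:runtime:rebuild}). The only cosmetic difference is that where you split the growth of $D_{lj}$ into ``$c$-dirty cascades'' versus ``CLEANUP at some $j'\le j$'', the paper more directly isolates $D_{lj}\setminus D_{l,j-1}$ (Lemmas~\ref{label:general:lm:bound:laminar:0}--\ref{label:general:lm:bound:laminar:1}), which automatically excludes both the cascading $c$-dirty nodes and the violations at layers $<j$, making the charging slightly cleaner.
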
 

\subsubsection{A simple bound}
\label{label:general:sec:lm:runtime:exclude}

We devote this section to the proof of the following lemma. 

\begin{lemma}
\label{label:general:lm:runtime:exclude}
Excluding the calls made to the REVAMP() and REBUILD($j$) subroutine, our algorithm handles an edge-update in $G = (V, E)$ in $O(L_d)$ time.
\end{lemma}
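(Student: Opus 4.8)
Excluding the calls made to REVAMP() and REBUILD($j$), our algorithm handles an edge-update in $G=(V,E)$ in $O(L_d)$ time.

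The plan is to simply walk through the three steps of the middle-of-phase algorithm (Section~\ref{label:general:sec:phase:middle}) plus the constant-time bookkeeping done for an update that terminates a phase (before REVAMP() is invoked), and to check that every operation other than the excluded subroutine calls costs either $O(1)$ or $O(L_d)$. First I would recall the data-structure setup from Section~\ref{label:general:sec:analyze:time}: all node-sets ($A,P,D_c$, and each $D_{lj},C_{lj}$) and edge-sets ($E,H$, each $H_j$) are stored with linked lists / adjacency lists together with cross-pointers, so inserting or deleting a named element into a named set, or updating a stored degree/size counter, is $O(1)$ per operation; and membership tests like ``is $x\in A$?'', ``is $x\in C_{lj}$?'' are $O(1)$.

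Now the step-by-step accounting. For Step~I (updating the critical structure): we possibly update $H$ once ($O(1)$), and for each of the two endpoints $x\in\{u,v\}$ we read its (maintained) degree $\deg(x,E)$ and possibly move it into $D_c$ — that is $O(1)$ per endpoint, so $O(1)$ total. For Step~II (deciding whether to terminate the phase): we compare the counters $|D_c|$ and $(\delta/(L_d+1))\cdot|A|$, which is $O(1)$; if the phase terminates we stop here and hand off to REVAMP(), whose cost is explicitly excluded. For Step~III (updating the laminar structure): item~(1) moves each newly-$c$-dirty endpoint into every $D_{lj}$ and out of every $C_{lj}$ over $j\in[0,L_d]$ — that is $O(L_d)$ per endpoint, hence $O(L_d)$; item~(2) either inserts the edge into $H_0$ ($O(1)$) or deletes it from every $H_j$, $j\in[0,L_d]$ — that is $O(L_d)$; item~(3) calls CLEANUP($x$) for each endpoint $x$. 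I would bound CLEANUP separately: its body (Figure~\ref{label:general:fig:ensure:clean}) runs a loop over $j=1,\dots,L_d$, and for each $j$ it performs an $O(1)$ membership/invariant check, and at the \emph{first} index $j$ where the check fires it moves $x$ into $D_{lj'}$ and out of $C_{lj'}$ for all $j'\in[j,L_d]$ and returns; the loop touches at most $L_d$ indices with $O(1)$ work plus one ``cascade'' of at most $L_d$ constant-time set updates, so CLEANUP runs in $O(L_d)$; two such calls give $O(L_d)$. Item~(4) calls VERIFY() (Figure~\ref{label:general:fig:insert}): its loop runs $j=1,\dots,L_d$, each iteration doing an $O(1)$ comparison of the counter $|D_{lj}|$ against $(\delta(j+1)/(L_d+1))\cdot|A|$, and the only potentially expensive thing it does is call REBUILD($j$) and return — and REBUILD's cost is, again, explicitly excluded from this lemma. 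So VERIFY() contributes $O(L_d)$ of non-excluded work.

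Summing the non-excluded costs across Steps~I--III gives $O(1)+O(1)+O(L_d)=O(L_d)$ per edge-update, which is exactly the claimed bound. I do not expect any real obstacle here; the only point requiring a little care is the analysis of CLEANUP($x$), where one must observe that although the inner loop index ranges over all of $[1,L_d]$, the ``$j'\in[j,L_d]$'' inner update block is executed only once (the subroutine returns immediately after), so the total is $O(L_d)$ rather than $O(L_d^2)$; and, symmetrically, one must be explicit that VERIFY()'s loop does only $O(1)$ work per layer apart from the single excluded REBUILD call. Everything else is routine constant-time pointer manipulation justified by the data structures of Section~\ref{label:general:sec:analyze:time}.
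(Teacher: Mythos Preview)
Your proposal is correct and follows essentially the same approach as the paper: a step-by-step walk through Steps~I--III of Section~\ref{label:general:sec:phase:middle}, bounding each item by $O(1)$ or $O(L_d)$ using the constant-time data-structure operations, with the REBUILD and REVAMP costs explicitly excluded. Your treatment is slightly more detailed (in particular the explicit justification that CLEANUP is $O(L_d)$ rather than $O(L_d^2)$ because of the early RETURN), but the structure and content match the paper's proof.
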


\begin{proof}
Recall the description of our algorithm in Section~\ref{label:general:sec:main:algo}. For the purpose of this lemma, we only need to concern ourselves with Section~\ref{label:general:sec:phase:middle}, for Section~\ref{label:general:sec:phase:begin} only states the initial conditions in the beginning of a phase and Section~\ref{label:general:sec:phase:end} describes the REVAMP() subroutine. Accordingly, we analyze the time taken by the Steps I, II and III from Section~\ref{label:general:sec:phase:middle}, one after the other. 
\begin{itemize}
\item {\em Running time of Step I.}

Here, at most one edge is inserted into (or deleted from) the set $H$, and at most two nodes become $c$-dirty. It is easy to check that these operations can be implemented in $O(1)$ time.
\item {\em Running time of Step II.} 

This step only asks us to compare the number of $c$-dirty nodes against the number of active nodes. Hence, this can be implemented in $O(1)$ time.  
\item {\em Running time of Step III.}

Here, we bound the time taken by each of the items (1), (2), (3) and (4) in Step III.
\begin{enumerate}
\item Item (1) moves at most two nodes from $D_{lk}$ to $C_{lk}$ for all $k \in [0, L_d]$, and  requires $O(L_d)$ time.
\item Item (2) inserts/deletes an edge from at most $(L_d+1)$ edge-sets $\{H_j\}$, and requires $O(L_d)$ time.
\item Item (3) makes two calls to CLEANUP($x$). Each of these calls runs in $O(L_d)$ time. 
\item Item (4)  calls  VERIFY(), which requires $O(L_d)$ time excluding the call to REBUILD($j$).
\end{enumerate}
Hence, the total time taken by Step III is $O(L_d)$.
\end{itemize}
\noindent Hence, excluding the calls to REBUILD($j$) and REVAMP(), handling an edge-update  requires $O(L_d)$ time. 
\end{proof}

\subsubsection{Analyzing the running time of a single call to REBUILD($j$)}
\label{label:general:sec:time:rebuild}

\noindent We will bound  the running time of a call to the subroutine REBUILD($j$) as described in Figure~\ref{label:general:fig:rebuild}. But first, recall the initial conditions stated in Figure~\ref{label:general:fig:rebuild:init} and the properties of the subroutine stated in Lemma~\ref{label:general:lm:rebuild:minor:1}.

\begin{lemma}
\label{label:general:lm:data:structure}
Consider any layer $j \in [1, L_d]$, and note that the subroutine REBUILD($j$) does not change the set of active nodes. A call to the subroutine  REBUILD($j$) runs in   $O(|A| \cdot d \cdot 2^{-j})$ time.
\end{lemma}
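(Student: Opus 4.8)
\textbf{Proof plan for Lemma~\ref{label:general:lm:data:structure}.}

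The plan is to trace through the body of REBUILD($j$) (Figure~\ref{label:general:fig:rebuild}) and bound the cost of each iteration of the {\sc For} loop. The loop runs for $k = j$ to $L_d$, and in iteration $k$ the only substantive work is the call $H_k \leftarrow \text{SPLIT}(H_{k-1})$ (lines 3 and 4 merely copy the sets $D_{l,k-1}$ and $C_{l,k-1}$ into $D_{l,k}$ and $C_{l,k}$, which by Lemma~\ref{label:general:lm:rebuild:minor:1} amounts to pointer manipulation costing $O(|A|)$ per layer, hence $O(|A| \cdot L_d)$ in total, which is absorbed into the claimed bound since $L_d \le d \cdot 2^{-j}$ whenever $j$ is small, and is $O(|A|)$-negligible otherwise — I will double-check this edge case but it is dominated). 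By the stated properties of SPLIT (Figure~\ref{label:general:fig:split}), the call $\text{SPLIT}(H_{k-1})$ runs in $O(|H_{k-1}|)$ time. So the total running time is $O\!\left(\sum_{k=j}^{L_d} |H_{k-1}|\right) = O\!\left(\sum_{k=j-1}^{L_d-1} |H_k|\right)$, and it remains to bound $\sum_{k \geq j-1} |H_k|$.

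The key step is to bound $|H_k|$ geometrically in $k$. First I would note $|H_k| = \frac{1}{2}\sum_{v \in V} \text{deg}(v, H_k)$, so it suffices to bound $\sum_v \text{deg}(v, H_k)$. Here I invoke Lemma~\ref{label:general:lm:correct:1} (equation~\ref{label:general:eq:lm:correct:1}), which gives $\text{deg}(v, H_k) \leq d \cdot 2^{-k} + \sum_{j'=0}^{k-1} 2^{-j'} \leq d \cdot 2^{-k} + 2$ for every node $v \in V$. However, summing this over all $v \in V$ naively gives $O(n \cdot d \cdot 2^{-k})$, which is too weak — we want the bound in terms of $|A|$, not $n$. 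The fix is to observe that $H_k \subseteq H_0 = H$, and by condition (4) of Definition~\ref{label:general:def:critical:structure} every edge in $H$ is incident upon an active node, so $\sum_{v \in V} \text{deg}(v, H_k) \leq 2 \sum_{v \in A} \text{deg}(v, H_k) \leq 2|A| \cdot (d \cdot 2^{-k} + 2)$. Hence $|H_k| = O(|A| \cdot (d \cdot 2^{-k} + 1))$.

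Now I sum the geometric series: $\sum_{k=j-1}^{L_d} |H_k| = O\!\left( |A| \sum_{k=j-1}^{L_d} (d \cdot 2^{-k} + 1) \right) = O\!\left( |A| \cdot d \cdot 2^{-(j-1)} + |A| \cdot L_d \right) = O(|A| \cdot d \cdot 2^{-j} + |A| \cdot L_d)$. To conclude the claimed bound of $O(|A| \cdot d \cdot 2^{-j})$, I need $|A| \cdot L_d = O(|A| \cdot d \cdot 2^{-j})$, i.e. $L_d \leq d \cdot 2^{-j}$; since $j \leq L_d$ and $2^{L_d} \leq 2 d / L^4$ (equation~\ref{label:general:eq:Li:lambda}), we get $d \cdot 2^{-j} \geq d \cdot 2^{-L_d} \geq L^4/2 \geq L_d$ (using $L_d \leq L$ from equation~\ref{label:general:eq:lastt} and $L \geq 1$), so the additive $|A| \cdot L_d$ term is indeed dominated. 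The main obstacle I anticipate is getting the accounting in terms of $|A|$ rather than $n$ exactly right — i.e. being careful that the edge-set $H_k$ really is supported on edges touching $A$ and that Lemma~\ref{label:general:lm:correct:1}'s per-node degree bound combines correctly with that support restriction — but this is routine once the observation $H_k \subseteq H$ is in hand.
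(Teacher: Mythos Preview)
Your proposal is correct and follows essentially the same approach as the paper's proof: both bound $|H_{k-1}|$ by combining the observation that every edge in $H_{k-1} \subseteq H$ touches an active node with the per-node degree bound $\text{deg}(v, H_{k-1}) = O(d \cdot 2^{-(k-1)})$ from Lemma~\ref{label:general:lm:correct:1}, then sum the resulting geometric series over $k = j, \ldots, L_d$. The paper absorbs the $O(|A|)$ cost of copying $D_{l,k-1}, C_{l,k-1}$ into the per-iteration bound via $d \cdot 2^{-(k-1)} \geq 1$ (from equation~\ref{label:general:eq:Li}), which is the same inequality you unpack to show $L_d \leq d \cdot 2^{-j}$; so your handling of the additive $|A| \cdot L_d$ term is just a slightly more explicit version of the same accounting.
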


\begin{proof}
The call to  REBUILD($j$) does not affect the layers $j' < j$. Instead, the subroutine iteratively sets $H_{j'} \leftarrow \text{SPLIT}(H_{j'-1})$ for $j' = j$ to $L_d$. See Figure~\ref{label:general:fig:rebuild}.

Consider the first iteration of the For loop in Figure~\ref{label:general:fig:rebuild}, where we set $H_{j} \leftarrow \text{SPLIT}(H_{j-1})$. Since $H_{j-1} \subseteq H$ (see Definition~\ref{label:general:def:laminar:structure}), and $H$ is the set of edges incident upon active nodes (see Definition~\ref{label:general:def:critical:structure}), each edge  in $H_{j-1}$ has at least one  endpoint in the set $A$. Also by Lemma~\ref{label:general:lm:correct:1}, the maximum degree of a node in a layer $j'$ is $O(d \cdot 2^{-j'})$. Hence, there are at most $O(|A| \cdot d \cdot 2^{-(j-1)})$ edges in $H_{j-1}$. Thus, we have:
\begin{equation}
\label{label:general:eq:data:structure}
|H_{j-1}| = O(|A| \cdot d \cdot 2^{-(j-1)}).
\end{equation}

By Figure~\ref{label:general:fig:split}, we can implement  the subroutine SPLIT($H_{j-1}$) in $O(|H_{j-1}|)$ time. After setting $H_j \leftarrow \text{SPLIT}(H_{j-1})$, we have to perform the following operations: (a) $D_{l,j} \leftarrow D_{l,j-1}$ and (b) $C_{l,j} \leftarrow C_{l,j-1}$. Since both the sets $D_{l,j-1}$ and $C_{l,j-1}$ are contained in $A$ (see Invariant~\ref{label:general:inv:laminar:contain}), these operations can be performed in $O(|A|)$ time. Since $j \leq L_d$, we have   $d \cdot 2^{-(j-1)} \geq 1$  (see equation~\ref{label:general:eq:Li}). Thus,  the time taken by the first iteration of the For loop in Figure~\ref{label:general:fig:rebuild} is $O(|H_{j-1}| + |A|) = O(|A| \cdot d \cdot 2^{-(j-1)})$. 

Applying the same argument by which we arrived at equation~\ref{label:general:eq:data:structure},  immediately after setting   $H_j \leftarrow \text{SPLIT}(H_{j-1})$, we have the following guarantee:
\begin{equation}
\label{label:general:eq:data:structure:1}
|H_{j}| = O(|A| \cdot d \cdot 2^{-j})
\end{equation}
Accordingly, the second iteration of the For loop in Figure~\ref{label:general:fig:rebuild}, where we set $H_{j+1} \leftarrow \text{SPLIT}(H_j)$, can be implemented in $O(|A| \cdot d \cdot 2^{-j})$ time. In general, it can be shown that setting $H_{k} \leftarrow \text{SPLIT}(H_{k-1})$ will take $O(|A| \cdot d \cdot 2^{- (k-1)})$ time for $k \in [j, L_d]$. So the  runtime of the subroutine REBUILD($j$) is given by:
$$\sum_{k= j}^{L_d} O\left(|A| \cdot d \cdot 2^{-(k-1)} \right) = O\left(|A| \cdot d \cdot 2^{-j} \right).$$
This concludes the proof of Lemma~\ref{label:general:lm:data:structure}. 
\end{proof}

\subsubsection{Bounding the amortized update time of the subroutine REVAMP()}
\label{label:general:sec:runtime:revamp}

\newcommand{\U}{\mathcal{U}}

Recall the notations and terminologies introduced in Section~\ref{label:general:sec:runtime:notations}. Our algorithm works in ``phases'' (see the discussion in the beginning of Section~\ref{label:general:sec:main:algo}). For every integer $k \geq 1$, suppose that the phase $k$ starts with the edge-update $t_k$. Thus, we have $t_1 = 1$ and $t_k < t_{k+1}$ for every  integer $k \geq 1$, and each phase $k$ corresponds to the interval $[t_k, t_{k+1}-1]$. Next, note that a call is made to the subroutine REVAMP() at the end of each phase (sec Section~\ref{label:general:sec:phase:end}).  So the $k^{th}$ call to the subroutine REVAMP() occurs  while processing the edge-update $t_{k+1}-1$.


\begin{itemize}
\item Throughout the rest of this section, we will use the properties of our algorithm outlined in Lemma~\ref{label:general:lm:middle}.
\end{itemize}

\noindent We  introduce the following notations.
\begin{itemize}
 \item In the beginning of a phase $k \geq 1$, we have $D_c = \emptyset$. As the phase progresses, the set $D_c$ keeps getting bigger and bigger.  Finally, due to the edge-update $t_{k+1}-1$, the size of the set $D_c$ exceeds the threshold $(\delta/(L_d+1)) \cdot |A|$.  Let $D_c^+{(k)}$ denote the status of the set  $D_c$ precisely at this  instant. Note that immediately afterwords, a call is made to the subroutine REVAMP(), and when the subroutine finishes execution we again find that the set $D_c$ is empty. Then we initiate the next phase. 
\item The indicator variable $\text{{\sc Dirty}}_c^+(v, k) \in \{0, 1\}$  is set to one  iff $v \in D_c^+(k)$. Thus, we have:
\begin{equation}
\label{label:general:eq:indicator:dirty}
\left| D_c^+(k) \right| = \sum_{v \in V} \text{{\sc Dirty}}_c^+(v, k)
\end{equation}
 \item The sets $A$ and $P$ do not change in the middle of a phase. Hence, without any ambiguity, we let $A(k)$ and $P(k)$ respectively denote the status of the set $A$ (resp. $P$) during phase $k \geq 1$. 
 \item Consider a counter $\U_v$. Initially,  the graph $G = (V, E)$ is empty and $\U_v = 0$. Subsequently, whenever an edge-update occurs in the graph $G = (V,E)$, if the edge being inserted/deleted is incident upon the node $v$, then we set $\U_v \leftarrow \U_v + 1$. 
 \item Let $\U_v(t)$ denote the status of  $\U_v$ after the edge-update $t \geq 0$. Note that $\U_v(0) = 0$, and in general, $\U_v(t)$ gives the number of edge-updates that are incident upon $v$ among the first $t$ edge-updates.
 \end{itemize}

\paragraph{Roadmap.} In Lemma~\ref{label:general:lm:runtime:revamp} and Corollary~\ref{label:general:cor:lm:runtime:revamp}, we bound the time taken by the call to REVAMP() at the end of a given phase $k \geq 1$. Specifically, we show that such a call runs in $O(\left|D_c^+(k) \right| \cdot d \cdot L_d \cdot \delta^{-1})$ time. In Lemma~\ref{label:general:lm:update:count:1}, we show that on average, a node $v$ can become $c$-dirty only after $\Omega(\epsilon d L^{-2})$ edge-updates  incident upon $v$ have taken place in the graph $G = (V, E)$. By Corollary~\ref{label:general:cor:lm:runtime:revamp}, processing a $c$-dirty node at the end of the phase requires  $O(d L_d \delta^{-1})$ time. Hence, we get an amortized update time of $O((dL_d \delta^{-1})/(\epsilon d L^{-2})) = O(L^2 L_d /(\epsilon \delta))$ for the subroutine REVAMP(). This is proved in Lemma~\ref{label:general:lm:revamp:final}.

\begin{lemma}
\label{label:general:lm:runtime:revamp}
The  call to REVAMP() at the end of phase $k$ runs in $O(|D_c^+(k) \cup A(k)| \cdot d)$ time. 
\end{lemma}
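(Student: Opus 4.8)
The plan is to trace through the two steps of the subroutine REVAMP() (see Section~\ref{label:general:sec:phase:end}) and bound the cost of each. Step~1 runs a \textbf{For} loop over the set $V' = D_c^+(k)$ of $c$-dirty nodes present when REVAMP() is invoked at the end of phase $k$. For each such node $v$, the loop either does nothing (if $v$ already satisfies conditions~(1)--(2) of Definition~\ref{label:general:def:critical:structure}), or it flips $v$ between $A$ and $P$ and correspondingly updates $H$ by inserting or deleting all edges of $E$ incident on $v$ that have their other endpoint passive. The key observation is that, by Lemma~\ref{label:general:lm:deg:1}, every node has degree at most $d$ in $G$, so scanning and modifying the incident edges of a single node $v$ in the adjacency-list data structures costs $O(d)$ time (this includes updating the stored degrees and the linked-list pointers for $H$). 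Summing over all $v \in V' = D_c^+(k)$, Step~1 takes $O(|D_c^+(k)| \cdot d)$ time.

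For Step~2, we first reset $H_0 \leftarrow H$, set $H_j \leftarrow \emptyset$ for $j \in [1,L_d]$, and reset the sets $D_{lj}, C_{lj}$; initializing these $O(L_d)$ node-sets of size at most $|A(k)|$ each costs $O(L_d \cdot |A(k)|)$, and copying $H$ into $H_0$ costs $O(|H|)$. Here I would invoke the bound that $|H| = O(|A(k)| \cdot d)$: since $H$ consists exactly of the edges incident on active nodes (condition~(4) of Definition~\ref{label:general:def:critical:structure}) and each active node has degree at most $d$ (Lemma~\ref{label:general:lm:deg:1}), this is immediate. Then REVAMP() calls REBUILD($1$); by Lemma~\ref{label:general:lm:data:structure} with $j=1$, this call runs in $O(|A(k)| \cdot d \cdot 2^{-1}) = O(|A(k)| \cdot d)$ time. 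Adding up Steps~1 and~2, the total cost is $O\bigl((|D_c^+(k)| + |A(k)|) \cdot d\bigr) = O\bigl(|D_c^+(k) \cup A(k)| \cdot d\bigr)$, which is the claimed bound. (One should note that, by Lemma~\ref{label:general:lm:middle}, the sets $A$ and $P$ do not change in the middle of phase $k$, so referring to $A(k)$ is unambiguous, and $|A(k)| \le |D_c^+(k) \cup A(k)|$.)

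The main point requiring care is that Step~1 genuinely touches only $O(d)$ edges per node in $V'$ and does not, for instance, rescan the whole edge-set $H$ once per node; this is guaranteed by the adjacency-list representation described at the start of Section~\ref{label:general:sec:analyze:time}, together with the pointers that allow $O(1)$ insertion/deletion from each linked list. Everything else is bookkeeping: the degree counters, the membership lists for $A, P, D_c$, and the layer-sets are all updated in time proportional to the number of elements actually moved, which is subsumed in the $O(d)$ per dirty node and the $O(|A(k)|\cdot d)$ for the REBUILD($1$) call. So there is no real obstacle here — the lemma is essentially a careful accounting of the cost of one pass of REVAMP(), and the only quantitative inputs are Lemma~\ref{label:general:lm:deg:1} (degrees bounded by $d$) and Lemma~\ref{label:general:lm:data:structure} (cost of REBUILD). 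The genuinely hard work — turning this per-call bound into an amortized $O(L^2 L_d/(\epsilon\delta))$ update time — is deferred to Lemma~\ref{label:general:lm:update:count:1} and Lemma~\ref{label:general:lm:revamp:final}, where one argues that $\Omega(\epsilon d/L^2)$ incident edge-updates must occur before a node can re-enter $D_c$.
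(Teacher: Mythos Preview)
Your proposal follows the same two-step accounting as the paper and arrives at the correct final bound, but there is one small oversight in Step~2. When you invoke Lemma~\ref{label:general:lm:data:structure} for the cost of REBUILD($1$), the set $A$ in that lemma is the active set \emph{at the moment} REBUILD($1$) is called---i.e., \emph{after} Step~1 of REVAMP() has run. Step~1 may have moved passive $c$-dirty nodes into $A$, so the active set at that point can be strictly larger than $A(k)$; in the worst case it has size $|A(k) \cup D_c^+(k)|$, not $|A(k)|$. The same remark applies to your bounds on $|H|$ and on the cost of reinitializing the layer node-sets. The paper makes exactly this point (``there are at most $|D_c^+(k) \cup A(k)|$ many active nodes at the end of the For loop''), and it is the reason the lemma is stated with $|D_c^+(k) \cup A(k)|$ rather than just $|A(k)|$.

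Your final arithmetic happens to absorb this because you already add $O(|D_c^+(k)|\cdot d)$ from Step~1, and $|A(k)| + |D_c^+(k)| \ge |A(k) \cup D_c^+(k)|$; but the intermediate claim that Step~2 by itself costs $O(|A(k)|\cdot d)$ is not justified as written. The fix is simply to replace $|A(k)|$ by $|A(k) \cup D_c^+(k)|$ throughout your Step~2 analysis. Everything else matches the paper's argument.
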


\begin{proof}
Recall the description of the subroutine REVAMP() in Section~\ref{label:general:sec:phase:end}. The node-set $V'$ in the For loop is given by the set $D_c^+(k)$. We first bound the running time of this For loop (see item (1) in Section~\ref{label:general:sec:phase:end}).

 During  a single iteration of the For loop, we pick one $c$-dirty node $v$ and make it $c$-clean. Next, we might have to change the status of the node from active to passive (or vice versa) and remove (resp. add) the edges incident upon $v$ from (resp. to) the set $H$. Since the node $v$ has at most $d$ edges incident upon it (see Lemma~\ref{label:general:lm:deg:1}), one iteration of the For loop can be implemented in $O(d)$ time. Since the For loop runs for $D_c^+(k)$  iterations, the total time taken by the For loop is $O(|D_c^+(k)| \cdot d)$.
 
Next, we bound the time taken to implement item (2) in Section~\ref{label:general:sec:phase:end}. In the beginning of the For loop, we had $|A(k)|$ many active nodes. In the worst case, the For loop can potentially change the status of every  passive node in $D_c^+(k)$ to active. Thus, there are at most $|D_c^+(k) \cup A(k)|$ many active nodes at the end of the For loop. So the first part of item (2), where we update the edge-sets $\{H_j\}$ and the node-sets $\{D_{lj}, C_{lj}\}$, can be implemented in $O(|D_c^+(k) \cup A(k)| \cdot d)$ time. Next, by Lemma~\ref{label:general:lm:data:structure}, the call to REBUILD($1$) also takes   $O(|D_c^+(k) \cup A_k| \cdot d)$ time.

Thus,  the total time taken by call to REVAMP() at the end of phase $k$ is $O(|D_c^+(k) \cup A(k)| \cdot d)$. 
\end{proof}

\begin{corollary}
\label{label:general:cor:lm:runtime:revamp}
The  call to  REVAMP() at the end of phase $k$ runs in $O(|D_c^+(k)| \cdot d \cdot (L_d/\delta))$ time.
\end{corollary}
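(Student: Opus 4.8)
The plan is to derive the Corollary directly from Lemma~\ref{label:general:lm:runtime:revamp}, which already bounds the running time of the REVAMP() call at the end of phase $k$ by $O(|D_c^+(k) \cup A(k)| \cdot d)$. Thus it suffices to show that $|D_c^+(k) \cup A(k)| = O(|D_c^+(k)| \cdot L_d/\delta)$, i.e., that the number of active nodes during phase $k$ is not much larger than $(L_d/\delta)$ times $|D_c^+(k)|$.

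First I would invoke the termination rule of a phase (see Section~\ref{label:general:sec:phase:end} and Step II of Section~\ref{label:general:sec:phase:middle}, as summarized in Lemma~\ref{label:general:lm:middle}): phase $k$ ends precisely when the size of $D_c$ first exceeds the threshold $(\delta/(L_d+1)) \cdot |A|$, and $D_c^+(k)$ is by definition the status of $D_c$ at that instant. Hence $|D_c^+(k)| > (\delta/(L_d+1)) \cdot |A(k)|$. Here I would use that the set $A$ of active nodes does not change in the middle of a phase (Lemma~\ref{label:general:lm:middle}), so the quantity $|A(k)|$ is well-defined and equals the value of $|A|$ used in the threshold test. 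Rearranging gives $|A(k)| < ((L_d+1)/\delta) \cdot |D_c^+(k)|$.

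Combining this with the trivial bound $|D_c^+(k) \cup A(k)| \leq |D_c^+(k)| + |A(k)|$, I get
$$|D_c^+(k) \cup A(k)| \leq |D_c^+(k)| + \frac{L_d+1}{\delta} \cdot |D_c^+(k)| = \left(1 + \frac{L_d+1}{\delta}\right) \cdot |D_c^+(k)| = O\!\left(|D_c^+(k)| \cdot \frac{L_d}{\delta}\right),$$
where the last step uses $0 < \delta < 1$ (equation~\ref{label:general:eq:first}), so that $1 \leq L_d/\delta$ and the additive $1$ is absorbed. Plugging this into the bound from Lemma~\ref{label:general:lm:runtime:revamp} yields a running time of $O(|D_c^+(k)| \cdot d \cdot (L_d/\delta))$, as claimed.

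There is no real obstacle here: the whole argument is a one-line substitution once one has Lemma~\ref{label:general:lm:runtime:revamp} and the phase-termination threshold in hand. The only point worth being careful about is to cite the fact that $A$ is frozen during a phase, so that $|A(k)|$ and $|D_c^+(k)|$ are compared against the same threshold value; everything else is arithmetic.
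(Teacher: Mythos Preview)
Your proposal is correct and takes essentially the same approach as the paper: both invoke Lemma~\ref{label:general:lm:runtime:revamp} and then use the phase-termination threshold (Lemma~\ref{label:general:lm:end}) to bound $|A(k)|$ by $O((L_d/\delta)\cdot|D_c^+(k)|)$. The paper states this in a single line, while you have spelled out the arithmetic and the reason $|A(k)|$ is well-defined; there is no substantive difference.
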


\begin{proof}
Follows from Lemma~\ref{label:general:lm:runtime:revamp} and the fact that $|A(k)| = O((L_d/\delta) \cdot |D_c^+(k)|)$ (see Lemma~\ref{label:general:lm:end}).
\end{proof}

We now explain the statement of Lemma~\ref{label:general:lm:update:count:1}.  First, note that a node can change its status from $c$-clean to $c$-dirty at most once during the course of a single phase (see Lemma~\ref{label:general:lm:init}). Accordingly, consider any node $v \in V$, and let $\psi(k) = \sum_{k'=1}^k \text{{\sc Dirty}}_c^+(v, k')$ be the number  of times  the node $v$ becomes $c$-dirty during the first $k$ phases, where $k \geq 1$ is an integer. Then at least $(2 \epsilon d/L^2) \cdot \psi(k)$ edge-updates in the first $k$ phases were incident upon $v$. To summarize, this lemma shows that on average one needs  to have $(2\epsilon d/L^2)$ edge-updates incident upon $v$ before the node $v$  changes its status from $c$-clean to $c$-dirty.

\begin{lemma}
\label{label:general:lm:update:count:1}
Fix any node $v \in V$ and any positive integer $k$. We have:
\begin{eqnarray}
\label{label:general:eq:updates:new:1}
\U_v(t_{k+1}-1) \geq 
\left(\frac{2\epsilon d}{L^2}\right) \cdot \sum_{k'=1}^k \text{{\sc Dirty}}_c^+(v, k')
\end{eqnarray}
\end{lemma}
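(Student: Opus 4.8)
The plan is to prove the inequality by induction on $k$, tracking how many edge-updates incident on a fixed node $v$ must have occurred before $v$ can be declared $c$-dirty in each successive phase. The key quantitative fact comes from conditions (1) and (2) of Definition~\ref{label:general:def:critical:structure}: when a phase begins, $v$ is $c$-clean, so it satisfies $\text{deg}(v,E) > \epsilon d/L^2$ if active and $\text{deg}(v,E) < 3\epsilon d/L^2$ if passive; and (crucially) at the end of the \emph{previous} phase REVAMP() re-sorted the node into the correct class so that an active $v$ has $\text{deg}(v,E) > \epsilon d/L^2$ and a passive $v$ has $\text{deg}(v,E) < 3\epsilon d/L^2$ \emph{at the start of the phase as well}. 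For $v$ to become $c$-dirty during a phase (Step I of Section~\ref{label:general:sec:phase:middle}), an active $v$ must drop to $\text{deg}(v,E) \le \epsilon d/L^2$, or a passive $v$ must rise to $\text{deg}(v,E) \ge 3\epsilon d/L^2$. Either way, $\text{deg}(v,E)$ must change by at least $2\epsilon d/L^2$ during that phase, which requires at least $2\epsilon d/L^2$ edge-updates incident on $v$ within the phase.

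First I would set up the notation: let $\psi(k) = \sum_{k'=1}^k \text{{\sc Dirty}}_c^+(v,k')$ and recall that $\U_v(t)$ counts edge-updates incident on $v$ among the first $t$ updates, and that phase $k'$ occupies the interval $[t_{k'}, t_{k'+1}-1]$. The base case $k=0$ (or $k=1$) is immediate since $\U_v \ge 0$ and $\psi(0)=0$. For the induction step, assume $\U_v(t_{k}-1) \ge (2\epsilon d/L^2)\cdot \psi(k-1)$. I would then argue: if $\text{{\sc Dirty}}_c^+(v,k)=0$, then $\psi(k)=\psi(k-1)$ and the claim follows because $\U_v$ is monotone non-decreasing, so $\U_v(t_{k+1}-1) \ge \U_v(t_k - 1) \ge (2\epsilon d/L^2)\psi(k-1) = (2\epsilon d/L^2)\psi(k)$. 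If instead $\text{{\sc Dirty}}_c^+(v,k)=1$, then $v$ became $c$-dirty somewhere in $[t_k, t_{k+1}-1]$. Using Lemma~\ref{label:general:lm:init} (at phase start $D_c=\emptyset$, and REVAMP() has just placed $v$ in the class matching its degree) together with conditions (1),(2) of Definition~\ref{label:general:def:critical:structure}, the degree of $v$ changed by at least $2\epsilon d/L^2$ strictly within phase $k$, so at least $2\epsilon d/L^2$ updates incident on $v$ fall in the interval $[t_k, t_{k+1}-1]$; hence $\U_v(t_{k+1}-1) - \U_v(t_k-1) \ge 2\epsilon d/L^2$. Combining with the inductive hypothesis gives $\U_v(t_{k+1}-1) \ge (2\epsilon d/L^2)(\psi(k-1)+1) = (2\epsilon d/L^2)\psi(k)$, as desired.

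The main subtlety — and the step I expect to need the most care — is justifying that at the \emph{start} of phase $k$ the node $v$ already satisfies the ``correct'' degree bound for its class (active $\Rightarrow \text{deg}(v,E) > \epsilon d/L^2$, passive $\Rightarrow \text{deg}(v,E) < 3\epsilon d/L^2$), rather than merely being $c$-clean. This is needed to guarantee a full $2\epsilon d/L^2$ swing. The point is that the REVAMP() procedure of Section~\ref{label:general:sec:phase:end} explicitly moves every formerly-$c$-dirty node into the class dictated by its current degree, and every formerly-$c$-clean active (resp. passive) node already had $\text{deg}(v,E) > \epsilon d/L^2$ (resp. $< 3\epsilon d/L^2$) by Definition~\ref{label:general:def:critical:structure}. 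So immediately after REVAMP() — i.e., at the start of phase $k$ — this invariant holds for all nodes. One should also handle the edge case where $v$ changes class during REVAMP() at the boundary $t_k - 1$, but this is absorbed cleanly because the update count $\U_v$ is measured at $t_k - 1$, strictly before phase $k$'s updates begin, and the phase-$k$ window $[t_k, t_{k+1}-1]$ contributes the fresh $2\epsilon d/L^2$ updates. Once this invariant is in hand, the rest is the routine telescoping/induction sketched above.
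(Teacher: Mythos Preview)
Your per-phase accounting has a genuine gap. You claim that if $v$ becomes $c$-dirty during phase $k$, then $\text{deg}(v,E)$ must change by at least $2\epsilon d/L^2$ \emph{within phase $k$}. But the invariant you (correctly) establish at the start of phase $k$ is only ``active $\Rightarrow \text{deg}(v,E) > \epsilon d/L^2$'' and ``passive $\Rightarrow \text{deg}(v,E) < 3\epsilon d/L^2$'', whereas the trigger for becoming $c$-dirty in Step~I is ``active and $\text{deg}(v,E) \le \epsilon d/L^2$'' or ``passive and $\text{deg}(v,E) \ge 3\epsilon d/L^2$''. The gap between the start-of-phase guarantee and the dirty trigger can be a single edge, not $2\epsilon d/L^2$. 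Note in particular that REVAMP() only iterates over nodes in $D_c$; a $c$-clean node is left untouched, so REVAMP() does \emph{not} ``place $v$ in the class matching its degree'' in the strong sense you need.

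Concretely: suppose $v$ becomes $c$-dirty as a passive node in phase~$1$ with $\text{deg}(v,E) = 3\epsilon d/L^2$, so REVAMP() moves it to $A$. During phase~$2$, $v$ stays $c$-clean while its degree drifts down to just above $\epsilon d/L^2$; since $v$ is $c$-clean at the end of phase~$2$, REVAMP() leaves it in $A$. Now at the start of phase~$3$, $v$ is active with degree barely above $\epsilon d/L^2$, and a single edge-deletion in phase~$3$ makes it $c$-dirty. So $\text{{\sc Dirty}}_c^+(v,3) = 1$ while only one update incident on $v$ lies in phase~$3$, and your inductive step fails.

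The lemma is still true in this scenario---the $\approx 2\epsilon d/L^2$ deletions you need happened in phase~$2$---but your induction credits them to the wrong phase. The paper's proof avoids this by anchoring not at phase boundaries but at the exact edge-updates $t^*$ and $t$ where $v$ becomes $c$-dirty: at those instants the degree is pinned to one of the two thresholds $\epsilon d/L^2$ or $3\epsilon d/L^2$, and a short case analysis (on whether REVAMP() flips $v$'s class after $t^*$) shows the degree must swing by at least $2\epsilon d/L^2$ between consecutive such instants. Inducting on $\psi(k)$ rather than on $k$ then telescopes cleanly.
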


\begin{proof}

We will use induction on the value of the sum $\psi(k) = \sum_{k'=1}^k \text{{\sc Dirty}}_c^+(v, k')$. Specifically, throughout the proof we fix a phase $k \geq 1$. Next, we   define $\psi(k) = \sum_{k'=1}^k \text{{\sc Dirty}}_c^+(v, k')$, and consider the last edge-update $t$ in the interval $[t_1, t_{k+1}-1]$ where the node $v$ becomes  $c$-dirty from $c$-clean. We will show that   $\U_v(t) \geq (2\epsilon d/L^2) \cdot \psi(k)$. Since $t_{k+1}-1 \geq t$, this will imply that $\U_v(t_{k+1}-1) \geq \U_v(t) \geq (2\epsilon d /L^2) \cdot \psi(k)$. 

\bigskip
\noindent {\em The base step. $\psi(k) = 1$.} 

\medskip
\noindent In the beginning of phase one, the graph $G = (V, E)$ is empty,  $\text{deg}(v, E) = 0$,  and the node $v$ is passive and $c$-clean. Since $\psi(k) = 1$, there is exactly one phase $k' \in [1, k]$ such that $v \in D_c^+(k)$. Recall the description of our algorithm in Section~\ref{label:general:sec:phase:middle}. It follows that $\text{deg}(v, E)$ was less than the threshold $(3\epsilon d/L^2)$ from the start of phase one till the end of phase $(k'-1)$. It was only after the  edge-update $t$ in the middle of phase $k'$ that we saw $\text{deg}(v, E)$ reaching the threshold $(3\epsilon d/L^2)$, which compelled us to classify the node $v$ as $c$-dirty at that instant. Since $\text{deg}(v, E)$ was initially zero,  at least $(3\epsilon d/L^2)$ edges incident upon $v$ must have been inserted into $G = (V, E)$ during the first  $t$ edge-updates. Hence, we must have $\U_v(t) \geq (3 \epsilon d/L^2) = (3\epsilon d /L^2) \cdot \psi(k) > (2\epsilon d/L^2) \cdot \psi(k)$.

\bigskip
\noindent {\em The inductive step. $\psi(k) \geq 2$.} 

\medskip
\noindent Consider the phase $k^* < k$ where the node $v$ becomes $c$-dirty for the $(\psi(k)-1)^{th}$ time. Specifically, let $k^* = \min \{ k' \in [1, k-1] : \sum_{j = 1}^{k'} \text{{\sc Dirty}}_c^+(v, j) = \psi(k) -1 \}$.  Let $t^*$ be  the edge-update in phase $k^*$ due to which the node $v$ becomes $c$-dirty from $c$-clean. By induction hypothesis, we have:
\begin{equation}
\label{label:general:eq:induction:1}
\U_v(t^*) \geq \left(\frac{2\epsilon d}{L^2}\right) \cdot (\psi(k)-1)
\end{equation}
Consider the interval $[t_{k^*+1}, t_{k+1}-1]$  from the start of phase $(k^*+1)$ till the end of phase $k$. Let $t$ be the unique edge-update in this interval   where the node $v$  becomes $c$-dirty from $c$-clean. 
We will show that at least $(2\epsilon d/L^2)$ edge-updates incident upon $v$ took place during the interval $[t^*+1, t]$. Specifically, we will prove the following inequality. 
\begin{equation}
\label{label:general:eq:induction:2}
\U_v(t) - \U_v(t^*) \geq \left(\frac{2\epsilon d}{L^2}\right)
\end{equation}
Equations~\ref{label:general:eq:induction:1} and~\ref{label:general:eq:induction:2} will imply that $\U_v(t) \geq (2\epsilon d/L^2) \cdot \psi(k)$. To proceed with the proof of equation~\ref{label:general:eq:induction:2},  note that there are two possible ways by which the node $v$ could have become $c$-dirty at edge-update $t^*$, and accordingly, we consider two possible cases.
\begin{itemize}
\item {\em Case 1. 
In the beginning of phase $k^*$, the node $v$ was passive and $c$-clean, with $\text{deg}(v, E) < (3\epsilon d/L^2)$. Then with the edge-update $t^*$ in the middle of phase $k^*$, we saw $\text{deg}(v, E)$ reaching the threshold $(3\epsilon d/L^2)$, which forced us to classify the node $v$ as $c$-dirty at that instant.}

In this instance, we need to consider two possible sub-cases.
\begin{itemize}
\item {\em Case 1a.  At the end of phase $k^*$, i.e., immediately after edge-update $(t_{k^*+1}-1)$, the degree of the node $v$ is at most $(\epsilon d/L^2)$.}

In this case,  during the interval $[t^*, t_{k^*+1} -1]$ itself the degree of the node $v$ has changed from $(3 \epsilon d/L^2)$ to $(\epsilon d /L^2)$. This can happen only if $(3\epsilon d/L^2) - (\epsilon d/L^2) = (2 \epsilon d/L^2)$ edges incident upon $v$ was deleted from the graph during this interval. Since $t_{k^*+1}-1 < t$, we get:
$$\U_v(t) - \U_v(t^*) \geq \U_v(t_{k^*+1} -1) - \U_v(t^*) \geq \left(\frac{2\epsilon d}{L^2}\right).$$
\item {\em Case 1b.  At the end of phase $k^*$, i.e., immediately after edge-update $(t_{k^*+1}-1)$, the degree of the node $v$ is greater than $(\epsilon d/L^2)$.}

In this case, the node $v$ is classified as active  in  phase $k^*+1$ (see Section~\ref{label:general:sec:phase:end}).   Further,  the node $v$ is $c$-clean at the start of phase $k^*+1$ (see Lemma~\ref{label:general:lm:init}) and remains $c$-clean throughout the interval $[t_{k^*+1}, t-1]$ (this follows from our choice of $t$). Accordingly, the node  remains active throughout the interval $[t_{k^*+1}, t-1]$ (see Section~\ref{label:general:sec:phase:end}). To summarize, the node remains active and $c$-clean from the start of phase $k^*+1$ till the edge-update $t-1$.

By definition, after the  edge-update $t$ the node $v$ becomes $c$-dirty again. An active node becomes $c$-dirty only if its degree reaches the threshold $(\epsilon d/L^2)$. Thus, we must have $\text{deg}(v, E) = \epsilon d/L^2$ after edge-update $t$. Since $\text{deg}(v, E) = (3\epsilon d/L^2)$ after edge-update $t^*$, we infer that $\text{deg}(v, E)$ has dropped by $(3\epsilon d/L^2) - (\epsilon d/L^2) = (2 \epsilon d/L^2)$ during the interval $[t^*, t]$. This can happen only if at least $(2 \epsilon d/L^2)$ edges incident upon $v$ were deleted from the graph $G = (V, E)$ during the interval $[t^*, t]$. Thus, we get:
$$\U_v(t) - \U_v(t^*) \geq (2\epsilon d/L^2).$$
\end{itemize}
\item {\em Case 2.
In the beginning of phase $k^*$, the node $v$ was active and $c$-clean, with $\text{deg}(v, E) > (\epsilon d/L^2)$. Then with the edge-update $t^*$ in the middle of phase $k^*$, we saw $\text{deg}(v, E)$ reaching the threshold $(\epsilon d/L^2)$, which forced us to classify the node $v$ as $c$-dirty at that instant.} 

We can easily modify the proof of Case 1  for this case. Basically, the roles of the active and passive nodes are interchanged, and so are the roles of the  thresholds  $(3 \epsilon d/L^2)$ and $(\epsilon d/L^2)$.
\end{itemize}
\end{proof}

We are now ready to bound the amortized update time of the subroutine REVAMP().

\begin{lemma}
\label{label:general:lm:revamp:final}
\label{label:general:lm:amortized:time:revamp}
The subroutine REVAMP() has an amortized update time of $O(L^2 L_d/(\epsilon \delta))$.
\end{lemma}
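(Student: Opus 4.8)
The plan is to combine the per-call cost bound from Corollary~\ref{label:general:cor:lm:runtime:revamp} with the amortization bound from Lemma~\ref{label:general:lm:update:count:1}, using a standard charging argument against the edge-update sequence. First, recall that Corollary~\ref{label:general:cor:lm:runtime:revamp} says the call to REVAMP() at the end of phase $k$ runs in $O(|D_c^+(k)| \cdot d \cdot (L_d/\delta))$ time, and that by equation~\ref{label:general:eq:indicator:dirty} we have $|D_c^+(k)| = \sum_{v \in V} \text{{\sc Dirty}}_c^+(v,k)$. So the total time spent in all calls to REVAMP() over the first $K$ phases is
$$\sum_{k=1}^K O\!\left(|D_c^+(k)| \cdot d \cdot \frac{L_d}{\delta}\right) = O\!\left(\frac{d L_d}{\delta}\right) \cdot \sum_{k=1}^K \sum_{v \in V} \text{{\sc Dirty}}_c^+(v,k) = O\!\left(\frac{d L_d}{\delta}\right) \cdot \sum_{v \in V} \left(\sum_{k=1}^K \text{{\sc Dirty}}_c^+(v,k)\right).$$

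Next, I would apply Lemma~\ref{label:general:lm:update:count:1} to each node $v$: it gives $\sum_{k=1}^K \text{{\sc Dirty}}_c^+(v,k) \leq (L^2/(2\epsilon d)) \cdot \U_v(t_{K+1}-1)$, i.e., the number of times $v$ has become $c$-dirty is at most $O(L^2/(\epsilon d))$ times the number of edge-updates incident on $v$ so far. Summing over all nodes $v \in V$ and using that $\sum_{v \in V} \U_v(t_{K+1}-1) = 2 m$, where $m$ is the total number of edge-updates processed through the end of phase $K$ (each update touches exactly two endpoints), we obtain
$$\sum_{v \in V} \sum_{k=1}^K \text{{\sc Dirty}}_c^+(v,k) \leq \frac{L^2}{2\epsilon d} \cdot \sum_{v \in V} \U_v(t_{K+1}-1) = \frac{L^2}{2\epsilon d} \cdot 2m = \frac{L^2 m}{\epsilon d}.$$
Plugging this into the displayed bound for the total REVAMP() time yields $O\!\left(\frac{d L_d}{\delta}\right) \cdot \frac{L^2 m}{\epsilon d} = O\!\left(\frac{L^2 L_d}{\epsilon \delta}\right) \cdot m$. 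Dividing by the number of edge-updates $m$ gives the claimed amortized bound $O(L^2 L_d/(\epsilon\delta))$ per update.

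The routine care needed here is mostly bookkeeping: the factor of $d$ in the per-call cost cancels exactly against the $1/d$ in the amortization bound of Lemma~\ref{label:general:lm:update:count:1}, which is the whole point of how the dirtiness threshold $\epsilon d/L^2$ was chosen. One should also note that the very last (incomplete) phase poses no difficulty, since REVAMP() is only invoked at phase boundaries, so every REVAMP() call we charge has an associated completed phase and the sum over $k=1,\dots,K$ captures all of them. The main thing to be careful about is that the argument is genuinely amortized rather than worst-case: a single call to REVAMP() can be as expensive as $\Theta(|A| \cdot d)$, but Lemma~\ref{label:general:lm:update:count:1} guarantees that between consecutive times a node contributes to $D_c^+$, on average $\Omega(\epsilon d/L^2)$ updates incident on that node must occur, which is exactly what makes the telescoped charge affordable. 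I expect no substantive obstacle — all the hard work has been front-loaded into Corollary~\ref{label:general:cor:lm:runtime:revamp} and Lemma~\ref{label:general:lm:update:count:1}, and this proof is just the arithmetic that ties them together.
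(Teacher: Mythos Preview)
Your proposal is correct and follows essentially the same approach as the paper: both combine Corollary~\ref{label:general:cor:lm:runtime:revamp} with Lemma~\ref{label:general:lm:update:count:1} and the identity $\sum_{v}\U_v(t)=2t$, cancelling the factor $d$ to obtain the $O(L^2 L_d/(\epsilon\delta))$ amortized bound. The only cosmetic difference is that the paper phrases the key inequality as a lower bound on the number of edge-updates in terms of $\sum_{k'}|D_c^+(k')|$, whereas you rearrange it as an upper bound on $\sum_{k'}\text{\sc Dirty}_c^+(v,k')$ in terms of $\U_v$; the arithmetic is identical.
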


\begin{proof}
Recall that $\U_v(t)$ denotes the number of edge-updates incident upon $v$ among the first $t$ edge-updates in $G = (V, E)$. Since each edge is incident upon two nodes,  we have $t = (1/2) \cdot \sum_{v \in V} \U_v(t)$ for every integer $t \geq 1$. Hence, we have the following guarantee at the end of every phase $k \geq 1$.
\begin{eqnarray}
t_{k+1} - 1 & = &  (1/2) \cdot \sum_{v \in V} \U_v(t_{k+1} -1)  \nonumber \\
& \geq & (\epsilon d/L^2) \cdot \sum_{v \in V} \sum_{k'=1}^k \text{{\sc Dirty}}_c^+(v, k') \label{label:general:eq:revamp:final:1} \\
& = & (\epsilon d/L^2) \cdot \sum_{k'=1}^k \sum_{v \in V} \text{{\sc Dirty}}_c^+(v, k') \nonumber \\
& = & (\epsilon d/L^2) \cdot \sum_{k'=1}^k \left|D_c^+(v,k')\right| \label{label:general:eq:revamp:final:2}
\end{eqnarray}  
Equation~\ref{label:general:eq:revamp:final:1} follows from Lemma~\ref{label:general:lm:update:count:1}. Equation~\ref{label:general:eq:revamp:final:2} follows from equation~\ref{label:general:eq:indicator:dirty}. From equation~\ref{label:general:eq:revamp:final:2} and Corollary~\ref{label:general:cor:lm:runtime:revamp}, we get the following guarantee.
\begin{itemize}
\item The total time spent on the calls to the subroutine REVAMP() during the first $k$ phases is given by: 
$$\sum_{k'=1}^k O(|D_c^+(k)| \cdot d \cdot (L_d/\delta)) = (t_{k+1}-1) \cdot O(L^2 L_d/(\epsilon \delta)).$$
\end{itemize}
Note that by definition, there are $(t_{k+1}-1)$ edge-updates  in the graph $G = (V, E)$ during the first $k$ phases. Accordingly, we infer that the subroutine REVAMP() has an amortized update time of $O(L^2 L_d/(\epsilon \delta))$.
\end{proof}

\subsubsection{Bounding the amortized update time of REBUILD($j$) in the middle of a phase}
\label{label:general:sec:analyze:time:middle}
\label{label:general:sec:runtime:rebuild}

Throughout this section, we fix any layer $j \in [1, L_d]$. We will analyze the amortized running time of a call made to the subroutine REBUILD($j$). Note that such a call  can be made under two possible circumstances.
\begin{itemize}
\item  While processing an edge-update in the middle of a phase,  a call to REBUILD($j$) can be  made by the subroutine VERIFY() in Figure~\ref{label:general:fig:insert}. This happens only if there are too many $l$-dirty nodes at layer $j$.
\item  While processing the last edge-update  of a phase,  the subroutine REVAMP() calls  REBUILD($j$) with $j = 1$.   The total running time of these calls to REBUILD($1$) is subsumed by the total running time of all the calls to REVAMP(), which in turn has already been analyzed in Section~\ref{label:general:sec:runtime:revamp}.
\end{itemize}

\noindent  Accordingly, in this section we  focus on the calls made to REBUILD($j$) from a given phase.

\begin{itemize}
\item Throughout the rest of this section, we will use  the properties  outlined in Lemmas~\ref{label:general:lm:init} and~\ref{label:general:lm:monotone:laminar}. Further, since we are considering a fixed layer $j \in [1, L_d]$,  we will sometimes omit the symbol ``$j$''  from the notations introduced in this section. 
\end{itemize}

\medskip
Before proceeding any further, we need to introduce the concept of an ``epoch''.


\begin{definition}
\label{label:general:def:epoch}
 An epoch $\left[\tau^{0}, \tau^{1}\right]$ consists of a  contiguous block of edge-updates in the middle of a phase, where $\tau^{0}$ (resp. $\tau^{1}$) denotes the first (resp. last) edge-update in the epoch.  Specifically, we have:
\begin{itemize}
\item While processing the edge-update $(\tau^0-1)$, a call is made to REBUILD($j'$) with $j' \in [1, j]$.
\item  While processing the edge-updates $t \in \left[\tau^0, \tau^1-1\right]$, no call is made to REBUILD($j'$) with $j' \in [1, j]$. 
\item While processing the edge-update $\tau^{1}$, a call is made to REBUILD($j$).
\end{itemize}
\noindent We say that the  epoch ``begins''  at the time instant just before the edge-update $\tau^0$, and ``ends''  at the time instant just before the call to REBUILD($j$) while processing the edge-update $\tau^1$.
\end{definition}
For the rest of this section, we fix a given epoch $\left[\tau^0, \tau^1\right]$ in the middle of the phase under consideration. By Lemma~\ref{label:general:lm:data:structure},  the call to REBUILD($j$) after edge-update $\tau^1$ takes $O(|A| \cdot d \cdot 2^{-j})$ time. We will show that the epoch lasts for $\Omega(\epsilon \gamma \delta \cdot (L_d L)^{-2} \cdot |A| \cdot (d/2^{j}))$ edge-updates (see Corollary~\ref{label:general:cor:lm:bound:laminar:2}).  Note that by definition, no call is made to REBUILD($j$) during the epoch. Hence, dividing the running time of the call that ends  the epoch by the number of edge-updates in the epoch gives us an amortized bound of $O(L_d^2 L^2/\epsilon \gamma \delta)$.

Thus, the main challenge is to lower bound the number of edge-updates in the epoch. Towards this end, we take a closer look at how the sets of $l$-dirty nodes at layers $j' \in [0, j]$ evolve with the passage of time. First, note the properties of the laminar structure specified by Invariant~\ref{label:general:inv:laminar:contain}. Since a call was made to REBUILD($j'$) with $j' \in [1, j]$ after the edge-update $(\tau^0-1)$, Lemma~\ref{label:general:lm:rebuild:minor:1}  implies that $D_{l0} \subseteq D_{l1} \subseteq \cdots \subseteq D_{l,j-1} = D_{l,j} \subseteq A$ when the epoch begins. In the middle of the epoch, the sets $\{D_{l,k}\}, k \in [0, j],$ can only get bigger with the passage of time (see Lemma~\ref{label:general:lm:monotone:laminar}), but they always remain contained within one another as a laminar family. To be more specific, suppose that while processing some edge-update in the epoch, we have to change the sets $D_{l0}, \ldots, D_{lj}$. This change has to be one of the following two types.
\begin{itemize}
\item (a) Some node $v \in A$ becomes $c$-dirty, and to satisfy Invariant~\ref{label:general:inv:laminar:contain} we move the node $v$ from $C_{lk}$ to $D_{lk}$ at all layers $k \in [0, j]$. See item (1) in Step III of Section~\ref{label:general:sec:phase:middle}.
\item (b) Some node $v \in C_{lj'}$ becomes $l$-dirty at layer $j' \in [1, j]$, and  we move the node $v$ from $C_{lk}$ to $D_{lk}$ at all layers $k \in [j', j]$. See the call to CLEANUP($x$) in Step III of Section~\ref{label:general:sec:phase:middle}.
\end{itemize}
To summarize, during the epoch the $l$-dirty sets $D_{l0} \subseteq D_{l1} \subseteq \cdots \subseteq D_{lj}$ keep getting bigger and bigger with the passage of time, without violating the laminar property. In other words, no node is ever deleted from one of these sets while the epoch is still in progress. 

The epoch ends because a call is made to REBUILD($j$) after edge-update $\tau^1$. This can happen only if at the end of the epoch, Invariant~\ref{label:general:inv:laminar} is violated at layer $j$ but satisfied at all the layers $k \in [0, j-1]$. See the call to the subroutine VERIFY() in Step III (Figure~\ref{label:general:fig:insert}) of Section~\ref{label:general:sec:phase:middle}. Thus, at the end of the epoch, we have $|D_{lj}| > (\delta (j+1)/(L_d+1)) \cdot |A|$ and $|D_{l,j-1}| \leq (\delta j/(L_d+1)) \cdot |A|$. Since $D_{l,j-1} \subseteq D_{l,j}$, we get:

\begin{lemma}
\label{label:general:lm:bound:laminar:0}
At the end of the epoch $\left[\tau^0, \tau^1\right]$, we have $|D_{lj} \setminus D_{l,j-1}| \geq (\delta /(L_d+1)) \cdot |A|$. 
\end{lemma}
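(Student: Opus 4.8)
The plan is to read the statement off directly from the definition of the epoch together with the logic of the subroutine VERIFY() (Figure~\ref{label:general:fig:insert}). By Definition~\ref{label:general:def:epoch}, the epoch $[\tau^0,\tau^1]$ ends because, while processing the edge-update $\tau^1$, a call is made to REBUILD($j$); and by the description of Step III in Section~\ref{label:general:sec:phase:middle}, this call can only be issued from inside VERIFY(). The subroutine VERIFY() scans the layers $j'=1,2,\dots,L_d$ in increasing order and calls REBUILD on the \emph{first} layer at which Invariant~\ref{label:general:inv:laminar} fails. Hence the fact that VERIFY() calls REBUILD($j$) pins down the state of the laminar structure at the instant just before that call — i.e., at the end of the epoch in the sense of Definition~\ref{label:general:def:epoch} — in two ways: Invariant~\ref{label:general:inv:laminar} is violated at layer $j$, so $|D_{lj}| > (\delta(j+1)/(L_d+1))\cdot|A|$; and Invariant~\ref{label:general:inv:laminar} holds at every layer $k\in[0,j-1]$. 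For layers $1,\dots,j-1$ this is because VERIFY() did not stop earlier; for layer $0$ it is because we have already passed Step II, so $|D_c|\le(\delta/(L_d+1))\cdot|A|$, and $D_{l0}=A\cap D_c$ gives $|D_{l0}|\le|D_c|\le(\delta/(L_d+1))\cdot|A|$ (this is exactly the reasoning of Lemma~\ref{label:general:lm:insert:laminar}). In particular, $|D_{l,j-1}|\le(\delta j/(L_d+1))\cdot|A|$.

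The second step is a one-line subtraction. Invariant~\ref{label:general:inv:laminar:contain} is maintained after every edge-update (and is maintained in particular by items (1) and (3) of Step III, which only move nodes from the $C$-sets to the $D$-sets along suffixes of layers), so the $l$-dirty sets are nested and $D_{l,j-1}\subseteq D_{lj}$ at this moment. Therefore
$$|D_{lj}\setminus D_{l,j-1}| \;=\; |D_{lj}| - |D_{l,j-1}| \;>\; \frac{(j+1)\delta}{L_d+1}\cdot|A| \;-\; \frac{j\delta}{L_d+1}\cdot|A| \;=\; \frac{\delta}{L_d+1}\cdot|A|,$$
which is the claimed bound (in fact with a strict inequality).

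Within this lemma there is essentially no real obstacle: once the two threshold inequalities are available, the conclusion is immediate. The only point that deserves to be spelled out carefully is the justification that at the end of the epoch the violated layer is exactly $j$ and not some smaller layer — this is precisely what is bought by combining the definition of an epoch (no call to REBUILD($j'$) with $j'\in[1,j]$ occurs during $[\tau^0,\tau^1-1]$, and a call to REBUILD($j$) occurs at $\tau^1$) with the ``smallest violated index'' behaviour of VERIFY(). The genuinely hard work lies elsewhere, namely in using Lemma~\ref{label:general:lm:bound:laminar:0} to lower bound the length of the epoch (Corollary~\ref{label:general:cor:lm:bound:laminar:2}), which is needed for the amortized running time bound but is not part of the present statement.
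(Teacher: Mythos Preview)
Your proof is correct and follows essentially the same approach as the paper: extract the two threshold inequalities $|D_{lj}| > (\delta(j+1)/(L_d+1))\cdot|A|$ and $|D_{l,j-1}| \le (\delta j/(L_d+1))\cdot|A|$ from the fact that VERIFY() calls REBUILD on the smallest violated layer, then subtract using the nesting $D_{l,j-1}\subseteq D_{lj}$ from Invariant~\ref{label:general:inv:laminar:contain}. You are slightly more explicit than the paper about why layer $0$ also satisfies Invariant~\ref{label:general:inv:laminar} at this instant, but this is the same argument the paper uses elsewhere.
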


Next, note that if a node $v$ belongs to $D_{lj} \setminus D_{l,j-1}$ at the end of the epoch, then the node $v$ must have been part of  the set $C_{lj}$ in the beginning of the epoch. This follows from the three observations stated below, which, in turn, follow from our discussion preceding Lemma~\ref{label:general:lm:bound:laminar:0}.
\begin{enumerate}
\item We always have $D_{l,j-1} \subseteq D_{lj} \subseteq A$ and $C_{lj} = A \setminus D_{lj}$. The set $A$ does not change during the epoch.
\item In the beginning of the epoch, we have $D_{l,j-1} = D_{lj}$. 
\item During the epoch, a node is never deleted from either of the sets $D_{l,j-1}$ and $D_{l,j}$.
\end{enumerate}

\noindent We formally state our observation in Lemma~\ref{label:general:lm:bound:laminar:-1}.

\begin{lemma}
\label{label:general:lm:bound:laminar:-1}
If a node $v$ belongs to $D_{lj} \setminus D_{l,j-1}$ at the end of the epoch $\left[\tau^0, \tau^1\right]$, then the node $v$ must have belonged to $C_{lj}$ in the beginning of the epoch.
\end{lemma}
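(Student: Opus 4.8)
\textbf{Proof plan for Lemma~\ref{label:general:lm:bound:laminar:-1}.}

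The plan is to chase the laminar structure of the sets $D_{l,j-1}$ and $D_{lj}$ across the duration of the epoch, using the monotonicity properties of the algorithm that have already been established. I would argue by contradiction: suppose $v \in D_{lj} \setminus D_{l,j-1}$ at the end of the epoch, but $v \notin C_{lj}$ at the beginning of the epoch. Since $C_{lj} = A \setminus D_{lj}$ and the set $A$ is fixed throughout a phase (Lemma~\ref{label:general:lm:middle}), the assumption $v \notin C_{lj}$ in the beginning means either $v \notin A$ or $v \in D_{lj}$ in the beginning.

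First I would dispose of the case $v \notin A$. If $v$ were not active at the start of the epoch, then (since $A$ does not change mid-phase) $v$ would not be active at the end either; but at the end $v \in D_{lj} \subseteq A$ by Invariant~\ref{label:general:inv:laminar:contain}, a contradiction. So $v \in A$ throughout, and the remaining case is $v \in D_{lj}$ at the beginning of the epoch. Here I would invoke the three enumerated observations preceding the lemma statement: (2) in the beginning of the epoch $D_{l,j-1} = D_{lj}$ — this holds because a call to REBUILD($j'$) with $j' \in [1,j]$ was made while processing edge-update $(\tau^0-1)$, and Lemma~\ref{label:general:lm:rebuild:minor:1} guarantees $D_{lk} = D_{l,j'-1}$ for all $k \in [j', L_d]$, in particular $D_{l,j-1} = D_{lj}$ at that instant. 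Thus if $v \in D_{lj}$ at the start of the epoch, then also $v \in D_{l,j-1}$ at the start of the epoch.

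Finally I would use observation (3): during the epoch no node is ever deleted from $D_{l,j-1}$, because by Definition~\ref{label:general:def:epoch} no call to REBUILD($j'$) with $j' \in [1,j]$ is made while processing edge-updates in $[\tau^0, \tau^1 - 1]$, so by Lemma~\ref{label:general:lm:monotone:laminar} the set $D_{l,j-1}$ only grows (and the lone call to REBUILD($j$) that ends the epoch happens strictly after the epoch has ended, by the convention in Definition~\ref{label:general:def:epoch}). Hence $v \in D_{l,j-1}$ at the start of the epoch implies $v \in D_{l,j-1}$ at the end of the epoch, contradicting $v \in D_{lj} \setminus D_{l,j-1}$ at the end. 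This contradiction completes the proof. I do not anticipate a serious obstacle here — the whole argument is a direct unwinding of monotonicity and the REBUILD reset property; the only mild subtlety is being careful about the precise time instants (``beginning''/``end'' of the epoch as defined in Definition~\ref{label:general:def:epoch}) so that the REBUILD call ending the epoch is correctly excluded from the monotone interval.
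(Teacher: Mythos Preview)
Your argument is correct and is exactly the paper's approach: the paper derives the lemma directly from the three observations you invoke (that $A$ is fixed and $C_{lj}=A\setminus D_{lj}$, that $D_{l,j-1}=D_{lj}$ at the start of the epoch by Lemma~\ref{label:general:lm:rebuild:minor:1}, and that neither $D_{l,j-1}$ nor $D_{lj}$ shrinks during the epoch by Lemma~\ref{label:general:lm:monotone:laminar}). You have simply unpacked the same three observations into an explicit contradiction argument, including the care about the ``end of epoch'' timing that the paper leaves implicit.
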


The next lemma follows directly from Lemmas~\ref{label:general:lm:bound:laminar:0} and~\ref{label:general:lm:bound:laminar:-1}.

\begin{lemma}
\label{label:general:lm:bound:laminar:1}
There are at least $\Omega(\delta \cdot L_d^{-1} \cdot |A|)$ nodes  that belong to $C_{lj}$ when the epoch $\left[ \tau^0, \tau^1\right]$ begins and belong to $D_{lj} \setminus D_{l,j-1}$ when the epoch ends. 
\end{lemma}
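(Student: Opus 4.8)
Looking at this statement, Lemma~\ref{label:general:lm:bound:laminar:1} is an immediate combination of two preceding lemmas, so the proof should be short.

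\textbf{Overview of the approach.} The plan is to simply combine Lemma~\ref{label:general:lm:bound:laminar:0} and Lemma~\ref{label:general:lm:bound:laminar:-1}. Lemma~\ref{label:general:lm:bound:laminar:0} tells us that at the end of the epoch $[\tau^0,\tau^1]$ the set $D_{lj}\setminus D_{l,j-1}$ has size at least $(\delta/(L_d+1))\cdot|A|$. Lemma~\ref{label:general:lm:bound:laminar:-1} tells us that every node in this set was in $C_{lj}$ when the epoch began. Hence the set of nodes witnessing the claim of Lemma~\ref{label:general:lm:bound:laminar:1} is exactly $D_{lj}\setminus D_{l,j-1}$ evaluated at the end of the epoch, and its cardinality is $\Omega(\delta\cdot L_d^{-1}\cdot|A|)$ because $L_d+1 = \Theta(L_d)$ (recall $L_d\geq \gamma > 0$ from equation~\ref{label:general:eq:101}, so in particular $L_d\geq 1$ and $L_d+1\leq 2L_d$).

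\textbf{Key steps, in order.} First I would fix the witnessing node-set to be $W := D_{lj}\setminus D_{l,j-1}$, where both sets are taken at the time instant when the epoch ends (just before the call to REBUILD($j$) triggered by edge-update $\tau^1$). Second, I would invoke Lemma~\ref{label:general:lm:bound:laminar:0} to conclude $|W|\geq (\delta/(L_d+1))\cdot|A|$. Third, I would apply Lemma~\ref{label:general:lm:bound:laminar:-1} to each $v\in W$: every such $v$ belonged to $C_{lj}$ at the beginning of the epoch. Thus every node of $W$ satisfies both required properties — membership in $C_{lj}$ at the start and membership in $D_{lj}\setminus D_{l,j-1}$ at the end. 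Finally, I would simplify $(\delta/(L_d+1))\cdot|A| = \Omega(\delta\cdot L_d^{-1}\cdot|A|)$ using $1\leq L_d+1\leq 2L_d$, which holds since $L_d\geq\gamma$ and $\gamma$ is a positive constant (and indeed $L_d$ is a positive integer by equation~\ref{label:general:eq:Li}).

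\textbf{Expected obstacle.} Honestly there is no serious obstacle here; the lemma is a one-line corollary of the two preceding lemmas, and the only thing to be a little careful about is consistency of the time instants at which the node-sets are evaluated (start vs.\ end of the epoch) and the harmless constant-factor passage from $(L_d+1)^{-1}$ to $L_d^{-1}$. The real content was already done in establishing Lemmas~\ref{label:general:lm:bound:laminar:0} and~\ref{label:general:lm:bound:laminar:-1}, which rely on the monotonicity properties of the laminar structure within an epoch (Lemma~\ref{label:general:lm:monotone:laminar}) and on the fact that REBUILD($j'$) with $j'\in[1,j]$ equalizes $D_{l,j-1}$ and $D_{lj}$ at the start of the epoch (Lemma~\ref{label:general:lm:rebuild:minor:1}).
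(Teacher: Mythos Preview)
Your proposal is correct and matches the paper's approach exactly: the paper states that Lemma~\ref{label:general:lm:bound:laminar:1} follows directly from Lemmas~\ref{label:general:lm:bound:laminar:0} and~\ref{label:general:lm:bound:laminar:-1}, which is precisely the combination you describe.
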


Next, we will show that  a node $v$ moves from $C_{lj}$ to $D_{lj} \setminus D_{l,j-1}$  only after a large number of  edge-updates incident upon $v$. The complete proof of Lemma~\ref{label:general:lm:bound:laminar:2} appears in Section~\ref{label:general:sec:proof:lm:bound:laminar:2}. The main idea behind the proof, however, is simple. Consider a node $v$ that belongs to $C_{lj}$ when the epoch begins. Since a call was made to REBUILD($j'$) with  $j' \in [1, j]$ just before the start of the epoch, Lemma~\ref{label:general:lm:fig:reduce} implies that $\text{deg}(v, H_j)$ is very close to $(1/2) \cdot \text{deg}(v, H_{j-1})$ at that moment. On the other hand, the node $v$ moves from $C_{lj}$ to $D_{lj} \setminus D_{l,j-1}$ only if somewhere in the middle of the epoch it violates Invariant~\ref{label:general:inv:laminar:clean} in layer $j$, and this means that at that moment $\text{deg}(v, H_{j})$ is very far away from $(1/2) \cdot \text{deg}(v, H_{j-1})$. So $\text{deg}(v, H_j)$ moves from being close to $(1/2) \cdot \text{deg}(v, H_{j-1})$ to being far away from $(1/2) \cdot \text{deg}(v, H_{j-1})$ within the given epoch. This can happen only if either $\text{deg}(v, H_j)$ or $\text{deg}(v, H_{j-1})$ changes by a large amount during the epoch. In either case, we can show that a large number of edge-updates incident upon $v$ takes place during the epoch. 

\begin{lemma}
\label{label:general:lm:bound:laminar:2}
Take any node $v$ that is part of $C_{lj}$ when the epoch $\left[\tau^0, \tau^1\right]$ begins, and is part of $D_{lj} \setminus D_{l,j-1}$ when the epoch ends. At least $\Omega( \epsilon \gamma \cdot (L_d L^2)^{-1} \cdot (d/2^j))$ edge-updates in the epoch are incident upon   $v$.
\end{lemma}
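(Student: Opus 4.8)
The plan is to track the two quantities $\text{deg}(v, H_{j-1})$ and $\text{deg}(v, H_j)$ across the epoch and argue that their ratio must move from "close to $1/2$" to "far from $1/2$", which forces many edge-updates incident on $v$. I would first fix the node $v$ as in the statement and let $\tau^0 \le \tau^1$ delimit the epoch. At time $\tau^0 - 1$ a call to $\text{REBUILD}(j')$ with $j' \le j$ took place, so by Lemma~\ref{label:general:lm:fig:reduce} we have, just after that call (hence at the start of the epoch, since layers $\ge j'$ are untouched until the epoch ends),
$$\frac{\text{deg}(v, H_{j-1})}{2} - 1 \;\le\; \text{deg}(v, H_j) \;\le\; \frac{\text{deg}(v, H_{j-1})}{2} + 1.$$
Since $v \in C_{lj}$ at the start (which propagates down to $C_{l,j-1}$ by Invariant~\ref{label:general:inv:laminar:contain}), Lemma~\ref{label:general:lm:clean} gives a lower bound $\text{deg}(v, H_{j-1}) \ge \text{deg}(v, E)/(2^{j-1}(1+\gamma/L_d)^{j-1}) \ge \Omega((d/2^j) \cdot e^{-\gamma})$, using that $v$ being active with a degree comparable to $d$ — wait, here I need that $v$ has large degree in $G$: since $v \in C_{lj} \subseteq A \setminus D_c = C_{l0}$, condition (1) of Definition~\ref{label:general:def:critical:structure} gives $\text{deg}(v,E) > \epsilon d/L^2$, so $\text{deg}(v, H_{j-1}) = \Omega(\epsilon d/(L^2 2^{j-1} e^\gamma)) = \Omega(\epsilon d/(L^2 2^j))$ at the start of the epoch.

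Next I would use the fact that $v$ leaves $C_{lj}$ during the epoch. By Lemma~\ref{label:general:lm:monotone:laminar}, within the epoch (no call to $\text{REBUILD}(j')$ with $j' \le j$) the sets $H_j$ and $H_{j-1}$ only shrink, and $v$ is moved out of $C_{lj}$ precisely when it violates Invariant~\ref{label:general:inv:laminar:clean} at layer $j$, i.e., at some intermediate time $\text{deg}(v, H_j) < (1+\gamma/L_d)^{-1} \cdot \text{deg}(v, H_{j-1})/2$ or $\text{deg}(v, H_j) > (1+\gamma/L_d) \cdot \text{deg}(v, H_{j-1})/2$. (Note $v \notin D_{l,j-1}$ at the end means $v$ never became $c$-dirty and never left $C_{l,j-1}$, so the violation is genuinely at layer $j$ with layer $j-1$ still clean.) Comparing the start-of-epoch sandwich with this violation, the quantity $\text{deg}(v, H_j) - \text{deg}(v, H_{j-1})/2$ changes in absolute value from at most $1$ to more than roughly $(\gamma/(2L_d)) \cdot \text{deg}(v, H_{j-1})/2 \ge \Omega(\gamma/L_d \cdot \epsilon d/(L^2 2^j))$; since $\gamma/L_d \cdot \epsilon d/(L^2 2^j) \gg 1$ by equation~\ref{label:general:eq:Li:epoch} (this is exactly why that parameter inequality is stated), the change is $\Omega(\epsilon\gamma/(L_d L^2) \cdot (d/2^j))$. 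Every edge-update that does not touch $v$ changes neither $\text{deg}(v, H_j)$ nor $\text{deg}(v, H_{j-1})$, so the total change in $\text{deg}(v, H_j) - \text{deg}(v, H_{j-1})/2$ over the epoch is at most (number of updates incident on $v$) $\cdot\, \frac{3}{2}$ in absolute value. Hence at least $\Omega(\epsilon\gamma (L_d L^2)^{-1} (d/2^j))$ such updates occur, which is the claim.

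The main obstacle I anticipate is bookkeeping around the monotonicity: I must be careful that between the start of the epoch and the moment $v$'s Invariant~\ref{label:general:inv:laminar:clean} is checked and fails, no insertion into $H_j$ or $H_{j-1}$ happened (so the only changes come from deletions, each costing one update incident on $v$) — this is exactly Lemma~\ref{label:general:lm:monotone:laminar} applied with the observation that no $\text{REBUILD}(k)$, $k \le j$, is called in the epoch. A secondary subtlety is that the CLEANUP subroutine, when it ejects $v$ from $C_{lj}$, checks the violation using the current $H_j, H_{j-1}$ values, which is fine since those are the values we are bounding. I would also double-check the direction of the inequality in the "violation" case (over-shrinking vs. under-shrinking relative to $H_{j-1}$): in either case the deviation $|\text{deg}(v, H_j) - \text{deg}(v, H_{j-1})/2|$ is $\ge (\gamma/L_d) \cdot \text{deg}(v, H_{j-1})/(2(1+\gamma/L_d))$, which suffices. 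Finally I would invoke equation~\ref{label:general:eq:Li:epoch} in the exact normalized form $L^2 \ge (8 e^\gamma L_d)/(\epsilon \gamma \lambda_d)$ to absorb the "$-1$" and the $e^\gamma$ factors cleanly, converting "$\ge \text{something} - O(1)$" into "$\ge \frac12 \cdot \text{something}$".
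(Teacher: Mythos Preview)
Your proposal is correct and follows essentially the same strategy as the paper: both compare the near-equality $\text{deg}(v,H_j)\approx\tfrac12\,\text{deg}(v,H_{j-1})$ established by the preceding REBUILD (Lemma~\ref{label:general:lm:fig:reduce}) with the violation of Invariant~\ref{label:general:inv:laminar:clean} that ejects $v$ from $C_{lj}$, and both use $v\in C_{l,j-1}\subseteq A\setminus D_c$ together with Lemma~\ref{label:general:lm:clean} and Definition~\ref{label:general:def:critical:structure}(1) to lower-bound $\text{deg}(v,H_{j-1})$ by $\Omega(\epsilon d/(L^2 2^j))$. The paper tracks the individual changes $\Delta_{j-1},\Delta_j$ and shows $\max(|\Delta_{j-1}|,|\Delta_j|)$ is large (Claims~\ref{label:general:cl:y} and~\ref{label:general:cl:final}), whereas you track the single quantity $\text{deg}(v,H_j)-\tfrac12\,\text{deg}(v,H_{j-1})$; just be careful to invoke the lower bound on $\text{deg}(v,H_{j-1})$ \emph{at the moment of violation} (not only at epoch start), which is legitimate precisely because $v$ remains in $C_{l,j-1}\subseteq A\setminus D_c$ throughout the epoch as you observed.
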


\begin{corollary}
\label{label:general:cor:lm:bound:laminar:2}
The epoch $\left[\tau^0, \tau^1\right]$ lasts for at least $\Omega( \epsilon \gamma \delta \cdot (L_d L)^{-2} \cdot (d/2^j) \cdot |A|)$ edge-updates.
\end{corollary}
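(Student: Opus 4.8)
The plan is to derive Corollary~\ref{label:general:cor:lm:bound:laminar:2} from Lemmas~\ref{label:general:lm:bound:laminar:1} and~\ref{label:general:lm:bound:laminar:2} by a straightforward double-counting argument over the edge-updates that fall inside the epoch $\left[\tau^0,\tau^1\right]$. First I would fix the epoch and let $W \subseteq A$ denote the set of nodes that lie in $C_{lj}$ at the moment the epoch begins and in $D_{lj}\setminus D_{l,j-1}$ at the moment it ends. By Lemma~\ref{label:general:lm:bound:laminar:1} we have $|W| = \Omega(\delta\, L_d^{-1}\, |A|)$. Crucially the elements of $W$ are pairwise distinct vertices, so we are free to sum quantities over $W$ without worrying about repetitions.

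Next I would apply Lemma~\ref{label:general:lm:bound:laminar:2} to each individual node $v \in W$: it guarantees that at least $\Omega(\epsilon\gamma\,(L_d L^2)^{-1}\,(d/2^j))$ of the edge-updates occurring during the epoch are incident upon $v$. Let $N$ denote the total number of edge-updates in the epoch. Since every edge-update is the insertion or deletion of a single edge, it is incident upon at most two nodes of $V$, hence it is counted by the ``incident-on-$v$'' tally of at most two nodes $v \in W$. Summing the per-node lower bound of Lemma~\ref{label:general:lm:bound:laminar:2} over all $v \in W$ and using this overcounting-by-two, we obtain
\[
2N \;\ge\; \sum_{v\in W} \bigl(\text{number of epoch edge-updates incident on }v\bigr) \;=\; \Omega\!\left(|W|\cdot \epsilon\gamma\,(L_dL^2)^{-1}\,(d/2^j)\right).
\]
Plugging in $|W| = \Omega(\delta\, L_d^{-1}|A|)$ and simplifying $L_d^{-1}\cdot L_d^{-1} L^{-2} = (L_dL)^{-2}$ yields $N = \Omega(\epsilon\gamma\delta\,(L_dL)^{-2}\,(d/2^j)\,|A|)$, which is exactly the claimed bound.

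There is essentially no obstacle left at this level: the corollary is pure arithmetic once the two preceding lemmas are in hand, and the only point deserving a sentence of care is the observation that $W$ is a genuine set of distinct vertices, which is what makes the factor-of-two overcounting legitimate. All the real difficulty has already been pushed into Lemma~\ref{label:general:lm:bound:laminar:2} (whose proof is deferred to Section~\ref{label:general:sec:proof:lm:bound:laminar:2}), where one must argue that a node cannot migrate from $C_{lj}$ to $D_{lj}\setminus D_{l,j-1}$ without many incident edge-updates, using that $\text{deg}(v,H_j)\approx \tfrac12\,\text{deg}(v,H_{j-1})$ immediately after the REBUILD$(j')$ call that opens the epoch (Lemma~\ref{label:general:lm:fig:reduce}) versus the large deviation from this ratio that is forced when $v$ violates Invariant~\ref{label:general:inv:laminar:clean} and thereby closes the epoch.
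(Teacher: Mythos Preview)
Your proof is correct and follows essentially the same route as the paper: define the set of nodes that migrate from $C_{lj}$ to $D_{lj}\setminus D_{l,j-1}$ during the epoch, apply Lemma~\ref{label:general:lm:bound:laminar:1} for its size and Lemma~\ref{label:general:lm:bound:laminar:2} for the per-node lower bound, and multiply. If anything you are slightly more careful than the paper, which silently absorbs the factor-of-two overcounting into the $\Omega(\cdot)$, whereas you make it explicit.
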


\begin{proof}
Let $C^*$ be the set of nodes that are part of $C_{lj}$ when the epoch begins and part of $D_{lj} \setminus D_{l,j-1}$ when the epoch ends. By Lemmas~\ref{label:general:lm:bound:laminar:1},~\ref{label:general:lm:bound:laminar:2}, at least $\Omega( \epsilon \gamma \delta \cdot L_d^{-2} \cdot L^{-2} \cdot |A| \cdot d \cdot 2^{-j})$ edge-updates in the epoch are incident upon the nodes in $C^*$. This lower bounds  the total number of edge-updates  during the epoch.  
\end{proof}

We are now ready to derive the main result of this section.

\begin{lemma}
\label{label:general:lm:bound:laminar:main}
\label{label:general:lm:amortized:runtime:rebuild}
The amortized update time of REBUILD($j$) in the middle of a phase is $O(L^2 L_d^2/\epsilon \gamma \delta)$. 
\end{lemma}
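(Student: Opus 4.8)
The plan is to bound the amortized cost of a single call to REBUILD($j$) in the middle of a phase by charging the cost of that call against the edge-updates of the epoch it terminates. First I would recall that by Lemma~\ref{label:general:lm:data:structure}, the call to REBUILD($j$) triggered at the end of an epoch $[\tau^0,\tau^1]$ runs in $O(|A|\cdot d\cdot 2^{-j})$ time. Since $\lambda_d\geq 1/2$ and $2^{L_d}=d/(\lambda_d L^4)$ (equation~\ref{label:general:eq:Li:lambda}), this is at most $O(|A|\cdot d\cdot 2^{-j})$ time, which I will carry through symbolically; I do not need to simplify it further at this stage. The whole argument hinges on the fact — already established in the preceding subsection — that no call to REBUILD($j$) occurs strictly inside the epoch, so the entire running time of the REBUILD($j$) call can be amortized over the edge-updates of that epoch.

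Next I would invoke Corollary~\ref{label:general:cor:lm:bound:laminar:2}, which states that the epoch $[\tau^0,\tau^1]$ contains at least $\Omega(\epsilon\gamma\delta\cdot(L_dL)^{-2}\cdot(d/2^j)\cdot|A|)$ edge-updates. Dividing the running time of the terminating REBUILD($j$) call by the length of the epoch gives
\[
\frac{O(|A|\cdot d\cdot 2^{-j})}{\Omega(\epsilon\gamma\delta\cdot(L_dL)^{-2}\cdot(d/2^j)\cdot|A|)}=O\!\left(\frac{L^2L_d^2}{\epsilon\gamma\delta}\right),
\]
where the factors $|A|$ and $d\cdot 2^{-j}$ cancel exactly. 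This establishes the claimed amortized bound for one call to REBUILD($j$) with a fixed $j\in[1,L_d]$, provided we have correctly accounted for all edge-updates in the epoch as being chargeable to this call and not double-counted across calls for different layers. For the latter, I would note that for a fixed $j$, consecutive epochs (as defined in Definition~\ref{label:general:def:epoch}) partition the edge-updates of the phase into disjoint blocks, each terminated by a distinct REBUILD($j$) call, so the amortization is well-defined per layer $j$.

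The only subtle point — and the step I expect to need the most care — is making sure the two simplifying moves above are legitimate: that the $O(|A|\cdot d\cdot 2^{-j})$ from Lemma~\ref{label:general:lm:data:structure} and the $\Omega(\epsilon\gamma\delta\cdot(L_dL)^{-2}\cdot(d/2^j)\cdot|A|)$ from Corollary~\ref{label:general:cor:lm:bound:laminar:2} refer to the same quantity $|A|$ (the set of active nodes is fixed throughout a phase by Lemma~\ref{label:general:lm:monotone:laminar}/Lemma~\ref{label:general:lm:init}, so this is fine) and the same $d\cdot 2^{-j}$ (both are literal, not hidden in asymptotics with different constants). Since everything is within a single phase where $A$ does not change, the cancellation is clean. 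Finally, summing over the $L_d$ possible values of $j$ — as remarked in the roadmap — gives a total amortized time of $O(L^2L_d^3/(\epsilon\gamma\delta))$ for all middle-of-phase REBUILD calls, and combining this with Lemma~\ref{label:general:lm:runtime:exclude} and Lemma~\ref{label:general:lm:amortized:time:revamp} yields Theorem~\ref{label:general:th:maintain:runtime:main}; but the statement of Lemma~\ref{label:general:lm:bound:laminar:main} itself is exactly the per-layer bound, so the proof stops at the display above.
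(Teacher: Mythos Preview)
Your proposal is correct and follows essentially the same approach as the paper: invoke Lemma~\ref{label:general:lm:data:structure} for the cost of the terminating REBUILD($j$) call, invoke Corollary~\ref{label:general:cor:lm:bound:laminar:2} for the lower bound on the epoch length, and divide. You are in fact more explicit than the paper in justifying why the amortization is legitimate (disjointness of epochs for a fixed $j$, and the fact that $A$ is unchanged within a phase); one minor quibble is that epochs for a fixed $j$ are disjoint but need not \emph{partition} the phase, though disjointness is all the charging argument requires.
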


\begin{proof}
Note that no call is made to REBUILD($j$) during the epoch, and  the call to REBUILD($j$) after edge-update $\tau^1$ requires $O(|A| \cdot d \cdot 2^{-j})$ time (see Lemma~\ref{label:general:lm:data:structure}). The lemma now follows from Corollary~\ref{label:general:cor:lm:bound:laminar:2}. 
\end{proof}

\subsubsection{Proof of Lemma~\ref{label:general:lm:bound:laminar:2}} 
\label{label:general:sec:proof:lm:bound:laminar:2}

\paragraph{Notations.}  We introduce some notations that will be used throughout the proof. 
\begin{itemize}
\item Let $x_k$ be the value of $\text{deg}(v, H_k)$, for $k \in [0, L_d]$, when the epoch begins. Similarly,  let $x$ be the value of $\text{deg}(v, E)$ when the epoch begins. 
\item Consider the unique time-instant in the epoch when the node $v$ is moved from $C_{l,j}$ to  $D_{lj} \setminus D_{l,j-1}$. For $k \in [0, L_d]$, let $x_k + \Delta_k$ be the value of $\text{deg}(v, H_k)$ at this time-instant, where $\Delta_k$ is some integer. 
\end{itemize}

\paragraph{Two simple observations.}
Just before the epoch begins, a call is made to REBUILD($j'$) for some $j' \in [1,j]$. At the end of the call, $\text{deg}(v, H_j)$ equals $1/2$ times $\text{deg}(v, H_{j-1})$,  plus-minus one (see Lemma~\ref{label:general:lm:fig:reduce}). Thus, we have:
\begin{equation}
\label{label:general:eq:lm:bound:laminar:1}
\left(\frac{x_{j-1}}{2}\right) - 1 \leq x_j \leq  \left(\frac{x_{j-1}}{2}\right) + 1
\end{equation}

Consider the unique time-instant in the epoch when the node $v$ is moved from $C_{lj}$ to $D_{lj} \setminus D_{l,j-1}$. This event can take place only if the node $v$ was violating Invariant~\ref{label:general:inv:laminar:clean} at layer $j$ at that instant. Thus, we have:

\begin{equation}
\label{label:general:eq:lm:bound:laminar:2}
x_j  + \Delta_j \notin \left[ (1+\gamma/L_d)^{-1} \left( \frac{x_{j-1} + \Delta_{j-1}}{2} \right), (1+\gamma/L_d) \left( \frac{x_{j-1} + \Delta_{j-1}}{2} \right) \right]
\end{equation}

\paragraph{The main idea.}  When the epoch begins, equation~\ref{label:general:eq:lm:bound:laminar:1} implies that $\text{deg}(v, H_j)$ is very close to $(1/2) \cdot \text{deg}(v, H_{j-1})$. In contrast, equation~\ref{label:general:eq:lm:bound:laminar:2} implies that somewhere in the middle of the epoch, $\text{deg}(v, H_j)$ is quite far away from $(1/2) \cdot \text{deg}(v, H_{j-1})$. Intuitively, this can happen only if either $\text{deg}(v, H_j)$ or $\text{deg}(v, H_{j-1})$ has changed by a large amount during this interval, i.e., either $|\Delta_j|$ or $|\Delta_{j-1}|$ is large. This is shown in Claim~\ref{label:general:cl:y} and Corollary~\ref{label:general:cor:cl:y}. Finally, in Claim~\ref{label:general:cl:final} and Corollary~\ref{label:general:cor:cl:final}, we show that this can happen only if the degree of $v$ in the graph $G = (V, E)$ itself has changed by a large amount, which means that a large number of edge-updates incident upon $v$ have taken place during the interval under consideration. Lemma~\ref{label:general:lm:bound:laminar:2} follows from Corollary~\ref{label:general:cor:cl:final}.


\medskip
To prove Claim~\ref{label:general:cl:y} and Corollary~\ref{label:general:cor:cl:y}, we  first need to show that  $x_{j-1}$ is not much smaller than $d/2^{j}$.

\begin{claim}
\label{label:general:cl:epoch:simple}
We have $x_{j-1} \geq (2 \epsilon \cdot e^{-\gamma} L^{-2}) \cdot (d/2^{j})$.
\end{claim}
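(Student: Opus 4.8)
\textbf{Proof proposal for Claim~\ref{label:general:cl:epoch:simple}.}

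The plan is to trace the degree of $v$ at layer $j-1$ from the moment the current epoch begins back to the moment the preceding epoch's ending call to $\text{REBUILD}(j')$ (with $j' \in [1,j]$) finished, and exploit the fact that $v \in C_{lj}$ at the start of the epoch, hence $v \in C_{l,j-1}$ as well by Invariant~\ref{label:general:inv:laminar:contain}. First I would invoke Lemma~\ref{label:general:lm:clean} applied at layer $j-1$ to the node $v$: since $v \in C_{l,j-1}$ when the epoch begins, we have
\[
\text{deg}(v, H_{j-1}) \;\geq\; \frac{\text{deg}(v,E)}{2^{j-1} \cdot (1+\gamma/L_d)^{j-1}} \;\geq\; \frac{\text{deg}(v,E)}{2^{j-1} \cdot e^{\gamma}},
\]
using $(1+\gamma/L_d)^{j-1} \leq (1+\gamma/L_d)^{L_d} \leq e^{\gamma}$. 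This is exactly $x_{j-1} \geq e^{-\gamma} \cdot \text{deg}(v,E)/2^{j-1} = 2 e^{-\gamma} \cdot \text{deg}(v,E)/2^{j}$, where $\text{deg}(v,E)$ is the degree of $v$ in $G$ at the time the epoch begins.

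The remaining step is to lower-bound $\text{deg}(v,E)$ itself by $\epsilon d/L^2$. Here I would argue that $v$ must be an active node: $v \in C_{lj} \subseteq A$ by Invariant~\ref{label:general:inv:laminar:contain}. Moreover $v$ is $c$-clean at the start of the epoch — this needs a small argument, but it follows because $v \in C_{l0} = A \setminus (A \cap D_c)$ whenever $v \in C_{lj} \subseteq C_{l0}$ (again Invariant~\ref{label:general:inv:laminar:contain}, item 3), so $v \notin D_c$. Being active and $c$-clean, condition (1) of Definition~\ref{label:general:def:critical:structure} gives $\text{deg}(v, E) > \epsilon d/L^2$. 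Combining, $x_{j-1} \geq 2 e^{-\gamma} \cdot (\epsilon d/L^2)/2^{j} = (2\epsilon e^{-\gamma} L^{-2}) \cdot (d/2^j)$, which is the claimed bound.

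I expect the main subtlety to be the justification that $v$ is $c$-clean (equivalently $v \in C_{l0}$) precisely when the epoch begins, and that the ``degree of $v$ in $G$'' referenced in Lemma~\ref{label:general:lm:clean} is indeed at least $\epsilon d/L^2$ at that same instant — one has to be careful that both the membership $v \in C_{l,j-1}$ and the critical-structure bound are evaluated at the start of the epoch, not at some other time. Also, strictly speaking Lemma~\ref{label:general:lm:clean} requires $j-1 \in [1, L_d]$, i.e.\ $j \geq 2$; the edge case $j = 1$ would need separate (and easier) handling, since then $H_{j-1} = H_0 = H$ and $\text{deg}(v, H_0) = \text{deg}(v, E) > \epsilon d/L^2 \geq 2 e^{-\gamma}\epsilon L^{-2} \cdot (d/2)$ trivially because $e^{-\gamma} \leq 1$. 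Everything else is a routine chain of inequalities using only the already-established monotonicity and degree-splitting facts.
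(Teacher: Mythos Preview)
Your proposal is correct and follows essentially the same route as the paper: use $v\in C_{lj}\subseteq C_{l,j-1}$ together with Lemma~\ref{label:general:lm:clean} to get $x_{j-1}\ge \text{deg}(v,E)/(2^{j-1}(1+\gamma/L_d)^{j-1})$, bound $(1+\gamma/L_d)^{j-1}\le e^{\gamma}$, and then use $v\in C_{lj}\subseteq C_{l0}=A\setminus D_c$ with Definition~\ref{label:general:def:critical:structure}(1) to get $\text{deg}(v,E)>\epsilon d/L^2$. You are in fact slightly more careful than the paper, which invokes Lemma~\ref{label:general:lm:clean} at layer $j-1$ without separately treating the case $j=1$; your handling of that edge case via $H_0=H$ and Observation~\ref{label:general:ob:edge} is the right fix.
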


\begin{proof}
When the epoch begins,  the node $v$ belongs to the set $C_{lj}$, and we know that $ C_{lj} \subseteq C_{l,j-1}$ (see Invariant~\ref{label:general:inv:laminar:contain}). Hence, when the epoch begins, the node $v$ is part of $C_{l,j-1}$. Accordingly, by Lemma~\ref{label:general:lm:clean}:
\begin{equation}
\label{label:general:eq:epoch:100}
x_{j-1} \geq \frac{x}{2^{j-1} \cdot (1+\gamma /L_d)^{j-1}}
\end{equation}

We now lower bound $x$. When the epoch begins, the node $v$ belongs to the set $C_{lj}$, and we know that $C_{lj} \subseteq A \setminus D_{l0} = A \setminus D_c$ (see Invariant~\ref{label:general:inv:laminar:contain}). Hence, by the condition (1) in Definition~\ref{label:general:def:critical:structure}:\begin{equation}
\label{label:general:eq:epoch:101}
x > \epsilon d/L^2
\end{equation} 
From equations~\ref{label:general:eq:epoch:100} and~\ref{label:general:eq:epoch:101} we infer that:
\begin{equation}
\label{label:general:eq:epoch:102}
x_{j-1} \geq \frac{\epsilon \cdot d}{2^{j-1} \cdot (1+\gamma /L_d)^{j-1} \cdot L^2}
\end{equation}
Since $(1+\gamma /L_d)^{j-1} < (1+\gamma/L_d)^{L_d} \leq e^{\gamma}$, equation~\ref{label:general:eq:epoch:102} implies that:
\begin{eqnarray*}
x_{j-1} \geq \left(\frac{\epsilon}{e^{\gamma} L^2}\right) \cdot \left(\frac{d}{2^{j-1}}\right) = \left(\frac{2 \epsilon}{e^{\gamma} L^2}\right) \cdot \left(\frac{d}{2^{j}}\right)\end{eqnarray*}
This concludes the proof of the claim.
\end{proof}

\begin{corollary}
\label{label:general:cor:cl:epoch:simple}
We have $x_{j-1} \geq 8 L_d/\gamma$.
\end{corollary}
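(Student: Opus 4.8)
The plan is to chain Claim~\ref{label:general:cl:epoch:simple} together with the parameter inequalities collected in Section~\ref{label:general:sec:parameter}. First I would recall the bound already established in Claim~\ref{label:general:cl:epoch:simple}, namely $x_{j-1} \geq (2\epsilon \cdot e^{-\gamma} L^{-2}) \cdot (d/2^{j})$, so that the only thing left to do is to show that the right-hand side is at least $8L_d/\gamma$. The key step is to lower bound the quantity $d/2^{j}$. Since the epoch under consideration lives at a layer $j \in [1, L_d]$, we have $2^{j} \leq 2^{L_d}$, and by equation~\ref{label:general:eq:Li:lambda} we know $2^{L_d} = d/(\lambda_d L^4)$. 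Hence $d/2^{j} \geq d/2^{L_d} = \lambda_d L^4$. Substituting this into the bound from Claim~\ref{label:general:cl:epoch:simple} gives $x_{j-1} \geq 2\epsilon \cdot e^{-\gamma} \lambda_d L^2$.

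The second step is to invoke the parameter inequality~\ref{label:general:eq:Li:epoch}, which asserts $L^2 \geq (8 e^{\gamma} L_d)/(\epsilon \gamma \lambda_d)$. Multiplying both sides by the positive quantity $2\epsilon \cdot e^{-\gamma}\lambda_d$ yields $2\epsilon \cdot e^{-\gamma}\lambda_d L^2 \geq 16 L_d/\gamma \geq 8 L_d/\gamma$. Combining this with the conclusion of the previous paragraph gives $x_{j-1} \geq 8L_d/\gamma$, which is exactly the assertion of the corollary.

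There is no genuine obstacle here: the argument is a short sequence of substitutions, and the whole point of the corollary is that equation~\ref{label:general:eq:Li:epoch} was set up precisely so that the degree bound of Claim~\ref{label:general:cl:epoch:simple} clears the threshold $8L_d/\gamma$ that is needed downstream (the bound $x \geq L$ hypothesis feeding into the Claim~\ref{label:general:cl:quick:1}-type estimates). The only point requiring a moment's care is the observation that $d/2^{j}$, viewed as a function of $j \in [1,L_d]$, is minimized at $j = L_d$, where it equals $\lambda_d L^4$; everything after that is routine bookkeeping with $\epsilon$, $\gamma$, and $\lambda_d$.
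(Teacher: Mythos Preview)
Your proof is correct and follows essentially the same route as the paper: start from Claim~\ref{label:general:cl:epoch:simple}, minimize $d/2^{j}$ over $j\in[1,L_d]$ via equation~\ref{label:general:eq:Li:lambda}, and finish with the parameter inequality~\ref{label:general:eq:Li:epoch}. The only cosmetic difference is that the paper rewrites the starting bound as $(\epsilon/(e^{\gamma}L^{2}))\cdot(d/2^{j-1})$ and then uses $j-1\leq L_d$, thereby landing on exactly $8L_d/\gamma$, whereas your version keeps the factor of $2$ and obtains $16L_d/\gamma$ before discarding it; the argument is the same.
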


\begin{proof}
Form Claim~\ref{label:general:cl:epoch:simple}, we infer that:
\begin{eqnarray}
x_{j-1} & \geq & \left(\frac{\epsilon}{e^{\gamma} L^2}\right) \cdot \left(\frac{d}{2^{j-1}}\right) \nonumber \\
& \geq & \left(\frac{\epsilon}{e^{\gamma} L^2}\right) \cdot \left(\frac{d}{2^{L_d}}\right) \label{label:general:eq:cor:cl:epoch:simple:1} \\
& = & \frac{\epsilon \cdot \lambda_d \cdot L^2}{e^{\gamma}} \label{label:general:eq:cor:cl:epoch:simple:2} \\
& \geq & \frac{8 L_d}{\gamma} \label{label:general:eq:cor:cl:epoch:simple:3}
\end{eqnarray}
Equation~\ref{label:general:eq:cor:cl:epoch:simple:1} holds since $j \leq L_d$. Equation~\ref{label:general:eq:cor:cl:epoch:simple:2} follows from equation~\ref{label:general:eq:Li:lambda}. Equation~\ref{label:general:eq:cor:cl:epoch:simple:3} follows from equation~\ref{label:general:eq:Li:epoch}.
\end{proof}

\begin{claim}
\label{label:general:cl:y}
Either $|\Delta_j| \geq \gamma \cdot (16 L_d)^{-1} \cdot x_{j-1}$ or $|\Delta_{j-1}| \geq \gamma \cdot (16 L_d)^{-1} \cdot x_{j-1}$. 
\end{claim}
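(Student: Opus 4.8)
The plan is to argue by contradiction: suppose both $|\Delta_j| < \gamma \cdot (16 L_d)^{-1} \cdot x_{j-1}$ and $|\Delta_{j-1}| < \gamma \cdot (16 L_d)^{-1} \cdot x_{j-1}$, and show this forces $x_j + \Delta_j$ to lie inside the interval $\left[(1+\gamma/L_d)^{-1}(x_{j-1}+\Delta_{j-1})/2,\ (1+\gamma/L_d)(x_{j-1}+\Delta_{j-1})/2\right]$, contradicting equation~\ref{label:general:eq:lm:bound:laminar:2}. First I would record the two ``small perturbation'' bounds that follow from the contradiction hypothesis together with Corollary~\ref{label:general:cor:cl:epoch:simple} ($x_{j-1} \geq 8L_d/\gamma$), namely that $|\Delta_j|$ and $|\Delta_{j-1}|$ are each at most $\gamma x_{j-1}/(16 L_d)$, which in particular is at least $1/2$ so the perturbations dominate the additive ``$\pm 1$'' slack from Lemma~\ref{label:general:lm:fig:reduce}. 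The target is: from equation~\ref{label:general:eq:lm:bound:laminar:1}, $x_j \in [x_{j-1}/2 - 1, x_{j-1}/2 + 1]$, hence $x_j + \Delta_j$ is close to $x_{j-1}/2$, while the interval in equation~\ref{label:general:eq:lm:bound:laminar:2} is centered near $x_{j-1}/2$ with half-width roughly $(\gamma/(2L_d)) \cdot x_{j-1}/2$.

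The key steps, in order: (1) Bound $|(x_j + \Delta_j) - x_{j-1}/2| \leq 1 + |\Delta_j| \leq 1 + \gamma x_{j-1}/(16 L_d)$ using equation~\ref{label:general:eq:lm:bound:laminar:1} and the hypothesis on $\Delta_j$; since $x_{j-1} \geq 8L_d/\gamma$ gives $1 \leq \gamma x_{j-1}/(8 L_d)$, conclude $|(x_j+\Delta_j) - x_{j-1}/2| \leq (3/16)(\gamma/L_d) x_{j-1}$. (2) Lower-bound the distance from the center $(x_{j-1}+\Delta_{j-1})/2$ of the forbidden interval's complement to its endpoints: the interval $\left[(1+\gamma/L_d)^{-1} y,\ (1+\gamma/L_d) y\right]$ with $y = (x_{j-1}+\Delta_{j-1})/2$ contains every point within distance $\frac{\gamma/L_d}{1+\gamma/L_d}\, y \geq \frac{1}{2}(\gamma/L_d)\, y$ of $y$ (using $\gamma/L_d \leq 1$ from equation~\ref{label:general:eq:101}, or just $\gamma < 1 \leq L_d$); then bound $y \geq x_{j-1}/2 - |\Delta_{j-1}|/2 \geq x_{j-1}/2 \cdot (1 - \gamma/(8L_d)) \geq (7/16) x_{j-1}$, so the interval contains everything within $(7/64)(\gamma/L_d) x_{j-1}$ of $y$...

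Here is the obstacle I anticipate: the naive constants above do not quite close — $3/16$ of $(\gamma/L_d)x_{j-1}$ as the displacement of $x_j+\Delta_j$ from $x_{j-1}/2$, versus only $\approx 7/64$ as the guaranteed radius around $y$, and $|x_{j-1}/2 - y| \leq |\Delta_{j-1}|/2 \leq \gamma x_{j-1}/(32 L_d)$ shifts the center. So I would instead chase the arithmetic more carefully with the exact $\frac{\gamma/L_d}{1+\gamma/L_d}$ factor and the exact $\pm 1$ from Lemma~\ref{label:general:lm:fig:reduce}, and, crucially, use the full strength of $x_{j-1} \geq 8L_d/\gamma$ (so $1 \leq \gamma x_{j-1}/(8L_d)$, making the additive slack negligible compared to $x_{j-1}$) — this is exactly why Corollary~\ref{label:general:cor:cl:epoch:simple} was proven just beforehand. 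The right bookkeeping: write $|(x_j + \Delta_j) - y| \leq |(x_j + \Delta_j) - x_{j-1}/2| + |x_{j-1}/2 - y| \leq (1 + |\Delta_j|) + |\Delta_{j-1}|/2 \leq \gamma x_{j-1}/(8L_d) + \gamma x_{j-1}/(16 L_d) + \gamma x_{j-1}/(32 L_d) = (7/32)(\gamma/L_d)x_{j-1}$, and compare against the guaranteed radius $\frac{\gamma/L_d}{1+\gamma/L_d} y \geq \frac{1}{2}(\gamma/L_d) y \geq \frac{1}{2}(\gamma/L_d)(7/16)x_{j-1} = (7/64)(\gamma/L_d)x_{j-1}$. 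Since $7/32 > 7/64$ this still fails, which tells me the constant $16$ in the claim statement is what makes it work only with a sharper estimate of $y$ (using $|\Delta_{j-1}| \leq \gamma x_{j-1}/(16L_d) \ll x_{j-1}$ so $y$ is within a $(1 \pm 1/(16L_d))$ factor of $x_{j-1}/2$, and $1/(1+\gamma/L_d) \geq 1 - \gamma/L_d$ is close to $1$). The main obstacle is thus purely this constant-wrangling — making the displacement strictly smaller than the interval's guaranteed half-width — and I would resolve it by being precise about which quantities are $O(1/L_d)$-small relative to $x_{j-1}$ and carrying those factors symbolically rather than bounding them crudely, appealing to equations~\ref{label:general:eq:101} and~\ref{label:general:eq:102} and Corollary~\ref{label:general:cor:cl:epoch:simple} to absorb the $\pm 1$ terms.
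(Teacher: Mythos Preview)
Your contradiction approach is correct; the obstacle you identify is a phantom arising from an arithmetic slip. In your ``right bookkeeping'' paragraph you write $y \geq (x_{j-1}/2)\bigl(1-\gamma/(8L_d)\bigr)$, but under the contradiction hypothesis $|\Delta_{j-1}| < \gamma x_{j-1}/(16L_d)$, so $|\Delta_{j-1}|/2 < \gamma x_{j-1}/(32L_d) = (x_{j-1}/2)\cdot\gamma/(16L_d)$, giving $y > (x_{j-1}/2)\bigl(1-\gamma/(16L_d)\bigr) \geq (15/32)\,x_{j-1}$ (using $\gamma/L_d \leq 1$ from equation~\ref{label:general:eq:101}). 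The guaranteed half-width is then at least $\tfrac{1}{2}(\gamma/L_d)\cdot(15/32)\,x_{j-1} = (15/64)(\gamma/L_d)\,x_{j-1}$, and your displacement bound $(7/32)(\gamma/L_d)\,x_{j-1} = (14/64)(\gamma/L_d)\,x_{j-1}$ is strictly smaller. So $x_j+\Delta_j$ lies inside the interval, contradicting equation~\ref{label:general:eq:lm:bound:laminar:2}, and the claim follows.

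The paper takes a slightly different route: rather than arguing by contradiction and comparing displacement to half-width, it case-splits directly on equation~\ref{label:general:eq:lm:bound:laminar:2} (whether $x_j+\Delta_j$ lies above or below the interval), substitutes the bound on $x_j$ from equation~\ref{label:general:eq:lm:bound:laminar:1}, and rearranges to obtain a linear inequality of the form $|\Delta_j| + c\,|\Delta_{j-1}| \geq (\gamma/L_d)\cdot x_{j-1}/8$ (with $c$ a constant near $1$), whence one of the two terms must be at least half the right-hand side. Both arguments use the same three ingredients---equations~\ref{label:general:eq:lm:bound:laminar:1}, \ref{label:general:eq:lm:bound:laminar:2}, and Corollary~\ref{label:general:cor:cl:epoch:simple} to absorb the $\pm 1$---and the constants are equally tight; the paper's direct derivation simply avoids the need to estimate the interval's half-width as a function of $y$.
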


\begin{proof}
Throughout the proof, we set $\mu = (1+ \gamma/L_d)$. Looking at equation~\ref{label:general:eq:lm:bound:laminar:2}, we consider two  cases.

\medskip
\noindent {\em Case 1. $x_j + \Delta_j > (1/2) \cdot \mu \cdot \left(x_{j-1}+\Delta_{j-1}\right).$} 

\medskip
\noindent In this case, since $x_j \leq x_{j-1}/2+1$ (see equation~\ref{label:general:eq:lm:bound:laminar:1}), we get:
\begin{eqnarray}
\Delta_j \geq \mu \cdot \left(\frac{x_{j-1}+\Delta_{j-1}}{2}\right) - \left(\frac{x_{j-1}}{2}\right) - 1 \geq \frac{\mu \cdot \Delta_{j-1}}{2} + (\mu -1) \cdot \left(\frac{x_{j-1}}{2}\right) -1 \nonumber
\end{eqnarray}
Rearranging the terms in the above inequality, we get:
\begin{eqnarray}
\Delta_j - \frac{\mu \cdot \Delta_{j-1}}{2} & \geq & (\mu -1) \cdot \left(\frac{x_{j-1}}{2}\right) - 1 \nonumber \\
& \geq & (\mu -1) \cdot \left(\frac{x_{j-1}}{2}\right) - (\mu -1) \cdot \left(\frac{x_{j-1}}{4}\right) \label{label:general:eq:funny:3} \\
& = & (\mu -1) \cdot \left(\frac{x_{j-1}}{4}\right) \label{label:general:eq:funny:4}
\end{eqnarray}
Equation~\ref{label:general:eq:funny:3} holds since $x_{j-1} > 8 L_d/\gamma$ (see Corollary~\ref{label:general:cor:cl:epoch:simple}) and $(\mu - 1) = \gamma/L_d$. Thus, from equation~\ref{label:general:eq:funny:4}, we infer that:
\begin{equation}
\label{label:general:eq:tedious:1}
|\Delta_j| + \left| \frac{\mu \cdot \Delta_{j-1}}{2} \right| \geq  (\gamma/L_d) \cdot \left( \frac{x_{j-1}}{4} \right)
\end{equation}
Hence, either $|\Delta_j|$ is at least $1/2$ times the right hand side of equation~\ref{label:general:eq:tedious:1}, or else $|\Delta_{j-1}|$ is at least $1/\mu$ times the right hand side of equation~\ref{label:general:eq:tedious:1}. The claim follows since $\mu = 1+\gamma/L_d \leq 2$.

\bigskip
\noindent {\em Case 2. $x_j + \Delta_j < (1/2) \cdot (1/\mu) \cdot \left(x_{j-1}+\Delta_{j-1}\right)$.} 

\medskip
\noindent In this case, since $x_j \geq x_{j-1}/2-1$ (see equation~\ref{label:general:eq:lm:bound:laminar:1}), we get:
\begin{eqnarray}
\Delta_j \leq   \left(\frac{x_{j-1}+\Delta_{j-1}}{2\mu}\right) - \left(\frac{x_{j-1}}{2}\right) + 1 \leq \left(\frac{\Delta_{j-1}}{2\mu}\right) + (1/\mu -1) \cdot \left(\frac{x_{j-1}}{2}\right) +1 \nonumber
\end{eqnarray}
Rearranging the terms in the above inequality, we get:
\begin{eqnarray}
\left(\frac{\Delta_{j-1}}{2\mu}\right) - \Delta_j & \geq & (1-1/\mu) \cdot \left(\frac{x_{j-1}}{2}\right) - 1 \nonumber \\
& \geq & \left(\frac{\mu -1}{\mu}\right) \cdot \left(\frac{x_{j-1}}{2}\right) - (\mu -1) \cdot \left(\frac{x_{j-1}}{4}\right) \label{label:general:eq:funny:10} \\
& \geq & \left(\frac{\mu -1}{\mu}\right) \cdot \left(\frac{x_{j-1}}{4}\right) \label{label:general:eq:funny:11}
\end{eqnarray}
Equations~\ref{label:general:eq:funny:10} and~\ref{label:general:eq:funny:11} hold since $\mu = 1+\gamma/L_d$ and $x_{j-1} > 8L_d/\gamma = 8/(\mu -1)$ (see Corollary~\ref{label:general:cor:cl:epoch:simple}). From equation~\ref{label:general:eq:funny:11}, we infer that:
\begin{equation}
\label{label:general:eq:one}
|\Delta_j| + \left| \frac{\Delta_{j-1}}{2\mu} \right| \geq \left(\frac{\mu -1}{\mu}\right) \cdot \left(\frac{x_{j-1}}{4}\right) = (\gamma/L_d) \cdot \left(\frac{x_{j-1}}{4 \mu}\right)
\end{equation}
Since $\mu = 1+\gamma/L_d \leq 2$, from equation~\ref{label:general:eq:one} we have:
\begin{equation}
\label{label:general:eq:two}
|\Delta_j| + \left| \frac{\Delta_{j-1}}{2\mu} \right| \geq (\gamma/L_d) \cdot \left(\frac{x_{j-1}}{8}\right)
\end{equation}
Hence, either $|\Delta_j|$ is at least $1/2$ times the right hand side of equation~\ref{label:general:eq:tedious:1}, or else $|\Delta_{j-1}|$ is at least $\mu$ times the right hand side of equation~\ref{label:general:eq:tedious:1}. The claim follows since $\mu = 1+\gamma/L_d \geq 1$.   
\end{proof}

\begin{corollary}
\label{label:general:cor:cl:y}
Either $|\Delta_j| \geq \epsilon \gamma \cdot (8 e^{\gamma} L_d L^2)^{-1} \cdot   (d/2^{j})$ or $|\Delta_{j-1}| \geq \epsilon \gamma \cdot (8 e^{\gamma} L_d L^2)^{-1} \cdot (d/2^{j})$. 
\end{corollary}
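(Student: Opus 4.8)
\textbf{Proof proposal for Corollary~\ref{label:general:cor:cl:y}.} The plan is to simply chain together the two claims that immediately precede it, namely Claim~\ref{label:general:cl:y} and Claim~\ref{label:general:cl:epoch:simple}. All the real work has already been done: Claim~\ref{label:general:cl:y} (whose proof is the delicate case analysis on whether $x_j+\Delta_j$ overshoots or undershoots $\tfrac12 \mu (x_{j-1}+\Delta_{j-1})$, using equations~\ref{label:general:eq:lm:bound:laminar:1} and~\ref{label:general:eq:lm:bound:laminar:2} together with the lower bound $x_{j-1} > 8L_d/\gamma$ from Corollary~\ref{label:general:cor:cl:epoch:simple}) already tells us that at least one of $|\Delta_j|, |\Delta_{j-1}|$ is $\geq \gamma \cdot (16L_d)^{-1} \cdot x_{j-1}$. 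So the only remaining step is to replace $x_{j-1}$ by a quantity expressed in terms of $d/2^j$.

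First I would invoke Claim~\ref{label:general:cl:epoch:simple}, which gives $x_{j-1} \geq (2\epsilon e^{-\gamma} L^{-2}) \cdot (d/2^j)$. Substituting this lower bound into whichever of the two alternatives of Claim~\ref{label:general:cl:y} holds, I get
$$
\gamma \cdot (16 L_d)^{-1} \cdot x_{j-1} \;\geq\; \gamma \cdot (16 L_d)^{-1} \cdot (2\epsilon e^{-\gamma} L^{-2}) \cdot (d/2^j) \;=\; \epsilon \gamma \cdot (8 e^{\gamma} L_d L^2)^{-1} \cdot (d/2^j),
$$
where the last equality is just arithmetic (the factor $2/16 = 1/8$ and the $e^{-\gamma}$ moving into the denominator). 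Hence either $|\Delta_j| \geq \epsilon\gamma (8 e^{\gamma} L_d L^2)^{-1} (d/2^j)$ or $|\Delta_{j-1}| \geq \epsilon\gamma (8 e^{\gamma} L_d L^2)^{-1} (d/2^j)$, as required.

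Since this is a two-line deduction from results already in hand, there is no genuine obstacle here; the one thing to be careful about is bookkeeping of the constant, in particular correctly absorbing the $e^{-\gamma}$ from Claim~\ref{label:general:cl:epoch:simple} as an $e^{\gamma}$ in the denominator and checking that $\tfrac{2}{16}=\tfrac18$. (If instead one were asked to also establish Lemma~\ref{label:general:lm:bound:laminar:2} from this corollary, the remaining work would be to feed these bounds on $|\Delta_j|$ or $|\Delta_{j-1}|$ into Claim~\ref{label:general:cl:final}/Corollary~\ref{label:general:cor:cl:final} to convert a large change in a layer-degree into a large number of incident edge-updates — but that is outside the statement at hand.)
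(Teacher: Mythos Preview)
Your proposal is correct and matches the paper's proof exactly: the paper simply writes ``Follows from Claims~\ref{label:general:cl:epoch:simple} and~\ref{label:general:cl:y},'' which is precisely the chaining you spell out, and your arithmetic check that $\gamma\cdot(16L_d)^{-1}\cdot(2\epsilon e^{-\gamma}L^{-2}) = \epsilon\gamma\cdot(8 e^{\gamma} L_d L^2)^{-1}$ is right.
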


\begin{proof}
Follows from Claims~\ref{label:general:cl:epoch:simple} and~\ref{label:general:cl:y}.
\end{proof}

\begin{claim}
\label{label:general:cl:final}
At least $\max(|\Delta_j|, |\Delta_{j-1}|)$ edge-updates in the epoch $\left[\tau^1, \tau^0\right]$ are incident upon the node $v$.
\end{claim}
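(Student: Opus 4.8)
The plan is to prove the two bounds ``at least $|\Delta_j|$ edge-updates incident on $v$ occur in the epoch'' and ``at least $|\Delta_{j-1}|$ edge-updates incident on $v$ occur in the epoch'' separately; since both quantities lower-bound the same count, Claim~\ref{label:general:cl:final} follows immediately. Throughout I will work inside the fixed epoch $\left[\tau^0, \tau^1\right]$ of the current phase, and I will use the notations $x_k$ and $x_k+\Delta_k$ introduced just before the claim, together with the fact (part of the hypotheses of Lemma~\ref{label:general:lm:bound:laminar:2}) that $v\in C_{lj}$ at the start of the epoch and the value $x_k+\Delta_k$ is read off at the unique moment $v$ is moved into $D_{lj}\setminus D_{l,j-1}$, which lies inside the epoch.

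First I would record a preliminary observation: $v$ is active throughout the epoch. Indeed $v\in C_{lj}\subseteq A$ at the start (Invariant~\ref{label:general:inv:laminar:contain}), and the set $A$ does not change in the middle of a phase (Lemma~\ref{label:general:lm:init}); hence by Observation~\ref{label:general:ob:edge} we have $\text{deg}(v,H_0)=\text{deg}(v,E)$ at every time-instant of the epoch. Second, by Definition~\ref{label:general:def:epoch} no call to REBUILD($j'$) with $j'\in[1,j]$ is made strictly before the terminating call to REBUILD($j$) at edge-update $\tau^1$. Therefore, for every layer $k\in[1,j]$, Lemma~\ref{label:general:lm:monotone:laminar} applies for the whole duration of the epoch: no edge is ever inserted into $H_k$, and an edge leaves $H_k$ only when it is deleted from $G=(V,E)$.

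The main argument is a charging argument. Fix $k\in[1,j]$. By the previous paragraph $\text{deg}(v,H_k)$ is non-increasing over the epoch, so $\Delta_k\le 0$, and the total drop from $x_k$ to $x_k+\Delta_k$ equals $|\Delta_k|$. Each unit of this drop is caused by the removal of an edge incident on $v$ from $H_k$, and each such removal is in turn caused by the deletion of that same edge from $G$; an edge removed from $H_k$ is never re-inserted during the epoch (no insertions into $H_k$), so distinct drops come from distinct edge-deletions, i.e.\ from distinct edge-updates incident on $v$. Hence at least $|\Delta_k|$ edge-updates incident on $v$ take place inside the epoch. Taking $k=j$ gives the first bound, and taking $k=j-1$ gives the second bound whenever $j-1\ge 1$. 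For the boundary layer $j-1=0$ the argument is even simpler: $H_0=H$ and, $v$ being active throughout, $\text{deg}(v,H_0)=\text{deg}(v,E)$ at all times, so every change (up or down) of $\text{deg}(v,H_0)$ is produced by an edge-update incident on $v$, and the number of such edge-updates in the epoch is at least $\bigl|x_0-(x_0+\Delta_0)\bigr|=|\Delta_0|=|\Delta_{j-1}|$. Combining the two bounds yields that the epoch contains at least $\max\bigl(|\Delta_j|,|\Delta_{j-1}|\bigr)$ edge-updates incident on $v$, as required.

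I do not expect a serious obstacle here; the claim is essentially a bookkeeping consequence of the monotonicity of the laminar structure during an epoch. The only points that need care are (i) the layer $0$ case, which behaves differently from the shrinking layers $1,\dots,j$ and must be handled via $\text{deg}(v,H_0)=\text{deg}(v,E)$ rather than via Lemma~\ref{label:general:lm:monotone:laminar}, and (ii) making sure the charging of $\text{deg}$-decreases to edge-updates is injective and that the relevant sub-interval $[\tau^0,\text{transition time}]$ indeed lies inside the epoch, both of which follow directly from the monotonicity statement and Definition~\ref{label:general:def:epoch}.
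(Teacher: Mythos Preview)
Your proof is correct and follows essentially the same approach as the paper: split into the two bounds $|\Delta_j|$ and $|\Delta_{j-1}|$, use the monotonicity of $H_k$ for $k\in[1,j]$ from Lemma~\ref{label:general:lm:monotone:laminar} to charge each unit drop in $\text{deg}(v,H_k)$ to a distinct edge-deletion incident on $v$, and handle the boundary case $j-1=0$ separately via $\text{deg}(v,H_0)=\text{deg}(v,E)$ for the active node $v$. If anything, your version is slightly more careful in spelling out why $v$ stays active throughout the epoch and why the charging is injective.
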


\begin{proof}
The proof consists of two steps.
\begin{itemize}
\item {\em Step 1. We show that at least $|\Delta_{j}|$ edge-updates in the epoch are incident upon the node $v$.}

To prove this step, note that $j \in [1, L_d]$. By definition, no call is made to REBUILD($j'$) with $j' \in [1, j]$ during the epoch. Hence, no edge is inserted into $H_j$ during the epoch. Further, during the epoch, an edge $e \in H_j$ can get deleted from $H_j$ only if the edge gets deleted from the graph $G = (V, E)$ itself (see Lemma~\ref{label:general:lm:monotone:laminar}). Thus, during any interval within the epoch   $\text{deg}(v, H_j)$ can only decrease, and furthermore, the absolute value of this change in $\text{deg}(v, H_j)$ is at most  the number edge-updates incident upon $v$. 
\item {\em Step 2. We show that at least $|\Delta_{j-1}|$ edge-updates in the epoch are incident upon the node $v$.}

Here, we consider two possible cases.
\begin{itemize}
\item {\em Case 1. $j-1 \in [1, L_d]$.}

In this case, the argument is exactly similar to Case 1.
\item {\em Case 2. $j-1 = 0$.}

In this case, since the node $v$ is active, we have $\text{deg}(v, H_0) = \text{deg}(v, H) = \text{deg}(v, E)$ throughout the duration of the epoch. Hence, during any interval within the epoch, the absolute value of the change in $\text{deg}(v, H_0)$ is at most the number edge-updates incident upon $v$. 
\end{itemize}
\end{itemize}
\end{proof}

\begin{corollary}
\label{label:general:cor:cl:final}
At least $\Omega(\epsilon \gamma \cdot (L_d L^2)^{-1} \cdot (d/2^j))$ edge-updates in the epoch $\left[\tau^0, \tau^1\right]$ are incident upon $v$.
\end{corollary}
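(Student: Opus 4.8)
The plan is to obtain Corollary~\ref{label:general:cor:cl:final} by chaining together the two preceding results, Corollary~\ref{label:general:cor:cl:y} and Claim~\ref{label:general:cl:final}, with essentially no new work required. Recall that $\Delta_j$ and $\Delta_{j-1}$ denote the signed changes in $\text{deg}(v, H_j)$ and $\text{deg}(v, H_{j-1})$ between the start of the epoch $[\tau^0, \tau^1]$ and the unique moment inside the epoch at which the node $v$ is moved from $C_{lj}$ to $D_{lj} \setminus D_{l,j-1}$. The first step is to apply Corollary~\ref{label:general:cor:cl:y}, which tells us that at least one of $|\Delta_j|$, $|\Delta_{j-1}|$ is at least $\epsilon \gamma \cdot (8 e^{\gamma} L_d L^2)^{-1} \cdot (d/2^j)$; hence $\max(|\Delta_j|, |\Delta_{j-1}|) \geq \epsilon \gamma \cdot (8 e^{\gamma} L_d L^2)^{-1} \cdot (d/2^j)$.

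The second step is to feed this lower bound into Claim~\ref{label:general:cl:final}, which asserts that at least $\max(|\Delta_j|, |\Delta_{j-1}|)$ of the edge-updates occurring during the epoch are incident upon $v$. Putting the two together shows that the number of such edge-updates is at least $\epsilon \gamma \cdot (8 e^{\gamma} L_d L^2)^{-1} \cdot (d/2^j)$. Finally, since $\gamma$ is a fixed small positive constant throughout the paper (Section~\ref{label:general:sec:parameter}), the factor $e^{\gamma}$ is $\Theta(1)$ and gets absorbed into the $\Omega(\cdot)$ notation, leaving the claimed bound $\Omega\!\left(\epsilon \gamma \cdot (L_d L^2)^{-1} \cdot (d/2^j)\right)$ edge-updates incident upon $v$ during the epoch.

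I expect no real obstacle here, since the statement is a one-line corollary of facts already in hand. The only point meriting slight care is the bookkeeping of constants: one should confirm that $8 e^{\gamma} = O(1)$ under the standing convention that $\gamma$ is a constant, so that it vanishes into the asymptotic notation without perturbing the dependence on $\epsilon, \gamma, L_d, L, d, j$ that is actually consumed downstream (in Corollary~\ref{label:general:cor:lm:bound:laminar:2} and Lemma~\ref{label:general:lm:bound:laminar:main}).
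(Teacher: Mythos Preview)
Your proposal is correct and mirrors the paper's own proof: combine Corollary~\ref{label:general:cor:cl:y} with Claim~\ref{label:general:cl:final} and absorb the constant $8e^{\gamma}$. The paper justifies $e^{\gamma}=O(1)$ via $\gamma\in(0,1)$ rather than by calling $\gamma$ a constant, but the argument is otherwise identical.
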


\begin{proof}
Note that $e^{\gamma} \in (1, e)$ since $\gamma \in (0,1)$. The corollary now follows from Corollary~\ref{label:general:cor:cl:y} and Claim~\ref{label:general:cl:final}.
\end{proof}

\newpage

\part{DYNAMIC ALGORITHM FOR BIPARTITE GRAPHS: FULL DETAILS}
\newpage

\section{Notations and Preliminaries}
\label{label:bipartite:sec:notation}

\newcommand{\W}{\mathcal{W}}

This part of the writeup  presents our dynamic algorithm for bipartite graphs for maintaining  a better than $2$ approximation to the size of the maximum matching. The main result is summarized in Theorem~\ref{label:bipartite:th:new:approx}.

\paragraph{Notations.} We now define some notations that will be used throughout this half of the paper.  Let $G = (V, E)$ denote the input graph with $n = |V|$ nodes and $m = |E|$ edges. Given any subset of edges $E' \subseteq E$ and any node $v \in V$, we let $N_v(E') = \{ u \in V : (u, v) \in E' \}$ denote the set of neighbors of $v$ that are connected to $v$ via an edge in $E'$. Furthermore, we let  $\text{deg}_v(E') = |N_v(E')|$ denote the number of edges in $E'$ that are incident upon $v$.  Finally, for any subset of edges $E' \subseteq E$, we let $V(E') = \{ v \in V : \text{deg}_v(E') > 0 \}$ denote the set of endpoints of the edges in $E'$. 

\paragraph{Fractional assignments.} A ``fractional assignment''  is a function $w : E \rightarrow \mathbf{R}^+$. It assigns a nonnegative   ``weight''  $w(e)$ to every edge $e \in E$ in the input graph $G = (V, E)$. The set of edges  $\text{Support}(w) = \{ e \in E : w(e) > 0 \}$ with positive weights is called the ``support'' of  the fractional assignment $w$. The ``weight'' of a node $v \in V$ under $w$ is given by $W_v(w) = \sum_{(u,v) \in E} w(u,v)$. In other words, $W_v(w)$ denotes the total weight received by $v$ from its incident edges, under the fractional assignment $w$. We now define the ``addition'' of two fractional assignments. Given any two fractional assignments $w, w'$, we say that $w + w'$ is a new fractional assignment such that $(w+w') (e) = w(e) + w'(e)$ for every edge $e \in E$. It is easy to check that this addition operation is commutative and associative, i.e., we have $w + w' = w' + w$ and $w + (w' + w'') = (w+w') + w''$ for any three fractional assignments $w, w', w''$ defined on the same graph. Given any subset of edges $E' \subseteq E$ and any fractional assignment $w$, we let $w(E') = \sum_{e \in E'} w(e)$ denote the sum of the weights of the edges in $E'$ under $w$.  The ``size'' of a fractional assignment $w$ is defined as $w(E) = \sum_{e \in E} w(e)$.

\paragraph{Fractional $b$-matchings.} Suppose that we assign a ``capacity'' $b_v \geq 0$ to each node $v \in V$ in the input graph $G = (V, E)$. A fractional assignment $w$ is called a ``fractional-$b$-matching'' with respect to these capacities iff $W_v(w) \leq b_v$ for every node $v \in V$. The size of this fractional $b$-matching is given by $w(E) = \sum_{e \in E} w(e)$, and its support is defined as $\text{Support}(w) = \{ e \in E : w(e) > 0 \}$. We say that a fractional $b$-matching $w$ is ``maximal'' iff for every edge $(u, v) \in E$, either $W_v(w) = b_v$ or $W_u(w) = b_u$.  

\paragraph{``Extending'' a fractional $b$-matching.} During the course of our algorithm, we will often consider some subgraph $G' = (V', E')$ of the input graph $G = (V, E)$, with $V' \subseteq V$ and $E' \subseteq E$, and define a  fractional $b$-matching $w' : E' \rightarrow \mathbf{R}^+$ on $G'$ with respect to the node-capacities $\{ b'(v) \}, v \in V'$.  In such cases, to ease notation, we will often pretend that $w'$ is a fractional $b$-matching on $G$ itself. We will do this by setting $w'(e) = 0$ for all edges $e \in E \setminus E'$ and $b'(v) = 0$ for all nodes $v \in V \setminus V'$. With this notational convention, we will be able to ``add'' two fractional assignments $w', w''$  defined on two different subgraphs $G', G''$ of $G$.

\paragraph{Fractional matchings.} A fractional assignment $w$ in the input graph $G = (V, E)$ is called a ``fractional matching'' iff we have $W_v(w) \in [0, 1]$ for all nodes $v \in V$. In other words, this is a fractional $b$-matching where every node has capacity one. The ``size'' of this fractional matching is given by $w(E) = \sum_{e \in E} w(e)$, and its support is defined as $\text{Support}(w) = \{ e \in E : w(e) > 0 \}$. We say that a fractional matching $w$ is ``maximal'' iff for every edge $(u, v) \in E$, either $W_v(w) = 1$ or $W_u(w) = 1$.

\paragraph{(Integral) matchings.}  In the input graph $G = (V, E)$, an (integral) matching $M \subseteq E$ is a subset of edges that have no common endpoints. This can be thought of as a fractional matching $w_M : E \rightarrow [0,1]$ with $\text{Support}(w_M) = M$ such that $w_M(e) = 1$ for all edges $e \in M$. The size of this matching is given by $|M| = w_M(E)$. We say that the matching $M$ is ``maximal'' iff for every edge $(u, v) \in E \setminus M$, at least one of its endpoints $\{u, v \}$ is matched under $M$. This is equivalent to the condition that the corresponding fractional matching $w_M$ is maximal.  We will need the following theorem, which shows that the maximum size of a fractional matching is no more than the maximum size of an integral matching in a bipartite graph. 
\begin{theorem}
\label{label:bipartite:th:main:structure}
In an input graph $G = (V, E)$, consider any  fractional matching $w$ with support $E' \subseteq E$. If the graph $G$ is bipartite, then there is a matching $M \subseteq E'$  of size at least $w(E)$. 
\end{theorem}

\paragraph{Our result.}  In Section~\ref{label:bipartite:sec:invariants}, we first fix three parameters $\epsilon, \delta, K$ as per equations~\ref{label:bipartite:eq:parameter:1}~---~\ref{label:bipartite:eq:parameter:7} (we assume that the number of nodes $n$ in the input graph is sufficiently large).  Then we describe some invariants that define three fractional assignments $w, w^r, w_1^*$ in $G = (V, E)$. In Theorem~\ref{label:bipartite:th:feasible:main}, we show that $(w+w^r+w_1^*)$ forms a fractional matching in $G = (V, E)$. In Theorem~\ref{label:bipartite:th:approx:main}, we show that the size of $(w+w^r+w_1^*)$ is a $(1/f)$-approximation to the maximum possible size of a fractional matching in $G = (V, E)$, where:
$$f = (1/2) \cdot \left( \frac{(1+\delta/3)}{(1+\epsilon)} - 4 K \epsilon\right).$$
In Section~\ref{label:bipartite:sec:algo}, we show how to maintain a $(1+\epsilon)^2$-approximation to the size of $(w+w^r+w^*_1)$ with $\kappa(n) = O((10/\epsilon)^{K+8} \cdot n^{2/K})$ amortized update time (see Theorem~\ref{label:bipartite:th:runtime:main:main}).  This implies that we can maintain a $(1+\epsilon)^2/f$-approximation to the size of the maximum fractional matching in $G$ with $\kappa(n)$ update time. By Theorem~\ref{label:bipartite:th:main:structure}, the size of the maximum fractional matching in $G$ equals the size of the maximum cardinality (integral) matching in $G$. Thus, we reach the following conclusion.
\begin{itemize}
\item In $O((10/\epsilon)^{K+8} \cdot n^{2/K})$ amortized update time, we can maintain a $(1+\epsilon)^2/f$-approximation to the size of the maximum matching in $G$. 
\end{itemize}
We now set the values of $\epsilon, \delta$, and define two new parameters $\alpha_K, \beta_K$ as follows.
\begin{eqnarray}
\label{label:bipartite:eq:upload:1}
\alpha_K  = \frac{2 \cdot (1+\epsilon)^3}{(1+\delta/3) - 4 K\epsilon(1+\epsilon)} \text{ and }  \beta_K = (10/\epsilon)^{K+8}, \text{ where } \epsilon = \frac{1}{36 K \cdot 10^{K+4}} \text{ and }  \delta = 10^4 \cdot K \epsilon.
\end{eqnarray}
It is easy to check that this setting of values satisfies equations~\ref{label:bipartite:eq:parameter:1} --~\ref{label:bipartite:eq:parameter:7}. Further, note that $\alpha_K = (1+\epsilon)^2/f$. Thus, we get the following theorem:
\begin{theorem}
\label{label:bipartite:th:new:approx}
Fix any  positive integer  $K$ and define $\alpha_K, \beta_K$ as  per equation~\ref{label:bipartite:eq:upload:1}. In a dynamic setting, we can  maintain a $\alpha_K$-approximation to the value of the maximum matching in a bipartite graph $G = (V, E)$ with $O(\beta_K \cdot n^{2/K})$ amortized update time. Note that $1 \leq \alpha_K < 2$ for every sufficiently large integer $K$.
\end{theorem}

\subsection{An important technical theorem}
\label{label:bipartite:sec:thm:critical}

We devote this section to the proof of the following theorem. This result will be crucially used later on in the design and analysis of our dynamic algorithm. 

\begin{theorem}
\label{label:bipartite:thm:critical}
Consider a bipartite graph $G = (V, E)$, where the node-set $V$ is partitioned into two subsets $A \subseteq V$ and $B = V \setminus A$ such that every edge $e \in E$ has one endpoint in $A$ and another endpoint in $B$.  Fix any number $\lambda \in [0, 1/2]$, and suppose that each node $v \in B$ has a capacity $b(v) = 2 \lambda$. Furthermore, suppose that each node $u \in A$ has a capacity $b(u) \in [0, \lambda]$.  Let $w$ be a maximal fractional $b$-matching in the graph $G$ with respect to these capacities. Thus, for every edge $(u, v) \in E$, we have either $W_u(w) = b(u)$ or $W_v(w) = b(v)$. Further, for every node $v \in V$, we have $0 \leq W_v(w) \leq b(v)$.

  Let $M \subseteq E$ be a matching in $G$, i.e., no two edges in $M$ share a common endpoint.   Finally, let  $A(M) = \{ u \in A : \text{deg}_M(u) = 1 \}$ and $B(M) = \{ v \in B : \text{deg}_M(v) = 1\}$ respectively denote the set of nodes from $A$ and $B$ that are matched under $M$.  Then we have:
$$\sum_{v \in V} W_v(w) \geq (4/3)  \cdot \sum_{u \in A(M)} b(u).$$
\end{theorem}

\medskip
\noindent {\bf Proof of Theorem~\ref{label:bipartite:thm:critical}.} \ 

\medskip
\noindent For each node $u \in A(M)$, define $\Delta_u(w) = b(u) - W_u(w)$ to be the ``slack'' at node $u$ with respect to $w$. Since for all nodes $u \in A(M)$, we have $0 \leq W_u(w) \leq b(u) \leq \lambda$, we infer that $\Delta_u(w) \in [0, \lambda]$ for all nodes $u \in A(M)$. We will show that:
\begin{equation}
\label{label:bipartite:eq:slack}
\sum_{u \in A(M)} \Delta_u(w) \leq (1/2) \cdot \sum_{u \in A} W_u(w)
\end{equation} 
We claim that equation~\ref{label:bipartite:eq:slack} implies Theorem~\ref{label:bipartite:thm:critical}. To see why this is true,  consider two possible cases.
\begin{itemize}
\item {\em Case 1.} $\sum_{u \in A(M)} \Delta_u(w) \leq (1/3) \cdot \sum_{u \in A(M)} b(u)$. 

In this case, we have:
\begin{eqnarray}
\sum_{u \in A(M)} W_u(w) & = & \sum_{u \in A(M)} \left( b(u) - \Delta_u(w) \right) \nonumber \\
& = & \sum_{u \in A(M)} b(u) - \sum_{u \in A(M)} \Delta_u(w) \nonumber \\
& \geq & \sum_{u \in A(M)} b(u) - (1/3) \cdot \sum_{u \in A(M)} b(u) \nonumber \\
& = & (2/3) \cdot \sum_{u \in A(M)} b(u) \label{label:bipartite:eq:lm:slack:1}
\end{eqnarray}
Now, since an edge $e \in E$ contributes the same amount $w(e)$ to each of the sums $\sum_{u \in A} W_u(w)$ and $\sum_{v \in B} W_v(w)$, we have $\sum_{u \in A} W_u(w) = \sum_{v \in B} W_v(w)$. Thus, we get:
\begin{eqnarray}
\sum_{v \in V} W_v(w) & = & 2 \cdot \sum_{u \in A} W_u(w) \nonumber \\
& \geq & 2 \cdot \sum_{u \in A(M)} W_u(w) \label{label:bipartite:eq:lm:slack:2} \\
& \geq & (4/3) \cdot \sum_{u \in A(M)} b(u) \label{label:bipartite:eq:lm:slack:3}
\end{eqnarray}
Equation~\ref{label:bipartite:eq:lm:slack:2} holds since $A(M) \subseteq A$. Equation~\ref{label:bipartite:eq:lm:slack:3} follows from equation~\ref{label:bipartite:eq:lm:slack:1}. The theorem follows from equation~\ref{label:bipartite:eq:lm:slack:3}. 

\item {\em Case 2.} $\sum_{u \in A(M)} \Delta_u(w) > (1/3) \cdot \sum_{u \in A(M)} b(u)$. 

In this case, we have:
\begin{eqnarray}
\sum_{u \in A} W_u(w) & \geq & 2 \cdot \sum_{u \in A(M)} \Delta_u(w) \label{label:bipartite:eq:lm:slack:4} \\
& > & (2/3) \cdot \sum_{u \in A(M)} b(u) \label{label:bipartite:eq:lm:slack:5}
\end{eqnarray}
Equation~\ref{label:bipartite:eq:lm:slack:4} follows from equation~\ref{label:bipartite:eq:slack}. Next, just as in Case 1, we argue that an edge $e \in E$ contributes the same amount $w(e)$ to each of the sums $\sum_{u \in A} W_u(w)$ and $\sum_{v \in B} W_v(w)$. So we have: $\sum_{u \in A} W_u(w) = \sum_{v \in B} W_v(w)$. Thus, we get:
\begin{eqnarray}
\sum_{v \in V} W_v(w) & = & 2 \cdot \sum_{u \in A} W_u(w) \nonumber \\
& > & (4/3) \cdot \sum_{u \in A(M)} b(u) \label{label:bipartite:eq:lm:slack:6}
\end{eqnarray}
Equation~\ref{label:bipartite:eq:lm:slack:6} follows from equation~\ref{label:bipartite:eq:lm:slack:5}. The theorem follows from equation~\ref{label:bipartite:eq:lm:slack:6}.
\end{itemize}

\noindent Thus, in order to prove the theorem, it suffices to prove equation~\ref{label:bipartite:eq:slack}.  This is shown below.
\begin{itemize}
\item {\em Proof of equation~\ref{label:bipartite:eq:slack}.}

Let $A^*(M) = \{ u \in A(M) : \Delta_u(w) > 0 \}$ denote the subset of nodes in $A(M)$ that have nonzero slack under $w$. For each node $u \in A(M)$, let $u(M) \in B(M)$ denote the node $u$ is matched to under $M$. Since $w$ is a maximal fractional $b$-matching with respect to the capacities $\{ b(v) \}, v \in V$, we infer that $W_{u(M)}(w) = b(u(M)) = 2\lambda$ for all nodes $u \in A^*(M)$. Further, we note that $\Delta_u(w) = b(u) - W_u(w)\leq b(u) \leq \lambda$ for all nodes $u \in A^*(M) \subseteq A$. Thus, we get:
\begin{equation}
\label{label:bipartite:eq:lm:slack:7}
\Delta_u(w) \leq (1/2) \cdot W_{u(M)}(w) \text{ for all nodes } u \in A^*(M). 
\end{equation}
From equation~\ref{label:bipartite:eq:lm:slack:7}, we infer that:
\begin{eqnarray*}
\sum_{u \in A(M)} \Delta_u(w) & = & \sum_{u \in A(M) \setminus A^*(M)} \Delta_u(w) + \sum_{u \in A^*(M)} \Delta_u(w) \\
& = & \sum_{u \in A^*(M)} \Delta_u(w) \\
& \leq & (1/2) \cdot \sum_{u \in A^*(M)} W_{u(M)}(w) \\
& \leq & (1/2) \cdot \sum_{v \in B} W_{v}(w) \\
& = & (1/2) \cdot \sum_{u \in A} W_u(w)
\end{eqnarray*}
The last equality holds since each edge $e \in E$ contributes the same amount $w(e)$ towards the sums $\sum_{v \in B} W_v(w)$ and $\sum_{u \in A} W_u(w)$, and so these two sums are equal. This concludes the proof of equation~\ref{label:bipartite:eq:slack}. 
\end{itemize}

\section{Invariants maintained by our algorithm}
\label{label:bipartite:sec:invariants}

Throughout the rest of this half of the paper, we fix a sufficiently large integral constant $K \geq 10$, and sufficiently small constants $\epsilon, \delta \in (0,1)$. We will assume that $K, \epsilon, \delta$ satisfy the following guarantees.
\begin{eqnarray}
\label{label:bipartite:eq:parameter:1}
\epsilon = 1/N \text{ for some positive integer } N \gg K,  \text{ and } \delta < \frac{1}{36 \cdot 10^{K-2}}. \\
\epsilon \ll \delta, \text{ and } \delta \text{ is an integral multiple of } \epsilon. \label{label:bipartite:eq:parameter:2} \\
\delta < \frac{1}{13 \cdot 10^K} \label{label:bipartite:eq:parameter:3} \\
\epsilon \gg 1/n^{1/K}, \text{ where } n = |V| \text{ is the number of nodes in the input graph.} \label{label:bipartite:eq:parameter:4} \\
K < n^{1/K} \label{label:bipartite:eq:parameter:5} \\
\sum_{j = 0}^{i} n^{j/K} \leq  2 \cdot n^{i/K} \text{ for all } i \in \{1, \ldots, K\}.  \label{label:bipartite:eq:parameter:6} \\
\log n \leq n^{1/K} \label{label:bipartite:eq:parameter:7}.
\end{eqnarray}
We  maintain a family of $K$ subgraphs $G_1, \ldots, G_K$ of the input graph $G = (V, E)$. For $1 \leq i \leq K$, let $Z_i$ and $E_i$ respectively denote the node-set and the edge-set of the $i^{th}$ subgraph, i.e.,  $G_i = (Z_i, E_i)$. We ensure that:
\begin{eqnarray*}
& (a) & V = Z_K \supseteq Z_{K-1} \supseteq \cdots \supseteq Z_1.  \\
& (b)  & \text{For  } 1\leq  i \leq  K, \text{ the set } E_i = \{ (u, v) \in E : u, v\in Z_i\} \text{ consists of  the edges with both endpoints in } Z_i. 
\end{eqnarray*}


\noindent We  define the ``level'' of a node $v \in V$ to be the minimum index $i \in \{1, \ldots, K\}$ such that $v \in Z_i$. We denote the level of a node $v$ by $\ell(v)$. Thus, we have:
\begin{equation}
\label{label:bipartite:eq:label}
\ell(v) = \min_{i \in [1, K]} \{ v \in Z_i \} \ \ \text{ for all nodes } v \in V. 
\end{equation}
The corollary below follows from our definition of the level of a node. 

\begin{corollary}
\label{label:bipartite:cor:level}
Consider any node $v \in V$ with level $\ell(v) = i \in \{1, \ldots, K\}$. We have:  
$$v \in Z_j \text{ for all } j \in \{i, \ldots, K\}, \text{ and }  v \notin Z_j \text{ for all } j \in \{1, \ldots, i-1\}.$$ 
\end{corollary}

\subsection{An overview of the structures maintained by our algorithm}
\label{label:bipartite:sec:overview}

For each index $i \in \{2, \ldots, K\}$, our algorithm will maintain the following structures. \\
\begin{enumerate}
\item The subgraph $G_i = (Z_i, E_i)$.  \\
\item For each node $v \in Z_i$, two capacities $b_i(v), b_i^r(v) \geq 0$. The former is called the ``normal'' capacity of the node at level $i$, whereas the latter is called the ``residual'' capacity of the node at level $i$.  \\
\item A fractional $b$-matching $w_i$ in the subgraph $G_i = (Z_i, E_i)$ with respect to the normal capacities $\{b_i(v) \}$, $v \in Z_i$. This is called the ``normal'' fractional assignment at level $i$. The support of $w_i$ will be denoted by $H_i = \{ e \in E_i : w_i(e) > 0\}$. The fractional assignment $w_i$ will be ``uniform'', in the sense that every edge $e \in H_i$ gets the same weight $w_i(e) = 1/d_i$, where $d_i = n^{(i-1)/K}$. 

\medskip
 To ease notation, we will often extend $w_i$ to the input graph $G = (V, E)$ as described in Section~\ref{label:bipartite:sec:notation}. Thus,  $b_i(v) = 0$ for all $v \in V \setminus Z_i$ and $w_i(e) = 0$ for all $e \in E \setminus E_i$. Accordingly, if a node $v$ does not belong to $Z_i$, then  $W_v(w_i) = 0$.  \\
\item For each node $v \in Z_i$, a ``discretized node-weight'' $\mathcal{W}_v(w_i) \leq b_i(v)$ that is an integral multiple of $\epsilon$ and gives a close estimate of the normal node-weight $W_v(w_i) = \sum_{(u,v) \in E_i} w_i(u,v)$.  Intuitively, one can think of $\W_v(w_i)$ as follows: It ``rounds up'' the value of $W_v(w_i)$ to a multiple of $\epsilon$, while ensuring that  the rounded value (1) does not exceed $b_i(v)$, and (2) does not  differ too much from the actual value.  The second property will be used in the analysis of our algorithm.

\medskip
We will often extend this notation to the entire node-set $V$ by  assuming that $\mathcal{W}_v(w_i) = 0$ for all nodes $v \in V \setminus Z_i$.  We will also set $\mathcal{W}_v(w) = \sum_{j=2}^K \mathcal{W}_v(w_j)$ for all nodes $v \in V$ (see equation~\ref{label:bipartite:eq:new:w} below).  \\
\item A fractional $b$-matching $w^r_i$ in the subgraph $G_i = (Z_i, E_i)$ with respect to the residual capacities $\{b_i^r(v) \}, v \in Z_i$. This is called the ``residual'' fractional assignment at level $i$. \\

\medskip
To ease notation, we will often extend $w_i^r$ to the input graph $G = (V, E)$ as described in Section~\ref{label:bipartite:sec:notation}. Thus,   $b_i^r(v) = 0$ for all $v \in V \setminus Z_i$ and $w_i^r(e) = 0$ for all $e \in E \setminus E_i$. Accordingly, we also set $W_v(w^r_i) = 0$ for all nodes $v \in V \setminus Z_i$.  \\
\item A partition of the node-set $Z_i$ into three subsets: $T_i, B_i, S_i$. The nodes in $T_i, B_i$ and $S_i$ will be respectively called ``tight'', ``big'' and ``small'' at level $i$. \\
\end{enumerate}

\noindent We will often use the phrase ``structures for level $i$'' while describing our algorithm. This is  defined below. 

\begin{definition}
\label{label:bipartite:def:structure}
For every $i \in \{2, \ldots, K\}$, the phrase ``structures for level $i$'' refers to the subgraph $G_i = (Z_i, E_i)$, the node-capacities $\{b_i(v), b_i^r(v)\}$, $v \in Z_i$,   the fractional assignments $w_i, w_i^r$, the discretized node-weights $\{ \mathcal{W}_v(w_i) \}, v \in Z_i$, and the partition of the node-set $Z_i$ into subsets $T_i, B_i, S_i$.
\end{definition}

\noindent Next, our algorithm will maintain the following structures for level one. \\
\begin{enumerate}
\item The subgraph $G_1 = (Z_1, E_1)$. \\
\item For each node $v \in Z_1$, a capacity $b_1^*(v) \geq 0$. \\
\item A fractional $b$-matching $w_1^*$ in  $G_1 = (Z_1, E_1)$ with respect to the node-capacities $\{b_1^*(v)\}, v \in Z_1$. 
To ease notation, we will often extend $w_1^*$ to the input graph $G = (V, E)$ as described in Section~\ref{label:bipartite:sec:notation}. So we will  set $b_1^*(v) = 0$ for all $v \in V \setminus Z_1$ and $w_1^*(e) = 0$ for all $e \in E \setminus E_1$. 
\end{enumerate}

\begin{definition}
\label{label:bipartite:def:structure:one}
The phrase ``structures for level $1$''  will refer to the subgraph $G_1 = (Z_1, E_1)$, the node-capacities $\{b_1^*(v) \}$, $v \in Z_1$, and the fractional assignment $w_1^*$. 
\end{definition}
\noindent Our algorithm will ensure the following property. \\
\begin{itemize}
\item Fix any  $1 \leq i \leq K$. The structures for level $i$  depend only on the structures for  levels $i+1 \leq j  \leq K$. Equivalently, the structures for a  level $j < i$ do not influence the structures for level $i$. 
\end{itemize}

\medskip
\noindent We will define two fractional assignments -- $w$ and $w^r$  -- as follows.
\begin{equation}
\label{label:bipartite:eq:new:w}
w = \sum_{j=2}^K w_j \text{ and } w^r = \sum_{j=2}^K w^r_j
\end{equation}
We will show that the fractional assignment $(w+w_1^*+w^r)$ forms a fractional matching in the input graph $G = (V, E)$. Further, the size of this fractional matching is {\em strictly} within a factor  of $2$ of the size of the maximum cardinality matching in $G$. Our main results are thus summarized in the two theorems below. Note that equations~\ref{label:bipartite:eq:parameter:1} and~\ref{label:bipartite:eq:parameter:2} guarantee that $f > 1/2$ (see Theorem~\ref{label:bipartite:th:approx:main}).

\begin{theorem}
\label{label:bipartite:th:feasible:main}
The fractional assignment $(w+w_1^*+w^r)$ is a fractional matching in the  graph $G = (V, E)$.
\end{theorem}

\begin{theorem}
\label{label:bipartite:th:approx:main}
The size of the fractional assignment $(w+w_1^*+w^r)$ is at least $f$ times the size of the  maximum cardinality matching in the input graph $G = (V, E)$, where
$$f = (1/2) \cdot \left(\frac{(1+\delta/3)}{(1+\epsilon)} - 4 K  \epsilon\right).$$
\end{theorem}

\paragraph{Organization for the rest of this section.}  \ \\

\begin{itemize}
\item In Section~\ref{label:bipartite:sec:inv:2}, we describe the invariants that are used to define the structures for levels $j \in \{2, \ldots, K\}$.
\item  In Section~\ref{label:bipartite:sec:feasible:inv:2}, we show that the above invariants are ``consistent'' in the following sense: There is a way to construct the structures for all levels $j > 1$ that satisfy these invariants. 
\item In Section~\ref{label:bipartite:sec:inv:1}, we describe the invariants that are used to define the structures for level one.
\item In Section~\ref{label:bipartite:sec:feasible:inv:1}, we again prove the ``consistency'' of these invariants. Specifically, we show that there is a way to construct the structures for level one that satisfy the invariants.
\item   In Section~\ref{label:bipartite:sec:property}, we derive some properties from the invariants that will be useful later while analyzing the update time of our algorithm.  
\item In Section~\ref{label:bipartite:sec:feasible}, we prove Theorem~\ref{label:bipartite:th:feasible:main} by showing that the fractional assignment $(w+w_1^*+w^r)$ forms a fractional matching in the input graph $G = (V, E)$. 
\item Finally, in Section~\ref{label:bipartite:sec:approx}, we bound the size of the fractional assignment $(w+w_1^*+w^r)$ and prove the required approximation guarantee as stated in Theorem~\ref{label:bipartite:th:approx:main}. \\
\end{itemize}

\noindent Before moving on to Section~\ref{label:bipartite:sec:inv:2}, we introduce a notation that will be used multiple times. For every  $i \in \{1, \ldots,  K-1\}$, we  define the ``small-index'' $s_i(v)$ of a node $v \in Z_i$ to be the minimum level  $j \in \{i+1, \ldots, K\}$ where the node is small (i.e., $v \in S_j$).  If $i = K$ or if $v \notin S_j$ for all $j \in \{ i +1, \ldots, K\}$, then we define $s_i(v) = \infty$. 
\begin{eqnarray}
\label{label:bipartite:eq:level-small}
 s_i(v)  = \begin{cases} \infty  & \text{ if } i = K \text{ or } v \notin S_j \ \ \forall j \in \{i+1, \ldots, K\}; \\ 
\min \left\{ j \in \{ i+1, \ldots, K \} : v \in S_j \right \} & \text{ otherwise.}
\end{cases}
\end{eqnarray}
Intuitively, $s_i(v)$ captures the notion of the ``most recent'' level larger than $i$ where $v$ is small.

\subsection{Invariants for levels $i \in \{2, \ldots, K\}$}
\label{label:bipartite:sec:inv:2}

Fix any level $i \in \{2, \ldots, K\}$, and recall Definition~\ref{label:bipartite:def:structure}. Suppose we are given the structures for all the levels $j \in \{i+1, \ldots, K\}$. Given this input,  we present the invariants that  determine the structures for level $i$.
 
First, in Definition~\ref{label:bipartite:inv:node-partition:1}, we specify how to derive the subgraph $G_i = (Z_i, E_i)$. It states that if $i < K$, then the node-set $Z_i$ consists of all the ``non-tight'' nodes (see Section~\ref{label:bipartite:sec:overview}) at level $i+1$. Else if $i = K$, then the set $Z_i$ consists of all the nodes in the input graph $G = (V, E)$. Further, the set $E_i$ consists of the edges in $G$ with both endpoints in $Z_i$. In other words, $G_i$ is the subgraph of $G$ induced by the subset of nodes $Z_i \subseteq V$.

\begin{definition}
\label{label:bipartite:inv:node-partition:1}
For all $i \in \{2, \ldots, K\}$, the node-set $Z_i$ is defined as follows.
\begin{equation} \nonumber
Z_i = \begin{cases} V & \text{ for } i = K; \\
Z_{i+1} \setminus T_{i+1} & \text{ for } i \in \{1, \ldots, K-1\}. 
\end{cases}
\end{equation}
Further, we define $E_{i} = \{ (u, v) \in E : u, v \in Z_i \}$ to be the set of edges in $E$ with both endpoints in $Z_i$. 
\end{definition}

In Definition~\ref{label:bipartite:inv:normal:capacity}, we specify how to derive the normal  node-capacity of a node   $v \in Z_i$. This is  determined by two quantities: (1) the total {\em discretized weight} received by the node from the levels larger than $i$; and (2) the value of the index $s_i(v)$. From equation~\ref{label:bipartite:eq:level-small}, it follows that  the value of $s_i(v)$ is determined by the  partitions of the node-sets $\{ Z_j \}$ into subsets $\{ T_j, S_j, B_j \}$ for all  $i < j \leq K$. Thus, the node-capacities $\{ b_i(v) \}, v \in Z_i$, are uniquely determined by the structures for levels $j \in \{i+1, \ldots, K\}$.

\begin{definition}
\label{label:bipartite:inv:normal:capacity}
Fix any level $i \in \{2, \ldots, K\}$ and any node $v \in Z_i$. The capacity $b_i(v)$ is defined as follows. 
\begin{eqnarray}
\label{label:bipartite:eq:kernel-capacity}
b_i(v) = \begin{cases} 1 - \delta - \sum_{j = i+1}^K \mathcal{W}_v(w_{j}) & \text{ if } s_i(v) = \infty; \\ \\
1 -  5 \cdot 10^{(K-s_i(v) + 1)} \cdot \delta - \sum_{j = i+1}^K \mathcal{W}_v(w_{j}) & \text{ else if } s_i(v) \in \{i+1, \ldots, K\}. 
\end{cases}
\end{eqnarray}
\end{definition}

Note that by equation~\ref{label:bipartite:eq:parameter:2}, $\delta$ is an integral multiple of $\epsilon$. Furthermore, we have $1 = N \cdot \epsilon$ for some integer $N$ (see equation~\ref{label:bipartite:eq:parameter:1}). In Section~\ref{label:bipartite:sec:overview}, while describing the structures for a level, we stated that the discretized weight of a node is also an integral multiple of $\epsilon$. Thus, for every node $v \in Z_i$, we infer that $\W_v(w_j)$ is an integral multiple of $\epsilon$ for all levels $j > i$. Hence, Definition~\ref{label:bipartite:inv:normal:capacity} implies that $b_i(v)$ is an integral multiple of $\epsilon$ for all nodes $v \in Z_i$.  In Section~\ref{label:bipartite:sec:feasible:inv:2}, we show that $b_i(v) > 0$ for  every node $v \in Z_i$ (see Lemma~\ref{label:bipartite:lm:positive:capacity}). Accordingly, we get: $b_i(v) \geq \epsilon$ for every node $v \in Z_i$. 

In Definition~\ref{label:bipartite:def:inv:normal:assignment} and Invariant~\ref{label:bipartite:inv:normal:assignment}, we describe how to construct the fractional $b$-matching $w_i$ in $G_i = (Z_i, E_i)$ with respect to the capacities $\{ b_i(v) \}$. Every edge in the support of $w_i$ gets a weight $1/d_i$. Further, this weight $1/d_i$  is negligibly small compared to the smallest node-capacity $\epsilon$ (see equation~\ref{label:bipartite:eq:parameter:4}). Hence, it is indeed possible to construct a nontrivial fractional matching $w_i$ with nonempty support $H_i$ in this manner.

\begin{definition}
\label{label:bipartite:def:inv:normal:assignment}
We define $d_i = n^{(i-1)/K}$ for all $i \in \{1, \ldots, K\}$.
\end{definition}

\begin{invariant}
\label{label:bipartite:inv:normal:assignment}
Consider any level $2 \leq i \leq K$.  We maintain  a fractional $b$-matching in $G_i$ with respect to the node-capacities $\{b_i(v) \}, v \in Z_i$, and denote this fractional $b$-matching by $w_i$.  Thus, we have: $W_v(w_i) \leq b_i(v)$ for all nodes $v \in Z_i$.
 We define $H_i = \{ e \in E_i : w_i(e) > 0\}$ to be the support of  $w_i$.   We ensure that $w_i(e) = 1/d_i$ for all $e \in H_i$. In other words, the fractional assignment $w_i$ is ``uniform'', in the sense that it assigns the same weight to every edge in its support. 
\end{invariant}

After constructing the fractional assignment $w_i$, as per Invariant~\ref{label:bipartite:inv:discretized} we give each node $v \in Z_i$ a discretized weight $\mathcal{W}_v(w_i)$ that is an integral multiple of $\epsilon$. These discretized weights serve as estimates of actual node-weights $\{W_v(w_i)\}$.  In Section~\ref{label:bipartite:sec:feasible:inv:2}, we show that it is indeed possible to satisfy Invariant~\ref{label:bipartite:inv:discretized}. 

\begin{invariant}
\label{label:bipartite:inv:discretized}
Consider any level $i \in \{2, \ldots, K\}$. For every node $v \in Z_i$, we maintain a  value $0 \leq \mathcal{W}_v(w_i) \leq b_i(v)$ so that $\mathcal{W}_v(w_i) - 2 \epsilon < W_v(w_i) \leq \mathcal{W}_v(w_i)$. Further,  $\mathcal{W}_v(w_i)$ is an integral multiple of $\epsilon$. 
\end{invariant}

 Invariant~\ref{label:bipartite:inv:node-partition}  implies that  $w_i$ is  ``maximal''  with respect to the discretized node-weights $\{ \W_v(w_i) \}$. To be more specific, the node-set $Z_i$ is partitioned into three subsets $T_i$, $B_i$ and $S_i$. The nodes in $T_i$, $B_i$ and $S_i$ are respectively called ``tight'', ``big'' and ``small'' at level $i$ (see Definition~\ref{label:bipartite:def:inv:node-partition}).  A node $v$ is tight iff  $\mathcal{W}_v(w_i) = b_i(v)$. For a tight node $v$, if we take an edge $(u,v)$ from $E_i \setminus H_i$ and insert it into $H_i$, then the weights $W_v(w_i), \mathcal{W}_v(w_i)$ might exceed the capacity $b_i(v)$ -- thereby violating Invariant~\ref{label:bipartite:inv:discretized}. But this issue does not arise if the node $v$ is non-tight. Accordingly, the invariant requires every edge in $E_i \setminus H_i$   to have at least one non-tight endpoint. 
 
 In Section~\ref{label:bipartite:sec:feasible:inv:2}, we show that the subsets $S_i, T_i \subseteq Z_i$  are mutually disjoint (see Lemma~\ref{label:bipartite:lm:partition}). This implies that the node-set $Z_i$ is indeed partitioned into three subsets -- $T_i, S_i$ and $B_i$ -- as claimed in the invariant below. Intuitively, this claim holds since $\delta$ is sufficiently small. To be more specific, consider a node $v \in T_i$. By Definition~\ref{label:bipartite:def:inv:node-partition}, we have $\W_v(w_i) = b_i(v)$ for such a node $v$. Hence, from Definition~\ref{label:bipartite:inv:normal:capacity}, we infer that either $\W_v(w_i) = b_i(v) =  1 - \delta - \sum_{j>i} \W_v(w_j)$, or $\W_v(w_i) = b_i(v) = 1 - 5 \cdot 10^{K-s_i(v)+1} \cdot \delta - \sum_{j > i} \W_v(w_j)$.  Rearranging the terms, we get: either $\sum_{j \geq i} \W_v(w_j) = 1 - \delta$ or $\sum_{j \geq i} \W_v(w_j) = 1 - 5 \cdot 10^{K-s_i(v)+1} \cdot \delta$. In either case, if $\delta$ is sufficiently small, then we can infer that $\sum_{j \geq i} \W_v(w_j) > 8 \cdot 10^{K-i} \cdot \delta$, which implies that $v \notin S_i$ by Definition~\ref{label:bipartite:def:inv:node-partition}. In other words, a tight node can never be small, which guarantees that $S_i \cap T_i = \emptyset$.

\begin{definition}
\label{label:bipartite:def:inv:node-partition}
At each level $2 \leq i \leq K$, the node-sets $T_i, B_i, S_i \subseteq Z_i$ are defined as follows.
\begin{eqnarray}
T_i & = & \{ v \in Z_i : \mathcal{W}_v(w_i) =  b_i(v)  \} \\
S_i & = & \left\{ v \in Z_i :  \sum_{j= i}^K  \mathcal{W}_v(w_j)  \leq 8 \cdot 10^{K-i} \cdot \delta \right\} \\
B_i & = & Z_i \setminus (T_i \cup S_i)
\end{eqnarray}
\end{definition} 
 
\begin{invariant}
\label{label:bipartite:inv:node-partition}
At each level $i \in \{2, \ldots, K\}$, the node-set $Z_i$ is partitioned by the   subsets $T_i, B_i, S_i$. 
Further, for every edge $(u,v) \in E_i$ with $u , v \in Z_i \setminus T_i$, we have $(u, v) \in H_i$. In other words, every edge between two non-tight nodes belongs to the support of $w_i$.   
\end{invariant}

The next definition specifies how to derive the residual capacity of a node $v \in Z_i$ based on (a) the partition of the node-set $Z_i$ into subsets $T_i, S_i, B_i$ and  (b) the value of $s_i(v)$. From equation~\ref{label:bipartite:eq:level-small}, it follows that the value of $s_i(v)$, in turn, depends on the partitions of $\{Z_{j} \}$ into subsets $\{T_j, S_j, B_j\}$ for all $j > i$. 
\begin{definition}
\label{label:bipartite:inv:residual:capacity}
Fix any level $i \in \{2, \ldots, K\}$ and any node $v \in Z_i$. The capacity $b_i^r(v)$ is defined as follows. 
\begin{eqnarray}
\label{label:bipartite:eq:residual-capacity}
b_i^r(v) = \begin{cases}
0 & \text{ if } v \notin (S_i \cup T_i); \\
 8 \cdot 10^{(K-i)} \cdot \delta & \text{ else if } v \in S_i; \\
 \delta & \text{ else if } v \in T_i \text{ and } s_i(v) = \infty; \\
 4 \cdot 10^{(K-s_i(v)+1)} \cdot  \delta & \text{ else if } v \in T_i \text{ and } s_i(v) \in \{i+1, \ldots, K\}. 
\end{cases}
\end{eqnarray}
\end{definition}

Note that $b_i^r(v) \geq 0$ for all nodes $v \in Z_i$. Further, note that $b_i^r(v)$ is nonzero only if $v \in T_i \cup S_i$. We define a subgraph $G_i^r = (T_i \cup S_i, E_i^r)$ of $G_i$ where the edge-set $E_i^r = \{ (u, v) \in E_i : u \in T_i, v \in S_i \}$ consists of those edges in $E_i$ that have one tight and one small endpoints. The edges in $E_i^r$ are called ``residual edges in level $i$''. Similarly, the subgraph $G_i^r$ is called the ``residual subgraph'' in level $i$. The next invariant states how to construct the residual fractional matching $w_i^r$ on the residual subgraph $G_i^r$.

\begin{invariant}
\label{label:bipartite:inv:residual:matching}
For all $i \in [2, K]$, let $E_i^r = \{ (u, v) \in E_i : u \in T_i, v \in S_i\}$ be the set of edges in $E_i$ with one tight and one small endpoints. Let $G^r_i = (T_i \cup S_i, E_i^r)$ be the subgraph of $G_i = (Z_i, E_i)$ induced by these edges. The fractional assignment $w^r_i$ is a maximal fractional  $b$-matching in  $G_i^r$ with respect to the residual capacities $\{ b_i^r(v) \}, v \in T_i \cup S_i$. Thus, for every edge $(u, v) \in E^r$, either $W_v(w^r_i) = b_i^r(v)$ or $W_u(w^r_i) = b_i^r(u)$. We refer to $E_i^r$, $G_i^r$ and $w_i^r$ as residual edges, residual subgraph and residual fractional matching at level $i$ respectively. We implicitly assume that $w_i^r(e) = 0$ for all edges $e \in E_i \setminus E_i^r$. This ensures that $w_i^r$ is also a valid fractional $b$-matching in $G_i$ with respect to the residual capacities described in Definition~\ref{label:bipartite:inv:residual:capacity}.
\end{invariant}
We now explain the link between Invariant~\ref{label:bipartite:inv:residual:matching} and Theorem~\ref{label:bipartite:thm:critical}. Suppose that $\lambda = 4 \cdot 10^{(K-i)} \cdot \delta$. Recall Definition~\ref{label:bipartite:inv:residual:capacity}. Consider any node $v \in T_i$. If $s_i(v) = \infty$, then we have $b_i^r(v) = \delta \leq \lambda$. On the other hand, if $s_i(v) \in \{i+1, \ldots, K\}$, then we also have $b_i^r(v) = 4 \cdot 10^{(K-s_i(v)+1)} \cdot \delta \leq 4 \cdot 10^{(K-i)} \cdot \delta \leq \lambda$. Thus, we have $b_i^r(v) \in [0, \lambda]$ for all tight nodes $v \in T_i$. In contrast, we have $b_i^r(v) = 8 \cdot 10^{(K-i)} \cdot \delta = 2\lambda$ for all small nodes $v \in S_i$. We therefore reach the following conclusion: If we set $\lambda = 4 \cdot 10^{(K-i)} \cdot \delta$, $A = T_i$, $B = S_i$, $E = E_i^r$ and $G = G_i^r$, then we can apply Theorem~\ref{label:bipartite:thm:critical} on the residual fractional matching $w_i^r$.

\subsection{Feasibility of the structures for levels $\{2, \ldots, K\}$}
\label{label:bipartite:sec:feasible:inv:2}

Recall Definition~\ref{label:bipartite:def:structure}. We will show that there is a way to satisfy all the invariants described so far. In other words, we will show that we can build the structures for levels $\{2, \ldots, K\}$ in a way that does not violate any invariant. We will prove this by induction. For the rest of this section, fix any $i \in \{2, \ldots, K\}$, and suppose that we have already built the structures for levels $\{i+1, \ldots, K\}$ without violating any invariant. It remains show how to build the structures for level $i$. Towards this end, we first construct the subgraph $G_i = (Z_i, E_i)$. \\

\begin{itemize}
\item Following Definition~\ref{label:bipartite:inv:node-partition:1}, we set $Z_i = Z_{i+1} \setminus T_{i+1}$ if $i < K$. Otherwise, we set $Z_i = Z_K = V$.\\
\item We define $E_i = \{ (u, v) \in E : u, v \in Z_i\}$ to be the set of edges in $E$ with endpoints in $Z_i$. \\
\end{itemize}

\noindent Next, we define the node-capacities $\{ b_i(v) \}, v \in Z_i$ as per Definition~\ref{label:bipartite:inv:normal:capacity}. In Lemma~\ref{label:bipartite:lm:positive:capacity}  we show that every node $v \in Z_i$ has a positive  capacity $b_i(v) > 0$. The complete proof of the lemma involves some case-analysis and is deferred to Section~\ref{label:bipartite:sec:lm:positive:capacity}. The intuition behind the proof, however,   can be summarized  as follows. \\

\begin{itemize}
\item If $i = K$, then we have $b_i(v) = 1-\delta > 0$. Else if $i < K$, then  $v \in Z_i$, and hence $v \in Z_{i+1} \setminus T_{i+1}$. Thus, Definition~\ref{label:bipartite:def:inv:node-partition} and Invariant~\ref{label:bipartite:inv:node-partition} (applied to level $i+1$) implies that $\W_v(w_{i+1}) < b_{i+1}(v)$. This positive gap between $b_{i+1}(v)$ and $\W_v(w_{i+1})$ translates into a positive value for $b_i(v)$ at level $i$.  
\end{itemize}

\begin{lemma}
\label{label:bipartite:lm:positive:capacity}
For  every node $v \in Z_i$, with $i \in \{2, \ldots, K\}$, we have $b_i(v) > 0$. 
\end{lemma}

Next, we build a fractional $b$-matching $w_i$ in the subgraph $G_i = (Z_i, E_i)$ with respect to the node-capacities $\{ b_i(v) \}, v \in Z_i$. We let $H_i = \{ e \in E_i : w_i(e) > 0\}$ denote the support of the fractional assignment $w_i$. We ensure that $w_i$ is uniform, in the sense that $w_i(e) = 1/d_i = 1/n^{(i-1)/K}$ for all $e \in E_i$. In other words, $w_i$ assigns the same weight to every edge in its support. This satisfies Invariant~\ref{label:bipartite:inv:normal:assignment}.

We now observe a crucial property. Since $\delta$ is a multiple of $\epsilon$ (see equation~\ref{label:bipartite:eq:parameter:2}), $1 = N \cdot \epsilon$ for some integer $N$ (see equation~\ref{label:bipartite:eq:parameter:1}), and the discretized node-weights $\{\W_v(w_j)\}$ for levels $j \in \{i+1, \ldots, K\}$ are also multiples of $\epsilon$ (apply Invariant~\ref{label:bipartite:inv:discretized} to levels $j > i$), Definition~\ref{label:bipartite:inv:normal:capacity} implies that all the node-capacities $\{b_i(v) \}$ are also multiples of $\epsilon$. 

\begin{observation}
\label{label:bipartite:ob:discretized:feasible}
For  every node $v \in Z_i$, with $i \in \{2, \ldots, K\}$, the capacity  $b_i(v)$ is an integral multiple of $\epsilon$. 
\end{observation}

Now, suppose that for every node $v \in Z_i$, we define $\W_v(w_i) = \lceil W_v(w_i)/\epsilon \rceil \cdot \epsilon$ to be the smallest multiple of $\epsilon$ that is no less than $W_v(w_i)$. Then the discretized weights $\{ \W_v(w_i) \}$  clearly satisfy: $\W_v(w_i) - 2 \epsilon \leq W_v(w_i) \leq \W_v(w_i)$. Further, since $b_i(v)$ is also an integral multiple of $W_v(w_i)$, we  get: $\W_v(w_i) \leq b_i(v)$ for all $v \in Z_i$. We conclude that the discretized weights $\{ \W_v(w_i) \}$ at level $i$  satisfy Invariant~\ref{label:bipartite:inv:discretized}. 

We now proceed towards verifying the consistency of Invariant~\ref{label:bipartite:inv:node-partition}. First, we need to show that the subsets $T_i$ and $S_i$  from Definition~\ref{label:bipartite:def:inv:node-partition} are mutually disjoint. This is shown in Lemma~\ref{label:bipartite:lm:partition}. The proof of Lemma~\ref{label:bipartite:lm:partition} appears in Section~\ref{label:bipartite:sec:lm:partition}. For an intuition behind the proof, recall the discussion immediately before Definition~\ref{label:bipartite:def:inv:node-partition}.

\begin{lemma}
\label{label:bipartite:lm:partition}
For  $i \in \{2, \ldots, K\}$, the subsets $T_i, S_i \subseteq Z_i$   are mutually disjoint.  
\end{lemma}

Next, we ensure that the fractional $b$-matching $w_i$ is {\em maximal with respect to the discretized capacities $\{\W_v(w_i) \}$}. To gain a better understanding of this concept, consider an edge $(u, v) \in E_i \setminus H_i$ that receives zero weight under $w_i$, and suppose that both its endpoints are non-tight (i.e., $u, v \in Z_i \setminus T_i$). By Definition~\ref{label:bipartite:def:inv:node-partition}, this means that $\W_u(w_i) < b_i(u)$ and $\W_v(w_i) < b_i(v)$. Since each of these quantities are integral multiples of $\epsilon$ (see Observation~\ref{label:bipartite:ob:discretized:feasible} and Invariant~\ref{label:bipartite:inv:discretized}), we infer that $W_u(w_i) \leq \W_u(w_i) \leq b_i(u) - \epsilon$ and $W_v(w_i) \leq \W_v(w_i) \leq b_i(v) - \epsilon$. Now, what will happen if we insert this edge $(u, v)$ into $H_i$ by setting $w_i(u,v) = 1/d_i$? To answer this question, we make the following observations. \\

\begin{itemize}
\item Since $1/d_i \leq 1/n^{1/K}$ for $i \geq 2$  (see Definiton~\ref{label:bipartite:def:inv:normal:assignment}) and since every edge in $H_i$ gets a weight of $1/d_i$, equation~\ref{label:bipartite:eq:parameter:4} implies that $W_u(w_i)$ (resp. $W_v(w_i)$) will increase by at most $\epsilon$. Accordingly, $\W_u(w_i)$ (resp. $\W_v(w_i)$) will increase by at most $\epsilon$. Hence, we will still  have $\W_u(w_i) \leq b_i(u)$ and $\W_v(w_i) \leq b_i(v)$. \\
\end{itemize}

\noindent In other words, if both the endpoints of an edge in $E_i \setminus H_i$ are non-tight, then we can insert that edge into $H_i$ without violating any of the invariants. But we cannot reach the same conclusion if at least one of the endpoints is tight. Accordingly, the second part of Invariant~\ref{label:bipartite:inv:node-partition} states that every edge in $E_i \setminus H_i$ will have at least one tight endpoint.  In this sense, the fractional assignment $w_i$ is ``maximal with respect to the discretized node-capacities''. 

Next, we define the residual node-capacities $\{b_i^r(v)\}, v \in Z_i$ as per Definition~\ref{label:bipartite:inv:residual:capacity}. Clearly, all these residual capacities are nonnegative. Finally, we compute a maximal fractional $b$-matching $w_i^r$ with respect to these residual node-capacities $\{ b_i^r(v)\}$ as per Invariant~\ref{label:bipartite:inv:residual:matching}. At this stage, we have finished constructing the structures for level $i$. 

\subsubsection{Proof of Lemma~\ref{label:bipartite:lm:positive:capacity}}
\label{label:bipartite:sec:lm:positive:capacity}

Fix any level $i \in \{2, \ldots, K\}$ and any node $v \in Z_i$. If $i = K$, then equation~\ref{label:bipartite:eq:level-small} implies that $s_i(v) = \infty$, and hence Definition~\ref{label:bipartite:inv:normal:capacity} guarantees that $b_i(v) = 1 - \delta > 0$. Thus, for the rest of the proof, we suppose that $i \in \{2, \ldots, K-1\}$. Further, by induction hypothesis, we assume that:
\begin{equation}
\label{label:bipartite:eq:runtime:000}
b_j(v) > 0 \text{ for all } j \in \{i+1, \ldots, K\}.
\end{equation}
We  now consider  three possible cases, depending on the value of $s_i(v)$. \\

\begin{itemize}
\item {\em Case 1. $s_i(v) = \infty$.} \\

In this case, equation~\ref{label:bipartite:eq:level-small} implies that $s_{i+1}(v) = \infty$. Accordingly, by Definition~\ref{label:bipartite:inv:normal:capacity} we get:
\begin{eqnarray}
\label{label:bipartite:eq:runtime:1}
b_{i+1}(v) & = & 1 - \delta - \sum_{j=i+2}^K \W_v(w_j) \\
b_i(v) & = & 1 - \delta - \sum_{j=i+1}^K \W_v(w_j) \label{label:bipartite:eq:runtime:4}
\end{eqnarray}
Since $v \in Z_i$, Definition~\ref{label:bipartite:inv:node-partition:1}  states that $v \in Z_{i+1} \setminus T_{i+1}$. Hence, applying Definition~\ref{label:bipartite:def:inv:node-partition}  we get:
\begin{eqnarray}
\label{label:bipartite:eq:runtime:2}
\W_v(w_{i+1}) < b_{i+1}(v)
\end{eqnarray}
From equations~\ref{label:bipartite:eq:runtime:1} and~\ref{label:bipartite:eq:runtime:2}, we get:
\begin{eqnarray*}
\W_v(w_{i+1}) < 1 - \delta - \sum_{j=i+2}^K \W_v(w_j)
\end{eqnarray*}
Rearranging the terms in the above inequality, we get:
\begin{equation}
\label{label:bipartite:eq:runtime:3}
\sum_{j=i+1}^K \W_v(w_j) < 1 - \delta
\end{equation}
From equations~\ref{label:bipartite:eq:runtime:3} and~\ref{label:bipartite:eq:runtime:4}, we get:
\begin{equation}
b_i(v) > 0 \label{label:bipartite:eq:runtime:5}
\end{equation}
The lemma follows from equation~\ref{label:bipartite:eq:runtime:5}. \\

\item {\em Case 2. $s_i(v) \in \{i+2, \ldots, K\}$.} \\

In this case, equation~\ref{label:bipartite:eq:level-small} implies that $s_i(v) = s_{i+1}(v)$. Accordingly, by Definition~\ref{label:bipartite:inv:normal:capacity} we get:
\begin{eqnarray}
\label{label:bipartite:eq:runtime:10}
b_{i+1}(v) & = & 1 - 5 \cdot 10^{(K-s_i(v)+1)} \cdot \delta - \sum_{j=i+2}^K \W_v(w_j) \\
b_i(v) & = & 1 - 5 \cdot 10^{(K-s_i(v) +1)} \cdot \delta - \sum_{j=i+1}^K \W_v(w_j) \label{label:bipartite:eq:runtime:11}
\end{eqnarray}
Since $v \in Z_i$, Definition~\ref{label:bipartite:inv:node-partition:1}  states that $v \in Z_{i+1} \setminus T_{i+1}$. Hence, applying Definition~\ref{label:bipartite:def:inv:node-partition}  we get:
\begin{eqnarray}
\label{label:bipartite:eq:runtime:12}
\W_v(w_{i+1}) < b_{i+1}(v)
\end{eqnarray}
From equations~\ref{label:bipartite:eq:runtime:10} and~\ref{label:bipartite:eq:runtime:12}, we get:
\begin{eqnarray*}
\W_v(w_{i+1}) < 1 - 5 \cdot 10^{(K-s_i(v) + 1)} \cdot \delta - \sum_{j=i+2}^K \W_v(w_j)
\end{eqnarray*}
Rearranging the terms in the above inequality, we get:
\begin{equation}
\label{label:bipartite:eq:runtime:13}
\sum_{j=i+1}^K \W_v(w_j) < 1 - 5 \cdot 10^{(K-s_i(v) + 1)} \cdot  \delta
\end{equation}
From equations~\ref{label:bipartite:eq:runtime:11} and~\ref{label:bipartite:eq:runtime:13}, we get:
\begin{equation}
b_i(v) > 0 \label{label:bipartite:eq:runtime:14}
\end{equation}
The lemma follows from equation~\ref{label:bipartite:eq:runtime:14}. \\

\item {\em Case 3. $s_i(v) = i+1$.} \\

In this case, applying Definition~\ref{label:bipartite:inv:normal:capacity} we get:
\begin{eqnarray}
b_i(v) & = & 1 - 5 \cdot 10^{(K-i)} \cdot \delta - \sum_{j=i+1}^K \W_v(w_j) \label{label:bipartite:eq:runtime:30}
\end{eqnarray}
Since $s_i(v) = i+1$, equation~\ref{label:bipartite:eq:level-small} implies that $v \in S_{i+1}$. Hence, applying Definition~\ref{label:bipartite:def:inv:node-partition} we get:
\begin{eqnarray}
\sum_{j=i+1}^K \W_v(w_j) \leq 8 \cdot 10^{(K-i-1)} \cdot \delta \label{label:bipartite:eq:runtime:31}
\end{eqnarray}
From equations~\ref{label:bipartite:eq:runtime:30} and~\ref{label:bipartite:eq:runtime:31} we get:
\begin{eqnarray}
b_i(v) & \geq & 1 - 5 \cdot 10^{(K-i)} \cdot \delta - 8 \cdot 10^{(K-i-1)} \cdot \delta \nonumber  \\
& \geq & 1 - 5 \cdot 10^{K} \cdot \delta - 8 \cdot 10^K \cdot \delta \nonumber \\
& = & 1 - 13 \cdot 10^{K} \cdot \delta \nonumber \\
& > & 0 \label{label:bipartite:eq:runtime:32}
\end{eqnarray}
Equation~\ref{label:bipartite:eq:runtime:32} follows from equation~\ref{label:bipartite:eq:parameter:3}. The lemma follows from equation~\ref{label:bipartite:eq:runtime:32}. 
\end{itemize}

\subsubsection{Proof of Lemma~\ref{label:bipartite:lm:partition}}
\label{label:bipartite:sec:lm:partition}

We first describe a small technical claim.

\begin{claim}
\label{label:bipartite:cl:bound:normal:capacity}
Consider any level $2 \leq i \leq K$ and any node $v \in Z_i$. We have:
$$b_i(v) \geq 1 - 5 \cdot 10^{(K-2)} \cdot \delta - \sum_{j=i+1}^K \mathcal{W}_v(w_j).$$
\end{claim}

\begin{proof}
Follows from Definition~\ref{label:bipartite:inv:normal:capacity} and the fact that $i \geq 2$. 
\end{proof}

We now proceed with the proof of Lemma~\ref{label:bipartite:lm:partition}.   For the sake of contradiction, suppose that there is a node $v \in T_i \cap S_i$. Since the node $v$ belongs to $T_i$,  Definition~\ref{label:bipartite:def:inv:node-partition} and Claim~\ref{label:bipartite:cl:bound:normal:capacity} imply that:
\begin{eqnarray}
\mathcal{W}_v(w_i) & = & b_i(v)  \nonumber \\
& \geq & 1 - 5 \cdot 10^{(K-2)} \cdot \delta - \sum_{j = i+1}^K \mathcal{W}_v(w_j)  \nonumber 
\end{eqnarray}
Rearranging the terms in the above inequality, we get:
\begin{equation}
\label{label:bipartite:eq:lm:partition:1}
\sum_{j=i}^K \mathcal{W}_v(w_j) \geq 1 - 5 \cdot 10^{(K-2)} \cdot \delta 
\end{equation}
On the other hand, since the node $v$ belongs to $S_i$ and since $2 \leq i \leq K$, Definition~\ref{label:bipartite:def:inv:node-partition} implies that:
\begin{equation}
\label{label:bipartite:eq:lm:partition:2}
\sum_{j=i}^K \mathcal{W}_v(w_j) \leq 8 \cdot 10^{K-2} \cdot \delta
\end{equation}
From equations~\ref{label:bipartite:eq:lm:partition:1} and~\ref{label:bipartite:eq:lm:partition:2} we derive that:
\begin{eqnarray}
8 \cdot 10^{K-2} \cdot \delta \geq 1 - 5 \cdot 10^{K-2} \cdot \delta  \nonumber
\end{eqnarray}
Rearranging the terms in the above inequality, we get:
\begin{eqnarray}
13 \cdot 10^{K-2} \cdot \delta \geq 1 \label{label:bipartite:eq:lm:partition:3}
\end{eqnarray}
But equation~\ref{label:bipartite:eq:lm:partition:3} cannot be true as long as equation~\ref{label:bipartite:eq:parameter:1} holds.  This leads to a contradiction. Thus, the sets $T_i$ and $S_i$ have to be mutually disjoint. This concludes the proof of the lemma.

\subsection{Invariants for level $i = 1$}
\label{label:bipartite:sec:inv:1}

Recall Definitions~\ref{label:bipartite:def:structure} and~\ref{label:bipartite:def:structure:one}. Suppose that we are given the structures for all the levels $j \in \{2, \ldots, K\}$. Given this input, in this section we show how to derive the structures for level one.  The definition below specifies the subgraph $G_1 = (Z_1, E_1)$. Note that this is exactly analogous to Definition~\ref{label:bipartite:inv:node-partition:1}. 

\begin{definition}
\label{label:bipartite:inv:node-partition:1:last}
We define $Z_1 = Z_2 \setminus T_2$. Further, the set $E_1 = \{ (u,v) \in E : u, v \in Z_1\}$ consists of all the edges with both endpoints in $Z_1$.
\end{definition}

The definition below shows how to derive the capacity $b_1^*(v)$ of any node $v \in Z_1$.

\begin{definition}
\label{label:bipartite:inv:node:capacity:one}
Consider any node $v \in Z_1$. The capacity $b_1^*(v)$ is defined as follows. 
\begin{eqnarray}
\label{label:bipartite:eq:kernel-capacity}
b_1^*(v) = \begin{cases} 1  - \sum_{j = 2}^K \mathcal{W}_v(w_{j}) & \text{ if } s_1(v) = \infty; \\ \\
1 -   10^{(K-s_1(v)+1)} \cdot \delta - \sum_{j = 2}^K \mathcal{W}_v(w_{j}) & \text{ else if } s_1(v) \in \{2, \ldots, K\}.
\end{cases}
\end{eqnarray}
\end{definition}

 In Section~\ref{label:bipartite:sec:property}, we show that $b_1^*(v) > 0$ for every node $v \in Z_1$ (see Lemma~\ref{label:bipartite:lm:positive:capacity:last}). 
 We now specify how to derive the fractional assignment $w_1^*$.

\begin{invariant}
\label{label:bipartite:inv:assignment:one}
The fractional assignment $w_1^*$ forms a fractional $b$-matching  in $G_1 = (Z_1, E_1)$ with respect to the node-capacities $\{ b_1^*(v) \}, v \in Z_1$. Further, the size of $w_1^*$ is at least $1/(1+\epsilon)$-times  the size of every other fractional $b$-matching in $G_1$ with respect to the same node-capacities. 
\end{invariant}

\subsection{Feasibility of the structures for level one}
\label{label:bipartite:sec:feasible:inv:1}

The next lemma states that if a node $v$ participates in the subgraph $G_1 = (Z_1, E_1)$, then  $b_1^*(v) > 0$. The proof of Lemma~\ref{label:bipartite:lm:positive:capacity:last} appears in Section~\ref{label:bipartite:sec:lm:positive:capacity:last}. The intuition behind the proof is similar to the one for Lemma~\ref{label:bipartite:lm:positive:capacity}.
\begin{lemma}
\label{label:bipartite:lm:positive:capacity:last}
For every  every node $v \in Z_1$, we have $b_i^*(v) > 0$. 
\end{lemma}

\subsubsection{Proof of Lemma~\ref{label:bipartite:lm:positive:capacity:last}}
\label{label:bipartite:sec:lm:positive:capacity:last}

Fix any node $v \in Z_1$. By Lemma~\ref{label:bipartite:lm:positive:capacity} we have
\begin{equation}
\label{label:bipartite:eq:positive:1}
b_i(v) > 0 \text{ for all } i \in \{2, \ldots, K\}.
\end{equation}
We consider three possible cases, depending on the value of $s_1(v)$.
\begin{itemize}
\item {\em Case 1. $s_1(v) = \infty$.} \\

In this case, equation~\ref{label:bipartite:eq:level-small} implies that $s_2(v) = \infty$. Hence, Definitions~\ref{label:bipartite:inv:normal:capacity} and~\ref{label:bipartite:inv:node:capacity:one} imply that:
\begin{eqnarray}
\label{label:bipartite:eq:positive:2}
b_2(v) & = & 1 - \delta - \sum_{j=3}^K \W_v(w_j) \\
b_1^*(v) & = & 1 - \sum_{j=2}^K \W_v(w_j) \label{label:bipartite:eq:positive:3}
\end{eqnarray}
Since $v \in Z_1$, Definition~\ref{label:bipartite:inv:node-partition:1:last} implies that $v \in Z_2 \setminus T_2$. Hence, applying Definition~\ref{label:bipartite:def:inv:node-partition} we get:
\begin{equation}
\label{label:bipartite:eq:positive:4}
\W_v(w_2) < b_2(v)
\end{equation}
From equations~\ref{label:bipartite:eq:positive:2} and~\ref{label:bipartite:eq:positive:4} we get:
\begin{equation}
\label{label:bipartite:eq:positive:5}
\W_v(w_2) < 1 - \delta - \sum_{j=3}^K \W_v(w_j)
\end{equation}
Rearranging the terms in the above inequality, we get:
\begin{equation}
\label{label:bipartite:eq:positive:6}
\sum_{j=2}^K \W_v(w_j) < 1 -\delta
\end{equation}
From equations~\ref{label:bipartite:eq:positive:3} and~\ref{label:bipartite:eq:positive:6} we infer that:
\begin{equation}
\label{label:bipartite:eq:positive:7}
b_1^*(v) > 0
\end{equation}
The lemma follows from equation~\ref{label:bipartite:eq:positive:7}.  \\

\item {\em Case 2. $s_1(v) \in \{3, \ldots, K\}$.} \\

In this case, equation~\ref{label:bipartite:eq:level-small} implies that $s_1(v) = s_2(v)$. Hence, applying Definitions~\ref{label:bipartite:inv:normal:capacity} and~\ref{label:bipartite:inv:node:capacity:one} we get:
\begin{eqnarray}
b_2(v) & =  & 1 - 5 \cdot 10^{(K-s_1(v) + 1)} \cdot \delta - \sum_{j=3}^K \W_v(w_j) \label{label:bipartite:eq:positive:10} \\
b_1^*(v) & = & 1 - 10^{(K-s_1(v) + 1)} \cdot \delta - \sum_{j=2}^K \W_v(w_j) \label{label:bipartite:eq:positive:11} 
\end{eqnarray}
Since $v \in Z_1$, Definition~\ref{label:bipartite:inv:node-partition:1:last} implies that $v \in Z_2 \setminus T_2$. Hence, applying Definition~\ref{label:bipartite:def:inv:node-partition} we get:
\begin{equation}
\label{label:bipartite:eq:positive:12}
\W_v(w_2) < b_2(v)
\end{equation}
From equations~\ref{label:bipartite:eq:positive:10} and~\ref{label:bipartite:eq:positive:12} we get:
\begin{eqnarray*}
\W_v(w_2) < 1 - 5 \cdot 10^{(K-s_1(v) + 1)} \cdot \delta - \sum_{j=3}^K \W_v(w_j)
\end{eqnarray*}
Rearranging the terms in the above inequality, we get:
\begin{eqnarray}
\label{label:bipartite:eq:positive:13}
\sum_{j=2}^K \W_v(w_j) < 1 - 5 \cdot 10^{(K-s_1(v)+1)} \cdot \delta
\end{eqnarray}
From equations~\ref{label:bipartite:eq:positive:11} and~\ref{label:bipartite:eq:positive:13} we get:
\begin{equation}
\label{label:bipartite:eq:positive:14}
b_1^*(v) > 4 \cdot 10^{(K- s_1(v) + 1)} \cdot \delta > 0
\end{equation}
The lemma follows from equation~\ref{label:bipartite:eq:positive:14}.  \\

\item {\em Case 3. $s_1(v) = 2$.} \\

In this case, applying Definition~\ref{label:bipartite:inv:node:capacity:one} we get:
\begin{equation}
\label{label:bipartite:eq:positive:20}
b_1^*(v) = 1 - 10^{(K-1)} \cdot \delta - \sum_{j=2}^K \W_v(w_j)
\end{equation}
Since $s_1(v) = 2$, equation~\ref{label:bipartite:eq:level-small} implies that $v \in S_2$. Hence, applying Definition~\ref{label:bipartite:def:inv:node-partition} we get:
\begin{equation}
\label{label:bipartite:eq:positive:21}
\sum_{j=2}^K \W_v(w_j) \leq 8 \cdot 10^{(K-2)} \cdot \delta
\end{equation}
From equations~\ref{label:bipartite:eq:positive:20} and~\ref{label:bipartite:eq:positive:21} we get:
\begin{eqnarray}
b_1^*(v) & \geq & 1 - 10^{(K-1)} \cdot \delta - 8 \cdot 10^{(K-2)} \cdot \delta \nonumber \\
& = & 1 - 18 \cdot 10^{(K-2)} \cdot \delta \nonumber \\
& > & 0 \label{label:bipartite:eq:positive:22}
\end{eqnarray}
Equation~\ref{label:bipartite:eq:positive:22} follows from equation~\ref{label:bipartite:eq:parameter:1}. 
\end{itemize}

\subsection{Some useful properties}
\label{label:bipartite:sec:property}

In this section, we derive some properties from the invariants. These properties  will be used later on to analyze the amortized update time of our algorithm. We start by noting that  both $1, \delta$ are integral multiples of $\epsilon$ (see equation~\ref{label:bipartite:eq:parameter:1},~\ref{label:bipartite:eq:parameter:2}), and that the discretized node-weights are also multiples of $\epsilon$ (see Invariant~\ref{label:bipartite:inv:discretized}). Hence,  Definitions~\ref{label:bipartite:inv:normal:capacity},~\ref{label:bipartite:inv:residual:capacity},~\ref{label:bipartite:inv:node:capacity:one} imply that all the node-capacities are also multiples of $\epsilon$.

\begin{observation}
\label{label:bipartite:ob:discretized}
For every level $i \in \{2, \ldots, K\}$ and every node $v \in Z_i$, both  $b_i(v)$ and $b_i^r(v)$ are integral multiples of $\epsilon$. Further, for every node $v \in Z_1$, the node-capacity $b_1^*(v)$ is an integral multiple of $\epsilon$. 
\end{observation}

From Lemmas~\ref{label:bipartite:lm:positive:capacity},~\ref{label:bipartite:lm:positive:capacity:last} and Observation~\ref{label:bipartite:ob:discretized}, we get the following observation. 
\begin{observation}
\label{label:bipartite:ob:positive}
For every level $i \in \{2, \ldots, K\}$ and every node $x \in Z_i$, we have $b_i(x) \geq \epsilon$. Further, for every node $x \in Z_1$, we have $b_1^*(x) \geq \epsilon$. 
\end{observation}

  The next observation follows immediately from Definition~\ref{label:bipartite:inv:node-partition:1}.

\begin{observation}
\label{label:bipartite:lm:tight}
Consider any  level $\ell(v) = i \in \{2, \ldots, K\}$. Then we have $T_i = \{ v \in V : \ell(v) = i\}$. 
\end{observation}

 Lemma~\ref{label:bipartite:lm:laminar} states that the edges in the subgraph $G_{i-1}$  is drawn from the support of the normal fractional assignment $w_i$ in the previous level $i$. This holds since only the non-tight nodes at level $i$ get demoted to  level $(i-1)$, and all the edges connecting two non-tight nodes are included in the support of $w_i$.

\begin{lemma}
\label{label:bipartite:lm:laminar}
For every level $i \in \{2, \ldots, K\}$, we have $E_{i-1} \subseteq H_{i}$.
\end{lemma}

\begin{proof}
Consider any edge $(u,v) \in E_{i-1}$ in the subgraph $G_{i-1} = (Z_{i-1}, E_{i-1})$. By Definitions~\ref{label:bipartite:inv:node-partition:1} and~\ref{label:bipartite:inv:node-partition:1:last}, we have $u, v \in Z_{i-1} = Z_{i} \setminus T_{i}$.  Since both $u, v \in Z_i \setminus T_i$, and since $E_i$ is the subset of edges in $G$ induced by the node-set $Z_i$, we have  $(u,v) \in E_i$. Accordingly,  Invariant~\ref{label:bipartite:inv:node-partition} implies that $(u,v) \in H_{i}$. To summarize, every edge $(u,v) \in E_{i-1}$ belongs to the set $H_i$. Thus, we have $E_{i-1} \subseteq H_i$. 
\end{proof}

The next lemma upper bounds the number of edges incident upon a node that can get nonzero weight under a normal fractional assignment $w_i$. This holds since a normal fractional assignment is {\em uniform}, in the sense that it assigns the same weight $1/d_i$ to every edge in its support. Hence, not too many edges incident upon a node can be in the support of $w_i$, for otherwise the total weight of that node will exceed one.

\begin{lemma}
\label{label:bipartite:lm:deg}
For each level $i \in \{2, \ldots, K\}$ and every node $v \in Z_i$, we have $\text{deg}_v(H_i) \leq d_i$. 
\end{lemma}

\begin{proof}
Invariant~\ref{label:bipartite:inv:normal:assignment} states that each edge $e \in H_i$ gets a weight of $w_i(e) = 1/d_i$ under the fractional assignment $w_i$. Thus, for every node $v \in Z_i$, it follows that $W_v(w_i) = (1/d_i) \cdot \text{deg}_v(H_i)$. For the sake of contradiction, suppose that the lemma is false. Then there must be a node $u \in Z_i$ with $\text{deg}_u(H_i) > d_i$. In that case, we would have:
$$\mathcal{W}_u(w_i) \geq W_u(w_i) = (1/d_i) \cdot \text{deg}_u(H_i) > 1 \geq b_i(u).$$
The first inequality follows from Invariant~\ref{label:bipartite:inv:discretized}. The last inequality follows from Definition~\ref{label:bipartite:inv:normal:capacity}. Thus, we get: $\mathcal{W}_u(w_i) > b_i(u)$, which contradicts Invariant~\ref{label:bipartite:inv:discretized}. Thus, our assumption was wrong, and this concludes the proof of the lemma. 
\end{proof}

Finally, we upper bound the maximum degree of a node in any subgraph $G_i = (Z_i, E_i)$. 

\begin{corollary}
\label{label:bipartite:cor:deg}
For each level $i \in \{1, \ldots, K\}$ and every node $v \in Z_i$, we have $\text{deg}_v(E_i) \leq n^{1/K} \cdot d_i$. 
\end{corollary}

\begin{proof}
Consider two possible cases, depending on the value of $i$.
\begin{itemize}
\item {\em $i = K$.}

In this case, we have $d_i = d_K = n^{1-1/K}$ (see Invariant~\ref{label:bipartite:inv:normal:assignment}), and hence, we get: 
$$\text{deg}_v(E_i) \leq |V| = n = n^{1/K} \cdot d_i.$$ 

\item {\em $i \in \{1, \ldots, K-1\}$.}

In this case, applying Lemmas~\ref{label:bipartite:lm:laminar},~\ref{label:bipartite:lm:deg} and Definition~\ref{label:bipartite:def:inv:normal:assignment}, we get:
\begin{eqnarray*}
\text{deg}_v(E_i) \leq \text{deg}_v(H_{i+1}) \leq d_{i+1} = n^{1/K} \cdot d_i.
\end{eqnarray*}
\end{itemize}
\end{proof}

\subsection{Feasibility of the solution}
\label{label:bipartite:sec:feasible}

We devote this section to the proof of Theorem~\ref{label:bipartite:th:feasible:main}, which states that the fractional assignment $(w+w_1^*+w^r)$ forms a fractional matching in the graph $G = (V, E)$. Specifically, we will show that $W_v(w+w^*_1+w^r) \leq 1$ for all nodes $v \in V$. The high level idea behind the proof can be explained as follows.

Consider any node $v \in V$ at level $i \in \{1, \ldots, K\}$.  We need to show that $W_v(w+w^*_1+w^r) \leq 1$. We start by noting that the quantity $W_v(w+w^*_1+w^r)$ is equal to the total weight received by the node $v$ from all the structures for levels $j \in \{i, \ldots, K\}$. This holds since by Corollary~\ref{label:bipartite:cor:level}, we have $v \notin Z_j$ for all $j \in \{1, \ldots, i-1\}$, and hence the node $v$ receives {\em zero} weight from all the structures for levels $j < i$. 

Consider the scenario where we have fixed the structures for all the levels $j \in \{i+1, \ldots, K\}$, and we are about to derive the structures for level $i$. We first want to upper bound the total weight received by the node $v$ from the residual fractional assignments in levels $j > i$. This is given by $\sum_{j = i+1}^K W_v(w^r_j)$. By Invariant~\ref{label:bipartite:inv:residual:matching}, each  $w_j^r$ is a fractional $b$-matching with respect to the capacities $\{b_j^r(x)\}$. Hence,  we infer that $W_v(w^r_j) \leq b_j^r(v)$ for all $j \in \{i+1, \ldots, K\}$. Thus, we get: $\sum_{j > i} W_v(w^r_j) \leq \sum_{j > i} b_j^r(v)$. So it suffices to upper bound the quantity $\sum_{j=i+1}^K b_j^r(v)$, which is done in Lemma~\ref{label:bipartite:lm:bound:residual:assignment}.  Next, we note that the total  weight received by  $v$ from the fractional assignments $\{w_j\}$ at levels $j > i$ is $\sum_{j > i} W_v(w_j)$. Thus, we get:
\begin{equation}
\label{label:bipartite:eq:total:weight:1}
\text{Total weight received by } v \text{ from all the structures for levels } j > i \text{ is at most } \sum_{j > i} b_j^r(v) + \sum_{j > i} W_v(w_j). 
\end{equation}
Now, we focus on the weights received by $v$ from the structures at levels $i$. Here, we need to consider two possible cases, depending on the value of $\ell(v) = i$. 
\begin{itemize}
\item {\em Case 1.} $\ell(v) = i \in \{2, \ldots, K\}$.

In this case, the total weight received by $v$ from the structures at levels $i$ is given by $W_v(w_i) + W_v(w_i^r)$. Since $w_i$ is a fractional $b$-matching with respect to the capacities $\{ b_i(x) \}$, we have $W_v(w_i) \leq b_i(v)$. Since $w_i^r$ is a fractional $b$-matching with respect to the capacities $\{ b_i^r(x) \}$, we have $W_v(w_i^r) \leq b_i^r(v)$. Thus, we get: $W_v(w_i) + W_v(w_i^r) \leq b_i(v) + b_i^r(v)$. Now, suppose  we can prove that:
\begin{equation}
\label{label:bipartite:eq:total:weight:2}
\left(b_i(v) + b_i^r(v) \right) + \sum_{j > i} b_j^r(v) + \sum_{j > i} W_v(w_j) \leq 1.
\end{equation}
Then equations~\ref{label:bipartite:eq:total:weight:1} and~\ref{label:bipartite:eq:total:weight:2} will together have the following implication:  the total weight received by  $v$ from all the structures for levels $j \geq i$ is at most one. Since $v$ receives {\em zero} weight from all the structures for levels $j < i$, it will follow that $W_v(w+w_1^*+w^r)$. Thus, Theorem~\ref{label:bipartite:th:feasible:main}  follows from equation~\ref{label:bipartite:eq:total:weight:2}, which is shown in Lemma~\ref{label:bipartite:lm:bound:l2}. 

\item {\em Case 2.} $\ell(v) = i = 1$.

In this case, the total weight received by $v$ from the structures at level $i = 1$ is given by $W_v(w_1^*)$. Since $w_1^*$ is a fractional $b$-matching with respect to the capacities $\{ b_1^*(x) \}$, we have $W_v(w_1^*) \leq b_1^*(v)$. Now, suppose we can prove that:
\begin{equation}
\label{label:bipartite:eq:total:weight:3}
b_1^*(v) + \sum_{j=2}^K b_1^r(v) + \sum_{j=2}^K W_v(w_j) \leq 1.
\end{equation}
Then equations~\ref{label:bipartite:eq:total:weight:1} and~\ref{label:bipartite:eq:total:weight:3} will together have the following implication:  the total weight received by  $v$ from all the structures for levels $j \in \{1, \ldots, K\}$ is at most one. Thus, Theorem~\ref{label:bipartite:th:feasible:main} follows from equation~\ref{label:bipartite:eq:total:weight:3}, which is shown in Lemma~\ref{label:bipartite:lm:bound:l1}. 
\end{itemize}
The rest of this section is organized as follows.
\begin{itemize}
\item In Section~\ref{label:bipartite:sec:lm:bound:residual:assignment} we prove Lemma~\ref{label:bipartite:lm:bound:residual:assignment}.
\item In Section~\ref{label:bipartite:sec:lm:bound:l2} we prove Lemma~\ref{label:bipartite:lm:bound:l2}.
\item In Section~\ref{label:bipartite:sec:lm:bound:l1} we prove Lemma~\ref{label:bipartite:lm:bound:l1}.
\item Finally, in Section~\ref{label:bipartite:sec:th:feasible:main} we apply Lemmas~\ref{label:bipartite:lm:bound:residual:assignment},~\ref{label:bipartite:lm:bound:l2} and~\ref{label:bipartite:lm:bound:l1} to prove Theorem~\ref{label:bipartite:th:feasible:main}. 
\end{itemize}

\begin{lemma}
\label{label:bipartite:lm:bound:residual:assignment}
Consider any node $v \in V$ at level $\ell(v) = i \in \{1, \ldots, K\}$. We have:
\begin{eqnarray*}
\sum_{j = i+1}^K b_j^r(v) \leq \begin{cases}
0 & \text{ if } s_i(v) = \infty; \\
 10^{K - s_i(v) + 1} \cdot \delta & \text{ else if } s_i(v) \in \{i+1, \ldots, K\}.
\end{cases}
\end{eqnarray*}
\end{lemma}

\begin{lemma}
\label{label:bipartite:lm:bound:l2}
Consider any node $v \in V$ at level $\ell(v) = i \in \{2, \ldots, K\}$. We have:
$$b_i(v) + b_i^{r}(v) \leq 1 - \sum_{j=i+1}^K b_j^r(v) - \sum_{j=i+1}^K W_v(w_j).$$
\end{lemma}

\begin{lemma}
\label{label:bipartite:lm:bound:l1}
Consider any node $v \in V$ at level $\ell(v) = 1$. We have:
$$b_1^*(v) \leq 1 - \sum_{j=2}^K b_j^r(v) - \sum_{j=2}^K W_v(w_j).$$
\end{lemma}

\subsubsection{Proof of Lemma~\ref{label:bipartite:lm:bound:residual:assignment}}
\label{label:bipartite:sec:lm:bound:residual:assignment}
We consider two possible cases, depending on equation~\ref{label:bipartite:eq:level-small}.
\begin{itemize}
\item {\em Case 1.} $s_i(v) = \infty$. 

Since $\ell(v) = i$, equation~\ref{label:bipartite:eq:label} and Definitions~\ref{label:bipartite:inv:node-partition:1},~\ref{label:bipartite:inv:node-partition:1:last} imply that $v \in Z_{j-1} = Z_j \setminus T_j$ for all $j \in \{i+1, \ldots, K\}$. Thus, we infer that:
\begin{equation}
\label{label:bipartite:eq:110} v \notin T_j \text{ for all } j \in \{i+1, \ldots, K\}.
\end{equation}
Since $s_i(v) = \infty$, by equation~\ref{label:bipartite:eq:level-small} we have $v \notin S_j$ for all $j \in \{i+1, \ldots, K\}$. This observation, along with equation~\ref{label:bipartite:eq:110}, implies that:
\begin{equation}
\label{label:bipartite:eq:111} v \notin S_j \cup T_j \text{ for all } j \in \{i+1, \ldots, K\}.
\end{equation}
Equation~\ref{label:bipartite:eq:111} and Definition~\ref{label:bipartite:inv:residual:capacity} imply that:
\begin{equation}
\label{label:bipartite:eq:112}
b^r_j(v) = 0 \text{ for all } j \in \{i+1, \ldots, K\}.
\end{equation}
The lemma follows from summing equation~\ref{label:bipartite:eq:112} over all $j \in \{i+1, \ldots, K\}$.

\item {\em Case 2.} $s_i(v) \in \{i+1, \ldots, K\}$. 

Let $s_i(v) = k$ for some $k \in \{i+1, \ldots, K\}$. Note that $k$ is the minimum index in $\{i+1, \ldots, K\}$ for which $v \in S_k$. Thus, we get:
\begin{equation}
\label{label:bipartite:eq:114}
v \notin S_j \text{ for all } j \in \{i+1, \ldots, k-1\}. 
\end{equation}
Since $\ell(v) = i$, equation~\ref{label:bipartite:eq:label} and Definitions~\ref{label:bipartite:inv:node-partition:1},~\ref{label:bipartite:inv:node-partition:1:last} imply that $v \in Z_{j-1} = Z_j \setminus T_j$ for all $j \in \{i+1, \ldots, K\}$. Thus, similar to Case 1, we infer that:
\begin{equation}
\label{label:bipartite:eq:115} v \notin T_j \text{ for all } j \in \{i+1, \ldots, K\}.
\end{equation}
Equations~\ref{label:bipartite:eq:114} and~\ref{label:bipartite:eq:115} imply that:
$$v \notin S_j \cup T_j \text{ for all } j \in \{i+1, \ldots, k-1\}.$$
Using the above inequality in conjunction with Definition~\ref{label:bipartite:inv:residual:capacity}, we conclude that:
\begin{equation}
\label{label:bipartite:eq:116}
b_j^r(v) = 0 \text{ for all } j \in \{i+1, \ldots, k-1\}. 
\end{equation}
Now, we focus on the interval $[k, K]$. From equation~\ref{label:bipartite:eq:115}, we get $v \notin T_j$ for all $j \in \{k, \ldots, K\}$. Thus, Definition~\ref{label:bipartite:inv:residual:capacity} implies that $b_j^r(v) \in \{0, 8 \cdot 10^{K-j} \cdot \delta\}$ for all $j \in \{k, \ldots, K\}$. Thus, we get:
\begin{equation}
\label{label:bipartite:eq:117}
b_j^r(v) \leq 8 \cdot 10^{K-j} \cdot \delta \text{ for all } j \in \{k, \ldots, K\}.
\end{equation}
Adding equations~\ref{label:bipartite:eq:116} and~\ref{label:bipartite:eq:117} we get:
\begin{eqnarray}
\sum_{j=i+1}^K b_j^r(v) & = & \sum_{j=i+1}^{k-1} b_j^r(v) + \sum_{j=k}^K b_j^r(v) \nonumber \\
& \leq & 0 + \sum_{j = k}^K 8 \cdot 10^{K-j} \cdot \delta \nonumber \\
& \leq & 10^{K-k+1} \cdot \delta \nonumber \\
& = & 10^{K-s_i(v)+1}\label{label:bipartite:eq:118}
\end{eqnarray}
This concludes the proof of the lemma in this case.
\end{itemize}

\subsubsection{Proof of Lemma~\ref{label:bipartite:lm:bound:l2}}
\label{label:bipartite:sec:lm:bound:l2}

Since $\ell(v) = i \geq 2$, Observation~\ref{label:bipartite:lm:tight} implies that $v \in T_i$. We now consider two possible cases, depending on the value of $s_i(v)$.

\begin{itemize}
\item {\em Case 1.} $s_i(v) = \infty$.

In this case, since $v \in T_i$, Definition~\ref{label:bipartite:inv:residual:capacity} states that:
\begin{equation}
\label{label:bipartite:eq:newplan:1}
b_i^r(v) = \delta
\end{equation}
Further, Definition~\ref{label:bipartite:inv:normal:capacity} states that:
\begin{equation}
\label{label:bipartite:eq:newplan:2000}
b_i(v) = 1 - \delta - \sum_{j=i+1}^K \mathcal{W}_v(w_j)
\end{equation}
Invariant~\ref{label:bipartite:inv:discretized} states that:
\begin{equation}
\label{label:bipartite:eq:newplan:2001}
\mathcal{W}_v(w_j) \geq W_v(w_j) \text{ for all } j \in \{i,  \ldots, K\}
\end{equation} 
From equations~\ref{label:bipartite:eq:newplan:2000} and~\ref{label:bipartite:eq:newplan:2001}, we get:
\begin{equation}
\label{label:bipartite:eq:newplan:2}
b_i(v) \leq 1 - \delta - \sum_{j=i+1}^K W_v(w_j)
\end{equation}
Finally, Lemma~\ref{label:bipartite:lm:bound:residual:assignment} states that:
\begin{equation}
\label{label:bipartite:eq:newplan:3}
\sum_{j=i+1}^K b_j^r(v) \leq 0
\end{equation}
From equations~\ref{label:bipartite:eq:newplan:1},~\ref{label:bipartite:eq:newplan:2} and~\ref{label:bipartite:eq:newplan:3}, we get:
$$b_i(v) + b_i^r(v) \leq 1 - \sum_{j=i+1}^K b_j^r(v) - \sum_{j=i+1}^K W_v(w_j).$$
This concludes the proof of the lemma in this case.

\item {\em Case 2.} $s_i(v) \in \{i+1, \ldots, K\}$. 

In this case, since $v \in T_i$,  Definition~\ref{label:bipartite:inv:residual:capacity} states that:
\begin{equation}
\label{label:bipartite:eq:newplan:4}
b_i^r(v) = 4 \cdot 10^{(K-s_i(v)+1)} \cdot \delta
\end{equation}
Further, Definition~\ref{label:bipartite:inv:normal:capacity} states that:
\begin{equation}
\label{label:bipartite:eq:newplan:5000}
b_i(v) = 1 - 5 \cdot 10^{(K-s_i(v)+1)} \cdot \delta - \sum_{j=i+1}^K \mathcal{W}_v(w_j)
\end{equation}
Invariant~\ref{label:bipartite:inv:discretized} states that:
\begin{equation}
\label{label:bipartite:eq:newplan:5001}
\mathcal{W}_v(w_j) \geq W_v(w_j) \text{ for all } j \in \{i, \ldots, K\}. 
\end{equation}
From equations~\ref{label:bipartite:eq:newplan:5000} and~\ref{label:bipartite:eq:newplan:5001}, we get:
\begin{equation}
\label{label:bipartite:eq:newplan:5}
b_i(v) \leq 1 - 5 \cdot 10^{(K-s_i(v)+1)} \cdot \delta - \sum_{j=i+1}^K W_v(w_j)
\end{equation}
Finally, since $s_i(v) \in \{i+1, \ldots, K\}$, Lemma~\ref{label:bipartite:lm:bound:residual:assignment} implies that:
\begin{equation}
\label{label:bipartite:eq:newplan:6}
\sum_{j=i+1}^K b_j^r(v)  \leq 10^{(K-s_i(v)+1)} \cdot \delta
\end{equation}
From equations~\ref{label:bipartite:eq:newplan:4},~\ref{label:bipartite:eq:newplan:5} and~\ref{label:bipartite:eq:newplan:6}, we get:
$$b_i(v) + b_i^r(v) \leq 1 - \sum_{j=i+1}^K b_j^r(v) - \sum_{j=i+1}^K W_v(w_j).$$
This concludes the proof of the lemma in this case. 
\end{itemize}

\subsubsection{Proof of Lemma~\ref{label:bipartite:lm:bound:l1}}
\label{label:bipartite:sec:lm:bound:l1}

We consider two possible cases, depending on the value of $s_1(v)$.
\begin{itemize}
\item {\em Case 1.} $s_1(v) = \infty$. 

In this case, Lemma~\ref{label:bipartite:lm:bound:residual:assignment} states that:
\begin{equation}
\label{label:bipartite:eq:newplan:10}
\sum_{j=2}^K b_j^r(v) \leq 0
\end{equation}
Definition~\ref{label:bipartite:inv:node:capacity:one} states that:
\begin{equation}
\label{label:bipartite:eq:newplan:11000}
b_1^*(v) = 1 - \sum_{j=2}^K \mathcal{W}_v(w_j)
\end{equation}
Invariant~\ref{label:bipartite:inv:discretized} states that:
\begin{equation}
\label{label:bipartite:eq:newplan:11001}
\mathcal{W}_v(w_j) \geq W_v(w_j) \text{ for all } j \in \{2, \ldots, K\}.
\end{equation}
From equations~\ref{label:bipartite:eq:newplan:11000} and~\ref{label:bipartite:eq:newplan:11001}, we get:
\begin{equation}
\label{label:bipartite:eq:newplan:11}
b_1^*(v) \leq 1 - \sum_{j=2}^K W_v(w_j)
\end{equation}
From equations~\ref{label:bipartite:eq:newplan:10} and~\ref{label:bipartite:eq:newplan:11}, we get:
$$b_1^*(v) \leq 1 - \sum_{j=2}^K b_j^r(v) - \sum_{j=2}^K W_v(w_j).$$
This concludes the proof of the lemma in this case. 

\item {\em Case 2.} $s_1(v) \in \{2, \ldots, K\}$.

In this case,  Lemma~\ref{label:bipartite:lm:bound:residual:assignment} implies that
\begin{equation}
\label{label:bipartite:eq:newplan:12}
\sum_{j=2}^K b_j^r(v) \leq 10^{(K-s_1(v)+1)} \cdot \delta
\end{equation}
Definition~\ref{label:bipartite:inv:node:capacity:one} states that:
\begin{equation}
\label{label:bipartite:eq:newplan:13000}
b_1^*(v) = 1 - 10^{(K-s_1(v)+1)} \cdot \delta - \sum_{j=2}^K \mathcal{W}_v(w_j)
\end{equation}
Invariant~\ref{label:bipartite:inv:discretized} states that:
\begin{equation}
\label{label:bipartite:eq:newplan:13001}
\mathcal{W}_v(w_j) \geq W_v(w_j) \text{ for all } j \in \{2, \ldots, K\}.
\end{equation}
From equations~\ref{label:bipartite:eq:newplan:13000} and~\ref{label:bipartite:eq:newplan:13001}, we get:
\begin{equation}
\label{label:bipartite:eq:newplan:13}
b_1^*(v) \leq 1 - 10^{(K-s_1(v)+1)} \cdot \delta - \sum_{j=2}^K W_v(w_j)
\end{equation}
From equations~\ref{label:bipartite:eq:newplan:12} and~\ref{label:bipartite:eq:newplan:13}, we get:
$$b_1^*(v) \leq 1 - \sum_{j=2}^K b_j^r(v) - \sum_{j=2}^K W_v(w_j).$$
This concludes the proof of the lemma in this case. 
\end{itemize}

\subsubsection{Proof of Theorem~\ref{label:bipartite:th:feasible:main}}
\label{label:bipartite:sec:th:feasible:main}

Fix any node $v \in V$. We will show that $W_v(w+w_1^*+w^r) \leq 1$. This will imply that $(w+w_1^*+w^r)$ is a fractional matching in the graph $G = (V, E)$. We consider two possible cases, depending on the level of $v$. 

\begin{itemize}
\item $\ell(v) = i \in \{2, \ldots, K\}$.

In this case, applying Lemma~\ref{label:bipartite:lm:bound:l2}, we get:
\begin{equation}
\label{label:bipartite:eq:newplan:50}
b_i(v) + \sum_{j=i+1}^K W_v(w_j) + \sum_{j=i}^K b_j^r(v) \leq 1
\end{equation}
Invariant~\ref{label:bipartite:inv:normal:assignment} states that $w_i$ is a fractional $b$-matching with respect to the node-capacities $\{b_i(x) \}$. Hence, we get:
\begin{equation}
\label{label:bipartite:eq:newplan:51}
W_v(w_i) \leq b_i(v)
\end{equation}
Invariant~\ref{label:bipartite:inv:residual:matching} states that for all $j \geq i$, $w^r_j$ is a fractional $b$-matching with respect to the node-capacities $\{b_j^r(x) \}$. Hence, we get:
\begin{equation}
\label{label:bipartite:eq:newplan:52}
W_v(w^r_j) \leq b_j^r(v) \text{ for all } j \in \{i, \ldots, K\}.
\end{equation}
From equations~\ref{label:bipartite:eq:newplan:50},~\ref{label:bipartite:eq:newplan:51} and~\ref{label:bipartite:eq:newplan:52}, we get:
\begin{equation}
\label{label:bipartite:eq:newplan:53}
\sum_{j=i}^K W_v(w_j) + \sum_{j=i}^K W_v(w_j^r) \leq 1
\end{equation}
Since $\ell(v) = i \in \{2, \ldots, K\}$, we have $v \notin Z_j$ for all $j \in \{1, \ldots, i-1\}$. Thus, we conclude that:
\begin{equation}
\label{label:bipartite:eq:newplan:54}
W_v(w_j) = W_v(w_j^r) = 0 \text{ for all } j \in \{2, \ldots, i-1\}; \text{ and } W_v(w_1^*) = 0.
\end{equation}
From equations~\ref{label:bipartite:eq:newplan:53} and~\ref{label:bipartite:eq:newplan:54}, we get:
\begin{eqnarray*}
W_v(w+w_1^*+w_r) & = & W_v(w_1^*) + W_v(w) + W_v(w^r) \\
& = & W_v(w_1^*) + \sum_{j=2}^{i-1} \left(W_v(w_j) + W_v(w^r_j) \right) + \sum_{j=i}^{K} \left(W_v(w_j) + W_v(w^r_j) \right) \\
& \leq & 0 + 0 + 1 \\
& = & 1
\end{eqnarray*}
This concludes the proof of the lemma in this case.

\item $\ell(v) = 1$.

In this case, applying Lemma~\ref{label:bipartite:lm:bound:l1}, we get:
\begin{equation}
\label{label:bipartite:eq:newplan:55}
b_1^*(v) + \sum_{j=2}^K W_v(w_j) + \sum_{j=2}^K b_j^r(v) \leq 1
\end{equation}
Invariant~\ref{label:bipartite:inv:assignment:one} states that $w_1^*$ is a fractional $b$-matching with respect to the capacities $\{b_1^*(x) \}$. Hence, we get:
\begin{equation}
\label{label:bipartite:eq:newplan:56}
W_v(w_1^*) \leq b_1^*(v)
\end{equation}
Invariant~\ref{label:bipartite:inv:residual:matching} states that for all $j \in \{2, \ldots, K\}$, $w_j^r$ is a fractional $b$-matching with respect to the capacities $\{b_j^r(x)\}$. Hence, we get:
\begin{equation}
\label{label:bipartite:eq:newplan:57}
W_v(w_j^r) \leq b_j^r(v) \text{ for all } j \in \{2, \ldots, K\}.
\end{equation}
From equations~\ref{label:bipartite:eq:newplan:55},~\ref{label:bipartite:eq:newplan:56} and~\ref{label:bipartite:eq:newplan:57}, we get:
\begin{eqnarray*}
W_v(w+w_1^*+w^r) & = & W_v(w_1^*) + W_v(w) + W_v(w^r) \\
& = & W_v(w_1^*) + \sum_{j=2}^K W_v(w_j) + \sum_{j=2}^K W_v(w^r) \\
& \leq & b_1^*(v) + \sum_{j=2}^K W_v(w_j) + \sum_{j=2}^K b_j^r(v) \\
& \leq & 1.
\end{eqnarray*}
This concludes the proof of the lemma in this case. 
\end{itemize}

\subsection{Approximation Guarantee of the Solution}
\label{label:bipartite:sec:approx}

We want to show that the size of the fractional assignment $(w+w_1^*+w^r)$ is {\em strictly} within a factor of two of the maximum cardinality matching in the input graph $G = (V, E)$. So we devote this section to the proof of Theorem~\ref{label:bipartite:th:approx:main}. We start by recalling some notations that will be used throughout the rest of this section.
\begin{itemize}
\item Given any subset of edges $E' \subseteq E$, we let $V(E') = \{ u \in V : \text{deg}_{E'}(v) > 0 \}$ be the set of endpoints of the edges in $E'$. Thus, for a  matching $M \subseteq E$, the set of matched nodes is given by $V(M)$. 
\item Consider any level $i \in \{2, \ldots, K\}$, and recall that Invariant~\ref{label:bipartite:inv:discretized} introduces the concept of a ``discretized'' node-weight $\W_v(w_i)$ for all $v \in Z_i$. We ``extend'' this  definition as follows.  If a node $v$ belongs to a level $\ell(v) = i \in \{2, \ldots, K\}$, then $v \notin Z_j$ for all $j \in \{1, \ldots, i-1\}$. In this case, we set $\W_v(w_j) =  W_v(w_j) = 0$ for all $j \in \{1, \ldots, i-1\}$. Further, for every node $u \in V$, we define $\W_u(w) = \sum_{j=2}^K \W_u(w_j)$.  
\end{itemize}
Instead of directly trying to bound the size of the fractional assignment $(w+w_1^*+w^r)$, we first prove the following theorem. Fix any  matching $M^* \subseteq E$ in the input graph $G$. The theorem below upper bounds the size of this matching in terms of the total discretized node-weight received by the matched nodes, plus the total weight received by all the nodes from $w_1^*$ and $w^r$.  The proof of Theorem~\ref{label:bipartite:th:approx:main:new} appears in Section~\ref{label:bipartite:sec:roadmap}. 
\begin{theorem}
\label{label:bipartite:th:approx:main:new}
Consider any  matching $M^* \subseteq E$ in the input graph $G = (V, E)$. Then we have:
$$\sum_{v \in V(M^*)}  \W_v(w) + \sum_{v \in V} \left( W_v(w_1^*) + W_v(w^r) \right) \geq (1+\epsilon)^{-1} \cdot (1+\delta/3) \cdot |M^*|.$$
\end{theorem}
To continue with the proof of  Theorem~\ref{label:bipartite:th:approx:main}, we need the observation that the discretized weight of a node is very close to its normal weight. This intuition is quantified in the following claim. 

\begin{claim}
\label{label:bipartite:cl:approx:main}
For every node $v \in V$, we have $W_v(w) \geq \W_v(w) - 2 K  \epsilon$. 
\end{claim}

\begin{proof}
Invariant~\ref{label:bipartite:inv:discretized} implies that:
$$W_v(w_j) \geq \W_v(w_j) - 2 \epsilon \text{ for all } j \in \{2, \ldots, K\}.$$
Summing the above inequality over all levels $j \in \{2, \ldots, K\}$, we get:
\begin{eqnarray*}
W_v(w) & = & \sum_{j =2}^K W_v(w_j) \\
& \geq  & \sum_{j=2}^K (\W_v(w_j) - 2 \epsilon) \\
& = & \sum_{j=2}^K \W_v(w_j) - 2 \epsilon (K-1)   \\
& = & \W_v(w) - 2   \epsilon (K-1) \geq \W_v(w) -  2 \epsilon K.
\end{eqnarray*} 
\end{proof}

We can now lower bound the total weight received by all the nodes under  $(w+w_1^*+w^r)$ as follows.

\begin{claim}
\label{label:bipartite:cl:approx:main:1}
Consider any matching $M^* \subseteq E$ in the input graph $G = (V, E)$. We have:
$$\sum_{v \in V} W_v(w+w_1^*+w^r) \geq \left(\frac{(1+\delta/3)}{(1+\epsilon)} - 4 K \epsilon\right) \cdot |M^*|.$$
\end{claim}

\begin{proof}
We infer that:
\begin{eqnarray}
\sum_{v \in V} W_v(w+w_1^*+w^r) & = & \sum_{v \in V} W_v(w) + \sum_{v \in V} W_v(w_1^*) + \sum_{v \in V} W_v(w^r) \nonumber \\
& \geq & \sum_{v \in V(M^*)} W_v(w) + \sum_{v \in V} W_v(w_1^*) + \sum_{v \in V} W_v(w^r) \label{label:bipartite:eq:run:1} \\
& \geq & \sum_{v \in V(M^*)} \left( \W_v(w) -  2 \epsilon K\right) + \sum_{v \in V} W_v(w_1^*) + \sum_{v \in V} W_v(w^r) \label{label:bipartite:eq:run:2} \\
& = & \sum_{v \in V(M^*)} \W_v(w) - 2 \cdot |M^*| \cdot (2\epsilon K) + \sum_{v \in V} W_v(w_1^*) + \sum_{v \in V} W_v(w^r) \label{label:bipartite:eq:run:3} \\
& = & \left( \sum_{v \in V(M^*)} \W_v(w) + \sum_{v \in V} W_v(w_1^*) + \sum_{v \in V} W_v(w^r)\right) - (4\epsilon K) \cdot |M^*| \nonumber \\
& \geq & \frac{(1+\delta/3)}{(1+\epsilon)} \cdot |M^*| - (4\epsilon K) \cdot |M^*| \label{label:bipartite:eq:run:4} \\
& = & \left(\frac{(1+\delta/3)}{(1+\epsilon)} - 4\epsilon K\right) \cdot |M^*| \label{label:bipartite:eq:run:5}
\end{eqnarray}
Equation~\ref{label:bipartite:eq:run:1} holds since $V(M^*) \subseteq V$. Equation~\ref{label:bipartite:eq:run:2} follows from Claim~\ref{label:bipartite:cl:approx:main}.  Equation~\ref{label:bipartite:eq:run:3} holds since $|V(M^*)| = 2 \cdot |M^*|$ as no two edges in $M^*$ share a common endpoint.
Equation~\ref{label:bipartite:eq:run:4} follows from Theorem~\ref{label:bipartite:th:approx:main:new}. Finally, the claim follows from equation~\ref{label:bipartite:eq:run:5}. 
\end{proof}

We now use Claim~\ref{label:bipartite:cl:approx:main:1} to lower bound the size of the fractional assignment $(w+w_1^*+w^r)$ by relating the sum of the node-weights with the sum of the edge-weights. Specifically, we  note that  each edge $e \in E$ contributes $2 \cdot (w(e) + w_1^*(e) + w^r(e))$ towards the sum $\sum_{v \in V} W_v(w+w_1^*+w^r)$. Hence, we infer that the sum of the edge-weights under $(w+w_1^*+w^r)$ is exactly $(1/2) \cdot \sum_{v \in V} W_v(w+w_1^*+w^r)$. 
\begin{equation}
\label{label:bipartite:eq:run:10}
\sum_{e \in E} (w(e) + w_1^*(e) + w^r(e) ) = (1/2) \cdot \sum_{v \in V} W_v(w+w_1^*+w^r)
\end{equation}

Accordingly, Claim~\ref{label:bipartite:cl:approx:main:1} and equation~\ref{label:bipartite:eq:run:10}  imply that the size of the fractional assignment $(w+w_1^*+w^r)$ is at least $f$ times the size of any matching $M^* \subseteq E$ in $G = (V, E)$, where:
$$f = (1/2) \cdot \left(\frac{(1+\delta/3)}{(1+\epsilon)} - 4 \epsilon K\right).$$
Setting $M^*$ to be the maximum cardinality matching in $G$, we derive the proof of Theorem~\ref{label:bipartite:th:approx:main}. To summarize, in order to prove Theorem~\ref{label:bipartite:th:approx:main}, it suffices to prove Theorem~\ref{label:bipartite:th:approx:main:new}. This is done in Section~\ref{label:bipartite:sec:roadmap}.

\subsubsection{Proof of Theorem~\ref{label:bipartite:th:approx:main:new}}
\label{label:bipartite:sec:roadmap}

Define the ``level'' of an edge to be the maximum level among its endpoints, i.e., we have:
\begin{equation}
\label{label:bipartite:eq:level:edge} 
\ell(u, v) = \max \left( \ell(u), \ell(v)\right) \text{ for every edge } (u,v) \in E.
\end{equation} 
Now, for every $i \in \{1, \ldots, K\}$, let $M^*_i = \{ (u,v) \in M^* : \ell(u, v) = i\}$ denote the subset of those  edges in $M^*$ that are at level $i$. It is easy to check that the subsets $M^*_1, \ldots, M^*_K$ partition the edge-set $M^*$. 

 By Observation~\ref{label:bipartite:lm:tight}, if a  node $v \in V$ is at level $\ell(v) = i \in \{2, \ldots, K\}$, then we also have $v \in T_i$. Consider any edge $(x, y) \in M^*_i$, where $i \geq 2$. Without any loss of generality, suppose that $\ell(x) = i$ and hence $x \in T_i$. Consider the other endpoint $y$ of this matched edge $(x,y)$. By definition, we have $\ell(y) \leq i$, which means that $y \in Z_i$. Now, Invariant~\ref{label:bipartite:inv:node-partition} states that the set $Z_i$ is partitioned into subsets $T_i, B_i$ and $S_i$. Thus, there are two mutually exclusive cases to consider: (1) either $y \in T_i \cup B_i$ or (2) $y \in S_i$. Accordingly, we  partition the edge-set $M^*_i$ into two subsets -- $M'_i$ and $M''_i$ -- as defined below.
\begin{enumerate}
\item $M'_i = \{ (u, v) \in M_i^* : u \in T_i \text{ and } v \in T_i \cup B_i\}$. This consists of the set of matched edges in $M_i^*$ with one endpoint in $T_i$ and the other endpoint in $T_i \cup B_i$.
\item $M''_i = \{ (u,v) \in M_i^* : u \in T_i \text{ and } v \in S_i \}$. This consists of the set of matched edges in $M_i^*$ with one endpoint in $T_i$ and the other endpoint in $S_i$. 
\end{enumerate}
To summarize, the set of  edges in $M^*$ is partitioned into the following subsets: 
$$M_1^*, \{ M'_2, M''_2\}, \{ M'_3, M''_3\}, \ldots, \{M'_K, M''_K\}.$$
We now define the matchings $M'$ and $M''$ as follows.
\begin{equation}
\label{label:bipartite:eq:newplan:100}
M' = \bigcup_{i=2}^K M'_i \text{ and } M'' = \bigcup_{i=2}^K M''_i
\end{equation}
Hence, the set of matched edges $M^*$ is partitioned by the subsets $M_1^*, M'$ and $M''$.  We will now consider the matchings $M_1^*$, $M'$ and $M''$ one after the other, and upper bound their sizes in terms of the node-weights.   

 Lemma~\ref{label:bipartite:lm:approx:last} helps us bound the size of the matching $M_1^*$. Note that each edge $(u,v) \in M_1^*$ has $\ell(u, v) = \max(\ell(u), \ell(v)) = 1$, and hence we must have $\ell(u) = \ell(v) = 1$. In other words, every edge in $M_1^*$ has both of its endpoints in level one. Since $E_1$ is the set of edges connecting the nodes at level one (see Definition~\ref{label:bipartite:inv:node-partition:1:last}), we infer that $M_1^* \subseteq E_1$. We will later apply Lemma~\ref{label:bipartite:lm:approx:last} by setting $M = M^*_1$. The proof of Lemma~\ref{label:bipartite:lm:approx:last} appears in Section~\ref{label:bipartite:sec:lm:approx:last}. 

\begin{lemma}
\label{label:bipartite:lm:approx:last}
Consider any matching $M \subseteq E_1$ in the subgraph $G_1 = (Z_1, E_1)$. Then we have:
$$\sum_{v \in V(M)} \W_v(w) + \sum_{v \in Z_1} W_v(w^*_1) \geq (1+\epsilon)^{-1} \cdot (1+\delta) \cdot |M|.$$
\end{lemma}

Lemma~\ref{label:bipartite:lm:approx:normal} helps us bound the size of the matching $M'_i$, for all $i \in \{2, \ldots, K\}$. Note that by definition every edge $(u, v) \in M'_i$ has one endpoint $u \in T_i$ and the other endpoint $v \in T_i \cup B_i$. Thus, we will later apply Lemma~\ref{label:bipartite:lm:approx:normal} by setting $M = M'_i$. The proof of Lemma~\ref{label:bipartite:lm:approx:normal} appears in Section~\ref{label:bipartite:sec:lm:approx:normal}.

\begin{lemma}
\label{label:bipartite:lm:approx:normal}
Consider any level $i \in \{2, \ldots, K\}$. Let $M \subseteq E_i$ be a matching in the subgraph $G_i = (Z_i, E_i)$ such that every edge $(u, v) \in M$ has one endpoint $u \in T_i$ and the other endpoint $v \in T_i \cup B_i$. Then we have:
$$\sum_{u \in V(M)} \W_u(w) \geq (1+3 \delta) \cdot |M|.$$
\end{lemma}

Lemma~\ref{label:bipartite:lm:approx:residual} helps us bound the size of the matching $M''_i$, for all $i \in \{2, \ldots, K\}$. Note that by definition every edge $(u, v) \in M''_i$ has one endpoint $u \in T_i$ and the other endpoint $v \in S_i$. Thus, we can apply Lemma~\ref{label:bipartite:lm:approx:residual} by setting $M = M''_i$. The proof of Lemma~\ref{label:bipartite:lm:approx:residual} appears in Section~\ref{label:bipartite:sec:lm:approx:residual}.

\begin{lemma}
\label{label:bipartite:lm:approx:residual}
Consider any level $i \in \{2, \ldots, K \}$. Let $M \subseteq E_i$ be a matching in the subgraph $G_i = (Z_i, E_i)$ such that every edge $(u, v) \in M$ has one endpoint  $u \in T_i$ and the other endpoint $v \in S_i$.  We have:
$$\sum_{u \in V(M)} \W_u(w) + \sum_{v \in Z_i} W_v(w_i^r) \geq (1+\delta/3) \cdot |M|.$$
\end{lemma}

To proceed with the proof of Theorem~\ref{label:bipartite:th:approx:main:new}, we first apply Lemma~\ref{label:bipartite:lm:approx:last} to get the following simple claim. This claim upper bounds the size of $M_1^*$ in terms of the total discretized weight received  by its endpoints, plus the total weight received from  $w_1^*$ by the nodes in $V$. 
\begin{claim}
\label{label:bipartite:cl:newplan:1}
We have:
$$\sum_{v \in V(M_1^*)} \W_v(w) + \sum_{v \in V} W_v(w_1^*) \geq (1+\epsilon)^{-1} \cdot (1+\delta) \cdot |M_1^*|.$$
\end{claim}

\begin{proof}
Since $M_1^*$ is a matching in  $G_1 = (Z_1, E_1)$,  from Lemma~\ref{label:bipartite:lm:approx:last}, we get:
$$\sum_{v \in V(M_1^*)} \W_v(w) + \sum_{v \in Z_1} W_v(w_1^*) \geq (1+\epsilon)^{-1} \cdot (1+\delta) \cdot |M_1^*|.$$
Since $Z_1 \subseteq V$, the claim follows from the above inequality. 
\end{proof}

Recall that the edge-set $M'$ is partitioned into subsets $M'_2, \ldots, M'_K$. Thus, we can upper  bound  $|M'|$ by applying Lemma~\ref{label:bipartite:lm:approx:normal} with $M = M'_i$, for $i \in \{2, \ldots, K\}$, and summing over the resulting inequalities. This is done in the claim below. This claim upper bounds the size of $M'$ in terms of the total discretized weight received by its endpoints.
\begin{claim}
\label{label:bipartite:cl:newplan:2}
We have:
$$\sum_{v \in V(M')} \W_v(w) \geq (1+3\delta) \cdot |M'|.$$
\end{claim}

\begin{proof}
For every $i \in \{2, \ldots, K\}$, we can apply Lemma~\ref{label:bipartite:lm:approx:normal} on the matching $M'_i$ to get:
\begin{eqnarray}
 \sum_{v \in V(M'_i)} \W_v(w) \geq  (1+3\delta) \cdot |M'_i| \label{label:bipartite:eq:newplan:101}
\end{eqnarray}
Now, note that the set of edges $M'$ has been partitioned into subsets $M'_2, \ldots, M'_K$.  Hence, summing equation~\ref{label:bipartite:eq:newplan:101} over all levels $2 \leq i \leq K$, we get:
\begin{eqnarray*}
\sum_{v \in V(M')} \W_v(w) & = & \sum_{i=2}^K \sum_{v \in V(M'_i)} \W_v(w) \\
& \geq & \sum_{i=2}^K (1+3\delta) \cdot |M'_i| \\
& = & (1+3\delta) \cdot |M'|
\end{eqnarray*}
\end{proof}

Recall that the edge-set $M''$ is partitioned into subsets $M''_2, \ldots, M''_K$. Further, we have $w^r = \sum_{j=2}^K w^r_j$. Thus, we can  apply Lemma~\ref{label:bipartite:lm:approx:residual} with $M = M''_i$, for $i \in \{2, \ldots, K\}$, and  sum over the resulting inequalities to get the following claim. This claim upper bounds the size of $M''$ in terms of the total discretized weight received by its endpoints,  plus the total weight received from $w^r$ by the nodes in $V$. 
\begin{claim}
\label{label:bipartite:cl:newplan:3}
We have:
$$\sum_{v \in V(M'')} \W_v(w) + \sum_{v \in V} W_v(w^r) \geq (1+\delta/3) \cdot |M''|.$$
\end{claim}

\begin{proof}
For every $i \in \{2, \ldots, K\}$, since $M''_i$ is a matching in $G_i = (Z_i, E_i)$ such that each edge $(u, v) \in M''_i$ has one endpoint in $T_i$ and the other endpoint in $S_i$, we can apply Lemma~\ref{label:bipartite:lm:approx:residual} on $M''_i$ to get:
$$\sum_{v \in V(M''_i)} \W_v(w) + \sum_{v \in Z_i} W_v(w_i^r) \geq (1+\delta/3) \cdot |M''_i|.$$
Since $Z_i \subseteq V$, we have $\sum_{v \in V} W_v(w_i^r) \geq \sum_{v \in Z_i} W_v(w_i^r)$.  Hence, the above inequality implies that:
\begin{equation}
\label{label:bipartite:eq:newplan:200}
\sum_{v \in V(M''_i)} \W_v(w) + \sum_{v \in V} W_v(w_i^r) \geq (1+\delta/3) \cdot |M''_i|  \text{ for all } i \in \{2, \ldots, K\}.
\end{equation}
Since the set of edges $M''$ is partitioned into subsets $M''_2, \ldots, M''_K$, and since $w^r = \sum_{i=2}^K w_i^r$, we get:
\begin{eqnarray}
\sum_{v \in V(M'')} \W_v(w) + \sum_{v \in V} W_v(w^r) & = & \sum_{i = 2}^K \sum_{v \in V(M''_i)} \W_v(w) + \sum_{v \in V} \sum_{i=2}^K W_v(w^r_i) \nonumber \\
& = & \sum_{i=2}^K \left( \sum_{v \in V(M''_i)} \W_v(w) + \sum_{v \in V} W_v(w^r_i) \right) \nonumber \\
& \geq & \sum_{i=2}^K (1+\delta/3) \cdot |M''_i| \label{label:bipartite:eq:newplan:201} \\
& = & (1+\delta/3) \cdot |M''|. \nonumber
\end{eqnarray}
Equation~\ref{label:bipartite:eq:newplan:201} follows from equation~\ref{label:bipartite:eq:newplan:200}. This concludes the proof of the claim.
\end{proof}

Since the set of matched edges $M^*$ is partitioned into subsets $M_1^*, M'$ and $M''$, we can now add the inequalities in Claims~\ref{label:bipartite:cl:newplan:1},~\ref{label:bipartite:cl:newplan:2} and~\ref{label:bipartite:cl:newplan:3} to derive Theorem~\ref{label:bipartite:th:approx:main:new}. The primary observation is that the sum in the left hand side of Theorem~\ref{label:bipartite:th:approx:main:new} upper bounds the sum of the left hand sides of the inequalities stated in these three claims. This holds since  no two edges in $M^*$ share a common endpoint, and, accordingly, no amount of node-weight is counted twice if we sum the left hand sides of these claims. To bound the sum of the right hand sides of these claims, we use the fact that $|M^*| = |M_1^*| + |M'| + |M''|$. Specifically, we get:
\begin{eqnarray}
& & \sum_{v \in V(M)}  \W_v(w) +  \sum_{v \in V} \left(W_v(w_1^*) + W_v(w^r)\right) \nonumber \\ 
& = & \sum_{v \in V(M)} \W_v(w) + \sum_{v \in V} W_v(w_1^*) + \sum_{v \in V} W_v(w^r) \nonumber \\
& = & \left(\sum_{v \in V(M_1^*)} \W_v(w) + \sum_{v \in V(M')} \W_v(w) + \sum_{v \in V(M'')} \W_v(w)\right) + \sum_{v \in V} W_v(w_1^*) + \sum_{v \in V} W_v(w^r) \label{label:bipartite:eq:revised:1} \\
& = & \left(\sum_{v \in V(M_1^*)} \W_v(w) + \sum_{v \in V} W_v(w^*_1)\right) + \left(\sum_{v \in V(M'')} \W_v(w) + \sum_{v \in V} W_v(w^r) \right) + \sum_{v \in V(M')} \W_v(w) \nonumber \\
& \geq & (1+\epsilon)^{-1} \cdot  (1+\delta) \cdot |M_1^*| + (1+\delta/3) \cdot |M''| + (1+3\delta) \cdot |M'| \label{label:bipartite:eq:revised:2} \\
& \geq & (1+\epsilon)^{-1} \cdot (1+\delta/3) \cdot \left( |M_1^*| + |M''| + |M'| \right) \nonumber \\
& = & (1+\epsilon)^{-1} \cdot (1+\delta/3) \cdot |M^*| \label{label:bipartite:eq:revised:3}
\end{eqnarray}

Equation~\ref{label:bipartite:eq:revised:1} holds  since the edges in the matching $M^*$ are partitioned into subsets $M_1^*, M', M''$, and hence, the node-set $V(M^*)$ is also partitioned into subsets $V(M_1^*), V(M'), V(M'')$. Equation~\ref{label:bipartite:eq:revised:2} follows from Claims~\ref{label:bipartite:cl:newplan:1},~\ref{label:bipartite:cl:newplan:2} and~\ref{label:bipartite:cl:newplan:3}. Finally, Theorem~\ref{label:bipartite:th:approx:main:new} follows from equation~\ref{label:bipartite:eq:revised:3}.

\subsubsection{Proof of Lemma~\ref{label:bipartite:lm:approx:last}}
\label{label:bipartite:sec:lm:approx:last}

We will carefully construct a fractional $b$-matching $w'_1$ in the graph $G_1 = (Z_1, E_1)$ with respect to the node-capacities $\{b_1^*(u) \}, u \in Z_1$. Then we will show that:
\begin{equation}
\label{label:bipartite:eq:negative:600}
\sum_{v \in V(M)} \W_v(w) +\sum_{v \in Z_1} W_v(w'_1) \geq (1+\delta) \cdot |M|
\end{equation}
By Invariant~\ref{label:bipartite:inv:assignment:one}, we know that $w_1^*$ is a fractional $b$-matching  in the graph $G_1$ with respect to the node-capacities $\{b_1^*(u) \}, u \in Z_1$. Further, the size of $w_1^*$ is a $(1+\epsilon)$-approximation to the size of the maximum fractional $b$-matching with respect to the same node-capacities. Thus, we have:
\begin{equation}
\label{label:bipartite:eq:negative:601}
 \sum_{v \in Z_1} W_v(w_1^*) \geq (1+\epsilon)^{-1} \cdot \sum_{v \in Z_1} W_v(w'_1)
\end{equation}
From equations~\ref{label:bipartite:eq:negative:600} and~\ref{label:bipartite:eq:negative:601}, we can derive that:
\begin{eqnarray*}
\sum_{v \in V(M)} \W_v(w) +\sum_{v \in Z_1} W_v(w^*_1) & \geq & \sum_{v \in V(M)} \W_v(w) + (1+\epsilon)^{-1} \cdot \sum_{v \in Z_1} W_v(w'_1) \\
& \geq & (1+\epsilon)^{-1} \cdot (1+\delta) \cdot |M|
\end{eqnarray*}
Thus, the lemma follows from equations~\ref{label:bipartite:eq:negative:600} and~\ref{label:bipartite:eq:negative:601}. \\

It remains to prove equation~\ref{label:bipartite:eq:negative:600}. Towards this end, we first define the fractional assignment $w'_1$ below. 
\begin{equation}
\label{label:bipartite:eq:negative:602}
w'_1(u,v)  = \begin{cases}
\min(b_1^*(u), b_1^*(v)) & \text{ for all edges } (u, v) \in M; \\
0 & \text{ for all edges } (u, v) \in E_1 \setminus M.\end{cases}
\end{equation}
Since $M \subseteq E_1$ and since no two edges in $M$ share a common endpoint, $w'_1$ as defined above forms a fractional $b$-matching in the graph $G_1 = (Z_1, E_1)$ with respect to the node-capacities $\{b_1^*(u)\}, u \in Z_1$.  We will show that:
\begin{equation}
\label{label:bipartite:eq:negative:603}
\W_x(w) +W_x(w'_1) + \W_y(w) + W_y(w'_1) \geq (1+\delta) \text{ for every edge } (x, y) \in M.
\end{equation}
By equation~\ref{label:bipartite:eq:negative:602} we have $W_v(w'_1) = 0$ for all nodes $v \in Z_1 \setminus V(M)$. Hence, equation~\ref{label:bipartite:eq:negative:603} implies that:
\begin{eqnarray*}
\sum_{v \in V(M)} \W_v(w) + \sum_{v \in Z_1} W_v(w'_1)  \geq \sum_{(x,y) \in M} \left\{\W_x(w) +W_x(w'_1) + \W_y(w) + W_y(w'_1) \right\} \geq (1+\delta) \cdot |M|
\end{eqnarray*}
Hence, equation~\ref{label:bipartite:eq:negative:600} follows from equation~\ref{label:bipartite:eq:negative:603}. 
We thus focus on showing that equation~\ref{label:bipartite:eq:negative:603} holds. 

\paragraph{Proof of equation~\ref{label:bipartite:eq:negative:603}.} \ \\

\noindent Fix any edge $(x, y) \in M$, and  suppose that $b_1^*(x) \leq b_1^*(y)$. Then equation~\ref{label:bipartite:eq:negative:602} implies that:
\begin{equation}
\label{label:bipartite:eq:negative:605}
W_x(w'_1) = W_y(w'_1) = w'_1(x, y) = b_1^*(x)
\end{equation}
We claim that:
\begin{equation}
\label{label:bipartite:eq:negative:606}
\W_x(w) + 2 \cdot b_1^*(x) \geq 1+\delta
\end{equation}
From equations~\ref{label:bipartite:eq:negative:605} and~\ref{label:bipartite:eq:negative:606}, we can make the following deductions.
\begin{eqnarray*}
\W_x(w)+W_x(w'_1) + \W_y(w)+W_y(w'_1) & \geq &  \W_x(w) + W_x(w'_1) + W_y(w'_1) \\
& = & \W_x(w) + 2 \cdot b_1^*(x) \\
& \geq & 1+\delta
\end{eqnarray*}
Thus, in order to prove equation~\ref{label:bipartite:eq:negative:603}, it suffices to prove equation~\ref{label:bipartite:eq:negative:606}. For the rest of this section, we focus on proving equation~\ref{label:bipartite:eq:negative:606}. There are three possible cases to consider, depending on the value of $s_1(x)$. 
\begin{itemize}
\item {\em Case 1.} $s_1(x) = \infty$. 

Since $s_1(x) = \infty$, we have $x \notin S_i$ for all $i \in \{2, \ldots, K\}$. In particular, we have $s_2(x) = \infty$. Hence, Definition~\ref{label:bipartite:inv:normal:capacity} implies that:
\begin{equation}
\label{label:bipartite:eq:negative:620}
b_2(x) = 1 - \delta - \sum_{j=3}^K \mathcal{W}_x(w_j)
\end{equation}
Invariant~\ref{label:bipartite:inv:discretized} implies that:
\begin{equation}
\label{label:bipartite:eq:negative:621}
\mathcal{W}_x(w_2) \leq b_2(x)
\end{equation}
From equations~\ref{label:bipartite:eq:negative:620} and~\ref{label:bipartite:eq:negative:621}, we get:
\begin{equation}
\label{label:bipartite:eq:negative:622}
\mathcal{W}_x(w_2) \leq 1 - \delta - \sum_{j=3}^K \mathcal{W}_x(w_j) 
\end{equation}
Rearranging the terms in the above inequality, we get:
\begin{equation}
\label{label:bipartite:eq:negative:622000}
\sum_{j=2}^K \mathcal{W}_x(w_j) \leq 1 - \delta
\end{equation}
Since $s_1(x) = \infty$, Definition~\ref{label:bipartite:inv:node:capacity:one}  imply that:
\begin{equation}
\label{label:bipartite:eq:negative:624}
b_1^*(x)  = 1 - \sum_{j=2}^K \mathcal{W}_x(w_j)
\end{equation}
From equations~\ref{label:bipartite:eq:negative:622000} and~\ref{label:bipartite:eq:negative:624}, we get:
\begin{equation}
\label{label:bipartite:eq:negative:624000}
b_1^*(x) \geq \delta
\end{equation}
From equations~\ref{label:bipartite:eq:negative:624} and~\ref{label:bipartite:eq:negative:624000}, we get:
\begin{equation}
\label{label:bipartite:eq:negative:624001}
 2 \cdot b_1^*(x) + \sum_{j=2}^K \mathcal{W}_x(w_j) \geq (1+\delta).
\end{equation}
Equation~\ref{label:bipartite:eq:negative:606} follows from equation~\ref{label:bipartite:eq:negative:624001} and the fact that $\W_x(w) = \sum_{j=2}^K \W_x(w_j)$. 

\item {\em Case 2.} $s_1(x) \in \{3, \ldots, K\}$. 

Since $s_1(x) \in \{3, \ldots, K\}$, we infer that $s_2(x) = s_1(x)$ (see equation~\ref{label:bipartite:eq:level-small}). Hence, Definition~\ref{label:bipartite:inv:normal:capacity} implies that:
\begin{equation}
\label{label:bipartite:eq:negative:630}
b_2(x) = 1 - 5 \cdot 10^{(K-s_1(x)+1)} \cdot \delta - \sum_{j=3}^K \W_x(w_j)
\end{equation}
Invariant~\ref{label:bipartite:inv:discretized} states that:
\begin{equation}
\label{label:bipartite:eq:negative:631}
\W_x(w_2) \leq b_2(x)
\end{equation}
From equations~\ref{label:bipartite:eq:negative:630} and~\ref{label:bipartite:eq:negative:631}, we get:
\begin{equation}
\label{label:bipartite:eq:negative:632}
\W_x(w_2) \leq 1 - 5 \cdot 10^{(K-s_1(x)+1)} \cdot  \delta - \sum_{j=3}^K \W_x(w_j)
\end{equation}
Rearranging the terms in the above inequality, and noting that $\W_x(w) = \sum_{j=2}^K \W_x(w_j)$, we get:
\begin{equation}
\label{label:bipartite:eq:negative:633}
\W_x(w) \leq 1 - 5 \cdot 10^{(K-s_1(x)+1)} \cdot \delta
\end{equation}
Since $s_1(x) \in \{3, \ldots, K\}$, Definition~\ref{label:bipartite:inv:node:capacity:one}  imply that:
\begin{equation}
\label{label:bipartite:eq:negative:634}
b_1^*(x)  = 1 - 10^{(K-s_1(x)+1)} \cdot \delta - \W_x(w)
\end{equation}
From equations~\ref{label:bipartite:eq:negative:633} and~\ref{label:bipartite:eq:negative:634}, we infer that:
\begin{equation}
\label{label:bipartite:eq:negative:635}
b_1^*(x) \geq 4 \cdot 10^{(K-s_1(x)+1)} \cdot \delta
\end{equation}
From equations~\ref{label:bipartite:eq:negative:634} and~\ref{label:bipartite:eq:negative:635}, we get:
\begin{eqnarray*}
\W_x(w) + 2 \cdot b_1^*(x) & = & \left(\W_x(w) + b_1^*(x) \right) + b_1^*(x) \\
& = & \left(1 - 10^{(K-s_1(x)+1)} \cdot \delta\right) + b_1^*(x) \\
& \geq & \left(1 - 10^{(K-s_1(x)+1)} \cdot \delta\right) + 4 \cdot 10^{(K-s_1(x)+1)} \cdot \delta \\
& = & 1 + 3 \cdot 10^{(K-s_1(x)+1)} \cdot \delta \\
& \geq & (1+\delta). 
\end{eqnarray*}
Thus, equation~\ref{label:bipartite:eq:negative:606} holds in this case.

\item {\em Case 3.} $s_1(x) = 2$. 

Since $s_1(x) = 2$, we have $x \in S_2$. Further, since $\W_x(w) = \sum_{j=2}^K \W_x(w_j)$, Definition~\ref{label:bipartite:def:inv:node-partition} gives:
\begin{equation}
\label{label:bipartite:eq:negative:625}
\W_x(w) \leq 8 \cdot 10^{(K-2)} \cdot \delta
\end{equation}
On the other hand, since $\W_x(w) = \sum_{j=2}^K \W_x(w_j)$ and $s_1(x) = 2$, Definition~\ref{label:bipartite:inv:node:capacity:one} states that:
\begin{equation}
\label{label:bipartite:eq:negative:626}
b_1^*(x) = 1 - 10^{(K-1)} \cdot \delta - \W_x(w)
\end{equation}
From equations~\ref{label:bipartite:eq:negative:625} and~\ref{label:bipartite:eq:negative:626}, we get:
\begin{eqnarray}
2 \cdot b_1^*(x) & = & 2 - 2 \cdot 10^{(K-1)} \cdot \delta - 2 \cdot \W_x(w) \nonumber \\
& \geq & 2 - 2 \cdot 10^{(K-1)} \cdot \delta - 16 \cdot 10^{(K-2)} \cdot \delta \nonumber \\
& = & 2 - 36 \cdot 10^{(K-2)} \cdot \delta \nonumber \\
& \geq & 1 + \delta \label{label:bipartite:eq:negative:627}
\end{eqnarray}
Equation~\ref{label:bipartite:eq:negative:627} holds due to equation~\ref{label:bipartite:eq:parameter:1}, which implies that  $\delta < 1/(1 + 36 \cdot 10^{(K-2)})$. From equation~\ref{label:bipartite:eq:negative:627}, we get:
$$\W_x(w) + 2 \cdot b_1^*(x) \geq 2 \cdot b_1^*(x) \geq (1+\delta).$$
Thus, equation~\ref{label:bipartite:eq:negative:606} holds in this case.

\end{itemize}

\subsubsection{Proof of Lemma~\ref{label:bipartite:lm:approx:normal}}
\label{label:bipartite:sec:lm:approx:normal}

Consider any edge $(u, v) \in M$ with $v \in T_i$ and $u \in T_i \cup B_i$. Since $v \in T_i$, Definition~\ref{label:bipartite:def:inv:node-partition} implies that:
\begin{equation}
\label{label:bipartite:eq:negative:500}
\W_v(w_i) = b_i(v) 
\end{equation}
Next, since $i \in \{2, \ldots, K\}$, and either $s_i(v) = \infty$ or $i+1 \leq s_i(v) \leq K$, Definition~\ref{label:bipartite:inv:normal:capacity} implies that:
\begin{equation}
\label{label:bipartite:eq:negative:501}
b_i(v) \geq 1 - 5 \cdot 10^{(K-i)} \cdot \delta - \sum_{j=i+1}^K \W_v(w_j)
\end{equation}
From equations~\ref{label:bipartite:eq:negative:500} and~\ref{label:bipartite:eq:negative:501}, we infer that:
$$\W_v(w_i) \geq 1 - 5 \cdot 10^{(K-i)} \cdot \delta - \sum_{j=i+1}^K \W_v(w_j).$$
Rearranging the terms in the above inequality, we get:
\begin{equation}
\label{label:bipartite:eq:negative:502}
\W_v(w) \geq \sum_{j = i}^{K} \W_v(w_j) \geq 1 - 5 \cdot 10^{(K-i)} \cdot \delta 
\end{equation}
Since $u \in T_i \cup B_i$, and since the node-set $Z_i$ is partitioned by the sets $T_i, B_i, S_i$ (see Invariant~\ref{label:bipartite:inv:node-partition}), we infer that $u \notin S_i$. Hence, applying Definition~\ref{label:bipartite:def:inv:node-partition}, we get:
\begin{equation}
\label{label:bipartite:eq:negative:503}
\W_u(w) \geq \sum_{j=i}^K \W_u(w_j) > 8 \cdot 10^{(K-i)} \cdot \delta
\end{equation}
Adding equations~\ref{label:bipartite:eq:negative:502} and~\ref{label:bipartite:eq:negative:503}, we get:
\begin{equation}
\label{label:bipartite:eq:negative:504}
\W_u(w) + \W_v(w) \geq 1+3 \cdot 10^{(K-i)} \cdot \delta  \geq 1+3 \delta 
\end{equation}
Summing equation~\ref{label:bipartite:eq:negative:504} over all the edges in the matching $M$, we get:
$$\sum_{x \in V(M)} \W_x(w) = \sum_{(x, y) \in M} \left(\W_x(w) + \W_y(w)\right) \geq (1+3\delta) \cdot |M|.$$
This concludes the proof of the lemma. 

\subsubsection{Proof of Lemma~\ref{label:bipartite:lm:approx:residual}}
\label{label:bipartite:sec:lm:approx:residual}

Set $\lambda = 4 \cdot 10^{K-i} \cdot \delta$. For every node $v \in T_i$, either $s_i(v) = \infty$ or $s_i(v) \in \{i+1, \ldots, K\}$. Thus, Definition~\ref{label:bipartite:inv:residual:capacity} states that for every node $v \in T_i$, either $b_i^r(v) = \delta \leq \lambda$ or $b_i^r(v) = 4 \cdot 10^{K-s_i(v)+1} \cdot \delta \leq \lambda$. We infer that $b_i^r(v) \in [0, \lambda]$ for every node $v \in T_i$. In contrast, for every node $v \in S_i$, Definition~\ref{label:bipartite:inv:residual:capacity} states that $b_i^r(v) = 2\lambda$.  By Invariant~\ref{label:bipartite:inv:node-partition}, the node-sets $T_i, S_i \subseteq Z_i$ are mutually disjoint. Finally, by Invariant~\ref{label:bipartite:inv:residual:matching}, the fractional assignment $w_i^r$ is a maximal fractional $b$-matching in the residual graph $G_i^r = (T_i \cup S_i, E^r_i)$ with respect to the capacities $\{ b_i^r(v) \}, v \in T_i \cup S_i$. The residual graph $G_i^r$ is  bipartite since $S_i \cap T_i = \emptyset$ and since  every edge $e \in E_i^r$ has one endpoint in $T_i$ and the other endpoint in $S_i$.
 Further,  $M$ is a matching in the graph $G^r_i$ since $E_i^r$ contains all the edges in $E_i$ with one tight and one small endpoints. Thus,  Theorem~\ref{label:bipartite:thm:critical} implies that:
\begin{equation}
\label{label:bipartite:eq:230}
\sum_{v \in T_i \cup S_i} W_v(w_i^r) \geq (4/3) \cdot  \sum_{v \in T_i \cap V(M)} b_i^r(v)
\end{equation}

We can now infer that:
\begin{eqnarray}
\sum_{v \in V(M)} \W_v(w) + \sum_{v \in Z_i} W_v(w_i^r) & \geq &  \sum_{v \in V(M)} \W_v(w) + \sum_{v \in T_i \cup S_i} W_v(w_i^r) \label{label:bipartite:eq:231} \\
& \geq &  \sum_{v \in V(M)} \W_v(w) + (4/3) \cdot  \sum_{v \in T_i \cap V(M)} b_i^r(v) \label{label:bipartite:eq:232} \\
& \geq & \sum_{v \in T_i \cap V(M)} \W_v(w) + (4/3) \cdot  \sum_{v \in T_i \cap V(M)} b_i^r(v) \nonumber \\
& = & \sum_{v \in T_i \cap V(M)} \left( \W_v(w) + (4/3) \cdot b_i^r(v) \right) \nonumber \\
& \geq & \sum_{v \in T_i \cap V(M)} (1+\delta/3) \label{label:bipartite:eq:233} \\
& = &  (1+\delta/3) \cdot |M| \label{label:bipartite:eq:234}
\end{eqnarray}
Equation~\ref{label:bipartite:eq:231} holds since $Z_i \supseteq T_i \cup S_i$ (see Invariant~\ref{label:bipartite:inv:node-partition}). Equation~\ref{label:bipartite:eq:232} follows from equation~\ref{label:bipartite:eq:230}. Equation~\ref{label:bipartite:eq:233} follows from Claim~\ref{label:bipartite:cl:100} stated below. Equation~\ref{label:bipartite:eq:234} holds since every edge in $M$ has exactly one endpoint in $T_i$ and no two edges in $M$ share a common endpoint. Thus, we have $|T_i \cap V(M)| = |M|$.  This leads to the proof of the lemma.

Thus, in order to prove Lemma~\ref{label:bipartite:lm:approx:residual}, it suffices to  prove Claim~\ref{label:bipartite:cl:100}. This is done below. 

\begin{claim}
\label{label:bipartite:cl:100}
For every node $v \in T_i$, we have: $\W_v(w) + (4/3) \cdot b^r_i(v) \geq  1 + \delta/3$. 
\end{claim}

\begin{proof}
We consider two possible cases, depending on the value of $s_i(v)$. 
\begin{itemize}
\item {\em Case 1.} $s_i(v) = \infty$.

In this case, Definition~\ref{label:bipartite:inv:residual:capacity} states that:
\begin{equation} 
\label{label:bipartite:eq:200}
b_i^r(v) = \delta
\end{equation}
On the other hand, since $v \in T_i$, Definition~\ref{label:bipartite:def:inv:node-partition} states that:
\begin{equation}
\label{label:bipartite:eq:201}
\W_v(w_i) = b_i(v) 
\end{equation} 
Finally, since $v \in Z_i$ and $s_i(v) = \infty$, Definition~\ref{label:bipartite:inv:normal:capacity} states that:
\begin{equation}
\label{label:bipartite:eq:202}
b_i(v) = 1 - \delta - \sum_{j=i+1}^K \W_v(w_j)
\end{equation}
From equations~\ref{label:bipartite:eq:201} and~\ref{label:bipartite:eq:202} we get:
$$\W_v(w_i) = 1 - \delta - \sum_{j=i+1}^K \W_v(w_j).$$
Rearranging the terms in the above inequality, we get:
\begin{eqnarray}
\W_v(w) \geq \sum_{j=i}^K \W_v(w_j) = 1 - \delta  \label{label:bipartite:eq:203}
\end{eqnarray}
From equations~\ref{label:bipartite:eq:200} and~\ref{label:bipartite:eq:203}, we infer that:
$$\W_v(w) + (4/3) \cdot b_i^r(v) \geq (1+\delta/3).$$
This concludes the proof of the claim in this case.

\item {\em Case 2.} $s_i(v) = k$ for some $k \in \{i+1, \ldots, K\}$. 

In this case, Definition~\ref{label:bipartite:inv:residual:capacity} states that:
\begin{equation}
\label{label:bipartite:eq:205}
b_i^r(v) = 4 \cdot 10^{(K- k+1)} \cdot \delta
\end{equation}
On the other hand, since $v \in T_i$, Definition~\ref{label:bipartite:def:inv:node-partition} states that:
\begin{equation}
\label{label:bipartite:eq:206}
\W_v(w_i) = b_i(v)
\end{equation} 
Finally, since $v \in Z_i$ and $s_i(v) = k \in \{i+1, \ldots, K\}$, Definition~\ref{label:bipartite:inv:normal:capacity} states that:
\begin{equation}
\label{label:bipartite:eq:207}
b_i(v) = 1 - 5 \cdot 10^{(K-k+1)} \cdot \delta - \sum_{j=i+1}^K \W_v(w_j)
\end{equation}
From equations~\ref{label:bipartite:eq:206} and~\ref{label:bipartite:eq:207} we get:
$$\W_v(w_i) = 1 - 5 \cdot 10^{(K-k+1)} \cdot \delta - \sum_{j=i+1}^K \W_v(w_j).$$
Rearranging the terms in the above inequality, we get:
\begin{eqnarray}
\sum_{j=i}^K \W_v(w_j) = 1 - 5 \cdot 10^{(K-k+1)} \cdot \delta  \label{label:bipartite:eq:208}
\end{eqnarray}
Since $\W_v(w) \geq \sum_{j=i}^K \W_v(w_j)$, equation~\ref{label:bipartite:eq:208} implies that:
\begin{equation}
\label{label:bipartite:eq:209}
\W_v(w) \geq 1 - 5 \cdot 10^{(K-k+1)} \cdot \delta 
\end{equation}
From equations~\ref{label:bipartite:eq:205} and~\ref{label:bipartite:eq:209} we get:
\begin{eqnarray}
\label{label:bipartite:eq:2090}
\W_v(w) + (4/3) \cdot b_i^r(v) \geq 1 + (1/3) \cdot 10^{(K-k+1)} \cdot \delta  \geq 1 + (\delta/3)  
\end{eqnarray}
The last inequality holds since $2 \leq k \leq K$.  The claim follows from equation~\ref{label:bipartite:eq:2090} in this case. 
\end{itemize}
\end{proof}

\section{The algorithm}
\label{label:bipartite:sec:algo}

In this section, we will present a dynamic algorithm for maintaining a $(1+\epsilon)^2$-approximation to the size of the fractional matching $(w+w^r+w^*_1)$. First, we recall a standard assumption used in dynamic algorithms literature on the sequence of edge insertions/deletions in the input graph.

\begin{assumption}
\label{label:bipartite:assume:standard}
The input graph $G = (V, E)$ is empty (i.e., $E = \emptyset$) in the beginning of the sequence of edge insertions/deletions.
\end{assumption}

However, we will make an (apparently) stronger assumption on the input sequence, as stated below.

\begin{assumption}
\label{label:bipartite:assume:new}
The input graph $G = (V, E)$ is empty (i.e., $E = \emptyset$)  in the beginning  of the sequence of edge insertions/deletions. Further, the  input  graph $G = (V, E)$ is also empty (i.e., $E = \emptyset$) at the end of the sequence of edge insertions/deletions.
\end{assumption}

In the theorem below, we show that Assumption~\ref{label:bipartite:assume:new} is without any loss of generality. 

\begin{theorem}
\label{label:bipartite:th:assume}
Suppose that a dynamic algorithm has an amortized update time of $\kappa(n)$, where $n = |V|$ is the number of nodes in the input graph, under Assumption~\ref{label:bipartite:assume:new}. Then the same dynamic algorithm has an amortized update time of $O(\kappa(n))$ under Assumption~\ref{label:bipartite:assume:standard}. 
\end{theorem}

\begin{proof}
Consider a dynamic algorithm that has an amortized update time of $\kappa(n)$ under Assumption~\ref{label:bipartite:assume:new}. Now, consider a sequence of $t$ edge insertions/deletions  that satisfy Assumption~\ref{label:bipartite:assume:standard}. Hence, the graph $G$ is empty in the beginning of this sequence. Let $G^{(t)} = (V, E^{(t)})$ denote the status of the input graph at the end of this sequence. Note that $G^{(t)}$ need not be empty.  Define $|E^{(t)}| = m^{(t)}$, and  note that  $m^{(t)} \leq t$. 

We now delete all the edges from $G^{(t)}$ one after the other. This leads to a new sequence of $(t+m^{(t)})$ edge insertions/deletions where the input graph is empty in the end. In other words, the new sequence satisfies Assumption~\ref{label:bipartite:assume:new}.  Hence, the total time spent by the algorithm under this new sequence is at most $(t+m^{(t)}) \cdot \kappa(n) \leq 2 t \cdot \kappa(n) = O(t \cdot \kappa(n))$. Since the old sequence is a prefix of the new sequence, we infer that the total time spent by the algorithm under the old sequence is at most the time spent under the new sequence. Thus, the total time spent under the old sequence (which consisted of only $t$ edge insertions/deletions) is $O(t \cdot \kappa(n))$. So the amortized update time under the old sequence is $O(\kappa(n))$. This concludes the proof. 
\end{proof}

 Before proceeding any further, we recall the properties derived in Section~\ref{label:bipartite:sec:property}, as they will be crucial in  analyzing  our algorithm. We also define the phrase ``$w$-structures for a level $i \in \{2, \ldots, K\}$''.
\begin{definition}
\label{label:bipartite:def:w:structure}
Consider any level $i \in \{2, \ldots, K\}$. The phrase ``$w$-structures for level $i$'' refers to:  The subgraph $G_i = (Z_i, E_i)$,  the normal-capacities $\{b_i(x)\}, x \in Z_i,$   the  fractional assignment $w_i$ with support $H_i \subseteq E_i$, and the node-weights $\{ \W_x(w_i), W_x(w_i) \}$, $x \in Z_i$. 
\end{definition}

From the invariants in Section~\ref{label:bipartite:sec:inv:2}, it is  apparent that the $w$-structures for a level $i \in \{2, \ldots, K\}$ depend {\em only on} the $w$-structures for levels $j \in \{i+1, \ldots, K\}$. In particular, the $w$-structures for a level $i$ do not in any way depend on the residual fractional assignment $w^r$ (see equation~\ref{label:bipartite:eq:new:w}) or the fractional assignment $w_1^*$. For the time being, we will focus on designing an algorithm that only maintains the $w$-structures for levels $\{2, \ldots, K\}$. This will immediately give us the size of the fractional assignment $w = \sum_{j=2}^K w_j$. Later on we will show how to extend our algorithm to maintain good approximations to the sizes of the fractional assignments $w^r$ and $w_1^*$ as well. The rest of this section is organized as follows. \\
\begin{enumerate}
\item In Section~\ref{label:bipartite:sec:data:structures}, we present the data structures that will be used by our algorithm to maintain the $w$-structures for levels $\{2, \ldots, K\}$. \\
\item In Section~\ref{label:bipartite:sec:handle:update}, we show how to update the above data structures after an edge insertion/deletion in $G$. This gives us a dynamic algorithm for maintaining the $w$-structures for levels $\{2, \ldots, K\}$.
\item In Section~\ref{label:bipartite:sec:maintain:normal}, we analyze the amortized update time of our algorithm from Section~\ref{label:bipartite:sec:handle:update}. See Theorem~\ref{label:bipartite:th:maintain:normal}.
\item In Section~\ref{label:bipartite:sec:maintain:one}, we show how to maintain the size of the fractional matching $w_1^*$. See Theorem~\ref{label:bipartite:th:main:one}.
\item Finally, in Section~\ref{label:bipartite:sec:maintain:residual}, we show how to maintain a $(1+\epsilon)^2$-approximation to the size of the residual fractional matching $w^r$. See Theorem~\ref{label:bipartite:th:maintain:residual}.  \\
\end{enumerate}
\noindent Our main result is summarized in the theorem below.

\begin{theorem}
\label{label:bipartite:th:runtime:main:main}
We can maintain a $(1+\epsilon)^2$-approximation to the size of the fractional matching $(w+w^r+w_1^*)$ in $O((10/\epsilon)^{K+8} \cdot n^{2/K})$ update time.
\end{theorem}

\begin{proof}
The theorem follows if we sum over the update times given in Theorems~\ref{label:bipartite:th:maintain:normal},~\ref{label:bipartite:th:main:one},~\ref{label:bipartite:th:maintain:residual}, and then apply equation~\ref{label:bipartite:eq:parameter:7}.
\end{proof}

\subsection{Data structures}
\label{label:bipartite:sec:data:structures} 
Our algorithm  keeps the following data structures. \\
\begin{itemize}
\item The input graph $G = (V, E)$ using adjacency lists. \\
\item For every level $i \in \{2, \ldots, K\}$;  \\
\begin{itemize} 
\item The subgraph $G_i = (Z_i, E_i)$ using adjacency lists.  \\
\item For every node $v \in Z_i$, the capacity $b_i(v)$,  and the weights $\W_v(w_i)$, $W_v(w_i)$. We ``extend'' the capacities $\{ b_i(v) \}$ and the node-weights $\{ \W_v(w_i), W_v(w_i) \}$ in a natural to the input graph $G = (V, E)$. Thus, for every node $v \in V \setminus Z_i$,  we set $b_i(v)  = \W_v(w_i) = W_v(w_i)  = 0$.   \\
\item The support $H_i = \{ e \in E_i : w_i(e) \}$ of the fractional assignment $w_i$ using a doubly linked list.   We  assume that each edge $e \in H_i$ gets a weight $w_i(e) = 1/d_i = 1/n^{(i-1)/K}$, without {\em explicitly storing} these edge-weights anywhere in our data structures.  \\
\item The subgraph $(Z_i, H_i)$ using adjacency lists.  \\
\item The partition of the node-set $Z_i$ into subsets $T_i, B_i, S_i \subseteq Z_i$. The sets $T_i, B_i, S_i$ are maintained as  doubly linked lists. Given any node $v \in Z_i$, the data structure can report in constant time whether the node is tight (i.e., $v \in T_i$) or big (i.e., $v \in B_i$) or small (i.e., $v \in S_i$).  \\
\end{itemize}
\item Whenever a node (resp. an edge) appears in a doubly linked list described above, we store a pointer from that node (resp. edge) to its position in the linked list. Using these pointers, we can insert (resp. delete) any element into (resp. from) any list in constant time. \\
\end{itemize}

\subsection{Handling the insertion/deletion of an edge $(u,v)$ in the input graph $G = (V, E)$}
\label{label:bipartite:sec:handle:update}

Suppose that an edge $(u,v)$ is either inserted into or deleted from $G$. To handle this edge insertion/deletion, we first update the adjacency lists  in $G = (V, E)$. Next, we update the $w$-structures for the levels  in a ``top down'' manner, as described in Figure~\ref{label:bipartite:fig:initial}. The set $L_j$ consists of  the subset of nodes $x \in V$ whose discretized weight $\W_x(w_j)$ changes its value while fixing the $w$-structures for level $i$ as per Section~\ref{label:bipartite:sec:maintain:normal}.

\begin{figure}[htbp]
\centerline{\framebox{
\begin{minipage}{5.5in}
\begin{tabbing}
1. \ \ \ \ \ \ \= For $i = K$ to $2$; \\
2. \> \ \ \ \ \ \ \ \ \= Set $L_i \leftarrow \emptyset$. \\
3. \> \> Call the subroutine FIX-STRUCTURES($i$). See Section~\ref{label:bipartite:sec:maintain:normal}. 
\end{tabbing}
\end{minipage}
}}
\caption{\label{label:bipartite:fig:initial}}
\end{figure}

In Section~\ref{label:bipartite:sec:maintain:normal}, whenever in set $H_i$ we  insert/delete  an edge incident upon a node $z$, we will call the subroutine UPDATE($i, z$)
 described below. This subroutine updates the value of $\W_z(w_i)$ {\em in a lazy manner} so as to ensure that Invariant~\ref{label:bipartite:inv:discretized} is satisfied. 
 
 To be more specific, just before  changing the value of $W_z(w_i)$, Invariant~\ref{label:bipartite:inv:discretized} was satisfied. Now, an edge insertion/deletion changes the value of $W_z(w_i)$ by $1/d_i < \epsilon$ (see Definition~\ref{label:bipartite:def:inv:normal:assignment}, Invariant~\ref{label:bipartite:inv:normal:assignment} and equation~\ref{label:bipartite:eq:parameter:4}). Hence, we need to change the value of $\W_z(w_i)$ by at most $\epsilon$. This is done in Figure~\ref{label:bipartite:fig:update} in a lazy manner.
 \begin{enumerate}
 \item If  the current value of $\W_z(w_i)$ is too small to satisfy Invariant~\ref{label:bipartite:inv:discretized}, then we increase  $\W_z(w_i)$ by $\epsilon$.
 \item Else if  the current value of $\W_z(w_i)$ is too large to satisfy Invariant~\ref{label:bipartite:inv:discretized}, then we decrease  $\W_z(w_i)$ by $\epsilon$. 
 \end{enumerate}

\begin{figure}[htbp]
\centerline{\framebox{
\begin{minipage}{5.5in}
\begin{tabbing}
1. \ \ \ \ \ \ \= If $\W_v(w_i) < W_v(w_i)$, then \\
2. \> \ \ \ \ \ \ \ \ \= Set $\W_v(w_i) \leftarrow \W_v(w_i) + \epsilon$. \\
3. \>  Else if $W_v(w_i) \leq \W_v(w_i) - 2 \epsilon$, then \\
4. \> \> Set $\W_v(w_i) \leftarrow \W_v(w_i) - \epsilon$. 
\end{tabbing}
\end{minipage}
}}
\caption{\label{label:bipartite:fig:update} UPDATE($i, z$).}
\end{figure}

\subsubsection{The subroutine FIX-STRUCTURES($i$), where $i \in \{2, \ldots, K\}$.}
\label{label:bipartite:sec:maintain:normal}

Consider the scenario where we have fixed the $w$-structures for all levels $j > i$. At this stage, for  $j \in \{i+1, \ldots, K\}$, the set $L_j$ consists of all the nodes $x \in V$ such that the value of $\W_x(w_j)$ has been changed while fixing the $w$-structures for level $j$. Further, at this stage we have $L_j = \emptyset$ for all $j \in \{2, \ldots, i\}$. The procedure  in this section will serve two objectives: (a) it will update the $w$-structures for level $i$, and (b) whenever  the value of  $\W_x(w_i)$ is changed for any node $x$, it will set $L_i \leftarrow L_i \cup \{x\}$. Thus, at any point in time, the set $L_i$ will  consist of exactly those  nodes whose $\W_x(w_i)$ values have been changed till now.  Note that to fix the $w$-structures for level $i$, we first have to update  the subgraph $G_i = (Z_i, E_i)$ and the support $H_i \subseteq E_i$ of the fractional assignment $w_i$. Specifically, we have to address two types of issues if $i < K$, as described below. \\
\begin{enumerate}
\item Since we have already fixed the $w$-structures for levels $j > i$, it is plausible that there has been a change in the node-set $Z_{i+1}$ and the partition of $Z_{i+1}$ into subsets $T_{i+1}, B_{i+1}, S_{i+1}$. To satisfy Definition~\ref{label:bipartite:inv:node-partition:1}, we might have to   insert (resp. delete) some nodes into (resp. from) the set $Z_i$. 
\begin{itemize}
\item (a) According to Definition~\ref{label:bipartite:inv:node-partition:1} and Lemma~\ref{label:bipartite:lm:laminar}, whenever we insert a node $x$ into the subset $Z_i$ (this happens if $x \in Z_{i+1} \setminus T_{i+1}$ and $x \notin Z_i$), we have to ensure that for all nodes $y \in Z_i$ with $(x, y) \in E_i \subseteq H_{i+1}$, the edge $(x, y)$ is inserted into $E_i$. Accordingly, we have to scan through all the edges $(x, y) \in H_{i+1}$, and for every such edge $(x, y)$, if we find that $y \in Z_i$, then we need to insert the edge $(x, y)$ into the subgraph $G_i = (Z_i, E_i)$.
\item (b) According to Definition~\ref{label:bipartite:inv:node-partition:1}, whenever we delete a node $x$ from the subset $Z_i$ (this happens if $x \in Z_i$ and $x \notin Z_{i+1} \setminus T_{i+1}$), we also have to delete all its incident edges $(x, y) \in E_i$ from the subgraph $G_i = (Z_i, E_i)$. Further, when deleting an edge $(x, y)$ from $E_i$, we have to check if $(x, y) \in H_i$, and if the answer is yes, then we have to delete the edge $(x, y)$ from the set $H_i$ as well (see Invariant~\ref{label:bipartite:inv:normal:assignment}). The last step  will decrease each of the weights $W_x(w_i), W_y(w_i)$ by $1/d_i$ (since $w_i(e) = 1/d_i$ for all $e \in H_i$), and hence it might also change  the values of $\W_x(w_i)$ and $\W_y(w_i)$, which, in turn, might lead us to reassign $y$ in the partition of the node-set $Z_i$ into  subsets $T_i, S_i, B_i$. \\
\end{itemize} 
\item There might be some node $x \in V$ that remains in the set $Z_i$, but whose normal-capacity $b_i(x)$ has to be changed (for we have already changed the $w$-structures for levels $j > i$, and they determine the value of $b_i(x)$ as per Definition~\ref{label:bipartite:inv:normal:capacity}). In this event, we might encounter a situation where the value of $\W_x(w_i)$  exceeds the actual value of $b_i(x)$. To fix this issue, for such nodes $x$, as a precautionary measure we  ``turn off''   all its incident edges $(x, y) \in H_i$  with nonzero weights under $w_i$. Specifically, we  visit every edge $(x, y) \in H_i$, and delete the edge $(x, y)$ from $H_i$. This last step  decreases the weights $W_x(w_i), W_y(w_i)$ by $1/d_i$, and accordingly, we might need to change the values of $\W_x(w_i)$ and $\W_y(w_i)$ as well. As a result, we might need to reassign the nodes $x, y$ in the partition of  $Z_i$ into subsets $T_i, S_i, B_i$. \\
\end{enumerate}
Let us say that a ``bad event'' happens for a node $x$ at level $i$ iff either the node   gets  inserted into (resp. deleted from) the set $Z_i$ or its normal-capacity $b_i(x)$ gets changed. From Definitions~\ref{label:bipartite:inv:normal:capacity},~\ref{label:bipartite:def:inv:node-partition},~\ref{label:bipartite:inv:node-partition:1} and equation~\ref{label:bipartite:eq:level-small}, we conclude that  such a bad event  happens only if the discretized weight $\W_x(w_j)$ of the node at some level $j > i$ gets changed. In other words, a bad event happens for a node $x$ at level $i$ only if $x \in \bigcup_{j = i+1}^K L_j$. It follows that all the operations described above are implemented in Figure~\ref{label:bipartite:fig:subgraph}. To be more specific, Steps (02) -- (07) deal with Case 1(a), Steps (08) -- (19) deal with Case 1(b), and Steps (20) -- (29) deal with Case 2.  Finally, we note that if $i = K$, then we do not need to worry about any of these issues since then  $\bigcup_{j=i+1}^K L_j = \emptyset$. The node-set $Z_K$ and the capacities $\{b_K(x)\}$ remain unchanged (see Definitions~\ref{label:bipartite:inv:node-partition:1},~\ref{label:bipartite:inv:normal:capacity}).

We now analyze the total time it takes to perform the operations in Figure~\ref{label:bipartite:fig:subgraph}. The main For loop in line 01 runs for $\left| \bigcup_{j > i} L_j \right|$ iterations. During each such iteration, it takes time proportional to either  $\text{deg}_x(H_{i+1})$  (see the For loop in line 06) or $\text{deg}_x(E_i)$  (see the For loops in lines 10, 22). Since $d_{i+1} = d_i \cdot n^{1/K}$ (see Definition~\ref{label:bipartite:def:inv:normal:assignment}), Lemma~\ref{label:bipartite:lm:deg} and Corollary~\ref{label:bipartite:cor:deg} imply that $\text{deg}_x(H_{i+1}) \leq d_i \cdot n^{1/K}$ and $\text{deg}_x(E_i) \leq d_i \cdot n^{1/K}$. Hence, the time taken by each iteration of the main For loop is $O(d_i \cdot n^{1/K})$. Accordingly, the total runtime becomes $O(\left| \bigcup_{j > i} L_j \right| \cdot d_i \cdot n^{1/K})$, as stated in the lemma below.

\begin{figure}[htbp]
\centerline{\framebox{
\begin{minipage}{5.5in}
\begin{tabbing}
01.  \= For all nodes $x \in \bigcup_{j=i+1}^K L_{j}$; \\
02. \> \ \ \ \ \ \ \ \= If $x \in Z_{i+1} \setminus T_{i+1}$ and $x \notin Z_i$, then \ \ \ \ \ {\em (Definition~\ref{label:bipartite:inv:node-partition:1} is violated)}\\
03. \> \> \ \ \ \ \ \ \= Insert the node $x$ into the subsets $Z_i$ and $S_i$. \\
04. \> \> \> Update the value of  $b_i(x)$, based on the structures for levels $j > i$ {\em (see Definition~\ref{label:bipartite:inv:normal:capacity}).} \\
05. \> \> \> Set $\W_x(w_i) = \epsilon$, $W_x(w_i) = 0$. {\em (Note that $\W_x(w_i) = \epsilon \leq b_i(x)$ by Observation~\ref{label:bipartite:ob:positive}.)}   \\
06. \> \> \> For every edge  $(x, y) \in H_{i+1}$; \\
07. \> \> \> \ \ \ \ \ \ \= If $y \in Z_i$, then insert the edge $(x, y)$ into the subgraph $G_i = (Z_i, E_i)$. \\
08. \> \> Else if $x \notin Z_{i+1} \setminus T_{i+1}$ and $x \in Z_i$, then \ \ \ \ \  {\em (Definition~\ref{label:bipartite:inv:node-partition:1} is violated)} \\
09. \> \> \ \ \ \ \ \ \= Delete the node $x$ from the subset $Z_i$ and from $T_i \cup S_i \cup B_i$. \\
10. \> \> \> For every edge  $(x, y) \in E_{i}$; \\
11. \> \> \> \ \ \ \ \ \ \= Delete the edge $(x, y)$ from the subgraph $G_i = (Z_i, E_i)$. \\ 
12. \> \> \> \> If $(x, y) \in H_i$, then \\
13. \> \> \> \> \ \ \ \ \ \ \ \ \ \  \= Delete the edge $(x,y)$ from the subset $H_i$.  \\ 
14. \> \> \> \> \> For each node $z \in \{x, y\}$; \\
15. \> \> \> \> \>  \ \ \ \ \ \ \ \= Decrease  the value of $W_z(w_i)$ by $1/d_i$.    \\
16. \> \> \> \> \> \> Update the value of $\W_z(w_i)$ by calling the subroutine $\text{UPDATE}(i, z_i)$. \\
17. \> \> \> \> \>   \>  If the value of $\W_z(w_i)$ changes in the previous step, then set $L_i \leftarrow L_i \cup \{z\}$.  \\
18. \> \> \> \> \>    Reassign $y$ to one of the subsets $T_i, S_i, B_i$ according to Definition~\ref{label:bipartite:def:inv:node-partition}. \\
19. \> \> \> Set $\W_x(w_i) = b_i(x) = 0$. \\
20. \> \> Else if $x \in Z_{i+1} \setminus T_{i+1}$ and $x \in Z_i$, then \ \ \ \ {\em (the value of $b_i(x)$ might have changed)} \\
21. \> \> \> Update the value of  $b_i(x)$, based on the structures for levels $j > i$ {\em (see Definition~\ref{label:bipartite:inv:normal:capacity}).} \\
22. \> \> \> For every edge  $(x, y) \in E_{i}$; \\
23. \> \> \> \ \ \ \ \ \ \=  If $(x, y) \in H_i$, then \\
24. \> \> \> \> \ \ \ \ \ \ \ \   \= Delete the edge $(x, y)$ from the subset $H_i$.  \\ 
25. \> \> \> \> \> For each $z \in \{x, y\}$; \\
26. \> \> \> \> \>  \ \ \ \ \ \ \ \ \ \= Decrease $W_z(w_i)$ by $1/d_i$. \\
27. \> \> \> \> \>  \> Update the value of $\W_z(w_i)$ by   calling the subroutine $\text{UPDATE}(i, z)$. \\
28. \> \> \> \> \>  \> If the value of $\W_z(w_i)$ changes in the previous step, then set $L_i \leftarrow L_i \cup \{z\}$.  \\
29. \> \> \> \> \> \> Reassign $z$ to one of the subsets $T_i, S_i, B_i$ according to Definition~\ref{label:bipartite:def:inv:node-partition}. 
\end{tabbing}
\end{minipage}
}}
\caption{\label{label:bipartite:fig:subgraph} Updating  the subgraph $G_i = (Z_i, E_i)$.} 
\end{figure}

\begin{lemma}
\label{label:bipartite:lm:runtime:fig:subgraph}
It takes $O\left(\left| \bigcup_{j=i+1}^K L_j \right| \cdot d_i \cdot n^{1/K}\right)$ time to perform the operations in Figure~\ref{label:bipartite:fig:subgraph}. 
\end{lemma}

\paragraph{Remark.} Note that for $i = K$ the set $\bigcup_{j=i+1}^K L_j$ is empty.  So we do not execute any of the steps in Figure~\ref{label:bipartite:fig:subgraph}. \\

We have not yet updated the adjacency list data structures of $G_i = (Z_i, E_i)$ to reflect the insertion/deletion of the edge $(u, v)$. This is done in Figure~\ref{label:bipartite:fig:update:edge}. We only need to update the data structures if both the endpoints $u, v$ belong to $Z_i$, for otherwise by Definition~\ref{label:bipartite:inv:node-partition:1} the edge $(u, v)$ does not participate in the subgraph $G_i = (Z_i, E_i)$. Note that if the edge $(u, v)$ is to inserted into $G_i$ (lines 01-02 in Figure~\ref{label:bipartite:fig:update:edge}), then we leave the edge-set $H_i$ unchanged. Else if the edge is to be deleted from $G_i$ (lines 03-11 in Figure~\ref{label:bipartite:fig:update:edge}), then we first delete it from $E_i$, and then check if the edge also belonged to  $H_i$. If the answer is yes, then we delete  $(u, v)$ from $H_i$ as well. Thus, the weights $W_u(w_i)$ and $W_v(w_i)$ decrease by $1/d_i$ due to this operation. Accordingly, the discretized weights $\W_u(w_i)$ and $\W_v(w_i)$ might also undergo some changes. If any of these discretized weights does get modified, then we insert the corresponding node into $L_i$. All these operations take constant time. 

\begin{lemma}
\label{label:bipartite:lm:runtime:fig:update:edge}
It takes $O(1)$ time to perform the operations in Figure~\ref{label:bipartite:fig:update:edge}. 
\end{lemma}

\begin{figure}[htbp]
\centerline{\framebox{
\begin{minipage}{5.5in}
\begin{tabbing}
01. \ \= If $u, v \in Z_i$ and the edge $(u,v)$ has been inserted into $G = (V, E)$, then \\
02. \> \ \ \ \ \ \ \ \= Insert the edge $(u,v)$ into the subgraph $G_i = (Z_i, E_i)$. \\
03. \> Else if $u, v \in Z_i$ and the edge $(u,v)$ has been deleted from $G = (V, E)$, then \\
04. \> \> Delete the edge $(u,v)$ from the subgraph $G_i = (Z_i, E_i)$. \\
05. \> \> If $(u,v) \in H_i$, then \\
06. \> \> \ \ \ \ \ \ \ \ \= Delete the edge $(u,v)$ from the subset $H_i$. \\
07. \> \> \> For each node $x \in \{u, v\}$; \\ 
08. \> \> \> \ \ \ \ \ \ \ \= Decrease the value of $W_x(w_i)$ by $1/d_i$. \\
09. \> \> \> \> Update the value of $\W_x(w_i)$ by calling the subroutine UPDATE($i, z$). \\
10. \> \> \> \> If the value of $\W_x(w_i)$ changes in the previous step, then set $L_i \leftarrow L_i \cup \{x\}$. \\
11. \> \> \> \> Reassign the node $x$ into one of the subsets $T_i, S_i, B_i$ according to Definition~\ref{label:bipartite:def:inv:node-partition}. 
\end{tabbing}
\end{minipage}
}}
\caption{\label{label:bipartite:fig:update:edge} Inserting/deleting the edge $(u,v)$ in  $G_i = (Z_i, E_i)$.  }
\end{figure}

By this time, we have fixed the adjacency lists for the subgraph $G_i = (Z_i, E_i)$, and we have also ensured that $\W_x(w_i) \leq b_i(x)$ for every node $x \in Z_i$. However, due to the deletions of multiple edges from $H_i$, we might end up in a situation where we find two nodes $x, y \in Z_i \setminus T_i$ connected by an edge $(x, y) \in E_i$, but the edge $(x, y)$ is not part of $H_i$ (thereby violating Invariant~\ref{label:bipartite:inv:node-partition}). This can happen only if at least one of the nodes $x, y$ are part of $\bigcup_{j = i}^K L_j$. For if both  $x, y \notin \bigcup_{j = i}^K L_j$, then it means that there have been no changes in their normal capacities $\{b_j(z)\}_{z \in \{x, y\}}$ and discretized weights $\{\W_z(w_j) \}_{z \in \{x, y\}}$ at levels $j \geq i$. It follows that both  $x, y$ belonged to $Z_i \setminus T_i$ just prior to the insertion/deletion of $(u, v)$ in $G$, and hence the edge $(x, y)$ was also part of $H_i$ at that instant (to satisfy Invariant~\ref{label:bipartite:inv:node-partition}). Further, the edge $(x, y)$ was also not deleted from $H_i$ in Figure~\ref{label:bipartite:fig:subgraph}, for otherwise either $x$ or $y$ would have been part of $\bigcup_{j = i+1}^K L_j$. We conclude that if both $x, y \notin \bigcup_{j \geq i} L_j$ and $x, y \in Z_i \setminus T_i$, then the edge $(x, y)$ must belong to $H_i$ at this moment. 

Accordingly, we perform the operations in Figure~\ref{label:bipartite:fig:handle:capacities:2}, whereby we go through the list of nodes in $\bigcup_{j = i}^K L_j$, and for each node $x$ in this list, we check if $x \in Z_i$. If the answer is yes, then we visit all the edges $(x, y) \in E_i \setminus H_i$ incident upon $x$. If we find that both $x, y$ do not belong to $T_i$, then to satisfy Invariant~\ref{label:bipartite:inv:node-partition} we insert the edge $(x, y)$ into the set $H_i$. The time taken for these operations is $\left| \bigcup_{j \geq i} L_j \right|$ times the maximum degree of a node in $E_i$, which is $d_i \cdot n^{1/K}$ by Corollary~\ref{label:bipartite:cor:deg}. Thus, we get the following lemma. 

\begin{lemma}
\label{label:bipartite:lm:runtime:fig:handle:capacities:2}
The time taken for the operations in Figure~\ref{label:bipartite:fig:handle:capacities:2} is $O\left(\left| \bigcup_{j=i}^K L_j \right| \cdot d_i \cdot n^{1/K}\right)$.
\end{lemma}

\begin{figure}[htbp]
\centerline{\framebox{
\begin{minipage}{5.5in}
\begin{tabbing}
1. \ \ \ \= Set $L \leftarrow \bigcup_{j=i}^K L_j$. \\
2. \> For each node $x \in L$; \\
3. \> \ \ \ \ \ \ \ \= If $x \in Z_i$, then call the subroutine FIX($i, x$). See Figure~\ref{label:bipartite:fig:fix}. 
\end{tabbing}
\end{minipage}
}}
\caption{\label{label:bipartite:fig:handle:capacities:2} Handling the nodes rendered non-tight so far. }
\end{figure}

\begin{figure}[htbp]
\centerline{\framebox{
\begin{minipage}{5.5in}
\begin{tabbing}
 1. \ \ \ \ \ \ \ \= For each edge $(x, y) \in E_i$; \\
 2. \> \ \ \ \ \ \ \ \= If $(x, y) \notin H_i$ and $x \notin T_i$ and $y \notin T_i$, then \\
 3. \> \> Insert the edge $(x, y)$ to the subset $H_i$.  \\
 4. \> \> For each node $z \in \{x, y\}$; \\
 5. \> \> \ \ \ \ \ \ \= Increase the value of $W_z(w_i)$ by $1/d_i$. \\
 6. \> \> \> Update the value of $\W_z(w_i)$ by calling the subroutine UPDATE($i, z$).\\
 7. \> \> \>  If the value of $\W_z(w_i)$ changes in the previous step, then set $L_i \leftarrow L_i \cup \{z\}$. \\
 8.  \> \> \> Reassign the node $z$ into one of the subsets $T_i, S_i, B_i$ according to Invariant~\ref{label:bipartite:inv:node-partition}. 
 \end{tabbing}
\end{minipage}
}}
\caption{\label{label:bipartite:fig:fix} FIX($i, x$).}
\end{figure}

At this stage, we have fixed the $w$-structures for level $i$, and hence we finish the execution of the subroutine FIX-STRUCTURES($i$).  We conclude this section by deriving a bound on the running time of this subroutine. This bound follows from Lemmas~\ref{label:bipartite:lm:runtime:fig:subgraph},~\ref{label:bipartite:lm:runtime:fig:update:edge} and~\ref{label:bipartite:lm:runtime:fig:handle:capacities:2}.

\begin{lemma}
\label{label:bipartite:lm:runtime:main}
The subroutine $\text{FIX-STRUCTURES}(i)$ runs for $O\left(1 +  \left| \bigcup_{j=i}^K L_{j} \right| \cdot d_i \cdot n^{1/K}\right)$ time. 
\end{lemma}

\subsection{Analyzing the amortized update time}
\label{label:bipartite:sec:framework}

We  analyze the amortized update time of our algorithm (as described in Section~\ref{label:bipartite:sec:handle:update}) over a sequence of edge insertions/deletions in the input graph $G = (V, E)$. We  prove an amortized update time of $O((10/\epsilon)^{K+2} \cdot n^{1/K})$. Since the algorithm in Section~\ref{label:bipartite:sec:handle:update} maintains the $w$-structures (see Definition~\ref{label:bipartite:def:w:structure}) for all the levels $j \in \{2, \ldots, K\}$, we can easily augment it to maintain the size of the fractional matching $w = \sum_{j=2}^K w_j$ without incurring any additional overhead in the update time. This leads to the following theorem.

\begin{theorem}
\label{label:bipartite:th:maintain:normal}
We can maintain the size of $w$ in $O((10/\epsilon)^{K+2} \cdot n^{1/K})$ amortized update time.
\end{theorem}

Following Assumption~\ref{label:bipartite:assume:new} and Theorem~\ref{label:bipartite:th:assume}, we  assume that the input graph is empty (i.e., $E = \emptyset$) in the beginning {\em and} at the end of this sequence of edge insertions/deletions. We  use a charging scheme to analyze the amortized update time. Specifically, we create  the following ``bank-accounts''.  \\

\begin{itemize}
\item For each node $v \in V$ and level $i \in \{2, \ldots, K\}$, there are  two accounts, namely:
\begin{itemize}
\item $\text{{\sc Normal-Account}}[v, i]$.
\item $\text{{\sc Work-Account}}[v, i]$.
\end{itemize}
\item For level $i = 1$, each node $v \in V$ has exactly one account, namely:
\begin{itemize}
\item $\text{{\sc Work-Account}}[v, 1]$. 
\end{itemize}
\item For every unordered pair of nodes $(u,v)$ and level $i \in \{1, \ldots, K\}$, there is one bank account, namely:
\begin{itemize}
\item $\text{{\sc Work-Account}}[(u,v), i]$. 
\end{itemize}
\item 
From this point onwards, the phrase ``{\em work-account'}'  will  refer to any bank account of the form $\text{{\sc Work-Account}}[v, i]$ or $\text{{\sc Work-Account}}[(u,v), i]$.  Similarly, the phrase ``{\em normal-account}'' will  refer to any bank account of the form $\text{{\sc Normal-Account}}[v, i]$.  \\
\end{itemize}


\noindent  While handling an edge insertion/deletion in $G$,  we will sometimes  ``transfer'' money from one bank account to another.  Further, we will allow the bank accounts to have negative balance. For example, if an account has $x$ dollars, and we transfer $y > x$ dollars from it to some other account, the balance in the first account will become $(x-y)$ dollars, which is a negative amount. We will ensure that the following properties hold. 
\begin{property}
\label{label:bipartite:pro:initial}
In the beginning (before the first edge insertion in $G$), each bank account has zero balance.
\end{property}
\begin{property}
\label{label:bipartite:pro:edge}
For each edge insertion/deletion in the input graph $G$, we deposit $O((10/\epsilon)^{K+2} \cdot n^{1/K})$ dollars  into   bank-accounts.
\end{property}
\begin{property}
\label{label:bipartite:pro:positive}
Money is never withdrawn from a work-account. So it  always has a nonnegative balance.
\end{property}
\begin{property}
\label{label:bipartite:pro:end:normal}
In the end (when $G$ becomes empty again), each normal-account has a nonnegative balance.
\end{property}
\begin{property}
\label{label:bipartite:pro:end}
The  total balance in the work-accounts is at least the total update time. 
\end{property}

Properties~\ref{label:bipartite:pro:positive},~\ref{label:bipartite:pro:end:normal} and~\ref{label:bipartite:pro:end} imply that at the end of  the sequence of edge insertions/deletions,  the sum of the balances in the bank accounts is at least  the total update time of the algorithm. By Properties~\ref{label:bipartite:pro:initial} and~\ref{label:bipartite:pro:edge}, the  former quantity is  $O(10/\epsilon)^{K+2} \cdot n^{1/K})$ times the number of edge insertions/deletions in $G$. Hence,  we get an amortized update time of $O((10/\epsilon)^{K+2} \cdot n^{1/K})$. We  now specify the exact rules that govern the functioning of all the bank accounts. We will show that these rules satisfy all the properties described above. This will conclude the proof of the amortized update time. 

\subsubsection{Rules governing the bank accounts}
\label{label:bipartite:sec:rules}

The first rule  describes the initial conditions.
\begin{Rule}
\label{label:bipartite:rule:charge:0}
In the beginning (before the first edge insertion in $G$), every bank account has zero balance.
\end{Rule}

The next rule states how to deposit money into the bank accounts after an edge insertion/deletion in $G$.

\begin{Rule}\label{label:bipartite:rule:charge:3}
When an edge $(u,v)$ is  inserted into (resp. deleted from) $G$, we make the following deposits. 
\begin{itemize}
\item For each $i \in \{1, \ldots, K\}$; 
\begin{itemize}
\item We deposit one dollar into the account $\text{{\sc Work-Account}}[(u,v), i]$. 
\end{itemize}
\item For each $i \in \{2, \ldots, K\}$;
\begin{itemize}
\item We deposit $(10/\epsilon)^{i+1} \cdot n^{1/K}$ dollars into  $\text{{\sc Normal-Account}}[x, i]$ for all $x \in \{u, v\}$. 
\end{itemize}
\end{itemize}
\end{Rule}

\noindent Before proceeding any further, we  need to make a simple observation. Consider any level $i \in \{2, \ldots, K\}$, and the set of edges $H_i$ that form the support of the fractional assignment $w_i$.  Our algorithm in Section~\ref{label:bipartite:sec:handle:update}  deletes an edge $(x,y) \in H_i$  from the set $H_i$    due to one of the following reasons:
\begin{itemize}
\item The edge $(x,y)$ is getting deleted from the input graph $G = (V, E)$, and hence we have to delete the edge from $H_i$ (see Steps~04,~06 in Figure~\ref{label:bipartite:fig:update:edge}). We say that this is a ``{\em natural deletion}'' from the set $H_i$.
\item Some other edge $(x', y')$ is getting deleted from the input graph $G = (V, E)$. While handling this edge deletion, we have to fix the $w$-structures for level $i$, and as a result the edge $(x, y)$ gets deleted from $H_i$ (see Steps~13, 24 in Figure~\ref{label:bipartite:fig:subgraph}). In this event, the edge $(x, y)$ does {\em not} get deleted from the input graph $G$ itself. We say that this is an ``{\em artificial deletion}'' from the set $H_i$. 
\end{itemize}
From our algorithm in Section~\ref{label:bipartite:sec:handle:update}, we make the following observation. 

\begin{observation}
\label{label:bipartite:ob:deletion}
No artificial deletion takes place from the edge-set $H_K$. Further, at any level $2 \leq i < K$, an edge $(x, y)$ gets artificially deleted from the set $H_i$ only when it has at least one endpoint in $\bigcup_{j > i} L_j$. 
\end{observation}

We  now describe the rule that governs the transfer of money from one normal account to another.

\begin{Rule}\label{label:bipartite:rule:charge:1}
Consider any level $2 \leq i \leq K$. By Observation~\ref{label:bipartite:ob:deletion},   an edge   $(x,y)$ gets artificially  deleted from  $H_i$ only if $i < K$, and when such an event occurs, the edge must have at least one endpoint (say $x$) in $L_j$ for some $i < j  \leq K$. At that instant, for all $z \in \{x, y\}$, we transfer  $(10/\epsilon)^{i+1}  \cdot n^{1/K}$  dollars from  $\text{{\sc Normal-Account}}[x, j]$  to  $\text{{\sc Normal-Account}}[z, i]$. 
\end{Rule}

Finally, we state the rule governing  the transactions between  the normal-accounts and the work-accounts.

\begin{Rule}\label{label:bipartite:rule:charge:2}
Suppose that while handling an edge insertion/deletion in $G$  some node $x \in V$ becomes part of the set $L_i$. At that instant,  for each $1 \leq j \leq i$, we transfer $d_j \cdot n^{1/K}$ dollars from $\text{{\sc Normal-Account}}[x, i]$ to $\text{{\sc Work-Account}}[x, j]$.
\end{Rule}

Rule~\ref{label:bipartite:rule:charge:0} ensures that Property~\ref{label:bipartite:pro:initial} is satisfied. According to Rule~\ref{label:bipartite:rule:charge:3}, when an edge insertion/deletion takes places in $G$, the total amount of money deposited to the bank accounts is at most:
$$K + \sum_{i = 2}^K (10/\epsilon)^{i+1} \cdot n^{1/K} = K + O((10/\epsilon)^{K+2} \cdot n^{1/K}) = O((10/\epsilon)^{K+2} \cdot n^{1/K}).$$ 
The last equality holds since $K < n^{1/K}$ (see equation~\ref{label:bipartite:eq:parameter:5}). Thus,  Property~\ref{label:bipartite:pro:edge} is also satisfied. We also note that the four rules described above immediately imply Property~\ref{label:bipartite:pro:positive}. Next, we focus on proving Property~\ref{label:bipartite:pro:end}. Towards this end, we focus on any given edge insertion/deletion in $G$. We handle this event as per the procedure in Figure~\ref{label:bipartite:fig:initial}. Thus, according to Lemma~\ref{label:bipartite:lm:runtime:main}, the total time taken to handle this edge insertion/deletion in $G$ is given by:
\begin{equation}
\label{label:bipartite:eq:work:100}
\sum_{i = 2}^K \left( 1 +  \left| \bigcup_{j=i}^K L_j \right| \cdot d_i \cdot n^{1/K} \right)
\end{equation}
Due to this edge insertion/deletion in $G$, as per Rule~\ref{label:bipartite:rule:charge:3} the sum $\sum_{(x,y)} \sum_{i=1}^K \text{{\sc Work-Account}}[(x,y), i]$ increases by:
\begin{equation}
\label{label:bipartite:eq:work:101}
\sum_{i=1}^K 1 
\end{equation}
According to Rule~\ref{label:bipartite:rule:charge:2},  for every level $i \in \{1,  \ldots,  K\}$, the sum $\sum_{x \in V} \text{{\sc Work-Account}}[x, i]$ increases by:
\begin{equation}
\label{label:bipartite:eq:work:102}
\left| \bigcup_{j=i}^K L_j \right| \cdot d_i \cdot n^{1/K}
\end{equation}
Thus, due the edge insertion/deletion in $G$, the sum  of all the work-accounts increases by at least:
\begin{equation}
\label{label:bipartite:eq:work:103}
\sum_{i=1}^K \left( 1 +  \left| \bigcup_{j=i}^K L_j \right| \cdot d_i \cdot n^{1/K} \right)
\end{equation}
From equations~\ref{label:bipartite:eq:work:100} and~\ref{label:bipartite:eq:work:103}, we reach the following conclusion. \\
\begin{itemize}
\item Due to each edge insertion/deletion in $G$, the total balance in the work-accounts increases by an amount that is at least  the total time spent  to handle that edge insertion/deletion. \\
\end{itemize}
The above statement, along with Property~\ref{label:bipartite:pro:initial} and~\ref{label:bipartite:pro:positive} (which we have proved already), implies Property~\ref{label:bipartite:pro:end}. Thus, it remains to prove Property~\ref{label:bipartite:pro:end:normal}, which is done by the lemma below. Its proof appears in Section~\ref{label:bipartite:sec:lm:runtime:critical}. As there is no normal-account at level one, the lemma  takes care of all the normal-accounts. This concludes the proof of Theorem~\ref{label:bipartite:th:maintain:normal}.

\begin{lemma}
\label{label:bipartite:lm:runtime:critical}
Consider any node $x \in V$ and any level $i \in \{2, \ldots, K\}$. At the end of the sequence of edge insertions/deletions (when $G$ becomes empty), we have a nonnegative balance in $\text{{\sc Normal-Account}}[x, i]$.
\end{lemma}

\subsubsection{Proof of Lemma~\ref{label:bipartite:lm:runtime:critical}}
\label{label:bipartite:sec:lm:runtime:critical}

Throughout this section, we will use the phrase ``time-horizon'' to refer to the time-interval that begins just before the first edge insertion in $G$ (when $G$ is empty), and ends after the last edge deletion from $G$ (when $G$ becomes empty again). We will also use the following notations. \\
\begin{itemize}
\item Let $\Lambda^+$  be the total amount of money deposited into $\text{{\sc Normal-Account}}[x, i]$ during the entire time-horizon. Similarly, let $\Lambda^-$ be the total amount of money withdrawn from $\text{{\sc Normal-Account}}[x, i]$ during the entire time-horizon. We will show that $\Lambda^+ \geq \Lambda^-$. Since $\text{{\sc Normal-Account}}[x, i]$ has zero balance before the first edge insertion in $G$ (see Rule~\ref{label:bipartite:rule:charge:0}), this will imply Lemma~\ref{label:bipartite:lm:runtime:critical}. \\
\item Consider any given edge insertion/deletion in $G$. We say that this edge insertion/deletion is ``{\em critical}'' iff the following property holds: While handling the edge insertion/deletion in $G$ using the algorithm  from Section~\ref{label:bipartite:sec:handle:update}, the node $x$ becomes part of the set $L_i$. 

\medskip
Let  $\Delta[x, i]$  be the total number of {\em critical} edge insertions/deletions in $G$ during the time-horizon. \\
\item Let $m^-[x, i]$ (resp. $m^+[x,i]$) denote the total number of times an edge incident upon $x$ is deleted from (resp. inserted into) the set $H_i$ during the entire time-horizon. Since the input graph $G$ is empty both before and after the time-horizon, we have $m^+[x, i] = m^-[x,i]$. \\
\end{itemize}

\paragraph{Roadmap for the proof.} In Claim~\ref{label:bipartite:cl:work:critical}, we lower bound $\Lambda^+$ in terms of $m^-[x, i]$. In Claim~\ref{label:bipartite:cl:work:critical:1}, we upper bound $\Lambda^-$ in terms of $\Delta[x, i]$. In Claim~\ref{label:bipartite:cl:epoch:1}, we related the two quantities $\Delta[x, i]$ and $m^-[x, i]$. Next, in Corollary~\ref{label:bipartite:cor:new:epoch:10}, we use Claims~\ref{label:bipartite:cl:work:critical:1} and~\ref{label:bipartite:cl:epoch:1} to upper bound $\Lambda^-$ in terms of $m^-[x, i]$. Finally, Claim~\ref{label:bipartite:cl:work:critical} and Corollary~\ref{label:bipartite:cor:new:epoch:10} imply that $\Lambda^+ \geq \Lambda^-$. This concludes the proof of the lemma.

\begin{claim}
\label{label:bipartite:cl:work:critical}
We have: $\Lambda^+ \geq m^-[x, i] \cdot (10/\epsilon)^{i+1} \cdot n^{1/K}$. 
\end{claim}

\begin{proof}
 Rule~\ref{label:bipartite:rule:charge:3} and~\ref{label:bipartite:rule:charge:1} implies that each time an edge incident upon $x$ gets deleted from $H_i$, we deposit $(10/\epsilon)^{i+1} \cdot n^{1/K}$ dollars into $\text{{\sc Normal-Account}}[x,i]$. Since $m^-[x,i]$ denotes the total number of edge deletions incident upon $x$ that take place in $H_i$, the claim follows. 
\end{proof}

\begin{claim}
\label{label:bipartite:cl:work:critical:1}
We have: $\Lambda^- \leq \Delta[x, i] \cdot 5 \cdot (10/\epsilon)^i \cdot d_i \cdot n^{1/K}$. 
\end{claim}

\begin{proof}
Consider an edge insertion/deletion in the input graph. Suppose that the node $x$ becomes   part of the set $L_i$ while we handle this edge insertion/deletion using the algorithm from Section~\ref{label:bipartite:sec:handle:update}.  In other words, this edge insertion/deletion in $G$ contributes one towards the value of $\Delta[x, i]$. We will show that while handling this edge insertion/deletion, we withdraw at most $5 \cdot (10/\epsilon)^i \cdot d_i \cdot n^{1/K}$ dollars  from $\text{{\sc Normal-Account}}[x, i]$.

We first bound the total amount of money that are withdrawn from $\text{{\sc Normal-Account}}[x,i]$  due  to  Rule~\ref{label:bipartite:rule:charge:1}. Consider any level $j \in \{2, \ldots, i-1\}$. While handling the edge insertion/deletion in $G$, the algorithm from Section~\ref{label:bipartite:sec:handle:update} deletes at most $\text{deg}_x(H_j)$ edges incident upon $x$ from the set $H_j$. For each such deletion of an edge $(x, y)$, we withdraw $2 \cdot (10/\epsilon)^{j+1} \cdot n^{1/K}$ dollars from $\text{{\sc Normal-Account}}[x, i]$, and distribute  this amount evenly between $\text{{\sc Normal-Account}}[x, j]$ and $\text{{\sc Normal-Account}}[y, j]$.  Hence, the total amount of money withdrawn from $\text{{\sc Normal-Account}}[x, i]$ due to Rule~\ref{label:bipartite:rule:charge:1} is at most:
\begin{eqnarray}
\sum_{j=2}^{i-1} \text{deg}_x(H_j) \cdot 2 \cdot (10/\epsilon)^{j+1} \cdot n^{1/K} & \leq & \sum_{j=2}^{i-1} d_j  \cdot 2 \cdot (10/\epsilon)^{j+1} \cdot n^{1/K} \label{label:bipartite:eq:final:1} \\
& \leq & \sum_{j=2}^{i-1} d_i  \cdot 2 \cdot (10/\epsilon)^{j+1} \cdot n^{1/K} \label{label:bipartite:eq:final:2}  \\
& \leq & d_i \cdot 4 \cdot (10/\epsilon)^{i} \cdot n^{1/K} \label{label:bipartite:eq:final:3}
\end{eqnarray}
Equation~\ref{label:bipartite:eq:final:1} follows from Lemma~\ref{label:bipartite:lm:deg}. Equation~\ref{label:bipartite:eq:final:2} holds since $d_j \leq d_i$ for all $j \leq i$ (see Definition~\ref{label:bipartite:def:inv:normal:assignment}). 

Next, we bound the total amount of money withdrawn from $\text{{\sc Normal-Account}}[x, i]$ due to Rule~\ref{label:bipartite:rule:charge:2}. For each level $j \in \{1, \ldots, i\}$, we withdraw $d_j \cdot n^{1/K}$ dollars from $\text{{\sc Normal-Account}}[x, i]$ and transfer this amount to $\text{{\sc Work-Account}}[x, j]$. Hence, the total amount withdrawn from $\text{{\sc Normal-Account}}[x, i]$ due to Rule~\ref{label:bipartite:rule:charge:2} is given by:
\begin{eqnarray}
\sum_{j=1}^i d_j \cdot n^{1/K} & \leq & 2 \cdot d_i \cdot n^{1/K} \label{label:bipartite:eq:final:000} \\
& \leq & (10/\epsilon)^{i} \cdot d_i \cdot n^{1/K} \label{label:bipartite:eq:final:0001}
\end{eqnarray}
Equation~\ref{label:bipartite:eq:final:000} follows from Definition~\ref{label:bipartite:def:inv:normal:assignment} and equation~\ref{label:bipartite:eq:parameter:6}. Finally, we note that Rules~\ref{label:bipartite:rule:charge:2} and~\ref{label:bipartite:rule:charge:1} are the only rules that govern the withdrawal of money from $\text{{\sc Normal-Account}}[x, i]$. Hence, adding equations~\ref{label:bipartite:eq:final:3} and~\ref{label:bipartite:eq:final:0001}, we reach the following conclusion. \\
\begin{itemize}
\item At most $5 \cdot (10/\epsilon)^i \cdot d_i \cdot n^{1/K}$ are withdrawn from $\text{{\sc Normal-Account}}[x, i]$ while handling the insertion/deletion of an edge in $G$ that results in $x$ becoming part of $L_i$. \\
\end{itemize}
Since $\Delta[x, i]$ is the number of times  $x$ becomes part of $L_i$ during the entire time-horizon, the claim follows. 
\end{proof}

\begin{claim}
\label{label:bipartite:cl:epoch:1}
We have:
$(\epsilon d_i/2) \cdot \Delta[x, i] \leq  m^-[x,i].$
\end{claim}

\begin{proof}
Recall the notion of a ``{\em critical}'' edge insertion/deletion in $G$. Such an edge insertion/deletion is characterized by the following property: While handling such an edge insertion/deletion using the algorithm from Section~\ref{label:bipartite:sec:handle:update}, the node $x$ becomes part of the set $L_i$. Also recall that $\Delta[x, i]$ denotes the total number of critical edge insertions/deletions during the entire time-horizon.

Accorinngly, by definition, between any two critical edge insertions/deletions in $G$, the discretized weight $\W_x(w_i)$ (which is an integral multiple of $\epsilon$) must have changed by at least $\epsilon$, since the node $x$ becomes part of $L_i$ only when its discretized weight $\W_x(w_i)$ changes. Recall that the discretized weight $\W_x(w_i)$ is updated in a {\em lazy manner} after a change in the weight $W_x(w_i)$ (see Figure~\ref{label:bipartite:fig:update}). Further, in one step the weight $W_x(w_i)$ changes by $1/d_i$ (since each edge in the support of $w_i$ has weight $1/d_i$), and we have $1/d_i \leq 1/n^{1/K} \ll \epsilon$ (see Definition~\ref{label:bipartite:def:inv:normal:assignment} and equation~\ref{label:bipartite:eq:parameter:4}). Accordingly,  we infer that between any two critical edge insertions/deletions in $G$, the weight $W_x(w_i)$ also changes by at least $\epsilon$. Next, note that the weight $W_x(w_i)$ changes by $1/d_i$ only when there is an edge insertion/deletion incident upon $x$ in $H_i$. Hence, an $\epsilon$ change in the weight $W_x(w_i)$ corresponds to $\epsilon d_i$ edge insertions/deletions in $H_i$ incident upon $x$. 

Let $m[x, i] = m^+[x, i] + m^-[x, i]$ denote the total number of edge insertions/deletions in $H_i$ incident upon $x$ during the entire time-horizon. The above discussion implies that $\Delta[x, i] \cdot (\epsilon d_i) \leq m[x, i]$. By Assumption~\ref{label:bipartite:assume:new} (also see Theorem~\ref{label:bipartite:th:assume}), we have $m^+[x, i] = m^-[x, i]$ and hence $m[x, i] = 2 \cdot m^-[x, i]$. Accordingly, we get: $(\epsilon d_i) \cdot \Delta[x, i] \leq m[x, i] = 2 \cdot m^-[x, i]$. The claim follows. 
\end{proof}

\begin{corollary}
\label{label:bipartite:cor:new:epoch:10}
We have: $\Lambda^- \leq m^-[x, i] \cdot (10/\epsilon)^{i+1} \cdot n^{1/K}$.
\end{corollary}

\begin{proof}
From Claims~\ref{label:bipartite:cl:work:critical:1} and~\ref{label:bipartite:cl:epoch:1}, we infer that:
\begin{eqnarray*}
\Lambda^- \leq \Delta[x, i] \cdot 5 \cdot (10/\epsilon)^i \cdot d_i \cdot n^{1/K} = (\epsilon d_i/2) \cdot \Delta[x, i] \cdot (10/\epsilon)^{i+1} \cdot n^{1/K} \leq m^-[x,i] \cdot (10/\epsilon)^{i+1} \cdot n^{1/K}
\end{eqnarray*}
\end{proof}

From Claim~\ref{label:bipartite:cl:work:critical} and Corollary~\ref{label:bipartite:cor:new:epoch:10}, we infer that $\Lambda^+ \geq \Lambda^-$.

\newcommand{\Z}{\mathcal{Z}}
\newcommand{\M}{\mathcal{M}}

\subsection{Maintaining the size of the fractional assignment $w_1^*$.}
\label{label:bipartite:sec:maintain:one}

From Observations~\ref{label:bipartite:ob:discretized} and~\ref{label:bipartite:ob:positive}, we infer that for every node $v \in Z_1$ the capacity $b_1^*(v)$ is a positive integral multiple of $\epsilon$. Further, Definition~\ref{label:bipartite:inv:node:capacity:one} ensures that $b_1^*(v) \leq 1$ for every node $v \in Z_1$. Thus, we get:
\begin{observation}
\label{label:bipartite:ob:level:one}
For every node $v \in Z_1$, we have $b_1^*(v) = \kappa^*_v \cdot \epsilon$ for some integer $\kappa^*_v \in [1, 1/\epsilon]$.
\end{observation}

We now define a ``meta-graph'' $\G_1 = (\Z_1, \E_1)$. For clarity of exposition, we will refer to the nodes in $\Z_1$ as ``meta-nodes'' and the edges in $\E_1$ as ``meta-edges''.  The meta-graph is constructed as follows. For each node $v \in Z_1$, we create $\kappa^*_v$ meta-nodes $v(1), \ldots, v(\kappa^*_v)$, where $\kappa^*_v$ is defined as in Observation~\ref{label:bipartite:ob:level:one}. Next, for each edge $(u, v) \in E_1$, we create $\kappa^*_u \cdot \kappa^*_v$ meta-edges $\{ (u(i), v(j)) \}, 1 \leq i \leq \kappa^*_u, 1 \leq j \leq \kappa^*_v$. The next theorem bounds the size of the maximum matching in the meta-graph.

\begin{theorem}
\label{label:bipartite:th:meta-graph:one}
The size of the maximum (integral) matching in  $\G_1 = (\Z_1, \E_1)$ equals $(1/\epsilon)$ times the maximum size of a fractional $b$-matching in $G_1 = (Z_1, E_1)$ with respect to the node-capacities $\{ b_1^*(v) \}, v \in Z_1$. 
\end{theorem}

\begin{proof} (Sketch)
The maximum possible size of a fractional $b$-matching in $G_1$ is given by  the following LP.
\begin{eqnarray}
& & \text{Maximize } \qquad  \sum_{(u, v) \in E_1} x^*(u, v) \label{label:bipartite:eq:lp:one} \\
\text{s. t. } \qquad & & \sum_{(u, v) \in E_1} x^*(u, v)  \leq  \kappa^*_v \cdot \epsilon \text{ for all nodes } v \in Z_1. \label{label:bipartite:eq:lp:one:1} \\
& & x^*(u, v) \geq 0 \text{ for all edges } (u, v) \in E_1^*.
\end{eqnarray}
Since the input graph $G = (V, E)$ is bipartite, the subgraph $G_1 = (Z_1, E_1)$ is also bipartite. Hence, the constraint matrix of the above LP is totally unimodular. Thus, in the optimal solution to the above LP, each variable $x^*(u, v)$ is going to take a value that is an integral multiple of $\epsilon$. We can map such a solution $\{x^*(u, v)\}$ in a natural way to a matching $\M_1 \subseteq \E_1$ in the meta-graph: For every edge $(u, v) \in E$, include $x^*(u,v)/\epsilon$  meta-edges from the collection $\{ (u(i), v(j)) \}, 1 \leq i \leq \kappa^*_u, 1 \leq j \leq \kappa^*_v$, into the matching $\M_1$. The size of the matching $\M_1$ will be exactly $(1/\epsilon)$ times the LP-objective. 

Similarly, given any matching $\M_1 \subseteq \E_1$ in the meta-graph, we can construct a feasible solution to the above LP in a natural way. Intially, set $x^*(u, v) = 0$ for all edges $(u, v) \in E_1$. Next, scan through the meta-edges in $\M_1$, and for every meta-edge of the form $(u(i), v(j)) \in \M_1$, set $x^*(u, v) \leftarrow x^*(u,v) + \epsilon$. It is easy to check that at the end of this procedure, we will get a feasible solution to the above LP whose objective value is exactly $\epsilon$ times the size of $\M_1$. The theorem follows.
\end{proof}

\begin{lemma}
\label{label:bipartite:lm:meta-graph:one:maintain}
In a dynamic setting, we can maintain the meta-graph $\G_1 = (\Z_1, \E_1)$ in $O( (10/\epsilon)^{K+4} \cdot n^{1/K})$ update time, amortized over the number of edge insertions/deletions in the input graph $G = (V, E)$.
\end{lemma}

\begin{proof} (Sketch)
We first describe our algorithm for maintaining the meta-graph $\G_1$.

Whenever an edge $(u, v)$ is inserted into (resp. deleted from) the input graph $G = (V, E)$, we call the procedure described in Section~\ref{label:bipartite:sec:handle:update}. This updates the $w$-structures for levels $\{2, \ldots, K\}$ in a top down manner. When the procedure finishes updating the $w$-structures for level $2$, we consider the set of nodes $L = \bigcup_{j=2}^K L_j$. These are only nodes whose capacities in level one need to be updated, for if a node $z\notin L$, then none of its discretized weights $\W_z(w_j)$ were changed, and hence its capacity $b_1^*(z)$ also does not change.

Accordingly, we scan through the nodes in $L$. For each  node $z \in L$, we first update the value of $b_1^*(z)$, and then check all its incident edges  $(z, z') \in H_2$ (since $E_1 \subseteq H_2$ by Lemma~\ref{label:bipartite:lm:laminar}). For each such edge $(z, z') \in H_2$, we insert/delete $\kappa^*_z \cdot \kappa^*_{z'} \leq (1/\epsilon^2)$ meta-edges in $\G_1$, depending on whether the nodes $z, z'$ belong to $Z_1$ or not. Since $\text{deg}_{z}(H_2) \leq d_2 = n^{1/K}$ for all nodes $z$ (see Lemma~\ref{label:bipartite:lm:deg}, Definition~\ref{label:bipartite:def:inv:normal:assignment}), this procedure takes $O(|L| \cdot (1/\epsilon)^2 \cdot n^{1/K})$ time.  But note that according to our charging scheme in Section~\ref{label:bipartite:sec:framework} (see Rule~\ref{label:bipartite:rule:charge:2}), each node $z \in L$ has $d_1 \cdot n^{1/K} = n^{1/K}$ dollars deposited into $\text{{\sc Work-Account}}[z, 1]$. Thus, we reach the following conclusion:
\begin{itemize}
\item The total time spent in maintaining the meta-graph $\G_1$ is at most $(1/\epsilon)^2$ times the amount of dollars deposited into the work-accounts at level one. 
\end{itemize}
From our framework in Section~\ref{label:bipartite:sec:framework}, it follows that the amortized update time for maintaining the meta-graph $\G_1$ is at most $(1/\epsilon^2)$ times the amount of dollars deposited into the bank accounts per edge insertion/deletion in $G$. Accordingly, we get an amortized update time of $O((1/\epsilon)^2 \cdot (1/\epsilon)^{K+2} \cdot n^{1/K}) = O((1/\epsilon)^{K+4} \cdot n^{1/K})$. 
\end{proof}

\begin{corollary}
\label{label:bipartite:cor:lm:meta-graph:one:maintain}
The number of edge insertions/deletions in  $\G_1$ is at most $O((10/\epsilon)^{K+4} \cdot n^{1/K})$ times the number of edge insertions/deletions in the input graph $G$. 
\end{corollary}

\begin{proof}
Let $t$ be the ratio between the number of edge insertions/deletions in $\G_1$ and the number of edge insertions/deletions in $G$. The corollary follows from Lemma~\ref{label:bipartite:lm:meta-graph:one:maintain} and the fact that $t$ cannot be larger than the amortized update time for maintaining $\G_1$. 
\end{proof}

\begin{theorem}
\label{label:bipartite:th:main:one}
We can maintain the value of $w_1^*$ in $O((10/\epsilon)^{K+8} \cdot n^{2/K})$ update time, amortized over the number of edge insertions/deletions in the input graph $G = (V, E)$.
\end{theorem}

\begin{proof}
By Invariant~\ref{label:bipartite:inv:assignment:one}, the size of $w_1^*$ gives a $(1+\epsilon)$-approximation to the maximum possible size of a fractional $b$-matching in $G_1 = (Z_1, E_1)$ with respect to the node-capacities $\{ b_1^*(z) \}$. We will maintain a matching $\M'_1 \subseteq \E_1$ in the meta-graph $\G_1 = (\Z_1, \E_1)$ whose size will be a $(1+\epsilon)$-approximation to the size of the maximum matching in $\G_1$. By Theorem~\ref{label:bipartite:th:meta-graph:one}, the quantity $\epsilon \cdot |\M'_1|$ will serve as an accurate estimate for  the size of $w_1^*$. 

Corollary~\ref{label:bipartite:cor:deg} and Definition~\ref{label:bipartite:def:inv:node-partition} guarantee that $\text{deg}_z(E_1) \leq d_1 \cdot n^{1/K} = n^{1/K}$ for all nodes $z \in Z_1$. Since each edge $(u, v) \in E_1$  leads to $\kappa^*_u \cdot \kappa^*_v \leq (1/\epsilon)^2$ meta-edges, we infer that the maximum degree of a meta-node in $\G_1$ is $d^* = (1/\epsilon^2) \cdot n^{1/K}$. Hence, using a recent result of Gupta and Peng~\cite{GuptaP13}, we can maintain the matching $\M'_1$ in $O(d^*/\epsilon^2)$ update time, amortized over the number of edge insertions/deletions in $\G_1$. Using Corollary~\ref{label:bipartite:cor:lm:meta-graph:one:maintain}, we get an update time of $O((d^*/\epsilon^2) \cdot (10/\epsilon)^{K+4} \cdot n^{1/K}) = O((10/\epsilon)^{K+8} \cdot n^{2/K})$, amortized over the number of edge insertions/deletions in the input graph. Finally, note that this subsumes the update time for maintaining the meta-graph $\G_1$, as derived in Lemma~\ref{label:bipartite:lm:meta-graph:one:maintain}.
\end{proof}

\subsection{Maintaining a $(1+\epsilon)$-approximation to the size of $w^r$}
\label{label:bipartite:sec:maintain:residual}

Consider any level $i \in \{2, \ldots, K\}$. From Observation~\ref{label:bipartite:ob:discretized} and Definition~\ref{label:bipartite:inv:residual:capacity}, we infer that for every node $v \in T_i \cup S_i$ the residual capacity $b_i^r(v)$ is a positive integral multiple of $\epsilon$. Further,  equation~\ref{label:bipartite:eq:parameter:3} and Definition~\ref{label:bipartite:inv:residual:capacity} ensures that $b_i^r(v) \leq 1$ for every node $v \in T_i \cup S_i$. Thus, we get:
\begin{observation}
\label{label:bipartite:ob:level:middle}
For every node $v \in T_i \cup S_i$, with $i \in \{2, \ldots, K\}$, we have the residual capacity $b_i^r(v) = \kappa^r_v \cdot \epsilon$ for some integer $\kappa^r_v \in [1, 1/\epsilon]$.
\end{observation}

Similar to Section~\ref{label:bipartite:sec:maintain:one}, we now define a ``meta-graph'' $\G_i = (\Z_i, \E_i)$ at each level $i \in \{2, \ldots, K\}$. For clarity of exposition, we will refer to the nodes in $\Z_i$ as ``meta-nodes'' and the edges in $\E_i$ as ``meta-edges''.  The meta-graph is constructed as follows. For each node $v \in T_i \cup S_i$, we create $\kappa^r_v$ meta-nodes $v(1), \ldots, v(\kappa^r_v)$, where $\kappa^r_v$ is defined as in Observation~\ref{label:bipartite:ob:level:middle}. Next, for each edge $(u, v) \in E^r_i$ (see Invariant~\ref{label:bipartite:inv:residual:matching}), we create $\kappa^r_u \cdot \kappa^r_v$ meta-edges $\{ (u(j), v(l)) \}, 1 \leq j \leq \kappa^r_u, 1 \leq l \leq \kappa^r_v$. The next theorem relates  the fractional matchings in the meta-graph $\G_i$ with the fractional $b$-matchings in the residual graph $G_i^r = (T_i \cup S_i, E_i^r)$.

\begin{theorem}
\label{label:bipartite:th:meta-graph}
Consider any level $i \in \{2, \ldots, K\}$.
\begin{enumerate}
\item Every fractional matching $w''_i$ in the meta-graph $\G_i$ corresponds to a unique fractional $b$-matching $w'_i$ in the residual graph $G_i^r$ with respect to the node-capacities $\{b_i^r(z)\}$. The size of $w''_i$ is exactly $(1/\epsilon)$-times the size of $w'_i$. 
\item A maximal fractional matching  in $\G_i$ corresponds to a maximal fractional $b$-matching  in $G_i^r$ with respect to the node-capacities $\{ b_i^r(z)\}$.
\end{enumerate}
\end{theorem}

\begin{proof} (Sketch)
Given a fractional matching $w''_i$ in  $\G_i$, we construct the corresponding fractional $b$-matching $w'_i$ in $G_i^r$ as follows. Initially, we set $w_i(e) \leftarrow 0$ for every edge $e \in E_i^r$. Next, we scan through the list of meta-edges in $\E_i$. For every meta-edge of the form $(u(j), v(l)), 1 \leq j \leq \kappa^r_u, 1 \leq l \leq \kappa^r_v$, we set $w'_i(u, v) \leftarrow w'_i(u, v) + \epsilon$.  It is easy to check that this construction satisfies both the properties stated in the theorem. 
\end{proof}

\begin{lemma}
\label{label:bipartite:lm:meta-graph:maintain}
In a dynamic setting, we can maintain all the meta-graphs $\{ \G_i = (T_i \cup S_i, \E_i) \}, 2 \leq i \leq K$, in $O((10/\epsilon)^{K+4} \cdot n^{1/K})$ update time, amortized over the number of edge insertions/deletions in  $G = (V, E)$.
\end{lemma}

\begin{proof} (Sketch)
Similar to the proof of Lemma~\ref{label:bipartite:lm:meta-graph:one:maintain}.
\end{proof}

\begin{corollary}
\label{label:bipartite:cor:lm:meta-graph:maintain}
The number of edge insertions/deletions in $\bigcup_{i=2}^K \E_i$ is at most $O((10/\epsilon)^{K+4} \cdot n^{1/K})$ times the number of edge insertions/deletions in $G$. 
\end{corollary}

\begin{proof} (Sketch)
Similar to the proof of Corollary~\ref{label:bipartite:cor:lm:meta-graph:one:maintain}.
\end{proof}

\begin{theorem}
\label{label:bipartite:th:maintain:residual}
There is a dynamic algorithm with the following properties:
\begin{enumerate}
\item It maintains a $(1+\epsilon)^2$-approximation to the size of $w^r$. 
\item Its update time is $O((10/\epsilon)^{K+6} \cdot n^{1/K} \cdot \log n)$, amortized over the edge insertions/deletions in $G$. 
\end{enumerate}
\end{theorem}

\begin{proof} (Sketch)
If we could maintain a maximal fractional matching in $\G_i$ for all $i \in \{2, \ldots, K\}$, then we would have had an exact estimate of the size of $w_i^r$ for all $i \in \{2, \ldots, K\}$ (see Theorem~\ref{label:bipartite:th:meta-graph}, Invariant~\ref{label:bipartite:inv:residual:matching}). Instead, we use the dynamic algorithm of Bhattacharya, Henzinger and Italiano~\cite{Arxiv} to maintain a $(1+\epsilon)^2$-maximal fractional matching in each $\G_i$.\footnote{In such a fractional matching $w''_i$,  for every meta-edge $(z, z') \in \E_i$, either $W_{z}(w''_i) \geq 1/(1+\epsilon)^2$ or $W_{z'}(w''_i) \geq 1/(1+\epsilon)^2$. } This means that we get a $(1+\epsilon)^2$-approximation to the size of each $w_i^r$, and hence a $(1+\epsilon)^2$-approximation to the size of $w^r = \sum_{i=2}^K w_i^r$. 

Using the analysis  of Bhattacharya, Henzinger and Italiano~\cite{Arxiv}, the update time of this dynamic algorithm is $O(\log n/\epsilon^2)$, amortized over the number of edge insertions/deletions in $\bigcup_{i=2}^K \E_i$. Applying Corollary~\ref{label:bipartite:cor:lm:meta-graph:maintain}, we get a update time of $O((10/\epsilon)^{K+6} \cdot n^{1/K} \cdot \log n)$, amortized over the number of edge insertions/deletions in $G$. Note that this subsumes the update time for maintaining the meta-graphs $\{\G_i\}, 2 \leq i \leq K$, as derived in Lemma~\ref{label:bipartite:lm:meta-graph:maintain}. 
\end{proof}

\end{document}